\newtheorem{claim}{Claim}
\newcommand{\setappendix}{Appendix~\thesection:~~}
\newcommand{\setsection}{\thesection~~}
\titleformat{\section}{\bfseries\LARGE}{%
	\ifnum\pdfstrcmp{\@currenvir}{appendices}=0
	\setappendix
	\else
	\setsection
\fi}{0em}{}
\def \({\left(}
\def \){\right)}
\def \[{\left[}
\def \]{\right]}
\def \nn{\nonumber \\}
\newcommand{\defeq}{\vcentcolon=}
\newcommand{\bY}{{\textbf {Y}}}
\newcommand{\bU}{{\textbf {U}}}
\newcommand{\bV}{{\textbf {V}}}
\newcommand{\bW}{{\textbf {W}}}
\newcommand{\bZ}{{\textbf {Z}}}
\newcommand{\bw}{{\textbf {w}}}
\newcommand{\bg}{{\textbf {g}}}
\newcommand{\bA}{{\textbf {A}}}
\newcommand{\bB}{{\textbf {B}}}
\newcommand{\bX}{{\textbf {X}}}
\newcommand{\bx}{{\textbf {x}}}
\newcommand{\bz}{{\textbf {z}}}
\newcommand{\bu}{{\textbf {u}}}
\newcommand{\bS}{{\textbf {S}}}
\newcommand{\ba}{{\textbf {a}}}
\newcommand{\cC}{{\mathcal{C}}}
\newcommand{\sign}{\text{ sign}}
\newcommand{\be}{\begin{equation}}
\newcommand{\ee}{\end{equation}}
\newcommand\smallO{
  \mathchoice
    {{\scriptstyle\mathcal{O}}}
    {{\scriptstyle\mathcal{O}}}
    {{\scriptscriptstyle\mathcal{O}}}
    {\scalebox{.7}{$\scriptscriptstyle\mathcal{O}$}}
  }
\newcommand{\bea}{\begin{align}}
\newcommand{\eea}{\end{align}}
\newtheorem{theorem}{Theorem}
\newtheorem{lemma}{\textbf{Lemma}}
\newtheorem{thm}{\textbf{Theorem}}
\newtheorem{remark}{\textbf{Remark}}
\newtheorem{proposition}{\textbf{Proposition}}
\newtheorem{corollary}{\textbf{Corollary}}
\newtheorem{definition}{\textbf{Definition}}
\DeclareMathAlphabet{\varmathbb}{U}{bbold}{m}{n}
\newcommand{\EE}{\mathbb{E}}
\renewcommand{\P}{{\rm P}}
\DeclareMathOperator*{\argmin}{arg\,min}
\newcommand*{\QEDA}{\hfill\ensuremath{\square}}
\def\blfootnote{\xdef\@thefnmark{}\@footnotetext}
\newenvironment{talign}
 {\let\displaystyle\textstyle\align}
 {\endalign}
\newenvironment{talign*}
 {\let\displaystyle\textstyle\csname align*\endcsname}
 {\endalign}
\begin{document}
\title{Optimal Errors and Phase Transitions \\in High-Dimensional Generalized Linear Models}
\author{Jean Barbier$^{\dagger\Diamond\star\otimes}$, Florent Krzakala$^{\star}$, Nicolas Macris$^{\dagger}$, L\'eo Miolane$^{*\otimes}$ and Lenka Zdeborov\'a$^\cup$}
\date{}
\maketitle
\blfootnote{
\!\!\!\!\!\!\!\!\!\!\!\!\!$\dagger$ Laboratoire de Th\'eorie des Communications, Facult\'e Informatique et Communications, Ecole Polytechnique F\'ed\'erale de Lausanne, Suisse. \\
$\Diamond$ International Center for Theoretical Physics, Trieste, Italy.\\
$\star$ Laboratoire de Physique Statistique, CNRS \& Universit\'e Pierre et Marie Curie \& \'Ecole Normale Sup\'erieure \& PSL Universit\'e, Paris, France.\\
$*$ D\'epartement d'Informatique de l'ENS, \'Ecole Normale Sup\'erieure \& CNRS \& PSL Research University \& Inria, Paris, France. \\
$\cup$ Institut de Physique Th\'eorique, CNRS \& CEA \& Universit\'e Paris-Saclay, Saclay, France.\\
$\otimes$ Corresponding authors: jean.barbier@lps.ens.fr, leo.miolane@gmail.com\\
}
\begin{abstract}
  Generalized linear models (GLMs) arise in high-dimensional
  machine learning, statistics, communications and signal processing.
  In this paper we analyze GLMs when the data matrix is random, as
  relevant in problems such as compressed sensing, error-correcting codes
  or benchmark models in neural networks.
  We evaluate the mutual information (or ``free entropy'') from which we deduce the Bayes-optimal estimation and generalization errors. Our analysis applies to the high-dimensional limit where both the
  number of samples and the dimension are large and their ratio is fixed.
  Non-rigorous predictions for the optimal errors existed for special cases of GLMs, e.g. for
  the perceptron, in the field of statistical physics based on
  the so-called replica method. Our present paper rigorously establishes those
  decades old conjectures and brings forward their
  algorithmic interpretation in terms of performance of the
  generalized approximate message-passing algorithm.
  Furthermore, we tightly characterize, for many learning problems,
  regions of parameters for which this algorithm achieves the optimal
  performance, and locate the associated sharp phase transitions
  separating learnable and non-learnable regions.
  We believe that this random version of GLMs can serve as a
  challenging benchmark for multi-purpose algorithms.

  This paper is divided in two parts that can be read independently: The first part (main part) presents the model and main results, discusses some applications and sketches the main ideas of the proof. The second part (supplementary informations) is much more detailed and provides more examples as well as all the proofs.
\end{abstract}
%

\setcounter{tocdepth}{2}
{%
	\singlespacing
	\hypersetup{linkcolor=black}
	\tableofcontents
}
%
\part{Main part}
\section{Introduction}
As datasets grow larger and more complex, modern data
analysis requires solving high-dimensional estimation problems
with very many parameters. Developing algorithms up to
the task and understanding their limitations has become a major
challenge in computer science, machine learning, statistics, signal processing, communications and related fields.

In the present contribution, we address this challenge in the case of generalized linear estimation models (GLMs)
\cite{nelder1972generalized,mccullagh1984generalized} where data are
generated as follows: Given a $n$-dimensional vector $\bX^*$, hidden
to the statistician, he/she observes instead a $m$-dimensional vector
$\bY$ where each component reads
\begin{align}\label{measurements_0}
	Y_\mu = \varphi\Big(\frac{1}{\sqrt{n}} [\boldsymbol{\Phi}	\bX^*]_{\mu},A_\mu \Big)\,, \qquad 1\leq \mu \leq m\,,
\end{align}
where $\boldsymbol{\Phi}$ is a $m \times n$ ``measurement'' or
``data'' matrix, the random variables $(A_\mu) \iid P_A$ account for noise (or randomness) of the model. The model is ``linear'' because the output $Y_\mu$
depends on a {\it linear} combination of the data
$z_\mu=\frac{1}{\sqrt{n}}[\boldsymbol{\Phi} \bX^*]_{\mu}=\frac{1}{\sqrt{n}}\sum_{i=1}^n\Phi_{\mu i}X^*_i$. The GLM generalizes the
ordinary linear regression by allowing the output function
$\varphi(z,A)$ to be non-linear and/or stochastic; in the case of a deterministic
model we simply write $\varphi(z)$. Explicit
examples will be given below.

GLMs belong to the realm of supervised learning and arise in a wide
variety of scientific fields. In signal processing one usually
observes $Y_\mu$ given as a linear combination of the signal-elements
$\bX^*$. In a range of applications these observations are 
obtained via a non-linear function $\varphi$. In optics or X-ray crystallography one
often measures only the amplitude of
$[\boldsymbol{\Phi} \bX^*]_{\mu} $, leading to the phase retrieval
problem \cite{fienup1982phase}. A real-valued analog is the problem of
sign-retrieval when we only observe $|[\boldsymbol{\Phi} \bX^*]_{\mu}|$ \cite{demanet2014stable,candes2013phaselift}. Observations are
sometimes quantized in order to reduce the storage, leading
for instance to the problem of 1-bit compressed sensing \cite{boufounos20081}. In statistics and machine learning,
classification is often described via a GLM where
the output function $\varphi$ is discrete and
corresponds to the labels that classify the data-points
$\boldsymbol{\Phi}_{\mu}$ \cite{nelder1972generalized,mccullagh1984generalized,buhlmann2011statistics}.
GLMs with non-linear output functions are also the basic
building blocks of each layer of neural networks
\cite{lecun2015deep}: $\varphi$ corresponds to the
activation, the rows of the matrix $\boldsymbol{\Phi}$ are
different data samples while $\bX^*$ are the set of synaptic
weights to be learned.

There are two main learning problems in GLMs: $i)$ The {\it estimation} task requires, knowing the measured vector $\bY$ and the matrix $\boldsymbol{\Phi}$, to infer the unknown vector
$\bX^*$; $ii)$ the {\it prediction} or {\it generalization} task instead requires, again knowing $\bY$ and $\boldsymbol{\Phi}$, to predict accurately new values $Y_{\rm new}$ when new rows (i.e.\ data-points) are added to the matrix $\boldsymbol{\Phi}$.

In the present paper we build a rigorous theory for both these tasks for {\it random instances} of the GLM. In this setting each element $\Phi_{\mu i}$ of the matrix is sampled
independently from a probability distribution of zero mean and unit
variance, and the unknown vector $\bX^*$ has been also created
randomly from a probability distribution $P_0$, with each of its
components $X^*_1, \dots, X^*_n \iid P_0$. Since our main aim is to
study the intrinsic information-theoretic and algorithmic limitations
caused by the lack of samples and/or the amplitude of the noise, we assume throughout this paper that $P_0$ and $\varphi$ are known to the statistician (if they are not
the task can only be harder). Our results are derived in the
challenging and interesting high-dimensional limit where
$m,n\to \infty$ while $m/n\to\alpha$ a constant. Random instances of GLMs are both practically and theoretically relevant in many different contexts:

a) In {\it signal processing}, GLM estimation with a random matrix
$\boldsymbol{\Phi}$ has been studied with considerable attention in
the context of compressed
sensing~\cite{donoho2005sparse,candes2006near,donoho2009message}
where a $n$-dimensional sparse signal is recovered from $m<n$ noisy
measurements. While standard compressed sensing focused on the linear
case --where $\varphi(z,A)=z+A$ with a Gaussian noise $A$--
the generalized case was also widely studied
\cite{GAMP,REVIEWFLOANDLENKA}, especially for quantized output
\cite{kamilov2011optimal} and 1-bit compressed
sensing \cite{boufounos20081,xu2014bayesian} where
$\varphi(z,A)={\rm sign}(z+A)$, as well as for compressive phase
retrieval when $\varphi(z,A)=|z+A|$ \cite{schniter2015compressive}.

b) In {\it statistical
    learning}, important activity is dedicated to
  understand the limitation of learning with data generated by GLMs, both in the linear case, e.g.\ in the context of ridge
  regression or LASSO \cite{lassoMontanariBayati}, or with non-linear
  probabilistic output, e.g.\ logistic
  regression. Random instances were studied in particular in
  the context of so-called M-estimators \cite{ElKaroui,Donoho2016,gribonval2013reconciling,advani2016equivalence}.

  c) In studies of {\it artificial neural networks} there has been a
  large amount of works using random instances of GLMs, with
  $\varphi$ playing the role of a non-linear activation function. In
  this context the random GLM was introduced as the teacher-student
  setting for the perceptron in the pioneering work of Gardner and Derrida
  \cite{gardner1989three}. Large volume of work followed and is
  reviewed, e.g., in
  \cite{ReviewTishby,RevModPhys,engel2001statistical}. While initial
  works concentrated on a simple activation functions
  $\varphi(z)={\rm sign}(z-K)$ ($K$ is the threshold constant), many other functions were
  considered, e.g.\ in
  \cite{EngelReimers_wedge,BexBroeck_wedge,hosaka2002statistical}. Recently,
  the study of random instances of neural networks have emerged as a
  key ingredient in understanding the performance of deep learning
  algorithms \cite{Baldassi29112016,StatGen}. Computing mutual informations in GLMs is also a critical issue in confirming the information bottleneck scenario of \cite{tishby99information,DBLP:journals/corr/Shwartz-ZivT17}

  d) In {\it communications}, error-correcting codes that use
  random constructions are particularly efficient, as discussed by Shannon in his seminal paper
  \cite{shannon2001mathematical}. Random instances of GLMs describe
  both the setting of code-division multiple access --a multi-user access method used in communication technologies
  \cite{tanaka2002statistical,guo2005randomly}-- as well as an error
  correction scheme called sparse superposition codes, that have been shown
  to achieve the Shannon capacity for {\it any type} of
  noisy channel
  \cite{barron2010sparse,barbier2015approximate,rush2017capacity,barbier2016threshold,DBLP:journals/corr/BarbierDM17}.

%

Interestingly there is an important gap in the above volume of
work. On the one hand there are studies that rely on the algorithmic performance
of the so-called generalized approximate message-passing algorithm (GAMP)
\cite{mezard1989space,donoho2009message,GAMP}. GAMP is remarkable in
that its asymptotic ($n,m\to \infty$, $m/n\to\alpha$) performance can be analyzed
rigorously using the so-called state evolution
\cite{Bolthausen2014,bayati2011dynamics,bayati2015universality,javanmard2013state}.
However, GAMP is not expected to be always information-theoretically optimal. On the other hand, other results are concerned with the linear case of the GLM with additive Gaussian noise
for which the information-theoretically optimal performance was
established in \cite{BarbierDMK16,BarbierMDK17,reeves2016replica} (the
methodology of these works unfortunately does not generalize
straightforwardly to the important non-linear case or
to other types of additive noise). All the other works, giving
information-theoretic results for the non-linear case, are based on
powerful and sophisticated but {\it non-rigorous}
techniques originating in statistical physics of disordered
systems, such as the cavity and replica methods
\cite{mezard1990spin}. Historically, the first of these non-rigorous, yet correct,
results on information-theoretic limitations of learning was for the perceptron with binary weights and was established using the replica
method in \cite{gardner1989three,gyorgyi1990first,sompolinsky1990learning}, including a discontinuous phase
transition to perfect learning that appears as the ratio between
number of samples and the dimension exceeds $\alpha\approx 1.249$. 

In the present paper we close the above gap between mathematically rigorous
work and conjectures (some of them several decades old) from statistical mechanics. In particular, we
prove that the results for GLMs stemming from the
replica method are indeed correct and imply the optimal
value of both the estimation and generalization error. These results
are summarized in section ``Main results''. 
The proof is based on the {\it adaptive interpolation method} recently developed in \cite{barbier_stoInt} and is of
independent interest as it is applicable to a range of other models, see section ``Methods and proofs''
and the supplemantary informations (SI). We compare our
information-theoretic results to the
performance of the GAMP algorithm and its state
evolution (as reviewed briefly in section ``Main results'').
We determine regions of parameters where this algorithm is or is not
information-theoretically optimal. Up to technical assumptions (as
specified below), our results apply to all activation functions $\varphi$ and priors $P_0$,
thus unifying a large volume of previous work where many particular
functions have been analyzed on a case by case basis. This generality allows us
to provide a unifying understanding of the types of phase transitions and
phase diagrams that we can encounter in GLMs, which is as well of
independent interest and we devote section ``Application to learning
and inference'' 
to its presentation.

\section{Main results}
\label{sec:results_0}
This section summarizes our main results. Their formal statement
together with all technical assumptions and full proofs are provided in
section ``Methods and proofs'' and in the SI.

For the random GLM problem as defined in the introduction, the optimal way to estimate the ground-truth
signal/weights $\bX^*$ relies on its posterior probability
distribution
\begin{align}\label{posterior_0}
      P(\bx | \bY, \boldsymbol{\Phi} ) = \frac{1}{{\cal Z}(\bY,
  \boldsymbol{\Phi})}   \prod_{i=1}^n  P_0(x_i)   \prod_{\mu=1}^m
  P_{\rm out}\Big(Y_\mu \Big| \frac{[\boldsymbol{\Phi} \bx]_{\mu}}{\sqrt{n}}\Big)
\end{align}
where we used the prior $P_0$ of
$\bX^*$, and introduced the likelihood $P_{\rm out}$ that an
output $Y_\mu$ is observed given $\frac{1}{\sqrt{n}}[\boldsymbol{\Phi} \bx]_{\mu}$. 
$P_{\rm out}( \cdot \, | \, z)$ is the probability density function of $\varphi(z,A)$ (where again the r.v. $A \sim P_A$ accounts for noise).
This paper is concerned with the so-called {\it Bayes-optimal} setting
where the prior $P_0$ and the likelihood $P_{\rm out}$ that appear in
the posterior \eqref{posterior_0} were also used to generate the
ground-truth signal $\bX^*$ and the labels $\bY$, using a known random
matrix $\boldsymbol{\Phi}$. 

A first quantity of interest is the {\it free
entropy} (which is the {\it free energy} up to a sign) defined as
$f_n(\bY,\boldsymbol{\Phi}) \equiv \frac1n\ln{ {\cal Z}(\bY,\boldsymbol{\Phi})}$. The expectation of the free entropy is equal to minus the conditional entropy density of the observation $-\frac1nH(\bY|\boldsymbol{\Phi})$, as well as (up to an additive
constant) to the mutual information density between the signal and the observations $\frac{1}{n}I(\bX^*;\bY|\boldsymbol{\Phi})$.

\subsection{The free entropy}
Our first result is the rigorous determination of the free
entropy, in the high-dimensional asymptotic regime $n,m\to \infty$, $m/n\to\alpha$. For a random matrix
$\boldsymbol{\Phi}$ with independent entries of zero mean and unit variance, for output $\bY$
that was generated using \eqref{measurements_0}, 
and under appropriate
technical assumptions stated precisely in section ``Methods and proofs'', 
the free entropy converges in probability to:
\begin{align}
f_n(\bY,\boldsymbol{\Phi}) \equiv \frac1n\ln{ {\cal Z}(\bY,\boldsymbol{\Phi})} \xrightarrow[n \to \infty]{\mathbb{P}}  {\adjustlimits \sup_{q \in [0,\rho]} \inf_{r \geq  0}}  f_{\rm RS}(q,r;\rho) \label{RSf_0}
\end{align}
where $\rho\equiv \E_{P_0}[(X^*)^2]$ and where the {\it potential} $f_{\rm RS}(q,r;\rho)$ is
\begin{talign}
	f_{\rm RS}(q,r;\rho) &\equiv \psi_{P_0}(r)
  + \alpha \Psi_{P_{\rm out}}(q;\rho) - rq/2 \,, \label{result-one_0}\\
\psi_{P_0}(r) &\equiv 
\stackrel[Z_0,X_0]{}{\E}\!\! \ln \int dP_0(x) \,e^{r x X_0 +
                \sqrt{r} x Z_0   - r  x^2/2}\, ,  \label{psi0_0} \\ 
\label{PsiPout_0} \Psi_{P_{\rm out}}(q;\rho) &\equiv
\stackrel[V,W,\tilde Y_0]{}{\E}\!\!
\ln \int {\cal D}w P_{\rm out}(\tilde{Y}_0 | \sqrt{q}\, V \!+\!
  \sqrt{\rho - q}\, w) \,,
\end{talign}
where ${\cal D}w=dw \exp(-w^2/2)/\sqrt{2\pi}$ is a standard Gaussian measure and the scalar r.v. are independently sampled from
$X_0 \sim P_0$, then $V, W, Z_0 \iid {\cal N}(0,1)$
and $\tilde Y_0\sim P_{\rm out}(\cdot| \sqrt{q}\, V + \sqrt{\rho - q}\,
W)$.
Only the special linear case with Gaussian $P_{\rm out}$ was
known rigorously so far \cite{BarbierDMK16,BarbierMDK17,reeves2016replica}. 
Convergence of the {\it averaged} free entropy 
is precisely stated in Theorem \ref{th:RS_1layer_0}; 
the one in probability follows from concentration results in the SI.

One can check by explicit comparison that for specific choices of
$P_0$ and $P_{\rm out}$ the expression \eqref{result-one_0} is the replica-symmetric free entropy derived in numerous statistical physics papers (thus the $\rm{RS}$ in $f_{\rm RS}$), and in particular in
\cite{gardner1989three,
  mezard1989space,gyorgyi1990first,sompolinsky1990learning} for $\varphi(z)={\rm sign}(z)$. The formula for general
$P_0$ and $P_{\rm out}$ was conjectured based on the statistical
physics derivation in \cite{REVIEWFLOANDLENKA}.
Establishing \eqref{RSf_0} closes these old
conjectures and yields an important step
towards vindication of the cavity and replica
methods for inference, alongside with e.g. \cite{coja2017information,bayati2011dynamics}. We now discuss the main consequences of this formula.

\subsection{Overlap and optimal estimation error}
Our second result concerns the overlap between a sample $\bx$ from the posterior \eqref{posterior_0} and the ground-truth. We obtain that as $n,m \to \infty$, $n/m\to \alpha$, 
\begin{align}
  \frac1n\,\big|\bx\cdot \bX^*\big|\xrightarrow[n \to \infty]{\mathbb{P}} q^*\label{asymptOverlap_0}
\end{align}
whenever $q^*=q^*(\alpha)$ the maximizer in formula \eqref{RSf_0} is unique. This is the case for almost every $\alpha$ (see the SI). 
%

It is a simple fact of Bayesian inference that, given the measurements $\bY$
and the measurement matrix $\boldsymbol{\Phi}$, the estimator $\hat \bX$ that minimizes the mean-square error with the ground-truth
$\bX^*$ is the mean
of the posterior distribution
\eqref{posterior_0}, i.e. $\hat \bX= \E_{P(\bx|\bY,\bbf{\Phi})}[\bx]$. The minimum mean-square error (MMSE) that is
achieved by such ``Bayes-optimal'' estimator is deduced, again in the limit $n\to\infty, m/n\to\alpha$, as follows:
\begin{align}\label{eq:MMSEresult_0}
{\rm MMSE}=\frac1n\E\Big[\big\|\bX^*-\hat\bX\big\|^2\Big] \to \rho-q^* \, . 
\end{align}
We refer to Theorem \ref{th:errors_0} in section ``Main theorems'' for rigorous statements. Again the value of the MMSE was known rigorously so far only for the linear case with Gaussian noise
\cite{BarbierDMK16,reeves2016replica,BarbierMDK17} (and
conjectured for the non-linear case e.g. in \cite{REVIEWFLOANDLENKA}).

\subsection{Optimal generalization error}
Our third result concerns the prediction error, also called generalization error. Consider again the statistical model
\eqref{measurements_0}. 
To define the {\it
  Bayes-optimal generalization error}, one is given a new row of the
matrix/data point, denoted
$\boldsymbol{\Phi}_{\rm new} \in {\mathbb R}^n$ (in addition to the data $\boldsymbol{\Phi}$ and associated outputs $\bY$ used for the learning), and is asked to
estimate the corresponding output value $Y_{\rm new}$. We seek for an estimator $\hat Y_{\rm new}=\hat Y_{\rm new}(\bY,\boldsymbol{\Phi},\boldsymbol{\Phi}_{\rm new})$ that achieves ${\cal E}_{\rm gen} \equiv \min_{\hat Y_{\rm new}}\EE[ (Y_{\rm new}-\hat Y_{\rm new})^2
]$, i.e. that minimizes the MSE with the true $Y_{\rm new}$ obtained using
the ground-truth weights $\bX^*$. Such estimator is again obtained from the
posterior: $\hat Y_{\rm new}  = \E_{P_A(a)}\E_{P(\bx|\bY,\bbf{\Phi})}\varphi(\frac{1}{\sqrt{n}}\boldsymbol{\Phi}_{\rm new}\cdot
\bx,a)$. Note that this is different than the {\it plug-in} estimator $\tilde Y_{\rm new}=\varphi(\frac{1}{\sqrt{n}}\boldsymbol{\Phi}_{\rm new}\cdot \hat
\bX)$, which leads to a
worse MSE than $\hat Y_{\rm new}$. Yet it is often used in practice for deterministic models
since most algorithms for generalized linear
regression do not provide 
the full posterior distribution.

Our result states that the optimal generalization error follows from
the I-MMSE theorem \cite{GuoShamaiVerdu_IMMSE} applied to the free
entropy \eqref{RSf_0} (see the SI for the details). 
The optimal
generalization error reads as $n\to\infty$, $m/n\to\alpha$ ($q^*$ is the maximizer in \eqref{RSf_0})
\begin{equation}
{\cal E}_{\rm gen}\!\to \! \mathop{\mathbb{E}}_{V,a}\!\!\big[\varphi(\sqrt{\rho}\,V, \!a)^2\big] 
 \!- \! \mathop{\mathbb{E}}_V \!\big[\!\mathop{\mathbb{E}}_{w,a}\!\!\big[\varphi(\sqrt{q^*}\,V \!+ \! \sqrt{\rho \!- \!q^*} w, \!a)\big]^2\big], \label{Egen_generic_0}
\end{equation}
where $V, w\iid{\cal N}(0,1)$ and $a\sim P_A$. See again Theorem \ref{th:errors_0} in the section ``Main theorems'' for the precise statement (and Theorems 3 and 4 in the SI).



Note that for labels $\bY$ belonging to a discrete set the
MSE might not be a suitable loss and we are
more often interested in maximizing the so-called overlap, i.e.\ the
probability of obtaining the correct label. In that case the
Bayes-optimal estimator is computed as the argmax of the posterior
marginals, rather than as its mean, i.e.\ for discrete labels $\bar Y_{\rm new}  = {\rm argmax}_y {\mathbb{P}}(y= \varphi(\frac{1}{\sqrt{n}}\boldsymbol{\Phi}_{\rm new} \cdot \bx,a))$ where again $\bx$ is distributed according to \eqref{posterior_0}, $a\sim P_A$.
%
The replica method has been used to compute the optimal generalization
error for the perceptron where $\varphi(x)=\sign(z)$ in
the pioneering works of
\cite{gyorgyi1990first,ReviewTishby,opper1991generalization}. 
We note that in this special case the plug-in
estimator $\tilde Y_{\rm new}$ is actually equal to the optimal one $\bar Y_{\rm new}$. 

A final note concerns the issue of overfitting. In optimization-based approaches
to learning overfitting may lead to a generalization error which is too large as compared to the training error. In the 
Bayes-optimal setting the estimators are constructed in order not to
overfit.
This is related to general properties of Bayes-optimal inference
and learning that are called 
``Nishimori conditions'' in the physics literature
\cite{REVIEWFLOANDLENKA} and that turn out to be crucial in our proofs.


%

\subsection{Optimality of approximate message-passing}
While the three results stated above are of an information-theoretic nature, our fourth
one concerns the performance of an algorithm to solve random
instances of GLMs called generalized approximate
message-passing (GAMP) \cite{donoho2009message,GAMP,REVIEWFLOANDLENKA}, which is closely related to the TAP equations developed in statistical physics \cite{thouless1977solution,mezard1989space,Kaba}.


The GAMP algorithm can be summarized as follows \cite{donoho2009message,GAMP,REVIEWFLOANDLENKA}: 
Given initial estimates $\widehat{\bx}^0,\mathbf{v}^0$ for the
marginal posterior means and variances of the unknown signal vector $\bX^*$ entries, GAMP iterates the following equations, with $g_{\mu}^{0} =0$:
\begin{align*}
	\left\{
	\begin{array}{llll}
		V^{t}  &=& \overline {\mathbf{v}^{t-1}} \\
		\boldsymbol{\omega}^{t}  &=& \boldsymbol{\Phi} \widehat{\bx}^{t-1}/\sqrt n - V^{t} \mathbf{g}^{t-1} &\label{omega} \\
		g_{\mu}^{t} &=& g_{P_{\rm out}}(Y_{\mu},\omega^{t}_{\mu},V^{t}) &\forall\ \mu=1,\ldots m\\
		\lambda^t &=&  \alpha \,\overline {  g^2_{P_{\rm out}}(\bY,\boldsymbol{\omega}^{t},V^{t})} &\\
		{\mathbf R}^t &=&  \widehat{\bx}^{t-1} +
                                  (\lambda^t)^{-1} \boldsymbol{\Phi}^\intercal   \bg^{t}/\sqrt{n} &\\
		\widehat{x}_i^{t} &=&  g_{P_0}(R^{t}_i,\lambda^t) & \forall \ i= 1,\ldots n \\
		\mathrm{v}_i^{t} &=&  (\lambda^t)^{-1} \,\partial_R g_{P_0}(R,\lambda^t)|_{R=R^{t}_i} & \forall\ i= 1,\ldots n
\end{array}
\right.
\end{align*}
(here we denote by $\overline \bu$ the average over all the components of a vector $\bu$).
The so-called thresholding function $g_{P_0}(R,\lambda)$ is defined as
the mean of the normalized distribution
$\propto P_0(x) \exp(-\lambda(R-x)^2/2)$ and the output function $g_{P_{\rm
    out}}(Y,\omega,V)$ is similarly the mean of the normalized distribution (of $x$) $\propto P_{\rm out}(Y|\omega + \sqrt{V} x)\exp(-x^2/2)$.
%

The heuristic derivation of GAMP in statistical physics \cite{REVIEWFLOANDLENKA} suggests
via the definition of the function $g_{P_{\rm out}}$ that 
$\boldsymbol{\omega}$ and $V$ are the estimates of the means and average variance of the
components of the variable $\bz=\boldsymbol{\Phi} \bx$. This, in turn,
suggests a GAMP prediction of labels of new data
points:
\begin{equation*}
       \hat Y^{{\scriptsize\rm GAMP},t}_{\rm new} = \int 
  y\,   P_{\rm out}(y|\omega^t_{\rm new}+z\sqrt{V^t}) \,dy  {\cal D}z
\end{equation*}
where $\omega^t_{\rm new} \equiv \frac{1}{\sqrt n}\boldsymbol{\Phi}_{\rm new}
\cdot\widehat{\mathbf{x}}^{t-1}$. Comparing it with the test-set labels,
this serves to compute GAMP's generalization error.

One of the strongest assets of GAMP is that its performance can be
tracked via a closed form procedure known as state evolution (SE), again in the asymptotic limit when $n,m \to
\infty$, $m/n\to\alpha$. For proofs of SE see \cite{bayati2011dynamics,bayati2015universality} for the linear case,
and \cite{javanmard2013state} for the generalized one. In our
notations, SE tracks the correlation (or ``overlap'') between
the true weights $\bX^*$ and their estimate $\widehat{\mathbf{x}}^t$ defined as $q^t\equiv\lim\limits_{n\to\infty}\frac{1}n\mathbf{ X}^*\cdot \widehat{\mathbf{x}}^t$ via:
%
%
%
\begin{talign} \label{eq:SE_0}
&q^{t} =  2 \psi'_{P_0}( r^t) \,, \qquad r^t = 2 \alpha  \Psi'_{P_{\rm out}}(q^{t-1};\rho)  \,.
\end{talign}
The derivatives are w.r.t.\ the first argument. Similarly for the evolution of GAMP's generalization error ${\cal E}^{\rm GAMP,t}_{\rm gen}$ (see SI) we get that it is asymptotically, and with high probability, given by the r.h.s.\ of formula \eqref{Egen_generic_0} but with $q^*$ replaced by $q^t$.
%

It is a simple algebraic fact that the fixed points of the SE equations \eqref{eq:SE_0}
correspond to the critical points of the potential~\eqref{result-one_0}. The question of GAMP achieving asymptotically optimal MMSE or
generalization error therefore reduces to the study of the extrema of
the two-scalar-variables potential \eqref{result-one_0}. 
If the SE \eqref{eq:SE_0} converges to
the same couple $(q, r)$ as the extremizer $(q^*, r^*)$ of
\eqref{RSf_0} then GAMP is optimal, and if it does not then GAMP
is sub-optimal. In the next section we  illustrate this result on
several examples, delimiting regions where GAMP reaches optimality. We note that optimality of AMP-based algorithms in terms of the MMSE
on the ground-truth vector $\bX^*$ was proven for several cases where
the extremizer $q^*$ in \eqref{RSf_0} is unique, see
e.g.~\cite{donoho2013information}, or in the linear case of GLM in
\cite{BarbierMDK17}. Our results allow to complete the
characterization of regions of parameters where the algorithm reaches
optimal performance in terms of the estimation and generalization errors. 
While the asymptotic value of the Bayes-optimal generalization
error was predicted for some cases of $P_{\rm out}$ and $P_0$
\cite{opper1991generalization}, and TAP-based
algorithms were argued to reach this performance in \cite{opper1996mean,opper2001tractable}, it was not known whether this error can be achieved provably nor for what exact regions of
parameters the algorithm is sub-optimal. Our present work settles
this question thanks to the state evolution of the GAMP algorithm. 
Interestingly, heuristic arguments based on the glassy nature of the
corresponding probability measure were used to argue that direct
sampling or optimization-based approaches will not be able to match
this performance \cite{sompolinsky1990learning}. Whether this statement is correct
goes beyond the scope of the present paper. 


%
\section{Application to learning and inference}\label{estimation_0}
\begin{figure*}[ht!]
\hspace{-1cm}
\includegraphics[width=.44\linewidth]{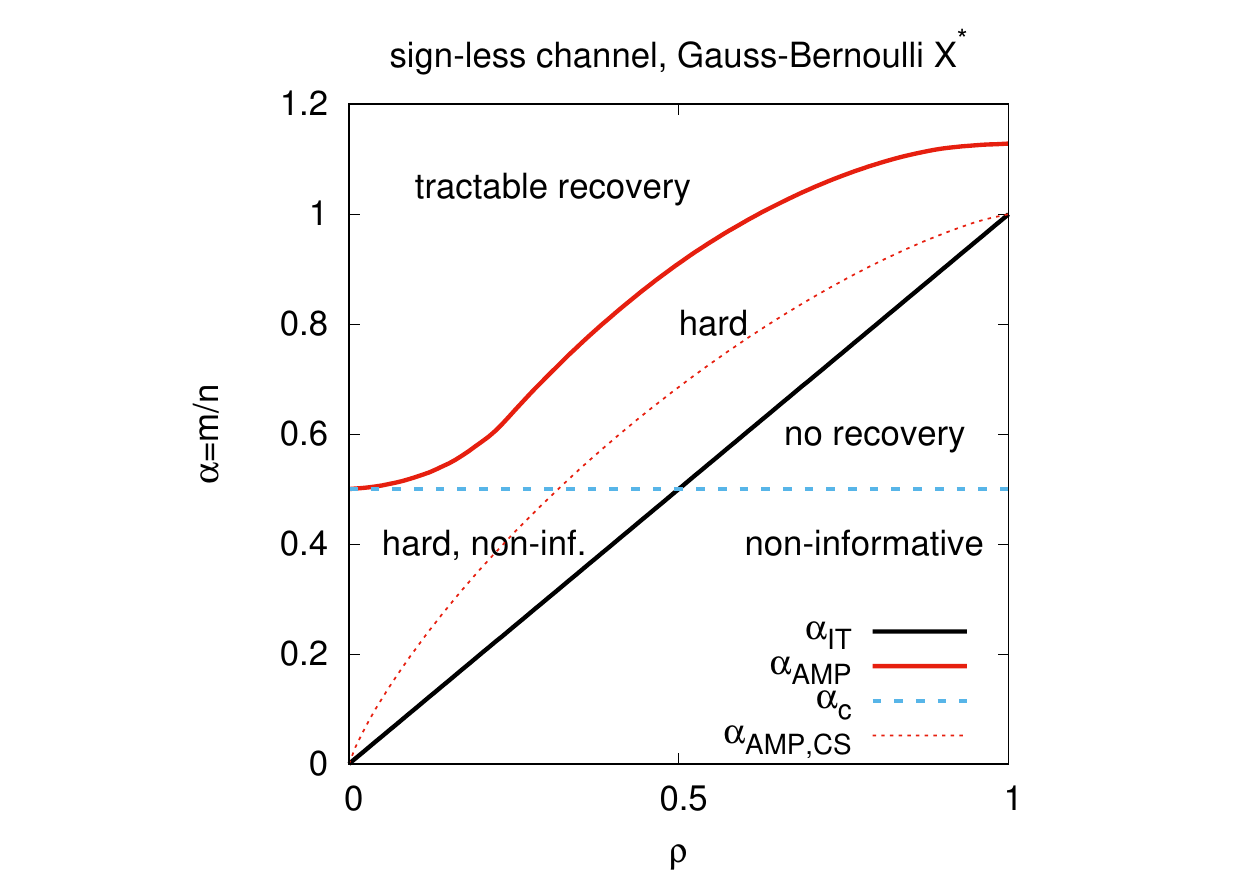}
\hspace{-2.5cm}
\includegraphics[width=.44\linewidth]{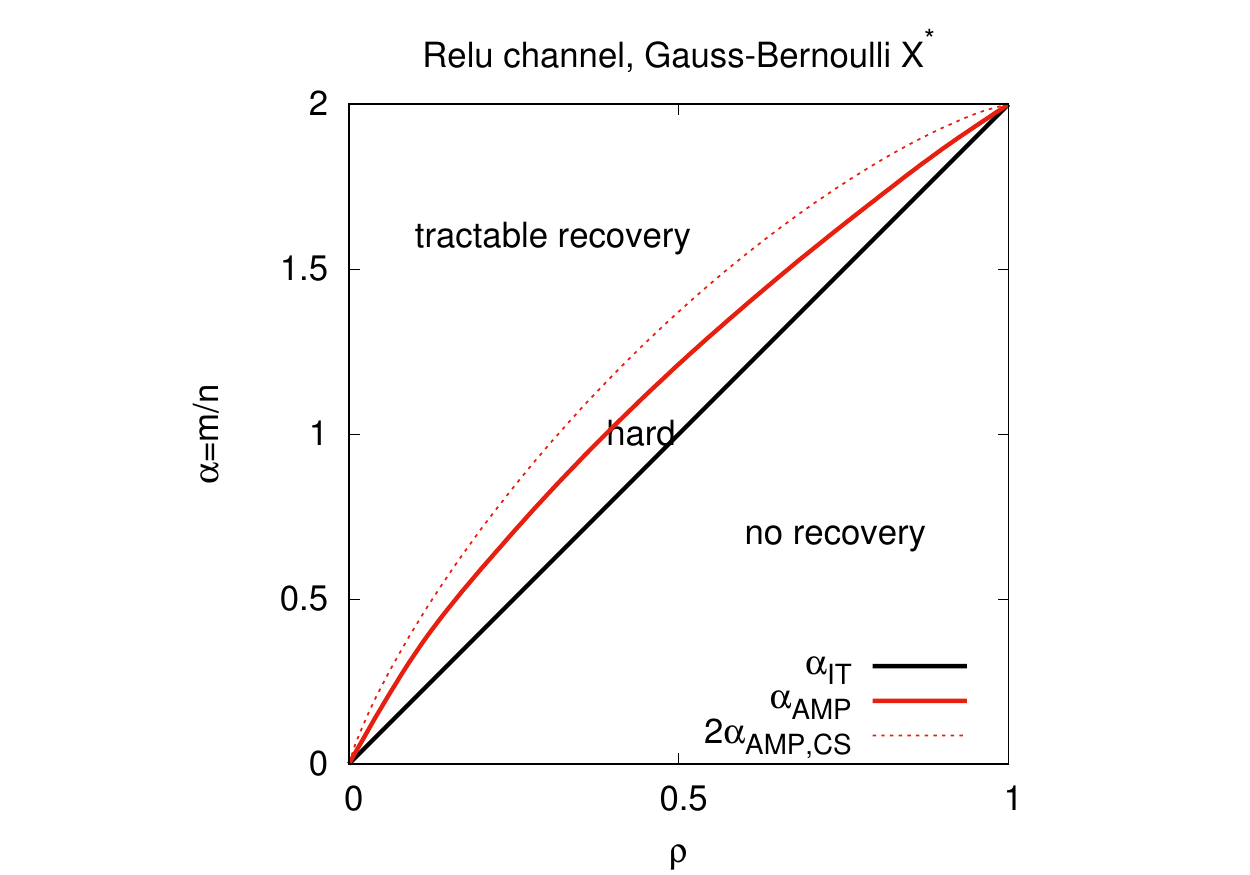}
\hspace{-2.5cm}
\includegraphics[width=.44\linewidth]{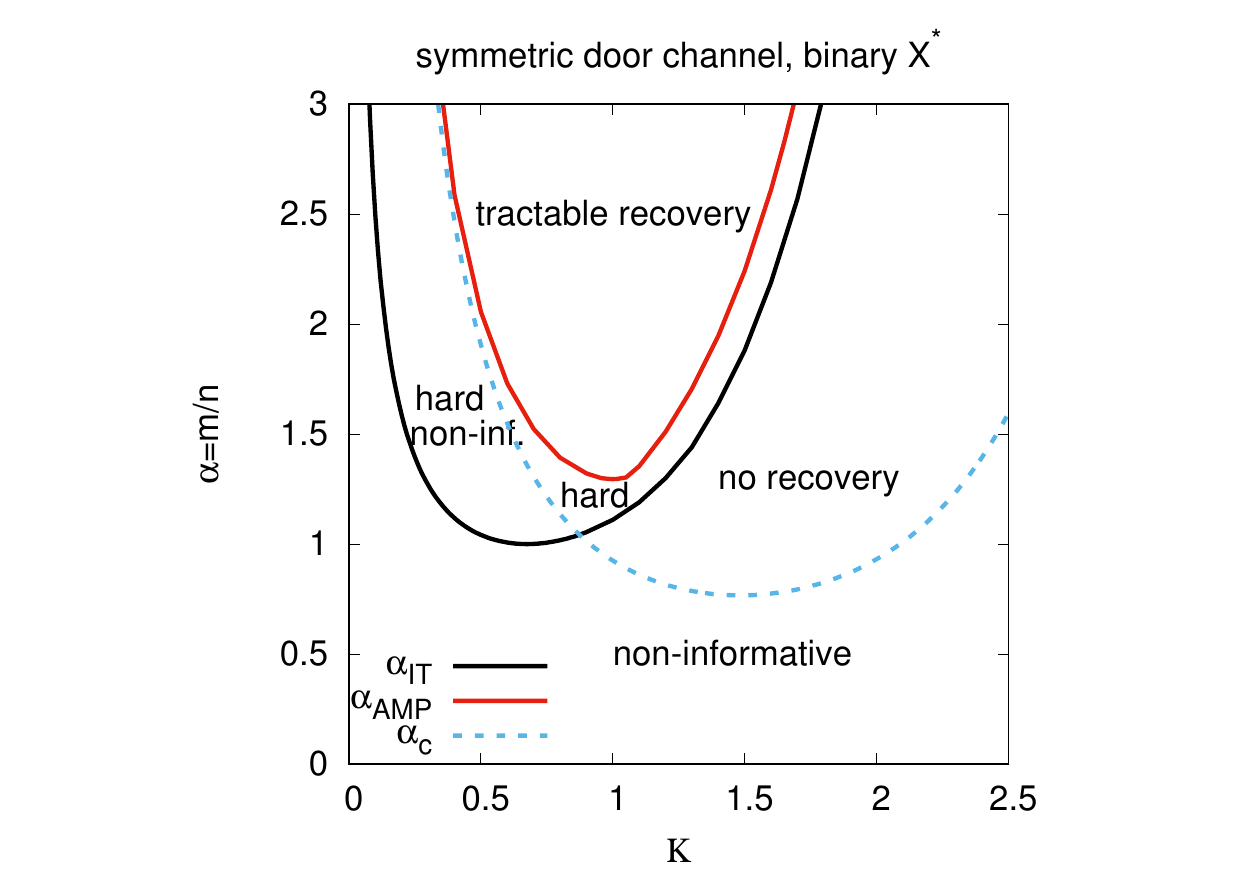}
\caption{\label{fig:phase_0} Phase diagrams showing boundaries
  of the region where almost exact recovery is possible (in absence of
  noise). \textbf{Left:} The case of sign-less sparse recovery, $\varphi (x)=|x|$ with a Gauss-Bernoulli signal, as a
  function of the ratio between number of samples/measurements and the
  dimension $\alpha=m/n$, and the fraction of non-zero components
  $\rho$. Evaluating \eqref{result-one_0} for this case, we
  find that a recovery of the signal is information-theoretically impossible for
  $\alpha<\alpha_{\rm IT}=\rho$. Recovery becomes possible starting from
  $\alpha>\rho$, just as in the canonical compressed
  sensing. Algorithmically the sign-less case is much
  harder. Evaluating \eqref{eq:stability_0} we conclude that GAMP is not able to perform better than a
  random guess as long as $\alpha<\alpha_c=1/2$, and the same is true for spectral algorithms, see \cite{mondelli2017fundamental}.
For larger values of $\alpha$, the
  inference using GAMP leads to better results than a purely random guess.
  GAMP can recover the signal and generalize perfectly only for values of $\alpha$
  larger than $\alpha_{\rm AMP}$ (full red line).
  The dotted red line shows for comparison the
  algorithmic phase transition of the canonical compressed sensing. 
  \textbf{Center:} Analogous to the left panel, for the ReLU output function,
  $\varphi (x)=\max(0,x)$. 
  Here it is always possible to perform better than random
  guessing using GAMP. The dotted red line shows the algorithmic phase transition
  when using information only about the non-zero observations. 
\textbf{Right:} Phase diagram for
  the symmetric door output function $\varphi(z)={\rm sign}(
  |z|-K)$ for a Rademacher signal, as
  a function of $\alpha$ and $K$. The stability line $\alpha_c$ is
  depicted in dashed blue, the information-theoretic phase transition
  to almost exact recovery $\alpha_{\rm IT}$ in black, and the algorithmic
  one $\alpha_{\rm AMP}$ in red.}
\end{figure*}

In this section, we report what our results imply for the
information-theoretically optimal errors, and those reached by the
GAMP algorithm for several interesting cases of output functions $\varphi$
and prior distributions $P_0$. We do not seek to be exhaustive in any way,
we simply aim to illustrate the kind of insights about the GLM that can be obtained from our results.  We focus on
determination of phase transitions in performance as we vary
parameters of the model, e.g.\ the number of samples or the sparsity of
the signal. We use careful numerical procedures to compute the
expectations required in the formula \eqref{result-one_0}, and check that
the reported results are stable towards the choice of various
precision-parameters. In this section we, however, do not seek rigor
in bounding formally the corresponding numerical errors. Many of the
codes used in this section are given online in a github repository \cite{githubrepo}.

\subsection{General observations about fixed points and terminology} \

$\bullet$ {\bf Non-informative fixed point and its stability:} It is
instrumental to analyze under what conditions $q^*=0$ is the optimizer
in \eqref{RSf_0}. Our result \eqref{eq:MMSEresult_0} about the MMSE implies that if $q^*=0$ then
the MMSE is as large as if we had no samples/measurements at our
disposition. A necessary condition for $q^*=0$ is that it is a fixed
point of the state evolution. In turn, a sufficient condition for the
state evolution~\eqref{eq:SE_0} to have such a fixed point is that $i)$
the output density $P_{\rm out}(y|z)$ is even in the
argument $z$, and $ii)$ that the prior $P_0$ has zero mean. A proof of this 
is given in the SI. In order for $q^*=0$ to be a fixed point to which the
state evolution \eqref{eq:SE_0}
converges, it needs to be stable. We detail in the SI that under properties $i)$ and $ii)$ this fixed point is
stable when 
\begin{align}\label{eq:stability_0}
           \alpha \int dy \frac{ \big(\int {\cal D}z
				   (z^2 - 1) 
		 P_{\rm out}(y|\sqrt{\rho}z) \big)^2
           }{ \int {\cal D}z  P_{\rm
               out}(y|\sqrt{\rho}z) }  < 1 \,.
\end{align}
In what follows we will denote $\alpha_c$ the
largest value of $\alpha$ for which the above condition
holds. Consequently the error reachable by the GAMP algorithm is as bad
as random guessing for both the estimation and generalization errors as
long at $\alpha<\alpha_c$. For $\alpha>\alpha_c$, starting with infinitesimal positive $q$ the state evolution will move towards larger $q$ as in \cite{fletcher2018iterative}. 
 Note that the condition
\eqref{eq:stability_0} also appears in a recent work
\cite{mondelli2017fundamental} as a barrier for performance of
spectral algorithms. 

Concerning the information-theoretically optimal error, we will call the phase where ${\rm MMSE}=\rho$, i.e.\ $q^*=0$ is the
extremizer of \eqref{result-one_0}, the {\it non-informative phase}. 
Existing literature sometimes refers to such behavior 
as {\it retarded
  learning} phase \cite{hansel1992memorization}, in the sense that in
that case a critical number of samples is required for the generalization error
to be better than random guessing.  Below we will evaluate condition \eqref{eq:stability_0} explicitly
for several examples. 

$\bullet$ {\bf Almost exact recovery fixed point:} Another fixed point of \eqref{eq:SE_0} that is worth
our particular attention is the one corresponding to almost exact recovery, meaning with average error per coordinate going to $0$ as $n\to\infty$, where
$q^*=\rho$. A~sufficient and necessary condition for this to be a fixed point is that 
$\lim_{q\to\rho}\Psi_{P_{\rm out}}'(q;\rho) = +\infty$.
This means that the integral of the Fisher information of the output channel diverges:
\begin{equation*} 
       \int dy d\omega   \frac{e^{-\frac{\omega^2}{2\rho}}}{\sqrt{2\pi \rho}}
	   \frac{ P_{\rm out}'(y|\omega)^2}{   P_{\rm out}(y|\omega)}
       = +\infty  \, ,
\end{equation*}
where $P'_{\rm out}(y|\omega)$ denotes the partial derivative w.r.t.\ $\omega$.
This typically means that the output channel should be noiseless. For
example,  for the
Gaussian channel with noise variance~$\Delta$, the above expression
equals $1/\Delta$. For the probit channel where $P_{\rm out}(y|z) =
{\rm erfc}(-yz/\sqrt{2\Delta})/2$ the above expression at small
$\Delta$ is proportional to $1/\sqrt{\Delta}$. 
  
Stability of the almost exact recovery fixed point 
depends non-trivially on both the properties of the output
channel, and of the prior. Below we give 
several examples where almost exact recovery either is or is not possible. 
In what follow we call the region of parameters for which ${\rm MMSE}=0$,
i.e.\ $q^*=\rho$ is the extremizer in \eqref{RSf_0}, the {\it almost exact recovery} phase. 


$\bullet$ {\bf Hard phase:} As can be anticipated from the statement of our main algorithmic
result, there are regions of parameters for which the error
reached by GAMP is asymptotically equal to the optimal
error, and regions where it is not. We will call {\it hard phase} the region of
parameters where ${\rm MMSE} < {\rm MSE}_{\rm AMP}$ with a strict inequality. Focusing on the
ratio $\alpha$ between the number of samples and the dimensionality, we will
denote $\alpha_{\rm IT}$ the ratio for which the hard
phase appears, and $\alpha_{\rm AMP}>\alpha_{\rm IT}$ the ratio for which
it disappears. In other words, the hard phase is an interval $(\alpha_{\rm IT}, \alpha_{\rm AMP})$, and is associated to a first order phase transition in the Bayes-optimal posterior probability distribution.

It remains a formidable open question of average computational
complexity whether in the setting of this paper (and for problems that are NP-complete in the worst case) there exists an efficient algorithm that achieves better
performance than GAMP in the hard phase. The authors are not aware of
any, and tend to conjecture that there is none.

\subsection{Sensing compressively with non-linear outputs}

\begin{figure*}[h!!]
\centering
\includegraphics[width=1\linewidth]{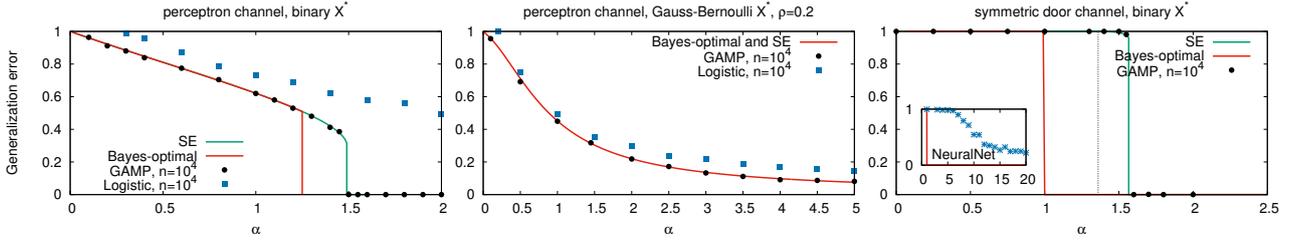}
\caption{\label{fig:classif_0}
 Optimal generalization error in three classification problems versus
 the sample complexity $\alpha$, the size of the training set being $\alpha n$. The red line is the Bayes-optimal
  generalization error \eqref{Egen_generic_0}
  while the green line shows the (asymptotic)
  performances of GAMP as predicted by
  the state evolution \eqref{eq:SE_0}. For comparison,
  we also show the results of GAMP (black dots) and, in blue, the
  performance of a standard out-of-the-box solver.
\textbf{Left:} Perceptron, with $\varphi (x)={\rm sign}(x)$ and a binary Rademacher signal. While a perfect generalization is information-theoretically possible starting from
$\alpha_{\rm IT}\approx1.249$, the state evolution predicts that GAMP will achieve
such perfect prediction only above $\alpha_{\rm AMP}\approx1.493$. The results of a logistic regression with fine-tuned regularizations with the software scikit-learn
\cite{scikit-learn} are shown for comparison. \textbf{Middle:} Perceptron with Gauss-Bernoulli distribution of the weights. No phase transition is observed in
this case, but a smooth decrease of the error with $\alpha$. The results of a
logistic regression are very close to
optimal. \textbf{Right:} The symmetric door activation rule with parameter
$K$ chosen in order to observe the same number of occurrence of the two
classes. In this case there is a sharp phase transition from as bad as
random to perfect generalization at $\alpha_{\rm IT}=1$. GAMP identifies the rule
perfectly only starting from $\alpha_{\rm AMP}\approx1.566$. The
non-informative fixed point is stable up to $\alpha_c=1.36$ (dashed line). Interestingly, this non
linear rule seems very hard to learn for standardly used solvers. Using Keras \cite{chollet2015},
a neural network with $2$ hidden layers was able to learn only approximately the
rule, only for considerably larger training set sizes and much larger number
of iterations than GAMP.}
\end{figure*}

Existing literature covers in detail the case of noiseless compressed sensing,
i.e.\ when the output function $\varphi(z)=z$. Representative
sparse prior distribution is the Gauss-Bernoulli (GB) distribution $P_0=\rho {\cal N}(0,1) + (1-\rho) \delta_{0}$, where $\rho$ is the average
fraction of non-zeros, which are in this case standard Gaussians. The phase diagram of this case is well known
see e.g. \cite{wu2010renyi,krzakala2012statistical}. In noiseless compressed
sensing with random i.i.d.\ matrices and GB prior, 
almost exact recovery of the signal is possible for $\alpha>\alpha_{\rm IT}=\rho$
and GAMP recovers the signal for
$\alpha>\alpha_{\rm AMP,CS}$ where  $\alpha_{\rm AMP,CS}$ is plotted in
Fig.~\ref{fig:phase_0} (left) with a dotted red line, thus delimiting the hard phase
of compressed sensing. We note that the Donoho-Tanner phase transition \cite{donoho2005sparse}
known as the performance limit of the LASSO $\ell_1$ regularization is slightly
higher than $\alpha_{\rm AMP,CS}$. 

$\bullet$ {\bf Sign-less output channel:} The phase diagram of noiseless
compressed sensing changes intriguingly when only the absolute value
of the output is measured, i.e.\ when $\varphi (z)=|z|$ instead of
$\varphi (z)=z$. Such an output channel is reminiscent of the widely studied
phase retrieval problem \cite{fienup1982phase} where the signal is complex valued and only
the amplitude is observed. Generalization of our results to the complex case would require extensions, as done for the algorithmic aspects in \cite{maleki2013asymptotic}. The real-valued case was
studied under the name {\it sparse recovery from quadratic
  measurements} in the literature,
e.g.~\cite{soltanolkotabi2017structured} and references therein, when
the number of non-zero variables grows slower than linearly with the
dimension~$n$. Our results give access to the
phase diagram of sparse recovery from quadratic (or equivalently
sign-less) measurements, that is presented in Fig.~\ref{fig:phase_0}
(left) for the GB prior.

We observe that the information-theoretical phase transition
$\alpha_{\rm IT}$ is the
same in the sign-less sparse recovery as in the canonical linear case,
i.e.\ almost exact recovery is possible whenever $\alpha>\rho$. However, the
algorithmic phase transition $\alpha_{\rm AMP}$ above which GAMP is able to find the
sparse signal\footnote{We note that in order to break the symmetry
  that prevents GAMP to find the
  signal in constant number of iteration steps, we mismatch infinitesimally the output function
  $\varphi$ used in the algorithm from the symmetric one used to
  generate the data. Another way to deal with this issue is related
  to a spectral initialization as discussed recently in \cite{mondelli2017fundamental}.} is strikingly larger for the sign-less case (red line in
left panel of Fig.~\ref{fig:phase_0}). We note that even for a dense
signal $\rho=1$ almost exact recovery is algorithmically possible only for
$\alpha> \alpha_{\rm AMP}(\rho=1) \approx 1.128$. For very sparse signals, small~$\rho$, the
situation is even more striking because measurement rate of at least
$\alpha>\alpha_c = 1/2$ is needed for algorithmically tractable almost exact recovery
for every $\rho$. This is in sharp contrast with the canonical
compressed sensing where $\alpha_{\rm AMP,CS} \to 0 $ as $\rho \to
0$. The nature of this algorithmic difficulty of GAMP is related to
the symmetry of the output channel thanks to which the non-informative
fixed point is stable for $\alpha<\alpha_c=1/2$. Summarizing this
result in one sentence, tractable compressive sensing is impossible
(for $\alpha<1/2$) if we have lost the signs. We remind that this result
holds in the setting of the present paper, i.e.\ in particular when the
sparsity $\rho$ is of constant order. For signals where $\rho=\smallO(1)$ the situation is
expected to be different \cite{soltanolkotabi2017structured}.

$\bullet$ {\bf ReLU output channel:} Another case of output channel that
attracted our interest is the rectified linear unit (ReLU),
$\varphi(z)=\max(0,z)$, as widely used in multi-layer feed-forward
neural networks. In the present single-layer case reconstruction with
the ReLU output is interesting mathematically. With GB
signals, roughly half of the measurements are given without noise, but
the only information we have about the other half is its sign. A
straightforward upper-bound for both information-theoretic and
tractable almost exact recovery is simply twice as many measurements than
needed in the canonical noiseless compressed sensing. It is
interesting to ask whether this bound is tight. Results in the present
paper imply that for the information-theoretic performance this bound
indeed is tight. However, the phase transition $\alpha_{\rm AMP}$ above which almost exact
recovery is possible with the GAMP algorithm is strictly lower than twice the phase
transition of compressed sensing; both are depicted in the central
panel on Fig.~\ref{fig:phase_0}. This implies that while the negative
outputs are not useful information-theoretically, they do help to
achieve better performance algorithmically.

\subsection{Perceptron and alike}\

$\bullet$ {\bf Binary and Gauss-Bernoulli perceptron:} One of the most studied problems that fits in the setting of
the present paper is the problem of perceptron
\cite{rosenblatt1957perceptron}, where $\varphi(z)={\rm sign}(z)$, that has been analyzed for random patterns $\boldsymbol{\Phi}$ in the statistical
physics literature, see
\cite{ReviewTishby,RevModPhys,engel2001statistical} for reviews. We
plot in Fig.~\ref{fig:classif_0} the optimal generalization error \eqref{Egen_generic_0} as
follows from our results for the binary
perceptron, i.e.\ weights taken from the Rademacher distribution $P_0=\frac{1}{2}\delta_{+1}+\frac{1}{2}\delta_{-1}$ (left panel), and for the GB perceptron where $P_0=\rho{\cal N}(0,1)+(1-\rho)\delta_{0}$ (central panel).  
The information-theoretically optimal value of the generalization
error that we report and prove agrees with existing predictions obtained by the
non-rigorous replica method from
\cite{gyorgyi1990first,sompolinsky1990learning,opper1991generalization}. Notably,
we see that for the GB case the optimal generalization error
decreases smoothly as $\alpha$ increases, while for the binary case
the generalization error has a first order (i.e.\ discontinuous) phase transition towards
perfect generalization at $\alpha_{\rm IT}\approx1.249$ as predicted already in
\cite{gyorgyi1990first}. Our results provide rigorous validation
for these old conjectures.

Furthermore, our results together with recent literature on GAMP provide a refreshing clarification of the algorithmic
questions. It is natural to ask for what region of
parameters the optimal generalization error can be provably achieved with efficient algorithms.
This question remained unanswered until now. Indeed,
for the spherical perceptron the optimal generalization error was computed
in \cite{opper1991generalization}, and argued empirically on small
instances to be achievable with a TAP-like algorithm
\cite{opper1996mean}.
The state evolution of
GAMP together with our formulas for the generalization
error (\eqref{Egen_generic_0} for the average optimal one and with $q^{t}$ replacing $q^*$ in this formula for GAMP) imply that the optimal generalization error {\it is} indeed
achievable asymptotically for all $\alpha$ in the GB perceptron.

For the binary perceptron the optimal generalization error was computed in
\cite{gyorgyi1990first,sompolinsky1990learning}. By comparison with
the state evolution of GAMP we obtain that it can also be asymptotically achieved 
by GAMP, but this time only outside of the hard phase $(\alpha_{\rm
  IT},\alpha_{\rm AMP})$ with $\alpha_{\rm AMP}\approx1.493$. 
The past literature was unclear on the algorithmic question, Ref.~\cite{gyorgyi1990first} identified the spinodal of the replica-symmetric solution to be at $\alpha\approx1.493$, but did not attribute it any algorithmic nor physical meaning. Ref.~\cite{sompolinsky1990learning} argues that metastable states exist at least up to $\alpha_{\rm RSB}\approx1.628$ and speculates that Gibbs sampling based algorithms will not be able to reach perfect generalization before that point \cite{ReviewTishby}. Taking our results into account, the main algorithmic question that remains open is whether efficient algorithms can reach perfect generalization for $\alpha_{\rm IT}<\alpha<\alpha_{\rm AMP}$. 

$\bullet$ {\bf Symmetric door:} Out of interest we explored an example of binary output channel for
which $P_{\rm out}(y|z)$ is even in the argument $z$, so that the
non-informative fixed point $q^*=0$ exists. Specifically we analyzed the
{\it symmetric door} channel with $\varphi(z)={\rm sign}(|z|-K)$ and
Rademacher prior $P_0$. In existing
literature such a perceptron was studied with the replica method in the context of lossy data
compression \cite{hosaka2002statistical}. In Fig.~\ref{fig:phase_0} (right panel)
we report the phase diagram in
terms of the stability line of the non-informative fixed point
$\alpha_c$ (below which GAMP is not better than random guesses), the information-theoretic phase transition towards perfect
generalization $\alpha_{\rm IT}$, and the phase transition of GAMP to perfect
generalization $\alpha_{\rm AMP}$. 

A simple counting lower bound states that for binary outputs and weights $X_i^*$ perfect generalization is not possible for
$\alpha<1$. Thus it is interesting to notice that the symmetric
door channel is able to saturate this lower-bound for $K\approx 0.6745$ for
which the probability of $y_\mu=1$ is $1/2$. This saturation was
already remarked in \cite{hosaka2002statistical}. Our results also,
however, imply that in that case the perfect generalization will not
be achievable with GAMP (and we conjecture no other efficient) algorithm unless
$\alpha>\alpha_{\rm AMP}\approx1.566$. The generalization error that GAMP
provides for this case is depicted in Fig.~\ref{fig:classif_0} (right).

\subsection{Empirical comparison with general purpose algorithms}

In this section we argue that many cases that fit into the setting of
the present paper could serve as useful benchmarks for existing
machine learning algorithms. We believe that
the situation is perhaps similar to Shannon coding theorems that have
driven algorithmic developments in error correcting codes, achieving
the Shannon bound being the primary goal in many works in communications. In machine learning, classification is
a natural task and algorithms are usually benchmarked using open
access databases. In current state-of-the-art applications of machine
learning we usually have very little insight about what is the sample
complexity, i.e.\ how many samples are truly needed so that a given
generalization error can be achieved. In our setting the situation is
different: We can present samples $(y_\mu, \boldsymbol{\Phi}_\mu)$ to generic out-of-the-box classification algorithms  and see how their
performances compare to the information-theoretic optimal
performance and to the one of the GAMP algorithm that is
fine-tuned to the problem.

In Fig.~\ref{fig:classif_0} we present examples of state-of-the-art
classification algorithms compared to our results. In the left and
center panels we compare the optimal and GAMP performances to a simple logistic regression, fine-tuned
by manually optimizing the ridge penalty (for $\ell_2$
regularization) and LASSO penalty (for a sparsity-enhancing $\ell_1$
regularization) with the software scikit-learn \cite{scikit-learn}.
We observe that while for the GB case the logistic
regression is comparable to the performance of GAMP, for binary weights perfect generalization is not achieved close to the
GAMP phase transition.


In the right panel of Fig.~\ref{fig:classif_0} we study classification
for labels generated by the symmetric door channel. A general purpose
algorithm would not know about the form of the channel. A neural
network with only two hidden units is in principle able to represent
the corresponding function (each of the hidden neurons can learn one
of the two planes that separate data in the symmetric door function). 
A more intriguing question is whether a more generic multi-layer neural
network is indeed able to learn this rule and how many samples does it
need? In the example used in Fig.~\ref{fig:classif_0}, using the software Keras
\cite{chollet2015} with a tensorflow backend, we show the performance of a network
with two hidden layers, ReLU activation and dropout (the details for
this particular run can be found on the github repository \cite{githubrepo}).
The symmetric door function thus
provides a challenging benchmark that could be used to
study how to improve performance of the general purpose multi-layer
neural network classifiers. In the SI we provide additional examples
comparing the optimal performance to general-purpose algorithms for
regression.


%
%

%
\section{Methods and proofs}
\label{sec:proofs}
In this section we give the main theorem for the free
entropy and main ideas of the proof. An essential tool is the adaptive interpolation method recently introduced in \cite{barbier_stoInt} which is a powerful evolution of the Guerra and Toninelli
interpolation method developed for spin glasses
\cite{guerra2002thermodynamic}. Reference \cite{barbier_stoInt} analyzed simpler inference problems. In particular the proof for the upper bound in \cite{barbier_stoInt} does not apply to GLMs and requires non-trivial new ingredients. One such new ingredient is to work with a potential $f_{\rm RS}(q,r;\rho)$ depending on two parameters $(q,r)$ instead of a single one as in \cite{barbier_stoInt}. This allows us to use convexity arguments that are crucial in order to finish the proof, see the last section ``Matching bounds and end of proof''. We stress that the present analysis heavily relies on properties of Bayes-optimal inference that translate into remarkable
identities between correlation functions (called Nishimori identities by physicists; see SI for their formulation) valid for {\it all} values of parameters. 
These identities are used in the derivation of \eqref{eq:der_f_t_0} and
\eqref{concentration_0} below, which are two essential steps of our
proof. The formula from Theorem~\ref{th:RS_1layer_0} relies on the
Nishimori identities and does not hold out
of the Bayes-optimal setting. 




\subsection{Main theorems}
For the proof it is necessary to work with a slightly different model
with an additive regularizing Gaussian noise with variance $\Delta\ge 0$:
\begin{align}\label{measurements_proof_0}
	Y_\mu = \varphi\Big(\frac{1}{\sqrt{n}} [\boldsymbol{\Phi} \bX^*]_{\mu}, A_\mu\Big) + \sqrt{\Delta} Z_\mu, \qquad 1\leq \mu \leq m,
\end{align}
where $(Z_\mu)\iid \mathcal{N}(0,1)$, and $(A_\mu)\iid P_A$ are r.v.\ that represent the stochastic part of $\varphi$. It is also instrumental to think
of the measurements as the outputs of a ``channel'' $Y_\mu \sim P_{\rm out}( \cdot  | \frac{1}{\sqrt{n}} [\boldsymbol{\Phi} \bX^*]_{\mu})$
%
with transition density $P_{\rm out}( y | z ) 
	= (2\pi \Delta)^{-1/2} \int dP_A(a)\exp\{-\frac{1}{2 \Delta}( y - \varphi(z, a))^2\}$ if $\Delta > 0$, or $P_{\rm out}( y | z ) 
	=\int dP_A(a) \bbf{1}( y = \varphi(z,a))$ else, where $\bbf{1}(\cdot)$ is the indicator function.
%
Our main theorem holds under the following rather general hypotheses:
%

\begin{enumerate}[label=(h\arabic*),noitemsep]
	\item \label{hyp:third_moment} The prior distribution $P_0$ admits a finite third moment and has at least two points in its support.
	\item \label{hyp:um} The sequence $(\E [ \vert\varphi(\frac{1}{\sqrt n}[\boldsymbol{\Phi}\bX^*]_1, \bA_1)\vert^{2+\gamma}])_{n \geq 1}$ is bounded for some $\gamma>0$.
	\item \label{hyp:phi_general} The r.v. $(\Phi_{\mu i})$ are independent with zero mean, unit variance and finite third moment bounded with $n$. 
	\item \label{hyp:cont_pp} 
		For almost-all values of $a$ (w.r.t.\ the distribution $P_A$), the function $x~\mapsto~\varphi(x,a)$ is continuous almost everywhere. 
	\item \label{hyp:delta_pos} ($\Delta > 0$) {\it or} ($\Delta = 0$ and $\varphi$ takes values in $\N$).
\end{enumerate}
In {\it general}, when $\varphi$ is continuous the condition $\Delta >0$ (but arbitrarily small) is necessary for the existence of a finite limit of the free entropy (for particular choices of $(\varphi, P_A)$ this might not be needed, e.g. $\varphi(z,A)=z+A$ with $A\sim {\cal N}(0,\sigma^2)$). We also assume that the kernel $P_{\rm out}$ is {\it informative}, i.e. there exists $y$ such that $P_{\rm out}(y \, | \, \cdot)$ is not equal almost everywhere to a constant. If $P_{\rm out}$ is not informative, it is not difficult to show that estimation is then impossible. 


We define the set of the critical points of $f_{\rm RS}$, \eqref{result-one_0}, also called ``state evolution fixed points'' (as it is clear from \eqref{eq:SE_0}):
	\begin{equation*} \label{fixed_points}
		\Gamma \!\equiv \!\Big\{ (q,r) \in [0,\rho] \times (\R_+ \cup \{+\infty\})  \Big|
			\begin{array}{lll}
				q &=& 2 \psi_{P_0}'(r) \\
				r &=& 2 \alpha \Psi_{P_{\rm out}}'(q;\rho)
			\end{array}
		\Big\}\,.
	\end{equation*}
 Define $f_n \equiv \E f_n(\bY,\bbf{\Phi}) = \frac{1}{n} \E \ln \cZ(\bY,\bbf{\Phi})$. Then the main theorem of this paper is stated as follows:
%
\begin{theorem}[Replica-symmetric free entropy] \label{th:RS_1layer_0}
	Suppose that \ref{hyp:third_moment}-\ref{hyp:um}-\ref{hyp:phi_general}-\ref{hyp:cont_pp}-\ref{hyp:delta_pos} hold. 
	 Then, for the GLM \eqref{measurements_proof_0},
	\begin{equation*}
		\lim_{n\to\infty } f_n
		={\adjustlimits \sup_{q \in [0,\rho]} \inf_{r \geq 0}} f_{\rm RS}(q,r) 
		= \sup_{(q,r) \in \Gamma} f_{\rm RS}(q,r) \,.
	\end{equation*}
\end{theorem}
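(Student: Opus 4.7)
My plan is to apply the adaptive interpolation method of \cite{barbier_stoInt}, with the essential novelty -- stressed by the authors -- that the interpolating family is indexed by \emph{two} functions $q, r : [0,1]\to\R_+$ rather than one, which opens up a convexity argument in $r$ needed to match the bounds. I would introduce a one-parameter family of inference problems $t\in[0,1]$ that interpolates between the original GLM (at $t=0$) and, at $t=1$, a pair of fully decoupled scalar problems: a Gaussian side channel on each coordinate $X^*_i$ with SNR $\int_0^1 r$, together with a scalar channel in which the argument of $\varphi$ is $\sqrt{\int_0^1 q}\,V_\mu + \sqrt{\rho - \int_0^1 q}\,W_\mu$ with $V_\mu$ provided as side information. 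By construction $f_{n,0} = f_n$, while $f_{n,1}$ evaluates to $\psi_{P_0}(\int_0^1 r) + \alpha\,\Psi_{P_{\rm out}}(\int_0^1 q;\rho)$.

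Differentiating $f_{n,t}$ in $t$, performing Gaussian integration by parts on the interpolation noises, and applying the Nishimori identities -- which equate the overlap of two independent posterior replicas with the planted/replica overlap and are valid only in the Bayes-optimal setting -- produces a Guerra-type sum rule
\begin{equation*}
 f_n \,=\, \psi_{P_0}\!\Bigl(\!\int_0^1 r\Bigr) + \alpha\,\Psi_{P_{\rm out}}\!\Bigl(\!\int_0^1 q;\rho\Bigr) - \frac{1}{2}\!\int_0^1\! q(t)r(t)\,dt \,-\, \frac{1}{2}\!\int_0^1\! \E\bigl\langle (Q_n - q(t))(\mathcal{R}_n(t) - r(t))\bigr\rangle_{n,t}\,dt \,+\, o_n(1),
\end{equation*}
valid for every admissible path $(q,r)$, where $Q_n = \bx\cdot\bX^*/n$ is the replica/signal overlap and $\mathcal{R}_n(t)$ is the analogous conjugate quantity. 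A companion lemma -- proved via an I-MMSE argument combined with a small, vanishing Gaussian perturbation of the Hamiltonian that regularizes the posterior -- establishes the concentration of $Q_n$ around its thermal mean, which is essential to control the remainder integral.

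For the \textbf{lower bound}, I would take $q(t)\equiv q$ and $r(t)\equiv r$ constant with $r = 2\alpha\Psi'_{P_{\rm out}}(q;\rho)$: the first three terms of the sum rule combine to $f_{\rm RS}(q,r)$, while Cauchy--Schwarz, Nishimori and overlap concentration force the remainder to be $o_n(1)$, giving $\liminf_n f_n \ge f_{\rm RS}(q,r)$; optimizing over $q$ (with the now-determined $r$) delivers the $\sup_q\inf_r$ bound. The \textbf{upper bound} is the genuinely hard step: one chooses $(q(t),r(t))$ adaptively as the unique solution of an ODE mimicking one step of state evolution, initialized from a small regularization. Along this adaptive trajectory, the convexity of $f_{\rm RS}(q,r)$ in $r$ for each fixed $q$ -- a direct consequence of the Gaussian-channel Legendre structure of $\psi_{P_0}$ -- forces the remainder integral to have a definite sign, yielding $\limsup_n f_n \le \sup_q\inf_r f_{\rm RS}(q,r)$. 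This is the main obstacle: the single-parameter scheme of \cite{barbier_stoInt} fails here because $\Psi_{P_{\rm out}}$ is in general neither convex nor concave in $q$, so convexity must be imported through the conjugate variable $r$. Finally, $\sup_q\inf_r f_{\rm RS}(q,r) = \sup_{(q,r)\in\Gamma} f_{\rm RS}(q,r)$ is a deterministic variational fact: the first-order stationarity conditions at any extremizer are exactly the state-evolution equations defining $\Gamma$.
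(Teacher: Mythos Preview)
Your proposal has the right architecture --- two-parameter adaptive interpolation, Nishimori-based sum rule, overlap concentration --- but the lower bound argument as stated does not work, and a factual error about $\Psi_{P_{\rm out}}$ leads you to the wrong mechanism for the upper bound.

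\textbf{Lower bound gap.} With $q(t)\equiv q$ and $r(t)\equiv r$ both \emph{constant}, the remainder $\int_0^1 \E\langle (Q_n-q)(\mathcal{R}_n-r)\rangle_{n,t}\,dt$ is not $o_n(1)$. Overlap concentration gives $Q_n \approx \E\langle Q_n\rangle_{n,t}$, but $\E\langle Q_n\rangle_{n,t}$ depends on $t$ and has no reason to equal the fixed value $q$; the bias $(\E\langle Q_n\rangle_{n,t}-q)^2$ survives Cauchy--Schwarz. The paper resolves this by keeping $r_\epsilon(t)\equiv r$ constant but choosing $q_\epsilon(t)$ \emph{adaptively} as the solution of the ODE $q_\epsilon(t)=\E\langle Q_n\rangle_{n,t,\epsilon}$ (Cauchy--Lipschitz plus a Liouville-formula check that the flow has Jacobian $\ge 1$, so that the perturbation average needed for concentration is legitimate). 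The sum rule then reads $f_n = f_{\rm RS}\bigl(\int_0^1 q_\epsilon,\,r\bigr) + o_n(1) \ge \inf_{q}f_{\rm RS}(q,r)+o_n(1)$, and optimizing over $r$ gives the lower bound $\sup_r\inf_q f_{\rm RS}$.

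\textbf{Upper bound and the role of convexity.} Your claim that ``$\Psi_{P_{\rm out}}$ is in general neither convex nor concave in $q$'' is incorrect in this model: the paper proves (Proposition~18 in the SI) that $q\mapsto\Psi_{P_{\rm out}}(q;\rho)$ is convex. This is not a side remark --- it is the engine of the upper bound. The paper again makes the remainder \emph{vanish} (not sign-definite) by solving the coupled ODE $q_\epsilon(t)=\E\langle Q_n\rangle_{n,t,\epsilon}$, $r_\epsilon(t)=2\alpha\Psi'_{P_{\rm out}}(q_\epsilon(t))$; then Jensen's inequality applied to the convex functions $\psi_{P_0}$ \emph{and} $\Psi_{P_{\rm out}}$ pushes the integrals inside, and the second ODE pins $q_\epsilon(t)$ as the minimizer of $q\mapsto \alpha\Psi_{P_{\rm out}}(q)-\tfrac12 r_\epsilon(t)q$, yielding $f_n \le \sup_r\inf_q f_{\rm RS}+o_n(1)$. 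Finally, a separate sup--inf swap lemma (using convexity of both pieces) converts $\sup_r\inf_q$ into $\sup_q\inf_r$, and Lemma~\ref{lem:sup_inf_2} identifies the optimizers with $\Gamma$. Your ``definite sign via convexity in $r$'' story does not match this mechanism and, as written, does not close.
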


Moreover, as one can see in the SI, the ``$\sup \inf$'' and the
supremum over $\Gamma$ above are achieved over the same couples.
Under stronger assumptions on $P_0$ and $P_{\rm out}$, one can show (see Theorem~6 in the SI) that $f_n(\bbf{Y},\bbf{\Phi})$ concentrates around its mean $f_n$ and thus obtains convergence in probability~\eqref{RSf_0}.

An immediate corollary of Theorem~\ref{th:RS_1layer_0} is the limiting
expression of the mutual information $I(\bX^*;\bY|\bbf{\Phi})\equiv \E
\ln P(\bY,\bX^*| \bbf{\Phi}) - \E\ln(P(\bY|\bbf{\Phi})
P(\bX^*))$ between the observations and the unknown vector:
\begin{corollary}[Mutual information] \label{coro_0} Under the same hypotheses as in Theorem \ref{th:RS_1layer_0}, the mut. info. for the GLM \eqref{measurements_proof_0} verifies
\begin{align*}
		\lim_{n \to \infty} {\textstyle \frac{1}{n}} I(\bX^*; \bY \, | \bbf{\Phi}) =
		{\adjustlimits \inf_{q \in [0,\rho]} \sup_{r \geq 0}} 
		\, i_{\rm RS}(q,r)= \inf_{(q,r) \in \Gamma} i_{\rm RS}(q,r) \,,\\
		i_{\rm RS}(q,r)\equiv  \alpha \Psi_{P_{\rm out}}(\rho;\rho) - \alpha\Psi_{P_{\rm out}}(q;\rho)- \psi_{P_0}(r) + rq/2 \,.
	\end{align*}
\end{corollary}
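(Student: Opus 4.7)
The plan is to derive Corollary~\ref{coro_0} directly from Theorem~\ref{th:RS_1layer_0} by splitting the mutual information via the standard entropy decomposition $I(\bX^*;\bY\,|\,\bbf{\Phi}) = H(\bY\,|\,\bbf{\Phi}) - H(\bY\,|\,\bX^*,\bbf{\Phi})$ and identifying each piece separately.

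First, observe that $\cZ(\bY,\bbf{\Phi})$ is exactly the conditional density $P(\bY\,|\,\bbf{\Phi})$ obtained by marginalising the posterior~\eqref{posterior_0}, so by the very definition of $f_n$,
\begin{equation*}
\tfrac{1}{n}\,H(\bY\,|\,\bbf{\Phi}) \,=\, -\,f_n \,\xrightarrow[n\to\infty]{}\, -{\adjustlimits\sup_{q\in[0,\rho]}\inf_{r\geq 0}}\, f_{\rm RS}(q,r;\rho)
\end{equation*}
by Theorem~\ref{th:RS_1layer_0}. For the second piece, conditional independence of the $Y_\mu$'s given $(\bbf{\Phi},\bX^*)$ gives $H(\bY\,|\,\bX^*,\bbf{\Phi}) = -m\,\E\ln P_{\rm out}(Y_1\,|\,z_1)$ with $z_1 \equiv n^{-1/2}[\bbf{\Phi}\bX^*]_1$. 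Under~\ref{hyp:third_moment}--\ref{hyp:phi_general} (finite third moments), Lyapunov's CLT yields $z_1 \Rightarrow \sqrt{\rho}\,V$ with $V\sim\mathcal{N}(0,1)$, and the a.e.\ continuity of $\varphi$ from~\ref{hyp:cont_pp} extends this to the joint weak convergence $(z_1,Y_1)\Rightarrow(\sqrt{\rho}\,V,\tilde Y_0)$. Promoting weak convergence of $-\ln P_{\rm out}(Y_1\,|\,z_1)$ to convergence of expectations requires uniform integrability, which is provided by~\ref{hyp:um} combined with either the Gaussian smoothing $\sqrt{\Delta}\,Z_\mu$ (when $\Delta>0$) or the discreteness of $\varphi$ allowed when $\Delta=0$ by~\ref{hyp:delta_pos}. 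Setting $q=\rho$ in~\eqref{PsiPout_0} then identifies the limit as
\begin{equation*}
\tfrac{1}{n}\,H(\bY\,|\,\bX^*,\bbf{\Phi}) \,\xrightarrow[n\to\infty]{}\, -\alpha\,\Psi_{P_{\rm out}}(\rho;\rho).
\end{equation*}

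Subtracting the two limits yields $\lim n^{-1} I(\bX^*;\bY\,|\,\bbf{\Phi}) = \alpha\,\Psi_{P_{\rm out}}(\rho;\rho) - \sup_q\inf_r f_{\rm RS}(q,r;\rho)$. Direct substitution in the two definitions gives the pointwise identity $i_{\rm RS}(q,r) = \alpha\,\Psi_{P_{\rm out}}(\rho;\rho) - f_{\rm RS}(q,r;\rho)$, and the elementary rule $-\sup_q\inf_r F = \inf_q\sup_r(-F)$ rewrites the limit as $\inf_{q\in[0,\rho]}\sup_{r\geq 0} i_{\rm RS}(q,r)$, which is the first claimed expression. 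The equality with $\inf_{(q,r)\in\Gamma}\, i_{\rm RS}(q,r)$ then follows from the remark stated just after Theorem~\ref{th:RS_1layer_0}: the saddle couples attaining the sup-inf of $f_{\rm RS}$ lie in $\Gamma$, and since $i_{\rm RS}$ differs from $-f_{\rm RS}$ only by the additive constant $\alpha\,\Psi_{P_{\rm out}}(\rho;\rho)$, the same elements of $\Gamma$ attain the inf-sup of $i_{\rm RS}$.

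The only genuinely non-routine step is the uniform-integrability verification needed to pass the CLT through the $-\ln P_{\rm out}$ integrand. In the $\Delta>0$ branch of~\ref{hyp:delta_pos} this reduces to a Gaussian tail bound combined with~\ref{hyp:um}; in the $\Delta=0$, $\varphi\in\N$ branch the log-likelihood has to be controlled by the discrete probability-mass function, and tightness of $(Y_1)$ is again obtained from~\ref{hyp:um}. Everything else is algebra together with a single appeal to Theorem~\ref{th:RS_1layer_0}.
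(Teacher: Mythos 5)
Your proposal is correct and follows essentially the same route as the paper's own proof: the decomposition $I(\bX^*;\bY|\bbf{\Phi}) = H(\bY|\bbf{\Phi}) - H(\bY|\bX^*,\bbf{\Phi}) = -n f_n + m\,\E\ln P_{\rm out}(Y_1|z_1)$, a CLT argument identifying the second term with $\alpha\Psi_{P_{\rm out}}(\rho;\rho)$, and then the sign flip $i_{\rm RS} = \alpha\Psi_{P_{\rm out}}(\rho;\rho) - f_{\rm RS}$ together with Theorem~\ref{th:RS_1layer_0} and the fact that the extrema over $\Gamma$ and the saddle are attained at the same couples. The only difference is that you make explicit the uniform-integrability verification that the paper leaves as "not difficult to verify", which is a welcome but inessential elaboration.
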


Finally, we gather our main results related to the optimal errors in a single theorem, see the SI for more details, including results on the optimality of the GAMP algorithm:
\begin{theorem}[Optimal errors] \label{th:errors_0} Assume
 the same hypotheses as in Theorem \ref{th:RS_1layer_0}. Then formula \eqref{Egen_generic_0} for the generalization error is true as $n,m\to \infty$, $m/n\to\alpha$ whenever the maximizer $q^*(\alpha)$ of \eqref{RSf_0} is unique, which is the case for almost every $\alpha$. If moreover all the moments of $P_0$ are finite, then formula \eqref{asymptOverlap_0} for the overlap as well as the following matrix-MMSE formula
\begin{align}
    \frac{1}{n^2}\, \E \Big[\big\| 
      \bX^* \bX^{* \intercal} -  \E_{P(\bx|\bY,\bbf{\Phi})}[\bx \bx^{\intercal}]
      \big\|_{\rm F}^2
    \Big] \to \rho^2-q^*(\alpha)^2  \label{mMMSE_0}
\end{align}
 are true, where $\|-\|_{\rm F}$ is the Frobenius norm.
\end{theorem}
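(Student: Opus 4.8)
All three statements will be deduced from Theorem~\ref{th:RS_1layer_0} together with one extra ingredient: concentration of the overlap $\tfrac1n\bx\cdot\bX^*$ (for $\bx$ a posterior sample) around the maximizer $q^*(\alpha)$ of \eqref{RSf_0}. To obtain this, the plan is to perturb the model by a weak side channel delivering $\widetilde Y_i=\sqrt\epsilon\,X_i^*+\widetilde Z_i$, $\widetilde Z_i\iid\mathcal N(0,1)$, for small $\epsilon\geq 0$; since observing $\bX^*$ through this extra Gaussian channel preserves the Bayes-optimal structure and, in the limit, simply replaces $\psi_{P_0}(r)$ by $\psi_{P_0}(r+\epsilon)$ in the potential, the perturbed averaged free entropy $f_n(\epsilon)$ still converges to a formula of the same shape, is nondecreasing and convex in $\epsilon$, and satisfies the exact finite-$n$ identity $2\,\partial_\epsilon f_n(\epsilon)=\tfrac1n\E[\bX^*\cdot\langle\bx\rangle]$ (Gaussian integration by parts and a Nishimori identity). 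The envelope theorem applied to the limiting formula gives $\lim_n 2\,\partial_\epsilon f_n(\epsilon)=q^*(\epsilon)$ at every $\epsilon$ where the maximizer is unique; combined with convexity --- which forces derivatives to converge at differentiability points --- and with concentration of the overlap, which the perturbation itself provides for a.e.\ $\epsilon$ (this being the step where the Nishimori identities are crucial), one obtains $\tfrac1n\bx\cdot\bX^*\to q^*(\epsilon)$ in probability for a.e.\ $\epsilon$. Letting $\epsilon\to0$ and using that $q^*(\alpha)$ is unique, hence $q^*(\epsilon)\to q^*(\alpha)$, for a.e.\ $\alpha$ proves \eqref{asymptOverlap_0}; the MMSE \eqref{eq:MMSEresult_0} then follows from $\tfrac1n\E\|\bX^*-\hat\bX\|^2=\rho-\tfrac1n\E[\bX^*\cdot\hat\bX]$, the cross term equalling $\tfrac1n\E\|\langle\bx\rangle\|^2$ by Nishimori and converging to $q^*$.

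\textbf{Matrix-MMSE.} For \eqref{mMMSE_0}, expand the Frobenius norm; using that in the Bayes-optimal setting $\E[X_i^*X_j^*\mid\bY,\boldsymbol\Phi]=\langle x_ix_j\rangle$ (equivalently, $\bX^*$ conditioned on the data has the posterior law --- which in particular equates $\E\|\langle\bx\bx^\intercal\rangle\|_{\rm F}^2$ with $\E\langle(\bX^*\cdot\bx)^2\rangle$), the expansion reduces to $\tfrac1{n^2}\E\|\bX^*\bX^{*\intercal}-\langle\bx\bx^\intercal\rangle\|_{\rm F}^2=\tfrac1{n^2}\E\|\bX^*\|^4-\tfrac1{n^2}\E\langle(\bX^*\cdot\bx)^2\rangle$. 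The first term tends to $\rho^2$ by the law of large numbers and the second to $(q^*)^2$ by the overlap concentration just established; both convergences must be upgraded from convergence in probability to $L^1$ convergence, which is exactly where finiteness of all moments of $P_0$ enters: through the Nishimori identity and Cauchy--Schwarz it bounds, uniformly in $n$, all moments of $\tfrac1n\bx\cdot\bX^*$ and $\tfrac1n\|\bx\|^2$ under the posterior, giving the required uniform integrability. The limit is therefore $\rho^2-(q^*)^2$.

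\textbf{Generalization error.} Since the minimizer of $\E[(Y_{\rm new}-\hat Y)^2]$ is the conditional mean, $\mathcal E_{\rm gen}=\E[Y_{\rm new}^2]-\E[\hat Y_{\rm new}^2]$ with $\hat Y_{\rm new}=\E[\varphi(\tfrac1{\sqrt n}\boldsymbol\Phi_{\rm new}\cdot\bx,A)\mid\bY,\boldsymbol\Phi,\boldsymbol\Phi_{\rm new}]$ for $\bx$ a posterior sample. Conditionally on the observations and the fresh row $\boldsymbol\Phi_{\rm new}$, a Lindeberg CLT --- whose limiting variances are the overlap limits $\tfrac1n\|\bX^*\|^2\to\rho$, $\tfrac1n\|\langle\bx\rangle\|^2\to q^*$ and $\tfrac1n(\langle\|\bx\|^2\rangle-\|\langle\bx\rangle\|^2)\to\rho-q^*$ --- shows that the pre-activation $z^*=\tfrac1{\sqrt n}\boldsymbol\Phi_{\rm new}\cdot\bX^*$ is, conditionally, asymptotically $\mathcal N(\sqrt{q^*}\,V,\rho-q^*)$ with $V\sim\mathcal N(0,1)$ over the observation randomness. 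Hence $Y_{\rm new}=\varphi(z^*,A)$ has the limiting law of $\varphi(\sqrt{q^*}\,V+\sqrt{\rho-q^*}\,W^*,A)$ (and $\sqrt{q^*}\,V+\sqrt{\rho-q^*}\,W^*\sim\mathcal N(0,\rho)$), while $\hat Y_{\rm new}\to\E_{w,a}[\varphi(\sqrt{q^*}\,V+\sqrt{\rho-q^*}\,w,a)]$, with $V,W^*,w\iid\mathcal N(0,1)$ and $a\sim P_A$; substituting into $\mathcal E_{\rm gen}=\E[Y_{\rm new}^2]-\E[\hat Y_{\rm new}^2]$ reproduces \eqref{Egen_generic_0}, the passage to the limit in the squared terms being justified by hypothesis~\ref{hyp:um}. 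This is the concrete content of the earlier remark that \eqref{Egen_generic_0} ``follows from the I-MMSE theorem applied to the free entropy'': the same value comes out of differentiating the limit of Theorem~\ref{th:RS_1layer_0} in the output-channel signal-to-noise ratio and using exchangeability of the $m+1$ measurements.

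\textbf{Main obstacle.} The substantive part is the overlap step --- turning the variational value of the free entropy into the pointwise limit $\tfrac1n\bx\cdot\bX^*\to q^*$. It requires the small-perturbation device, concentration of the free entropy, concentration of the overlap (both resting on the Nishimori identities), the envelope argument, and the measure-theoretic input that the maximizer in \eqref{RSf_0} is unique for all $\alpha$ outside a null set --- itself a consequence of monotonicity of the selected maximizer in $\alpha$, so that the set of bad $\alpha$ is at most countable. Once the overlap is pinned down, the matrix-MMSE is a short Nishimori computation and the generalization error a conditional-CLT argument, the only remaining care being the uniform-integrability bookkeeping responsible for the extra moment hypotheses in the statement.
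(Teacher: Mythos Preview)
Your overall logic --- overlap first, then matrix-MMSE and generalization error from it --- is inverted relative to the paper, and this matters because your overlap argument has a genuine gap. A scalar Gaussian side channel $\sqrt\epsilon\,\bX^*+\widetilde\bZ$, convexity in $\epsilon$, and the envelope theorem give $\E\langle Q\rangle_\epsilon\to q^*(\epsilon)$ for a.e.\ $\epsilon>0$, and the perturbation forces concentration so that $Q_n^{(\epsilon)}\to q^*(\epsilon)$ in probability \emph{in the perturbed model}. But ``letting $\epsilon\to0$'' does not transfer this to $\epsilon=0$, and more fundamentally the statement concerns $|Q_n|$, not $Q_n$: for sign-symmetric channels (e.g.\ $\bY=|\boldsymbol\Phi\bX^*|/\sqrt n+\bZ$) one has $\langle\bx\rangle_{\epsilon=0}\equiv 0$ and hence $\E\langle Q\rangle_{\epsilon=0}=0$ identically, regardless of $q^*$. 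Your side channel is precisely what breaks this symmetry, so nothing survives the $\epsilon\to0$ limit. This is not a technicality to be patched; it is why the paper takes a different route.

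The paper proceeds in the opposite order. The generalization error is established \emph{first and independently} of the overlap, by adding weak side information on a small batch of \emph{test labels} (snr $\lambda$, $\epsilon n$ of them), extending the replica formula, applying I-MMSE in $\lambda$, and justifying the $\lambda,\epsilon\to0$ exchange via a monotonicity-in-information comparison with a model where one test label is fully observed. The overlap then has two halves. For the upper bound $\P(|Q_n|\ge q^*+\delta)\to0$, the paper perturbs with an \emph{even-order tensor} channel $\sqrt{\lambda/n^{2p-1}}\,(\bX^*)^{\otimes 2p}+\bZ'$; I-MMSE yields $\limsup_n\E[Q_n^{2p}]\le q^*(\alpha)^{2p}$ after $\lambda\to0$, and Markov plus $p\to\infty$ finishes --- the even tensor is exactly what makes the argument sign-symmetric (this is also where the ``all moments of $P_0$'' hypothesis enters). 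For the lower bound, the paper runs your conditional-CLT argument on a subsequential limit $Q$ of $Q_n$ to obtain $\mathcal E_{f,n}\to\E[\mathcal E_f(|Q|)]$, and compares with the already-proved $\mathcal E_{f,n}\to\mathcal E_f(q^*)$; choosing $f$ so that $\mathcal E_f$ is strictly decreasing (possible since $P_{\rm out}$ is informative) forces $|Q|=q^*$ a.s. Your matrix-MMSE computation is then exactly the paper's corollary.
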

There are cases of GLMs (e.g. the sign-less output channel $\bY=|\bbf{\Phi}\bX^*|/\sqrt{n}+\bZ$) where the sign of $\bX^*$ simply cannot be estimated (thus the absolute value in \eqref{asymptOverlap_0}). This is why our general theorem is related to an error metric \eqref{mMMSE_0} insensitive to this $\pm$ symmetry. Nevertheless formula \eqref{eq:MMSEresult_0} for the signal MSE is formally valid when there is no such sign symmetry.

\subsection{Proof by the adaptive interpolation method}
We now give the main ideas behind the proof of Theorem~\ref{th:RS_1layer_0}. We
defer to the SI the details, as well as those of Corollary~\ref{coro_0} and Theorem \ref{th:errors_0}.

A word about notation. The r.v.\ $\bY$ (and also $\boldsymbol{\Phi}$,
$\bX^*$, $\bA$, $\bZ$) are called {\it quenched} because
once the measurements are acquired they are fixed.
The expectation w.r.t.\ {\it all} quenched r.v.\ will be denoted by $\E$ {\it without} subscript. In
contrast, expectation of {\it annealed} variables w.r.t.\ a posterior
distribution at fixed quenched variables is denoted by {\it Gibbs brackets} $\langle -\rangle$.
%
%

%
\subsubsection{Two scalar inference channels}
An important role in the proof is played by two simple {\it scalar}
inference channels. The free entropy is expressed in terms of the free
entropies of these channels. This ``decoupling property'' stands at the root of the replica approach in statistical physics. 

The first scalar channel is an additive Gaussian channel. 
Suppose that
we observe $Y_0 = \sqrt{r}\, X_0 + Z_0$
%
where $X_0 \sim P_0$ and $Z_0 \sim \cN(0,1)$ are independent. Consider the inference problem consisting of retrieving $X_0$ from the observation $Y_0$.
The free entropy associated with this channel is the expectation of
the logarithm of the normalization factor of the associated posterior
$dP(x|Y_0)$, that is given by \eqref{psi0_0} (up to a constant).

The second scalar channel that appears naturally in the problem is linked to the channel $P_{\rm out}$ through the following inference model. Suppose that $V,W^* \iid \cN(0,1)$ where $V$ is {\it known} while the inference problem is to recover the unknown $W^*$ from the observation $\tilde{Y}_0 \sim P_{\rm out} ( \cdot \, | \sqrt{q}\, V + \sqrt{\rho - q} \,W^*)$ 
%
%
where $\rho >0$ and $q \in [0, \rho]$. The free entropy for this
model, again related to the average logarithm of the normalization factor of the posterior of $w$
given $\tilde Y_0$ and $V$, is exactly \eqref{PsiPout_0}.
\subsubsection{Interpolating estimation problem}
To carry on the proof, we introduce an ``interpolating estimation
problem'' that interpolates between the original problem $Y_\mu \sim
P_{\rm out}( \cdot  | \frac{1}{\sqrt{n}} [\boldsymbol{\Phi}
\bX^*]_{\mu})$ at $t=0$, $t\in[0,1]$ being the interpolation
parameter, and the two scalar problems described above at $t=1$. For
$t\in(0,1)$ the interpolating estimation problem is a mixture of the
original and the scalar problems. This interpolation scheme is inspired by the interpolation paths used by Talagrand to study the perceptron, see \cite{talagrand2010meanfield1}. Thanks to a novel ingredient specific to the adaptive interpolation method \cite{barbier_stoInt}, it allows to obtain in a unified manner a complete proof of the replica formula for the free entropy and this in the whole phase diagram. 

Let $q_\epsilon \! : \! [0,1] \to [0,\rho]$, $r_\epsilon \! : \! [0,1]\to [0, r_{\rm max}]$, $r_{\rm max} \! \equiv \!  2\alpha\Psi^\prime_{P_{\rm out}}(\rho;\rho)$, be two continuous ``interpolating functions'' parametrized by $\epsilon \! =\! (\epsilon_1, \epsilon_2) \!  \in \! \mathcal{B}_n\equiv[s_n, 2s_n]^2$, with $(s_n)_{n\geq 1} \! \in \!(0, 1/2]^{\mathbb{N}}$ a sequence that tends to zero slowly enough. Set $R_1(t, \epsilon) \equiv \epsilon_1+\int_0^t r_\epsilon(v) dv$, $R_2(t, \epsilon) \equiv \epsilon_2 + \int_0^t q_{\epsilon}(v)dv$ and define $S_{t,\mu} = S_{t,\mu}(\bX^*,W^*_\mu,V_\mu,\boldsymbol{\Phi})$ as
\begin{equation*}
	\textstyle S_{t,\mu} \! \equiv \! \sqrt{\frac{1-t}{n}}\, [\boldsymbol{\Phi} \bX^*]_\mu + \sqrt{R_2(t, \epsilon)} \,V_{\mu}+ \sqrt{\rho t \! - \!R_2(t, \epsilon) \! + \! 2 s_n} \,W_{\mu}^*
\end{equation*}
where $(V_{\mu}), (W^*_{\mu}) \iid \cN(0,1)$. Consider the following observation channels, with two types of observations obtained through 
\begin{align}
	\label{2channels_0}
	\Big\{
		\begin{array}{llll}
			Y_{t,\mu}  &\sim & P_{\rm out}(\ \cdot \ | \, S_{t,\mu})\,, & \text{for}\ 1 \leq \mu \leq m, \\
			Y'_{t,i} &=& \sqrt{R_1(t, \epsilon)}\, X^*_i + Z'_i\,, & \text{for}\ 1 \leq i \leq n,
		\end{array}		
\end{align}
where $(Z_i')\iid {\cal N}(0,1)$. We assume that $\bV=(V_{\mu})_{\mu=1}^m$ is known and that the inference problem is to recover both $\bW^* = (W_{\mu}^*)_{\mu=1}^m$ and $\bX^*=(X_i^*)_{i=1}^n$ from the ``t-dependent'' observations $\bY_{t}=(Y_{t,\mu})_{\mu=1}^m$ and $\bY'_{t}=(Y_{t,i}')_{i=1}^n$. 

We now understand that $(R_1, R_2)$ and $1-t$ appearing in the first and second set of measurements in \eqref{2channels_0} play the role of signal-to-noise ratios (snr) in the interpolating problem, with $t$ giving more and more ``power'' to the scalar inference channels when increasing. Here is the first crucial ingredient of our interpolation scheme. In classical interpolations, these snr would all take a trivial form, i.e.\ be linear in $t$, but here, the non-trivial integral dependency in $t$ of $(R_1, R_2)$ allows for much more flexibility when choosing the interpolation path. This allows us to actually choose the ``optimal interpolation path''. This will become clear below as well as the role of the ``small perturbation'' parameters $(\epsilon_1, \epsilon_2)$.

Define $u_y(x)\equiv \ln P_{\rm out}(y|x)$ and, with a slight abuse of notations, we also define the quantity $s_{t,\mu} = s_{t, \mu}(\bx,w_\mu,V_\mu,\boldsymbol{\Phi}) \equiv S_{t, \mu}(\bx,w_\mu,V_\mu,\boldsymbol{\Phi})$, the expression above with $\bX^*,W_\mu^*$ replaced by $\bx,w_\mu$.
%
%
%
%
We introduce the {\it interpolating Hamiltonian} $\cH_t=\cH_t(\bx,\bw;\bY_{t},\bY_t',\boldsymbol{\Phi},\bV)$
%
\begin{align*}
	\cH_t
	& \!\equiv\!
	- \sum_{\mu=1}^{m}
	u_{Y_{t,\mu}}( s_{t, \mu} ) 
	+ \frac{1}{2} \sum_{i=1}^{n}\big(Y_{t,i}'-\sqrt{t\, r}\,x_i\big)^2
\end{align*}
and the corresponding ($t$-dependent) {\it Gibbs bracket} $\langle - \rangle_t$ which is the expectation w.r.t.\ the joint posterior distribution of $(\bx,\bw)$ given the observations $\bY_{t},\bY_t'$ (and $\boldsymbol{\Phi},\bV$), defined as 
\begin{align*}
	\langle L(\bx,\bw) \rangle_t \equiv \frac{1}{\cZ_t(\bY_t,\bY_t',\boldsymbol{\Phi},\bV)}\int dP_0(\bx){\cal D}\bw L(\bx,\bw)e^{-\cH_t} \, ,
\end{align*}
for every continuous bounded test function $L$. 
Here $\cZ_t \equiv \int dP_0(\bx){\cal D}\bw
\exp\{-\cH_t(\bx,\bw;\bY_{t},\bY_t',\boldsymbol{\Phi},\bV)\}$ is the
appropriate normalization, ${\cal D}\bw$ is the standard Gaussian
measure. 
%
%
Finally we introduce
\begin{align*}
	f_{n,\epsilon}(t) \equiv  \frac{1}{n} \E \ln \cZ_t(\bY_t,\bY'_t,\boldsymbol{\Phi},\bV) 	
\end{align*}
%
which is the interpolating free entropy. One verifies that
\begin{equation}\label{eq:f0_f1_0}
	\begin{cases}
			f_{n,\epsilon}(0) = f_{n} - \frac12  +\mathcal{O}(s_n)\,,\\
			f_{n, \epsilon}(1) = \psi_{P_0}(\int_0^1 r_\epsilon(t)dt) - \frac{1}{2}(1 + \rho\int_0^1 r_\epsilon(t) dt)  + \frac{m}{n} \Psi_{P_{\rm out}}(\int_0^1 q_\epsilon(t) dt;\rho) + \mathcal{O}(s_n)\,,
		\end{cases}	
\end{equation}
where $\vert\mathcal{O}(s_n)\vert \leq C s_n$ for a constant $C>0$. 
Now comes another crucial property of the interpolating model: It is
such that at $t=0$ we recover the original problem and 
$f_{n,\epsilon}(0)=f_n-1/2 + \mathcal{O}(s_n)$ (the constant $1/2$ comes from the purely noisy
measurements of the second channel in \eqref{2channels_0}), while at
$t=1$ we have two scalar inference channels and thus the
associated terms $\psi_{P_0}$ and $\Psi_{P_{\rm out}}$ appear in
$f_{n, \epsilon}(1)$. These are precisely the terms appearing in the free
entropy potential \eqref{result-one_0}. 
%
\subsubsection{Entropy variation along the interpolation}
From the understanding of the previous section, it is natural to evaluate the variation of entropy along the interpolation, which allows to ``compare'' the original and purely scalar models thanks to the identity 
\begin{equation}
f_n =f_{n, \epsilon}(1)-\int_0^1 \frac{df_{n,\epsilon}(t)}{dt}dt+\frac12 +\mathcal{O}(s_n)\,.	 \label{f0_f1_int_0}
\end{equation}
%
Then by choosing the optimal interpolation path thanks to the non-trivial snr dependencies in $t$, we are able to show the equality between the replica formula and the free entropy $f_n$.

We thus compute the $t$-derivative of the free entropy (see the SI for the details of this calculation). It is given by
\begin{align}\label{eq:der_f_t_0}
		&\frac{df_{n,\epsilon}(t)}{dt} = \frac{r_\epsilon(t)}{2}( q_\epsilon(t) - \rho)- \frac{1}{2} 
		\E \Big \langle 
			 \Big(
				\frac{1}{n} \!\sum_{\mu=1}^{m}u'_{Y_{t,\mu}}(S_{t,\mu}) u'_{Y_{t,\mu}}(s_{t,\mu})
				-  r_\epsilon(t)
			 \Big) 
			\big(
				Q - q_\epsilon(t)
			\big)
		\Big\rangle_{ t} + \smallO_n(1)\,,
\end{align}
where $\smallO_n(1)$ is a quantity that goes to $0$ in the $n,m \to \infty$ limit, uniformly in $t$, $\epsilon$, and in the interpolating functions $q_\epsilon$, $r_\epsilon$. The {\it overlap} is $Q=Q_n \equiv \sum_{i=1}^n X_i^* x_i/n$.

We now state a crucial result in an informal way and refer to the SI for precise statements. 
Formally, the overlap concentrates around its mean (for all $t\in[0,1]$), a behaviour called ``replica-symmetric'' in statistical physics. In order to make this statement rigorous, one has to include the $\epsilon$-dependent small perturbation in \eqref{2channels_0} which effectively adds  ``side-information'' 
 about $\bX^*$ (e.g. think of $t=0$) without affecting the asymptotic free entropy density. This perturbation forces the overlap to concentrate. We prove that: If for each $t$ the map $R^t : (\epsilon_1, \epsilon_2)\in \mathcal{B}_n\mapsto (R_1(t, \epsilon), R_2(t, \epsilon))\in R^t(\mathcal{B}_n)$ is a $\mathcal{C}^1$ diffeomorphism whose Jacobian has determinant greater or equal to $1$, then we have for $s_n =\frac{1}{2} n^{-1/16}$ (see Proposition 4 of Sec. 4.3 in the SI for the precise statement)
\begin{align}\label{concentration_0}
\frac{1}{s_n^{2}}\int_{\mathcal{B}_n}d\epsilon\int_0^1dt\,\E\big\langle \big(Q - \E\langle Q\rangle_t \big)^2  \big\rangle_t  
= \mathcal{O}(n^{-1/8})\,.	 
\end{align}
As will be seen below it is possible to choose interpolating functions that satisfy the required condition.

\subsubsection{Canceling the remainder}
Note from \eqref{eq:f0_f1_0} and \eqref{result-one_0} that the first term appearing in \eqref{eq:der_f_t_0} is precisely the missing one to obtain the expression of the potential
on the r.h.s. of \eqref{f0_f1_int_0}. Thus we would like to ``cancel'' the Gibbs bracket in \eqref{eq:der_f_t_0}. 
This term is called {\it remainder}. In order to prove the replica formula, we have to show that this remainder vanishes:
Thanks to the freedom of choice of interpolation paths $(r_\epsilon, q_\epsilon)$, we are able to do so by ``adapting'' the interpolation. Thus we would like to choose $q_\epsilon(t) = \E \left\langle Q \right\rangle_t \approx Q$ because of \eqref{concentration_0}. However, $\E \left\langle Q \right\rangle_t$ is a function of $\int_0^t q_\epsilon(v)dv$ (and of $t,\epsilon$ and $\int_0^t r_\epsilon(v)dv$). The equation $q_\epsilon(t) = \E \left\langle Q \right\rangle_t$ is therefore a first order differential equation over $t \mapsto \int_0^t q_\epsilon(v)dv$. Assuming for the moment that this differential equation has a solution, 
the Cauchy-Schwarz inequality applied to the remainder together with \eqref{concentration_0} allows to show that the absolute value of this remainder {\it integrated} over $(\epsilon, t)\in \mathcal{B}_n\times[0,1]$ is $\mathcal{O}(s_n^2 n^{-1/16})$.
Combining this result with \eqref{eq:f0_f1_0} and \eqref{f0_f1_int_0} leads to the following {\it fundamental sum rule} (Proposition 5 of Sec. 4.3 in SI):
	\begin{align}\label{equality_f_fRS_0}
		 f_n =  \frac{1}{s_n^{2}}\int_{\mathcal{B}_n}d\epsilon{\textstyle\big\{\psi_{P_0}(\int_0^1 r_\epsilon(t) dt) + \alpha \Psi_{P_{\rm out}}(\int_0^1 q_\epsilon(t) dt;\rho)
		- \frac{1}{2} \int_0^1 r_\epsilon(t)q_\epsilon(t) dt\big\}} +\smallO_n(1)\,.	
	\end{align}
	%
%
%

\subsubsection{Matching bounds and end of proof}
We now possess all the necessary tools to finish the proof of Theorem~\ref{th:RS_1layer_0}. We first prove that $\lim_{n\to\infty} f_n=\sup_{r\ge0}\inf_{q\in[0,\rho]}f_{\rm RS}(q,r)$. Then in the SI, we show that $i)$ this is also equal to $\sup_{q\in[0,\rho]}\inf_{r\ge0}f_{\rm RS}(q,r)$ which gives the first equality of the theorem; $ii)$ that this $\sup\inf$ is attained at the supremum of the state evolution fixed points, which gives the second equality.

$\bullet$ {\bf Lower bound:} We choose the constant function $r_\epsilon(t) = r$ for $t\in [0,1]$. In the SI we show, using the Cauchy-Lipschitz theorem and the Liouville formula, that the differential equation $q_\epsilon(t) = \mathbb{E}\langle Q \rangle_t$  posseses a unique solution and that the map 
$R^t\! : \!(\epsilon_1, \epsilon_2) \! \mapsto \! (\epsilon_1 +rt, \epsilon_2 + \int_0^t q_\epsilon(v)dv)$ is a $\mathcal{C}^1$ diffeomorphism with Jacobian greater than $1$ (so \eqref{concentration_0} is valid).
Identity \eqref{equality_f_fRS_0} then implies $\liminf_{n\to\infty} f_n \geq \inf_{q \in [0,\rho]} f_{\rm RS}(q,r)$ for all $r\in [0, r_{\rm max}]$. Thus $\liminf_{n \to \infty} f_n \geq \sup_{r\in [0, r_{\rm max}]} \inf_{q \in [0,\rho]} f_{\rm RS} (q,r)$. In the SI an easy argument shows the r.h.s. is in fact equal to $\sup_{r\geq 0} \inf_{q \in [0,\rho]} f_{\rm RS} (q,r)$.
%

$\bullet$ {\bf Upper bound:} We choose the interpolating functions as solutions of the following system of 1st order differential equations:
$
r_\epsilon(t) = 2\alpha \Psi^\prime_{P_{\rm out}}(\mathbb{E}\langle Q \rangle_t)$, 
$q_\epsilon(t) = \mathbb{E}\langle Q \rangle_t$.
Again, applying the Cauchy-Lipschitz theorem and the Liouville formula
we show in the SI that this system admits a unique solution and the map
$R^t:(\epsilon_1, \epsilon_2) \mapsto (R_1(t, \epsilon), R_2(t, \epsilon))$ is a $\mathcal{C}^1$ diffeomorphism with determinant greater or equal to $1$. So with this choice of interpolating functions \eqref{concentration_0} is valid and we have \eqref{equality_f_fRS_0}. We show in the SI (Proposition 18) that $\Psi_{P_{\rm out}}(q;\rho)$ is convex in $q$ and thus $g:q\in [0, \rho]\mapsto 2\alpha\Psi_{P_{\rm out}}(q;\rho) - r_\epsilon(t) q$ is convex too. Since by the differential equations $r_\epsilon(t) = 2\alpha \Psi^\prime_{P_{\rm out}}(q_\epsilon(t))$, the function $g$ must attain its minimum at $q= q_\epsilon(t)$. By Proposition 17 in the SI $\psi_{P_0}(r)$ is convex, thus from Jensen and the last remark, the integrand $\{\cdots\}$ in \eqref{equality_f_fRS_0} is bounded as
\begin{align*}
&{\textstyle \psi_{P_0}(\int_0^1 r_\epsilon(t) dt) + \alpha \Psi_{P_{\rm out}}(\int_0^1 q_\epsilon(t) dt;\rho)- \frac{1}{2} \int_0^1 r_\epsilon(t)q_\epsilon(t) dt}\nn
\leq\, &\int_0^1 dt{\textstyle \Big\{ \psi_{P_0}(r_\epsilon(t)) + \alpha \Psi_{P_{\rm out}}(q_\epsilon(t);\rho) - \frac{1}{2} r_\epsilon(t) q_\epsilon(t) \Big\}}
\nonumber \\ 
=\, &\int_0^1 dt {\textstyle\Big\{ \psi_{P_0}(r_\epsilon(t)) + \inf_{q\in [0,\rho]}(\alpha \Psi_{P_{\rm out}}(q;\rho) - \frac{1}{2} r_\epsilon(t) q) \Big\}}
\nonumber \\ 
\leq 
\,&\sup_{r\geq 0}\inf_{q\in [0,\rho]}\Big\{ \psi_{P_0}(r) + \alpha \Psi_{P_{\rm out}}(q;\rho) - \frac{1}{2} r q \Big\}
\end{align*}
which implies  $\limsup_{n\to \infty} f_n \leq \sup_{r\geq 0} \inf_{q \in [0,\rho]} f_{\rm RS} (q,r)$.

\section*{Acknowledgments}
  J.B. acknowledges funding from the SNSF (grant 200021-156672). F.K. and L.Z. acknowledge
  funding from the ERC under the European Union's 7th Framework
  Programme Grant Agreement 307087-SPARCS and under Horizon 2020 Research and Innovation Programme Grant Agreement 714608-SMiLe. F.K. and N.M. acknowledge support from the ANR-PAIL. Part of this work was done while L.M. was visiting EPFL.
{%
	\singlespacing
	\bibliographystyle{unsrt_abbvr}
	\bibliography{./refs}
}
\newpage


\part{Supplementary informations}
\section{Setting}
\label{sec:partI}
\subsection{Generalized linear estimation: Problem statement}
We give a formal description of the observation model to which our
results apply. The generalized linear {\it model} covers both the {\it
  estimation} (or inference) problem and the supervised {\it learning}
problems (see Sec.~\ref{estimation}). 

Let $n,m \in \N^*$.
Let $P_0$ be a probability distribution over $\R$ and
let $(X^*_i)_{i=1}^n \iid P_0$ be the components of a signal vector $\bX^*$ (this is also denoted $\bX^*\iid P_0$). 
We fix a function $\varphi: \R \times \R^{k_A} \to \R$ and consider $(\bA_\mu)_{\mu =1}^m \iid P_A$, where $P_A$ is a probability distribution over $\R^{k_A}$, $k_A \in \N$. 
We acquire $m$ measurements through
\begin{align}\label{measurements}
	Y_\mu = \varphi\Big(\frac{1}{\sqrt{n}} [\boldsymbol{\Phi} \bX^*]_{\mu}, \bA_\mu\Big) + \sqrt{\Delta} Z_\mu\,, \qquad 1\leq \mu \leq m\,,
\end{align}
where $(Z_\mu)_{\mu=1}^m\iid \mathcal{N}(0,1)$ is an additive Gaussian noise, $\Delta \ge 0$, and $\boldsymbol{\Phi}$ is a $m \times n$ measurement matrix 
with independent entries that have zero mean and unit variance.
The estimation problem is to recover $\bX^*$ from the knowledge of $\bY=(Y_\mu)_{\mu=1}^m$, $\varphi$, $\boldsymbol{\Phi}$, $\Delta$, $P_0$ and $P_A$ (the realization of the random stream $\bA$ itself, if present in the model, is unknown). 
We use the notation $[\boldsymbol{\Phi} \bX^*]_{\mu}=\sum_{i=1}^n \Phi_{\mu i}X_i^*$.
When $\varphi(x, \bA) = \varphi(x)= x$ we have a random linear estimation problem, whereas if, say,
$\varphi(x) = {\rm sgn}(x)$ we have a noisy single layer perceptron. Sec.~\ref{estimation} discusses various examples related to non-linear estimation 
and supervised learning. 

Denote the prior over the signal as
$dP_0(\bx) = \prod_{i=1}^n dP_0(x_i)$, and similarly $dP_A(\ba) = \prod_{\mu=1}^m dP_A(\ba_\mu)$.
It is also fruitful to think of the measurements as the outputs of a ``channel'',
\begin{equation}\label{eq:channel}
	Y_\mu \sim P_{\rm out}\Big( \cdot \, \Big| \frac{1}{\sqrt{n}} [\boldsymbol{\Phi} \bX^*]_{\mu} \Big)\,.
\end{equation}
When $\Delta > 0$ the transition kernel $P_{\rm out}$ admits a transition density 
with respect to (w.r.t.) Lebesgue's measure, given by
\begin{align}\label{transition-kernel}
P_{\rm out}( y | x ) 
	=
	 \frac{1}{\sqrt{2\pi \Delta}} \int dP_A(\ba) e^{-\frac{1}{2 \Delta}( y - \varphi(x,\ba))^2}\,.
\end{align}
When $\Delta = 0$, we will only consider discrete channels where $\varphi$ takes values in $\N$\protect\footnotemark. 
\footnotetext{
	Notice that this allows to study any channel whose outputs belong to a countable set $S$ by applying a injection $u: S \to \N$ to the outputs.
}
In that case $P_{\rm out}$ admits a transition density with respect the counting measure on $\N$ given by (here $\bbf{1}(\cdot)$ is the indicator function)
\begin{align}\label{transition-kernel-discrete}
P_{\rm out}( y | x ) 
	=
	\int dP_A(\ba) \bbf{1}( y = \varphi(x,\ba))\,.
\end{align}
Note that for deterministic models, $\bA$ in~\eqref{measurements} is absent and thus the associated $\int dP_A(\ba)$ integral in~\eqref{transition-kernel}-\eqref{transition-kernel-discrete} simply disappears. 
In fact~\eqref{measurements} is sometimes called a ``random function representation'' of a transition kernel $P_{\rm out}$. Our analysis 
uses both representations~\eqref{measurements} and~\eqref{eq:channel}.

Throughout this paper we often adopt the 
language of statistical mechanics. In particular the random variables $\bY$ (and also $\boldsymbol{\Phi}$, $\bX^*$, $\bA$, $\bZ$) are called {\it quenched} variables
because once the measurements are acquired they have a ``fixed realization''. An expectation taken w.r.t.\ {\it all} quenched random variables appearing in an expression will simply be denoted by $\E$ {\it without} subscript. Subscripts are only used when the expectation carries over a subset of random variables appearing in an expression or when some confusion could arise. 

A fundamental role is played by the posterior distribution of (the signal) $\bx$ given the quenched
measurements $\bY$ (recall that $\bX^*$, $\bA$ and $\bZ$ are unknown). According to the Bayes formula this posterior is 
\begin{align}\label{joint_bayes}
	dP(\bx| \bY,\boldsymbol{\Phi}) &= \frac{1}{\cZ(\bY,\boldsymbol{\Phi})} dP_0(\bx)  \prod_{\mu=1}^m P_{\rm out}\Big(Y_\mu\Big|\frac{1}{\sqrt{n}} [\boldsymbol{\Phi} \bx]_{\mu}\Big)\\
	&=\frac{1}{\cZ(\bY,\boldsymbol{\Phi})} dP_0(\bx) e^{-\cH(\bx;\bY,\boldsymbol{\Phi})} \label{bayes}
\end{align}
where the {\it Hamiltonian} is defined as
\begin{align} \label{Hamil}
	\cH(\bx;\bY,\boldsymbol{\Phi}) &\defeq - \sum_{\mu = 1}^m \ln P_{\rm out}\Big( Y_{\mu} \Big| \frac{1}{\sqrt{n}}[\boldsymbol\Phi \bx]_{\mu} \Big)	
\end{align}
and the  {\it partition function} (the normalization factor) is defined as 
\begin{align}
	\cZ(\bY,\boldsymbol{\Phi}) &\defeq \int dP_0(\bx) e^{-\cH(\bx;\bY,\boldsymbol{\Phi})}\,.	\label{Z_1}
\end{align}
From the point of view of statistical mechanics~\eqref{bayes} is a Gibbs distribution and the integration over 
$dP_0(\bx)$ in the partition function is best thought as a ``sum over annealed or fluctuating degrees of freedom''. Let us introduce a standard statistical mechanics notation for the expectation w.r.t.\ the posterior~\eqref{joint_bayes}, the so called {\it Gibbs bracket} $\langle - \rangle$ defined as
\begin{align}
	\langle g(\bx)\rangle \defeq \E[g(\bX)|\bY,\boldsymbol{\Phi}]= \int dP(\bx|\bY,\boldsymbol{\Phi}) g(\bx) \label{GibbsBracket}
\end{align}
for any continuous bounded function $g$.
The main quantity of interest here is the associated averaged {\it free entropy} (or minus the averaged {\it free energy})
\begin{align}
	f_n \defeq \frac{1}{n} \E\ln \cZ(\bY,\boldsymbol{\Phi}) \,. \label{f}
\end{align}
It is perhaps useful to stress that $\mathcal{Z}(\bY,\boldsymbol{\Phi})$ is nothing else than the density of $\bY$ conditioned on $\boldsymbol{\Phi}$ so we have the explicit representation (used later on)
\begin{align}\label{fff}
	f_n &= \frac{1}{n} \E_{\boldsymbol\Phi}\int d\bY \mathcal{Z}(\bY,\boldsymbol{\Phi}) \ln \cZ(\bY,\boldsymbol{\Phi}) \nonumber\\
		&= \frac{1}{n} \E_{\boldsymbol\Phi}\int d\bY dP_0(\bX^*) e^{-\cH(\bX^*;\bY,\boldsymbol{\Phi})}
	\ln\int dP_0(\bx)\,  e^{-\cH(\bx;\bY,\boldsymbol{\Phi})} \,,
\end{align}
where $d\bY = \prod_{\mu=1}^m dY_\mu$. Thus $f_n$ is minus the conditional entropy $-H(\bY|\boldsymbol{\Phi})/n$ of the measurements. One of the main contributions of this paper is the derivation, thanks to the adaptive interpolation method, of the thermodynamic limit $\lim_{n\to\infty}f_n$ in the ``high-dimensional regime'', namely when $n,m \to \infty$ while $m / n \to \alpha > 0$ ($\alpha$ is sometimes referred to as the ``measurement rate'' or ``sampling rate'').
\subsection{The teacher-student scenario}\label{sec:teacherStudent}
We now describe an important conceptual setting, the {\it teacher-student scenario} (also called planted model), that allows to then define the optimal generalization error. We voluntarily employ terms coming from machine learning instead of the signal processing terminology used until here. 

First the teacher randomly generates a {\it classifier} $\bX^*\in \mathbb{R}^n$ (the signal in the estimation problem) with $\bX^*\iid P_0$ and an ensemble of $m$ {\it patterns} (row-vectors) $\boldsymbol\Phi_\mu\in\mathbb{R}^n$ for $\mu=1,\ldots,m$ such that $\boldsymbol\Phi_\mu\iid{\cal N}(\bbf{0},\bbf{I}_n)$. The teacher then chooses a model $(\varphi, P_A, \Delta)$ or equivalently $P_{\rm out}$, which are linked through~\eqref{transition-kernel}-\eqref{transition-kernel-discrete}. The teacher then output {\it labels} $Y_\mu\in\mathbb{R}$ through~\eqref{measurements} or~\eqref{eq:channel} for $\mu = 1,\ldots,m$. 

The student is given the distribution $P_0$, the model $(\varphi, P_A, \Delta)$ or equivalently $P_{\rm out}$ and the training data composed of the pattern-label pairs $\{(Y_\mu;\boldsymbol{\Phi}_\mu)\}_{\mu=1}^m$ generated by the teacher. His (supervised) learning task is then to predict the labels associated with new, yet unseen, patterns from all this knowledge. 


How does the teacher may evaluate the student's prediction capabilities? The teacher starts by randomly generating a new line of the matrix, or pattern,
$\boldsymbol{\Phi}_{\rm new}$. Then, still using the same $\bX^*$, he generates the associated new label $Y_{\rm new} \sim P_{\rm out}(\cdot\,|\,\boldsymbol{\Phi}_{\rm new} \cdot \bX^*/\sqrt{n} )$. He is now ready to evaluate the student generalization performance. For that purpose, an important quantity is the {\it generalization error} (or prediction error). 
If we denote $\widehat{Y}(\bbf{\Phi}_{\rm new}, \bbf{\Phi},\bY)$ the estimator used by the student (which is thus a measurable function of the observations), the generalization error is defined as
\begin{align}
	\mathcal{E}_{\rm gen}(\widehat{Y})
	\defeq \EE\big[\big(Y_{\rm new} - \widehat Y(\boldsymbol{\Phi}_{\rm new},\bbf{\Phi},\bY)\big)^2\big] \,.
\end{align}
The optimal generalization error is then defined as the minimum of $\mathcal{E}_{\rm gen}$ over all estimators $\widehat{Y}(\bbf{\Phi}_{\rm new}, \bbf{\Phi},\bY)$:
\begin{align}\label{eq:error_gen_opt}
	\mathcal{E}_{\rm gen}^{\rm opt} \defeq \min_{\widehat Y} \mathcal{E}_{\rm gen}(\widehat{Y})
	= {\rm MMSE}(Y_{\rm new}| \boldsymbol{\Phi}_{\rm new},\bbf{\Phi},\bY)=\EE\big[\big(Y_{\rm new} - \E[ Y_{\rm new} | \boldsymbol{\Phi}_{\rm new},\bbf{\Phi},\bY]\big)^2\big] \,.
\end{align}
Here, and for the rest of the paper, we define the minimum mean-square error (MMSE) function as follows: Given two random variables $\bA,\bB$, the MMSE in estimating $\bA$ given $\bB$ is defined as
\begin{equation}\label{eq:def_mmse}
	\MMSE(\bA|\bB) \defeq \E \big[\|\bA - \E[\bA|\bB]\|^2\big] \,,
\end{equation}
where $\E[\bA|\bB]$ is the expectation of $\bA$ with respect to its posterior given $\bB$.

{\it A word about notations:} Let us emphasize on the link between the different notations that we use in the present supplementary material and in the main text. E.g., the expectation w.r.t. to the posterior of $Y_{\rm new}$ appearing in \eqref{eq:error_gen_opt} can be written equivalently as:
\begin{align}
\E[ Y_{\rm new} | \boldsymbol{\Phi}_{\rm new},\bbf{\Phi},\bY]=\E_{P_A(\ba)}\E_{P( \bx| \bbf{\Phi},\bY)} \varphi\Big(\frac{\bbf{\Phi}_{\rm new}\cdot \bx}{\sqrt{n}},\ba\Big)=\E_{P_A(\ba)}\Big\langle \varphi\Big(\frac{\bbf{\Phi}_{\rm new}\cdot \bx}{\sqrt{n}},\ba\Big)\Big\rangle\,.
\end{align}
To see that, just write:
\begin{align}
\E[ Y_{\rm new} | \boldsymbol{\Phi}_{\rm new},\bbf{\Phi},\bY]&\defeq \int dY_{\rm new}\,Y_{\rm new}P(Y_{\rm new}|\boldsymbol{\Phi}_{\rm new},\bbf{\Phi},\bY)\nn
&=\int dY_{\rm new}\,Y_{\rm new}P_{\rm out}\Big(Y_{\rm new}\Big|\frac{1}{\sqrt{n}}\boldsymbol{\Phi}_{\rm new}\cdot \bx\Big)\,dP(\bx|\bY,\bbf{\Phi})\nn
&=\Big\langle \int dY_{\rm new}\,Y_{\rm new}P_{\rm out}\Big(Y_{\rm new}\Big|\frac{1}{\sqrt{n}}\boldsymbol{\Phi}_{\rm new}\cdot \bx\Big) \Big\rangle
\nn
&=\Big\langle \int dP_A(\ba)dY_{\rm new}\,Y_{\rm new}\,\frac{1}{\sqrt{2\pi \Delta}}   e^{-\frac{1}{2 \Delta}\big\{ Y_{\rm new} - \varphi\big(\frac{1}{\sqrt{n}}\boldsymbol{\Phi}_{\rm new}\cdot \bx,\ba\big)\big\}^2} \Big\rangle	\nn
&=\E_{P_A(\ba)}\Big\langle\varphi\Big(\frac{1}{\sqrt{n}}\boldsymbol{\Phi}_{\rm new}\cdot \bx,\ba\Big)\Big\rangle	\,.
\end{align}
Here we used definition \eqref{transition-kernel} for the transition kernel, but using instead \eqref{transition-kernel-discrete} would lead to the same identity.
\subsection{Two scalar inference channels} \label{sec:scalar_inf}
An important role in our proof of the asymptotic expression of the free entropy is played by simple {\it scalar} inference channels. As we will see, 
the free entropy is expressed in terms of the free entropy of these
channels. This ``decoupling property'' results from the mean-field approach in statistical physics, used through in the replica method to perform a formal calculation of the free entropy of the model \cite{mezard1990spin,mezard2009information}.
Let us now introduce these two scalar denoising models. 

The first one is an additive Gaussian channel. Let $r \geq 0$, which plays the role of a signal-to-noise ratio (snr). Suppose that $X_0 \sim P_0$ and that we observe
\begin{equation}\label{eq:additive_scalar_channel}
	Y_0 = \sqrt{r}\, X_0 + Z_0\, ,
\end{equation}
where $Z_0 \sim \cN(0,1)$ independently of $X_0$. Consider the inference problem consisting of retrieving $X_0$ from the observations $Y_0$.
The associated posterior distribution is
\begin{align}
	dP(x|Y_0) = \frac{dP_0(x) e^{\sqrt r\, Y_0 x - r x^2/2}}{\int dP_0(x)e^{\sqrt r\, Y_0 x - r x^2/2}}	\,.
\end{align}
In this expression all the $x$-independent terms have been simplified between the numerator and the normalization. The free entropy associated with this channel is just the expectation
of the logarithm of the normalization factor
\begin{align} \label{psi0}
	\psi_{P_0}(r) \defeq \E \ln \int dP_0(x)e^{\sqrt r \,Y_0 x - r x^2/2} \,.
\end{align}
The basic properties of $\psi_{P_0}$ are presented in Appendix~\ref{appendix_scalar_channel1} .

The second scalar channel that appears naturally in the problem is linked to the transition kernel $P_{\rm out}$ through the following inference model. 
Suppose that $V,W^* \iid \cN(0,1)$ where $V$ is {\it known} while the inference problem is to recover the unknown $W^*$ from the following observation
\begin{equation}\label{eq:Pout_scalar_channel}
	\widetilde{Y}_0 \sim P_{\rm out} \big( \cdot \, | \,\sqrt{q}\, V + \sqrt{\rho - q} \,W^* \big)\,,
\end{equation}
where $\rho >0$, $q \in [0, \rho]$. 
Notice that the channel~\eqref{eq:Pout_scalar_channel} is equivalent to
$\widetilde{Y}_0 = \varphi(\sqrt{q}\,V + \sqrt{\rho - q}\,W^*, \bA) + \sqrt{\Delta} Z$ with $\Delta \ge 0$ and where $(\bA, Z )\sim P_A \otimes \cN(0,1)$, independently of $V,W^*$.
The free entropy for this model, again related to the normalization of the posterior $dP(w|\widetilde Y_0,V)$, is
\begin{align}\label{PsiPout}
	\Psi_{P_{\rm out}}(q;\rho) =\Psi_{P_{\rm out}}(q) \defeq
	\E \ln \int {\cal D}w P_{\rm out}\big(\widetilde{Y}_0 | \sqrt{q}\, V + \sqrt{\rho - q}\, w\big)\, ,
\end{align}
where ${\cal D}w \defeq dw(2\pi)^{-1/2} e^{-w^2/2}$ is the standard Gaussian measure. In \eqref{PsiPout} above, $P_{\rm out}$ denotes either the transition density with respect to Lebesgue's measure (given by~\eqref{transition-kernel}) in the case $\Delta>0$, or the density with respect to the counting measure over $\N$ (given by~\eqref{transition-kernel-discrete}), in the case of a ``discrete'' channel ($\varphi$ takes values in $\N$ and $\Delta=0$). 
We prove in Appendix~\ref{appendix_scalar_channel2} that this function is convex, differentiable and non-decreasing w.r.t.\ its first argument.
\section{Main results}
\subsection{Replica-symmetric formula and mutual information}\label{RSformula-andhyp}
Let us now introduce our first main result, a single-letter {\it replica-symmetric formula} for the asymptotic free entropy 
of model~\eqref{measurements},~\eqref{eq:channel}. The result holds under the following rather general hypotheses. We will consider two cases, that is when there is some Gaussian noise ($\Delta>0$, see~\ref{hyp:delta_pos} below) and the case without Gaussian noise ($\Delta = 0$, see~\ref{hyp:delta_0} below):
\begin{enumerate}[label=(h\arabic*),noitemsep]
	\item \label{hyp:third_moment} The prior distribution $P_0$ admits a finite third moment and has at least two points in its support.
	\item \label{hyp:um} There exists $\gamma>0$ such that the sequence $(\E [ \vert\varphi(\frac{1}{\sqrt n}[\boldsymbol{\Phi}\bX^*]_1, \bA_1)\vert^{2+\gamma}])_{n \geq 1}$ is bounded.
	\item \label{hyp:phi_general} The random variables $(\Phi_{\mu i})$ are independent with zero mean, unit variance and finite third moment that is bounded with $n$. 
	\item \label{hyp:cont_pp} 
		For almost-all values of $\ba \in \R^{k_A}$ (w.r.t.\ $P_A$), the function $x \mapsto \varphi(x,\ba)$ is continuous almost everywhere. 
\end{enumerate}
We will also assume that one of the two following hypotheses hold:
\begin{enumerate}[label=(h5.\alph*),noitemsep]
	\item \label{hyp:delta_pos} $\Delta > 0$.
	\item \label{hyp:delta_0} $\Delta = 0$ and $\varphi$ takes values in $\N$.
\end{enumerate}

\begin{remark}
The above hypotheses are here stated using the ``random function'' representation of~\eqref{measurements}. In many cases, it can be useful to state them using the ``transition kernel'' representation of~\eqref{eq:channel}.
The hypotheses~\ref{hyp:um} and~\ref{hyp:cont_pp} are respectively equivalent\protect\footnotemark  \ to:
\footnotetext{
	The implications~\ref{hyp:um} $\Leftrightarrow$ \hyperref[hyp:prime_m]{(h2')} and~\ref{hyp:cont_pp} $\Rightarrow$ \hyperref[hyp:prime_cc]{(h4')} are obvious. If \hyperref[hyp:prime_cc]{(h4')} holds one can show, by inverting cumulative distribution functions, that there exists a function $\varphi: \R \times [0,1] \to \R$ such that~\eqref{measurements} holds for $A_{\mu} \iid P_A = {\rm Unif}([0,1])$ and that~\ref{hyp:cont_pp} is verified.
}
\vspace{-2mm}
\begin{enumerate}[noitemsep]
	\item[(h2')]\label{hyp:prime_m} There exists $\gamma > 0$ such that $\E [|Y_1|^{2+\gamma}]$ remains bounded with $n$.
	\item[(h4')]\label{hyp:prime_cc} $x \in \R \mapsto P_{\rm out}(\cdot | x)$ is continuous almost everywhere for the weak convergence.
\end{enumerate}
\end{remark}

Under the above hypothesis~\ref{hyp:delta_pos} (respectively~\ref{hyp:delta_0}), the transition kernel $P_{\rm out}$ admits a density with respect to Lebesgue's measure on $\R$ (resp.\ the counting measure on $\N$) that will be denoted by $P_{\rm out}( \cdot | x )$.
We will call the kernel $P_{\rm out}$ {\it informative} if there exists $y \in \R$ (resp.\ $y \in \N$) such that $P_{\rm out}(y \, | \, \cdot)$ is not equal almost everywhere to a constant. 
If $P_{\rm out}$ is not informative, it is not difficult to show that estimation is then impossible. 

Let us define the {\it replica-symmetric potential} (or just potential). Call
$\rho := \E[(X^*)^2]$ where $X^*\sim P_0$. Then the potential is
\begin{align} \label{frs}
	f_{\rm RS}(q,r;\rho) =  f_{\rm RS}(q,r) \defeq  \psi_{P_0}(r)
	+ \alpha \Psi_{P_{\rm out}}(q;\rho) - \frac{rq}{2} \,.  
\end{align} 
We define also $f_{\rm RS}(\rho, +\infty) = \lim_{r \to \infty} f_{\rm RS}(\rho,r)$.
From now on denote $\psi_{P_0}'(r)$ and $\Psi_{P_{\rm out}}'(q)=\Psi_{P_{\rm out}}'(q;\rho)$ the derivatives of $\psi_{P_0}(r)$ and $\Psi_{P_{\rm out}}(q;\rho)$ w.r.t.\ their first argument.
We need also to define the set of the critical points of $f_{\rm RS}$:
\begin{equation} \label{fixed_points}
	\Gamma \defeq \left\{ (q,r) \in [0,\rho] \times (\R_+ \cup \{+\infty\}) \, \middle|
		\begin{array}{lll}
			q &=& 2 \psi_{P_0}'(r) \\
			r &=& 2 \alpha \Psi_{P_{\rm out}}'(q;\rho)
		\end{array}
	\right\}\,,
\end{equation}
where, with a slight abuse of notation, we define $\psi'_{P_0}(+\infty) = \lim_{r \to \infty} \psi'_{P_0}(r)$ and $\Psi'_{P_{\rm out}}(\rho) = \lim_{q \to \rho} \Psi'_{P_{\rm out}}(q)$. These limits are well defined by convexity of $\psi_{P_0}$ and $\Psi_{P_{\rm out}}$. 
The elements of $\Gamma$ are called ``fixed points of the state evolution''.
Our first main result is
\begin{thm}[Replica-symmetric formula for the free entropy] \label{th:RS_1layer}
	Suppose that hypotheses~\ref{hyp:third_moment}-\ref{hyp:um}-\ref{hyp:phi_general}-\ref{hyp:cont_pp} hold. Suppose that either hypothesis~\ref{hyp:delta_pos} or~\ref{hyp:delta_0} holds.
	Then, for the generalized linear estimation model~\eqref{measurements},~\eqref{eq:channel} the thermodynamic limit of the free entropy~\eqref{f} verifies
	\begin{align}
		f_{\infty} &:= \lim_{n\to\infty }f_n 
		=  {\adjustlimits \sup_{q \in [0,\rho]} \inf_{r \geq 0}} f_{\rm RS}(q,r) 
		= \sup_{(q,r) \in \Gamma} f_{\rm RS}(q,r) \,.
\label{eq:rs_formula}
	\end{align}
	%
		Moreover, if $P_{\rm out}$ is informative,
		then the ``$\sup \inf$'' and the supremum over $\Gamma$ in~\eqref{eq:rs_formula} are achieved over the same couples $(q,r)$.
\end{thm}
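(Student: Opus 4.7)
The plan is to follow the adaptive interpolation strategy outlined earlier, and I would organize the proof around a single parametrized family of inference problems bridging the original GLM and the decoupled scalar channels. Concretely, I would introduce the interpolating observation model of \eqref{2channels_0}, parametrized by $t\in[0,1]$ and by small perturbation parameters $\epsilon=(\epsilon_1,\epsilon_2)\in\mathcal{B}_n$, with signal-to-noise levels $R_1(t,\epsilon)=\epsilon_1+\int_0^t r_\epsilon(v)\,dv$ and $R_2(t,\epsilon)=\epsilon_2+\int_0^t q_\epsilon(v)\,dv$ for two interpolating functions $r_\epsilon,q_\epsilon$ that I leave free for the moment. At $t=0$ the model reduces to the GLM (up to a harmless additive constant from pure-noise channels), and at $t=1$ it decouples into the two scalar channels of Section \ref{sec:scalar_inf}, yielding $f_{n,\epsilon}(1)$ in terms of $\psi_{P_0}$ and $\Psi_{P_{\rm out}}$ as in \eqref{eq:f0_f1_0}. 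The fundamental identity \eqref{f0_f1_int_0} then reduces the whole problem to controlling $df_{n,\epsilon}/dt$.

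The next step is to compute this derivative. Using Gaussian integration by parts in the $V_\mu, W_\mu^*, Z_i'$ variables, together with the Nishimori identities that hold in the Bayes-optimal setting, one obtains \eqref{eq:der_f_t_0}: the derivative is a sum of the ``replica-symmetric'' contribution $r_\epsilon(t)(q_\epsilon(t)-\rho)/2$ and a remainder expressed as a Gibbs expectation of $(\tfrac{1}{n}\sum_\mu u'_{Y_{t,\mu}}(S_{t,\mu})u'_{Y_{t,\mu}}(s_{t,\mu})-r_\epsilon(t))(Q-q_\epsilon(t))$. I would then establish overlap concentration: provided the map $R^t:\mathcal{B}_n\to R^t(\mathcal{B}_n)$ is a $\mathcal{C}^1$-diffeomorphism with Jacobian $\geq 1$, a Fubini-type integration over $\epsilon$ turns the small perturbation into genuine side-information and allows one to upgrade the standard free energy fluctuation bounds to the $L^2$-concentration statement \eqref{concentration_0}. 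This is the first delicate ingredient and the place where the scheme differs most from the classical Guerra--Toninelli interpolation.

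For the lower bound I would set $r_\epsilon(t)=r\geq 0$ constant and define $q_\epsilon$ implicitly by the first-order ODE $q_\epsilon(t)=\E\langle Q\rangle_t$. By Cauchy--Lipschitz (the right-hand side is a smooth bounded function of the integrated SNR) this admits a unique solution, and a Liouville-formula computation shows that the corresponding $R^t$ has Jacobian $\geq 1$. With this choice, applying Cauchy--Schwarz to the remainder and invoking \eqref{concentration_0} shows that the $(\epsilon,t)$-integrated remainder is $\mathcal{O}(s_n^2 n^{-1/16})$, so the sum rule \eqref{equality_f_fRS_0} yields $\liminf_{n\to\infty} f_n\geq \inf_{q\in[0,\rho]} f_{\rm RS}(q,r)$ for every $r$, hence $\liminf f_n\geq \sup_{r\geq 0}\inf_q f_{\rm RS}(q,r)$.

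For the matching upper bound I would now let both functions adapt: choose them as the unique solution of the coupled system $r_\epsilon(t)=2\alpha\Psi'_{P_{\rm out}}(\E\langle Q\rangle_t;\rho)$, $q_\epsilon(t)=\E\langle Q\rangle_t$, again producing a diffeomorphism with Jacobian $\geq 1$, so that the sum rule applies. The key analytic input is that $q\mapsto \Psi_{P_{\rm out}}(q;\rho)$ and $r\mapsto\psi_{P_0}(r)$ are convex; this lets me use Jensen on the integrals over $t$ inside the sum rule, and the ODE for $r_\epsilon$ identifies $q_\epsilon(t)$ as the minimizer of $q\mapsto \alpha\Psi_{P_{\rm out}}(q;\rho)-r_\epsilon(t)q/2$, yielding $\limsup f_n\leq \sup_{r\geq 0}\inf_{q\in[0,\rho]} f_{\rm RS}(q,r)$. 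Combining the two bounds gives the first equality of the theorem; the equality $\sup_r\inf_q=\sup_q\inf_r$ and the identification with $\sup_{(q,r)\in\Gamma} f_{\rm RS}$ follows from standard min-max arguments together with the critical-point characterization of the extremizers via the derivatives of $\psi_{P_0}$ and $\Psi_{P_{\rm out}}$. I expect the main obstacle to be the overlap concentration step \eqref{concentration_0}, since it is what justifies the adaptive choice of the interpolation path and requires the subtle interplay between the perturbation $\epsilon$, the Nishimori identities, and the Jacobian condition on $R^t$; everything else is essentially ODE theory, Gaussian integration by parts, and convexity.
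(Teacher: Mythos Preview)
Your proposal matches the paper's proof essentially step for step: the adaptive interpolating model, the derivative formula via Gaussian integration by parts and Nishimori identities, the overlap concentration under the Jacobian condition on $R^t$, the constant-$r$ choice for the lower bound and the coupled ODE $r_\epsilon(t)=2\alpha\Psi'_{P_{\rm out}}(\E\langle Q\rangle_t)$, $q_\epsilon(t)=\E\langle Q\rangle_t$ for the upper bound, Cauchy--Lipschitz/Liouville for regularity, and Jensen/convexity of $\psi_{P_0},\Psi_{P_{\rm out}}$ to close the upper bound. You also correctly single out the overlap concentration as the main technical hurdle.

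One piece of the paper's architecture is missing from your outline, though. The interpolation argument as you describe it (and as the paper carries it out) requires the strong hypotheses \ref{hyp:bounded}--\ref{hyp:c2}--\ref{hyp:phi_gauss2}: bounded prior, $\varphi$ bounded and $\cC^2$ with bounded derivatives, and i.i.d.\ Gaussian $\Phi_{\mu i}$. These are needed for the Gaussian integration by parts w.r.t.\ $\Phi_{\mu i}$, for the boundedness of $u'_Y,u''_Y$, and for the free-entropy concentration that kills the residual term $A_{n,\epsilon}$ in the derivative computation. The theorem, however, is stated under the weaker \ref{hyp:third_moment}--\ref{hyp:um}--\ref{hyp:phi_general}--\ref{hyp:cont_pp} and either \ref{hyp:delta_pos} or \ref{hyp:delta_0}. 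The paper closes this gap in a separate approximation section: a Lindeberg swap to pass from Gaussian to general $\bbf{\Phi}$, an $L^2$ density argument to replace $\varphi$ by a smooth compactly supported $\widehat\varphi$ (controlling the mutual information via a Wasserstein-type bound), truncation of $P_0$, and finally, for discrete channels, a $\Delta\to 0$ limit using uniform bounds on $|I(\bX^*;\bY+\sqrt\Delta\bZ|\bbf\Phi)-I(\bX^*;\bY|\bbf\Phi)|$. None of this is conceptually deep, but without it your interpolation only proves the result under \ref{hyp:bounded}--\ref{hyp:c2}--\ref{hyp:phi_gauss2}, not under the stated hypotheses.
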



An immediate corollary of Theorem~\ref{th:RS_1layer} is the limiting expression of the mutual information between the signal and the observations. To state the result, we need to introduce two mutual informations associated to the two scalar channels presented in Sec.~\ref{sec:scalar_inf}, namely
\begin{equation}\label{eq:mi_P0}
I_{P_0}(r) \defeq I(X_0; \sqrt{r}\,X_0 + Z_0) = \frac{r \rho}{2} - \psi_{P_0}(r)
\end{equation}
for the channel~\eqref{eq:additive_scalar_channel} and
\begin{equation}\label{eq:mi_Pout}
\mathcal{I}_{P_{\rm out}}(q) \defeq I(W^*; \widetilde{Y}_0|V) =\Psi_{P_{\rm out}}(\rho) - \Psi_{P_{\rm out}}(q)
\end{equation}
for the channel~\eqref{eq:Pout_scalar_channel}.

\begin{corollary}[Single-letter formula for the mutual information] \label{cor:mi}
	The thermodynamic limit of the mutual information for model~\eqref{measurements},~\eqref{eq:channel} between the observations and the hidden variables verifies
	\begin{equation}\label{eq:lim_i}
		i_{\infty}\defeq\lim_{n \to \infty} \frac{1}{n} I(\bX^*; \bY \, | \bbf{\Phi}) =
		{\adjustlimits \inf_{q \in [0,\rho]} \sup_{r \geq 0}} 
		\, i_{\rm RS}(q,r)= \inf_{(q,r) \in \Gamma} i_{\rm RS}(q,r) \,,
	\end{equation}
	where
	\begin{equation}\label{irspot}
		i_{\rm RS}(q,r) \defeq 
			I_{P_0}(r) + \alpha \mathcal{I}_{P_{\rm out}}(q) - \frac{r}{2}(\rho - q) \,.
	\end{equation}
\end{corollary}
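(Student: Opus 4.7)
I would start from the identity
\begin{equation*}
\frac{1}{n}I(\bX^*;\bY\,|\,\bbf{\Phi}) = \frac{m}{n}\,\E\!\left[\ln P_{\rm out}\!\Big(Y_1\,\Big|\,\tfrac{1}{\sqrt n}[\bbf{\Phi}\bX^*]_1\Big)\right] - f_n,
\end{equation*}
obtained from the definition $I = \E\ln P(\bY|\bX^*,\bbf{\Phi}) - \E\ln P(\bY|\bbf{\Phi})$, using $P(\bY|\bX^*,\bbf{\Phi}) = \prod_\mu P_{\rm out}(Y_\mu|\tfrac{1}{\sqrt n}[\bbf{\Phi}\bX^*]_\mu)$, exchangeability of the rows of $\bbf{\Phi}$, and the identity $\E\ln\mathcal{Z}(\bY,\bbf{\Phi}) = n f_n$ from~\eqref{fff}. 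Theorem~\ref{th:RS_1layer} immediately handles the $-f_n$ term, so the remaining task is to evaluate the first term.

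For that first term, I plan to apply the central limit theorem to the i.i.d.\ sum $\tfrac{1}{\sqrt n}[\bbf{\Phi}\bX^*]_1 = \tfrac{1}{\sqrt n}\sum_{i=1}^n \Phi_{1i}X^*_i$, which converges in distribution to $\sqrt{\rho}\, V$ with $V\sim\mathcal{N}(0,1)$ (legitimate under~\ref{hyp:phi_general} and~\ref{hyp:third_moment}). Since $Y_1\sim P_{\rm out}(\cdot\,|\,\tfrac{1}{\sqrt n}[\bbf{\Phi}\bX^*]_1)$ conditionally on that sum, the pair $(Y_1,\tfrac{1}{\sqrt n}[\bbf{\Phi}\bX^*]_1)$ converges in distribution to the pair $(\tilde Y_0,\sqrt\rho\, V)$ that enters $\Psi_{P_{\rm out}}(\rho)$ (the Gaussian $w$ in~\eqref{PsiPout} drops out when $q=\rho$), and continuity at almost every $x$ from~\ref{hyp:cont_pp} ensures the mapping $\ln P_{\rm out}$ is compatible with the weak limit. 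The main obstacle will be upgrading this to convergence of the expectations, i.e.\ proving uniform integrability of $\ln P_{\rm out}(Y_1|\tfrac{1}{\sqrt n}[\bbf{\Phi}\bX^*]_1)$. In case~\ref{hyp:delta_pos}, expanding~\eqref{transition-kernel} and applying Jensen's inequality in $\ba$ will bound $|\ln P_{\rm out}(y|x)|$ by $C + \Delta^{-1}(y^2 + \E_{\bA\sim P_A}\varphi(x,\bA)^2)$, whose expectation stays bounded uniformly in $n$ by~\ref{hyp:um}; in case~\ref{hyp:delta_0}, the one-sided bound $\ln P_{\rm out}\leq 0$ combined with a truncation argument using the finite $(2+\gamma)$-moment from~\ref{hyp:um} should suffice. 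Either way the first term tends to $\alpha\,\Psi_{P_{\rm out}}(\rho)$.

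Combining both limits yields
\begin{equation*}
i_\infty = \alpha\,\Psi_{P_{\rm out}}(\rho) - {\adjustlimits \sup_{q\in[0,\rho]} \inf_{r\geq 0}} f_{\rm RS}(q,r) = {\adjustlimits \inf_{q\in[0,\rho]} \sup_{r\geq 0}}\bigl\{\alpha\,\Psi_{P_{\rm out}}(\rho) - f_{\rm RS}(q,r)\bigr\},
\end{equation*}
the second equality being simply $-\sup\inf = \inf\sup$ applied to the negated argument (with $\alpha\Psi_{P_{\rm out}}(\rho)$ pulled out as a constant). The algebraic identifications $\alpha\Psi_{P_{\rm out}}(\rho)-\alpha\Psi_{P_{\rm out}}(q)=\alpha\,\mathcal{I}_{P_{\rm out}}(q)$ and $\rho r/2-\psi_{P_0}(r)=I_{P_0}(r)$ from~\eqref{eq:mi_P0}--\eqref{eq:mi_Pout} will then rewrite the integrand as exactly $i_{\rm RS}(q,r)$, producing the first equality of~\eqref{eq:lim_i}. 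For the second equality, I would invoke the ``moreover'' part of Theorem~\ref{th:RS_1layer}, which asserts that $\sup_q\inf_r f_{\rm RS}$ and $\sup_\Gamma f_{\rm RS}$ are attained at the same couples; subtracting from the constant $\alpha\Psi_{P_{\rm out}}(\rho)$ reverses all optima and transports the statement into $\inf_q\sup_r i_{\rm RS} = \inf_\Gamma i_{\rm RS}$, completing the proof.
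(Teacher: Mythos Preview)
Your proposal is correct and follows essentially the same route as the paper's own proof: the paper also writes $\frac{1}{n}I(\bX^*;\bY|\bbf{\Phi}) = -f_n + \frac{m}{n}\,\E\ln P_{\rm out}(Y_1\,|\,\bbf{\Phi}_1\cdot\bX^*/\sqrt{n})$, applies the CLT to $\frac{1}{\sqrt n}\bbf{\Phi}_1\cdot\bX^*$, and identifies the limit of the second term as $\alpha\,\Psi_{P_{\rm out}}(\rho)$ before performing the algebraic rewriting via~\eqref{eq:mi_P0}--\eqref{eq:mi_Pout}. Your treatment of uniform integrability is in fact more explicit than the paper's, which simply says ``it is not difficult to verify'' the convergence of the expectation under~\ref{hyp:um}-\ref{hyp:cont_pp} and either~\ref{hyp:delta_pos} or~\ref{hyp:delta_0}.
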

\begin{proof}
	This follows from a simple calculation:
	\begin{align}
		\frac{1}{n} I(\bX^*;\bY|\bbf{\Phi})
		&= \frac{1}{n}H(\bY|\bbf{\Phi}) - \frac{1}{n} H(\bY|\bX^*,\bbf{\Phi})
		= -f_n
		+ \frac{1}{n} \E \ln P(\bY|\bX^*, \bbf{\Phi}) \nonumber
		\\
		&= -f_n + \frac{m}{n} \E \ln P_{\rm out}(Y_1 \, | \, \bbf{\Phi}_1 \cdot \bX^* / \sqrt{n}) \,. \label{eq:i_f0}
	\end{align}
	By the central limit theorem (that we can apply under hypotheses~\ref{hyp:third_moment}-\ref{hyp:phi_general}) we have 
	$$
	S_n := 
	\frac{1}{\sqrt{n}} \bbf{\Phi}_1 \cdot \bX^*
	=
	\frac{1}{\sqrt{n}} \sum_{i=1}^n \Phi_{1,i} X^*_i \xrightarrow[n \to \infty]{(d)}
	\cN(0,\rho) \,.
	$$
	Now, under the hypotheses~\ref{hyp:um}-\ref{hyp:cont_pp} and either~\ref{hyp:delta_pos} or~\ref{hyp:delta_0} it is not difficult to verify that
	\begin{align*}
		\E \ln P_{\rm out}(Y_1 \, | \, \bbf{\Phi}_1 \cdot \bX^* / \sqrt{n}) 
&=
\E \int dY P_{\rm out}(Y|S_n) \ln P_{\rm out}(Y|S_n)
\\
&\xrightarrow[n \to \infty]{}
\E \int dY P_{\rm out}(Y|\sqrt{\rho}\,V) \ln P_{\rm out}(Y|\sqrt{\rho}\,V)
= \Psi_{P_{\rm out}}(\rho)
	\end{align*}
	where $V \sim \cN(0,1)$. We conclude, using~\eqref{eq:i_f0}:
	\begin{equation}\label{eq:i_f}
		\frac{1}{n} I(\bX^*;\bY|\bbf{\Phi}) = - f_n + \alpha \Psi_{P_{\rm out}}(\rho) + o_n(1) 
	\end{equation}
	where $\lim_{n\to\infty}o_n(1) =0$.
\end{proof}

The next proposition, proved in Appendix~\ref{app:proof_q_star}, states that for almost every $\alpha > 0$ there is one unique optimizer $q^*$ in~\eqref{eq:rs_formula} (or equivalently in~\eqref{eq:lim_i}):
\begin{proposition}\label{prop:q_star}
	Assume that the assumptions of Theorem~\ref{th:RS_1layer} hold and that $P_{\rm out}$ is informative.
	Define
	\begin{align}
		D^* \defeq \big\{ \alpha > 0 \, \big| \,~\eqref{eq:rs_formula} \ \text{(or equivalently \eqref{eq:lim_i}) admits a unique optimizer} \ q^*(\alpha) \big\}\,.\label{Dstar}
	\end{align}
	the set $D^*$ is equal to $\R_+^*$ minus some countable set. Moreover $\alpha \mapsto q^*(\alpha)$ is continuous on $D^*$.
\end{proposition}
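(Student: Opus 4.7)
My approach relies on rewriting the variational formula for $i_\infty(\alpha)$ (equivalently $f_\infty(\alpha)$) so that it becomes an infimum of functions that are \emph{affine in $\alpha$}, and then exploiting the differentiability properties of concave functions.

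First, I would swap the order of the min-max in~\eqref{eq:lim_i} to expose the $\alpha$-dependence cleanly. Since $I_{P_0}(r) - r(\rho-q)/2 = rq/2 - \psi_{P_0}(r)$, we have
\begin{equation*}
	\sup_{r\geq 0}\bigl[I_{P_0}(r) + \alpha\, \mathcal{I}_{P_{\rm out}}(q) - r(\rho-q)/2\bigr] = \psi_{P_0}^*(q/2) + \alpha\, \mathcal{I}_{P_{\rm out}}(q),
\end{equation*}
where $\psi_{P_0}^*$ denotes the Legendre-Fenchel transform of $\psi_{P_0}$. Thus, by Corollary~\ref{cor:mi},
\begin{equation*}
	i_\infty(\alpha) = \inf_{q \in [0,\rho]}\bigl[\psi_{P_0}^*(q/2) + \alpha\, \mathcal{I}_{P_{\rm out}}(q)\bigr].
\end{equation*}
This exhibits $i_\infty$ as a pointwise infimum of affine functions of $\alpha$, hence $i_\infty$ is \emph{concave} on $\R_+^*$. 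A classical theorem then gives that a concave function on an open interval is differentiable outside an at most countable set $D_0 \subset \R_+^*$; equivalently, its superdifferential $\partial i_\infty(\alpha)$ is a singleton for $\alpha \notin D_0$.

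Next, I would use the standard envelope/supporting-line argument to extract uniqueness of $q^*(\alpha)$ at every $\alpha \notin D_0$. Continuity of $\psi_{P_0}^*$ and of $\mathcal{I}_{P_{\rm out}}$ on $[0,\rho]$ together with compactness yield the existence of at least one optimizer $q^*(\alpha)$ for each $\alpha$. For any such optimizer, the affine map $\alpha' \mapsto \psi_{P_0}^*(q^*/2) + \alpha'\mathcal{I}_{P_{\rm out}}(q^*)$ touches $i_\infty$ at $\alpha$ from above, so its slope $\mathcal{I}_{P_{\rm out}}(q^*)$ lies in $\partial i_\infty(\alpha)$. When $\alpha \notin D_0$, this forces $\mathcal{I}_{P_{\rm out}}(q^*)$ to be the same for every optimizer $q^*$. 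Invoking the strict monotonicity of $q \mapsto \mathcal{I}_{P_{\rm out}}(q)$ on $[0,\rho]$ when $P_{\rm out}$ is informative (which follows from the properties of $\Psi_{P_{\rm out}}$ established in the appendix on scalar channels, in particular $\Psi'_{P_{\rm out}}>0$), we deduce that the optimizer is unique, so $D^* \supset \R_+^* \setminus D_0$, establishing the first assertion.

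Finally, for continuity of $\alpha \mapsto q^*(\alpha)$ on $D^*$, I would run a standard compactness-plus-uniqueness argument. Given $\alpha_n \to \alpha$ with $\alpha, \alpha_n \in D^*$, extract any convergent subsequence $q^*(\alpha_{n_k}) \to q^\infty \in [0,\rho]$. Passing to the limit in the optimality inequality
\begin{equation*}
	\psi_{P_0}^*(q^*(\alpha_{n_k})/2) + \alpha_{n_k}\mathcal{I}_{P_{\rm out}}(q^*(\alpha_{n_k})) \leq \psi_{P_0}^*(q/2) + \alpha_{n_k}\mathcal{I}_{P_{\rm out}}(q) \quad \forall q \in [0,\rho]
\end{equation*}
and using continuity of the two scalar free-entropy/mutual-information functionals, $q^\infty$ is seen to be an optimizer for $\alpha$. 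Uniqueness of $q^*(\alpha)$ forces $q^\infty = q^*(\alpha)$; since every subsequence has a further subsequence with this limit, the whole sequence converges.

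The main obstacle I anticipate is the strict monotonicity of $\mathcal{I}_{P_{\rm out}}$ on $[0,\rho]$ for informative $P_{\rm out}$: the concavity-of-$i_\infty$ argument alone only pins down $\mathcal{I}_{P_{\rm out}}(q^*(\alpha))$ as a single value, so without strict monotonicity two distinct optimizers could share that value. Proving this strict monotonicity reduces (via the convexity of $\Psi_{P_{\rm out}}$) to showing that $\Psi'_{P_{\rm out}}$ cannot vanish on a nondegenerate subinterval when $P_{\rm out}(y|\cdot)$ is not a.s.\ constant, which I would handle by a Fisher-information identity for the scalar channel~\eqref{eq:Pout_scalar_channel} referenced from the appendix on scalar-channel properties.
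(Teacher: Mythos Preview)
Your proposal is correct and follows essentially the same route as the paper. Both arguments rewrite $i_\infty(\alpha)=\inf_{q\in[0,\rho]}\{\psi_{P_0}^*(q/2)+\alpha\,\mathcal{I}_{P_{\rm out}}(q)\}$, exploit the concavity in $\alpha$ to obtain differentiability outside a countable set, use the envelope/supergradient identification $i_\infty'(\alpha)=\mathcal{I}_{P_{\rm out}}(q^*(\alpha))$, and conclude uniqueness from the strict monotonicity of $\mathcal{I}_{P_{\rm out}}$ for informative $P_{\rm out}$ (Proposition~\ref{prop:psi_stricly_monoton} in the paper). The only cosmetic difference is that the paper invokes the Milgrom--Segal envelope theorem explicitly, and for continuity argues via $i_\infty'(\alpha)=\mathcal{I}_{P_{\rm out}}(q^*(\alpha))$ together with the continuity of the derivative of a concave function at its differentiability points, whereas you use a direct compactness-plus-uniqueness subsequence argument; both are standard and equivalent here.
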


As an application of Theorem~\ref{th:RS_1layer} we can compute the free entropy of the ``planted perceptron'' on the hypercube and the sphere.
This perceptron model has already been studied in physics \cite{gardner1989three} and more recently in statistics, where it is known as ``one-bit compressed sensing'' \cite{boufounos20081,xu2014bayesian}.
The limit of the free entropy follows from an application of Theorem~\ref{th:RS_1layer} with $\varphi(x) = {\rm sgn}(x)$ and
$P_0 = \frac{1}{2} \delta_{-1} + \frac{1}{2} \delta_1$ (for the hypercube) or $P_0 = \cN(0,1)$ (for the sphere).
For $\mu \in \{1, \dots, m \}$ we define
\begin{align}
	S_{\mu} \defeq \Big\{ \bx \in \R^n \, \Big| \, 
		{\rm sgn}(\bx \cdot \bbf{\Phi}_\mu ) = {\rm sgn}(\bX^* \cdot \bbf{\Phi}_\mu )
	\Big\} \,.
\end{align}
We will use the notation $\cN(x) = \mathbb{P}(Z \leq x)$ for $Z \sim \cN(0,1)$.
Let $\mathbb{S}_n$ be the unit sphere in $\R^n$ and $\mu_n$ the uniform probability measure on $\mathbb{S}_n$.
\begin{corollary}[Free entropy of the planted perceptron] \label{cor:capacity} 
	Let $Z,V \iid \mathcal{N}(0,1)$. We have
	\begin{align}
		&\frac{1}{n} \E \ln \Big( \# \bigcap_{\mu = 1}^m S_{\mu} \cap \{-1,1\}^n \Big)
		\nonumber
		\\
		&
		\xrightarrow[n \to \infty]{}
		\ln(2) + {\adjustlimits \sup_{q \in [0,1)} \inf_{r \geq 0}} \Big\{ \E \ln \cosh(\sqrt{r}Z + r) + 2 \alpha 
			\E \Big[ \cN\Big(\frac{\sqrt{q}\,V}{\sqrt{1-q}}\Big) \ln \cN\Big(\frac{\sqrt{q}\,V}{\sqrt{1-q}}\Big) \Big]
		- \frac{r (q+1)}{2}  \Big\}\,,
		\\
		&\frac{1}{n} \E \ln \mu_n \Big( \bigcap_{\mu = 1}^m S_{\mu} \cap \mathbb{S}_n \Big)\nn
		&\xrightarrow[n \to \infty]{}
		\sup_{q \in [0,1)} \Big\{ \frac{1}{2} \ln(1-q) + 2 \alpha 
			\E \Big[ \cN\Big(\frac{\sqrt{q}\,V}{\sqrt{1-q}}\Big) \ln \cN\Big(\frac{\sqrt{q}\,V}{\sqrt{1-q}}\Big) \Big]
		+ \frac{q}{2}  \Big\}\,.
	\end{align}
\end{corollary}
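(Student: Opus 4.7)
Both identities are direct applications of Theorem~\ref{th:RS_1layer} to the noiseless sign channel. The plan is to take $\Delta = 0$ and $\varphi(x) = \mathbf{1}(x > 0) \in \{0,1\} \subset \mathbb{N}$, an $\mathbb{N}$-valued representative of $\text{sgn}$ that defines exactly the same sets $S_\mu$ up to the null event $\{[\boldsymbol{\Phi}\bX^*]_\mu = 0\}$; this makes hypothesis~\ref{hyp:delta_0} available. Hypotheses~\ref{hyp:third_moment}--\ref{hyp:cont_pp} are immediate in both cases, since $|\varphi| \leq 1$, $\varphi$ is continuous outside $\{0\}$, and each candidate prior ($P_0 = \tfrac{1}{2}(\delta_{-1}+\delta_{+1})$ for the hypercube, $P_0 = \mathcal{N}(0,1)$ for the sphere) has finite third moments and at least two support points. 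Both priors yield $\rho = \E[(X^*)^2] = 1$.

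Next I would rewrite the target quantities in terms of the partition function $\mathcal{Z}(\bY,\boldsymbol{\Phi}) = \int dP_0(\bx) \prod_\mu \mathbf{1}(\bx \in S_\mu)$. For the Rademacher prior, $dP_0(\bx) = 2^{-n}$ on $\{-1,1\}^n$, so $\mathcal{Z}(\bY,\boldsymbol{\Phi}) = 2^{-n}\bigl|\bigcap_\mu S_\mu \cap \{-1,1\}^n\bigr|$ and hence $\tfrac{1}{n}\E\ln\bigl|\bigcap_\mu S_\mu \cap \{-1,1\}^n\bigr| = f_n + \ln 2$. For the Gaussian prior, $\mathcal{Z}(\bY,\boldsymbol{\Phi}) = \mathbb{P}_{\bX \sim \mathcal{N}(0,I_n)}(\bX \in \bigcap_\mu S_\mu)$. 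Since $\bigcap_\mu S_\mu$ is a positive cone (rescaling $\bx$ by a positive factor preserves every sign $\text{sgn}([\boldsymbol{\Phi}\bx]_\mu)$) and $\bX/\|\bX\|$ is uniform on $\mathbb{S}_n$ independently of $\|\bX\|$ when $\bX \sim \mathcal{N}(0,I_n)$, this probability equals $\mu_n(\bigcap_\mu S_\mu \cap \mathbb{S}_n)$, so $f_n = \tfrac{1}{n}\E\ln\mu_n(\bigcap_\mu S_\mu \cap \mathbb{S}_n)$.

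The scalar free entropies are then computed in closed form. The output-channel term is the same in both cases: a Gaussian-tail calculation gives $\int \mathcal{D}w\, P_{\rm out}(\tilde{Y}_0|\sqrt{q}V + \sqrt{1-q}\,w) = \mathcal{N}\bigl((2\tilde{Y}_0-1)\sqrt{q}\,V/\sqrt{1-q}\bigr)$; using the $(V,W^*) \to (-V,-W^*)$ symmetry of the law of $\tilde{Y}_0 = \mathbf{1}(\sqrt{q}\,V + \sqrt{1-q}\,W^* > 0)$ and then integrating out $W^*$ conditionally on $V$ yields $\Psi_{P_{\rm out}}(q;1) = 2\,\E\bigl[\mathcal{N}(\sqrt{q}\,V/\sqrt{1-q})\,\ln\mathcal{N}(\sqrt{q}\,V/\sqrt{1-q})\bigr]$. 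On the signal side, in the Rademacher case $\int dP_0(x)\,e^{\sqrt{r}\,Y_0 x - rx^2/2} = e^{-r/2}\cosh(\sqrt{r}\,Y_0)$, yielding $\psi_{P_0}(r) = \E\ln\cosh(\sqrt{r}\,Z + r) - r/2$ after using $Y_0 = \sqrt{r}\,X_0 + Z$ with $X_0 \in \{\pm 1\}$ and the evenness of $\cosh$; in the Gaussian case, completing the square in $x$ gives $\psi_{P_0}(r) = -\tfrac{1}{2}\ln(1+r) + \tfrac{r}{2}$.

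Substituting these expressions into $f_{\rm RS}(q,r) = \psi_{P_0}(r) + \alpha\Psi_{P_{\rm out}}(q;1) - rq/2$ and invoking Theorem~\ref{th:RS_1layer} directly produces the hypercube identity once the $\ln 2$ shift is added. For the sphere, the inner infimum in $r$ is explicit: $\partial_r f_{\rm RS}(q,r) = 0$ gives $r^*(q) = q/(1-q)$, and substituting back yields $\psi_{P_0}(r^*) - r^* q/2 = \tfrac{1}{2}\ln(1-q) + \tfrac{q}{2}$, which delivers the stated sphere formula; the restriction to $q \in [0,1)$ is automatic because the objective tends to $-\infty$ at $q = 1$. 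The only conceptual subtlety in this chain is the $\mathbb{N}$-valued recasting of $\text{sgn}$ needed to conform to hypothesis~\ref{hyp:delta_0}; everything else is bookkeeping of Gaussian integrals, with Theorem~\ref{th:RS_1layer} doing the heavy lifting.
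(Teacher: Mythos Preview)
Your proposal is correct and follows exactly the route the paper indicates: apply Theorem~\ref{th:RS_1layer} with the sign channel and the two priors, then compute $\psi_{P_0}$ and $\Psi_{P_{\rm out}}$ explicitly. The recasting of $\mathrm{sgn}$ as the $\{0,1\}$-valued indicator to fit hypothesis~\ref{hyp:delta_0}, the cone argument identifying the Gaussian partition function with the spherical measure, and the Gaussian-integral evaluations are all handled correctly; the only point you leave slightly implicit is that for the hypercube the replacement of $\sup_{q\in[0,1]}$ by $\sup_{q\in[0,1)}$ also holds (not because the objective diverges there, but because $q\mapsto \inf_r f_{\rm RS}(q,r)$ is continuous at $q=1$).
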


\subsection{Optimal reconstruction (or estimation) error}


We first consider the problem of estimating $\bX^*$ given $\bY$ and $\bbf{\Phi}$. The following theorem states that the optimizer $q^*(\alpha)$ of the replica-symmetric formula~\eqref{eq:rs_formula} gives the asymptotic correlation between the planted solution $\bX^*$ and a typical sample from the posterior distribution $P(\cdot\,|\,\bY,\bbf{\Phi})$:

\begin{thm}[Limit of the overlap]\label{th:overlap}
	Assume that all the moments of $P_0$ are finite and that $P_{\rm out}$ is informative.
	Assume that~\ref{hyp:third_moment}-\ref{hyp:um}-\ref{hyp:phi_general}-\ref{hyp:cont_pp} hold and that either~\ref{hyp:delta_pos} or~\ref{hyp:delta_0} holds.
	Then for all $\alpha \in D^*$,
	\begin{align}\label{eq:lim_overlap}
		\frac{1}{n} \big|\bx\cdot \bX^*\big|=\frac{1}{n} \Big| \sum_{i=1}^n x_i X_i^* \Big| \xrightarrow[n \to \infty]{} q^*(\alpha)\,, \qquad \text{in probability},
	\end{align}
	where $\bx = (x_1,\dots,x_n)$ is sampled from the posterior distribution of the signal $P(\cdot\,|\, \bY,\bbf{\Phi})$ given by~\eqref{joint_bayes}, independently of everything else.
\end{thm}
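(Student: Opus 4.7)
The plan is to reduce the overlap convergence to the convergence of a suitable mean-square error, which in turn we extract from the free entropy via the I-MMSE relation and an appropriate perturbation. Concretely, I would consider the perturbed observation model in which, in addition to $\bY$, the statistician observes a small Gaussian side channel
\begin{equation*}
	\bY'_\lambda = \sqrt{\lambda}\,\bX^* + \bZ', \qquad \bZ' \sim \cN(0,\mathbf{I}_n),
\end{equation*}
for $\lambda \geq 0$. This perturbation preserves all hypotheses of Theorem~\ref{th:RS_1layer}, and adding it amounts to replacing $\psi_{P_0}(r)$ by $\psi_{P_0}(r+\lambda)$ in the potential. Hence Theorem~\ref{th:RS_1layer} and Corollary~\ref{cor:mi} give, for every $\lambda\geq 0$,
\begin{equation*}
	i_\infty(\lambda) \defeq \lim_{n\to\infty}\frac{1}{n}I\bigl(\bX^*;\bY,\bY'_\lambda\,\big|\,\bbf{\Phi}\bigr) = \inf_{q\in[0,\rho]}\sup_{r\geq 0}\Bigl\{I_{P_0}(r+\lambda)+\alpha\,\mathcal{I}_{P_{\rm out}}(q)-\tfrac{r}{2}(\rho-q)\Bigr\}.
\end{equation*}

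\textbf{From mutual information to MMSE.} The finite-$n$ mutual information $\lambda\mapsto n^{-1}I(\bX^*;\bY,\bY'_\lambda|\bbf{\Phi})$ is concave in $\lambda$, and by the I-MMSE identity \cite{GuoShamaiVerdu_IMMSE} its derivative equals $\frac{1}{2n}\mathrm{MMSE}(\bX^*|\bY,\bY'_\lambda,\bbf{\Phi})$. Since a sequence of concave functions converging pointwise to a concave limit converges in derivative at every point of differentiability of the limit, I would next analyze the derivative of $i_\infty(\lambda)$. For each $\lambda$ the optimization defining $i_\infty(\lambda)$ admits optimizers $(q^*(\alpha,\lambda),r^*(\alpha,\lambda))$; standard envelope arguments (as in the proof of Proposition~\ref{prop:q_star}) yield
\begin{equation*}
	\frac{d}{d\lambda}i_\infty(\lambda) = \tfrac{1}{2}\bigl(\rho - q^*(\alpha,\lambda)\bigr)
\end{equation*}
at every $\lambda$ where $q^*(\alpha,\lambda)$ is unique. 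Combining the two derivative statements gives
\begin{equation*}
	\frac{1}{n}\mathrm{MMSE}(\bX^*|\bY,\bY'_\lambda,\bbf{\Phi}) \xrightarrow[n\to\infty]{} \rho - q^*(\alpha,\lambda).
\end{equation*}
Letting $\lambda\downarrow 0$ and using the continuity of $q^*(\alpha,\cdot)$ at $0$ (a consequence of the uniqueness at $\alpha\in D^*$, by the same argument as Proposition~\ref{prop:q_star}) one obtains $n^{-1}\mathrm{MMSE}(\bX^*|\bY,\bbf{\Phi})\to \rho-q^*(\alpha)$.

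\textbf{From MMSE to overlap via Nishimori.} The Nishimori identity asserts that $\E\langle \bx\cdot\bX^*\rangle = \E\|\langle\bx\rangle\|^2$, so the MMSE convergence above is equivalent to
\begin{equation*}
	\frac{1}{n}\,\E\bigl\langle \bx\cdot \bX^*\bigr\rangle \xrightarrow[n\to\infty]{} q^*(\alpha).
\end{equation*}
To upgrade this to convergence in probability of $\frac{1}{n}|\bx\cdot\bX^*|$, I would next control the fluctuations. For this the cleanest route is via the matrix overlap: running the I-MMSE/perturbation argument with a rank-one matrix side-observation $\bY''=\sqrt{\mu}\,\bX^*\bX^{*\intercal}/\sqrt{n}+\bZ''$ (or, equivalently, invoking the matrix-MMSE identity~\eqref{mMMSE_0} stated in Theorem~\ref{th:errors_0}) yields, at $\alpha\in D^*$,
\begin{equation*}
	\frac{1}{n^2}\,\E\bigl\|\bX^*\bX^{*\intercal}-\langle\bx\bx^{\intercal}\rangle\bigr\|_{\rm F}^2 \xrightarrow[n\to\infty]{} \rho^2 - q^*(\alpha)^2.
\end{equation*}
Expanding the Frobenius norm and applying the Nishimori identity twice gives $n^{-2}\E\langle(\bx\cdot\bX^*)^2\rangle\to q^*(\alpha)^2$, i.e.\ $n^{-1}|\bx\cdot\bX^*|$ converges to $q^*(\alpha)$ in $L^2$, and hence in probability.

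\textbf{Main obstacle.} The delicate point is the derivative-passage step: justifying that differentiating the $\lim$ of mutual informations in $\lambda$ commutes with the $\lim$ and yields the limiting MMSE requires either the concavity argument sketched above together with the almost-everywhere differentiability of $i_\infty$, or a direct concentration of the free entropy of the perturbed model around its mean (which is provided by the concentration results quoted in the SI). A second, smaller obstacle is the sign ambiguity in the overlap: bypassing it forces us to work with the squared/matrix overlap, which is why the full statement is phrased with an absolute value and why finiteness of all moments of $P_0$ is imposed (so that the rank-one matrix perturbation scales correctly). Once these two points are in place, the chain ``uniqueness of $q^*$ $\Rightarrow$ differentiability of $i_\infty$ $\Rightarrow$ convergence of MMSE $\Rightarrow$ Nishimori $\Rightarrow$ overlap concentration'' delivers~\eqref{eq:lim_overlap}.
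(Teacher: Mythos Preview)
Your approach has a genuine gap at the step where you let $\lambda\downarrow 0$ to conclude $n^{-1}\mathrm{MMSE}(\bX^*\mid\bY,\bbf{\Phi})\to\rho-q^*(\alpha)$. The concavity/derivative argument you invoke produces convergence of derivatives only at interior $\lambda>0$ where $i_\infty$ is differentiable; at the boundary $\lambda=0$ it yields merely the one--sided bound $\liminf_n n^{-1}\mathrm{MMSE}(\bX^*\mid\bY,\bbf{\Phi})\geq\rho-q^*(\alpha)$, i.e.\ $\limsup_n\E\langle Q_n\rangle\leq q^*(\alpha)$. The reverse inequality is in fact \emph{false} in general: whenever $P_{\rm out}(y\mid z)$ is even in $z$ and $P_0$ is centered (e.g.\ the sign-less channel $\varphi(x)=|x|$), the posterior is invariant under $\bx\mapsto-\bx$, hence $\E[\bX^*\mid\bY,\bbf{\Phi}]=0$ and $n^{-1}\mathrm{MMSE}(\bX^*\mid\bY,\bbf{\Phi})=\rho$ for every $n$, while $q^*(\alpha)>0$ for large $\alpha$. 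Thus your claimed $\E\langle Q_n\rangle\to q^*(\alpha)$ fails, and with it the variance-to-zero argument in your last paragraph. The rank-one matrix perturbation suffers from the same boundary issue: it only delivers $\limsup_n\E\langle Q_n^2\rangle\leq q^*(\alpha)^2$. Even granting equality, second-moment convergence $\E\langle Q_n^2\rangle\to q^*(\alpha)^2$ does \emph{not} imply $|Q_n|\to q^*(\alpha)$ in $L^2$ (you would still need $\E|Q_n|\to q^*(\alpha)$, which is exactly what is missing), nor does Markov with a single exponent give $\P(|Q_n|\geq q^*+\epsilon)\to 0$, since the ratio $(q^*)^2/(q^*+\epsilon)^2$ does not vanish. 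Finally, invoking \eqref{mMMSE_0} directly would be circular: in the paper the matrix-MMSE formula is derived as a corollary of the present theorem.

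The paper fills these gaps with two different mechanisms. For the upper bound $\P(|Q_n|\geq q^*(\alpha)+\epsilon)\to 0$, it perturbs by a tensor side channel $\bY'=\sqrt{\lambda/n^{2p-1}}\,(\bX^*)^{\otimes 2p}+\bZ'$ for \emph{every} integer $p\geq 1$, proves a replica formula for the perturbed model, and uses the one-sided I-MMSE argument at $\lambda=0^+$ to get $\limsup_n\E[Q_n^{2p}]\leq q^*(\alpha)^{2p}$; only then does Markov followed by $p\to\infty$ kill the residual ratio (this is where finiteness of all moments of $P_0$ enters). For the lower bound it does not use any side channel at all: it relies on the generalization-error Theorem~\ref{th:gen_f}. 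Extracting a subsequential limit $Q$ of $Q_n$ by tightness and a CLT, one shows $\E[\mathcal{E}_f(|Q|)]=\mathcal{E}_f(q^*(\alpha))$; since $\mathcal{E}_f$ is strictly decreasing on $[0,\rho]$ for a suitable bounded continuous $f$ whenever $P_{\rm out}$ is informative (Propositions~\ref{prop:gen_strict} and~\ref{prop:exists_gen_dec}), and since the upper bound already forces $|Q|\leq q^*(\alpha)$ a.s., this pins $|Q|=q^*(\alpha)$ a.s.\ and yields $|Q_n|\to q^*(\alpha)$ in probability.
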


Theorem~\ref{th:overlap} is proved in Sec.~\ref{sec:proof_overlap}. Notice that in all generality it is only possible to estimate $\bX^*$ up to its sign (think for instance to $\bY = | \bbf{\Phi}\bX^*|/\sqrt{n} + \sqrt{\Delta} \bZ$), this is why the absolute values in~\eqref{eq:lim_overlap} are needed. For this reason, the usual MSE on $\bX^*$
$$
{\rm mse}(\widehat{\bX}) \defeq \frac{1}{n} \E \Big[\big\| \bX^*  - \widehat{\bX}(\bY,\bbf{\Phi}) \|^2 \Big]
$$
is not (in all generality) an appropriate error metric. Indeed, in the case where $\bY = | \bbf{\Phi} \bX^* |/\sqrt{n} + \sqrt{\Delta} {\bZ}$, where $\bbf{\Phi},\bX^*,\bZ$ have all independent $\cN(0,1)$ entries, then $\E[\bX^*| \bY,\bbf{\Phi}] = 0$ and $\min_{\widehat{\bX}} {\rm mse}(\widehat{\bX}) = 1$. This means that the minimum mean-square error is always equal to the variance and thus, in this sense, it is never possible to estimate the signal better than trivial estimators.
For this reason, the appropriate error metric for the reconstruction problem is the MSE on $\bX^* \bX^{*\intercal}$.
From Theorem~\ref{th:overlap} one deduces the limit of the MMSE in estimating $\bX^* \bX^{* \intercal}$:

\begin{corollary}[Matrix minimum mean-square error]
	Under the same conditions as in Theorem \ref{th:overlap}, for all $\alpha \in D^*$ we have
	\begin{align}
		{\rm MMSE}_n \defeq 
		\frac{1}{n^2} \E \Big[\big\| 
			\bX^* \bX^{* \intercal} -  \E [\bX^* \bX^{*\intercal} | \bY,\bbf{\Phi} ]
			\big\|_{\rm F}^2
		\Big]
		\xrightarrow[n \to \infty]{} \rho^2 - q^*(\alpha)^2 \,,
	\end{align}
	where $\|-\|_{\rm F}$ denotes the Frobenius norm.
\end{corollary}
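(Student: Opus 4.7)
The plan is to reduce the matrix MMSE to the limit of the squared overlap, which is provided (up to sign) by Theorem~\ref{th:overlap}, and then promote the convergence in probability to convergence in expectation via a uniform integrability argument based on the Nishimori identity.

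First I would decompose the Frobenius MMSE using the Pythagorean identity for conditional expectations:
\begin{align*}
	\mathrm{MMSE}_n = \frac{1}{n^2}\E\|\bX^*\bX^{*\intercal}\|_{\rm F}^2 - \frac{1}{n^2}\E\|\E[\bX^*\bX^{*\intercal}|\bY,\bbf{\Phi}]\|_{\rm F}^2.
\end{align*}
The first term equals $n^{-2}\,\E[\|\bX^*\|^4] = n^{-2}\,\E[(\sum_i (X_i^*)^2)^2]$, which tends to $\rho^2$ by the law of large numbers (using that $P_0$ has finite fourth moment). For the second term, writing $\E[\bX^*\bX^{*\intercal}|\bY,\bbf{\Phi}]_{ij} = \langle x_i x_j\rangle$ and expanding the Frobenius norm gives
\begin{align*}
	\frac{1}{n^2}\E\|\E[\bX^*\bX^{*\intercal}|\bY,\bbf{\Phi}]\|_{\rm F}^2 = \frac{1}{n^2}\E\big\langle (\bx\cdot\bx')^2\big\rangle,
\end{align*}
where $\bx,\bx'$ are two conditionally independent samples from the posterior~\eqref{joint_bayes}. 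Applying the Nishimori identity (replace one replica by the planted signal $\bX^*$), this equals $n^{-2}\,\E\big\langle(\bx\cdot\bX^*)^2\big\rangle$.

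Next I would combine Theorem~\ref{th:overlap} with a uniform-integrability argument. Theorem~\ref{th:overlap} states $|\bx\cdot\bX^*|/n\to q^*(\alpha)$ in probability, hence $(\bx\cdot\bX^*/n)^2\to q^*(\alpha)^2$ in probability. To upgrade to convergence in $L^1$, it suffices to bound a higher moment uniformly in $n$. By Cauchy--Schwarz (on the inner product, then on the expectation) and the Nishimori identity $\E\langle\|\bx\|^8\rangle = \E\|\bX^*\|^8$,
\begin{align*}
	\E\Big\langle\Big(\frac{\bx\cdot\bX^*}{n}\Big)^4\Big\rangle \le \frac{1}{n^4}\E\big\langle \|\bx\|^4\|\bX^*\|^4\big\rangle \le \frac{1}{n^4}\big(\E\|\bX^*\|^8\big)^{1/2}\big(\E\langle\|\bx\|^8\rangle\big)^{1/2} = \frac{\E\|\bX^*\|^8}{n^4},
\end{align*}
which is bounded because all moments of $P_0$ are finite (expanding $(\sum_i (X_i^*)^2)^4$ yields $O(n^4)$). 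This gives uniform integrability of $\{(\bx\cdot\bX^*/n)^2\}_n$, so $n^{-2}\E\langle(\bx\cdot\bX^*)^2\rangle\to q^*(\alpha)^2$ for every $\alpha\in D^*$. Subtracting the two limits yields the claimed $\rho^2 - q^*(\alpha)^2$.

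There is no real obstacle here beyond routine bookkeeping: the only subtle point is that the overlap result in Theorem~\ref{th:overlap} is stated with an absolute value (the sign of $\bX^*$ may not be identifiable), but squaring removes this ambiguity, which is precisely why the matrix MMSE—rather than the vector MSE—is the correct error metric and why the corollary is stated in terms of $\bX^*\bX^{*\intercal}$. The uniform integrability step is the only place where the additional hypothesis ``all moments of $P_0$ are finite'' (inherited from Theorem~\ref{th:overlap}) is used, and a finite fourth moment on the overlap suffices.
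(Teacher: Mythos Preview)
Your proposal is correct and follows essentially the same route the paper implicitly uses: the identity $\mathrm{MMSE}_n = \rho^2 - \E\langle Q_n^2\rangle + o_n(1)$ (via the Pythagorean decomposition and the Nishimori identity) is exactly the $p=1$ case of the relation the paper records in the proof of Proposition~\ref{prop:upper_bound_overlap}, and the passage from convergence in probability of $|Q_n|$ to convergence of $\E\langle Q_n^2\rangle$ via a moment bound is the natural completion. The only minor remark is that uniform integrability is needed just for the $\limsup$ direction; the $\liminf$ direction follows directly from Theorem~\ref{th:overlap} and Fatou-type reasoning, which is why the paper does not spell out a separate UI argument.
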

\subsection{Optimal generalization (or prediction) error}\label{sec:optimal_gen}
In order to express the optimal generalization error we introduce the following function (recall that $\widetilde Y_0$ is drawn from the channel~\eqref{eq:Pout_scalar_channel}):
\begin{align}
	\mathcal{E}(q) &\defeq{\rm MMSE}(\widetilde Y_0|V)=
	\EE\big[\big(\widetilde Y_0-\EE[\widetilde Y_0|V]\big)^2\big]\label{Egen_1}\\
	&=\E_{V}\int dY \,Y^2 P_{\rm out}(Y|\sqrt{\rho}\,V) 
	- \E_{V}\Big[\E_W\Big[\int dY \,Y P_{\rm out}(Y|\sqrt{q}\,V + \sqrt{\rho - q} \,W)\Big]^2\Big] 
	\label{Egen_final}	
	\\
	&=
	\EE \big[
		\varphi(\sqrt{\rho}\,V, \bA)^2
	\big]
	-
	\EE_V\big[ \EE_{W,\bA} \big[
			\varphi(\sqrt{q}\, V + \sqrt{\rho - q}\, W, \bA )
		\big]^2
	\big]+\Delta
	\label{Egen_final_Pout}	
\end{align}
where $V,W \iid \cN(0,1)$, $\bA \sim P_A$ are independent random variables, $\E_{W,\bA}$ denotes the expectation w.r.t.\ $W$ and $\bA$ only and $\EE_W[-]^2=(\EE_W[-])^2$. We recall $\rho \defeq \mathbb{E}[(X^*)^2]$ with $X^*\sim P_0$. 
With a slight abuse of notation $\int dY$ denotes in~\eqref{Egen_final} either the integration w.r.t.\ Lebesgue's measure on $\R$ in the case $\Delta > 0$ or the integration w.r.t.\ the counting measure on $\N$ (in the case $\Delta = 0$).

Recall the teacher-student setting of Sec.~\ref{sec:teacherStudent}: The generalization error is related to the estimation of a new output $Y_{\rm new} \sim P_{\rm out}(\cdot\,|\,\boldsymbol{\Phi}_{\rm new} \cdot \bX^*/\sqrt{n} )$ where $\boldsymbol{\Phi}_{\rm new}$ is a new row of the matrix, and is defined by~\eqref{eq:error_gen_opt}.
\begin{thm}[Optimal generalization error]\label{th:gen}
	Assume that $P_{\rm out}$ is informative, that~\ref{hyp:third_moment}-\ref{hyp:um}-\ref{hyp:phi_general}-\ref{hyp:cont_pp} hold and that either~\ref{hyp:delta_pos} or~\ref{hyp:delta_0} hold.
	Then for all $\alpha \in D^*$ we have
	\begin{equation}\label{eq:gen0}
		\mathcal{E}_{\rm gen}^{\rm opt}(\alpha)
		\xrightarrow[n \to \infty]{}
		\mathcal{E}(q^*(\alpha)) 
	\end{equation}
	where 
	$q^*(\alpha)$ is the optimizer of the replica-symmetric formula~\eqref{eq:rs_formula}, see Proposition~\ref{prop:q_star}.
\end{thm}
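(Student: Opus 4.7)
The strategy is a direct moment-matching computation that reduces the generalization error to the MMSE in the scalar channel \eqref{eq:Pout_scalar_channel} at the optimum $q=q^*(\alpha)$. Abbreviate $Z^* \defeq \bbf{\Phi}_{\rm new}\cdot\bX^*/\sqrt n$ and $Z \defeq \bbf{\Phi}_{\rm new}\cdot\bx/\sqrt n$ for $\bx$ a sample from the posterior \eqref{joint_bayes} drawn independently of $\bbf{\Phi}_{\rm new}, \bA_{\rm new}, Z_{\rm new}$. The Bayes-optimal estimator is the posterior mean of $Y_{\rm new}$, so by orthogonality $\mathcal{E}_{\rm gen}^{\rm opt}(\alpha) = \E[Y_{\rm new}^2] - \E[Y_{\rm new}\hat Y_{\rm new}]$. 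The centered additive noise $\sqrt{\Delta}\,Z_{\rm new}$ contributes $\Delta$ to the first term and vanishes from the cross term; introducing the posterior replica $\bx$ in $\hat Y_{\rm new} = \E_{\bA}\langle \varphi(Z,\bA)\rangle$ and applying the Nishimori identity to swap $\bx$ with $\bX^*$ yields
\begin{equation*}
\E[Y_{\rm new}^2] = \E\big[\varphi(Z^*, \bA_{\rm new})^2\big] + \Delta, \qquad \E[Y_{\rm new}\hat Y_{\rm new}] = \E\big[\varphi(Z^*, \bA)\varphi(Z, \bA')\big],
\end{equation*}
with $\bA, \bA' \iid P_A$ independent of everything else.

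Next I would identify the limit of the two-point function. Conditionally on $(\bX^*, \bx)$, the pair $(Z^*, Z)$ is a linear statistic in the independent, zero-mean, unit-variance entries of $\bbf{\Phi}_{\rm new}$, so the bivariate Lindeberg CLT gives joint asymptotic normality with covariance governed by $(\|\bX^*\|^2/n, \|\bx\|^2/n, \bx\cdot\bX^*/n)$. The strong law gives $\|\bX^*\|^2/n \to \rho$; the Nishimori identity together with the posterior concentration already established in the SI forces $\|\bx\|^2/n \to \rho$ in probability; Theorem~\ref{th:overlap} gives $|\bx\cdot\bX^*/n| \to q^*(\alpha)$ for $\alpha \in D^*$. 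Representing the limit pair as $Z^* = \sqrt{q^*}\,V + \sqrt{\rho - q^*}\,U$ and $Z = \pm\sqrt{q^*}\,V + \sqrt{\rho - q^*}\,W$ with $V, U, W \iid \mathcal{N}(0,1)$, the conditional independence of $U$ and $W$ given $V$ decouples the integrand and yields
\begin{equation*}
\E\big[\varphi(Z^*, \bA)\varphi(Z, \bA')\big] \longrightarrow \E_V\Big[\E_{W,\bA'}\,\varphi\bigl(\sqrt{q^*}\,V + \sqrt{\rho - q^*}\,W,\bA'\bigr)\Big]^2,
\end{equation*}
the sign ambiguity being immaterial because $V$ has a symmetric law. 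Similarly $\E[\varphi(Z^*, \bA_{\rm new})^2] \to \E_V\E_{\bA}[\varphi(\sqrt{\rho}\,V, \bA)^2]$. Subtracting the two limits reproduces \eqref{Egen_final_Pout}, i.e.\ $\mathcal{E}(q^*(\alpha))$.

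The main obstacle is to promote these distributional convergences to convergence of first and second moments, which is where the hypotheses bite. Uniform integrability of $\varphi(Z^*, \bA)^2$ and (via Cauchy--Schwarz) of the cross term $\varphi(Z^*, \bA)\varphi(Z, \bA')$ is supplied by the $(2+\gamma)$-moment bound~\ref{hyp:um}, while the almost-everywhere continuity~\ref{hyp:cont_pp} lets us pass $\varphi$ through the CLT by the continuous mapping theorem. A more delicate quantitative point is the concentration of the self-overlap $\|\bx\|^2/n$ at $\rho$: by Nishimori this quantity has the same law as $\|\bX^*\|^2/n$ after averaging over the posterior, and its concentration is controlled by the same free-entropy concentration bounds that underpin Theorem~\ref{th:RS_1layer} and Theorem~\ref{th:overlap}. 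Uniqueness of $q^*(\alpha)$ on the full-measure set $D^*$, guaranteed by Proposition~\ref{prop:q_star}, ensures the limit is unambiguous.
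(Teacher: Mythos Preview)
Your computation via the bivariate CLT for $(Z^*,Z)$ is essentially correct and appears almost verbatim in the paper as Lemma~\ref{lem:CLT_Q} and Proposition~\ref{prop:lim_gen_f}. The problem is not the analysis but the logical order: you invoke Theorem~\ref{th:overlap} (convergence of the overlap $|Q_n|\to q^*(\alpha)$) as an input, whereas in the paper this overlap convergence is itself \emph{derived from} the generalization-error result. Concretely, the lower bound $|Q_n|\ge q^*(\alpha)-\epsilon$ in Section~\ref{sec:lower_bound_Q} is obtained by comparing two expressions for $\lim_n \mathcal{E}_{f,n}(\alpha)$: one from Theorem~\ref{th:gen_f} (which gives $\mathcal{E}_f(q^*(\alpha))$) and one from your CLT computation (which gives $\E[H_f(Q)]$). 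Equating them yields $\E[\mathcal{E}_f(|Q|)]\le \mathcal{E}_f(q^*(\alpha))$, and combined with the upper bound from Proposition~\ref{prop:upper_bound_overlap} and strict monotonicity of $\mathcal{E}_f$, this forces $|Q|=q^*(\alpha)$. So your argument is circular as stated.

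The paper breaks this circularity by proving Theorem~\ref{th:gen_f} (hence Theorem~\ref{th:gen}) \emph{without} knowing the overlap limit, via an I-MMSE trick: one augments the model with side information $U_\mu=\sqrt{\lambda}\,f(\widetilde Y_\mu)+Z'_\mu$ on $\epsilon n$ test labels, computes the mutual information of the augmented model by a mild extension of the interpolation (Lemma~5.1), and then differentiates in $\lambda$ to extract the MMSE on the test labels. Letting $\lambda,\epsilon\to 0$ recovers $\mathcal{E}_f(q^*(\alpha))$. This route uses only the replica formula and its concavity in $\lambda$, not the overlap. A secondary issue with your plan is that Theorem~\ref{th:overlap} assumes all moments of $P_0$ are finite, which is strictly stronger than the third-moment hypothesis~\ref{hyp:third_moment} under which Theorem~\ref{th:gen} is stated; so even absent the circularity, your argument would only establish a weaker statement.
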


Theorem \ref{th:gen} follows from a more general result, that we state now.
Let $f : \R \to \R$ and consider the generalized optimal generalization error
\begin{equation}\label{eq:def_e_gen_f}
	\mathcal{E}_{f,n}(\alpha) \defeq {\rm MMSE}(f(Y_{\rm new})|\bbf{\Phi}_{\rm new}, \bY, \bbf{\Phi}) =\E \big[\big(f(Y_{\rm new}) - \E[f(Y_{\rm new})|\bbf{\Phi}_{\rm new}, \bY, \bbf{\Phi}]\big)^2 \big]
\end{equation}
which is the minimum mean-square error on $f(Y_{\rm new})$. In particular $\mathcal{E}_{\rm gen}^{\rm opt}(\alpha) = \mathcal{E}_{f,n}(\alpha)$ for $f:x \mapsto x$.
We define also
\begin{align}\label{eq:def_e_f}
	\mathcal{E}_f(q) &\defeq{\rm MMSE}(f(\widetilde Y_0)|V)= \EE\big[\big(f(\widetilde Y_0)-\EE[f(\widetilde Y_0)|V]\big)^2\big] \\
	&=\EE \big[
		f\big(\varphi(\sqrt{\rho}\,V, \bA) + \sqrt{\Delta} Z\big)^2 
	\big]
	-
	\EE_V\big[ \EE_{W,Z,\bA} \big[
			f\big(\varphi(\sqrt{q}\, V + \sqrt{\rho - q}\, W, \bA ) + \sqrt{\Delta} Z\big)
		\big]^2
	\big]\,,\label{eq:def_e_f_varphi}
\end{align}
where $\widetilde{Y}_0$ is the output of the second scalar channel \eqref{eq:Pout_scalar_channel}.
\begin{thm}[Generalized optimal generalization error]\label{th:gen_f}
	Let $f: \R \to \R$ be a measurable function such that $\E[|f(Y_1)|^{2+\gamma}]$ remains bounded as $n$ grows, for some $\gamma>0$.
	Assume that $P_{\rm out}$ is informative and that~\ref{hyp:third_moment}-\ref{hyp:um}-\ref{hyp:phi_general}-\ref{hyp:cont_pp} hold and that either~\ref{hyp:delta_pos} or~\ref{hyp:delta_0} holds.
	Then for all $\alpha \in D^*$ we have
	\begin{equation}\label{eq:gen0}
		\mathcal{E}_{f,n}(\alpha)
		\xrightarrow[n \to \infty]{}
		\mathcal{E}_f(q^*(\alpha)) 
	\end{equation}
	where $q^*(\alpha)$ is the optimizer of the replica-symmetric formula~\eqref{eq:rs_formula}, see Proposition~\ref{prop:q_star}.
\end{thm}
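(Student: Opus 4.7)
The plan is to reduce Theorem~\ref{th:gen_f} to Theorem~\ref{th:overlap} (the overlap limit) combined with a central limit theorem for the fresh pattern $\bbf{\Phi}_{\rm new}$. I would first decompose the MMSE as
\begin{equation*}
\mathcal{E}_{f,n}(\alpha) = \E[f(Y_{\rm new})^2] - \E\bigl[\E[f(Y_{\rm new})\,|\,\bbf{\Phi}_{\rm new},\bY,\bbf{\Phi}]^2\bigr],
\end{equation*}
and then unfold the squared conditional expectation via the tower property, exploiting Bayes-optimality (Nishimori), to rewrite the second term as $\E[f(Y_{\rm new})\,f(Y'_{\rm new})]$. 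Here $Y_{\rm new}=\varphi(\bbf{\Phi}_{\rm new}\!\cdot\!\bX^*/\sqrt{n},\bA)+\sqrt{\Delta}Z$ uses the true signal, whereas $Y'_{\rm new}=\varphi(\bbf{\Phi}_{\rm new}\!\cdot\!\bx/\sqrt{n},\bA')+\sqrt{\Delta}Z'$ is built from an independent posterior sample $\bx\sim P(\cdot\,|\,\bY,\bbf{\Phi})$ and fresh noise $(\bA',Z')$.

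For the first term, since $\bbf{\Phi}_{\rm new}$ is independent of $\bX^*$, the Lindeberg CLT (using \ref{hyp:third_moment} and \ref{hyp:phi_general}) gives $T_n \defeq \bbf{\Phi}_{\rm new}\!\cdot\!\bX^*/\sqrt{n} \Rightarrow \sqrt{\rho}\,V$ with $V\sim\cN(0,1)$. Combining with \ref{hyp:cont_pp} via the continuous mapping theorem, $Y_{\rm new}\Rightarrow \widetilde Y_0=\varphi(\sqrt{\rho}\,V,\bA)+\sqrt{\Delta}Z$; the hypothesis $\sup_n\E[|f(Y_1)|^{2+\gamma}]<\infty$ provides uniform integrability of $f(Y_{\rm new})^2$, hence $\E[f(Y_{\rm new})^2]\to\E[f(\widetilde Y_0)^2]$.

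For the second term, I would condition on $(\bX^*,\bx,\bY,\bbf{\Phi})$ and invoke the bivariate CLT for $(T_n,t_n)\defeq\bbf{\Phi}_{\rm new}\!\cdot\!(\bX^*,\bx)/\sqrt{n}$, whose asymptotic covariance is controlled by $\|\bX^*\|^2/n$, $\|\bx\|^2/n$ and $\bx\!\cdot\!\bX^*/n$. The two diagonal entries tend to $\rho$ by the law of large numbers and, for $\|\bx\|^2/n$, by Bayes-optimality (Nishimori yields $\E\|\bx\|^2/n=\E\|\bX^*\|^2/n\to\rho$, with concentration following from bounded second moments). For the off-diagonal entry, Theorem~\ref{th:overlap} gives $|\bx\!\cdot\!\bX^*|/n\to q^*(\alpha)$, and the Nishimori identity $\E[\bx\!\cdot\!\bX^*/n]=\E\|\E[\bx\,|\,\bY,\bbf{\Phi}]\|^2/n\geq 0$ pins down the sign, yielding $\bx\!\cdot\!\bX^*/n\to q^*(\alpha)$ in probability. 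The limiting joint law of $(T_n,t_n)$ is then $(\sqrt{q^*}V+\sqrt{\rho-q^*}W,\,\sqrt{q^*}V+\sqrt{\rho-q^*}W')$ with $V,W,W'\iid\cN(0,1)$, and a further uniform-integrability argument yields
\begin{equation*}
\E[f(Y_{\rm new})\,f(Y'_{\rm new})] \to \E_V\Bigl[\E_{W,\bA,Z}\bigl[f\bigl(\varphi(\sqrt{q^*}\,V+\sqrt{\rho-q^*}\,W,\bA)+\sqrt{\Delta}Z\bigr)\bigr]^2\Bigr],
\end{equation*}
which recombined with the first term produces exactly $\mathcal{E}_f(q^*(\alpha))$.

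The main obstacle will be this last uniform-integrability step: $f$ is only assumed measurable with a $(2+\gamma)$-moment and $\varphi$ is unbounded, so dominated convergence cannot be invoked directly. The strategy is to control $\E[|f(Y_{\rm new})\,f(Y'_{\rm new})|^{1+\gamma/2}]$ uniformly in $n$ by H\"older and Jensen, reducing the bound to $\sup_n \E[|f(Y_1)|^{2+\gamma}]<\infty$. A secondary difficulty is the signed-overlap upgrade sketched above, which crucially relies on the Nishimori identities of the Bayes-optimal setting and does not follow from Theorem~\ref{th:overlap} alone.
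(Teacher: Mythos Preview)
Your approach has a fatal circularity and a genuine gap in the sign step.

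\textbf{Circularity.} You reduce Theorem~\ref{th:gen_f} to Theorem~\ref{th:overlap}. But in the paper's logical order, the \emph{lower bound} on the overlap (Section~\ref{sec:lower_bound_Q}) is proved \emph{using} Theorem~\ref{th:gen_f}: the identity $H_f(q^*(\alpha))=\E[H_f(Q)]$ that forces $|Q|=q^*$ comes precisely from the already-established limit $\mathcal{E}_{f,n}\to\mathcal{E}_f(q^*)$. Only the upper bound $|Q|\le q^*$ (Proposition~\ref{prop:upper_bound_overlap}) is independent. So your reduction is circular as written.

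\textbf{The sign upgrade is incorrect.} Even granting Theorem~\ref{th:overlap}, it only gives $|Q_n|\to q^*$. You claim that Nishimori, via $\E[Q_n]=\E\|\langle\bx\rangle\|^2/n\ge 0$, ``pins down the sign''. It does not: non-negativity of the mean says nothing about concentration on $+q^*$. For the sign-less channel $\varphi(z)=|z|$ the posterior is invariant under $\bx\mapsto -\bx$, so a posterior sample satisfies $Q_n\approx\pm q^*$ with equal probability; here $|Q_n|\to q^*$, $\E[Q_n]=0\ge 0$, but $Q_n$ does \emph{not} converge. Your bivariate CLT then yields $\mathcal{E}_{f,n}\to\E[H_f(Q)]$ with $Q$ random, and combining with $|Q|\le q^*$ and Lemma~\ref{lem:absQ} only gives the one-sided bound $\liminf_n\mathcal{E}_{f,n}\ge\mathcal{E}_f(q^*)$; the matching upper bound is missing.

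\textbf{What the paper does instead.} The paper proves Theorem~\ref{th:gen_f} \emph{first}, independently of the overlap, via an auxiliary ``side-information'' model: one adds a weak Gaussian channel $U_\mu=\sqrt{\lambda}\,f(\widetilde Y_\mu)+Z'_\mu$ on $\epsilon n$ test labels, extends the replica formula to this model (Lemma~\ref{lemma5.1}), and differentiates in $\lambda$ using the I-MMSE relation (Proposition~\ref{prop:immse}) to extract $\mathcal{E}_{f,n}^{\rm side}(\lambda,\epsilon)$. Taking $\epsilon,\lambda\to 0$ in the right order (Lemmas~\ref{lem:D_full_measure}--\ref{lem:upper_gen}) yields both bounds for continuous bounded $f$; the general $f$ is then handled by $L^2$-approximation. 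Your bivariate-CLT computation is in fact exactly Proposition~\ref{prop:lim_gen_f} of the paper, but there it is used in the \emph{opposite} direction---to deduce Theorem~\ref{th:overlap} from Theorem~\ref{th:gen_f}, resolving the sign ambiguity through the independently proved value of $\mathcal{E}_{f,n}$.
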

Theorem~\ref{th:gen_f} is proved in Sec.~\ref{appendix:i_mmse_gen}.

\subsection{Optimality of the generalized approximate message-passing
algorithm}
\label{GAMP}
\subsubsection{The generalized approximate message-passing
algorithm}
%
While the main results presented until now are information-theoretic, our next one
concerns the performance of a popular algorithm to solve random instances of generalized linear problems, called generalized approximate message-passing (GAMP). We shall not re-derive its properties here,
and instead refer to the original papers for details. This approach has a long history, especially in statistical physics \cite{thouless1977solution,mezard1989space,Kaba,Baldassi26062007}, error correcting codes \cite{richardson2008modern}, and graphical
models \cite{wainwright2008graphical}. For a modern derivation in the
context of linear models, see
\cite{donoho2009message,krzakala2012statistical,vila2013expectation}. The case of generalized linear models was discussed by Rangan in
\cite{GAMP}, and has been used for classification purpose in
\cite{ziniel2014binary}.

We first need to define two so-called threshold functions that are associated to the two scalar channels~\eqref{eq:additive_scalar_channel} and~\eqref{eq:Pout_scalar_channel}. 
The first one is the posterior mean of the signal in channel~\eqref{eq:additive_scalar_channel} with signal-to-noise ratio $r$:
\begin{align}
	g_{P_0}(y,r) \defeq \E[X_0|Y_0 = y] \,.	
\end{align}
The second one is the posterior mean of $W^*$ in channel~\eqref{eq:Pout_scalar_channel} with ``noise level'' $\eta = \rho - q$:
\begin{align}
	g_{P_{\rm out}}(\widetilde{y},v,\eta) \defeq \E[W^*|\widetilde{Y}_0 = \widetilde{y},\sqrt{q}\,V = v] \,. \label{gpout}
\end{align}
These functions act componentwise when applied to vectors.

Given initial estimates $(\widehat{\bx}^0,\mathbf{v}^0)$ for the means
and variances of the elements of the signal vector $\bX^*$, GAMP takes as input the observation vector $\bY$ and then iterates the following equations with initialization $g_{\mu}^{0} =0$ for all $\mu=1,\ldots,m$ (we denote by $\overline \bu$ the average over all the components of the vector $\bu$ and $\boldsymbol{\Phi}^\intercal$ is the transpose of the matrix $\boldsymbol{\Phi}$): From $t=1$ until convergence,
\begin{align}
	\left\{
		\begin{array}{lllr}
			V^{t}  &=& \overline {\mathbf{v}^{t-1}} \\
			\boldsymbol{\omega}^{t}  &=& \boldsymbol{\Phi} \widehat{\bx}^{t-1}/\sqrt n - V^{t} \mathbf{g}^{t-1} &\label{omega} \\
			g_{\mu}^{t} &=& g_{P_{\rm out}}(Y_{\mu},\omega^{t}_{\mu},V^{t}) & \forall\ \mu=1,\ldots m\\
			\lambda^t &=&  \alpha \,\overline {  g^2_{P_{\rm out}}(\bY,\boldsymbol{\omega}^{t},V^{t})} &\\
			{\mathbf R}^t &=&  \widehat{\bx}^{t-1} +
			(\lambda^t)^{-1} \boldsymbol{\Phi}^\intercal   \bg^{t}/\sqrt{n} &\\
			\widehat{x}_i^{t} &=&  g_{P_0}(R^{t}_i,\lambda^t) & \forall \ i= 1,\ldots n \\
			\mathrm{v}_i^{t} &=&  (\lambda^t)^{-1} \,\partial_R g_{P_0}(R,\lambda^t)|_{R=R^{t}_i} & \forall\ i= 1,\ldots n
		\end{array}
	\right.
\end{align}
One of the strongest asset of GAMP is that its performance can be tracked rigorously in the limit $n,m\to\infty$ while $m/n\to\alpha$ via a procedure known as state evolution (SE), see
\cite{bayati2011dynamics,bayati2015universality} for the linear case,
and \cite{GAMP,javanmard2013state} for the generalized one. In our
notations, state evolution tracks the asymptotic value of the overlap between
the true hidden value $\bX^*$ and its estimate by GAMP $\widehat{\bx}^t$ defined as
$q^t\defeq\lim_{n\to\infty}\bX^*\cdot \widehat{\bx}^t/n$ (that is related to the asymptotic mean-square error (MSE) $E^t$ between $\bX^*$ and its estimate $\widehat{\bx}^t$ by $E^t=\rho-q^t$, where recall that $\rho\defeq\EE[(X^*)^2]$ with $X^*\sim P_0$) via:
\begin{align} \label{GAMP_SE}
	\left\{
		\begin{array}{lll}
			q^{t+1} &=&  2 \psi'_{P_0}( r^t) \,, \\
			r^t &=& 2 \alpha  \Psi'_{P_{\rm out}}(q^t;\rho) \,.
		\end{array}
	\right.
\end{align}
From Theorem~\ref{th:RS_1layer} we realize that the fixed points of these equations
correspond to the critical points of the asymptotic free entropy
in~\eqref{eq:rs_formula}. In fact, in the replica heuristic, the optimizer $q^*$ of the potential is conjectured to give the optimal value of the overlap, a fact that was proven for the linear channel \cite{BarbierDMK16,BarbierMDK17,reeves2016replica}.
We will see in Sec.~\ref{estimation} that $q^t \xrightarrow[t \to \infty]{} q^*$ for a large set of parameters.
\subsubsection{Estimation and generalization error of GAMP}
Perhaps more surprisingly, one can use GAMP in the teacher-student
scenario described in Sec.~\ref{sec:teacherStudent} in order to provide an estimation of a new output $Y_{\rm new} \sim P_{\rm out}(\cdot\,|\,\boldsymbol{\Phi}_{\rm new} \cdot \bX^*/\sqrt{n} )$ where $\boldsymbol{\Phi}_{\rm new}$ is a new row of the matrix. 
As ${\bf \widehat{x}}^t$ is the GAMP estimate of the posterior expectation of $\bX^*$, with estimated variance $\mathbf{v}^t$, the natural heuristic is to consider for the posterior probability distribution of the random variable $\boldsymbol{\Phi}_{\rm new} \cdot \bX^*/\sqrt{n}$ a Gaussian with mean $\boldsymbol{\Phi}_{\rm new} \cdot {\bf \widehat{x}}^{t-1}/\sqrt{n}$ and variance $V^t=E^t=\rho-q^t$ (the fact that the variance and MSE are equal follows from the Nishimori identity of Proposition~\ref{prop:nishimori} but applied to GAMP instead of the Gibbs measure, see e.g. \cite{REVIEWFLOANDLENKA} where this is shown). This allows to estimate the posterior mean of the output, which leads to the GAMP prediction (recall the $P_{\rm out}$ definition~\eqref{transition-kernel}-\eqref{transition-kernel-discrete}):
\begin{align}
	\widehat Y^{\text{\scriptsize\rm GAMP},t}	\defeq \int  y\, P_{\rm out}\Big(y \,\Big| \, \frac{1}{\sqrt{n}}\boldsymbol{\Phi}_{\rm new} \cdot {\bf \widehat{x}}^{t-1} + \sqrt{\rho-q^t}\, w\Big) {\cal D}wdy\,, \label{Cgamp}
\end{align}
%
%
%
%
%
where ${\cal D}w$ denotes the standard Gaussian measure. The following claim, from \cite{GAMP}, gives the precise estimation error of GAMP.
It is stated there as a claim because some steps of the proof are missing. The paper \cite{javanmard2013state} affirms in its abstract to prove the claim of \cite{GAMP}, but without further details. For these reasons, we believe that the claim holds, however we prefer to state it here as a claim (instead of a theorem).

\begin{claim}[GAMP estimation error,~\cite{GAMP}]\label{claim:gamp}
	We have almost surely for all $t \in \N$, 
	\begin{align}\label{eq:lim_gamp}
		\lim_{n \to \infty} \frac{1}{n} {\widehat{\bx}}^t \cdot \bX^* =
		\lim_{n \to \infty} \frac{1}{n} \|{\widehat{\bx}}^t\|^2 = q^t
		\,,
	\end{align}
as well as
\begin{equation}\label{eq:mse_gamp}
	\lim_{n \to \infty} \frac{1}{n^2} \E \Big[\big\| \bX^* \bX^{*\intercal} - {\widehat{\bx}}^t ({\widehat{\bx}}^t)^{\intercal} \big\|^2 \Big]
	= \rho^2 - (q^t)^2 \,.
\end{equation}
\end{claim}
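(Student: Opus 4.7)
The plan is to reduce \eqref{eq:mse_gamp} to \eqref{eq:lim_gamp} by direct algebra, and to derive \eqref{eq:lim_gamp} from the state evolution (SE) theorem for GAMP combined with the Nishimori (tower) property of the posterior-mean denoiser $g_{P_0}$ in the Bayes-optimal setting.

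First, expanding the Frobenius norm yields
\begin{equation*}
\tfrac{1}{n^2}\big\|\bX^*\bX^{*\intercal} - \widehat{\bx}^t(\widehat{\bx}^t)^{\intercal}\big\|_{\rm F}^2
= \big(\tfrac{\|\bX^*\|^2}{n}\big)^{\!2} - 2\big(\tfrac{\widehat{\bx}^t\cdot\bX^*}{n}\big)^{\!2} + \big(\tfrac{\|\widehat{\bx}^t\|^2}{n}\big)^{\!2}.
\end{equation*}
Since $\|\bX^*\|^2/n \to \rho$ almost surely by the strong law of large numbers, \eqref{eq:lim_gamp} immediately yields the a.s.\ limit $\rho^2 - (q^t)^2$ on the right-hand side, and passage to the expectation for \eqref{eq:mse_gamp} follows from uniform integrability, which is provided by the $L^{2+\gamma}$ moment bounds propagated along the iteration from~\ref{hyp:um} and the moment assumptions on $P_0$.

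For \eqref{eq:lim_gamp}, I would invoke the state evolution theorem for GAMP established in~\cite{javanmard2013state} (generalizing the AMP result of~\cite{bayati2011dynamics}). It asserts that, almost surely and for every fixed $t$ and every pseudo-Lipschitz test function $\psi$ of finite order,
\begin{equation*}
\tfrac{1}{n}\sum_{i=1}^n \psi(X_i^*, R_i^t) \xrightarrow[n\to\infty]{} \E\big[\psi(X_0, R_0^t)\big],
\end{equation*}
where, in the Bayes-optimal scaling, the effective scalar channel is \eqref{eq:additive_scalar_channel} with snr given by the SE sequence $r^{t-1}$ of \eqref{GAMP_SE}: that is, $R_0^t \stackrel{d}{=} X_0 + Z/\sqrt{r^{t-1}}$ with $X_0 \sim P_0$, $Z \sim \cN(0,1)$ independent, while the data-dependent scalar $\lambda^t$ concentrates on $r^{t-1}$ (which itself follows from applying SE to the empirical average $\overline{g_{P_{\rm out}}^2}$). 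Taking $\psi(x,R) = x\,g_{P_0}(R, r^{t-1})$ and $\psi(x,R) = g_{P_0}(R, r^{t-1})^2$ then gives the a.s.\ limits of $\widehat{\bx}^t\cdot\bX^*/n$ and $\|\widehat{\bx}^t\|^2/n$ as $\E[X_0\,g_{P_0}(R_0^t, r^{t-1})]$ and $\E[g_{P_0}(R_0^t, r^{t-1})^2]$ respectively. The Nishimori observation is that $g_{P_0}(R_0^t, r^{t-1}) = \E[X_0 \,|\, R_0^t]$ is the posterior mean, so by the tower property both expectations coincide, and by the I-MMSE identity~\cite{GuoShamaiVerdu_IMMSE} applied to \eqref{eq:additive_scalar_channel} both equal $2\psi'_{P_0}(r^{t-1}) = q^t$ via the SE recursion~\eqref{GAMP_SE}.

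The main obstacle, and the reason the statement is presented as a claim, is technical: the rigorous SE proofs in~\cite{bayati2011dynamics,javanmard2013state} require the denoiser applied at each iteration to be pseudo-Lipschitz, whereas $g_{P_0}(\cdot, r)$ need not be globally Lipschitz for general priors (e.g.\ heavy-tailed or unbounded $P_0$). Bridging this gap requires either strengthening the hypotheses on $P_0$ and $P_{\rm out}$ along the lines of~\cite{javanmard2013state}, or a careful truncation argument leveraging the $L^{2+\gamma}$ tightness from~\ref{hyp:um} to approximate $g_{P_0}$ uniformly by Lipschitz functions whose truncation error vanishes. A secondary step is justifying the replacement of the data-dependent second argument $\lambda^t$ by its SE limit $r^{t-1}$ inside the empirical averages, which follows from joint continuity of $g_{P_0}$ and the concentration $\lambda^t \to r^{t-1}$.
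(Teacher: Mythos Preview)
The paper does not prove this statement. It is deliberately labeled a \emph{claim} and attributed to~\cite{GAMP}; immediately before it the authors write that ``some steps of the proof are missing'' in~\cite{GAMP}, that~\cite{javanmard2013state} ``affirms in its abstract to prove the claim \ldots\ but without further details,'' and that for these reasons they ``prefer to state it here as a claim (instead of a theorem).'' There is therefore no proof in the paper to compare your proposal against.

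That said, your sketch is a faithful account of how one expects such a result to be established, and you correctly isolate the two genuine technical gaps that prevent the existing SE literature from applying verbatim: (i) the pseudo-Lipschitz hypothesis on the denoiser required by~\cite{bayati2011dynamics,javanmard2013state}, which $g_{P_0}(\cdot,r)$ need not satisfy for general $P_0$, and (ii) the replacement of the data-driven $\lambda^t$ by its deterministic SE limit inside the empirical averages. Your reduction of \eqref{eq:mse_gamp} to \eqref{eq:lim_gamp} via the Frobenius expansion is correct, and the Nishimori/tower argument showing that $\E[X_0\,g_{P_0}(R_0^t,\cdot)]=\E[g_{P_0}(R_0^t,\cdot)^2]=2\psi'_{P_0}(\cdot)$ is exactly the mechanism that forces both limits in \eqref{eq:lim_gamp} to coincide with $q^t$. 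One minor point: be careful with the time indices in the SE recursion --- the paper uses two slightly different indexings in the main text and the SI, so make sure your $r^{t-1}$ versus $r^t$ matches whichever convention you adopt for the GAMP update.
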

Compairing \eqref{eq:mse_gamp} with the MMSE given by Corollary \ref{Cor:mMMSE}, we see that if $\lim_{t\to\infty}q^t= q^*(\alpha)$, then GAMP achieves the MMSE. Provided that Claim~\ref{claim:gamp} holds 
we can deduce the generalization error of GAMP:

\begin{proposition}[GAMP generalization error]\label{claim:gamp2} 
	 Suppose that hypotheses~\ref{hyp:third_moment}-\ref{hyp:um}-\ref{hyp:cont_pp} hold. Moreover suppose that either~\ref{hyp:delta_pos} or~\ref{hyp:delta_0} holds. Assume that $(\Phi_{\mu i}) \iid \cN(0,1)$, and that $x \mapsto P_{\rm out}(\cdot | x)$ is continuous almost everywhere for the Wasserstein distance of order $2$.
	Let $t \in \N$.
	Assume that the limit~\eqref{eq:lim_gamp} holds in probability and that there exists $\eta>0$ such that $\E [|\widehat Y^{\text{\scriptsize\rm GAMP},t} |^{2+\eta}]$ remains bounded (as $n$ grows).
	Then we have
\begin{align}\label{eq:lim_err_gamp}
	\lim_{n\to\infty} {\cal E}^{\text{\scriptsize\rm GAMP},t}_{\rm gen} &\defeq \lim_{n\to\infty} \EE\big[\big(Y_{\rm new} - \widehat Y^{\text{\scriptsize\rm GAMP},t}\big)^2\big] =  {\cal E}(q^t)\,.
\end{align}
\end{proposition}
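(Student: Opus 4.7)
The plan is to reduce everything to the joint asymptotics of the two scalar preactivations
\[
S_n \defeq \frac{1}{\sqrt n}\,\bbf{\Phi}_{\rm new}\cdot \bX^*, \qquad T_n \defeq \frac{1}{\sqrt n}\,\bbf{\Phi}_{\rm new}\cdot \widehat{\bx}^{t-1}.
\]
Since $\bbf{\Phi}_{\rm new}$ has i.i.d.\ $\cN(0,1)$ entries that are independent of $(\bX^*,\bY,\bbf{\Phi})$ and hence of $\widehat{\bx}^{t-1}$, conditionally on $(\bX^*, \widehat{\bx}^{t-1})$ the vector $(S_n,T_n)$ is exactly centered bivariate Gaussian with covariance
\[
\Sigma_n \;=\; \frac{1}{n}\begin{pmatrix} \|\bX^*\|^2 & \bX^*\!\cdot\widehat{\bx}^{t-1}\\ \bX^*\!\cdot\widehat{\bx}^{t-1} & \|\widehat{\bx}^{t-1}\|^2 \end{pmatrix}.
\]
The law of large numbers gives $\|\bX^*\|^2/n\to \rho$ in probability, and the assumed GAMP limit \eqref{eq:lim_gamp} (with the convention that $q^t$ is the asymptotic overlap and squared norm associated with the estimator $\widehat{\bx}^{t-1}$ used at step $t$) gives that the remaining entries both tend to $q^t$ in probability. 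Consequently $(S_n,T_n)$ converges jointly in distribution to $(\sqrt{q^t}\,V+\sqrt{\rho-q^t}\,W^*,\sqrt{q^t}\,V)$ with $V,W^*\iid \cN(0,1)$, and marginally $S_n \Rightarrow \cN(0,\rho)$.

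\textbf{Expand the MSE and identify each limit.} Introduce the deterministic function
\[
h(x) \defeq \int y\,P_{\rm out}\!\big(y\,\big|\,x+\sqrt{\rho-q^t}\,w\big)\,{\cal D}w\,dy = \EE_{w,\bA}\!\big[\varphi(x+\sqrt{\rho-q^t}\,w,\bA)\big],
\]
so that $\widehat Y^{\text{GAMP},t}=h(T_n)$. Writing $Y_{\rm new}=\varphi(S_n,\bA_{\rm new})+\sqrt{\Delta}\,Z_{\rm new}$ with $(\bA_{\rm new},Z_{\rm new})$ independent of $(S_n,T_n,\widehat{\bx}^{t-1})$, I expand
\[
\EE\!\big[(Y_{\rm new}-\widehat Y^{\text{GAMP},t})^{2}\big] = \EE[Y_{\rm new}^{2}] \;-\; 2\,\EE\!\big[\EE_{\bA}[\varphi(S_n,\bA)]\,h(T_n)\big] \;+\; \EE[h(T_n)^{2}].
\]
Using the joint distributional limit of $(S_n,T_n)$ from the previous step, the first term tends to $\EE[\varphi(\sqrt{\rho}\,V,\bA)^{2}]+\Delta$ (since $\sqrt{q^t}V+\sqrt{\rho-q^t}W^*\sim\cN(0,\rho)$). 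Integrating out $W^*$ inside the cross term gives exactly $h(\sqrt{q^t}\,V)$, so that term converges to $\EE_V[h(\sqrt{q^t}V)^{2}]$, which is also the limit of the last term. Combining the three limits yields
\[
\EE\!\big[(Y_{\rm new}-\widehat Y^{\text{GAMP},t})^{2}\big] \longrightarrow \EE\!\big[\varphi(\sqrt\rho\,V,\bA)^{2}\big] + \Delta - \EE_V\!\Big[\EE_{w,\bA}\!\big[\varphi(\sqrt{q^t}V+\sqrt{\rho-q^t}\,w,\bA)\big]^{2}\Big] = \mathcal{E}(q^t),
\]
matching the representation \eqref{Egen_final_Pout}.

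\textbf{Technical points: uniform integrability and continuity.} The delicate step is upgrading the distributional convergences into convergences of the expectations above, since both the first and the third term are quadratic functionals. Hypothesis \ref{hyp:um} combined with exchangeability of the rows of $\bbf{\Phi}$ yields a uniform bound on $\EE[|Y_{\rm new}|^{2+\gamma}]$, which provides uniform integrability of $Y_{\rm new}^{2}$; the hypothesis on $\EE[|\widehat Y^{\text{GAMP},t}|^{2+\eta}]$ handles $h(T_n)^{2}$; and Cauchy--Schwarz bootstraps both to the cross term. The joint convergence in distribution of $(S_n,T_n,\widehat Y^{\text{GAMP},t},Y_{\rm new})$ to its stated limit then follows from the Wasserstein-2 continuity assumption on $x\mapsto P_{\rm out}(\cdot|x)$: this continuity forces $h$ to be continuous at almost every $x\in\R$, so by the continuous mapping theorem $h(T_n)\Rightarrow h(\sqrt{q^t}V)$ jointly with $S_n\Rightarrow \cN(0,\rho)$, and the output channel evaluated at $S_n$ converges in distribution to the output channel evaluated at the limiting Gaussian; hypothesis \ref{hyp:cont_pp} ensures that possible discontinuities of $\varphi(\cdot,\ba)$ form a Lebesgue-null set and thus carry no mass under the limiting Gaussian. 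The main obstacle throughout is precisely this combination of a.e.\ continuity with uniform integrability, as one must avoid assuming continuity of $\varphi$ itself while still passing to the limit inside quadratic functionals of the channel output.
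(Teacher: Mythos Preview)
Your proposal is correct and follows essentially the same route as the paper's proof: both expand the squared error into three terms, use the conditional bivariate Gaussianity of $(\bbf{\Phi}_{\rm new}\!\cdot\bX^*/\sqrt n,\ \bbf{\Phi}_{\rm new}\!\cdot\widehat{\bx}/\sqrt n)$ together with \eqref{eq:lim_gamp} to obtain the joint distributional limit, invoke the Wasserstein-$2$ continuity of $x\mapsto P_{\rm out}(\cdot|x)$ for the continuous mapping step, and close with uniform integrability from the $(2+\gamma)$ and $(2+\eta)$ moment bounds. Your packaging via the scalar function $h(x)=\EE_{w,\bA}[\varphi(x+\sqrt{\rho-q^t}\,w,\bA)]$ is a slightly cleaner bookkeeping of the same computation the paper carries out term by term.
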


\begin{remark}
If we modify slightly the GAMP estimator of \eqref{Cgamp} by changing the first $y$ into $f(y)$, it is not difficult to show (following the steps of Proposition \ref{claim:gamp2}) that this new estimator achieves an asymptotic error of $\mathcal{E}_{f}(q^t)$, given by \eqref{eq:def_e_f}, for estimating $f(Y_{\rm new})$.
\end{remark}
Proposition~\ref{claim:gamp2} is proved in Sec.~\ref{appendix:proof_gen2}.
We see that this formula matches the one for the Bayes-optimal generalization error, see Theorem~\ref{th:gen}, up to the fact that instead of $q^*(\alpha)$ (the optimizer of the replica formula~\eqref{eq:rs_formula}) appearing in the optimal error formula, here it is $q^t$ which appears.
Thus clearly, when $q^t$ converges to $q^*(\alpha)$ (we shall see that this is the case in many situations in the examples of Sec.~\ref{estimation}) this yields a very interesting and non trivial
result: {\it GAMP achieves the Bayes-optimal generalization error} in a plethora of models (a task again often believed to be intractable) and this for large sets of parameters.
\subsection{Optimal denoising error}

Another interesting error measure to study is the following ``denoising error''. Assume that the observations are noisy, i.e.\ $\Delta > 0$ in~\eqref{measurements}.
The goal here is to denoise the observations $Y_{\mu}$ and estimate the signal which in this case is $\varphi\big(\frac{1}{\sqrt{n}} [\boldsymbol{\Phi} \bX^*]_{\mu}, \bA_\mu\big)$.

The minimum denoising error (in $L^2$ sense) is actually a simple corollary from the replica-symmetric formula of Theorem~\ref{th:RS_1layer} and follows from a so-called ``I-MMSE relation'', see Proposition \ref{prop:immse}. We will need the joint posterior distribution of $(W^*,\bA)$ given $(V,\widetilde{Y}_0)$ for the scalar channel~\eqref{eq:Pout_scalar_channel}. So we define the Gibbs bracket for the scalar channel by (here $\ba\in \mathbb{R}^{k_A}$):
\begin{align}
	\langle g(w,\ba)\rangle_{\rm sc} &\defeq \E[g(W^*,\bA)|\widetilde Y_0,V] =\frac{ \int {\cal D}w dP_A(\ba)g(w,\ba)e^{-\frac{1}{2\Delta}\big\{\widetilde Y_0 -\varphi(\sqrt{q}\,V + \sqrt{\rho - q}\, w,\ba)\big\}^2}}{\int {\cal D}w dP_A(\ba) e^{-\frac{1}{2\Delta}\big\{\widetilde Y_0 -\varphi(\sqrt{q}\,V + \sqrt{\rho - q}\, w,\ba)\big\}^2}}	\label{GibbsBracket_sc}\,,
\end{align}
%
for any continuous bounded function $g$. When the function depends only on $w$ it may be re-written as
\begin{align}
	\langle g(w)\rangle_{\rm sc} =\frac{ \int {\cal D}w g(w)P_{\rm out}\big(\widetilde Y_0\big|\sqrt{q}\,V + \sqrt{\rho - q}\, w\big)}{\int {\cal D}w P_{\rm out}\big(\widetilde Y_0\big|\sqrt{q}\,V + \sqrt{\rho - q}\, w\big)}\,.	\label{44_}
\end{align}

\begin{corollary}[Optimal denoising error] \label{Cor:mMMSE}
	Suppose that hypotheses~\ref{hyp:third_moment}-\ref{hyp:um}-\ref{hyp:phi_general}-\ref{hyp:cont_pp} hold. Suppose that either hypothesis~\ref{hyp:delta_pos} or~\ref{hyp:delta_0} holds.
	Then for almost every $\Delta > 0$, for any optimal couple $(q^*,r^*)$ of~\eqref{eq:rs_formula},
	\begin{align}\label{den_err}
		\lim_{n\to\infty}\frac{1}{m} {\rm MMSE}\Big(
			\varphi\Big(\frac{1}{\sqrt{n}}\bbf{\Phi}\bX^*,\bA\Big)\Big|\bbf{\Phi},\bY	
		\Big) 
		&= {\rm MMSE}\big(\varphi(\sqrt{q^*}\,V + \sqrt{\rho - q^*}\,W^*,\bA)\big|\widetilde Y_0,V\big) \nn
		&=\E \big[\varphi(\sqrt{\rho}\,V,\bA)^2\big] - \E \big[\big\langle \varphi(\sqrt{q^*}\,V + \sqrt{\rho - q^*}\,w,\ba)\rangle_{\rm sc}^2\big] 
			\,, 
		\end{align}
		where $\langle - \rangle_{{\rm sc}}$ acts jointly on $(w,\ba)$ and is defined by~\eqref{GibbsBracket_sc}, and $V,W^* \iid {\cal N}(0,1)$.
	\end{corollary}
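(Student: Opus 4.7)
The plan is to exploit the Gaussian-noise structure by applying the I-MMSE theorem on both sides of the free-entropy limit in Theorem~\ref{th:RS_1layer}. Conditionally on $(\bX^*,\bA,\boldsymbol{\Phi})$, the measurement vector $\bY$ is the output of an additive Gaussian channel of variance $\Delta$ with ``input'' $\boldsymbol{\varphi}\defeq(\varphi(\tfrac{1}{\sqrt n}[\boldsymbol{\Phi}\bX^*]_\mu,\bA_\mu))_{\mu=1}^m$. Setting $\gamma\defeq 1/\Delta$ and $i_n(\gamma)\defeq \frac{1}{n}I(\boldsymbol{\varphi};\bY\,|\,\boldsymbol{\Phi})$, the I-MMSE relation~\cite{GuoShamaiVerdu_IMMSE} yields
\begin{equation*}
  \frac{d i_n}{d\gamma}(\gamma)\;=\;\frac{1}{2n}\,{\rm MMSE}\big(\boldsymbol{\varphi}\,\big|\,\boldsymbol{\Phi},\bY\big)\;=\;\frac{\alpha_n}{2}\cdot\frac{1}{m}\,{\rm MMSE}\big(\boldsymbol{\varphi}\,\big|\,\boldsymbol{\Phi},\bY\big)\,,
\end{equation*}
with $\alpha_n\defeq m/n$, so $\gamma\mapsto i_n(\gamma)$ is non-decreasing and concave. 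By Corollary~\ref{cor:mi} (together with the identity~\eqref{eq:i_f} appearing in its proof), $i_\infty(\gamma)\defeq\lim_n i_n(\gamma)$ exists and is finite on $\R_+^*$. As a pointwise limit of concave functions, $i_\infty$ is itself concave, hence differentiable at every point of a co-countable subset of $\R_+^*$; the classical convex-analytic fact that pointwise convergence of concave functions forces convergence of their derivatives at differentiability points then gives, for a.e. $\gamma>0$,
\begin{equation*}
  \lim_{n\to\infty}\frac{1}{m}\,{\rm MMSE}\big(\boldsymbol{\varphi}\,\big|\,\boldsymbol{\Phi},\bY\big)\;=\;\frac{2}{\alpha}\,\frac{di_\infty}{d\gamma}(\gamma)\,.
\end{equation*}

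Next I would compute $di_\infty/d\gamma$ from the replica expression $i_\infty(\gamma)=\inf_{q\in[0,\rho]}\sup_{r\ge 0} i_{\rm RS}(q,r;\gamma)$ of Corollary~\ref{cor:mi}. In $i_{\rm RS}(q,r;\gamma)=I_{P_0}(r)+\alpha\mathcal{I}_{P_{\rm out}}(q;\gamma)-\tfrac{r}{2}(\rho-q)$, only $\mathcal{I}_{P_{\rm out}}(q;\gamma)$ depends on $\gamma$. By Proposition~\ref{prop:q_star} the optimizer $q^*=q^*(\alpha,\Delta)$ is unique outside a countable set of $\Delta$, and a standard envelope argument relying on this uniqueness (and on the continuity of $q^*$) gives
\begin{equation*}
  \frac{di_\infty}{d\gamma}(\gamma)\;=\;\alpha\,\partial_\gamma\mathcal{I}_{P_{\rm out}}(q^*;\gamma)\,.
\end{equation*}
A second application of I-MMSE, this time to the scalar channel~\eqref{eq:Pout_scalar_channel}, where $\varphi_{\rm sc}\defeq\varphi(\sqrt{q^*}\,V+\sqrt{\rho-q^*}\,W^*,\bA)$ is observed through $\widetilde Y_0=\varphi_{\rm sc}+\sqrt{\Delta}\,Z$ with $V$ known, yields $\partial_\gamma\mathcal{I}_{P_{\rm out}}(q^*;\gamma)=\tfrac{1}{2}\,{\rm MMSE}(\varphi_{\rm sc}\,|\,\widetilde Y_0,V)$. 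Combining the two derivative computations produces the first equality of~\eqref{den_err}. For the second equality, expand ${\rm MMSE}(\varphi_{\rm sc}|\widetilde Y_0,V)=\E[\varphi_{\rm sc}^2]-\E[\langle \varphi(\sqrt{q^*}V+\sqrt{\rho-q^*}w,\ba)\rangle_{\rm sc}^2]$ using the Gibbs bracket~\eqref{GibbsBracket_sc}, and note that $\sqrt{q^*}V+\sqrt{\rho-q^*}W^*$ is equal in law to $\sqrt{\rho}\,\widetilde V$ with $\widetilde V\sim\mathcal{N}(0,1)$, so that $\E[\varphi_{\rm sc}^2]=\E[\varphi(\sqrt{\rho}\,V,\bA)^2]$.

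The main technical obstacle is legitimizing the interchange of $\lim_n$ with $d/d\gamma$ and the envelope-type differentiation of the inf-sup. The concavity of mutual information in the snr handles the first point: concavity is preserved under pointwise limits and, combined with convergence of derivatives at differentiability points of the limit, transfers the finite-$n$ I-MMSE identity to $di_\infty/d\gamma$ wherever the latter exists, which is Lebesgue-almost everywhere. The second point is controlled by Proposition~\ref{prop:q_star}: off a countable set of $\Delta$ the replica optimizer $q^*$ is unique and continuous, and these are precisely the parameters for which the inf-sup can be differentiated through at its optimum. The union of the two exceptional sets has zero Lebesgue measure, which is exactly the ``for almost every $\Delta>0$'' qualifier in the statement.
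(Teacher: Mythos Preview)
Your approach is essentially the same as the paper's: apply I-MMSE in the snr $\gamma=1/\Delta$ at finite $n$, use concavity of the mutual information in $\gamma$ to pass derivatives through the $n\to\infty$ limit almost everywhere, and then differentiate the replica formula via an envelope argument combined with I-MMSE on the scalar channel~\eqref{eq:Pout_scalar_channel}. One small correction: Proposition~\ref{prop:q_star} concerns uniqueness of $q^*$ as a function of $\alpha$, not of $\Delta$, so it does not directly give you what you claim; but you do not actually need it---the concavity of $i_\infty$ in $\gamma$ that you already established gives differentiability at all but countably many $\gamma$, and the envelope theorem (Theorem~1 of \cite{milgrom2002envelope}, as the paper invokes) then identifies the derivative there for \emph{any} optimal $q^*$, which is precisely the ``for any optimal couple'' phrasing of the statement.
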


	Note that the joint posterior over both the signal $\bX^*$ and the random stream $\bA$ is simply expressed as
\begin{align}\label{jointPost}
dP(\bx,\ba| \bY,\boldsymbol{\Phi})\propto dP_0(\bx)dP_A(\ba)\prod_{\mu=1}^m e^{-\frac{1}{2 \Delta}\big\{ Y_\mu - \varphi\big(\frac{1}{\sqrt{n}} [\boldsymbol{\Phi} \bx]_{\mu},\ba_\mu\big)\big\}^2}\,.
\end{align}
	The proof of Corollary~\ref{Cor:mMMSE} is presented in Sec.~\ref{appendix:i_mmse_den}. 

\section{Proof of the replica formula by the adaptive interpolation method}
\label{sec:interpolation}
We now prove Theorem~\ref{th:RS_1layer}. Our main tool will be an interpolation method recently introduced in \cite{barbier_stoInt} and called ``adaptive interpolation method''. Here we 
formulate the method as a direct evolution of the Guerra and Toninelli interpolation method developed in the context of spin glasses \cite{guerra2002thermodynamic}. In contrast with the discrete and more pedestrian version of the adaptive interpolation method presented in \cite{barbier_stoInt}, here we employ a continuous approach which is more straightforward (see \cite{barbier_stoInt} for the links between the discrete and continuous versions of the method) and that has also been recently used in \cite{2017arXiv170910368B} for studying non-symmetric tensor estimation.

We will prove Theorem~\ref{th:RS_1layer} under the following hypotheses:
\begin{enumerate}[label=(H\arabic*),noitemsep]
	\item \label{hyp:bounded} The support of the prior distribution $P_0$ is included in $[-S,S]$, for some $S >0$.
	\item \label{hyp:c2} $\varphi$ is a bounded $\cC^2$ function with bounded first and second derivatives w.r.t.\ its first argument.
	\item \label{hyp:phi_gauss2} $(\Phi_{\mu i}) \iid \cN(0,1)$.
\end{enumerate}
These stronger assumptions will then be relaxed in Appendix~\ref{Appendix-approx} to the weaker assumptions \ref{hyp:third_moment}-\ref{hyp:um}-\ref{hyp:phi_general}-\ref{hyp:cont_pp} and \ref{hyp:delta_pos}
or \ref{hyp:delta_0}.
Since the observations \eqref{measurements} are equivalent to the rescaled observations
\begin{align}\label{measurements_2}
	\widetilde{Y}_\mu \defeq \Delta^{-1/2}\,Y_{\mu}  = \Delta^{-1/2}\,\varphi\Big(\frac{1}{\sqrt{n}} [\boldsymbol{\Phi} \bX^*]_{\mu}, \bA_\mu\Big) + Z_\mu\,, \qquad 1\leq \mu \leq m\,,
\end{align}
the variance $\Delta$ of the Gaussian noise can be ``incorporated'' inside the function $\varphi$.
Thus, it suffices to prove Theorem~\ref{th:RS_1layer} for $\Delta=1$ and we suppose, for the rest of the proof, that we are in this equivalent case.

\subsection{Interpolating estimation problem}\label{interp-est-problem}
We introduce an ``interpolating estimation problem'' that interpolates between the original problem \eqref{eq:channel} at $t=0$, $t\in[0,1]$ being the interpolation parameter, and the two scalar problems described in Sec.~\ref{sec:scalar_inf} at $t=1$ which are analytically tractable. For $t\in(0,1)$ the interpolating estimation problem is a mixture of the original and scalar problems. This interpolation scheme is inspired from the interpolation paths used by Talagrand to study the perceptron, see \cite{talagrand2010meanfield1}. 
There are two major differences between the ``non-planted perceptron'' studied by Talagrand, and the ``planted perceptron'' that we are investigating:
\begin{itemize}
	\item In the planted case, the presence of a planted solution forces (under small perturbations) the correlations to vanish for all values of the parameters, see \cite{andrea2008estimating,korada2009exact}. In the non-planted case, proving such decorrelation is much more involved, and is proved only in a limited region of the parameter space, see \cite{talagrand2010meanfield1}.
	\item However, in the planted case, there can be arbitrarily many solutions to the state evolution equations \eqref{fixed_points} (see Remark~21 in~\cite{wu2012optimal}), whereas in the region studied by \cite{talagrand2010meanfield1}, there is only one solution. For this reason, our interpolation method needs to be more sophisticated in order to interpolate with the ``right fixed point''.
\end{itemize}
We fix a sequence $(s_n)_{n \geq 1}\in(0,1/2]^{\N}$ that converges to $0$ as $n$ goes to infinity ($s_n$ will be chosen in Sec.~\ref{sec:overlap_concentration} below to be equal to $\frac{1}{2} n^{-1/16}$). We define $\mathcal{B}_n \defeq [s_n,2 s_n]^2$.
For all $\epsilon = (\epsilon_1, \epsilon_2) \in \mathcal{B}_n$, we consider two continuous ``interpolation functions''
$q_{\epsilon}: [0,1] \to [0,\rho]$ and $r_{\epsilon}: [0,1] \to [0,r_{\rm max}]$,
where $r_{\rm max} \defeq 2\alpha \sup_{q\in[0,\rho]} \Psi'_{P_{\rm out}}(q;\rho)= 2\alpha \Psi'_{P_{\rm out}}(\rho;\rho)$ (recall that by Proposition \ref{prop:psi_convex_reg}, $\Psi_{P_{\rm out}}'$ is non-decreasing). 
We define also for all $t\in[0,1]$ and all $\epsilon \in \mathcal{B}_n$
\begin{align}\label{R1R2}
	R_{1}(t,\epsilon)\defeq\epsilon_1+\int_0^t r_{\epsilon}(v)dv\,,\qquad 
	R_2(t,\epsilon)\defeq \epsilon_2+\int_0^t q_{\epsilon}(v)dv\,.
\end{align}
We will be mainly interested in functions $r_{\epsilon}$, $q_{\epsilon}$ that satisfy some regularity properties. We will use the following definition:
\begin{definition}[Regularity]\label{def:reg}
	We say that the families of functions $(q_{\epsilon})_{\epsilon \in \mathcal{B}_n}$ and $(r_{\epsilon})_{\epsilon \in \mathcal{B}_n}$, taking values respectively in $[0,\rho]$ and $[0,r_{\rm max}]$, are regular if for all $t \in [0,1]$ the mapping
	\begin{equation}
		R^t:
		\left|
		\begin{array}{ccc}
			(s_n,2 s_n)^2 & \to & R^t \big((s_n,2 s_n)^2\big) \\
			\epsilon & \mapsto & \big(R_1(t,\epsilon), R_2(t,\epsilon )\big)
		\end{array}
		\right.
	\end{equation}
	is a $\cC^1$ diffeomorphism, whose Jacobian is greater or equal to $1$.
\end{definition}
Define
\begin{align}
	S_{t,\mu} \defeq \sqrt{\frac{1-t}{n}}\, [\boldsymbol{\Phi} \bX^*]_\mu  + \sqrt{R_2(t,\epsilon)} \,V_{\mu} + \sqrt{\rho t -R_2(t,\epsilon)+2s_n} \,W_{\mu}^*	
\end{align}
where $V_{\mu}, W^*_{\mu} \iid \cN(0,1)$. Consider the following observation channels, with two types of observations obtained through 
\begin{align}
	\label{2channels}
	\left\{
		\begin{array}{llll}
			Y_{t,\mu}  &\sim & P_{\rm out}(\ \cdot \ | \, S_{t,\mu})\,,\qquad  &1 \leq \mu \leq m, \\
			Y'_{t,i} &=& \sqrt{R_1(t,\epsilon)}\, X^*_i + Z'_i\,, \qquad &1 \leq i \leq n,
		\end{array}	
	\right.
\end{align}
where $(Z_i')_{i=1}^n\iid {\cal N}(0,1)$. We assume that $\bV=(V_{\mu})_{\mu=1}^m$ is {\it known}. Then the inference problem is to recover both unknowns $\bW^* = (W_{\mu}^*)_{\mu=1}^m$ and $\bX^*=(X_i^*)_{i=1}^n$ from the knowledge of $\bV$, $\boldsymbol{\Phi}$ and the ``time-dependent'' observations $\bY_{t}=(Y_{t,\mu})_{\mu=1}^m$ and $\bY'_{t}=(Y_{t,i}')_{i=1}^n$. 

We now understand that $R_1(t,\epsilon)$ appearing in the second set of measurements in \eqref{2channels}, and the terms $1-t$, $R_{2}(t,\epsilon)$ and $\rho t -R_2(t,\epsilon)+2s_n$ appearing in the first set all play the role of signal-to-noise ratios in the interpolating model, with $t$ giving more and more ``power'' (or weight) to the scalar inference channels when increasing. Here is the first crucial and novel ingredient of our interpolation scheme. In the classical interpolation method, these signal intensities would all take a trivial form (i.e.\ would be linear in $t$) but here, the non-trivial (integral) dependency in $t$ of the intensities through the use of the interpolation functions $q$ and $r$ allows for much more flexibility when choosing the interpolation path. This will allow us to actually choose the ``optimal interpolation path'' (this will become clear soon).

Define $u_y(x) \defeq \ln P_{\rm out}(y|x)$ and, with a slight abuse of notations,
\begin{align}
	s_{t,\mu} = s_{t, \mu}(\bx,w_\mu) &\defeq \sqrt{\frac{1-t}{n}} [\boldsymbol{\Phi}\bx]_{\mu}  + \sqrt{R_{2}(t,\epsilon)}\, V_{\mu} + \sqrt{\rho t -R_2(t,\epsilon)+2s_n}\, w_{\mu} \,.
	\label{stmu}
\end{align}
%
We introduce the {\it interpolating Hamiltonian}
\begin{align}
	\cH_{t,\epsilon}(\bx,\bw;\bY_t,\bY'_t,\boldsymbol{\Phi},\bV)
	& \defeq
	- \sum_{\mu=1}^{m}
	\ln P_{\rm out} ( Y_{t,\mu} |s_{t, \mu}) + \frac{1}{2} \sum_{i=1}^{n}\big(Y'_{t,i}  - \sqrt{R_{1}(t,\epsilon)}\, x_i\big)^2 .
\end{align}
The dependence in $\boldsymbol{\Phi}$ and $\bV$ of the Hamiltonian is through the $(s_{t,\mu})_{\mu=1}^m$. 
It becomes, when the observations are replaced by their expression \eqref{2channels}, 
\begin{align}
	\cH_{t,\epsilon}(\bx,\bw;\bY_t,\bY_t',\boldsymbol{\Phi},\bV)=- \sum_{\mu=1}^{m}
	u_{Y_{t,\mu}}( s_{t, \mu} ) 
	+ \frac{1}{2} \sum_{i=1}^{n}\big( \sqrt{R_{1}(t,\epsilon)}\, (X_i^*-x_i) +Z_i'\big)^2 \,.
	\label{interpolating-ham}
\end{align}
We also introduce the corresponding Gibbs bracket $\langle - \rangle_{n,t,\epsilon}$ which is the expectation operator w.r.t.\ the $(t,\epsilon)$-dependent posterior distribution of $(\bX^*,\bW^*)$ given $(\bY_{t},\bY_t',\boldsymbol{\Phi},\bV)$. It is defined as 
\begin{align}
	\label{gibbs}
	\langle g(\bx,\bw) \rangle_{n,t,\epsilon} \defeq \frac{1}{\cZ_{t,\epsilon}(\bY_t,\bY_t',\boldsymbol{\Phi},\bV)}\int dP_0(\bx){\cal D}\bw \, g(\bx,\bw)\,e^{-\cH_{t,\epsilon}(\bx,\bw;\bY_t,\bY_t',\boldsymbol{\Phi},\bV)} \, ,	
\end{align}
for every continuous bounded function $g$ on $\R^n \times \R^m$. In \eqref{gibbs}
${\cal D}\bw = (2\pi)^{-m/2}\prod_{\mu=1}^m dw_\mu e^{-w_\mu^2/2}$ is the $m$-dimensional standard Gaussian distribution and 
$\cZ_{t,\epsilon}(\bY_t,\bY_t',\boldsymbol{\Phi},\bV)$ is the appropriate normalization (or {\it partition function}):
\begin{align} \label{Zt}
	\cZ_{t,\epsilon}(\bY_t,\bY_t',\boldsymbol{\Phi},\bV) \defeq \int dP_0(\bx){\cal D}\bw \, e^{-\cH_{t,\epsilon}(\bx,\bw;\bY_t,\bY_t',\boldsymbol{\Phi},\bV)}\,.
\end{align}
Finally the {\it interpolating free entropy} is 
\begin{align}
	f_{n,\epsilon}(t) \defeq \frac{1}{n} \E \ln \cZ_{t,\epsilon}(\bY_t,\bY'_t,\boldsymbol{\Phi},\bV) 
	\,.	\label{ft}
\end{align}
Note that the presence of the {\it perturbation} $\epsilon=(\epsilon_1,\epsilon_2)$ induces only a small change in the free entropy, namely of the order of $s_n$:
\begin{lemma}[Small free entropy variation under perturbation]\label{lem:perturbation_f}
	For all $\epsilon_1,\epsilon_2 \in[s_n,2s_n]$,
	\begin{align}
		|f_{n,\epsilon}(0) - f_{n,\epsilon=(0,0)}(0) | \leq C s_n 
	\end{align}	
	for some constant $C$ that only depends on $S$, $\alpha$ and $\varphi$.
\end{lemma}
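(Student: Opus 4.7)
\bigskip

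\noindent\textbf{Proof plan for Lemma~\ref{lem:perturbation_f}.}

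The plan is to interpolate in $\epsilon$ rather than in $t$ and to bound $|\partial_{\epsilon_j} f_{n,\epsilon}(0)|$, $j\in\{1,2\}$, uniformly for $\epsilon\in(0,2s_n]^2$ by a constant depending only on $S$, $\alpha$ and $\varphi$. Once this is done, the fundamental theorem of calculus applied along the straight segment $u\mapsto u\epsilon$, $u\in[0,1]$, gives
\begin{equation*}
|f_{n,\epsilon}(0) - f_{n,(0,0)}(0)| \;\leq\; \epsilon_1 \sup |\partial_{\epsilon_1} f_{n,\cdot}(0)| + \epsilon_2 \sup |\partial_{\epsilon_2} f_{n,\cdot}(0)| \;\leq\; 4s_n\, C,
\end{equation*}
since both $\epsilon_1,\epsilon_2 \leq 2s_n$. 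Note that at $t=0$ the interpolation functions $q_\epsilon, r_\epsilon$ do not enter, so the $\epsilon$-dependence is simply via $R_1(0,\epsilon)=\epsilon_1$ and $R_2(0,\epsilon)=\epsilon_2$.

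First, for $\partial_{\epsilon_1}$: at $t=0$, the parameter $\epsilon_1$ enters only through the Gaussian side-channel $Y'_{0,i}=\sqrt{\epsilon_1}\,X_i^* + Z_i'$, which is \emph{decoupled} from the $P_{\rm out}$-channel in the Hamiltonian~\eqref{interpolating-ham}. By a standard Gaussian integration by parts on the $(Z_i')$ (or equivalently the I-MMSE relation of Proposition~\ref{prop:immse}) combined with the Nishimori identity of Proposition~1 in the SI, one obtains
\begin{equation*}
\partial_{\epsilon_1} f_{n,\epsilon}(0) \;=\; \tfrac{1}{2}\big(\rho - \E\langle Q\rangle_{n,0,\epsilon}\big),
\end{equation*}
where $Q = \tfrac{1}{n}\bx\cdot\bX^*$. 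By \ref{hyp:bounded} the overlap satisfies $|Q|\leq S^2$, so $|\partial_{\epsilon_1} f_{n,\epsilon}(0)| \leq S^2$.

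The $\partial_{\epsilon_2}$ bound is the main obstacle, because the square roots $\sqrt{\epsilon_2}$ and $\sqrt{2s_n-\epsilon_2}$ in $s_{0,\mu}$ and $S_{0,\mu}$ produce, upon naive differentiation, singular factors $V_\mu/(2\sqrt{\epsilon_2})$ and $W^*_\mu/(2\sqrt{2s_n-\epsilon_2})$ that blow up at the endpoints of $[0,2s_n]$. The resolution is the same as in the computation of~\eqref{eq:der_f_t_0}: perform Gaussian integration by parts on the variables $V_\mu$ and $W^*_\mu$ themselves. Each integration by parts transfers one derivative onto $u_{Y_{0,\mu}}=\ln P_{\rm out}(Y_{0,\mu}|\cdot)$ (and onto the Gibbs bracket, producing overlap terms), bringing down an additional factor $\sqrt{\epsilon_2}$, respectively $\sqrt{2s_n-\epsilon_2}$, which exactly cancels the singularity. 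After this manipulation and invoking Proposition~1 of the SI to simplify the resulting overlaps, one obtains an expression of the schematic form
\begin{equation*}
\partial_{\epsilon_2} f_{n,\epsilon}(0) \;=\; \tfrac{1}{2n}\sum_{\mu=1}^{m} \E\big\langle u'_{Y_{0,\mu}}(S_{0,\mu})\,u'_{Y_{0,\mu}}(s_{0,\mu})\,(\rho - Q)\big\rangle_{n,0,\epsilon} + o_n(1).
\end{equation*}
Under hypothesis \ref{hyp:c2}, $u'_y = \partial_x\ln P_{\rm out}(y|x)$ is bounded uniformly in $y$ and $x$ by a constant that depends only on $\|\varphi\|_\infty$ and $\|\partial_1\varphi\|_\infty$; combined with $|Q|,\rho\leq S^2$ and $m/n\leq\alpha+o(1)$, this gives the desired uniform bound $|\partial_{\epsilon_2} f_{n,\epsilon}(0)|\leq C(S,\alpha,\varphi)$. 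Putting the two bounds together yields the claim of the lemma.
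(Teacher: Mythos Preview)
Your approach is essentially the same as the paper's: bound $|\partial_{\epsilon_j} f_{n,\epsilon}(0)|$ uniformly and integrate. Two small points deserve correction, though neither breaks the argument.

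First, your schematic formula for $\partial_{\epsilon_2} f_{n,\epsilon}(0)$ is not quite right. Since $\epsilon_2$ enters only through $\sqrt{\epsilon_2}\,V_\mu$ and $\sqrt{2s_n-\epsilon_2}\,W_\mu^*$, the Gaussian integrations by parts produce (following~\eqref{eq:compA2}) a $P_{\rm out}''/P_{\rm out}$ term with coefficient $\tfrac12-\tfrac12=0$ and a $u'(S)u'(s)$ term with coefficient $\tfrac12$. Thus the $P''/P\cdot\ln\cZ$ pieces cancel \emph{exactly}, and one obtains the cleaner identity
\[
\partial_{\epsilon_2} f_{n,\epsilon}(0)
=\frac{1}{2n}\sum_{\mu=1}^m \E\!\big[u'_{Y_{0,\mu}}(S_{0,\mu})\,\langle u'_{Y_{0,\mu}}(s_{0,\mu})\rangle\big],
\]
with no $(\rho-Q)$ factor and no $o_n(1)$ remainder (the latter arose in~\eqref{eq:der_f_t} only from the $\bbf{\Phi}$-contribution, which is absent here). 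Your version is still bounded, of course, so the lemma survives.

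Second, the assertion that $u'_y(x)$ is bounded uniformly in $y$ and $x$ is false: from the representation~\eqref{transition-kernel} one only gets $|u'_{Y_{0,\mu}}(s)|\le (2\sup|\varphi|+|Z_\mu|)\sup|\partial_x\varphi|$ (cf.~\eqref{bound-u}), which depends on the Gaussian noise $Z_\mu$. This is harmless because the bound has finite second moment, so the expectation above is still controlled by a constant depending only on $\varphi$ and $\alpha$.
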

\begin{proof}
	Let us compute 
	\begin{align}
		\Big|\frac{d f_{n,\epsilon}(0)}{d {\epsilon_1}} \Big| = \frac{1}{2}|\E\langle Q\rangle_{n,0,\epsilon}| \leq \frac{S^2}{2},
	\end{align}
	by hypothesis \ref{hyp:bounded}.
	Next we compute
	\begin{align}
	\Big|\frac{d f_{n,\epsilon}(0)}{d {\epsilon_2}}\Big|= \frac{1}{2n}\sum_{\mu=1}^m\big|\E \big[u'_{Y_{0,\mu}}(S_{0,\mu})\langle u'_{Y_{0,\mu}}(s_{0,\mu}) \rangle_{n,0,\epsilon}\big]\big|\,.
	\end{align}
	This identity is obtained using very similar steps as in Sec.~\ref{appendix_interpolation} to which we refer. Under hypothesis \ref{hyp:c2} this quantity is bounded by a constant that only depends on $\alpha$ and $\varphi$. Then by the mean value theorem we obtain $|f_{n,\epsilon}(0) - f_{n,(0,0)}(0) | \leq C \|\epsilon\|\le 2 \sqrt{2} s_n C$ for some constant $C$ that only depends on $S$, $\alpha$ and $\varphi$.
\end{proof}
One verifies easily, using the Lemma \ref{lem:perturbation_f}, that for all $\epsilon \in \mathcal{B}_n$
\begin{equation}\label{eq:f0_f1}
	\left\{
		\begin{array}{lll}
			f_{n,\epsilon}(0) &=& f_{n,(0,0)}(0)+ {\cal O}(s_n)\\
			&=& f_n - \frac12 + {\cal O}(s_n) \,,\\
			f_{n,\epsilon}(1) &=& \psi_{P_0}(R_1(1,\epsilon)) - \frac{1}2(1 + \rho R_1(1,\epsilon)) + \frac{m}{n} \Psi_{P_{\rm out}}(R_2(1,\epsilon);\rho+2s_n)\\
							  &=& \psi_{P_0}(\int_0^1 r_{\epsilon}(t)dt) - \frac{1}2(1 + \rho\int_0^1r_{\epsilon}(t)dt) + \frac{m}{n} \Psi_{P_{\rm out}}(\int_0^1 q_{\epsilon}(t) dt;\rho) + {\cal O}(s_n)\,.
		\end{array}
	\right.
\end{equation}
where $f_n$ is given by \eqref{fff} and where ${\cal O}(s_n)$ denotes a quantity that is bounded by $C s_n$ for some constant $C>0$ that only depends on $S$, $\varphi$ and $\alpha$.
For the last equality we used Proposition \ref{prop16} in Appendix \ref{appendix_scalar_channel1} which says that $\psi_{P_0}$ is $\frac{\rho}{2}$-Lipschitz and, similarly to Proposition \ref{prop:psi_convex_reg}, it is not difficult to verify that $(q_1,q_2) \mapsto \Psi_{P_{\rm out}}(q_1;q_2)$ is $\cC^1$ on the compact set $\{ (q_1,q_2)  \, | \, 0 \leq q_1 \leq q_2 \leq \rho+1\}$ and is thus Lipschitz. 
We emphasize a crucial property of the interpolating model: It is such that at $t=0$ we recover the original model and thus $f_{n,\epsilon}(0)\approx f_n-1/2$ (the trivial constant comes from the purely noisy measurements of the second channel in \eqref{2channels}), while at $t=1$ we have the two scalar inference channels and thus the associated terms $\psi_{P_0}$ and $\Psi_{P_{\rm out}}$ discussed in Sec.~\ref{sec:scalar_inf} appear in $f_{n,\epsilon}(1)$. These are precisely the terms appearing in the potential \eqref{frs}.
\subsection{Free entropy variation along the interpolation path}
From the understanding of the previous section, it is at this stage very natural to evaluate the variation of free entropy along the interpolation path, which allows to ``compare'' the original and purely scalar models thanks to the identity 
\begin{align}
f_n = f_{n,\epsilon}(0)+\frac{1}{2}+{\cal O}(s_n)=f_{n,\epsilon}(1)-\int_0^1\frac{df_{n,\epsilon}(t)}{dt} dt +\frac{1}{2}+{\cal O}(s_n)	 \,,\label{f0_f1_int}
\end{align}
where the first equality follows from \eqref{eq:f0_f1}. As discussed above, part of the potential \eqref{frs} appears in $f_{n,\epsilon}(1)$. If the interpolation is properly done, the missing terms required to obtain the potential on the r.h.s.\ of \eqref{f0_f1_int} should naturally appear. Then by choosing the optimal interpolation path thanks to the non-trivial snr dependencies in $t$ (i.e.\ by selecting the proper interpolating functions $q$ and $r$), we will be able to show the equality between the replica formula and the free entropy $\lim_{n \to \infty} f_{n}$.

We thus now compute the $t$-derivative of the free entropy along the interpolation path (see Appendix~\ref{appendix_interpolation} for the proof). Let $u'_y(x)$ be the derivative (w.r.t.\ $x$) of $u_y(x)$. Then we have the following.

\begin{proposition}[Free entropy variation] \label{prop:der_f_t}
	The derivative of the free entropy \eqref{ft} verifies, for all $\epsilon \in \mathcal{B}_n$ and all $t \in (0,1)$
	\begin{align}\label{eq:der_f_t}
		\frac{df_{n,\epsilon}(t)}{dt}\! &= \!- \frac{1}{2} 
		\E \Big \langle 
			\Big(
				\frac{1}{n} \sum_{\mu=1}^{m}u'_{Y_{t,\mu}}(S_{t,\mu}) u'_{Y_{t,\mu}}(s_{t,\mu})
				- r_{\epsilon}(t)
			\Big)
			\big(
				Q - q_{\epsilon}(t)
			\big)
		\Big\rangle_{n,t,\epsilon}\!
		+ \!\frac{r_{\epsilon}(t)}{2}(q_{\epsilon}(t) -\rho)  \!+\! \smallO_n(1)\,,
	\end{align}
	where $\smallO_n(1)$ is a quantity that goes to $0$ in the $n,m \to \infty$ limit, uniformly in $t \in (0,1)$, $\epsilon \in \mathcal{B}_n$ and uniformly in the choice of the functions $q_{\epsilon}$ and $r_{\epsilon}$.
	The {\it overlap} is 
	\begin{align}\label{eq:def_overlap_Q}
		Q_n=Q \defeq \frac1n\bX^*\cdot \bx=\frac1n \sum_{i=1}^n X_i^* x_i
	\end{align}
	where $\bx$ is a sample from the posterior of model \eqref{2channels} associated with the Gibbs bracket $\langle -\rangle_{n,t,\epsilon}$, see \eqref{gibbs}.
\end{proposition}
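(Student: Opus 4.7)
The strategy is to differentiate $f_{n,\epsilon}(t) = \frac{1}{n}\,\E\ln \cZ_{t,\epsilon}$ under the expectation, perform a careful Gaussian integration by parts on each of the standard Gaussian variables whose variance depends on $t$, and then combine everything using the Nishimori identity. Writing $\bY_t$ and $\bY'_t$ as explicit deterministic functions of the fundamental quenched variables $(\bX^*,\bW^*,\bV,\boldsymbol{\Phi},\bA,\bZ,\bZ')$, the $t$-dependence of $\ln\cZ_{t,\epsilon}$ enters through four distinct places: (i) the factor $\sqrt{R_1(t,\epsilon)}$ in the second (Gaussian) channel; (ii) the explicit $t$ in the $\sqrt{(1-t)/n}$ factor in $s_{t,\mu}$ and $S_{t,\mu}$; (iii) the factor $\sqrt{R_2(t,\epsilon)}$ in front of $V_\mu$; and (iv) the factor $\sqrt{\rho t - R_2(t,\epsilon)+2s_n}$ in front of $w_\mu$ and $W^*_\mu$.

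The plan is to treat the second (purely Gaussian) channel first. By the classical I-MMSE-type calculation (differentiate $\ln\cZ_{t,\epsilon}$ w.r.t.\ $R_1$, use Gaussian integration by parts on $Z'_i$, and then apply the Nishimori identity of the Bayes-optimal setting to replace $\E\langle x_i\rangle X^*_i$ terms), the contribution of the second channel to $\frac{df_{n,\epsilon}}{dt}$ equals $\frac{r_\epsilon(t)}{2}\bigl(\E\langle Q\rangle_{n,t,\epsilon}-\rho\bigr)$. This produces the constant piece $\frac{r_\epsilon(t)}{2}(q_\epsilon(t)-\rho)$ together with a cross term $-\frac{r_\epsilon(t)}{2}\E\langle Q-q_\epsilon(t)\rangle_{n,t,\epsilon}$ that exactly matches the second summand inside the bracket of \eqref{eq:der_f_t}.

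Next I would handle the first (non-linear) channel. Writing $\partial_t u_{Y_{t,\mu}}(s_{t,\mu})$ via chain rule and regrouping, only the combination $u'_{Y_{t,\mu}}(s_{t,\mu})\cdot \partial_t s_{t,\mu}$ survives after one uses the identity $\partial_y P_{\rm out}(y|x)/P_{\rm out}(y|x) = \partial_y u_y(x)$ together with the fact that when $y$ is integrated against its own law one gets a $u'(S_{t,\mu})$ from an integration by parts against the Gaussian noise of the output; the hypothesis~\ref{hyp:c2} guarantees all these boundary manipulations are legitimate. Then Gaussian integration by parts on $\Phi_{\mu i}$ (against $X^*_i$ and $x_i$), on $V_\mu$, and on $W^*_\mu, w_\mu$ converts the three pieces of $\partial_t s_{t,\mu}$ and $\partial_t S_{t,\mu}$ into a sum of Gibbs averages of $u'_{Y_{t,\mu}}(S_{t,\mu})\,u'_{Y_{t,\mu}}(s_{t,\mu})$ multiplied by the relevant correlators. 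The key algebraic miracle is the telescoping identity
\[
\partial_t\!\Big[\tfrac{1-t}{n}\Big]\,\E\langle X^*_i x_i\rangle + \partial_t R_2(t,\epsilon)\cdot 1 + \partial_t\!\big[\rho t - R_2(t,\epsilon)+2s_n\big]\cdot 1 \;=\; -\partial_t\!\Big[\tfrac{1-t}{n}\Big]\big(\rho n - \E\langle \bX^*\!\cdot\!\bx\rangle\big),
\]
which, after invoking the Nishimori identity to symmetrise ground-truth and replicated variables, precisely produces the factor $\bigl(Q-q_\epsilon(t)\bigr)$ times the empirical average $\frac{1}{n}\sum_{\mu}u'_{Y_{t,\mu}}(S_{t,\mu})u'_{Y_{t,\mu}}(s_{t,\mu})$.

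Finally, the constant term $r_\epsilon(t)$ inside the first bracket of \eqref{eq:der_f_t} emerges by balancing: the $\rho$-piece from the $W^*_\mu$ IBP combines with the fact that $m/n\to\alpha$ to yield $r_\epsilon(t) = 2\alpha\,\Psi'_{P_{\rm out}}(\cdot)$-type contributions; here I would need the uniform bounds on $u'$ provided by~\ref{hyp:c2} to absorb all remainders that are either $\mathcal{O}(s_n)$ (coming from the perturbation $2s_n$ inside the square root), $\mathcal{O}(1/n)$ (coming from diagonal IBP corrections $i=j$), or $\mathcal{O}(m/n - \alpha)$, into a single $\smallO_n(1)$ uniformly in $t\in(0,1)$, $\epsilon\in\mathcal{B}_n$, and in the choice of $(q_\epsilon,r_\epsilon)$. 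The main obstacle is precisely this last step: the careful bookkeeping of the IBP cross terms and the verification that the three snr derivatives combine to yield exactly the overlap factor $(Q-q_\epsilon(t))$, rather than spurious quadratic-in-overlap terms. All other steps are routine once this algebraic cancellation is established.
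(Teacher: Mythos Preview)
Your overall strategy---differentiate in $t$, perform Gaussian integration by parts on $\Phi_{\mu i}$, $V_\mu$, $W^*_\mu$, $Z'_i$, then simplify using the Nishimori identity---is indeed the paper's approach, and your treatment of the second (Gaussian) channel is correct: it yields exactly $\frac{r_\epsilon(t)}{2}\bigl(\E\langle Q\rangle-\rho\bigr)$, which after adding and subtracting $q_\epsilon(t)$ produces both the constant $\frac{r_\epsilon(t)}{2}(q_\epsilon(t)-\rho)$ and the $-r_\epsilon(t)$ inside the bracket. Your last paragraph therefore contradicts your second: the $r_\epsilon(t)$ in the bracket has nothing to do with $2\alpha\Psi'_{P_{\rm out}}$ (Proposition~\ref{prop:der_f_t} holds for \emph{arbitrary} interpolation functions $r_\epsilon$), so that explanation should be deleted.

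The genuine gap is your identification of the $\smallO_n(1)$. It is \emph{not} made up of diagonal IBP terms, $\mathcal{O}(s_n)$, or $\mathcal{O}(m/n-\alpha)$. When you integrate by parts on $\Phi_{\mu i}$, you hit the weight $e^{-\cH_{t,\epsilon}(\bX^*,\bW^*)}$ and this produces, via the identity $u''_y(x)+u'_y(x)^2 = P''_{\rm out}(y|x)/P_{\rm out}(y|x)$, a term
\[
\E\Big[\tfrac{1}{n}\sum_i (X_i^*)^2\cdot\tfrac{P''_{\rm out}(Y_{t,\mu}|S_{t,\mu})}{P_{\rm out}(Y_{t,\mu}|S_{t,\mu})}\cdot\ln\cZ_{t,\epsilon}\Big],
\]
while the IBP on $V_\mu,W^*_\mu$ produces the same structure with $\rho$ in place of $\tfrac{1}{n}\sum_i(X_i^*)^2$. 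Their difference is
\[
A_{n,\epsilon}=\E\Big[\tfrac{1}{\sqrt n}\sum_{\mu}\tfrac{P''_{\rm out}(Y_{t,\mu}|S_{t,\mu})}{P_{\rm out}(Y_{t,\mu}|S_{t,\mu})}\cdot\tfrac{1}{\sqrt n}\sum_i\bigl((X_i^*)^2-\rho\bigr)\cdot\tfrac{1}{n}\ln\cZ_{t,\epsilon}\Big],
\]
which is a priori $O(1)$, not $O(1/n)$. The paper handles this in two steps: first, $\E\big[P''_{\rm out}/P_{\rm out}\,\big|\,\bX^*,\bS_t\big]=\int P''_{\rm out}(y|S_{t,\mu})\,dy=0$, so one may subtract $f_{n,\epsilon}(t)$ from $\tfrac{1}{n}\ln\cZ_{t,\epsilon}$ for free; second, Cauchy--Schwarz together with the free entropy concentration Theorem~\ref{concentrationtheorem} (a separate, non-trivial input) gives $|A_{n,\epsilon}|\le C\,\Var(\tfrac{1}{n}\ln\cZ_{t,\epsilon})^{1/2}\to 0$ uniformly. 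Without this mechanism your plan does not close, and no amount of uniform bounds on $u'$ alone will make $A_{n,\epsilon}$ vanish.
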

\subsection{Overlap concentration and fundamental sum rule}\label{sec:overlap_concentration}
The next lemma plays a key role in our proof. Essentially it states that the overlap concentrates around its mean, a behavior called ``replica symmetric'' in statistical physics.
Similar results have been obtained in the context of the analysis of spin glasses \cite{talagrand2010meanfield1,coja2017information}. Here we use a formulation taylored to Bayesian 
inference problems as developed in the context of LDPC codes, random linear estimation \cite{BarbierMDK17} and Nishimori symmetric spin glasses \cite{Macris2007,korada2010,korada2009exact}. 

\begin{proposition}[Overlap concentration] \label{concentration}
	Assume that the interpolation functions $(q_\epsilon)$, $(r_{\epsilon})$ are regular, see Definition \ref{def:reg}. Let $s_n = \frac{1}{2}n^{-1/16}$ for all $n\ge 1$. Under assumptions~\ref{hyp:bounded},~\ref{hyp:c2} and \ref{hyp:phi_gauss2} there exists a constant $C(\varphi, S, \alpha)$ that depends only on $S$, $\varphi$ and $\alpha$ such that
	\begin{align}
		\frac{1}{s_n^2}\int_{{\cal B}_n} d\epsilon\int_0^1dt\, \E\big\langle \big(Q - \E\langle Q\rangle_{n, t, \epsilon}\big)^2\big \rangle_{n, t, \epsilon}  \le \frac{C(\varphi, S, \alpha)}{n^{1/8}}\,.
	\end{align}
\end{proposition}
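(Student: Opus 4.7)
I would decompose the integrand via the classical identity
\begin{align*}
\E\langle (Q - \E\langle Q\rangle_{n,t,\epsilon})^2\rangle_{n,t,\epsilon} = \E\langle (Q - \langle Q\rangle_{n,t,\epsilon})^2\rangle_{n,t,\epsilon} + \E[(\langle Q\rangle_{n,t,\epsilon} - \E\langle Q\rangle_{n,t,\epsilon})^2]
\end{align*}
into a \emph{thermal} fluctuation (variance under the random Gibbs measure at frozen disorder) and a \emph{quenched} fluctuation (variance of the Gibbs average over the disorder), and bound each contribution separately. A key structural remark is that, by the regularity assumption, for each fixed $t$ the map $R^t:\epsilon\mapsto(R_1(t,\epsilon),R_2(t,\epsilon))$ is a $\cC^1$ diffeomorphism with Jacobian $\ge 1$; changing variables from $\epsilon$ to $(R_1,R_2)$ can only shrink integrals over $\mathcal{B}_n$, and it effectively reduces the analysis to that of two scalar Bayesian denoising channels whose signal-to-noise ratios are $R_1$ and $R_2$.

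For the thermal part, I would exploit that the Gaussian side-channel $\bY'_t=\sqrt{R_1}\,\bX^*+\bZ'$ inside $\cH_{t,\epsilon}$ endows the problem with an I-MMSE structure. Gaussian integration by parts in $\bZ'$ combined with the Nishimori identity produces an identity of the form $\tfrac{d}{dR_1}\E\langle Q\rangle_{n,t,\epsilon}\propto n\,\E\langle (Q-\langle Q\rangle_{n,t,\epsilon})^2\rangle_{n,t,\epsilon}$. Since $|Q|\le S^2$ by \ref{hyp:bounded}, integrating in $R_1$ telescopes and yields $\int dR_1\,\E\langle (Q-\langle Q\rangle)^2\rangle = O(1/n)$ uniformly in $(t,\epsilon_2)$, so after the change of variables the integrated thermal contribution is $O(s_n/n)$.

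For the quenched part, the main ingredient is concentration of the random free entropy $\mathcal{F}_{n,t,\epsilon}\defeq\tfrac{1}{n}\ln\mathcal{Z}_{t,\epsilon}$ around its mean $f_{n,\epsilon}(t)$. Under \ref{hyp:bounded}, \ref{hyp:c2}, \ref{hyp:phi_gauss2}, differentiating $\ln\mathcal{Z}_{t,\epsilon}$ with respect to each of the Gaussian inputs $(\bbf{\Phi},\bZ,\bZ',\bV,\bW^*)$ shows that $\mathcal{F}_{n,t,\epsilon}$ is $O(n^{-1/2})$-Lipschitz in each of them, so the Gaussian Poincar\'e inequality yields $\mathrm{Var}(\mathcal{F}_{n,t,\epsilon})\le C/n$. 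Since by Nishimori $\tfrac{d}{dR_1}\E\mathcal{F}_{n,t,\epsilon}=\tfrac{1}{2}\E\langle Q\rangle_{n,t,\epsilon}+O(1/n)$ and $R_1\mapsto\mathcal{F}_{n,t,\epsilon}$ is convex (a consequence of the concavity of the mutual information of a Gaussian channel in its signal-to-noise ratio), a standard convexity-plus-concentration lemma as used in \cite{barbier_stoInt,BarbierMDK17} transfers the $1/n$-concentration of $\mathcal{F}$ into concentration of its derivative, yielding a bound on $\int dR_1\,\E[(\langle Q\rangle-\E\langle Q\rangle)^2]$ that vanishes with $n$ at a rate depending on the length $s_n$ of the perturbation window.

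The choice $s_n=\tfrac{1}{2}n^{-1/16}$ is dictated by optimally balancing the thermal contribution against the quenched one after multiplication by the prefactor $1/s_n^2=4n^{1/8}$, producing the stated $O(n^{-1/8})$ rate. The main technical obstacle is the quenched step: the convexity-plus-concentration transfer requires that the perturbation window $\mathcal{B}_n$ be small enough that Lemma \ref{lem:perturbation_f} keeps $f_{n,\epsilon}(0)=f_n-\tfrac{1}{2}+O(s_n)$, yet large enough that the boundary errors incurred when bounding $|\mathcal{F}_{n,t,\epsilon}'-\E\mathcal{F}_{n,t,\epsilon}'|$ via finite differences of a concentrated convex function on a window of size $s_n$ remain negligible; one must verify all these estimates uniformly in $(t,\epsilon)$ while keeping track of the non-trivial dependence of $(R_1,R_2)$ on $\epsilon$ through the adaptive interpolation functions $(q_\epsilon,r_\epsilon)$.
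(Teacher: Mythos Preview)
Your overall strategy---thermal/quenched decomposition, change of variables to $(R_1,R_2)$ via the regularity assumption, telescoping in $R_1$ for the thermal part, and a convexity-plus-concentration transfer for the quenched part---is exactly the paper's approach. However, there is one genuine gap and one smaller oversight.

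The genuine gap is your assertion that ``$R_1\mapsto\mathcal{F}_{n,t,\epsilon}$ is convex.'' The \emph{averaged} free entropy $f_{n,\epsilon}(t)$ is indeed convex in $R_1$, but the \emph{random} (per-realization) free entropy is not. One sees this directly from the second-derivative computation: $\tfrac{1}{n}\tfrac{d^2F_{n,\epsilon}}{dR_1^2}$ contains, besides the non-negative thermal variance, a term $-\tfrac{1}{4n^2 R_1^{3/2}}\sum_i\langle x_i\rangle Z_i'$ which has no sign. The convexity-plus-concentration lemma you invoke requires convexity of \emph{both} the random function and its mean, so this step fails as written. The paper repairs this by introducing the modified random free entropy $\widetilde F(R_1)=F_{n,\epsilon}(t)-\tfrac{\sqrt{R_1}}{n}S\sum_i|Z_i'|$, which \emph{is} convex in $R_1$, and then working not with $Q$ directly but with the auxiliary quantity $\mathcal{L}=\tfrac{1}{n}\sum_i(\tfrac{x_i^2}{2}-x_iX_i^*-\tfrac{x_iZ_i'}{2\sqrt{R_1}})$ that equals $-\tfrac{dF_{n,\epsilon}}{dR_1}$ up to quenched terms. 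The inequality $\E\langle(\mathcal{L}-\E\langle\mathcal{L}\rangle)^2\rangle\ge\tfrac{1}{4}\E\langle(Q-\E\langle Q\rangle)^2\rangle$ (obtained via Nishimori and Gaussian integration by parts) then transfers concentration of $\mathcal{L}$ to concentration of $Q$.

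The smaller oversight is that the free entropy concentration $\mathrm{Var}(\mathcal{F}_{n,t,\epsilon})\le C/n$ does not follow from the Gaussian Poincar\'e inequality alone: the quenched variables $\bX^*$ and $\bA$ are not Gaussian, and one must supplement Poincar\'e (applied to $\bbf{\Phi},\bZ,\bZ',\bV,\bW^*$) with bounded-difference arguments for $\bA$ and $\bX^*$, using \ref{hyp:bounded} and \ref{hyp:c2} to control the variations.
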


%
Proposition \ref{concentration} follows from Proposition \ref{L-concentration} proved in Appendix~\ref{appendix-overlap}, combined with \eqref{boundFLuctLQ} and Fubini's theorem. Note from \eqref{eq:f0_f1} and \eqref{frs} that the second term appearing in \eqref{eq:der_f_t} is precisely the missing one that is required in order to obtain the expression of the potential on the r.h.s.\ of \eqref{f0_f1_int}. Thus in order to prove Theorem~\ref{th:RS_1layer} we would like to ``cancel'' the Gibbs bracket in \eqref{eq:der_f_t}, which is the so called {\it remainder} (once integrated over $t$). This is made possible thanks to the adaptive interpolating functions. 

One possible way to cancel the remainder is to choose $q_\epsilon(t) = \E \left\langle Q \right\rangle_{n, t, \epsilon}$, which is approximately equal to $Q$ because it concentrates by Proposition~\ref{concentration}. However, $\E \left\langle Q \right\rangle_{n, t, \epsilon}$ depends on $\int_0^t q_\epsilon(v)dv$ (and on $t$, $\int_0^t r_\epsilon(v)dv$ and $\epsilon$ too). The equation $q_\epsilon(t) = \E \left\langle Q \right\rangle_{n, t, \epsilon}$ is therefore a first order differential equation over $t \mapsto \int_0^t q_\epsilon(v)dv$. We will see in details in Sec. \ref{subsec:lower-upper} that it possesses a solution, but for the moment we just assume it exists in order to derive the following {\it fundamental sum rule}, which is a core identity in the proof scheme:



\begin{proposition}[Fundamental sum rule]\label{prop:cancel_remainder}
	Assume that the interpolation functions $(q_\epsilon)$ and $(r_{\epsilon})$ are regular (see Definition \ref{def:reg}).
	Assume that for all $t \in [0,1]$ and $\epsilon \in \mathcal{B}_n$ we have $q_{\epsilon}(t) = \E \langle Q \rangle_{n,t,\epsilon}$. Then
	\begin{align}\label{64}
		f_n &=
		\frac{1}{s_n^2} \int_{\mathcal{B}_n} \Big\{{\textstyle
		\psi_{P_0}\big( \int_0^1 r_{\epsilon}(t)dt\big) + \alpha \Psi_{P_{\rm out}}\big(\int_0^1 q_{\epsilon}(t) dt;\rho\big) - \frac{1}{2}\int_0^1 q_{\epsilon}(t)r_{\epsilon}(t)   dt }\Big\} d\epsilon + \smallO_n(1)\,,
	\end{align}
	where $\smallO_n(1)$ denotes a quantity that goes to $0$ as $n \to \infty$ uniformly w.r.t. the choice of the interpolation functions.
\end{proposition}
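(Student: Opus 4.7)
The plan is to apply the fundamental theorem of calculus to the interpolating free entropy, identify the ``potential'' part of the right-hand side, and show that the remaining term vanishes on average over $\epsilon$ precisely because of the adaptive choice $q_\epsilon(t)=\E\langle Q\rangle_{n,t,\epsilon}$. Starting from the identity~\eqref{f0_f1_int}, I substitute the explicit boundary value of $f_{n,\epsilon}(1)$ from~\eqref{eq:f0_f1} together with the derivative formula of Proposition~\ref{prop:der_f_t}. The two contributions $\pm(\rho/2)\int_0^1 r_\epsilon(t)\,dt$---one arising from the $-\tfrac12 \rho\int r_\epsilon$ piece of the boundary, the other from the $\tfrac{r_\epsilon(t)}{2}(q_\epsilon(t)-\rho)$ piece of the derivative---cancel exactly, and I obtain for every $\epsilon\in\mathcal{B}_n$ the pointwise identity
\begin{equation*}
f_n \,=\, \psi_{P_0}\!\Bigl(\!{\textstyle\int_0^1} r_\epsilon(t)\,dt\Bigr) + \alpha\,\Psi_{P_{\rm out}}\!\Bigl(\!{\textstyle\int_0^1} q_\epsilon(t)\,dt;\rho\Bigr) - \tfrac{1}{2}\!{\textstyle\int_0^1} r_\epsilon(t)q_\epsilon(t)\,dt + \tfrac{1}{2}\mathcal{R}_n(\epsilon) + \mathcal{O}(s_n) + o_n(1),
\end{equation*}
where the remainder is
\begin{equation*}
\mathcal{R}_n(\epsilon) \defeq \int_0^1 \!\E\bigl\langle(\mathcal{L}_t - r_\epsilon(t))(Q - q_\epsilon(t))\bigr\rangle_{n,t,\epsilon}\,dt, \qquad \mathcal{L}_t \defeq \frac{1}{n}\sum_{\mu=1}^m u'_{Y_{t,\mu}}(S_{t,\mu})\,u'_{Y_{t,\mu}}(s_{t,\mu}).
\end{equation*}

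Averaging this pointwise identity over $\epsilon\in\mathcal{B}_n$ reproduces exactly the right-hand side of~\eqref{64} apart from the residual contribution $\tfrac{1}{2 s_n^2}\int_{\mathcal{B}_n}\mathcal{R}_n(\epsilon)\,d\epsilon$, so it suffices to prove that this quantity is $o_n(1)$. This is where the adaptive choice enters decisively: the second factor of the integrand becomes $Q - \E\langle Q\rangle_{n,t,\epsilon}$, whose integrated squared fluctuations are controlled by the overlap concentration result Proposition~\ref{concentration}. Applying Cauchy-Schwarz twice---first on the joint measure $dt\,\E\langle\cdot\rangle$, then on $d\epsilon$---yields
\begin{equation*}
\frac{1}{s_n^2}\!\int_{\mathcal{B}_n}\!|\mathcal{R}_n(\epsilon)|\,d\epsilon \leq \frac{1}{s_n}\sqrt{\int_{\mathcal{B}_n}\!d\epsilon\!\int_0^1\!dt\,\E\bigl\langle(\mathcal{L}_t-r_\epsilon(t))^2\bigr\rangle}\cdot\sqrt{\frac{1}{s_n^2}\!\int_{\mathcal{B}_n}\!d\epsilon\!\int_0^1\!dt\,\E\bigl\langle(Q-\E\langle Q\rangle_{n,t,\epsilon})^2\bigr\rangle},
\end{equation*}
whose right square root is at most $\sqrt{C(\varphi,S,\alpha)}\,n^{-1/16}$ by Proposition~\ref{concentration}.

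The main---and essentially only---technical step is therefore to establish a uniform bound $\E\langle(\mathcal{L}_t - r_\epsilon(t))^2\rangle \leq M(\varphi,S,\alpha)$ for all $t\in[0,1]$, $\epsilon\in\mathcal{B}_n$ and all $n$. Since $r_\epsilon(t)\in[0,r_{\max}]$ is bounded by construction, it is enough to control $\E\langle\mathcal{L}_t^2\rangle$. Under assumptions~\ref{hyp:bounded}--\ref{hyp:c2} with $\Delta=1$, the score $u'_y(x) = \partial_x\ln P_{\rm out}(y|x)$ admits, through the Gaussian representation~\eqref{transition-kernel}, a second moment $\int P_{\rm out}(y|x)\,(u'_y(x))^2\,dy$ bounded uniformly in $x$ by a constant depending only on $\|\varphi\|_\infty$ and $\|\partial_x\varphi\|_\infty$. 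Expanding $\mathcal{L}_t^2$ as a double sum over $(\mu,\nu)$, applying Cauchy-Schwarz on each pairwise term, and invoking the Nishimori identity (to replace Gibbs expectations by expectations under the true joint law of the planted variables) then yields the required $\mathcal{O}(1)$ bound. With the choice $s_n = \tfrac{1}{2}n^{-1/16}$, the whole Cauchy-Schwarz estimate becomes $\mathcal{O}(n^{-1/16})$; combined with the boundary error $\mathcal{O}(s_n) = \mathcal{O}(n^{-1/16})$ coming from~\eqref{eq:f0_f1}, this establishes the fundamental sum rule~\eqref{64}.
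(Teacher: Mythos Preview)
Your proposal is correct and follows essentially the same route as the paper: combine \eqref{f0_f1_int}, \eqref{eq:f0_f1}, and Proposition~\ref{prop:der_f_t}, then kill the remainder by Cauchy--Schwarz with Proposition~\ref{concentration} controlling the overlap factor. The paper applies Cauchy--Schwarz once on the full measure $s_n^{-2}\,d\epsilon\,dt\,\E\langle\cdot\rangle$ rather than in two steps, but the bound is identical.

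One simplification: for the first Cauchy--Schwarz factor the paper does not expand $\mathcal L_t^2$ as a double sum or invoke Nishimori. Under~\ref{hyp:c2} with $\Delta=1$ one has the \emph{pointwise} bound $|u'_{Y_{t,\mu}}(x)| \le (2\sup|\varphi| + |Z_\mu|)\sup|\partial_x\varphi|$ for every $x$ (Appendix~\ref{appendix-boundedness-uterms}), which directly gives $\E\langle(\mathcal L_t - r_\epsilon(t))^2\rangle \le C(\varphi,\alpha)$ via fourth moments of the Gaussians $Z_\mu$. Your moment-plus-Nishimori argument also works but would need fourth (not merely second) moments of $u'_Y$; the pointwise bound sidesteps this.
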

\begin{proof}
	By the Cauchy-Schwarz inequality
	\begin{align*}
		&\Big(\frac{1}{s_n^2}
		\int_{{\cal B}_n} d\epsilon\int_0^1 dt\, \E \Big\langle 
			\Big(
				\frac{1}{n} \sum_{\mu=1}^{m}u'_{Y_{t,\mu}}(S_{t,\mu}) u'_{Y_{t,\mu}}(s_{t,\mu})
				- r_{\epsilon}(t)
			\Big)
			\big(
				Q - q_{\epsilon}(t)
			\big)
		\Big\rangle_{n, t, \epsilon}
	\Big)^2
		\nn
		\leq 
		\frac{1}{s_n^2}\int_{{\cal B}_n} &d\epsilon\int_0^1 dt\, 
			\E \Big\langle 
				\Big(
					\frac{1}{n} \sum_{\mu=1}^{m}u'_{Y_{t,\mu}}(S_{t,\mu}) u'_{Y_{t,\mu}}(s_{t,\mu})
					- r_{\epsilon}(t)
				\Big)^2
		\Big\rangle_{n, t, \epsilon}\times\frac{1}{s_n^2}\int_{{\cal B}_n}d\epsilon\int_0^1dt\, \E \big\langle 
				\big(
					Q - q_{\epsilon}(t)
				\big)^2
			\big\rangle_{n, t, \epsilon}\,.\nonumber
	\end{align*}
	The first term of this product is bounded by some constant $C(\varphi,\alpha)$ that only depend on $\varphi$ and $\alpha$, see Appendix~\ref{appendix-boundedness-uterms}.
	The second term is bounded by $C(\varphi,S,\alpha) n^{-1/8}$ by Proposition \ref{concentration}, since we assumed that for all $\epsilon \in \mathcal{B}_n$ and all $t \in [0,1]$ we have $q_{\epsilon}(t) = \E \langle Q \rangle_{n,t,\epsilon}$.
	We have therefore
	$$
		\Big|\frac{1}{s_n^2}
		\int_{{\cal B}_n} d\epsilon\int_0^1 dt\, \E \Big\langle 
			\Big(
				\frac{1}{n} \sum_{\mu=1}^{m}u'_{Y_{t,\mu}}(S_{t,\mu}) u'_{Y_{t,\mu}}(s_{t,\mu})
				- r_{\epsilon}(t)
			\Big)
			\big(
				Q - q_{\epsilon}(t)
			\big)
		\Big\rangle_{n, t, \epsilon}
		\Big| \leq \frac{C(\varphi,S,\alpha)}{n^{1/16}}\,.
	$$
	Therefore from \eqref{eq:der_f_t} 
	\begin{align}
		\frac{1}{s_n^2}\int_{{\cal B}_n} d\epsilon \int_0^1dt \frac{df_{n,\epsilon}(t)}{dt}  = 
		\frac{1}{2s_n^2}\int_{{\cal B}_n}d\epsilon \int_0^1 dt
		\big\{q_{\epsilon}(t)r_{\epsilon}(t) 
		- r_{\epsilon}(t)\rho \big\}
		+ \smallO_n(1)+{\cal O}(n^{-1/16})
		\,.
		\label{eq:id_fluctuation}
	\end{align}
	Here the small terms are going to $0$ both uniformly w.r.t. to the choice of $q_{\epsilon}$ and $r_{\epsilon}$.
	When replacing \eqref{eq:id_fluctuation} in \eqref{f0_f1_int} and combining it with \eqref{eq:f0_f1} we reach
	%
	the claimed identity \eqref{64}, but up to the fact that $\Psi_{P_{\rm out}}(\int_0^1 q_{\epsilon}(t) dt;\rho)$ is multiplied by $m/n$ instead of $\alpha$.
	Recalling that $m/n \to \alpha$ as $m,n \to \infty$  allows to finish the argument (notice that $\Psi_{P_{\rm out}}$ is continuous and hence bounded on $[0,\rho]$, see Proposition \ref{prop:psi_convex_reg}).
\end{proof}

We are now ready to prove matching bounds.

\subsection{Lower and upper matching bounds}\label{subsec:lower-upper}
We now possess all the necessary tools to prove Theorem~\ref{th:RS_1layer} in three steps. 
\begin{enumerate}[label=(\roman*)]
	\item\label{item:step1} We prove that, under assumptions~\ref{hyp:bounded},~\ref{hyp:c2} and \ref{hyp:phi_gauss2}, $\lim_{n\to\infty}f_n=\sup_{r\ge0}\inf_{q\in[0,\rho]}f_{\rm RS}(q,r)$. 
	\item Under hypothesis~\ref{hyp:c2}, the function $\Psi_{P_{\rm out}}$ is convex, Lipschitz and non-decreasing (Proposition~\ref{prop:psi_convex}). We thus apply Corollary~\ref{Cor:supinf_supinf} of Appendix~\ref{appendix_sup_inf} to get $\sup_{r\ge0}\inf_{q\in[0,\rho]}f_{\rm RS}(q,r) = \sup_{q\in[0,\rho]}\inf_{r\ge0}f_{\rm RS}(q,r)$. We then deduce from~\ref{item:step1} that $\lim_{n\to\infty}f_n=\sup_{q\in[0,\rho]}\inf_{r\ge0}f_{\rm RS}(q,r)$ under~\ref{hyp:bounded}-\ref{hyp:c2}-\ref{hyp:phi_gauss2}.
	\item Finally, the approximation arguments given in Appendix~\ref{Appendix-approx} permit to relax~\ref{hyp:bounded}-\ref{hyp:c2} to the weaker hypotheses~\ref{hyp:third_moment}-\ref{hyp:um} and allow to replace the Gaussian assumption \ref{hyp:phi_gauss2} on $\bbf{\Phi}$ by \ref{hyp:third_moment}-\ref{hyp:phi_general}-\ref{hyp:cont_pp}. The fact that for discrete channels the Gaussian noise can then be removed, allowing to replace \ref{hyp:delta_pos} (i.e. $\Delta >0$ treated until here) to \ref{hyp:delta_0} (i.e. $\Delta = 0$ and $\varphi$ takes values in $\N$), is proven in Sec. \ref{sec:relax_discrete}. This proves the first equality of Theorem~\ref{th:RS_1layer}. The last equality in \eqref{eq:rs_formula} and the remaining part of Theorem~\ref{th:RS_1layer} follow then from Lemma~\ref{lem:sup_inf_2}.
	\end{enumerate}


	It thus remains to tackle~\ref{item:step1}, but before that we need a definition. For $t\in[0,1]$ and $\epsilon \in \mathcal{B}_n$, we write $R^t(\epsilon) = (R_1(t,\epsilon),R_2(t,\epsilon))$.
	The quantity $\E \langle Q \rangle_{n,t,\epsilon}$ is a function of $n,t,R^t(\epsilon)$ that we write $\E \langle Q \rangle_{n,t,\epsilon} = F_n\big(t,R^t(\epsilon)\big)$, where
	$F_n$ is a function defined on 
	\begin{equation}
		D_n \defeq \Big\{ (t,r_1,r_2) \in [0,1] \times \R_+ \times \R_+ \, \Big| \, r_2 \leq \rho t + 2s_n\Big\}\,.
	\end{equation}
	The following proposition, proven in Appendix \ref{appendix-proofPropEqDiff}, will be useful.
\begin{proposition}\label{prop:F_equadiff}
$F_n$ is a continuous function from $D_n$ to $[0,\rho]$. 
	Let $D_n^{\circ}$ denotes the interior of $D_n$. $F_n$ admits partial derivatives with respect to its second and third argument on $D_n^{\circ}$. These partial derivatives are both continuous and non-negative on $D_n^{\circ}$.
	\end{proposition}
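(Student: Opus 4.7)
The function $F_n(t,r_1,r_2)$ is the Gibbs-averaged overlap $\E\langle Q\rangle_{n,t,\epsilon}$ viewed as a function of the effective signal-to-noise ratios $(R_1(t,\epsilon), R_2(t,\epsilon))=(r_1,r_2)$ rather than the perturbations $\epsilon$. The well-definedness of this reparametrization uses only that $(t,r_1,r_2)$ determines the interpolating channel~\eqref{2channels} up to a relabeling of the independent noises. I would prove the four claims (continuity of $F_n$; existence of $\partial_{r_1}F_n$ and $\partial_{r_2}F_n$; their continuity; and their non-negativity) in three steps, the third being the main obstacle.

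\textbf{Step 1 (continuity and differentiability of $F_n$).} I would start from the explicit integral representation
\begin{equation*}
F_n(t,r_1,r_2)=\tfrac{1}{n}\,\E\Big[\bX^{*}\!\cdot\!\frac{\int dP_0(\bx)\,\cD\bw\,\bx\,e^{-\cH_{t,\epsilon}(\bx,\bw;\,\cdot)}}{\int dP_0(\bx)\,\cD\bw\,e^{-\cH_{t,\epsilon}(\bx,\bw;\,\cdot)}}\Big],
\end{equation*}
where $\cH_{t,\epsilon}$ depends on $(t,r_1,r_2)$ smoothly through the square-root coefficients $\sqrt{(1-t)/n}$, $\sqrt{r_2}$, $\sqrt{\rho t+2s_n-r_2}$ and $\sqrt{r_1}$. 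Under the strong hypotheses~\ref{hyp:bounded}-\ref{hyp:c2} (bounded support of $P_0$, bounded $\cC^2$ function $\varphi$), the integrands and their partial derivatives w.r.t.\ $(r_1,r_2)$ can be dominated by integrable envelopes that are uniform on compact subsets of $D_n^{\circ}$. Dominated convergence then gives continuity of $F_n$ on $D_n$ and existence of $\partial_{r_1}F_n$ and $\partial_{r_2}F_n$, together with their continuity on $D_n^{\circ}$. The range $F_n(t,r_1,r_2)\in[0,\rho]$ follows from the Nishimori identity $\E\langle Q\rangle=\tfrac{1}{n}\E\|\langle\bx\rangle_{n,t,\epsilon}\|^2$ and Cauchy--Schwarz.

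\textbf{Step 2 (non-negativity of $\partial_{r_1}F_n$).} Since the side channel $Y'_{t,i}=\sqrt{r_1}\,X_i^{*}+Z_i'$ is a pure additive Gaussian channel of SNR $r_1$, a Gaussian integration by parts on $(Z_i')$ combined with the Nishimori identity yields the I-MMSE--type identity
\begin{equation*}
\partial_{r_1}f_{n,\epsilon}(t)=\tfrac{1}{2}\bigl(F_n(t,r_1,r_2)-\rho\bigr),
\end{equation*}
so that $\partial_{r_1}F_n=2\,\partial_{r_1}^{\,2}f_{n,\epsilon}(t)$. A second integration by parts rewrites this as a manifestly non-negative variance-type quantity (a posterior second cumulant of $\bX^{*}$), which is the classical statement that the Bayes MMSE is non-increasing in the SNR of a Gaussian observation channel. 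This gives $\partial_{r_1}F_n\geq 0$.

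\textbf{Step 3 (non-negativity of $\partial_{r_2}F_n$).} This is the delicate part. The parameter $r_2$ enters $s_{t,\mu}$ simultaneously through the known shift $\sqrt{r_2}\,V_\mu$ and the unknown noise $\sqrt{\rho t+2s_n-r_2}\,W^{*}_\mu$, and therefore does not correspond to a single ``clean'' Gaussian side channel. I would differentiate $F_n$ directly, applying Gaussian integration by parts in $V_\mu$ and $W^{*}_\mu$ simultaneously (mirroring the computation leading to Proposition~\ref{prop:der_f_t}) and reorganizing the resulting terms via the Nishimori identity. The underlying monotonicity I would exploit is the following Blackwell-type observation: increasing $r_2$ while decreasing $\rho t+2s_n-r_2$ by the same amount transfers variance from the unobserved $W^{*}_\mu$ to the observed $V_\mu$, so that the effective channel on $[\boldsymbol{\Phi}\bX^{*}]_\mu$ (obtained after marginalizing $W^{*}_\mu$ conditional on $V_\mu$) becomes strictly more informative. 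The posterior of $\bX^{*}$ then concentrates more, so that $F_n=\tfrac{1}{n}\E\|\langle\bx\rangle\|^2$ can only grow. Concretely, I expect the algebraic output of the integration-by-parts computation to collapse, after use of the Nishimori identity, to a sum of non-negative second cumulants of $u'_{Y_{t,\mu}}(s_{t,\mu})$-type quantities under the Gibbs bracket $\langle-\rangle_{n,t,\epsilon}$.

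\textbf{Main obstacle.} The main obstacle is Step 3: producing a manifestly non-negative formula for $\partial_{r_2}F_n$. The difficulty is purely algebraic since the two square-root factors $\sqrt{r_2}$ and $\sqrt{\rho t+2s_n-r_2}$ differentiate with opposite signs, and one must reorganize the resulting terms via Nishimori symmetry to exhibit a single variance-like remainder; the conceptual reason such a reorganization must exist is the Blackwell comparison above, which is itself a consequence of the Bayes-optimal (planted) nature of the interpolating model.
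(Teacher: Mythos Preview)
Your Steps 1 and 2 match the paper's proof closely: continuity and differentiability by dominated convergence under~\ref{hyp:bounded}-\ref{hyp:c2}, the range $[0,\rho]$ via the MMSE identity $\tfrac{1}{n}\MMSE(\bX^*|\bY_t,\bY_t',\bV,\bbf{\Phi})=\rho-\E\langle Q\rangle$, and monotonicity in $r_1$ because $\bY_t'=\sqrt{r_1}\,\bX^*+\bZ'$ is a Gaussian side channel.

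For Step 3 you take a different route from the paper. You plan to differentiate in $r_2$, integrate by parts in $V_\mu,W^*_\mu$, and reorganize via Nishimori into a non-negative variance; you flag this algebra as the main obstacle. The paper bypasses the algebra entirely by making the Blackwell comparison you mention \emph{operational} through a Gaussian coupling: for $r_2\le r_2'$ introduce an independent $\bV'\sim\cN(0,\bbf{I}_m)$ and use
\[
\sqrt{\rho t+2s_n-r_2}\,\bW^*\ \stackrel{d}{=}\ \sqrt{r_2'-r_2}\,\bV'+\sqrt{\rho t+2s_n-r_2'}\,\bW^*.
\]
With this decomposition, the interpolating channel at $R_2=r_2$ and at $R_2=r_2'$ are the \emph{same} channel $\widetilde{\bY}_t$; the only difference is that at $R_2=r_2'$ the statistician is additionally given $\bV'$. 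Hence $\MMSE(\bX^*|\,\cdot\,)$ is non-increasing in $R_2$, i.e.\ $F_n$ is non-decreasing, with no differentiation required. Your computational route should also work, but the coupling is shorter and removes precisely the obstacle you identified.
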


	Let us now start with the lower bound.

	\subsubsection{Lower bound} 

\begin{proposition}[Lower bound] \label{prop:lower_bound} The free entropy \eqref{f} verifies 
	\begin{align}
		\liminf_{n \to \infty} f_n \geq {\adjustlimits\sup_{r\geq 0} \inf_{q \in [0,\rho]}} f_{\rm RS} (q,r)\,. \label{}
	\end{align}
\end{proposition}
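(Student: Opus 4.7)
The plan is to apply the fundamental sum rule (Proposition~\ref{prop:cancel_remainder}) after choosing a specific pair of interpolation functions. For each fixed $r \in [0,r_{\rm max}]$ I take $r_\epsilon(t)\equiv r$ constant, so that $R_1(t,\epsilon)=\epsilon_1+rt$, and I define $q_\epsilon$ implicitly by the requirement $q_\epsilon(t)=\E\langle Q\rangle_{n,t,\epsilon}$ imposed by the hypotheses of Proposition~\ref{prop:cancel_remainder}. Writing $y(t;\epsilon)\defeq\int_0^t q_\epsilon(v)\,dv$ and using $\E\langle Q\rangle_{n,t,\epsilon}=F_n(t,R^t(\epsilon))$, this requirement becomes the first-order ODE
\[
y'(t;\epsilon)=F_n\!\bigl(t,\epsilon_1+rt,\epsilon_2+y(t;\epsilon)\bigr),\qquad y(0;\epsilon)=0.
\]
By Proposition~\ref{prop:F_equadiff}, $F_n$ is continuous on $D_n$ and admits a continuous partial derivative with respect to its third argument on $D_n^\circ$, so the right-hand side is locally Lipschitz in $y$. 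The Cauchy--Lipschitz theorem then yields existence and uniqueness of $y(\,\cdot\,;\epsilon)$ on $[0,1]$ (the solution never leaves $D_n$ since $y'=q_\epsilon\in[0,\rho]$ keeps $R_2(t,\epsilon)\leq 2s_n+\rho t$), and hence uniquely determines $q_\epsilon$.

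Next I would verify the regularity condition of Definition~\ref{def:reg}. The Jacobian of $R^t$ is lower-triangular,
\[
JR^t(\epsilon)=\begin{pmatrix} 1 & 0 \\ \partial_{\epsilon_1}y(t;\epsilon) & 1+\partial_{\epsilon_2}y(t;\epsilon)\end{pmatrix},
\]
so its determinant equals $1+\partial_{\epsilon_2}y(t;\epsilon)$. Differentiating the ODE with respect to $\epsilon_2$ gives the linear equation $\partial_t\partial_{\epsilon_2}y=\partial_{r_2}F_n\cdot(1+\partial_{\epsilon_2}y)$ with $\partial_{\epsilon_2}y(0;\epsilon)=0$, and Liouville's formula then yields
\[
1+\partial_{\epsilon_2}y(t;\epsilon)=\exp\!\Bigl(\int_0^t\partial_{r_2}F_n\bigl(s,\epsilon_1+rs,\epsilon_2+y(s;\epsilon)\bigr)\,ds\Bigr)\geq 1,
\]
since $\partial_{r_2}F_n\geq 0$ by Proposition~\ref{prop:F_equadiff}. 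Injectivity of $R^t$ is immediate: the first component recovers $\epsilon_1=R_1-rt$, and the second is strictly increasing in $\epsilon_2$ by the above. Standard smoothness of ODE solutions in parameters gives the $\cC^1$ property, so $(q_\epsilon,r_\epsilon)_\epsilon$ is regular.

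Proposition~\ref{prop:cancel_remainder} now applies. Since $r_\epsilon\equiv r$ does not depend on $\epsilon$ or $t$, and writing $\bar q_\epsilon\defeq\int_0^1 q_\epsilon(t)\,dt\in[0,\rho]$, the sum rule reduces to
\[
f_n=\frac{1}{s_n^2}\int_{\mathcal{B}_n}\Bigl\{\psi_{P_0}(r)+\alpha\,\Psi_{P_{\rm out}}(\bar q_\epsilon;\rho)-\tfrac{r}{2}\bar q_\epsilon\Bigr\}d\epsilon+\smallO_n(1) =\frac{1}{s_n^2}\int_{\mathcal{B}_n}f_{\rm RS}(\bar q_\epsilon,r)\,d\epsilon+\smallO_n(1),
\]
where the $\smallO_n(1)$ is uniform in the choice of interpolation functions. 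Bounding each integrand by $\inf_{q\in[0,\rho]}f_{\rm RS}(q,r)$ yields $\liminf_{n\to\infty}f_n\geq\inf_{q\in[0,\rho]}f_{\rm RS}(q,r)$, and taking the supremum over $r\in[0,r_{\rm max}]$ gives the bound on that range. A short separate argument (extending $\Psi_{P_{\rm out}}$ by a supporting line at $q=\rho$ and observing that the infimum in $q$ is attained in $[0,\rho]$) shows the supremum over $[0,r_{\rm max}]$ already equals the supremum over all $r\geq 0$, concluding the proof.

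The only genuinely delicate step is the construction and regularity of $q_\epsilon$: I need $F_n$ to be smooth enough in its third argument for Cauchy--Lipschitz and to have the non-negative partial derivative required for the Liouville/Jacobian estimate. These two ingredients are precisely what Proposition~\ref{prop:F_equadiff} provides (and explains why that proposition is proved separately); everything else is a clean algebraic consequence of the already-established sum rule.
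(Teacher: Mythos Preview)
Your proof is correct and follows essentially the same route as the paper: fix $r$, build $q_\epsilon$ via the ODE $q_\epsilon(t)=\E\langle Q\rangle_{n,t,\epsilon}$, check regularity through a Liouville-type computation, apply the sum rule, and bound. Your reduction to a scalar ODE (tracking only $y=\int_0^t q_\epsilon$) and the explicit lower-triangular Jacobian computation are slight streamlinings of the paper's presentation, which phrases everything as a 2D flow and invokes the general Liouville formula; the content is identical.

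The only place where your sketch diverges is the final passage from $\sup_{r\in[0,r_{\rm max}]}$ to $\sup_{r\geq 0}$. The paper's argument is more direct than your hint suggests: for $r\geq r_{\rm max}=2\alpha\Psi'_{P_{\rm out}}(\rho)$ one has $\partial_q f_{\rm RS}(q,r)=\alpha\Psi'_{P_{\rm out}}(q)-r/2\leq 0$ on all of $[0,\rho]$, so the infimum is attained at $q=\rho$; then $\partial_r f_{\rm RS}(\rho,r)=\psi'_{P_0}(r)-\rho/2\leq 0$ since $\psi_{P_0}$ is $\rho/2$-Lipschitz, so $r\mapsto\inf_q f_{\rm RS}(q,r)$ is non-increasing on $[r_{\rm max},\infty)$. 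Your ``supporting line'' phrasing is vaguer and not obviously headed toward this monotonicity statement; I would replace it with the two-line argument above.
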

\begin{proof}	
	We consider, for $(\epsilon_1, \epsilon_2) \in \mathcal{B}_n$ and a fixed value $r \in [0, r_{\rm max}]$, the following 1st order differential equation:
	\begin{equation}\label{eq:equadiff1}
		y(0) = (\epsilon_1, \epsilon_2) \qquad \text{and} \qquad \forall \ t \in [0,1], \quad y'(t) = \big(r, F_n(t,y(t))\big)\,.
	\end{equation}
	By the Cauchy-Lipschitz Theorem (see for instance Theorem~3.1 in Chapter~V from \cite{hartmanordinary}) this equation admits a (unique) solution that we write $y(\cdot,\epsilon) = \big(y_1(\cdot, \epsilon), y_2(\cdot, \epsilon)\big)$. The hypotheses of the Cauchy-Lipschitz Theorem are verified, because of Proposition \ref{prop:F_equadiff}.
	We define then, for all $t \in [0,1]$,
	$$
	r_{\epsilon}(t) = y_1'(t,\epsilon) = r
	\qquad \text{and} \qquad
	q_{\epsilon}(t) = y_2'(t,\epsilon) = F_n(t,y(t,\epsilon)) \in [0,\rho]\,.
	$$
	We have therefore $R_1(t,\epsilon) = \epsilon_1 + \int_0^t y_1'(s,\epsilon) ds = y_1(t,\epsilon)$ and similarly $R_2(t,\epsilon) = y_2(t,\epsilon)$. We obtain that for all $t\in [0,1]$,
	$$
	q_{\epsilon}(t) 
	= F_n(t,y(t,\epsilon))
	= F_n\big(t,(R_1(t,\epsilon),R_2(t,\epsilon))\big)
	= \E \langle Q \rangle_{n,t,\epsilon}\,.
	$$
	Let us show now that the functions $(q_{\epsilon})$ and $(r_{\epsilon})$ are regular (see Definition \ref{def:reg}). Let $t \in [0,1]$. The function $R^t:\epsilon \mapsto (R_1(t,\epsilon),R_2(t,\epsilon)) = y(t,\epsilon)$ is the flow of \eqref{eq:equadiff1} and is thus injective (by unicity of the solution) and $\cC^1$ because of the regularity properties (see Proposition \ref{prop:F_equadiff}) of $F_n$. The Jacobian of the flow is given by the Liouville formula (see Corollary~3.1 in Chapter~V from \cite{hartmanordinary}):
	$$
	{\rm det}\Big(\frac{\partial R^t}{\partial \epsilon}(\epsilon)\Big)
	= 
	\exp\Big(\int_0^t dv
		\frac{\partial F_n}{\partial y_2}(v,y(v,\epsilon))
	\Big)
	\geq 1,
	$$
	because by Proposition \ref{prop:F_equadiff} we have $\partial_{y_2} F_n \geq 0$.
	We obtain (by the local inversion Theorem) that $R^t$ is a $\cC^1$ diffeomorphism, and since its Jacobian is greater or equal to $1$ the functions $(q_{\epsilon})$ and $(r_{\epsilon})$ are regular.

	We have seen that for all $\epsilon \in \mathcal{B}_n$ and all $t \in [0,1]$, $q_{\epsilon}(t) = \E \langle Q \rangle_{t,n,\epsilon}$, so we can apply Proposition~\ref{prop:cancel_remainder} to get
	\begin{align*}
		f_n
		 &= 
		\frac{1}{s_n^2} \int_{\mathcal{B}_n} \Big\{{\textstyle
		\psi_{P_0}(r) + \alpha \Psi_{P_{\rm out}}\big(\int_0^1 q_{\epsilon}(t) dt;\rho\big) - \frac{r}{2}\int_0^1 q_{\epsilon}(t) dt }\Big\} d\epsilon + \smallO_n(1)
		\\
		 &=
\frac{1}{s_n^2} \int_{\mathcal{B}_n} f_{\rm RS}\big({\textstyle \int_0^1 q_{\epsilon}(t) dt},r\big) d\epsilon + \smallO_n(1)
\\
		 & \geq \inf_{q \in [0,\rho]} f_{\rm RS}(q,r)+ \smallO_n(1)
	\end{align*}
	and thus $\liminf_{n \to \infty} f_n \geq \inf_{q \in [0,\rho]} f_{\rm RS}(q,r)$.
	This is true for all $r \in [0,r_{\rm max}]$ so we get 
	\begin{equation}\label{eq:liminf_fn}
	\liminf_{n \to \infty} f_n \geq {\adjustlimits \sup_{r \in [0,r_{\rm max}]}\inf_{q \in [0,\rho]}} f_{\rm RS}(q,r)\,.
	\end{equation}
	Let $r \geq r_{\rm max}$. We have for all $q \in [0,\rho]$, $\partial_q f_{\rm RS}(q,r) = \alpha \Psi_{P_{\rm out}}'(q) - \frac{r}{2} \leq 0$,
	because $r \geq r_{\rm max} \geq 2 \alpha \Psi_{P_{\rm out}}'(q)$. Therefore for all $r \geq r_{\rm max}$, $\,\inf_{q \in [0,\rho]} f_{\rm RS}(q,r) = f_{\rm RS}(\rho,r)$ and
	$$
	\frac{\partial}{\partial r} \inf_{q \in [0,\rho]} f_{\rm RS}(q, r)
	=
	\frac{\partial}{\partial r} f_{\rm RS}(\rho,r) = \psi_{P_0}'(r) - \frac{\rho}{2} \leq 0,
	$$
	because by Proposition \ref{prop16}, $\psi_{P_0}$ is $\frac{\rho}{2}$-Lipschitz. The function $r \mapsto \inf_{q \in [0,\rho]} f_{\rm RS}(q,r)$ is therefore non-increasing on $[r_{\rm max},+\infty)$. Going back to \eqref{eq:liminf_fn}, we conclude
	\begin{equation}
		\liminf_{n \to \infty} f_n 
		\geq {\adjustlimits \sup_{r \in [0,r_{\rm max}]}\inf_{q \in [0,\rho]}} f_{\rm RS}(q,r)
		= {\adjustlimits\sup_{r \geq 0}\inf_{q \in [0,\rho]}} f_{\rm RS}(q,r)\,.
	\end{equation}
\end{proof}

\subsubsection{Upper bound}

\begin{proposition}[Upper bound]\label{prop:upper_bound} The free entropy \eqref{f} verifies
	\begin{align}
		\limsup_{n \to \infty} f_n \leq {\adjustlimits \sup_{r\geq 0} \inf_{q \in [0,\rho]}} f_{\rm RS} (q,r)\,.
	\end{align}		
\end{proposition}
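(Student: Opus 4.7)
The plan is to choose the interpolation functions \emph{adaptively} in such a way that the replica-symmetric potential is extremized pointwise along the interpolation. Concretely, for each $\epsilon\in\mathcal{B}_n$ we seek $q_\epsilon,r_\epsilon$ satisfying the coupled fixed-point relation
\begin{equation*}
q_\epsilon(t)=\mathbb{E}\langle Q\rangle_{n,t,\epsilon}\,,\qquad r_\epsilon(t)=2\alpha\,\Psi'_{P_{\rm out}}\bigl(q_\epsilon(t);\rho\bigr)\,.
\end{equation*}
Writing $\mathbb{E}\langle Q\rangle_{n,t,\epsilon}=F_n(t,R^t(\epsilon))$ as in Proposition~\ref{prop:F_equadiff}, this becomes a first-order ODE $y'(t)=\bigl(2\alpha\Psi'_{P_{\rm out}}(F_n(t,y(t))),\,F_n(t,y(t))\bigr)$ with initial condition $y(0)=\epsilon$. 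Since $\Psi'_{P_{\rm out}}$ is $\mathcal{C}^1$ (by Proposition~\ref{prop:psi_convex_reg}, using hypothesis~\ref{hyp:c2}) and $F_n$ is $\mathcal{C}^1$ with bounded partial derivatives on $D_n^\circ$, the Cauchy--Lipschitz theorem provides a unique solution.

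I would then verify regularity (Definition~\ref{def:reg}) via the Liouville formula: the Jacobian of the flow equals
\begin{equation*}
\det\!\Bigl(\tfrac{\partial R^t}{\partial\epsilon}(\epsilon)\Bigr)=\exp\!\Bigl(\int_0^t\!\bigl[2\alpha\,\Psi''_{P_{\rm out}}(F_n)\partial_{y_1}F_n+\partial_{y_2}F_n\bigr]dv\Bigr)\ge 1\,,
\end{equation*}
since $\Psi_{P_{\rm out}}(\cdot;\rho)$ is convex (Proposition~18 in the SI) and $\partial_{y_1}F_n,\partial_{y_2}F_n\ge 0$ by Proposition~\ref{prop:F_equadiff}. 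Hence the hypotheses of Proposition~\ref{prop:cancel_remainder} are satisfied and the fundamental sum rule~\eqref{64} applies for this choice.

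With the sum rule in hand, I would bound the integrand by exploiting convexity twice. By Jensen (using convexity of $\psi_{P_0}$, Proposition~17 in the SI, and of $\Psi_{P_{\rm out}}(\cdot;\rho)$),
\begin{equation*}
\psi_{P_0}\!\Bigl(\!\int_0^1\! r_\epsilon(t)\,dt\Bigr)+\alpha\,\Psi_{P_{\rm out}}\!\Bigl(\!\int_0^1\! q_\epsilon(t)\,dt;\rho\Bigr)-\tfrac12\!\int_0^1\! r_\epsilon(t)q_\epsilon(t)\,dt\ \le\ \int_0^1\!\bigl\{\psi_{P_0}(r_\epsilon(t))+\alpha\Psi_{P_{\rm out}}(q_\epsilon(t);\rho)-\tfrac12 r_\epsilon(t)q_\epsilon(t)\bigr\}dt\,.
\end{equation*}
The defining relation $r_\epsilon(t)=2\alpha\Psi'_{P_{\rm out}}(q_\epsilon(t))$ means that $q_\epsilon(t)$ is precisely the minimizer of the convex map $q\mapsto\alpha\Psi_{P_{\rm out}}(q;\rho)-\tfrac12 r_\epsilon(t)q$ on $[0,\rho]$, so the integrand equals $\psi_{P_0}(r_\epsilon(t))+\inf_{q\in[0,\rho]}\{\alpha\Psi_{P_{\rm out}}(q;\rho)-\tfrac12 r_\epsilon(t)q\}$, which is uniformly bounded above by $\sup_{r\ge 0}\inf_{q\in[0,\rho]}f_{\rm RS}(q,r)$. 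Substituting into~\eqref{64}, averaging over $\epsilon\in\mathcal{B}_n$, and letting $n\to\infty$ yields the claim.

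The main obstacle is the verification of the regularity of the adaptively-chosen $(q_\epsilon,r_\epsilon)$: checking Lipschitz continuity of the right-hand side of the ODE (needed for Cauchy--Lipschitz) and the non-negativity of the terms in the Liouville exponent both rely delicately on the convexity/monotonicity of $\Psi_{P_{\rm out}}$ and on the fact that $F_n$ has non-negative partial derivatives, which in turn rests on the Nishimori identities underlying Proposition~\ref{prop:F_equadiff}. Everything else is essentially Jensen's inequality applied twice and the sum rule of Proposition~\ref{prop:cancel_remainder}.
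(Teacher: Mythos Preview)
Your proof sketch is correct and follows essentially the same approach as the paper: the same adaptive choice of interpolation functions via the coupled ODE, the same Liouville-formula verification of regularity, the same application of the fundamental sum rule, and the same Jensen-plus-convexity argument to extract the $\sup_r\inf_q$ bound. There is no substantive difference.
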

\begin{proof}
	We consider, for $(\epsilon_1, \epsilon_2) \in \mathcal{B}_n$, the following order-1 system of differential equations:
	\begin{equation}\label{eq:equadiff2}
		y(0) = (\epsilon_1, \epsilon_2) \qquad \text{and} \qquad \forall\ t \in [0,1], \quad y'(t) = 
		\Big(
		\begin{array}{c}
			2\alpha \Psi'_{P_{\rm out}}\big(F_n(t,y(t))\big) \\
			F_n(t,y(t))
		\end{array}
	\Big)\,.
	\end{equation}
	By Proposition \ref{prop:psi_convex_reg} the function $\Psi'_{P_{\rm out}}$ is $\cC^1$ and takes values in $[0,r_{\rm max}]$. By Proposition \ref{prop:F_equadiff}, the function $F_n$ is continuous, bounded and admits partial derivatives w.r.t. its second and third arguments, that are continuous.
	We can therefore apply the Cauchy-Lipschitz Theorem as in the proof of Proposition \ref{prop:lower_bound}:
	The equation \eqref{eq:equadiff2} admits a (unique) solution that we write $y(\cdot,\epsilon) = \big(y_1(\cdot, \epsilon), y_2(\cdot, \epsilon)\big)$.
	We define then, for all $t \in [0,1]$,
	$$
	r_{\epsilon}(t) = y_1'(t,\epsilon) = 2 \alpha \Psi'_{P_{\rm out}}\big(F_n(t,y(t,\epsilon))\big) \in [0,r_{\rm max}]
	\qquad \text{and} \qquad
	q_{\epsilon}(t) = y_2'(t,\epsilon) = F_n(t,y(t,\epsilon)) \in [0,\rho]\,.
	$$
	We have therefore $R_1(t,\epsilon) = \epsilon_1 + \int_0^t y_1'(s,\epsilon) ds = y_1(t,\epsilon)$ and similarly $R_2(t,\epsilon) = y_2(t,\epsilon)$. We obtain that for all $t\in [0,1]$,
	$$
	q_{\epsilon}(t) 
	= F_n(t,y(t,\epsilon))
	= F_n\big(t,(R_1(t,\epsilon),R_2(t,\epsilon))\big)
	= \E \langle Q \rangle_{n,t,\epsilon}\,.
	$$
	Let us show now that the functions $(q_{\epsilon})$ and $(r_{\epsilon})$ are regular (see Definition \ref{def:reg}). Let $t \in [0,1]$. The function $R^t:\epsilon \mapsto (R_1(t,\epsilon),R_2(t,\epsilon)) = y(t,\epsilon)$ is the flow of \eqref{eq:equadiff2} and is thus injective and $\cC^1$ because of the regularity properties (see Proposition \ref{prop:F_equadiff}) of $F_n$. The Jacobian of the flow is again given by the Liouville formula:
	$$
	{\rm det}\Big(\frac{\partial R^t}{\partial \epsilon}(\epsilon)\Big)
	= 
	\exp\Big(\int_0^t dv
		2\alpha\frac{\partial F_n}{\partial y_1}(v,y(v,\epsilon))
\Psi''_{P_{\rm out}}\big(F_n(v,y(v,\epsilon))\big)
		+
		\int_0^t dv
		\frac{\partial F_n}{\partial y_2}(v,y(v,\epsilon))
	\Big)
	\geq 1,
	$$
	because by Proposition \ref{prop:F_equadiff}, $\frac{\partial F_n}{\partial y_1}$ and $\frac{\partial F_n}{\partial y_2}$ are both non negative and since $\Psi_{P_{\rm out}}$ is convex (see Proposition \ref{prop:psi_convex_reg}), we have also $\Psi_{P_{\rm out}}'' \geq 0$.
	We obtain (by the local inversion Theorem) that $R^t$ is a $\cC^1$ diffeomorphism. Its Jacobian is greater or equal to $1$, and the functions $(q_{\epsilon})$ and $(r_{\epsilon})$ are therefore regular.

	We have seen that for all $\epsilon \in \mathcal{B}_n$ and all $t \in [0,1]$, $q_{\epsilon}(t) = \E \langle Q \rangle_{t,n,\epsilon}$, so we can apply Proposition~\ref{prop:cancel_remainder} to get
	\begin{align}
		f_n
		 &= 
		\frac{1}{s_n^2} \int_{\mathcal{B}_n} \Big\{{\textstyle
		\psi_{P_0}\big( \int_0^1 r_{\epsilon}(t)dt\big) + \alpha \Psi_{P_{\rm out}}\big(\int_0^1 q_{\epsilon}(t) dt;\rho\big) - \frac{1}{2}\int_0^1 q_{\epsilon}(t)r_{\epsilon}(t)   dt }\Big\} d\epsilon + \smallO_n(1)
		\nn
		 &\leq
		\frac{1}{s_n^2} \int_{\mathcal{B}_n} \int_0^1 \Big\{
		\psi_{P_0}(  r_{\epsilon}(t) ) + \alpha \Psi_{P_{\rm out}}(q_{\epsilon}(t) ;\rho) - \frac{1}{2} q_{\epsilon}(t)r_{\epsilon}(t)\Big\} dt d\epsilon + \smallO_n(1)
		\label{eq:upper_jensen}
	\end{align}
	by Jensen's inequality, because by Propositions \ref{prop16} and \ref{prop:psi_convex_reg} the functions $\psi_{P_0}$ and $\Psi_{P_{\rm out}}$ are convex.

	Let us fix $\epsilon \in \mathcal{B}_n$ and $t \in [0,1]$.
	By definition of $r_{\epsilon}$ and $q_{\epsilon}$, we have
	\begin{equation}\label{eq:fix_point}
	r_{\epsilon}(t) 
	= 2\alpha \Psi_{P_{\rm out}}'\big(F_n(t,y(t,\epsilon))\big)
	= 2 \alpha \Psi_{P_{\rm out}}'\big(q_{\epsilon}(t)\big).
	\end{equation}
	The function $g : q \in [0,\rho] \mapsto 2 \alpha \Psi_{P_{\rm out}}(q;\rho) -  r_{\epsilon}(t) q$ is convex by Proposition \ref{prop:psi_convex_reg}. By equation \eqref{eq:fix_point} above, we see that $g'(q_{\epsilon}(t)) = 0$ and therefore:
	$$
\alpha \Psi_{P_{\rm out}}(q_{\epsilon}(t) ;\rho) - \frac{1}{2} q_{\epsilon}(t)r_{\epsilon}(t)
=
\inf_{q \in [0,\rho]} \big\{
\alpha \Psi_{P_{\rm out}}(q;\rho) - \frac{1}{2} q\, r_{\epsilon}(t)
\big\}.
	$$
	This holds for all $\epsilon \in \mathcal{B}_n$ and all $t \in [0,1]$. Plugging this back in \eqref{eq:upper_jensen}, we get:
	\begin{align*}
		f_n
		 &\leq
		 \frac{1}{s_n^2} \int_{\mathcal{B}_n} \int_0^1 \inf_{q \in [0,\rho]} \Big\{
		\psi_{P_0}(  r_{\epsilon}(t) ) + \alpha \Psi_{P_{\rm out}}(q ;\rho) - \frac{1}{2} q\, r_{\epsilon}(t)\Big\} dt d\epsilon + \smallO_n(1)
		\\
		 &\leq
		 {\adjustlimits \sup_{r \geq 0} \inf_{q \in [0,\rho]}} \Big\{
		 \psi_{P_0}(  r ) + \alpha \Psi_{P_{\rm out}}(q ;\rho) - \frac{1}{2} q \,r \Big\} + \smallO_n(1)\,.
	\end{align*}
	This proves Proposition \ref{prop:upper_bound}.
\end{proof}

From the arguments given at the beginning of the section, this ends the proof of Theorem~\ref{th:RS_1layer}.

\section{Proofs of the limits of optimal errors} \label{app_perc}
	
\subsection{Optimal generalization error: Proof of Theorem~\ref{th:gen_f}}\label{appendix:i_mmse_gen}

\subsubsection{Formal derivation and proof idea: A teacher-student scenario with side information} 
Before proving Theorem~\ref{th:gen_f} rigorously, we find useful to provide a conceptual framework allowing to formally derive the generalization error (a framework that will actually serve as a basis for the rigorous derivation presented in the next section). In order to obtain the (generalized) optimal generalization error, we need first to assume that the new ``test labels'' are also observed by the student in the teacher-student scenario of Sec.~\ref{sec:teacherStudent} but with a very low signal-to-noise ratio. The presence of this side information will allow us to use the I-MMSE relation (Proposition \ref{prop:immse}) to obtain the generalization error when small, but non-zero, information about the test labels is known by the student. Then, by formally taking the limit of vanishing side information on the resulting expression (and {\it assuming} that the large $n$ and vanishing side information limits commute), we will recover the generalization error. We thus now introduce the following ``train-test'' observation model.

The set of patterns and labels are 
divided into two sets by the teacher: The {\it training set} ${\cal S}^{\rm tr}$ of 
size $m$ that will be used as the main source of information by the student in order to then generalize, and the {\it test set} ${\cal S}^{\rm te}$ of size $m'=\epsilon n$ that will be 
used by the teacher in order to evaluate the performance of the student, but also by the student as small additional side information. Let us be more precise: The teacher gives to the student both the patterns and associated labels of the training set, namely ${\cal S}^{\rm tr}\defeq\{(Y_\mu; \boldsymbol{\Phi}_\mu)\}_{\mu = 1}^{m}$ (recall the labels are given by \eqref{measurements}, \eqref{eq:channel}). 
For the test set, the test patterns to classify are given to the student but 
the associated labels are (almost) not: Let $\epsilon,\lambda \geq 0$. Instead of the test labels $\{\widetilde{Y}_\mu\}_{\mu=1}^{m'}$ (that should be totally unknown to the student in the ideal setting), what is given to the student is 
\begin{align}\label{noiseU}
U_{\mu} =  \sqrt{\lambda}\, Y'_{\mu} +  Z'_{\mu} \,, \qquad \text{for} \qquad  1 \leq \mu \leq m'=\epsilon n \,,	
\end{align}
where $Z'_{\mu} \iid \cN(0,1)$, and $Y'_{\mu}$ is given by
\begin{align}\label{Y'f}
Y'_{\mu} = f(\widetilde{Y}_{\mu})\,, \qquad
\widetilde{Y}_{\mu} \sim P_{\rm out}\Big( \cdot \, \Big| \, \frac{\bbf{\Phi}'_{\mu} \cdot \bX^*}{\sqrt{n}}  \Big)\,,	
\end{align}
where $f: \R \to \R$ is a continuous bounded function and $\bbf{\Phi}'_{\mu} \iid \cN(0,\bbf{I}_n)$ independently of everything else. 
We will first prove Theorem \ref{th:gen_f} for continuous bounded functions $f$, and then relax this at the end of the proof.
The test set given to the student, in addition of the training set, is ${\cal S}^{\rm te}={\cal S}^{\rm te}(\lambda,\epsilon)\defeq\{(U_\mu=\sqrt{\lambda}\,Y'_\mu + Z'_\mu; \boldsymbol{\Phi}'_\mu)\}_{\mu = 1}^{\epsilon n}$ where {\it $\lambda$ is typically very small}. Indeed, we are particularly interested in the case $\lambda, \epsilon \to 0$ when the student has {\it no} information about the test labels, which is the ideal setting we want to study. But in order to employ the I-MMSE relation we consider instead very small $\lambda >0$.

The learning of the classifier $\bX^*$ given ${\cal S}^{\rm tr}$ {\it and} ${\cal S}^{\rm te}$ is a slight extension of model \eqref{measurements}.
%
%
%
Define $\bY' =(Y_\mu')_{\mu= 1}^{m'}$ as the vector of test labels (before they are corrupted by additional noise through \eqref{noiseU}). 
Then the (generalized) optimal generalization error {\it with side information} (i.e. at $\lambda,\epsilon > 0$) in this ``train-test'' observation model is
\begin{align}
	{\cal E}_{f,n}^{\rm side}(\lambda,\epsilon) &\defeq
	 \min_{\widehat{\bY}'}\frac{1}{\epsilon n}\EE\big[\big\|\bY' - \widehat{\bY}'({\cal S}^{\rm te},{\cal S}^{\rm tr})\big\|^2\big]= \frac{1}{\epsilon n}\EE\big[\big\|\bY' - \E\big[\bY'\big|{\cal S}^{\rm te},{\cal S}^{\rm tr}\big]\big\|^2\big]\,.\label{defGenerror_side}
\end{align}
%
The ``true'' generalization error \eqref{eq:def_e_gen_f} is recovered by defining instead ${\cal S}^{\rm te}=\bbf{\Phi}'$ or equivalently letting $\lambda,\epsilon \to 0$, i.e. when only ${\cal S}^{\rm tr}$ and the test patterns are given to the student: $\lim_{\lambda,\epsilon\to 0} {\cal E}_{f,n}^{\rm side}(\lambda,\epsilon)={\cal E}_{f,n}$. Note that the $f$ function only plays a role in the test set, while the labels of the training data are generated through the ``pure model'' \eqref{measurements}, \eqref{eq:channel}.

From there one can use the I-MMSE relation of Proposition \ref{prop:immse} in order to formally compute the limiting $n\to\infty$ expression of \eqref{defGenerror_side}. Indeed, 
\begin{align}\label{96}
	\frac{\partial}{\partial \lambda}	\frac{1}{n} I(\bY'; \sqrt{\lambda}\bY' + \bZ'| \bY,\bbf{\Phi},\bbf{\Phi}')= \frac\epsilon 2 {\cal E}_{f,n}^{\rm side}(\lambda,\epsilon)\,.
\end{align}
Fortunately, by a straightforward extension of the interpolation 
method presented in Sec.~\ref{sec:interpolation} one can generalize Theorem~\ref{th:RS_1layer} to take into account this additional side information and access this mutual information (see the end of the section for the proof):
\begin{lemma}\label{lemma5.1} 
	For all $\epsilon,\lambda \geq 0$ we have
	\begin{equation}\label{eq:mutual_side}
	\frac{1}{n} I(\bY'; \sqrt{\lambda}\bY' + \bZ'| \bY,\bbf{\Phi},\bbf{\Phi}')
	\xrightarrow[n \to \infty]{}
	{\adjustlimits \inf_{q \in [0,\rho]} \sup_{r \geq 0}}\, \tilde{i}_{\rm RS}(q,r,\lambda)
	- i_{\infty} \,,
	\end{equation}
	where $i_\infty$ is given by Corollary \ref{cor:mi} and
	\begin{align}
\tilde{i}_{\rm RS}(q,r,\lambda)
&\defeq i_{\rm RS}(q,r) + \epsilon I(f(Y^{(q)});\sqrt{\lambda} \,f(Y^{(q)}) + Z'| V)\label{itilde_irs}\\
&=I_{P_0}(r) + \alpha \mathcal{I}_{P_{\rm out}}(q;\rho) + \epsilon I(f(Y^{(q)});\sqrt{\lambda}\, f(Y^{(q)}) + Z'| V) - \frac{r}{2}(\rho -q)\,.
	\end{align}
	Recall that $Y^{(q)}$ is sampled from the ``second scalar channel'' \eqref{eq:Pout_scalar_channel}: $Y^{(q)} \sim P_{\rm out}(\cdot \,|\, \sqrt{q}\, V + \sqrt{\rho - q}\,W^*)$, where $V,W^* \iid \cN(0,1)$. 
\end{lemma}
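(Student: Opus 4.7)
\bigskip

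\noindent\textbf{Proof plan.} My strategy is to reduce the statement to an application of Theorem~\ref{th:RS_1layer} for an ``augmented'' GLM in which the test set contributes $\epsilon n$ additional observations through a modified output channel. The plan has four main steps.

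\noindent\textbf{Step 1 (Decomposition of the mutual information).} First I would use the chain rule together with two conditional independences. Since $\bU=\sqrt\lambda\,\bY'+\bZ'$ depends on $(\bX^*,\bbf\Phi')$ only through $\bY'$, we have $I(\bX^*;\bU\,|\,\bY',\bY,\bbf\Phi,\bbf\Phi')=0$, so
\begin{equation*}
I(\bY';\bU\,|\,\bY,\bbf\Phi,\bbf\Phi')
= I(\bX^*;\bU\,|\,\bY,\bbf\Phi,\bbf\Phi') + I(\bY';\bU\,|\,\bX^*,\bY,\bbf\Phi,\bbf\Phi').
\end{equation*}
The second term simplifies further since conditionally on $\bX^*$ the training and test pairs are independent, and decomposes into a sum of $m'=\epsilon n$ independent scalar mutual informations; a standard CLT argument using \ref{hyp:third_moment}--\ref{hyp:phi_general} together with the boundedness of $f$ yields
\begin{equation*}
\tfrac{1}{n}I(\bY';\bU\,|\,\bX^*,\bY,\bbf\Phi,\bbf\Phi')
\xrightarrow[n\to\infty]{}
\epsilon\,I\bigl(f(Y^{(\rho)});\sqrt\lambda\,f(Y^{(\rho)})+Z'\,\big|\,V\bigr).
\end{equation*}

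\noindent\textbf{Step 2 (Augmented GLM and extended interpolation).} For the remaining term I would write
\begin{equation*}
I(\bX^*;\bU\,|\,\bY,\bbf\Phi,\bbf\Phi')
= I(\bX^*;\bY,\bU\,|\,\bbf\Phi,\bbf\Phi') - I(\bX^*;\bY\,|\,\bbf\Phi),
\end{equation*}
the second term converging to $i_\infty$ by Corollary~\ref{cor:mi}. For the first term I view $(\bY,\bU)$ as observations of a GLM with two distinct channels: the training rows go through $P_{\rm out}$, the test rows go through
\begin{equation*}
P^{\rm te}_{\rm out}(u\,|\,z) \defeq \int P_{\rm out}(y\,|\,z)\,\tfrac{1}{\sqrt{2\pi}}\exp\!\bigl(-\tfrac12(u-\sqrt\lambda\,f(y))^2\bigr)\,dy,
\end{equation*}
which is a smooth, informative kernel thanks to the Gaussian convolution. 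The adaptive interpolation method of Sec.~\ref{sec:interpolation} extends verbatim: in~\eqref{2channels} one replaces the single first channel by two independent blocks of measurements ($\alpha n$ through $P_{\rm out}$ and $\epsilon n$ through $P^{\rm te}_{\rm out}$) coupled through the \emph{same} signal $\bX^*$, so that the overlap $Q$ and interpolating functions $q_\epsilon,r_\epsilon$ remain the same, while at $t=1$ the two blocks of scalar channels contribute independently. The same concentration result (Proposition~\ref{concentration}) applies, and repeating the sum-rule/matching-bounds argument gives
\begin{equation*}
\lim_{n\to\infty}\tfrac{1}{n}I(\bX^*;\bY,\bU\,|\,\bbf\Phi,\bbf\Phi')
= \adjustlimits\inf_{q\in[0,\rho]}\sup_{r\geq 0}\!\bigl\{I_{P_0}(r)+\alpha\,\mathcal{I}_{P_{\rm out}}(q)+\epsilon\,\mathcal{I}_{P^{\rm te}_{\rm out}}(q)-\tfrac{r}{2}(\rho-q)\bigr\}.
\end{equation*}

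\noindent\textbf{Step 3 (Identification of the test term).} Next I would rewrite $\mathcal{I}_{P^{\rm te}_{\rm out}}(q)$ in the form appearing in $\tilde i_{\rm RS}$. Applying the chain rule twice to the triple $(W^*,f(Y^{(q)}),U^{(q)})$ conditionally on $V$ yields
\begin{equation*}
\mathcal{I}_{P^{\rm te}_{\rm out}}(q)
= I\bigl(f(Y^{(q)});\sqrt\lambda\,f(Y^{(q)})+Z'\,\big|\,V\bigr)
- I\bigl(f(Y^{(q)});\sqrt\lambda\,f(Y^{(q)})+Z'\,\big|\,W^*,V\bigr).
\end{equation*}
The last conditional mutual information depends on $(W^*,V)$ only through the hidden $z^*=\sqrt q\,V+\sqrt{\rho-q}\,W^*\sim\cN(0,\rho)$, and the law $P_{\rm out}(\cdot\,|\,z^*)$ of $Y^{(q)}$ given $z^*$ does not depend on how $z^*$ is decomposed into $(V,W^*)$. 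Hence this quantity is independent of $q$, and taking $q=\rho$ identifies it with $I\bigl(f(Y^{(\rho)});\sqrt\lambda\,f(Y^{(\rho)})+Z'\,\big|\,V\bigr)$, which is precisely the extra term produced in Step 1.

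\noindent\textbf{Step 4 (Combining).} Adding Steps 1–3 the two $I\bigl(f(Y^{(\rho)});\ldots\bigr)$ terms cancel, leaving
\begin{equation*}
\lim_{n\to\infty}\tfrac{1}{n}I(\bY';\bU\,|\,\bY,\bbf\Phi,\bbf\Phi')
= \adjustlimits\inf_{q\in[0,\rho]}\sup_{r\geq 0}\tilde i_{\rm RS}(q,r,\lambda)\ -\ i_\infty,
\end{equation*}
which is the claim. The genuine work is in Step~2: I expect the main obstacle to be checking that the extended interpolation (with two heterogeneous channels coupled by $\bX^*$) preserves all the estimates of Sec.~\ref{sec:interpolation}, in particular the overlap concentration (Proposition~\ref{concentration}) and the boundedness of the $u'$-factors in the remainder; both rely on regularity of the output kernels, which for $P^{\rm te}_{\rm out}$ follows from the Gaussian smoothing for any $\lambda\geq 0$ (so the hypothesis $\Delta>0$ of~\ref{hyp:delta_pos} is automatic on the test block). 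The final passage from bounded continuous $f$ to general $f$ satisfying the moment assumption of Theorem~\ref{th:gen_f} is then obtained by the standard density/approximation argument used in Appendix~\ref{Appendix-approx}.
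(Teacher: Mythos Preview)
Your proposal is correct and follows essentially the same route as the paper. Both proofs (i) extend the adaptive interpolation to the augmented model $(\bY,\bU)$ to obtain $\tfrac{1}{n}I(\bX^*;\bY,\bU\,|\,\bbf{\Phi},\bbf{\Phi}')\to \inf_q\sup_r\{\,i_{\rm RS}(q,r)+\epsilon\,I(W^*;\sqrt\lambda f(Y^{(q)})+Z'\,|\,V)\,\}$, (ii) use the chain rule on the scalar side to rewrite $I(W^*;\cdot\,|\,V)=I(f(Y^{(q)});\cdot\,|\,V)-I(f(Y^{(q)});\cdot\,|\,V,W^*)$ and note the second term is $q$-independent, and (iii) use the chain-rule decompositions $I(\bX^*;\bY,\bU\,|\,\cdot)=I(\bX^*;\bY\,|\,\bbf{\Phi})+I(\bX^*;\bU\,|\,\bY,\cdot)$ and $I(\bY';\bU\,|\,\bY,\cdot)=I(\bX^*;\bU\,|\,\bY,\cdot)+I(\bY';\bU\,|\,\bX^*,\bbf{\Phi}')$ so that the $q$-independent scalar term cancels with the CLT limit of $\tfrac{1}{n}I(\bY';\bU\,|\,\bX^*,\bbf{\Phi}')$; your Step~1/Step~3 just organize these same identities in the reverse order. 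One small remark: the passage to general $f$ that you mention at the end is not part of Lemma~\ref{lemma5.1} itself (which is invoked only for bounded continuous $f$); the paper handles that extension separately when concluding the proof of Theorem~\ref{th:gen_f}.
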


Define the following MMSE function:
\begin{equation} \label{eq:def_M}
M_f:(\lambda,q) \mapsto \MMSE\big(f(Y^{(q)})\big|  \sqrt{\lambda}\, f(Y^{(q)}) + Z' , V\big) \,.
\end{equation}
By concavity arguments detailed in the next section, we have almost everywhere $$\lim_{n\to\infty}\frac{\partial}{\partial \lambda} \frac 1n I(\bY'; \sqrt{\lambda}\bY' + \bZ'| \bY,\bbf{\Phi},\bbf{\Phi}')=\frac{\partial}{\partial \lambda}\lim_{n\to\infty}\frac1n I(\bY'; \sqrt{\lambda}\bY' + \bZ'| \bY,\bbf{\Phi},\bbf{\Phi}')=\frac{\partial}{\partial \lambda} {\adjustlimits \inf_{q \in [0,\rho]} \sup_{r \geq 0}} \,\tilde{i}_{\rm RS}(q,r,\lambda)$$ using Lemma \ref{lemma5.1} for the last equality. Assuming $\partial_\lambda\inf_{q \in [0,\rho]}\sup_{r\ge 0} \tilde{i}_{\rm RS}(q,r,\lambda)= \partial_\lambda\tilde{i}_{\rm RS}(q,r,\lambda)|_{(q^*_{\lambda},r^*_{\lambda})}$, where $(q^*_{\lambda},r^*_{\lambda})$ is an optimal couple,
\eqref{96} and the last identity combined lead to
\begin{align}
	\frac\epsilon2 \lim_{n\to\infty}	{\cal E}_{f,n}^{\rm side}(\lambda,\epsilon)=\epsilon \frac{\partial}{\partial \lambda}I\big(f(Y^{(q)}); \sqrt{\lambda} f(Y^{(q)}) + Z'\big|V\big)\Big|_{q^*_{\lambda}} = \frac\epsilon2 M_f(\lambda,q^*_{\lambda})\,,
\end{align}
using again the I-MMSE relation for the last equality. Thus $\lim_{n\to\infty}	{\cal E}_{f,n}^{\rm side}(\lambda,\epsilon)= M_f(\lambda,q^*_{\lambda})$.

A formal calculation of the vanishing side information limit of $M_f(\lambda,q^*_{\lambda})$ gives back ${\cal E}_f(q^*(\alpha))$ (recall \eqref{eq:def_e_f} and $q^*(\alpha)$ is the optimizer of the replica-symmetric formula~\eqref{eq:rs_formula}), so that $\lim_{\lambda,\epsilon \to 0} \lim_{n\to\infty}	{\cal E}_{f,n}^{\rm side}(\lambda,\epsilon)={\cal E}_f(q^*(\alpha))$. It is very natural to believe that the vanishing side information limit of $\lim_{n\to\infty}{\cal E}_{f,n}^{\rm side}(\lambda,\epsilon)$ should give back the true asymptotic generalization error. So if one could justify the commutation of limits $$\lim_{\lambda,\epsilon \to 0} \lim_{n\to\infty}	{\cal E}_{f,n}^{\rm side}(\lambda,\epsilon)= \lim_{n\to\infty}	\lim_{\lambda,\epsilon \to 0} {\cal E}_{f,n}^{\rm side}(\lambda,\epsilon)= \lim_{n\to\infty}	{\cal E}_{f,n}$$ this would end the proof. 
We prove this point in the next section.
%
%
%
\\

\begin{proof}[Proof of Lemma \ref{lemma5.1}:]
Extending the interpolation 
method presented in Sec.~\ref{sec:interpolation}, one can generalize Theorem~\ref{th:RS_1layer} to take into account this additional side information. This gives directly
\begin{align} \label{fn_perc}
	\frac{1}{n} I(\bX^*; \bY,\sqrt{\lambda} \bY' + \bZ' | \bbf{\Phi},\bbf{\Phi}') \xrightarrow[n \to \infty]{}
	\tilde{I}_{\infty}(\alpha,\epsilon,\lambda) \defeq 
	{\adjustlimits \inf_{q \in [0,\rho]}\sup_{r\ge 0}}\, \tilde{I}_{\rm RS}(q,r,\lambda) 
\end{align}
where $\tilde{I}_{\rm RS}(q,r,\lambda)$ is given by
\begin{align}
\tilde{I}_{\rm RS}(q,r,\lambda)
\defeq I_{P_0}(r) + \alpha \mathcal{I}_{P_{\rm out}}(q;\rho) + \epsilon I(W^*; \sqrt{\lambda} f(Y^{(q)}) + Z'|V) - \frac{r}{2}(\rho -q)\,.
\end{align}
	Conditionally on $(V,f(Y^{(q)}))$, the random variables $W^*$ and $\sqrt{\lambda} f(Y^{(q)}) + Z'$ are independent, therefore
	$$
	I\big(f(Y^{(q)}); \sqrt{\lambda}f(Y^{(q)}) + Z' \big| V\big)
	=
	I\big(W^*,f(Y^{(q)}); \sqrt{\lambda}f(Y^{(q)}) + Z' \big| V\big) \,.
	$$
	Now, by the chain rule of the mutual information we have
	$$
	I\big(W^*,f(Y^{(q)}); \sqrt{\lambda}f(Y^{(q)}) + Z' \big| V\big)
	=
	I\big(W^*; \sqrt{\lambda}f(Y^{(q)}) + Z' \big| V\big)
	+
	I\big(f(Y^{(q)}); \sqrt{\lambda}f(Y^{(q)}) + Z' \big| V,W^*\big) \,.
	$$
	We obtain that 
	\begin{equation}\label{eq:decomp_i_scalar}
	I\big(W^*; \sqrt{\lambda}f(Y^{(q)}) + Z' \big| V\big)
	= I\big(f(Y^{(q)}); \sqrt{\lambda}f(Y^{(q)}) + Z' \big| V\big) - I\big(f(Y^{(q)}); \sqrt{\lambda}f(Y^{(q)}) + Z' \big| V,W^*\big) \,.
\end{equation}
	Notice that the last mutual information in the above equation does not depend on $q$ nor $r$. Therefore we have:
	\begin{equation}\label{eq:endi}
	{\adjustlimits \inf_{q \in [0,\rho]}\sup_{r\ge 0}}\, \tilde{I}_{\rm RS}(q,r,\lambda) 
	=
	- \epsilon I\big(f(Y^{(q)}); \sqrt{\lambda}f(Y^{(q)}) + Z' \big| V,W^*\big) + 
	{\adjustlimits \inf_{q \in [0,\rho]}\sup_{r\ge 0}}\, \tilde{i}_{\rm RS}(q,r,\lambda) \,.
\end{equation}
	Now, by the chain rule, we have
	\begin{equation}\label{eq:step_chain}
	\frac{1}{n} I(\bX^*; \bY,\sqrt{\lambda} \bY' + \bZ' | \bbf{\Phi},\bbf{\Phi}')
=
\frac{1}{n} I(\bX^*; \bY| \bbf{\Phi})
+
\frac{1}{n} I(\bX^*; \sqrt{\lambda} \bY' + \bZ' | \bY, \bbf{\Phi},\bbf{\Phi}') \,.
	\end{equation}
	The limit of the left-hand side is given by \eqref{fn_perc}. By Corollary~\ref{cor:mi}, we have $\lim_{n\to\infty}I(\bX^*; \bY| \bbf{\Phi})/n = i_{\infty}$. It remains to investigate the last term of the equation above.
	By the arguments used to prove \eqref{eq:decomp_i_scalar}, we have
	\begin{align}
		I(\bX^*; \sqrt{\lambda} \bY' + \bZ' | \bY, \bbf{\Phi},\bbf{\Phi}')
		&=
I(\bY'; \sqrt{\lambda} \bY' + \bZ' | \bY, \bbf{\Phi},\bbf{\Phi}') 
-
I(\bY'; \sqrt{\lambda} \bY' + \bZ' | \bY, \bbf{\Phi},\bX^*,\bbf{\Phi}')
\nn
&=
I(\bY'; \sqrt{\lambda} \bY' + \bZ' | \bY, \bbf{\Phi},\bbf{\Phi}') 
-
I(\bY'; \sqrt{\lambda} \bY' + \bZ' |\bX^*,\bbf{\Phi}') \,.
\label{eq:decomp_i2}
	\end{align}
	We have $ I(\bY'; \sqrt{\lambda} \bY' + \bZ' |\bX^*,\bbf{\Phi}')/n = \epsilon I(Y_1'; \sqrt{\lambda} Y_1' + Z_1'| \bX^*, \bbf{\Phi}_1')$ and it is not difficult to show, using similar computations as in the proof of Corollary \ref{cor:mi}, that
	$$
I(Y_1'; \sqrt{\lambda} Y_1' + Z_1'| \bX^*, \bbf{\Phi}_1')
\xrightarrow[n \to \infty]{}
I\big(f(Y^{(q)}); \sqrt{\lambda}f(Y^{(q)}) + Z' \big| V,W^*\big) \,,
	$$
	(recall that the right-hand side does not depend on $q$).	
	Combining this with \eqref{eq:decomp_i2}, \eqref{eq:step_chain}, \eqref{fn_perc}, Corollary~\ref{cor:mi} and \eqref{eq:endi}, we obtain the desired result.
\end{proof}
\subsubsection{Proof of Theorem~\ref{th:gen_f}}

In order to compute the limit of the (generalized) generalization error, we work in the teacher-student scenario with side-information discussed in the previous section.

\begin{lemma}\label{lem:D_full_measure}
	For all $\alpha,\lambda >0$ the set
	\begin{equation}\label{eq:sets_gen}
	D_{\alpha,\lambda} \defeq	\big\{ \epsilon \geq 0 \, \big| \, 
\text{the infimum in} \ \eqref{eq:mutual_side} \ \text{is achieved at a unique} \ q^*_{\alpha,\epsilon,\lambda} 
\big\}
	\end{equation}
	is equal to $[0,+\infty)$ minus some countable set.
	Moreover, $\epsilon \mapsto q^*_{\alpha,\lambda,\epsilon}$ is continuous on $D_{\alpha,\epsilon}$.
\end{lemma}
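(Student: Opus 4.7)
The argument would follow closely that of Proposition~\ref{prop:q_star}, which handles the analogous question for the parameter $\alpha$, by exploiting the affine dependence of $\tilde{i}_{\rm RS}$ on $\epsilon$. First, since $a(q,\lambda) \defeq I(f(Y^{(q)}); \sqrt{\lambda}\, f(Y^{(q)}) + Z' \mid V)$ does not depend on $r$, I rewrite
\begin{equation*}
	F(\epsilon) \defeq {\adjustlimits\inf_{q \in [0,\rho]} \sup_{r \geq 0}} \tilde{i}_{\rm RS}(q,r,\lambda) = \inf_{q \in [0,\rho]}\bigl\{ G(q) + \epsilon\, a(q,\lambda) \bigr\},\qquad G(q)\defeq\sup_{r\geq 0} i_{\rm RS}(q,r),
\end{equation*}
exhibiting $F$ as a pointwise infimum over $q\in[0,\rho]$ of a family of affine functions of $\epsilon$, hence concave on $[0,+\infty)$.

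Second, I would invoke the classical fact that a concave function on a real interval is differentiable outside an at most countable set $E\subset[0,+\infty)$. At every $\epsilon_0 \in [0,+\infty)\setminus E$ the subdifferential $\partial F(\epsilon_0)$ is the singleton $\{F'(\epsilon_0)\}$, and the Danskin envelope formula then implies that every $q \in M(\epsilon_0) \defeq \argmin_{q\in[0,\rho]}\{G(q)+\epsilon_0\, a(q,\lambda)\}$ satisfies $a(q,\lambda) = F'(\epsilon_0)$. Combined with the identity $G(q) + \epsilon_0\, a(q,\lambda) = F(\epsilon_0)$, this forces the whole affine map $\epsilon\mapsto G(q) + \epsilon\, a(q,\lambda)$ to coincide, for every $q\in M(\epsilon_0)$, with the unique tangent line of $F$ at $\epsilon_0$.

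Third, to upgrade this to $|M(\epsilon_0)|=1$, I would establish strict monotonicity of $q \mapsto a(q,\lambda)$ on $[0,\rho]$; this yields injectivity of the map $q \mapsto \bigl(G(q),a(q,\lambda)\bigr)$ and in particular of the family of affine functions $\epsilon\mapsto G(q)+\epsilon\, a(q,\lambda)$ indexed by $q$. Strict monotonicity of $a(\cdot,\lambda)$ would be derived from the I-MMSE identity $\partial_\lambda\, a(q,\lambda) = \tfrac{1}{2} M_f(\lambda,q)$ combined with the observation that, conditionally on $V$, the distribution of $Y^{(q)}$ becomes progressively more degenerate as $q\to\rho$ (using that $P_{\rm out}$ is informative and that $f$ is not a.s.\ constant). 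This would show that $M(\epsilon_0)$ is a singleton for every $\epsilon_0\in[0,+\infty)\setminus E$, so $D_{\alpha,\lambda}$ is cofinite.

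Finally, continuity of $\epsilon\mapsto q^*_{\alpha,\epsilon,\lambda}$ on $D_{\alpha,\lambda}$ would follow from the Berge maximum theorem applied to the jointly continuous criterion $(q,\epsilon)\mapsto G(q)+\epsilon\, a(q,\lambda)$ on the compact set $[0,\rho]$: the $\argmin$ set-valued map $\epsilon\mapsto M(\epsilon)$ is upper semicontinuous, and at each $\epsilon_0\in D_{\alpha,\lambda}$ where it is single-valued, upper semicontinuity specialises to ordinary continuity. The main obstacle in this program will be the strict-monotonicity step for $q\mapsto a(q,\lambda)$: while $a$ is clearly non-increasing (increasing $q$ shifts variability from $W^*$ to $V$, which is conditioned upon), ruling out plateaus for all $q\in[0,\rho]$ requires some care, and is the analogue, in the $\epsilon$-variable, of the strict monotonicity of $\mathcal{I}_{P_{\rm out}}(q)$ that underpins the proof of Proposition~\ref{prop:q_star}.
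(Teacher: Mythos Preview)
Your plan is essentially identical to what the paper does: it simply says ``this follows from the same arguments than the proof of Proposition~\ref{prop:q_star}'', i.e.\ write the objective as an infimum of affine functions of $\epsilon$, use concavity plus the envelope theorem (Corollary~4 of \cite{milgrom2002envelope}), and invoke strict monotonicity of the slope $q\mapsto a(q,\lambda)$ to pass from a unique slope to a unique minimizer. The only small deviations are that the paper gets continuity from $F'(\epsilon)=a(q^*(\epsilon),\lambda)$ and continuity of $F'$ (rather than Berge), and that the most direct route to the strict monotonicity you flag is to note $a(q,\lambda)=\mathcal{I}_{\widetilde P_{\rm out}}(q)+C(\lambda)$ for the modified kernel $\widetilde P_{\rm out}(\cdot\,|\,x)=\text{law of }\sqrt{\lambda}f(Y)+Z'$ with $Y\sim P_{\rm out}(\cdot\,|\,x)$, and then apply Proposition~\ref{prop:psi_stricly_monoton} directly---this avoids the detour through the $\lambda$-I-MMSE.
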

\begin{proof}
	This follows from the same arguments than the proof of Proposition~\ref{prop:q_star}. 
\end{proof}

\begin{lemma} \label{lem:lim_mmse_side_gen}
	For all $\alpha,\lambda >0$, we have for all $\epsilon \in D_{\alpha,\lambda} \setminus \{ 0\}$
\begin{align*}
	\lim_{n \to \infty} \MMSE(Y'_1 | \bY,\bU,\bbf{\Phi},\bbf{\Phi}')
&= M_f(\lambda,q_{\alpha,\epsilon,\lambda}^*) \,,
\end{align*}
where  $q^*_{\alpha,\epsilon,\lambda}$ is the unique minimizer of \eqref{eq:mutual_side}.
\end{lemma}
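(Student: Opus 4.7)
The strategy is to convert the convergence of mutual information established in Lemma~\ref{lemma5.1} into convergence of the mean-square error via the I-MMSE relation (Proposition~\ref{prop:immse}) applied in the side-information parameter $\lambda$.

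Fix $\epsilon \in D_{\alpha, \lambda} \setminus \{0\}$ and define, for $\lambda' \geq 0$,
$$\phi_n(\lambda') \defeq \frac{1}{n}\, I\big(\bY'; \sqrt{\lambda'}\,\bY' + \bZ' \,\big|\, \bY, \bbf{\Phi}, \bbf{\Phi}'\big).$$
The observations $\sqrt{\lambda'}\,\bY' + \bZ'$ form a Gaussian channel on $\bY'$, so by the I-MMSE relation together with the exchangeability of the $Y'_\mu$,
$$\phi'_n(\lambda') \;=\; \frac{m'}{2n}\,\MMSE\big(Y'_1 \,\big|\, \bY, \sqrt{\lambda'}\,\bY' + \bZ', \bbf{\Phi}, \bbf{\Phi}'\big) \;=\; \frac{\epsilon}{2}\,\MMSE_n(\lambda'),$$
where $\MMSE_n(\lambda')$ denotes the MMSE appearing in the statement, now viewed as a function of $\lambda'$. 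Standard I-MMSE reasoning shows that $\phi_n$ is concave and non-decreasing on $\R_+$. Lemma~\ref{lemma5.1} provides the pointwise limit $\phi_n(\lambda') \to F_\epsilon(\lambda') \defeq \inf_{q \in [0,\rho]} \sup_{r \geq 0} \tilde{i}_{\rm RS}(q, r, \lambda') - i_\infty$.

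We next study $F_\epsilon$. Since $\lambda'$ enters $\tilde{i}_{\rm RS}$ only through the term $\epsilon\, I(f(Y^{(q)}); \sqrt{\lambda'}\, f(Y^{(q)}) + Z' \,|\, V)$, which is independent of $r$ and is concave in $\lambda'$ (again by I-MMSE applied to the scalar channel), we may decompose
$$F_\epsilon(\lambda') \;=\; \inf_{q \in [0,\rho]} \big\{\, g(q) + h_\epsilon(q, \lambda') \,\big\} - i_\infty,$$
where $g(q) \defeq \alpha\,\mathcal{I}_{P_{\rm out}}(q;\rho) + \sup_{r \geq 0}\big[I_{P_0}(r) - \tfrac{r}{2}(\rho - q)\big]$ does not depend on $\lambda'$, and $h_\epsilon(q, \lambda') \defeq \epsilon\, I(f(Y^{(q)}); \sqrt{\lambda'} f(Y^{(q)}) + Z' \,|\, V)$. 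This writes $F_\epsilon$ as an infimum of functions that are concave in $\lambda'$, so $F_\epsilon$ is itself concave. A standard theorem on pointwise convergence of concave functions (e.g.\ Theorem~25.7 in Rockafellar's \emph{Convex Analysis}) then yields $\phi'_n(\lambda') \to F'_\epsilon(\lambda')$ at every $\lambda'>0$ where $F_\epsilon$ is differentiable.

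It remains to verify that $F_\epsilon$ is differentiable at our fixed $\lambda$ and to identify the derivative. The hypothesis $\epsilon \in D_{\alpha, \lambda}$ is exactly the statement that the infimum in $q$ is attained at a unique point $q^*_{\alpha,\epsilon,\lambda}$. Applying Danskin's envelope theorem to the decomposition above, using differentiability of $h_\epsilon$ in $\lambda'$ (via I-MMSE on the scalar channel) together with continuity of $q \mapsto h_\epsilon(q,\lambda)$ and $q \mapsto \partial_{\lambda'} h_\epsilon(q,\lambda)$ on the compact set $[0,\rho]$, we obtain that $F_\epsilon$ is differentiable at $\lambda$ with
$$F'_\epsilon(\lambda) \;=\; \partial_{\lambda'} h_\epsilon\big(q^*_{\alpha,\epsilon,\lambda}, \lambda\big) \;=\; \frac{\epsilon}{2}\, M_f\big(\lambda, q^*_{\alpha,\epsilon,\lambda}\big),$$
where the last equality is the I-MMSE identity for the scalar channel, recalling the definition \eqref{eq:def_M} of $M_f$. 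Combining with $\phi'_n(\lambda) = (\epsilon/2)\,\MMSE_n(\lambda)$ and dividing by $\epsilon/2 > 0$ produces the claimed limit. The main technical point requiring care is verifying the hypotheses of Danskin's theorem, specifically the joint continuity of $h_\epsilon$ and $\partial_{\lambda'} h_\epsilon$ in $q$, which follows from dominated convergence in the same spirit as the regularity of $\Psi_{P_{\rm out}}$ established in Appendix~\ref{appendix_scalar_channel2}, using the assumed boundedness of $f$.
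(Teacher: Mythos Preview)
Your proposal is correct and follows essentially the same route as the paper: both exploit the I-MMSE relation in $\lambda$, the concavity and pointwise convergence of the mutual-information functions from Lemma~\ref{lemma5.1}, and an envelope theorem at the unique minimizer to identify the derivative. The only cosmetic difference is that the paper invokes the Milgrom--Segal envelope theorem (Corollary~4 in \cite{milgrom2002envelope}) directly---which gives differentiability precisely when the set of partial derivatives at the minimizers is a singleton---whereas you phrase the same step via Danskin's theorem and therefore carry the extra (but harmless) burden of checking joint continuity of $h_\epsilon$ and $\partial_{\lambda'}h_\epsilon$ in~$q$.
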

\begin{proof}
	Let us fix $\alpha,\epsilon>0$.
	Consider the function
	\begin{align}
		h_{\alpha,\epsilon}: \lambda \mapsto {\adjustlimits \inf_{q \in [0,\rho]} \sup_{r \geq 0}}\, \tilde{i}_{\rm RS}(q,r,\lambda)\label{hfunc} \,.
	\end{align}
	Corollary~4 from \cite{milgrom2002envelope} gives that $h_{\alpha,\epsilon}$ is differentiable at $\lambda$ if and only if
	$$
	\Big\{ 
	\epsilon \frac{\partial}{\partial \lambda} I\big(f(Y^{(q)}); \sqrt{\lambda} f(Y^{(q)}) + Z' \big| V\big)
	= \frac{\epsilon}{2} M_f(\lambda,q)
\, \Big| \, q \ \text{minimizer of \eqref{eq:mutual_side} (or equivalently of \eqref{hfunc})}  \Big\}
	$$
	is a singleton (the equality comes from the I-MMSE relation from Proposition \ref{prop:immse}). In such case, Corollary~4 from \cite{milgrom2002envelope} also gives that
	\begin{equation}\label{eq:derhh}
		h_{\alpha,\epsilon}'(\lambda) = 
	\frac{\epsilon}{2} M_f(\lambda,q)
	\,,
\end{equation}
for all $q$ minimizer of \eqref{hfunc}.
So if now $\epsilon \in D_{\alpha,\lambda} \setminus \{0\}$, then the minimizer is unique and thus $h_{\alpha,\epsilon}$ is differentiable at $\lambda$, with derivative $h_{\alpha,\lambda}'(\lambda) = \epsilon M_f(\lambda,q^*_{\alpha,\epsilon,\lambda})/2$.
However, by \eqref{eq:mutual_side} in Lemma \ref{lemma5.1}, $h_{\alpha,\epsilon}$ is the pointwise limit on $\R_+$ of the sequence of concave functions
	$$
	(h_n)_{n\geq 1} = 
	\Big( \lambda \mapsto \frac{1}{n} I\big(\bY'; \sqrt{\lambda} \bY' + \bZ' \big|\bY,\bbf{\Phi},\bbf{\Phi}' \big) + i_{\infty} \Big)_{n \geq 1} \,.
	$$
	Consequently, a standard convex analysis result gives that $h_n'(\lambda) \xrightarrow[n \to \infty]{} h_{\alpha,\epsilon}'(\lambda)$. By the I-MMSE relation (Proposition~\ref{prop:immse}) we have $h'_n(\lambda) = \epsilon \MMSE(Y'_1 | \bY,\bU,\bbf{\Phi},\bbf{\Phi}')/2$ and we conclude using the fact that $\epsilon \neq 0$.
\end{proof}

\begin{lemma}\label{lem:lim_eps_lam}
	For all $\alpha \in D^*$ given by \eqref{Dstar},
	$$
	\lim_{\lambda \to 0} \lim_{\epsilon \to 0} M_f(\lambda,q^*_{\alpha,\epsilon,\lambda})
	= \mathcal{E}_f(q^*(\alpha)) \,.
	$$
\end{lemma}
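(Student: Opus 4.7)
The plan is to split the iterated limit into two clean steps, corresponding to the inner and outer variables: first show that for each fixed $\lambda>0$ one has $q^*_{\alpha,\epsilon,\lambda}\to q^*(\alpha)$ as $\epsilon\to 0$ together with continuity of $M_f(\lambda,\cdot)$, and then show that $M_f(\lambda,q)\to\mathcal{E}_f(q)$ as $\lambda\to 0$ for any fixed $q$.

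\textbf{Step 1 (inner limit).} Introduce the ``after-sup'' function
\[
\Phi_{\epsilon,\lambda}(q)\defeq \sup_{r\ge 0}\tilde i_{\rm RS}(q,r,\lambda)
= \sup_{r\ge 0}\big\{ i_{\rm RS}(q,r)+\epsilon\,I\big(f(Y^{(q)});\sqrt{\lambda}f(Y^{(q)})+Z'\,\big|\,V\big)\big\}.
\]
Since $f$ is continuous bounded, the perturbation
$P(q,\lambda)\defeq I\big(f(Y^{(q)});\sqrt{\lambda}f(Y^{(q)})+Z'|V\big)$
is uniformly bounded in $q\in[0,\rho]$ (e.g.\ by $\tfrac{\lambda}{2}\|f\|_\infty^2$ via I-MMSE) and continuous in $q$ by dominated convergence applied to the explicit Gaussian expression for $Y^{(q)}$. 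Because $P(q,\lambda)$ does not depend on $r$, it factors out of the supremum, so
\[
\big|\Phi_{\epsilon,\lambda}(q)-\Phi_0(q)\big|\le \epsilon\,P(q,\lambda)\xrightarrow[\epsilon\to 0]{}0 \quad\text{uniformly in } q\in[0,\rho],
\]
where $\Phi_0(q)=\sup_{r\ge 0}i_{\rm RS}(q,r)$ is the objective from Corollary~\ref{cor:mi}. Because $\alpha\in D^*$, the limit $\Phi_0$ attains its infimum on $[0,\rho]$ at the unique point $q^*(\alpha)$. A standard argmin-stability result (any sequence of near-minimizers of a uniformly convergent family of continuous functions on a compact set, whose limit has a unique minimizer, must converge to that minimizer) then yields
\[
q^*_{\alpha,\epsilon,\lambda}\xrightarrow[\epsilon\to 0]{} q^*(\alpha),
\]
for any measurable selection of $q^*_{\alpha,\epsilon,\lambda}$ (needed because $\epsilon$ may lie outside $D_{\alpha,\lambda}$, which however has countable complement by Lemma~\ref{lem:D_full_measure}). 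By the same dominated-convergence argument, $q\mapsto M_f(\lambda,q)$ is continuous on $[0,\rho]$ for each $\lambda\ge 0$, so
\[
\lim_{\epsilon\to 0} M_f\big(\lambda,q^*_{\alpha,\epsilon,\lambda}\big)=M_f\big(\lambda,q^*(\alpha)\big).
\]

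\textbf{Step 2 (outer limit).} At $\lambda=0$, the observation $\sqrt{\lambda}f(Y^{(q)})+Z'=Z'$ is independent of $(V,f(Y^{(q)}))$, hence
\[
M_f(0,q)=\MMSE\big(f(Y^{(q)})\,\big|\,V\big)=\mathcal{E}_f(q).
\]
Joint continuity of $M_f$ on $[0,\infty)\times[0,\rho]$ (dominated convergence, using $\|f\|_\infty<\infty$) gives $M_f(\lambda,q^*(\alpha))\to\mathcal{E}_f(q^*(\alpha))$ as $\lambda\to 0$. Composing Step~1 and Step~2 yields the claim.

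\textbf{Main obstacle.} The delicate ingredient is the inner limit: the convergence $q^*_{\alpha,\epsilon,\lambda}\to q^*(\alpha)$ as $\epsilon\to 0$ hinges crucially on the hypothesis $\alpha\in D^*$, which guarantees that the unperturbed inf-sup has a \emph{unique} optimizer. Without this uniqueness, different selections of minimizers could accumulate at different points of the minimizing set, and $M_f(\lambda,\cdot)$ evaluated at these points could a priori take distinct values. Everything else is essentially continuity of Gaussian integrals and of the MMSE functional, which follow from dominated convergence once $f$ is continuous and bounded (the extension to unbounded $f$ satisfying the $(2+\gamma)$-moment hypothesis of Theorem~\ref{th:gen_f} is handled separately by uniform integrability, as mentioned at the start of the proof).
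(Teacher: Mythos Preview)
Your proof is correct and follows essentially the same two-step structure as the paper: first take $\epsilon\to 0$ to get $q^*_{\alpha,\epsilon,\lambda}\to q^*(\alpha)$ and use continuity of $M_f(\lambda,\cdot)$, then take $\lambda\to 0$ using continuity of $M_f(\cdot,q)$ and the identification $M_f(0,q)=\mathcal{E}_f(q)$.

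The only noteworthy difference is in how you justify the inner limit. The paper observes that $\alpha\in D^*$ forces $0\in D_{\alpha,\lambda}$ (since at $\epsilon=0$ the perturbation term in $\tilde i_{\rm RS}$ disappears and one recovers $i_{\rm RS}$), and then invokes Lemma~\ref{lem:D_full_measure}, which asserts continuity of $\epsilon\mapsto q^*_{\alpha,\epsilon,\lambda}$ on $D_{\alpha,\lambda}$ via the envelope-theorem/concavity machinery already used for Proposition~\ref{prop:q_star}. You instead argue directly from uniform convergence $\Phi_{\epsilon,\lambda}\to\Phi_0$ on the compact $[0,\rho]$ together with uniqueness of the minimizer of $\Phi_0$, appealing to a standard argmin-stability fact. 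Your route is slightly more self-contained (it does not need Lemma~\ref{lem:D_full_measure} or the Milgrom--Segal corollary) and has the bonus of working for \emph{any} selection of minimizers, not just along $\epsilon\in D_{\alpha,\lambda}$; the paper's route has the advantage of reusing already-established infrastructure. Both are perfectly valid, and the continuity claims for $M_f$ are handled the same way in both (dominated convergence and Proposition~\ref{prop:immse}).
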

\begin{proof}
	Let $\alpha \in D^*$ and $\lambda > 0$. We have by definition of $D_{\alpha,\lambda}$, of $D^*$ and using the link between $\tilde i_{\rm RS}$ and $i_{\rm RS}$ given by \eqref{itilde_irs}, that $0 \in D_{\alpha,\lambda}$. By Lemma~\ref{lem:D_full_measure} above, we have
	$$
	q^*_{\alpha,\epsilon,\lambda} \xrightarrow[\epsilon \to 0, \ \epsilon \in D_{\alpha,\lambda}]{} q^*_{\alpha,0,\lambda} = q^*(\alpha) \,.
	$$
	Analogously to Proposition \ref{prop:gen_continuous}, $M_f(\lambda, \cdot)$ is continuous on $[0,\rho]$, thus
	$\lim_{\epsilon \to 0} M_f(\lambda,q^*_{\alpha,\epsilon,\lambda}) = M_f(\lambda,q^*(\alpha))$. And we obtain the result by taking $\lim_{\lambda \to 0}M_f(\lambda,q^*(\alpha))=\mathcal{E}_f(q^*(\alpha))$, using that $M_f(\cdot,q)$ is continuous for $q \in [0,\rho]$ fixed (by Proposition \ref{prop:immse}) and by comparing \eqref{eq:def_M} and \eqref{eq:def_e_f}.
\end{proof}

In order to simplify the proof, we assume that $m = \alpha n$.
By definition of the generalization error \eqref{eq:def_e_gen_f} and of the labels $\bY'$ given by \eqref{Y'f},
$$
\mathcal{E}_{f,n}(\alpha) \defeq \MMSE(Y'_1 | \bY,\bbf{\Phi},\bbf{\Phi}') \,.
$$
\begin{lemma}[Lower bound on the generalization error]
	For all $\alpha \in D^*$,
	$$
	\liminf_{n \to \infty} \mathcal{E}_{f,n}(\alpha) \geq \mathcal{E}_f(q^*(\alpha)) \,.
	$$
\end{lemma}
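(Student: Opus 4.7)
The plan is to exploit the fact that additional conditioning can only decrease the MMSE, comparing $\mathcal{E}_{f,n}(\alpha)$ with the MMSE in the side-information model developed in the previous subsection. First I observe that the side-information vector $\bU = \sqrt{\lambda}\bY' + \bZ'$ is obtained from $\bY'$ via an independent Gaussian channel, so enlarging the conditioning $\sigma$-algebra from $(\bY,\bbf{\Phi},\bbf{\Phi}')$ to $(\bY,\bU,\bbf{\Phi},\bbf{\Phi}')$ yields
\begin{equation*}
\MMSE(Y'_1 \,|\, \bY,\bU,\bbf{\Phi},\bbf{\Phi}') \;\leq\; \MMSE(Y'_1 \,|\, \bY,\bbf{\Phi},\bbf{\Phi}') \;=\; \mathcal{E}_{f,n}(\alpha),
\end{equation*}
valid for every $\lambda \geq 0$ and $\epsilon \geq 0$ (with the convention that at $\epsilon = 0$ the side information is empty).

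Next, fix $\alpha \in D^*$, $\lambda > 0$, and pick $\epsilon \in D_{\alpha,\lambda}\setminus\{0\}$, which is possible since by Lemma~\ref{lem:D_full_measure} the complement of $D_{\alpha,\lambda}$ is countable. Applying Lemma~\ref{lem:lim_mmse_side_gen} to take $n \to \infty$ on the left and $\liminf_{n\to\infty}$ on the right gives
\begin{equation*}
M_f(\lambda, q^*_{\alpha,\epsilon,\lambda}) \;\leq\; \liminf_{n \to \infty} \mathcal{E}_{f,n}(\alpha).
\end{equation*}
The right-hand side does not depend on $\epsilon$ or $\lambda$, so I may now take the iterated limit on the left: first $\epsilon \to 0$ along $D_{\alpha,\lambda}$ and then $\lambda \to 0$.

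Finally, Lemma~\ref{lem:lim_eps_lam} provides exactly the limiting value
\begin{equation*}
\lim_{\lambda \to 0}\lim_{\epsilon \to 0,\ \epsilon \in D_{\alpha,\lambda}} M_f(\lambda, q^*_{\alpha,\epsilon,\lambda}) \;=\; \mathcal{E}_f(q^*(\alpha)),
\end{equation*}
which yields the desired inequality. The argument is short because all the real work has been absorbed into Lemma~\ref{lem:lim_mmse_side_gen} (convergence of the side-informed MMSE via the I-MMSE relation and concavity of the interpolated mutual information) and Lemma~\ref{lem:lim_eps_lam} (commutation of the $\epsilon \to 0$ and $\lambda \to 0$ limits together with continuity of $M_f$). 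The only point that requires care is sequencing the limits so that uniqueness of the optimizer $q^*_{\alpha,\epsilon,\lambda}$ is available at every stage; this is why $\epsilon$ must be restricted to $D_{\alpha,\lambda}\setminus\{0\}$ before being sent to zero, and the matching limit $\lambda \to 0$ is handled using the continuity of $\alpha \mapsto q^*(\alpha)$ at $\alpha \in D^*$.
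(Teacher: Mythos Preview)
Your proposal is correct and follows essentially the same approach as the paper's proof: use the monotonicity of the MMSE under additional conditioning to bound $\mathcal{E}_{f,n}(\alpha)$ below by $\MMSE(Y'_1|\bY,\bU,\bbf{\Phi},\bbf{\Phi}')$, pass to the limit via Lemma~\ref{lem:lim_mmse_side_gen}, and then let $\epsilon\to 0$ and $\lambda\to 0$ using Lemma~\ref{lem:lim_eps_lam}. The paper's version is terser but the argument is identical.
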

\begin{proof}
	Let $\alpha \in D^*$, $\lambda > 0$ and $\epsilon \in D_{\alpha,\lambda} \setminus \{0\}$. Obviously,
$$
\mathcal{E}_{f,n}(\alpha) \geq \MMSE(Y'_1|\bY,\bU,\bbf{\Phi},\bbf{\Phi}') \xrightarrow[n \to \infty]{} M_f(\lambda,q^*_{\alpha,\epsilon,\lambda}) \,,
$$
where we used Lemma~\ref{lem:lim_mmse_side_gen}.
Consequently $\liminf\limits_{n \to \infty} \mathcal{E}_{f,n}(\alpha) \geq M_f(\lambda,q^*_{\alpha,\epsilon,\lambda})$ and we obtain the lower bound by letting $\epsilon,\lambda \to 0$ and using Lemma~\ref{lem:lim_eps_lam}.
\end{proof}

Let us now prove the converse upper bound.

\begin{lemma}\label{lem:upper_gen}
	There exists a constant $C>0$ (that only depend on $f$) such that for all $\alpha,\lambda > 0$ and all $\epsilon \in D_{\alpha,\lambda} \setminus \{0\}$
	$$
	\limsup_{n \to \infty} \mathcal{E}_{f,n}(\alpha + \epsilon) \leq M_f(\lambda,q^*_{\alpha,\epsilon,\lambda}) + C \lambda \,.
	$$
\end{lemma}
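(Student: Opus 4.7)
My plan is to show $\limsup_n E_1 \le E_2 + C\lambda + o_n(1)$, where I write $Y_* \defeq f(\widetilde Y_1) = Y'_1$, $\bW \defeq (\bY, U_2,\ldots,U_{\epsilon n}, \bbf{\Phi}, \bbf{\Phi}')$, $E_1 \defeq \mathcal{E}_{f,n}(\alpha+\epsilon)$, $E_2 \defeq \MMSE(Y'_1|\bY,\bU,\bbf{\Phi},\bbf{\Phi}') = \MMSE(Y_*|\bW, U_1)$, and $E_2' \defeq \MMSE(Y_*|\bW)$. Since Lemma~\ref{lem:lim_mmse_side_gen} already supplies $E_2 \xrightarrow[n\to\infty]{} M_f(\lambda, q^*_{\alpha,\epsilon,\lambda})$, this will imply the lemma. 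The approach decomposes into two independent pieces: first, an estimator-construction argument giving $E_1 \leq E_2' + o_n(1)$; second, a small-information bound $E_2' - E_2 \leq C\lambda$ for a constant $C$ depending only on $f$.

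For the first piece I would exploit exchangeability of the $(\alpha+\epsilon)n+1$ i.i.d.\ sample pairs underlying $\mathcal{E}_{f,n}(\alpha+\epsilon)$: re-picking the ``new'' sample from a putative test-like block of size $\epsilon n+1$ and relabelling, one obtains
\begin{equation*}
E_1 = \MMSE\big(f(\widetilde Y_1)\,\big|\,\bY,\widetilde Y_2,\ldots,\widetilde Y_{\epsilon n+1},\bbf{\Phi},\bbf{\Phi}'\big) + o_n(1),
\end{equation*}
which is a conditioning strictly richer than the one defining $E_2'$ (true labels $\widetilde Y_\mu$ in place of the noisy $U_\mu$, plus even one extra $\widetilde Y_{\epsilon n+1}$). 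From this richer data I would draw fresh i.i.d.\ $Z'_\mu \sim \cN(0,1)$ for $\mu = 2,\ldots,\epsilon n$, synthesise $U_\mu \defeq \sqrt{\lambda}\, f(\widetilde Y_\mu) + Z'_\mu$, and then apply the Bayes-optimal estimator of the side-information problem that uses only $\bW$. Its mean-square error equals $E_2'$ by construction, so by Bayes-optimality on the richer conditioning one gets $E_1 \leq E_2' + o_n(1)$.

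For the second piece, the conditional law of total variance produces the exact identity
\begin{equation*}
E_2' - E_2 = \mathbb{E}\big[\big(\mathbb{E}[Y_*|\bW, U_1] - \mathbb{E}[Y_*|\bW]\big)^2\big].
\end{equation*}
Since $|Y_*|\leq C_f \defeq \|f\|_\infty$, Pinsker's inequality applied to the conditional laws of $Y_*$ yields
\begin{equation*}
\big|\mathbb{E}[Y_*|\bW, U_1] - \mathbb{E}[Y_*|\bW]\big|^2 \leq 4 C_f^2 \,\big\|p(\cdot|\bW,U_1) - p(\cdot|\bW)\big\|_{\rm TV}^2 \leq 2 C_f^2\, \mathrm{KL}\big(p(\cdot|\bW, U_1)\,\big\|\,p(\cdot|\bW)\big).
\end{equation*}
Taking expectation, using $\mathbb{E}[\mathrm{KL}(\cdot\|\cdot)] = I(Y_*; U_1|\bW)$, and invoking the I-MMSE relation (Proposition~\ref{prop:immse}) for the Gaussian side-channel $U_1 = \sqrt{\lambda}\, Y_* + Z'_1$,
\begin{equation*}
I(Y_*; U_1|\bW) = \tfrac12 \int_0^{\lambda} \MMSE\big(Y_*\,\big|\,\sqrt{s}\,Y_* + Z'_1, \bW\big)\,ds \leq \tfrac{\lambda}{2}\, C_f^2,
\end{equation*}
I conclude $E_2' - E_2 \leq C_f^4\lambda$. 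Combining with the first piece and passing to $\limsup_n$ via Lemma~\ref{lem:lim_mmse_side_gen} closes the proof with $C = \|f\|_\infty^4$.

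The main obstacle will be the exchangeability/relabelling step of the first piece: one must carefully reconcile the dimensions of $\bbf{\Phi}'$ in the side-information scenario ($\epsilon n$ rows) with the re-expressed $E_1$ (after picking the test from an $(\epsilon n+1)$-block), an off-by-one that is absorbed in $o_n(1)$ and can equivalently be handled by running the side-information construction with $\epsilon n + 1$ test samples (shifting $\epsilon$ by $1/n$ without affecting the limit). The Pinsker+I-MMSE bound of the second piece is routine once $f$ is assumed continuous and bounded, as already done when setting up the side-information scenario; the extension to the more general $f$ appearing in Theorem~\ref{th:gen_f} is delegated to a separate approximation argument.
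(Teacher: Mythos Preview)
Your proof is correct and shares the paper's two-step skeleton: first reduce $\mathcal{E}_{f,n}(\alpha+\epsilon)$ to the side-information MMSE with $U_1$ removed (your $E_2'$), then control $E_2' - E_2$ by $C\lambda$. The first step is essentially identical to the paper's, which simply notes that $(U_\mu)_{\mu=2}^{\epsilon n}$ is a degraded version of $(\widetilde Y_\mu)_{\mu=2}^{\epsilon n}$, so
\[
E_2' \;\geq\; \MMSE\big(Y'_1 \,\big|\, \bY,(\widetilde Y_\mu)_{\mu=2}^{\epsilon n},\bbf{\Phi},\bbf{\Phi}'\big) \;=\; \mathcal{E}_{f,n}(\alpha+\epsilon-1/n) \;\geq\; \mathcal{E}_{f,n}(\alpha+\epsilon)\,,
\]
which dispatches the off-by-one issue you flagged without any relabelling or $o_n(1)$ bookkeeping. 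For the second step the paper takes a different and somewhat shorter route: it cites the explicit formula $\partial_{\lambda_1}\MMSE(Y'_1|\bY,\bU,\bbf{\Phi},\bbf{\Phi}') = -\E[\Var(Y'_1|\bY,\bU,\bbf{\Phi},\bbf{\Phi}')^2]$ from \cite{guo2011estimation}, bounds this by $\|f\|_\infty^4$, and integrates over $\lambda_1 \in [0,\lambda]$ via the mean value theorem. Your Pinsker + I-MMSE argument is a valid and more self-contained alternative that avoids invoking the second-order MMSE formula, at the cost of a slightly longer chain of inequalities; it lands on the same constant $C=\|f\|_\infty^4$.
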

\begin{proof}
	We will let the signal-to-noise ratio (snr) of the observation of $Y'_1$ go to zero. Let us denote by $\lambda_1$ this snr: $U_1 = \sqrt{\lambda_1}\, Y'_1 + Z'_1$. We will let $\lambda_1$ go from $\lambda$ to $0$ while the other snr for the observations of $U_{\mu}$ for $\mu = 2, \dots,\epsilon n$ will remain equal to $\lambda$. Recall that we denote $\bU=(U_\mu)_{\mu=1}^{\epsilon n}$. Using Proposition~9 from \cite{guo2011estimation}, 
	\begin{align*}
	\Big|
	\frac{\partial}{\partial \lambda_1}
\MMSE(Y'_1|\bY,\bU,\bbf{\Phi},\bbf{\Phi}')
\Big|
&=
\E\big[\Var(Y'_1|\bY,\bU,\bbf{\Phi},\bbf{\Phi}')^2\big]
\leq \E \big[(Y_1')^4\big]  \leq \|f\|_{\infty}^4 \,.
	\end{align*}
	We define $C \defeq \|f\|_{\infty}^4$.
	Consequently, by the mean value theorem,
\begin{align}
	\big| 
\MMSE(Y'_1|\bY,\bU,\bbf{\Phi},\bbf{\Phi}')
-
\MMSE(Y'_1| \bY,(U_{\mu})_{\mu =2}^{\epsilon n},\bbf{\Phi},\bbf{\Phi}')
\big| \leq C \lambda \,.\label{113__}
\end{align}
Since $(U_{\mu})_{\mu =2}^{\epsilon n}$ contains less information than $(\widetilde{Y}_{\mu})_{\mu =2}^{\epsilon n}$ because of the additional Gaussian noise and the application of the function $f$, we have
\begin{equation}\label{eq:ineq_info}
\MMSE(Y'_1| \bY,(U_{\mu})_{\mu =2}^{\epsilon n},\bbf{\Phi},\bbf{\Phi}')
\geq
\MMSE(Y'_1| \bY,(\widetilde{Y}_{\mu})_{\mu =2}^{\epsilon n},\bbf{\Phi},\bbf{\Phi}')
= \mathcal{E}_{f,n}(\alpha + \epsilon-1/n)\ge\mathcal{E}_{f,n}(\alpha + \epsilon)   \,.
\end{equation}
The last identity combined with \eqref{113__} leads to
\begin{align}
\MMSE(Y'_1|\bY,\bU,\bbf{\Phi},\bbf{\Phi}')	+ C\lambda\geq \mathcal{E}_{f,n}(\alpha + \epsilon) \,.
\end{align}
By Lemma~\ref{lem:lim_mmse_side_gen} we know that $\lim_{n\to\infty}\MMSE(Y'_1|\bY,\bU,\bbf{\Phi},\bbf{\Phi}') = M_f(\lambda, q^*_{\alpha,\epsilon,\lambda})$. Thus we conclude by taking the limsup in the inequality above.
\end{proof}

\begin{corollary}[Upper bound on the generalization error]
	For all $\alpha \in D^*$,
	$$
	\limsup_{n \to \infty} \mathcal{E}_{f,n}(\alpha) \leq \mathcal{E}_f(q^*(\alpha)) \,.
	$$
\end{corollary}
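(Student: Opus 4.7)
My strategy is to convert the bound of Lemma~\ref{lem:upper_gen}, which controls $\mathcal{E}_{f,n}(\alpha+\epsilon)$ rather than $\mathcal{E}_{f,n}(\alpha)$ directly, into the desired bound on $\mathcal{E}_{f,n}(\alpha_0)$ by the reparametrisation $\alpha = \alpha_0 - \epsilon$, and then to send $\epsilon \to 0$ followed by $\lambda \to 0$, matching Lemma~\ref{lem:lim_eps_lam}.

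Fix $\alpha_0 \in D^*$ and $\lambda > 0$. Choose any sequence $\epsilon_k \downarrow 0$ such that $\alpha_0 - \epsilon_k > 0$ and $\epsilon_k \in D_{\alpha_0-\epsilon_k,\lambda} \setminus \{0\}$; such a sequence exists since by Lemma~\ref{lem:D_full_measure} the complement of $D_{\alpha,\lambda}$ in $\R_+$ is countable for every $(\alpha,\lambda)$. Applying Lemma~\ref{lem:upper_gen} with $\alpha = \alpha_0 - \epsilon_k$ and $\epsilon = \epsilon_k$ gives
\begin{equation*}
\limsup_{n \to \infty} \mathcal{E}_{f,n}(\alpha_0) \;=\; \limsup_{n \to \infty} \mathcal{E}_{f,n}\big((\alpha_0-\epsilon_k)+\epsilon_k\big) \;\leq\; M_f\big(\lambda,\, q^*_{\alpha_0-\epsilon_k,\,\epsilon_k,\,\lambda}\big) + C\lambda.
\end{equation*}

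Next I would pass to $\epsilon_k \to 0$ on the right-hand side. This requires establishing that $q^*_{\alpha_0-\epsilon_k,\epsilon_k,\lambda} \to q^*(\alpha_0)$. The argument is a minor variant of the proof of Lemma~\ref{lem:D_full_measure}: inspecting the definition of $\tilde i_{\rm RS}$ in Lemma~\ref{lemma5.1}, the $\epsilon$-perturbation $\epsilon\, I(f(Y^{(q)}); \sqrt{\lambda}f(Y^{(q)})+Z'\,|\,V)$ is bounded by $\epsilon \ln(1+\lambda \|f\|_\infty^2)$ uniformly in $q \in [0,\rho]$, so the functional $q \mapsto \sup_{r\geq 0}\tilde i_{\rm RS}(q,r,\lambda)$ converges uniformly on $[0,\rho]$ to $q \mapsto \sup_{r\geq 0} i_{\rm RS}(q,r)$ as $\epsilon \to 0$ and $\alpha_0 - \epsilon \to \alpha_0$. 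Uniqueness of the minimiser at the limit (guaranteed by $\alpha_0 \in D^*$) then forces the minimisers to converge (standard argmin-continuity for uniformly convergent functionals on a compact set). Combined with continuity of $M_f(\lambda,\cdot)$ on $[0,\rho]$ (which follows, as in Lemma~\ref{lem:lim_eps_lam}, from dominated convergence), this yields
\begin{equation*}
\limsup_{n \to \infty} \mathcal{E}_{f,n}(\alpha_0) \;\leq\; M_f(\lambda, q^*(\alpha_0)) + C\lambda.
\end{equation*}

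Finally I would let $\lambda \to 0$. By the I-MMSE relation (Proposition~\ref{prop:immse}) the map $\lambda \mapsto M_f(\lambda,q)$ is continuous on $[0,\infty)$ for every fixed $q$, and comparing~\eqref{eq:def_M} with~\eqref{eq:def_e_f} gives $M_f(0,q) = \mathcal{E}_f(q)$. Hence the right-hand side tends to $\mathcal{E}_f(q^*(\alpha_0))$, completing the upper bound and, together with the earlier lower bound, the proof of Theorem~\ref{th:gen_f} for continuous bounded $f$. The extension to measurable $f$ with $\E[|f(Y_1)|^{2+\gamma}]$ bounded is then obtained by a routine truncation/approximation argument. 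The main obstacle I anticipate is not any single step but the double-limit bookkeeping: one must ensure that $\epsilon_k$ can be chosen in the good set $D_{\alpha_0-\epsilon_k,\lambda}$ \emph{simultaneously} with the needed continuity of $q^*_{\alpha,\epsilon,\lambda}$ at $(\alpha_0,0,\lambda)$, which is precisely where the hypothesis $\alpha_0 \in D^*$ (uniqueness at the unperturbed limit) is crucial.
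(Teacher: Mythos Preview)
Your strategy is the same as the paper's: invoke Lemma~\ref{lem:upper_gen} at a shifted $\alpha$, then send the perturbation parameters to zero. The argmin-continuity argument you sketch for $q^*_{\alpha_0-\epsilon_k,\epsilon_k,\lambda}\to q^*(\alpha_0)$ is fine in spirit. But there is one genuine gap, and you flag it yourself at the end without resolving it.

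\textbf{The gap.} You need $\epsilon_k\in D_{\alpha_0-\epsilon_k,\lambda}$, i.e.\ you are selecting $\epsilon$ on the \emph{diagonal} $\{(\alpha_0-\epsilon,\epsilon):\epsilon>0\}$. Your justification is that each slice $(D_{\alpha,\lambda})^c$ is countable. That is not enough: a set in the $(\alpha,\epsilon)$-plane with countable horizontal slices can still contain an entire non-horizontal line segment (take $\{(\alpha,\epsilon):\alpha+\epsilon=\alpha_0\}$ itself as a toy example). Even exploiting the joint concavity of $(\alpha,\epsilon)\mapsto \inf_q\sup_r\tilde i_{\rm RS}$, differentiability of its restriction to your diagonal only yields that $\{J(q)-\mathcal I_{P_{\rm out}}(q):q\text{ minimizer}\}$ is a singleton, not uniqueness of $q$; so the co-countability argument does not obviously localize to the diagonal.

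\textbf{How the paper avoids this.} The paper decouples the two roles your single parameter plays. It picks first $\epsilon_1>0$ with $\alpha_0-\epsilon_1\in D^*$ (possible since $D^*$ is co-countable), and then, with $\alpha_0-\epsilon_1$ \emph{fixed}, picks $\epsilon_2\in D_{\alpha_0-\epsilon_1,\lambda}$ with $0<\epsilon_2\le\epsilon_1$ (possible since this slice is co-countable). Lemma~\ref{lem:upper_gen} gives
\[
\limsup_{n\to\infty}\mathcal E_{f,n}(\alpha_0-\epsilon_1+\epsilon_2)\le M_f(\lambda,q^*_{\alpha_0-\epsilon_1,\epsilon_2,\lambda})+C\lambda,
\]
and since $\epsilon_2\le\epsilon_1$ one has the monotonicity $\mathcal E_{f,n}(\alpha_0)\le \mathcal E_{f,n}(\alpha_0-\epsilon_1+\epsilon_2)$ (more samples cannot hurt). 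Then Lemma~\ref{lem:lim_eps_lam} is applied \emph{at the fixed point} $\alpha_0-\epsilon_1\in D^*$ to send $\epsilon_2\to0$ and $\lambda\to0$, yielding $\limsup_n\mathcal E_{f,n}(\alpha_0)\le \mathcal E_f(q^*(\alpha_0-\epsilon_1))$. Finally $\epsilon_1\to0$ along $D^*$, using continuity of $q^*$ on $D^*$ (Proposition~\ref{prop:q_star}) and of $\mathcal E_f$ (Proposition~\ref{prop:gen_continuous}).

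So the fix is not deep: replace your single sequence by the two-parameter scheme $(\epsilon_1,\epsilon_2)$ with $\epsilon_2\le\epsilon_1$, and use the monotonicity of $\mathcal E_{f,n}$ in the number of samples to bridge back to $\alpha_0$.
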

\begin{proof}
	Let $\alpha \in D^*, \lambda >0$ and $\epsilon_1 >0$ such that $\alpha - \epsilon_1 \in D^*$. Since by Lemma \ref{lem:D_full_measure} the set $D_{\alpha-\epsilon_1,\lambda}$ is dense in $\R_+$, we can find $\epsilon_2 \in D_{\alpha - \epsilon_1, \lambda}$ such that $0 < \epsilon_2 \leq \epsilon_1$.
	Using Lemma~\ref{lem:upper_gen} above, we have
	$$
	\limsup_{n \to \infty} \mathcal{E}_{f,n}(\alpha -\epsilon_1 + \epsilon_2) \leq M_f(\lambda,q_{\alpha - \epsilon_1, \epsilon_2,\lambda}^*)+ C \lambda \,.
	$$
	Now, using the fact that $\epsilon_2 \leq \epsilon_1$ we have
	$$
	\limsup_{n \to \infty} \mathcal{E}_{f,n}(\alpha)
	\leq
	\limsup_{n \to \infty} \mathcal{E}_{f,n}(\alpha - \epsilon_1 + \epsilon_2)
	\leq M_f(\lambda,q^*_{\alpha - \epsilon_1,\epsilon_2,\lambda})+ C \lambda \,.
	$$
	Now, by Lemma~\ref{lem:lim_eps_lam} we have
	$$
	\lim_{\lambda \to 0} \lim_{\epsilon_2 \to 0} M_f(\lambda,q^*_{\alpha - \epsilon_1,\epsilon_2,\lambda})+ C \lambda = \mathcal{E}_f(q^*(\alpha-\epsilon_1))
	$$
	which leads to $\limsup_{n\to\infty} \mathcal{E}_{f,n}(\alpha) \leq \mathcal{E}_f(q^*(\alpha-\epsilon_1))$.
	We conclude by letting $\epsilon_1 \to 0$ (recall that by Proposition~\ref{prop:q_star} $D^*$ is dense in $\R_+$ so it is possible to find $\epsilon_1>0$ arbitrary small such that $\alpha-\epsilon_1 \in D^*$), using the continuity of $\mathcal{E}_f$ (by Proposition~\ref{prop:gen_continuous}) and the continuity of $q^*$ (by Proposition~\ref{prop:q_star}).
\end{proof}

\begin{proof}[Proof of Theorem \ref{th:gen_f}:]
	For the moment we have proven Theorem \ref{th:gen_f} when $f$ is continuous and bounded. We are going to relax this assumption by approximation.
	Let $f: \R \to \R$ such that $\E [ |f(Y_{\rm new})|^{2+\gamma} ]$ remains bounded as $n$ goes to infinity, for some $\gamma > 0$.
	Let $\epsilon > 0$. By density of the continuous and bounded functions in the space $L^2(\R)$ equipped with the law of $Y^{(q)} \sim P_{\rm out}(\cdot \,|\, \sqrt{q}\,V + \sqrt{\rho-q} \,W)$ ($V,W \iid \cN(0,1)$), we can find a continuous bounded function $\widetilde{f}:\R \to \R$ such that $\E [ (f(Y^{(q)}) - \widetilde{f}(Y^{(q)}))^2] \leq \epsilon$.
	\begin{lemma}\label{lem:CLT_f}
		For all $q \in [0,\rho]$ (because the law of $Y^{(q)}$ does not depend on $q$), we have
		\begin{equation}
f(Y_{\rm new})-\widetilde{f}(Y_{\rm new}) \xrightarrow[n \to \infty]{(d)} f(Y^{(q)})-\widetilde{f}(Y^{(q)}) \,.
		\end{equation}
	\end{lemma}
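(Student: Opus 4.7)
The plan is to verify the portmanteau criterion: for every bounded continuous $g:\R\to\R$,
\[
\E[g(f(Y_{\rm new})-\widetilde f(Y_{\rm new}))] \xrightarrow[n\to\infty]{} \E[g(f(Y^{(q)})-\widetilde f(Y^{(q)}))].
\]
First I would use the central limit theorem (exactly as in the proof of Corollary~\ref{cor:mi}, invoking~\ref{hyp:third_moment} and~\ref{hyp:phi_general}) to obtain $S_n\defeq \boldsymbol{\Phi}_{\rm new}\cdot \bX^*/\sqrt n \xrightarrow{(d)} \sqrt\rho\,V$ with $V\sim\mathcal{N}(0,1)$. Because $\boldsymbol{\Phi}_{\rm new}$ is drawn independently of $\bX^*$, the conditional law of $Y_{\rm new}$ given $S_n$ is exactly the kernel $P_{\rm out}(\cdot|S_n)$, and since $\sqrt q\,V+\sqrt{\rho-q}\,W^* \stackrel{d}{=} \sqrt\rho\,V$ for $V,W^*\iid\mathcal{N}(0,1)$, the marginal law of $Y^{(q)}$ equals $\int P_{\rm out}(\cdot|z)\,\mathcal{N}(dz;0,\rho)$, independently of $q$.

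Second, fix a bounded continuous $g$ and set $h\defeq g\circ(f-\widetilde f)$, a bounded measurable function of the output. Define
\[
H(z)\defeq \int h(y)\,P_{\rm out}(dy\,|\,z),
\]
so that $\E[g(f(Y_{\rm new})-\widetilde f(Y_{\rm new}))]=\E[H(S_n)]$ and $\E[g(f(Y^{(q)})-\widetilde f(Y^{(q)}))]=\E[H(\sqrt\rho\,V)]$. The crux is to show that $H$ is bounded (by $\|g\|_\infty$) and continuous at Lebesgue-almost every $z$; combined with the CLT above and the absolute continuity of the limit law $\mathcal N(0,\rho)$, the portmanteau theorem then delivers $\E[H(S_n)]\to\E[H(\sqrt\rho\,V)]$, which is the claim.

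For continuity of $H$, pick $z_n\to z$, and note that by Fubini combined with~\ref{hyp:cont_pp} the map $x\mapsto\varphi(x,\ba)$ is continuous at $z$ for $P_A$-almost every $\ba$ whenever $z$ lies outside a Lebesgue-null set, so $\varphi(z_n,\ba)\to\varphi(z,\ba)$ for $P_A$-a.e.~$\ba$ at such $z$. Under~\ref{hyp:delta_pos}, dominated convergence (the Gaussian density is bounded by $(2\pi\Delta)^{-1/2}$) yields pointwise convergence of the transition densities $P_{\rm out}(y|z_n)\to P_{\rm out}(y|z)$ for every $y\in\R$, and Scheffé's lemma upgrades this pointwise convergence of probability densities to convergence in total variation. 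Under~\ref{hyp:delta_0} the range of $\varphi$ is $\N$, so continuity of $\varphi(\cdot,\ba)$ at $z$ forces $\varphi(z_n,\ba)=\varphi(z,\ba)$ for $n$ large, and dominated convergence gives $P_{\rm out}(y|z_n)\to P_{\rm out}(y|z)$ pointwise on $\N$, with Scheffé on the counting measure again producing total variation convergence. In either case $\int h\,dP_{\rm out}(\cdot|z_n)\to\int h\,dP_{\rm out}(\cdot|z)$ for the bounded measurable $h$, so $H(z_n)\to H(z)$.

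The main subtlety is precisely this last upgrade: weak continuity of the kernel $z\mapsto P_{\rm out}(\cdot|z)$ would only suffice to integrate continuous test functions, whereas $h=g\circ(f-\widetilde f)$ is merely measurable. The dichotomy between~\ref{hyp:delta_pos} and~\ref{hyp:delta_0} is designed so that in both regimes the relevant density (Gaussian-convolved or discrete) converges pointwise at a.e.~$z$, triggering Scheffé and hence total variation convergence, which is the right mode for pushing measurable functionals through the kernel.
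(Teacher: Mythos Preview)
Your proof is correct and close in spirit to the paper's, but the mechanism for handling the measurable test function $h=g\circ(f-\widetilde f)$ under~\ref{hyp:delta_pos} is genuinely different. The paper works with the random-function representation $Y_{\rm new}=\varphi(S_n,\bA_{\rm new})+\sqrt{\Delta}Z_{\rm new}$, first establishes $\varphi(S_n,\bA_{\rm new})\xrightarrow{(d)}\varphi(\sqrt\rho Z,\bA_{\rm new})$, and then observes that $x\mapsto\int h(z)\,\frac{1}{\sqrt{2\pi\Delta}}e^{-(z-x)^2/(2\Delta)}\,dz$ is a continuous bounded function of $x$ (Gaussian convolution smooths the bounded measurable $h$), so ordinary weak convergence suffices. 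You instead stay at the level of $S_n$ and prove that $z\mapsto P_{\rm out}(\cdot|z)$ is continuous in \emph{total variation} at Lebesgue-a.e.\ $z$ via Scheff\'e, which then lets you integrate any bounded measurable $h$. Under~\ref{hyp:delta_0} the two arguments essentially coincide: both exploit that convergence in distribution of $\N$-valued random variables automatically upgrades to convergence of all point masses, hence total variation.

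Your route is a bit more structural (it would work for any kernel whose densities vary continuously in the parameter), whereas the paper's route is shorter because it leverages the explicit Gaussian form of the noise to avoid Scheff\'e altogether. The one place to be careful in your write-up is the Fubini swap you invoke to pass from ``for $P_A$-a.e.\ $\ba$, $\varphi(\cdot,\ba)$ is continuous a.e.'' to ``for Lebesgue-a.e.\ $z$, $\varphi(\cdot,\ba)$ is continuous at $z$ for $P_A$-a.e.\ $\ba$''; this requires joint measurability of the discontinuity set, which is standard but worth a word.
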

	\begin{proof}
		Let $(\bA_{\rm new},Z_{\rm new}) \sim P_A \otimes \cN(0,1)$ such that $Y_{\rm new} = \varphi(\bbf{\Phi}_{\rm new} \cdot \bX^* /\sqrt{n},\bA_{\rm  new}) + \sqrt{\Delta} Z_{\rm new}$.
	By the central limit theorem (that we apply under \ref{hyp:third_moment}-\ref{hyp:phi_general} and using \ref{hyp:cont_pp}) 
	\begin{equation}\label{eq:clt0}
	\varphi\Big(\frac{\bbf{\Phi}_{\rm new}\cdot \bX^*}{\sqrt{n}},\bA_{\rm new}\Big)
	\xrightarrow[n \to \infty]{(d)}
	\varphi(\sqrt{\rho} Z,\bA_{\rm new}) \,,
	\end{equation}
	where $Z \sim \cN(0,1)$ is independent from $\bA_{\rm new}$.
	Under \ref{hyp:delta_0} this proves Lemma \ref{lem:CLT_f}, because in that case $Y_{\rm new} =  \varphi(\bbf{\Phi}_{\rm new} \cdot \bX^* /\sqrt{n},\bA_{\rm  new})$ takes values in $\N$.
	Under \ref{hyp:delta_pos} we let $g: \R \to \R$ be a continuous bounded function and we write $h \defeq f - \widetilde{f}$. Then
	\begin{align*}
		\E \big[ g \circ h(Y_{\rm new})\big]
	&= \E \Big[ g \circ h \Big(
		\varphi\Big(\frac{\bbf{\Phi}_{\rm new}\cdot \bX^*}{\sqrt{n}},\bbf{A}_{\rm new}\Big) + \sqrt{\Delta} Z_{\rm new}
	\Big) \Big]
	\\
	&= \E \Big[ 
	\frac{1}{\sqrt{2\pi\Delta}}\int g \circ h(z) \exp\Big\{-\frac{1}{2\Delta}\Big(z - \varphi\Big(\frac{\bbf{\Phi}_{\rm new}\cdot \bX^*}{\sqrt{n}},\bbf{A}_{\rm new}\Big)\Big)^2\Big\}
	dz \Big]\,.
	\end{align*}
	The function $x \mapsto \int g \circ h(z) \frac{e^{-\frac{1}{2\Delta}(z-x)^2}}{\sqrt{2\pi \Delta}} dz$ is continuous and bounded: \eqref{eq:clt0} then gives that
	\begin{align*}
		\E \big[ g \circ h(Y_{\rm new})\big]
		\xrightarrow[n \to \infty]{}
		& \E \Big[ 
		\frac{1}{\sqrt{2\pi\Delta}}\int g \circ h(z) \exp\Big\{-\frac{1}{2\Delta}\Big(z - \varphi\big(\sqrt{\rho}Z,\bbf{A}_{\rm new}\big)\Big)^2\Big\}
	dz \Big] = \E \big[ g \circ h(Y^{(q)})\big],
	\end{align*}
	which concludes the proof by the Portemanteau Theorem.  
	\end{proof}

	The sequence $\big((f(Y_{\rm new})-\tilde{f}(Y_{\rm new}))^2\big)_{n \geq 0}$ is uniformly integrable because bounded in $L^{1+\gamma}$ with $\gamma>0$. Consequently, Lemma \ref{lem:CLT_f} above implies
	$$
	\E \big[
		(f(Y_{\rm new}) - \widetilde{f}(Y_{\rm new}))^2
	\big] \xrightarrow[n \to \infty]{}\big\| f(Y^{(q)}) - \widetilde{f}(Y^{(q)}) \big\|_{L^2}^2=
	\E \big[
		(f(Y^{(q)}) - \widetilde{f}(Y^{(q)}))^2
	\big] \leq \epsilon \,.
	$$
	Therefore, we can find $n_0 \in \N$ such that for all $n\geq n_0$, $\| f(Y_{\rm new}) - \widetilde{f}(Y_{\rm new}) \|_{L^2}^2= \E [ (f(Y_{\rm new}) - \widetilde{f}(Y_{\rm new}))^2 ] \leq 2 \epsilon$.
	If we now apply Theorem \ref{th:gen_f} for $\widetilde{f}$, we can find $n_1 \geq n_0$ such that for all $n \geq n_1$, $| \mathcal{E}_{\widetilde{f},n}^{1/2} - \mathcal{E}_{\widetilde{f}}(q^*(\alpha))^{1/2}| \leq \sqrt{\epsilon}$. Let $n \geq 1$, and compute
	\begin{align*}
		\Big| \mathcal{E}_{f,n}^{1/2} -\mathcal{E}_{\widetilde{f},n}^{1/2} \Big|
		&=
		\Big| \big\| f(Y_{\rm new}) - \E[f(Y_{\rm new})| \bY,\bbf{\Phi},\bbf{\Phi}_{\rm new}] \big\|_{L^2}
		- \big\| \widetilde{f}(Y_{\rm new}) - \E[\widetilde{f}(Y_{\rm new})| \bY,\bbf{\Phi},\bbf{\Phi}_{\rm new}] \big\|_{L^2} \Big|
		\\
		&\leq
		\big\| f(Y_{\rm new}) - \widetilde{f}(Y_{\rm new}) \big\|_{L^2}
		+ \big\| \E [f(Y_{\rm new}) - \widetilde{f}(Y_{\rm new})| \bY,\bbf{\Phi},\bbf{\Phi}_{\rm new}] \big\|_{L^2} 
		\\
		&\leq
		2 \big\| f(Y_{\rm new}) - \widetilde{f}(Y_{\rm new}) \big\|_{L^2} \leq 2 \sqrt{2 \epsilon} \,,
	\end{align*}
	where we successively used the triangular inequality twice for the first inequality ($|| a- b| - |x -y| | \le |a-x + y-b| \le |a-x| + |y-b|$) and Jensen's inequality for the second. By the same arguments we have also
	$| \mathcal{E}_{f}(q)^{1/2} - \mathcal{E}_{\widetilde{f}}(q)^{1/2} | \leq 2 \sqrt{\epsilon}$ for all $q \in [0,\rho]$. We conclude that for all $n \geq n_1$, 
	\begin{align}
		\big|\mathcal{E}_{f,n}^{1/2} - \mathcal{E}_{f}(q^*(\alpha))^{1/2}\big| &\leq \big|\mathcal{E}_{f,n}^{1/2} - \mathcal{E}_{\widetilde f,n}^{1/2}\big| + \big|\mathcal{E}_{\widetilde f}(q^*(\alpha))^{1/2} - \mathcal{E}_{f}(q^*(\alpha))^{1/2}\big|+ \big|\mathcal{E}_{\widetilde f,n}^{1/2} - \mathcal{E}_{\widetilde f}(q^*(\alpha))^{1/2}\big| \nn
	&\leq (2 \sqrt{2} + 3) \sqrt{\epsilon}\,,
	\end{align}
	which proves Theorem \ref{th:gen_f}.
\end{proof}

\subsection{Generalization error of GAMP: Proof of Proposition~\ref{claim:gamp2}}\label{appendix:proof_gen2}

Let us decompose:
\begin{align}
	\mathcal{E}_{\rm gen}^{{\scriptsize\rm GAMP}, t}
	&\defeq
	\E \big[
		\big(Y_{\rm new} - \widehat{Y}^{{\scriptsize\rm GAMP},t}\big)^2
	\big]
	= 
	\E \big[ Y_{\rm new}^2 \big]
	+ \E \big[ \big(\widehat{Y}^{{\scriptsize\rm GAMP},t}\big)^2 \big]
	- 2 \E \big[ Y_{\rm new} \widehat{Y}^{{\scriptsize\rm GAMP},t} \big] \,.
	\label{eq:dec_gen_gamp}
\end{align}

\begin{lemma}\label{lem:CLT_GAMP}
	We have
	\begin{align}
		\E \big[ Y_{\rm new} \widehat{Y}^{{\scriptsize\rm GAMP},t} \big] 
\xrightarrow[n \to \infty]{}
\E_{V}\Big[\E_W\Big[\int dY \,Y P_{\rm out}(Y|\sqrt{q^t}\,V + \sqrt{\rho - q^t} \,W)\Big]^2\Big] \,.
	\end{align}
\end{lemma}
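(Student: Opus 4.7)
The plan is to condition on $(\bX^*, \widehat{\bx}^{t-1})$ and exploit the conditional Gaussianity of the random pair $(S^*, \widehat S) \defeq (\bbf{\Phi}_{\rm new}\cdot \bX^*/\sqrt{n},\ \bbf{\Phi}_{\rm new}\cdot \widehat{\bx}^{t-1}/\sqrt{n})$. Since $(\Phi_{\rm new,i})\iid\cN(0,1)$, this pair is centered Gaussian with covariance
\begin{equation*}
\Sigma_n \defeq \begin{pmatrix} \|\bX^*\|^2/n & \widehat{\bx}^{t-1}\cdot\bX^*/n\\ \widehat{\bx}^{t-1}\cdot\bX^*/n & \|\widehat{\bx}^{t-1}\|^2/n\end{pmatrix}.
\end{equation*}
By the strong law of large numbers $\|\bX^*\|^2/n\to\rho$, while Claim~\ref{claim:gamp} (applied at iteration $t-1$, assumed here to hold in probability as in the hypotheses of the proposition) gives that the remaining entries converge to $q^t$. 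Thus $\Sigma_n$ converges in probability to the (rank-one) matrix corresponding to the Gaussian representation $(S^*,\widehat S)\stackrel{d}{\to}(\sqrt{q^t}\,V+\sqrt{\rho-q^t}\,W,\ \sqrt{q^t}\,V)$ with $V,W\iid\cN(0,1)$.

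Next I will rewrite the quantity of interest as a test function of $(S^*,\widehat S)$. Writing $F_1(s)\defeq\int y\,P_{\rm out}(y|s)dy$ and $F_2(s)\defeq\int \mathcal{D}w\,dy\,y\,P_{\rm out}(y\,|\,s+\sqrt{\rho-q^t}\,w)$, one has
\begin{equation*}
\E\big[Y_{\rm new}\,\widehat Y^{\text{\scriptsize\rm GAMP},t}\,\big|\,\bX^*,\widehat{\bx}^{t-1}\big]
= \E_{\bbf{\Phi}_{\rm new}}\!\big[F_1(S^*)F_2(\widehat S)\big].
\end{equation*}
Under the assumed continuity of $x\mapsto P_{\rm out}(\cdot|x)$ in the Wasserstein-$2$ distance, both $F_1$ and $F_2$ are continuous (and the Gaussian convolution in $F_2$ gives extra smoothing); moreover $F_1,F_2$ inherit polynomial growth bounds from the $L^{2+\gamma}$ / $L^{2+\eta}$ hypotheses on $Y_1$ and $\widehat Y^{\text{\scriptsize\rm GAMP},t}$. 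Combining the joint convergence $(S^*,\widehat S,\Sigma_n)\Rightarrow(\sqrt{q^t}V+\sqrt{\rho-q^t}W,\sqrt{q^t}V,\Sigma)$ with uniform integrability (obtained from Cauchy-Schwarz and the $2+\gamma$, $2+\eta$ moment bounds), one passes to the limit to obtain
\begin{equation*}
\E\big[Y_{\rm new}\,\widehat Y^{\text{\scriptsize\rm GAMP},t}\big]\ \to\ \E_V\!\big[F_1(\sqrt{q^t}V+\sqrt{\rho-q^t}W)\,F_2(\sqrt{q^t}V)\big].
\end{equation*}

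Finally, since the $W$-average commutes with the integration defining $F_2$ and $F_1$ does not depend on the inner $\mathcal{D}w$ variable of $F_2$, taking the expectation over $W$ inside the product yields
\begin{equation*}
\E_V\!\Big[\,\E_W\!\int dY\,Y\,P_{\rm out}(Y|\sqrt{q^t}V+\sqrt{\rho-q^t}W)\ \cdot\ \E_w\!\int dY\,Y\,P_{\rm out}(Y|\sqrt{q^t}V+\sqrt{\rho-q^t}w)\Big],
\end{equation*}
which is exactly the square announced in the lemma. The main obstacles in the proof are twofold: (i) properly justifying the uniform integrability needed to promote the conditional distributional convergence to convergence of the unconditional expectation, which requires a careful combination of the $L^{2+\gamma}$ and $L^{2+\eta}$ bounds with Cauchy-Schwarz; and (ii) verifying that the Wasserstein-$2$ continuity of $P_{\rm out}(\cdot|x)$ indeed yields the continuity of $F_1$ and $F_2$ needed in the limiting step (together with the fact that the Gaussian smoothing in $F_2$ makes this property robust).
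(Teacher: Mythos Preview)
Your proposal is correct and follows essentially the same approach as the paper: both represent the quantity as the expectation of an (almost-everywhere) continuous function of the conditionally Gaussian pair $(\bbf{\Phi}_{\rm new}\!\cdot\!\bX^*/\sqrt n,\ \bbf{\Phi}_{\rm new}\!\cdot\!\widehat{\bx}/\sqrt n)$, pass to the distributional limit via convergence of the empirical covariance matrix, and upgrade to convergence of expectations through uniform integrability from the $L^{2+\gamma}$ and $L^{2+\eta}$ moment assumptions. The paper's uniform-integrability step is marginally more direct (it bounds $\E|H_n|^{1+\eta}\le \tfrac12\E|Y_{\rm new}|^{2+2\eta}+\tfrac12\E|\widehat Y^{\rm GAMP,t}|^{2+2\eta}$ via Jensen rather than invoking Cauchy--Schwarz), but the difference is cosmetic.
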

\begin{proof}
Start by writing
	\begin{align*}
\E \big[ Y_{\rm new} \widehat{Y}^{{\scriptsize\rm GAMP},t} \big] 
=
\E \int y\, y' \,
P_{\rm out}\Big(y \Big| \frac{\bbf{\Phi}_{\rm new} \cdot \bX^*}{\sqrt{n}} \Big)  
P_{\rm out}\Big(y' \Big| \frac{\bbf{\Phi}_{\rm new} \cdot {\widehat{\bx}}^{t}}{\sqrt{n}} + \sqrt{\rho - q^t}\, W \Big)
dy dy'
	\end{align*}
	where $W \sim \cN(0,1)$ is independent of everything else. $\bbf{\Phi}_{\rm new} \sim \cN(0,\bbf{I}_n)$ is independent of $\bX^*$ and ${\widehat{\bx}}^t$, so, conditionally on $\bX^*,{\widehat{\bx}}^t$ we have
	$$
	\Big(
	\frac{\bbf{\Phi}_{\rm new} \cdot \bX^*}{\sqrt{n}},
\frac{\bbf{\Phi}_{\rm new} \cdot {\widehat{\bx}}^t}{\sqrt{n}} 
	\Big)
	\sim 
	\cN \bigg(0,\frac{1}{n}
		\begin{psmallmatrix}
			 \|\bX^*\|^2 & {\widehat{\bx}}^t \cdot \bX^* \\
			 {\widehat{\bx}}^t \cdot \bX^* & 
			  \| {\widehat{\bx}}^t \|^2 
		\end{psmallmatrix}
	\bigg) \,.
	$$
	We assumed that \eqref{eq:lim_gamp} holds, i.e. $ \bX^* \cdot {\widehat{\bx}}^t/n \to q^t$ and $\|{\widehat{\bx}}^t\|^2/n \to q^t$, in probability. By the law of large numbers $ \|\bX^*\|^2/n \to \rho$ in probability. Consequently,
	$$
	\Big(
	\frac{\bbf{\Phi}_{\rm new} \cdot \bX^*}{\sqrt{n}},
\frac{\bbf{\Phi}_{\rm new} \cdot {\widehat{\bx}}^t}{\sqrt{n}} 
	\Big)
	\xrightarrow[n \to \infty]{(d)}
	\cN \Big(0,
		\begin{psmallmatrix}
			\rho & q^t \\
			q^t& q^t
		\end{psmallmatrix}
	\Big) \,.
	$$
	Since $x \mapsto P_{\rm out}(\cdot | x)$ is continuous almost everywhere for the Wasserstein distance of order $2$, the function
	$h: (a,b) \mapsto \E_W \int y y' P_{\rm out}(y | a) P_{\rm out}(y' | b + \sqrt{\rho-q^t}\,W) dy dy'$ with $W\sim {\cal N}(0,1)$
	is continuous almost everywhere. Therefore
	\begin{equation}\label{eq:cv_law_H}
	H_n
	\defeq
	h\Big(
	\frac{\bbf{\Phi}_{\rm new} \cdot \bX^*}{\sqrt{n}},
\frac{\bbf{\Phi}_{\rm new} \cdot {\widehat{\bx}}^t}{\sqrt{n}}
	\Big)
	\xrightarrow[n \to \infty]{(d)}
	h(\sqrt{q^t} \,Z_0 + \sqrt{\rho - q^t}\, Z_1, \sqrt{q^t} \,Z_0) \,,
\end{equation}
 where $Z_0, Z_1\iid {\cal N}(0,1)$. We have by Jensen's inequality
	\begin{align*}
		\E \big[ \big|H_n\big|^{1+\eta} \big]
		&\leq \E \big[ \big|Y_{\rm new} \widehat{Y}^{{\scriptsize\rm GAMP},t}\big|^{1+\eta} \big]
		\leq \E \Big[ \Big(\frac{1}{2}Y_{\rm new}^2  + \frac{1}{2} (\widehat{Y}^{{\scriptsize\rm GAMP},t})^2 \Big)^{1+\eta} \Big]
		\\
		&\leq \frac{1}{2} \E |Y_1|^{2+2\eta}
		+\frac{1}{2} \E \big|\widehat{Y}^{{\scriptsize\rm GAMP},t}\big|^{2+2\eta} \,.
	\end{align*}
	By assumption, there exists $\eta>0$ such that the two last terms above remain bounded with $n$: $H_n$ is therefore bounded in $L^{1+\eta}$ and is therefore uniformly integrable.
	From \eqref{eq:cv_law_H} we thus get
\begin{align*}
\E \big[ Y_{\rm new} \widehat{Y}^{{\scriptsize\rm GAMP},t} \big] 
= \E [H_n]
\xrightarrow[n \to \infty]{}
&\E 
\big[h(\sqrt{q^t} \,Z_0 + \sqrt{\rho - q^t} \,Z_1, \sqrt{q^t} \,Z_0) \big]
\\
= \,&\E_{V}\Big[\E_W\Big[\int dY \,Y P_{\rm out}(Y|\sqrt{q^t}\,V + \sqrt{\rho - q^t} \,W)\Big]^2\Big] \,.
\end{align*}
\end{proof}

Following the arguments of Lemma \ref{lem:CLT_GAMP} one can also show that
\begin{align*}
	\E \big[ \big(\widehat{Y}^{{\scriptsize\rm GAMP},t}\big)^2 \big] 
	&\xrightarrow[n \to \infty]{}
\E_{V}\Big[\E_W\Big[\int dY \,Y P_{\rm out}(Y|\sqrt{q^t}\,V + \sqrt{\rho - q^t} \,W)\Big]^2\Big]  \,,
\\
\E \big[ Y_{\rm new}^2 \big] 
&\xrightarrow[n \to \infty]{}
\E_{V}\int dY \,Y^2 P_{\rm out}(Y|\sqrt{\rho}\,V)  \,.
\end{align*}
This proves (together with \eqref{eq:dec_gen_gamp} and Lemma \ref{lem:CLT_GAMP}) Proposition \ref{claim:gamp2}.

\subsection{Limit of the overlap: Proof of Theorem~\ref{th:overlap}}\label{sec:proof_overlap}

Recall the definition of the overlap \eqref{eq:def_overlap_Q}: $Q_n\defeq  \bX^*\cdot \bx/n$,
where $\bx = (x_1, \dots, x_n)$ is a sample from the posterior distribution $P(\bX^*|\,\bY,\bbf{\Phi})$, independently of everything else.
In this section we will show that $|Q_n|$ converges in probability to $q^*(\alpha)$, when $\alpha \in D^*$ given by \eqref{Dstar}. We will first show an upper-bound in Sec.~\ref{sec:upper_bound_Q} below, before proving the converse lower-bound in Sec.~\ref{sec:lower_bound_Q}.

\subsubsection{Upper bound on the overlap}\label{sec:upper_bound_Q}
\begin{proposition}[Upper bound on the overlap] \label{prop:upper_bound_overlap}
	For all $\alpha \in D^*$ and for all $\epsilon > 0$,
	$$
	\P \big(|Q_n| \geq q^*(\alpha) + \epsilon \big) \xrightarrow[n \to \infty]{} 0 \,.
	$$
\end{proposition}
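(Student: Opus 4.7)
The plan is to establish the stronger statement that $|Q_n|\to q^*(\alpha)$ in $L^2$, from which the claimed tail bound follows by Markov's inequality. The argument combines two ingredients, each obtained from Theorem~\ref{th:RS_1layer} applied to a suitably perturbed model, together with the Nishimori identity and the I-MMSE theorem.

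First I would show $\E\langle Q_n^2\rangle\to q^*(\alpha)^2$. Augment the original GLM with matrix-valued Gaussian side observations $\widetilde Y_{ij}=\sqrt{\lambda/n}\,X_i^*X_j^*+Z_{ij}$ with $Z_{ij}\iid\cN(0,1)$, for $1\le i\le j\le n$. An obvious adaptation of the adaptive-interpolation argument of Section~\ref{sec:interpolation} yields convergence of the perturbed free entropy $f_n(\lambda)\to f_\infty(\lambda)$. A Gaussian integration-by-parts in $Z_{ij}$ combined with the Nishimori identity gives $f_n'(\lambda)=\frac14\E\langle Q_n^2\rangle_\lambda+o(1)$. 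On the other hand, $f_n(\lambda)$ is convex in $\lambda$, so pointwise convergence implies convergence of derivatives at points of differentiability of $f_\infty$; the envelope theorem (legitimate since $\alpha\in D^*$ guarantees uniqueness of $q^*(\alpha)$) identifies the right-derivative at $\lambda=0$ and yields $\E\langle Q_n^2\rangle\to q^*(\alpha)^2$.

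Second, I would show $\E\langle |Q_n|\rangle\to q^*(\alpha)$ by a symmetry-breaking perturbation. Introduce the vector-valued side observations $Y''_i=\sqrt{\lambda}\,X_i^*+Z''_i$ with $Z''_i\iid\cN(0,1)$. The same Gaussian integration-by-parts and Nishimori give $f_n'(\lambda)=\frac12\E\langle Q_n\rangle_\lambda$. By convexity of $f_n(\lambda)$ and the adapted replica formula applied with this perturbation, plus the envelope theorem on $f_\infty$, one obtains $\E\langle Q_n\rangle_\lambda\to q^*(\alpha)$ as $n\to\infty$ for almost every small $\lambda>0$. The Nishimori identity gives $\E\langle Q_n\rangle_\lambda=\E\|\hat{\bX}_\lambda\|^2/n\ge 0$, hence $\E\langle|Q_n|\rangle_\lambda\ge\E\langle Q_n\rangle_\lambda$; Jensen combined with the first ingredient (extended to the perturbed model by continuity of $f_\infty$ in $\lambda$) gives $\E\langle|Q_n|\rangle_\lambda\le\sqrt{\E\langle Q_n^2\rangle_\lambda}\to q^*(\alpha)$. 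The sandwich yields $\E\langle|Q_n|\rangle_\lambda\to q^*(\alpha)$, and a uniform-in-$n$ continuity estimate of the form $|\E\langle|Q_n|\rangle_\lambda-\E\langle|Q_n|\rangle_0|=O(\sqrt{\lambda})$ (obtained by differentiating in $\lambda$ and bounding via Cauchy--Schwarz and boundedness of the prior) allows to transfer the limit back to $\lambda=0$.

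Combining the two ingredients, $\E\langle(|Q_n|-q^*(\alpha))^2\rangle=\E\langle Q_n^2\rangle-2q^*(\alpha)\E\langle|Q_n|\rangle+q^*(\alpha)^2\to 0$, and the claimed tail bound follows by Markov. The main obstacle will be the second ingredient: in models with a sign symmetry of the output channel (e.g.\ $\varphi(z)=|z|$, where $\E\langle Q_n\rangle_0=0$ identically), the perturbation is essential to break the sign symmetry, and the iterated limits must be taken in the correct order ($n\to\infty$ first, then $\lambda\to 0^+$). The key point enabling passage back to $\lambda=0$ is that we work with $|Q_n|$ rather than $Q_n$, since $|Q_n|$ is invariant under $\bx\mapsto-\bx$, so its Gibbs expectation is continuous in $\lambda$ at $\lambda=0$ even in the sign-symmetric case.
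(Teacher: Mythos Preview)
Your plan attempts to prove the stronger statement $|Q_n|\to q^*(\alpha)$ in $L^2$, which would indeed imply the proposition, but there is a genuine gap in your second ingredient at the ``transfer back to $\lambda=0$'' step. The claimed uniform-in-$n$ estimate $|\E\langle|Q_n|\rangle_\lambda - \E\langle|Q_n|\rangle_0| = O(\sqrt\lambda)$ does not follow from differentiation and Cauchy--Schwarz. After Gaussian integration by parts over $\bZ''$ the $1/\sqrt\lambda$ prefactor cancels, but you are left with
\[
\partial_\lambda \E\langle|Q_n|\rangle_\lambda \;=\; n\,\E\big[\langle|Q_n|Q_n\rangle_\lambda - \langle|Q_n|\rangle_\lambda\langle Q_n\rangle_\lambda\big] \;+\; (\text{similar order-}n\text{ terms}),
\]
and there is no reason for the Gibbs covariance $\langle|Q_n|Q_n\rangle - \langle|Q_n|\rangle\langle Q_n\rangle$ to be $O(1/n)$ uniformly in $\lambda$. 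For instance, whenever $Q_n>0$ with high Gibbs probability this covariance equals the thermal variance $\langle Q_n^2\rangle-\langle Q_n\rangle^2$, which a priori is only $O(1)$ (and is genuinely $\Theta(1)$ at $\lambda=0$ in sign-symmetric models where the posterior is bimodal). Hence the derivative is of order $n$, not $O(1)$ or $O(1/\sqrt\lambda)$, and the limits $n\to\infty$ and $\lambda\to 0^+$ cannot be exchanged by this route. (A secondary remark: your first ingredient only delivers $\limsup_n \E\langle Q_n^2\rangle \le q^*(\alpha)^2$, not equality, since convexity controls derivatives only from one side at the boundary point $\lambda=0$; this would actually suffice for your final combination, so the real obstruction is the transfer step.)

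The paper sidesteps this difficulty by a different, more direct route: it does \emph{not} try to establish the lower bound $\liminf\E\langle|Q_n|\rangle\ge q^*(\alpha)$ here at all---the proposition asks only for the upper tail. Instead it introduces a tensor side observation $\bY' = \sqrt{\lambda/n^{2p-1}}\,(\bX^*)^{\otimes 2p} + \bZ'$ of order $2p$ (your matrix perturbation is the case $p=1$). Exactly as in your first ingredient, I-MMSE plus the monotonicity ``MMSE at $\lambda=0$ $\ge$ MMSE at $\lambda>0$'' yields $\limsup_n \E[Q_n^{2p}] \le q^*(\alpha)^{2p}$ for every $p\ge 1$. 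Markov's inequality then gives
\[
\limsup_{n\to\infty} \P\big(|Q_n|\ge q^*(\alpha)+\epsilon\big)\;\le\; \Big(\frac{q^*(\alpha)}{q^*(\alpha)+\epsilon}\Big)^{2p},
\]
and letting $p\to\infty$ concludes. This is why the hypothesis that all moments of $P_0$ are finite appears in Theorem~\ref{th:overlap}. The matching lower bound on $|Q_n|$ is proved later by a completely separate argument through the generalization error (Theorem~\ref{th:gen_f}), precisely because a clean ``$\lambda\to 0$ transfer'' of the type you propose is not available.
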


Let us fix $\alpha \in D^*$ and let $p \geq 1$.
In order to obtain an upper bound on the overlap, we consider an observation model with some (small) extra information (that takes the form of a tensor of order $2p$) in addition of the original model \eqref{eq:channel}, i.e. we observe
\begin{align}\label{pert_tensoir}
	\begin{cases}
	&\bY \sim P_{\rm out}( \cdot \, | \,  \bbf{\Phi} \bX^*/\sqrt{n} ) \,,
	\\
	&\bY' = \sqrt{\frac{\lambda}{n^{2p-1}}} (\bX^*)^{\otimes 2p} + \bZ' \,,	
	\end{cases}
\end{align}
where $\lambda \geq 0$, $\bZ'=(Z'_{i_1 \dots i_{2p}})_{1 \leq i_1, \dots, i_{2p} \leq n} \iid \cN(0,1)$ and $(\bX^*)^{\otimes 2p} = (X_{i_1} \dots X_{i_{2p}})_{1 \leq i_1, \dots, i_{2p} \leq n}$. In order to prove Proposition \ref{prop:upper_bound_overlap} we need the two results below, which are proven after the proof of Proposition \ref{prop:upper_bound_overlap}.
\begin{proposition}[Mutual information of the perturbed model]\label{prop:mi_side_tensor}
	For all $\lambda \geq 0$, the mutual information for model \eqref{pert_tensoir} verifies
	\begin{align}
		\lim_{n \to \infty} \frac{1}{n} &I \big(\bX^*; \bY,\bY' \big| \bbf{\Phi}\big)		=I(\lambda)\,,\label{MI_pert_tens}
		\end{align}
		where the right-hand-side is 
		\begin{align}
		I(\lambda) \!\defeq\! {\adjustlimits \inf_{q \in [0,\rho]} \sup_{r \geq 0}}
		\Big\{
			I_{P_0}(r \!+\! 2p\lambda q^{2p-1}) + \alpha \mathcal{I}_{P_{\rm out}}(q) - \frac{r}{2}(\rho-q) + \frac{2p-1}{2} \lambda q^{2p} -\rho p \lambda q^{2p-1}
			+\frac{\lambda}{2} \rho^{2p}
		\Big\}. \label{eq:defF}
	\end{align}
\end{proposition}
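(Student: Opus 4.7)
\textbf{Proof plan for Proposition \ref{prop:mi_side_tensor}.} The strategy is to extend the adaptive interpolation method of Section \ref{sec:interpolation} to accommodate simultaneously the GLM channel and the extra order-$2p$ tensor observation. The key observation is that, by standard manipulations for the spiked tensor model (the Nishimori identities combined with Gaussian integration by parts, as e.g.\ in tensor PCA analyses based on this same methodology), the tensor observation acts asymptotically as an additional side Gaussian channel on each signal component with effective signal-to-noise ratio $2p\lambda q^{2p-1}$, where $q$ is the overlap; this is precisely what produces the argument $r+2p\lambda q^{2p-1}$ of $I_{P_0}$ in \eqref{eq:defF} and the additional polynomial-in-$q$ terms $\tfrac{2p-1}{2}\lambda q^{2p}-\rho p\lambda q^{2p-1}+\tfrac{\lambda}{2}\rho^{2p}$.

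Concretely, I would enlarge the interpolating Hamiltonian \eqref{interpolating-ham} by adding a $t$-dependent tensor term coupling $\bx$ to the observation $\bY'$. At $t=0$ one recovers the full model \eqref{pert_tensoir}, giving $f_{n,\epsilon}(0)=f_n+\tfrac{\lambda}{2}\rho^{2p}-\tfrac{1}{2}+\mathcal{O}(s_n)$ up to the usual Nishimori-type simplifications for the tensor channel. At $t=1$ the inference decouples: one obtains $m$ copies of the scalar channel \eqref{eq:Pout_scalar_channel} at level $\int_0^1 q_\epsilon$, and $n$ copies of an enriched scalar channel where each $X_i^*$ is observed through a Gaussian channel of total SNR $\int_0^1 r_\epsilon(t)\,dt+2p\lambda(\int_0^1 q_\epsilon(t)\,dt)^{2p-1}$, by the aforementioned spiked-tensor identification. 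This produces the $I_{P_0}(r+2p\lambda q^{2p-1})$ contribution of \eqref{eq:defF}.

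Computing $df_{n,\epsilon}(t)/dt$ as in Proposition \ref{prop:der_f_t} now yields two remainder terms, one from the GLM part (already handled there) and one of the form $\lambda\,p\,\mathbb{E}\langle(Q^{2p-1}-q_\epsilon(t)^{2p-1})(Q-q_\epsilon(t))\rangle_{n,t,\epsilon}$ coming from the tensor interpolation. Both can be controlled by overlap concentration: for the new remainder, the factorization $Q^{2p-1}-q^{2p-1}=(Q-q)\sum_{k=0}^{2p-2}Q^{k}q^{2p-2-k}$ combined with $|Q|\le S^2$ under \ref{hyp:bounded} and the Cauchy--Schwarz bound, then Proposition \ref{concentration}, gives a $\mathcal{O}(n^{-1/16})$ bound once averaged over $\epsilon\in\mathcal{B}_n$. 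Choosing $q_\epsilon(t)=\mathbb{E}\langle Q\rangle_{n,t,\epsilon}$ exactly cancels the leading-order contributions, producing the fundamental sum rule analogous to \eqref{64} but with the integrand replaced by the RS potential underlying \eqref{eq:defF}.

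Matching bounds then proceed exactly as in Sec.~\ref{subsec:lower-upper}. For the lower bound one fixes $r_\epsilon(t)\equiv r$ and solves the Cauchy--Lipschitz ODE for $q_\epsilon$; regularity of the flow $R^t$ follows from the Liouville formula just as in the proof of Proposition \ref{prop:lower_bound}. For the upper bound one additionally uses that $q\mapsto q^{2p}$ is convex and non-decreasing on $[0,\rho]$, so that the potential $q\mapsto\alpha\Psi_{P_{\rm out}}(q;\rho)+\tfrac{2p-1}{2}\lambda q^{2p}-\rho p\lambda q^{2p-1}-\tfrac{r}{2}q$ and the scalar-channel term $\psi_{P_0}$ remain convex in their relevant arguments, and Jensen's inequality still applies. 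The relaxation of the technical hypotheses \ref{hyp:bounded}-\ref{hyp:c2}-\ref{hyp:phi_gauss2} to the general ones uses the same approximation scheme as in Appendix \ref{Appendix-approx}. The main technical obstacle I anticipate is the careful tracking of the tensor-derived terms in the overlap-concentration argument, since the tensor Hamiltonian contribution is not Lipschitz uniformly in $n$; however, the factorization identity above together with the a priori bound on $|Q|$ reduces it to the same framework as in Proposition \ref{concentration}.
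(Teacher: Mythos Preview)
Your one-shot plan --- interpolating the GLM channel and the order-$2p$ tensor simultaneously and then running the Sec.~\ref{subsec:lower-upper} machinery on the enlarged potential --- is not what the paper does, and your upper-bound step has a real gap.

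The paper uses a \emph{two-step} reduction. First, it proves a formula for the GLM with an \emph{extra scalar Gaussian side channel} of SNR~$\gamma$ (this is $F_n(\gamma)\to F_{\rm RS}(\gamma)$ in \eqref{eq:def_FRS}, an easy extension of Theorem~\ref{th:RS_1layer}). Second, it interpolates \emph{only} the tensor block, keeping the GLM part intact: at $t=1$ the tensor is replaced by a scalar Gaussian channel of SNR $2p\lambda\int_0^1 q(v)^{2p-1}dv$, so $f_n(1)=F_n(2p\lambda\int_0^1 q^{2p-1})$. The $t$-derivative is then simply $f_n'(t)=-\tfrac{\lambda}{2}\E\langle Q_t^{2p}-2pq(t)^{2p-1}Q_t\rangle_t$. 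The lower bound comes from the elementary convexity inequality $a^{2p}-2pab^{2p-1}\ge(1-2p)b^{2p}$ with constant $q$. The upper bound uses overlap concentration together with the power-mean inequality $\int q^{2p}\ge(\int q^{2p-1})^{2p/(2p-1)}$ and the uniform convergence $F_n\to F_{\rm RS}$, yielding \eqref{eq:lim_f_tensor}. Finally, a separate variational lemma (Lemma~\ref{lem:sup_mq}) collapses the nested $\sup_q\{F_{\rm RS}(2p\lambda q^{2p-1})-\tfrac{2p-1}{2}\lambda q^{2p}\}$ into the single $\sup_q\inf_r$ of \eqref{eq:defF}.

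Your gap is in the upper bound. You assert that the map $q\mapsto \alpha\Psi_{P_{\rm out}}(q;\rho)+\tfrac{2p-1}{2}\lambda q^{2p}-\rho p\lambda q^{2p-1}-\tfrac{r}{2}q$ is convex, but it is not: the term $-\rho p\lambda q^{2p-1}$ is concave on $[0,\rho]$, and for $p\ge 2$ and $\lambda$ large its second derivative $-\rho p\lambda(2p-1)(2p-2)q^{2p-3}$ dominates the convex contributions near small $q$. Even in the pure free-entropy form (without the $-\rho p\lambda q^{2p-1}$ term, which only appears after converting to mutual information), the residual $-\tfrac{2p-1}{2}\lambda q^{2p}$ is concave; and the $\psi_{P_0}$ argument couples $r$ and $q$ nonlinearly through $r+2p\lambda q^{2p-1}$, so the clean ``$r_\epsilon(t)=2\alpha\Psi'_{P_{\rm out}}(q_\epsilon(t))$ makes $q_\epsilon(t)$ the minimizer'' argument of Proposition~\ref{prop:upper_bound} no longer goes through. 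Also, your $t=1$ identification is slightly off: the scalar SNR you obtain is $\int_0^1 r_\epsilon+2p\lambda\int_0^1 q_\epsilon^{2p-1}$, not $\int_0^1 r_\epsilon+2p\lambda(\int_0^1 q_\epsilon)^{2p-1}$. The paper's two-step decoupling is precisely what sidesteps these issues; if you want to stay one-shot, you would need an analogue of Lemma~\ref{lem:sup_mq} \emph{inside} your upper-bound step rather than a direct Jensen/convexity argument.
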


\begin{lemma}\label{lem:derF}	
	The function $I$ defined above by \eqref{eq:defF} is concave on $\R_+$. Its left- and right-derivatives are given by
	\begin{align*}
		I'(\lambda^+) &= \min \Big\{ \frac{1}{2} \big( \rho^{2p} - q_*(\lambda)^{2p} \big) \, \Big| \, q_*(\lambda) \ \text{achieves the infimum in \eqref{eq:defF}} \Big\} \,,
	\\
	I'(\lambda^-) &= \max \Big\{ \frac{1}{2}\big(\rho^{2p} - q_*(\lambda)^{2p} \big) \, \Big| \, q_*(\lambda) \ \text{achieves the infimum in \eqref{eq:defF}} \Big\} \, .
	\end{align*}
\end{lemma}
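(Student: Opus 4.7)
The plan is to establish concavity of $I$ via the I-MMSE relation and a limit argument, and then identify the one-sided derivatives of $I$ by applying the envelope theorem twice (once to collapse the sup over $r$ inside $G$, once to collapse the inf over $q$), after which the derivative $\partial_\lambda G$ evaluated at any maximizer $r_*$ miraculously simplifies to $\tfrac{1}{2}(\rho^{2p}-q^{2p})$, giving the stated min/max.

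First, I will prove concavity of $I$. By Proposition~\ref{prop:mi_side_tensor}, $I(\lambda)$ is the pointwise limit of $n^{-1} I(\bX^*; \bY, \bY' \mid \bbf{\Phi})$. For each fixed $n$, the I-MMSE relation (Proposition~\ref{prop:immse}), applied to the Gaussian channel $\bY' = \sqrt{\lambda/n^{2p-1}}(\bX^*)^{\otimes 2p} + \bZ'$, yields
\begin{equation*}
\frac{d}{d\lambda}\,\frac{1}{n} I(\bX^*; \bY, \bY' \mid \bbf{\Phi}) \;=\; \frac{1}{2 n^{2p}}\, \MMSE\!\big((\bX^*)^{\otimes 2p}\,\big|\, \bY,\bY',\bbf{\Phi}\big).
\end{equation*}
Because the MMSE is non-increasing in the signal-to-noise ratio $\lambda$, each pre-limit is concave in $\lambda$, so $I$ is concave on $\R_+$ as a pointwise limit of concave functions.

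Second, I write $I(\lambda) = \inf_{q\in[0,\rho]} G(\lambda,q)$ with $G(\lambda,q) = \sup_{r\ge 0} F(\lambda,q,r)$ and $F$ the bracket in \eqref{eq:defF}. For fixed $q$, $F$ is concave in $r$ and strictly coercive in $r$ (the $-\tfrac{r}{2}(\rho-q)$ term, combined with $I_{P_0}'(s) \to 0$ at infinity, forces an interior maximizer whenever $q<\rho$; the $q=\rho$ case is handled by extension to $r=+\infty$ as already done in Theorem~\ref{th:RS_1layer}). Applying Corollary~4 of \cite{milgrom2002envelope} to the inner sup gives that $G(\cdot,q)$ is differentiable with $\partial_\lambda G(\lambda,q) = \partial_\lambda F(\lambda,q,r_*)$ at any maximizer $r_*(\lambda,q)$. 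A second application of the same envelope result to the outer inf (which by~\eqref{eq:defF} is over the compact set $[0,\rho]$) yields
\begin{equation*}
I'(\lambda^+) \;=\; \min_{q_* \in Q^*(\lambda)} \partial_\lambda G(\lambda,q_*), \qquad I'(\lambda^-) \;=\; \max_{q_* \in Q^*(\lambda)} \partial_\lambda G(\lambda,q_*),
\end{equation*}
where $Q^*(\lambda)$ is the set of minimizers in~\eqref{eq:defF}; the orientation $I'(\lambda^+) \le I'(\lambda^-)$ matches the concavity established in step one.

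Third, I carry out the key algebraic collapse. Direct differentiation of $F$ gives
\begin{equation*}
\partial_\lambda F \;=\; I_{P_0}'\!\big(r + 2p\lambda q^{2p-1}\big)\cdot 2p\,q^{2p-1} + \frac{2p-1}{2}q^{2p} - \rho p\, q^{2p-1} + \frac{1}{2}\rho^{2p}.
\end{equation*}
The first-order condition $\partial_r F = 0$ at $r_*$ reads $I_{P_0}'(r_* + 2p\lambda q^{2p-1}) = (\rho-q)/2$. Substituting,
\begin{equation*}
\partial_\lambda F\big|_{r_*} \;=\; p(\rho-q)q^{2p-1} + \tfrac{2p-1}{2}q^{2p} - \rho p\, q^{2p-1} + \tfrac{1}{2}\rho^{2p} \;=\; \tfrac{1}{2}\big(\rho^{2p} - q^{2p}\big),
\end{equation*}
after cancellation of the $\rho p\, q^{2p-1}$ terms. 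Since $q\mapsto \tfrac{1}{2}(\rho^{2p}-q^{2p})$ is strictly decreasing on $[0,\rho]$, the minimum of $\partial_\lambda G$ over $Q^*(\lambda)$ is reached at the largest minimizer and the maximum at the smallest, yielding the two formulas for $I'(\lambda^\pm)$.

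The main obstacle will be verifying cleanly that the envelope theorem applies, in particular: (i) that the inner sup over $r\ge 0$ is attained at an interior point so that the first-order condition is available (handling the boundary case where $r_*=0$, which forces the replacement of the equality $I_{P_0}'=\tfrac{\rho-q}{2}$ by an inequality), and (ii) the standard regularity conditions (equicontinuity in $\lambda$, upper semicontinuity in $q$) needed for the Milgrom--Segal statement on the outer inf over the compact set $[0,\rho]$. Both should follow from the smoothness of $I_{P_0}$, $\mathcal{I}_{P_{\rm out}}$ (Propositions~\ref{prop16} and~\ref{prop:psi_convex_reg}) and the explicit form of the remaining linear-in-$\lambda$ terms in $F$.
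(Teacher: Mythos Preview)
Your approach is correct and essentially the same as the paper's: the paper omits the proof of Lemma~\ref{lem:derF} entirely, pointing to Lemma~\ref{lem:diff_FRS} whose proof follows the same pattern you describe (concavity/convexity as a limit, then Milgrom--Segal envelope). Your algebraic collapse $\partial_\lambda F|_{r_*} = \tfrac{1}{2}(\rho^{2p}-q^{2p})$ is the right computation.

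One remark on the obstacle you flag. The boundary case $r_*=0$ is not resolved merely by smoothness of $I_{P_0}$ and $\mathcal I_{P_{\rm out}}$; smoothness alone only gives the inequality $I_{P_0}'(2p\lambda q_*^{2p-1}) \le (\rho-q_*)/2$ at $r_*=0$, which breaks your substitution. In the analogous Lemma~\ref{lem:diff_FRS} the paper handles this differently: it computes the Legendre transform $g(\gamma,x)=\sup_{r\ge 0}\{xr-\psi_{P_0}(\gamma+r)\}$ explicitly in both cases ($x\le\psi_{P_0}'(\gamma)$ and $x>\psi_{P_0}'(\gamma)$), obtaining $\partial_\gamma g(\gamma,x)=-\max(\psi_{P_0}'(\gamma),x)$, and then invokes Lemma~\ref{lem:sup_inf_2} to assert that every optimal couple $(q_*,r_*)$ lies in $\Gamma$, i.e.\ satisfies the \emph{equality} $q_*=2\psi_{P_0}'(\gamma+r_*)$ even when $r_*=0$. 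By monotonicity of $\psi_{P_0}'$ this forces $q_*/2\ge\psi_{P_0}'(\gamma)$, so the relevant branch of $\partial_\gamma g$ is always the one that collapses cleanly. The analogous argument carries over to your setting and fills the gap you identified.
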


We are now in position to prove Proposition \ref{prop:upper_bound_overlap}.

\begin{proof}[Proof of Proposition \ref{prop:upper_bound_overlap}:]
By the I-MMSE relation of Proposition~\ref{prop:immse},
$$
\frac{1}{n} \frac{\partial}{\partial \lambda} I\big(\bX^*; \bY,\bY'\big|\bbf{\Phi}\big)
=
\frac{1}{n} \frac{\partial}{\partial \lambda} I\big((\bX^*)^{\otimes 2p}; \bY,\bY'\big|\bbf{\Phi}\big)
= \frac{1}{2n^{2p}} \MMSE\big( (\bX^*)^{\otimes 2p} \big| \bY,\bY', \bbf{\Phi}\big) \,.
$$
Using Proposition~\ref{prop:mi_side_tensor} and Lemma~\ref{lem:derF} above we obtain by concavity that
$$
\frac{1}{2n^{2p}} \MMSE\big( (\bX^*)^{\otimes 2p} \big| \bY,\bY', \bbf{\Phi}\big) =
\frac{1}{n}\frac{\partial}{\partial \lambda} I\big(\bX^*; \bY,\bY'\big|\bbf{\Phi}\big)
\xrightarrow[n \to \infty]{}
I'(\lambda) = \frac{1}{2} \big(\rho^{2p} - q_*(\lambda)^{2p} \big) \,,
$$
for all  $\lambda > 0$ for which the infimum of \eqref{eq:defF} is achieved at a unique $q_*(\lambda)$. Consequently,
\begin{equation}\label{eq:liminf_mmse_si}
\liminf_{n \to \infty}
\frac{1}{n^{2p}} \MMSE\big( (\bX^*)^{\otimes 2p} \big| \bY, \bbf{\Phi}\big)
\geq
\liminf_{n \to \infty}
\frac{1}{n^{2p}} \MMSE\big( (\bX^*)^{\otimes 2p} \big| \bY,\bY', \bbf{\Phi}\big) = \rho^{2p} - q_*(\lambda)^{2p} \,.
\end{equation}
Let us now suppose that $\alpha \in D^*$. In that case, there exists a unique $q_*(\lambda=0) = q^*(\alpha)$ that achieves the infimum in \eqref{eq:defF}. Consequently, $I'(0^+) = \frac{1}{2} (\rho^{2p} - q^*(\alpha)^{2p})$. By concavity, $I'(\lambda) \to I'(0^+)$ as $\lambda \to 0$, which gives $q_*(\lambda) \to q^*(\alpha)$. By taking the $\lambda \to 0$ limit in \eqref{eq:liminf_mmse_si} above we get
$$
\liminf_{n \to \infty}
\frac{1}{n^{2p}} \MMSE\big( (\bX^*)^{\otimes 2p} \big| \bY, \bbf{\Phi}\big)
\geq
\rho^{2p} - q^*(\alpha)^{2p} \,.
$$
One verifies easily that 
\begin{align}
\frac{1}{n^{2p}} \MMSE\big( (\bX^*)^{\otimes 2p} \big| \bY, \bbf{\Phi}\big) = \rho^{2p} - \E\big[Q_n^{2p}\big] + o_n(1)\,,	
\end{align}
 so we deduce that 
$$
\limsup_{n \to \infty} \E \big[Q_n^{2p}\big] \leq q^*(\alpha)^{2p} \,.
$$
Let $\epsilon > 0$. By Markov's inequality we have
$$
\P\big(|Q_n| \geq q_*(\alpha) + \epsilon \big) \leq \frac{\E\big[Q_n^{2p}\big]}{(q_*(\alpha) + \epsilon)^{2p}}\,.
$$
By taking the $\limsup$ in $n$ on both sides we obtain
$$
\limsup_{n \to \infty}\P\big(|Q_n| \geq q_*(\alpha) + \epsilon \big) \leq \frac{q^*(\alpha)^{2p}}{(q_*(\alpha) + \epsilon)^{2p}}\,,
$$
and Proposition~\ref{prop:upper_bound_overlap} follows by taking the $p \to \infty$ limit in the inequality above.	
\end{proof}

We now prove the two preliminary results used in the proof of Proposition \ref{prop:upper_bound_overlap}.

\begin{proof}[Proof of Proposition \ref{prop:mi_side_tensor}:]
	The proof is very similar to the one of Theorem~\ref{th:RS_1layer} (and Corollary~\ref{cor:mi}), by the adaptive interpolation method (see Sec.~\ref{sec:interpolation}), so we provide only the main arguments and omit to write the small perturbation (i.e.\ the $\epsilon_1, \epsilon_2$ present in Sec.~\ref{sec:interpolation}) for simplicity.

	In order to tackle model \eqref{pert_tensoir} we need first to study a simpler one, namely when we have access to the simultaneous observations $\bY \sim P_{\rm out}(\cdot \,|\, \bbf{\Phi} \bX^* / \sqrt{n})$ and $\bY'' = \sqrt{\gamma}\, \bX^* + \bZ''$.
	Define, for $\gamma \geq 0$, the free entropy (expected log-partition function) of this model:
	\begin{equation}
		F_n(\gamma) \defeq \frac{1}{n}
		\E \ln 		
		\int dP_0(\bx) 
		\exp\Big(\sum_{i=1}^n  \sqrt{\gamma} Z''_i x_i + \gamma x_i X_i^* - \frac{\gamma}{2} x_i^2 \Big)
		\prod_{\mu = 1}^m P_{\rm out}\Big(Y_{\mu}\Big|\frac{1}{\sqrt{n}} \bbf{\Phi}_{\mu} \cdot \bx\Big) \,,
	\end{equation}
	where $Z''_i \iid \cN(0,1)$ are independent of everything else. Let us define
	\begin{equation}\label{eq:def_FRS}
	F_{\rm RS}(\gamma) \defeq {\adjustlimits \sup_{q \in [0, \rho]} \inf_{r \geq 0}} \Big\{ \psi_{P_0}(r + \gamma) + \alpha \Psi_{P_{\rm out}}(q) - \frac{rq}{2} \Big\} \,.
	\end{equation}
	A slight and easy modification of the Theorem~\ref{th:RS_1layer} gives that for all $\gamma \geq 0$
	\begin{equation}\label{eq:rs_side_scalar}
	F_n(\gamma) \xrightarrow[n \to \infty]{} F_{\rm RS}(\gamma) \,.
\end{equation}
$F_n$ is a convex function of $\gamma$ (this can be checked by relating it to the mutual information like in Corollary \ref{cor:mi} and then using the I-MMSE relation of Proposition \ref{prop:immse}), thus $F_{\rm RS}$ is too. The function $F_{\rm RS}$ is therefore continuous on $\R_+$. $F_n$ is also a non-decreasing function of $\gamma$ (this is again checked using the I-MMSE relation). By Dini's second theorem we obtain that the convergence of \eqref{eq:rs_side_scalar} is uniform over all compact subsets of $\R_+$.

Now that we have studied this simpler model, we come back to the analysis of \eqref{pert_tensoir}. We proceed by interpolation as in Sec.~\ref{interp-est-problem}. Let $q: [0,1] \to [0,\rho]$ be a continuous interpolating function. For $t \in [0,1]$, consider the following ``interpolating estimation model'':
\begin{align}
	\label{3channels}
	\left\{
		\begin{array}{llll}
			Y_{\mu} &\sim & P_{\rm out}(\ \cdot \ | \, \bbf{\Phi} \bX^* / \sqrt{n})\,,\qquad  &1 \leq \mu \leq m\,, \\
			\bY'_{t}  &= & \sqrt{\frac{\lambda(1-t)}{n^{2p-1}}} (\bX^* )^{\otimes 2p} + \bZ' \,, \\
			Y''_{t,i} &=& \sqrt{ 2p \lambda \int_0^t {q}(v)^{2p-1} dv}\, X^*_i + Z''_i\,, \qquad &1 \leq i \leq n\,.
		\end{array}	
	\right.
\end{align}
Define the corresponding interpolating free entropy:
$$
f_n(t) \defeq \frac{1}{n}
		\E \ln 	
		\int dP_0(\bx) e^{H_{n,t}(\bx)}
		\prod_{\mu = 1}^m P_{\rm out}\Big(Y_{\mu}\Big| \frac{1}{\sqrt{n}}\bbf{\Phi}_{\mu} \cdot \bx \Big)  \,,
$$
where the Hamiltonian of the model is
\begin{align*}
	H_{n,t}&(\bx) 
= \sum_{i=1}^n \Big\{\sqrt{ 2p \lambda \int_0^t {q}(v)^{2p-1}dv}\,Z_i'' x_i + 2p \lambda \int_0^t {q}(v)^{2p-1}dv\, x_i X_i^* - p \lambda \int_0^t {q}(v)^{2p-1}dv\, x_i^2\Big\}
\\
&+ \sum_{i_1, \dots, i_{2p}}\Big\{ \sqrt{\frac{\lambda (1-t)}{n^{2p-1}}}Z_{i_1 \dots i_{2p}}' x_{i_1} \dots x_{i_{2p}} + \frac{\lambda (1-t)}{n^{2p-1}} x_{i_1} \dots x_{i_{2p}} X^*_{i_1} \dots X^*_{i_{2p}} - \frac{\lambda (1-t)}{2n^{2p-1}} x_{i_1}^2 \dots x_{i_{2p}}^2 \Big\}\,.
\end{align*}
We aim at computing $f_n \defeq f_n(0)$. We have $f_n(1) = F_n(2p \lambda \int_0^1 q(t)^{2p-1}dt)$. Similarly to Proposition~\ref{prop:der_f_t} one can compute (see \cite{barbier_stoInt} where this computation is done):
\begin{equation}\label{eq:der_f_tensor}
f_n'(t) =
-\frac{\lambda}{2}
\E \big\langle Q_t^{2p} - 2p q(t)^{2p-1} Q_t\big\rangle_t
\end{equation}
where $Q_t =  \sum_{i=1}^n X_i^* x_i/n$ is the overlap between the planted solution $\bX^*$ and $\bx = (x_1, \dots, x_n)$, a sample from the posterior distribution $P(\bX^*|\bY,\bY'_t,\bY''_t)$. Similarly as in Sec.~\ref{interp-est-problem} the Gibbs bracket $\langle - \rangle_t$ denotes the expectation w.r.t. this $t$-dependent posterior acting on $\bx$, $\E$ is w.r.t. the quenched variables $\bY,\bY'_t,\bY''_t$. By convexity of the function $x \mapsto x^{2p}$, we have for all $a,b \in \R$, $a^{2p} - 2p a b^{2p-1} \geq (1-2p) b^{2p}$. Consequently, if we choose $q$ to be a constant function, i.e.\ $q(t) = q$ for all $t \in [0,1]$, we have
$$
f_n'(t) \leq \frac{\lambda}{2} (2p-1) q^{2p} \,.
$$
This gives
$$
f_n = f_n(0) = f_n(1) - \int_0^1 f_n'(t) dt \geq F_n(2p\lambda q^{2p - 1}) - \frac{\lambda}{2} (2p-1) q^{2p} \,.
$$
By taking the $\liminf$ in $n$ on both sides, we obtain $\liminf_{n \to \infty} f_n \geq F_{\rm RS}(2p \lambda q^{2p-1}) - \frac{\lambda}{2} (2p-1) q^{2p}$ using \eqref{eq:rs_side_scalar} and since this holds for all $q \in [0,\rho]$ we get
$$
\liminf_{n \to \infty} f_n \geq \sup_{q \in [0,\rho]} \Big\{ F_{\rm RS}(2p \lambda q^{2p-1}) - \frac{\lambda}{2} (2p-1) q^{2p} \Big\} \,.
$$

Let us now prove the converse upper-bound. One can show as in Sec.~\ref{sec:overlap_concentration} that the overlap $Q_t$ concentrates around its expectation: Proposition \ref{concentration} applies. This perturbation does not change the free entropy in the limit $n\to\infty$ nor the following derivation, so we do not track it explicitely for the sake of simplicity. Let us go back to \eqref{eq:der_f_tensor}. Therefore, using this concentration and then choosing $q(t) = \tilde q(t) = \E\langle Q_t\rangle_t$ as done in Sec.~\ref{subsec:lower-upper}, we obtain that
$$
f_n'(t) = \frac{\lambda}{2} (2p-1) \tilde q(t)^{2p} + o_n(1) \,.
$$
Consequently,
\vspace{-5mm}
\begin{align*}
	f_n &= f_n(1) - \int_0^1 f_n'(t) dt
= F_n\Big(2p \lambda \int_0^1  \tilde q(t)^{2p-1}dt\Big) - \frac{\lambda}{2} (2p-1) \int_0^1  \tilde q(t)^{2p}dt + o_n(1) 
\\
&\leq
F_n\Big(2p \lambda \int_0^1  \tilde q(t)^{2p-1}dt\Big) - \frac{\lambda}{2} (2p-1) \Big(\int_0^1 \tilde q(t)^{2p-1}dt\Big)^{\frac{2p}{2p-1}} + o_n(1) 
\\
&\leq
\sup_{q \in [0,\rho]} \Big\{
F_n(2p \lambda q^{2p-1}) - \frac{\lambda}{2} (2p-1) q^{2p} \Big\}+ o_n(1) \,.
\end{align*}
We use now the fact that the convergence in \eqref{eq:rs_side_scalar} is uniform over all compact sets to get the upper-bound: $\limsup_{n \to \infty} f_n \leq \sup_{q \in [0,\rho]} \big\{ F_{\rm RS}(2p \lambda q^{2p-1}) - \frac{\lambda}{2} (2p-1) q^{2p} \big\}$.
We conclude that 
\begin{equation}\label{eq:lim_f_tensor}
	\lim_{n \to \infty} f_n = \sup_{q \in [0,\rho]} \Big\{ F_{\rm RS}\big(2p \lambda q^{2p-1}\big) - \frac{\lambda}{2} (2p-1) q^{2p} \Big\} \,.
\end{equation}
We are now going to simplify the right-hand side of the above equation.
\begin{lemma}\label{lem:diff_FRS}
	$F_{\rm RS}$ is a convex function on $\R_+$, whose left- and right-derivatives at $\gamma \geq 0$ are:
	\begin{align*}
		F_{\rm RS}'(\gamma^+) &= \max \Big\{ \frac{1}{2}q_*(\gamma) \, \Big| \, q_*(\gamma) \ \text{achieves the supremum in \eqref{eq:def_FRS}} \Big\} \,,
	\\
	F_{\rm RS}'(\gamma^-) &= \min \Big\{ \frac{1}{2}q_*(\gamma) \, \Big| \, q_*(\gamma) \ \text{achieves the supremum in \eqref{eq:def_FRS}} \Big\} \, .
	\end{align*}
	In particular, $F_{\rm RS}$ is differentiable at $\gamma \geq 0$ if and only if the supremum in \eqref{eq:def_FRS} is achieved at a unique $q_*(\gamma)$. 
\end{lemma}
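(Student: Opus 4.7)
The plan is to prove Lemma~\ref{lem:diff_FRS} in three steps: establishing convexity of $F_{\rm RS}$, computing the partial $\partial_\gamma h_q(\gamma)$ at a maximizer via an envelope calculation, then extracting the one-sided derivatives by a Milgrom--Segal style argument.

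\textbf{Convexity.} Applying the I-MMSE identity of Proposition~\ref{prop:immse} to the Gaussian side-channel $\bY'' = \sqrt{\gamma}\bX^* + \bZ''$ underlying the definition of $F_n$ yields
\begin{equation*}
F_n'(\gamma) \;=\; \tfrac{1}{2n}\bigl(\E\|\bX^*\|^2 - \MMSE(\bX^*|\bY'',\bY,\bbf{\Phi})\bigr),
\end{equation*}
which is non-decreasing in $\gamma$ since the MMSE decreases with snr. Hence each $F_n$ is convex, and by the pointwise convergence \eqref{eq:rs_side_scalar} so is $F_{\rm RS}$ on $\R_+$. In particular $F_{\rm RS}'(\gamma^{\pm})$ exist everywhere.

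\textbf{Envelope computation.} Set $h_q(\gamma) \defeq \inf_{r \geq 0}\{\psi_{P_0}(r+\gamma) + \alpha\Psi_{P_{\rm out}}(q) - rq/2\}$, so that $F_{\rm RS}(\gamma) = \sup_{q\in[0,\rho]} h_q(\gamma)$. The change of variables $r \mapsto r - \gamma$ rewrites $h_q(\gamma) = \inf_{r\geq \gamma}\{\psi_{P_0}(r) - rq/2\} + \alpha\Psi_{P_{\rm out}}(q) + \gamma q/2$, which together with convexity of $\psi_{P_0}$ (Proposition~\ref{prop16}) shows $h_q$ is convex and $C^1$ on $\R_+$ with $h_q'(\gamma) = \max(q/2,\,\psi_{P_0}'(\gamma))$. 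The next point is that any maximizer $q_*(\gamma)$ of \eqref{eq:def_FRS} must satisfy $q_*(\gamma) \geq 2\psi_{P_0}'(\gamma)$: on the range $q \leq 2\psi_{P_0}'(\gamma)$ the inner infimum is attained at $r=0$, giving $h_q(\gamma) = \psi_{P_0}(\gamma) + \alpha\Psi_{P_{\rm out}}(q)$, which is non-decreasing in $q$ by monotonicity of $\Psi_{P_{\rm out}}$ (Proposition~\ref{prop:psi_convex_reg}). Consequently $h_{q_*(\gamma)}'(\gamma) = q_*(\gamma)/2$.

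\textbf{One-sided derivatives.} For any maximizer $q_*(\gamma)$, the convex function $F_{\rm RS} - h_{q_*(\gamma)}$ is non-negative and vanishes at $\gamma$, hence
\begin{equation*}
F_{\rm RS}'(\gamma^-) \;\leq\; h_{q_*(\gamma)}'(\gamma) \;=\; q_*(\gamma)/2 \;\leq\; F_{\rm RS}'(\gamma^+),
\end{equation*}
so $F_{\rm RS}'(\gamma^+) \geq \max\{q_*(\gamma)/2\}$ and $F_{\rm RS}'(\gamma^-) \leq \min\{q_*(\gamma)/2\}$. For the reverse inequalities, pick $\gamma_k \downarrow \gamma$ and let $q_k$ be a maximizer at $\gamma_k$; by compactness of $[0,\rho]$ and joint continuity of $(q,\gamma)\mapsto h_q(\gamma)$ one extracts a subsequence $q_k \to q_\infty$, and $q_\infty$ is itself a maximizer at $\gamma$ by upper semicontinuity of the arg-sup correspondence. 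Then
\begin{equation*}
\frac{F_{\rm RS}(\gamma_k) - F_{\rm RS}(\gamma)}{\gamma_k - \gamma} \;\leq\; \frac{h_{q_k}(\gamma_k) - h_{q_k}(\gamma)}{\gamma_k - \gamma} \;\xrightarrow[k\to\infty]{}\; h_{q_\infty}'(\gamma^+) = q_\infty/2,
\end{equation*}
using joint continuity of $(q,\gamma)\mapsto h_q'(\gamma)$. This yields $F_{\rm RS}'(\gamma^+) \leq q_\infty/2 \leq \max\{q_*(\gamma)/2\}$. The symmetric argument with $\gamma_k \uparrow \gamma$ gives the formula for $F_{\rm RS}'(\gamma^-)$. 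Differentiability at $\gamma$ then coincides with $\min\{q_*(\gamma)/2\} = \max\{q_*(\gamma)/2\}$, i.e.\ uniqueness of the maximizer.

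\textbf{Main obstacle.} The one delicate point is ensuring that at any maximizer $q_*(\gamma)$ the envelope derivative is really $q_*(\gamma)/2$ rather than the boundary value $\psi_{P_0}'(\gamma)$ coming from the case $r = 0$ in the inner infimum; this is taken care of by the monotonicity argument forcing $q_*(\gamma) \geq 2\psi_{P_0}'(\gamma)$. All remaining steps are standard convex analysis and the selection-compactness argument for the arg-sup correspondence.
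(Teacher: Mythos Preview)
Your proof is correct and follows essentially the same line as the paper's. Both compute the partial derivative $\partial_\gamma$ of the inner optimization via the same casework (boundary $r=0$ versus interior), obtain $\max(q/2,\psi_{P_0}'(\gamma))$, and reduce to the envelope theorem; the paper simply cites Corollary~4 of \cite{milgrom2002envelope} while you reproduce the selection/compactness argument by hand, and the paper invokes Lemma~\ref{lem:sup_inf_2} for the key inequality $q_*(\gamma)\ge 2\psi_{P_0}'(\gamma)$ whereas you derive it directly from monotonicity of $\Psi_{P_{\rm out}}$ on the range where $r=0$ is optimal. One small refinement: to conclude that \emph{every} maximizer satisfies $q_*(\gamma)\ge 2\psi_{P_0}'(\gamma)$ you need $\Psi_{P_{\rm out}}$ to be \emph{strictly} increasing (Proposition~\ref{prop:psi_stricly_monoton}, under the informative-channel hypothesis), not merely non-decreasing as in Proposition~\ref{prop:psi_convex_reg}; otherwise a flat piece of $\Psi_{P_{\rm out}}$ could produce a maximizer below the threshold. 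This is guaranteed by the ambient hypotheses in the paper's context.
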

\begin{proof}
	We already know that $F_{\rm RS}$ is convex (as a limit of convex functions, see \eqref{eq:rs_side_scalar}). We have
	\begin{equation}\label{eq:proof_der_FRS}
	F_{\rm RS}(\gamma) = 
	{\adjustlimits\sup_{q \in [0,\rho]} \inf_{r \geq 0} }
	\Big\{ \psi_{P_0}(r+\gamma)  + \alpha \Psi_{P_{\rm out}}(q) - \frac{rq}{2} \Big\}
	=
	\sup_{q \in [0,\rho]} 
	\Big\{ \alpha \Psi_{P_{\rm out}}(q) - g(\gamma, q/2) \Big\}
\end{equation}
where $g(\gamma,x) = \sup_{r \geq 0} \big\{ xr - \psi_{P_0}(\gamma + r)\big\}$
is the Legendre transform of $r \mapsto \psi_{P_0}(\gamma + r)$. Let us now compute $\frac{\partial g}{\partial \gamma}(\gamma,x)$. If $x \leq \psi_{P_0}'(\gamma)$, then the supremum in $r$ is achieved at $r=0$, $g(x,\gamma) = -\psi_{P_0}(\gamma)$. If now $x > \psi_{P_0}'(\gamma)$ then 
$$
g(\gamma,x) 
= \sup_{r \geq -\gamma} \big\{ xr - \psi_{P_0}(\gamma + r)\big\}
= - \gamma x +\sup_{r \geq 0} \big\{ xr - \psi_{P_0}( r)\big\} \,.
$$
The first equality comes from the fact that the supremum can not be achieved on $[-\gamma,0]$ because for all $r \in [-\gamma,0]$, $x > \psi_{P_0}'(\gamma) \geq \psi_{P_0}'(\gamma+r)$. We obtain
$$
g(\gamma,x) =
\left\{
	\begin{array}{ll}
	- \psi_{P_0}(\gamma) & \text{if} \ x \leq \psi'_{P_0}(\gamma)\,, \\
	- x \gamma +  g(0,x) & \text{if} \ x > \psi'_{P_0}(\gamma)\,.
\end{array}
\right.
$$
From there, we conclude that $\frac{\partial g}{\partial \gamma}(\gamma,x) = - \max\big(\psi_{P_0}'(\gamma),x\big)$.
By Lemma~\ref{lem:sup_inf_2}, every optimal couple $(q_*(\gamma),r_*(\gamma))$ satisfy $q_*(\gamma) = 2 \psi_{P_0}'(\gamma + r_*(\gamma))$. This implies (by convexity of $\psi_{P_0}$) that $q_*(\gamma) / 2 \geq \psi_{P_0}'(\gamma)$. 
Using Corollary~4 from \cite{milgrom2002envelope} $F_{\rm RS}$ we get that
$$
F_{\rm RS}'(\gamma^+)= 
\max \Big\{ -\frac{\partial g}{\partial \gamma}(\gamma,q_*(\gamma)) \, \Big| \, q_*(\gamma) \ \text{maximizer of \eqref{eq:def_FRS}} \Big\}
=
\max \Big\{ \frac{1}{2}q_*(\gamma) \, \Big| \, q_*(\gamma) \ \text{maximizer of \eqref{eq:def_FRS}} \Big\}
$$
and analogously for $F_{\rm RS}'(\gamma^-)$.
\end{proof}
\begin{lemma}\label{lem:sup_mq} We have
	$$
	\sup_{\tilde{q} \in [0,\rho]} \big\{ F_{\rm RS}(2p \lambda \tilde{q}^{2p-1}) - (2p-1) \frac{\lambda}{2} \tilde{q}^{2p} \big\}
	=
	{\adjustlimits \sup_{q \in [0, \rho]} \inf_{r \geq 0}} \Big\{ \psi_{P_0}(r + 2p \lambda q^{2p-1}) + \alpha \Psi_{P_{\rm out}}(q) - \frac{rq}{2} - (2p-1) \frac{\lambda}{2} q^{2p}\Big\} \,.
	$$
\end{lemma}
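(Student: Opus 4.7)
Write $\gamma(\tilde q) \defeq 2p\lambda \tilde q^{2p-1}$ for brevity, and denote
\[
h(\tilde q) \defeq F_{\rm RS}(\gamma(\tilde q)) - (2p-1)\tfrac{\lambda}{2}\tilde q^{2p},
\qquad
G(q) \defeq \inf_{r\geq 0}\big\{\psi_{P_0}(r+\gamma(q)) + \alpha\Psi_{P_{\rm out}}(q) - \tfrac{rq}{2}\big\} - (2p-1)\tfrac{\lambda}{2}q^{2p},
\]
so that the claim reads $\sup_{\tilde q\in[0,\rho]} h(\tilde q) = \sup_{q\in[0,\rho]} G(q)$. The plan is to obtain the two inequalities separately; the ``$\geq$'' direction is immediate and the ``$\leq$'' direction is what requires work.

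For ``$\geq$'', I simply use the definition of $F_{\rm RS}$: for any $q\in[0,\rho]$, restricting the inner $\sup_{q'}$ in $F_{\rm RS}(\gamma(q))$ to the single value $q'=q$ yields $F_{\rm RS}(\gamma(q))\ge \inf_{r\ge 0}\{\psi_{P_0}(r+\gamma(q))+\alpha\Psi_{P_{\rm out}}(q) - rq/2\}$; subtracting $(2p-1)\tfrac{\lambda}{2}q^{2p}$ gives $h(q)\ge G(q)$ and taking $\sup_q$ finishes this direction.

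For ``$\leq$'', since $F_{\rm RS}$ is convex (hence continuous on the interior of its domain) and the polynomial correction is continuous, $h$ is upper semi-continuous on $[0,\rho]$, so it attains its supremum at some $\tilde q^{*}\in[0,\rho]$. The core idea is then an envelope-type argument: I claim that at any such maximizer $\tilde q^{*}\in(0,\rho)$, the first-order condition forces $\tilde q^{*}$ itself to be an optimizer in the outer $\sup_q$ defining $F_{\rm RS}(\gamma(\tilde q^{*}))$, which by the closed-form expression of $F_{\rm RS}$ immediately gives $h(\tilde q^{*})=G(\tilde q^{*})\le \sup_q G(q)$. Concretely, at an interior local maximum one has $h'(\tilde q^{*-})\ge 0\ge h'(\tilde q^{*+})$; combining these with the chain rule $h'(\tilde q^{\pm})=\gamma'(\tilde q^{*})F_{\rm RS}'(\gamma(\tilde q^{*})^{\pm})-(2p-1)\lambda p (\tilde q^{*})^{2p-1}$ and the explicit value $\gamma'(\tilde q^{*})=2p(2p-1)\lambda (\tilde q^{*})^{2p-2}$ yields $F_{\rm RS}'(\gamma(\tilde q^{*})^{+})\le \tilde q^{*}/2 \le F_{\rm RS}'(\gamma(\tilde q^{*})^{-})$. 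By convexity of $F_{\rm RS}$ the opposite inequality also holds, so $F_{\rm RS}$ is differentiable at $\gamma(\tilde q^{*})$ with derivative $\tilde q^{*}/2$, and Lemma~\ref{lem:diff_FRS} then gives that $\tilde q^{*}$ is the (unique) optimizer in the inner $\sup_q$ of $F_{\rm RS}(\gamma(\tilde q^{*}))$. Substituting this back yields $h(\tilde q^{*})=\inf_{r\ge 0}\{\psi_{P_0}(r+\gamma(\tilde q^{*}))+\alpha\Psi_{P_{\rm out}}(\tilde q^{*})-r\tilde q^{*}/2\}-(2p-1)\tfrac{\lambda}{2}(\tilde q^{*})^{2p}=G(\tilde q^{*})$, finishing the proof.

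The main obstacle I anticipate is dealing with the boundary values $\tilde q^{*}\in\{0,\rho\}$ and with the degenerate case $p=1$ or $2p-2=0$ where $\gamma'(0)$ behaves differently. For $\tilde q^{*}=\rho$, one only has $h'(\rho^{-})\ge 0$, which gives the one-sided bound $F_{\rm RS}'(\gamma(\rho)^{-})\ge \rho/2$; together with the fact that any optimizer $q^{*}(\gamma(\rho))\in[0,\rho]$ of the inner sup satisfies $q^{*}(\gamma(\rho))/2\le \rho/2$ via Lemma~\ref{lem:diff_FRS}, a short argument using the monotonicity of $F_{\rm RS}'$ forces $q^{*}(\gamma(\rho))=\rho$, reducing to the interior case. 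For $\tilde q^{*}=0$, one has $h(0)=F_{\rm RS}(0)$, and the inequality $h(0)\le \sup_q G(q)$ is handled by comparing to $G(0)=F_{\rm RS}(0)$ directly (since $\gamma(0)=0$ and the $\lambda$-dependent polynomial correction vanishes at $q=0$). These endpoint checks are routine but need to be spelled out, and this is the step where the most care is required.
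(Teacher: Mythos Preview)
Your overall strategy---an envelope argument using the one-sided first-order conditions at a maximizer $\tilde q^*$ together with Lemma~\ref{lem:diff_FRS}---is exactly the paper's approach, and your interior and $\tilde q^*=\rho$ cases are correct. However, your treatment of the boundary $\tilde q^*=0$ has a genuine gap.

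You assert that $G(0)=F_{\rm RS}(0)$ ``directly''. This is false in general: by definition $G(0)=\inf_{r\ge 0}\{\psi_{P_0}(r)+\alpha\Psi_{P_{\rm out}}(0)\}$, which is the inner $\inf$ evaluated at $q=0$, whereas $F_{\rm RS}(0)=\sup_{q\in[0,\rho]}\inf_{r\ge 0}\{\psi_{P_0}(r)+\alpha\Psi_{P_{\rm out}}(q)-rq/2\}$ is the supremum over $q$. These coincide only if $q=0$ happens to be a maximizer in $F_{\rm RS}(0)$, which is something you must \emph{deduce} from the optimality of $\tilde q^*=0$. But in your parametrization the first-order condition $h'(0^+)\le 0$ reads $\gamma'(0)\,F_{\rm RS}'(0^+)-0\le 0$, and since $\gamma'(\tilde q)=2p(2p-1)\lambda\,\tilde q^{2p-2}$ vanishes at $\tilde q=0$ for every $p\ge 2$, this condition is vacuous and gives no control on $F_{\rm RS}'(0^+)$.

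The paper fixes this with the change of variable $x=\tilde q^{2p-1}$, setting $H(x)=F_{\rm RS}(2p\lambda x)-(2p-1)\tfrac{\lambda}{2}x^{2p/(2p-1)}$. Then $H'(x)=p\lambda\big(2F_{\rm RS}'(2p\lambda x)-x^{1/(2p-1)}\big)$, so at $x=0$ the optimality condition $H'(0^+)\le 0$ yields $F_{\rm RS}'(0^+)\le 0$; Lemma~\ref{lem:diff_FRS} then forces the maximizer in $F_{\rm RS}(0)$ to be $q=0$, and you get $h(0)=F_{\rm RS}(0)=G(0)$ as desired. Incidentally, your remark about the ``degenerate case $p=1$ or $2p-2=0$'' identifies the wrong degeneracy: for $p=1$ one has $\gamma'(0)=2\lambda>0$ and your argument works as written; it is precisely $p\ge 2$ that causes the problem.
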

\begin{proof}
	Consider the equality above. The inequality l.h.s $\ge$ r.h.s.
	is obvious because it suffices to restrict the supremum over $(q,\tilde{q}) \in [0,\rho]^2$ to the supremum over the couples $(q,q)$ for $q \in [0,\rho]$. 

	Let us prove now the converse inequality. 
	Let us do the change of variable $x = \tilde{q}^{2p-1}$ and define $H(x) \defeq F_{\rm RS}(2 p \lambda x) - (2p-1) \frac{\lambda}{2} x^{2p/(2p-1)}$.
	$F_{\rm RS}$ is left- and right-differentiable everywhere, so is $H$. We have
	\begin{align}
	H'(x) 
	&= 2p \lambda  F_{\rm RS}'(2p\lambda x) - \frac{\lambda}{2}  2p x^{1/(2p-1)}
	=
	p \lambda \big( 2 F_{\rm RS}'(2p \lambda x) - x^{1/(2p-1)} \big)
	\label{eq:derivH}
	\end{align}
	at the points at which $H$ is differentiable, and analogously for the left- and right-derivatives of $H$.
	Let $x \in [0, \rho^{2p-1}]$ be a point at which $H$ achieves its supremum over $[0,\rho^{2p-1}]$. Let us distinguish 3 cases:
	\begin{itemize}
		\item Case 1: $ x = 0$. In that case, we have $H'(0^+) \leq 0$ and thus $F_{\rm RS}'(0^+) \leq 0$. Using Lemma~\ref{lem:diff_FRS}, we obtain that the only $q \in [0,\rho]$ that achieves the supremum in \eqref{eq:def_FRS} is $q = 0 = x^{1/(2p-1)}$.
		\item Case 2: $0 < x < \rho^{2p-1}$. We have then $H'(x^-) \geq 0$ and $H'(x^+) \leq 0$. Using \eqref{eq:derivH}, we deduce that
			$$
			2 F_{\rm RS}'\big((2p \lambda x)^+\big) \leq x^{1/(2p-1)}
			\leq 2 F_{\rm RS}'\big((2p \lambda x)^-\big) \,.
			$$
			$F_{\rm RS}$ is convex, so the above inequalities collapses into equalities and we get that $F_{\rm RS}$ is differentiable at $2p\lambda x$ with derivative given by $F_{\rm RS}'(2p\lambda x) = x^{1/(2p-1)}/2$. Lemma~\ref{lem:diff_FRS} above gives then that the supremum in \eqref{eq:def_FRS} is achieved uniquely at $q = x^{1/(2p-1)}$.
		\item Case 3: $x = \rho^{2p-1}$. Using the same arguments than in Case 1, we obtain also $q = x^{1/(2p-1)}$.
	\end{itemize}
	Conclusion: In all 3 cases above, $q = x^{1/(2p - 1)}$ achieves the supremum in \eqref{eq:def_FRS}. Recall that we used the change of variable $x = \tilde{q}^{2p-1}$. Consequently, if $\tilde{q} \in [0,\rho]$ achieves the supremum of $\tilde{q} \mapsto F_{\rm RS}(2p\lambda \tilde{q}^{2p-1}) - (2p-1) \lambda \tilde{q}^{2p}$, then $\tilde{q}$ achieves also the supremum in \eqref{eq:def_FRS}. This proves the converse bound.
\end{proof}

By Lemma~\ref{lem:sup_mq} and \eqref{eq:lim_f_tensor} above, we get that 
$$
f_n \xrightarrow[n \to \infty]{} 
	{\adjustlimits\sup_{q \in [0, \rho]} \inf_{r \geq 0}} \Big\{ \psi_{P_0}(r + 2p \lambda q^{2p-1}) + \alpha \Psi_{P_{\rm out}}(q) - \frac{rq}{2} - (2p-1) \frac{\lambda}{2} q^{2p}\Big\} \,.
	$$
	Proposition~\ref{prop:mi_side_tensor} follows then by rewriting the above limit in terms of mutual information, as we did to deduce Corollary~\ref{cor:mi} from Theorem~\ref{th:RS_1layer}.
\end{proof}

\begin{proof}[Proof of Lemma \ref{lem:derF}:]
	The proof follows exactly the same steps than the one of Lemma~\ref{lem:diff_FRS}, so we omit it for the sake brevity.
\end{proof}

\subsubsection{Lower bound using the generalization error}\label{sec:lower_bound_Q}

Let us fix $\alpha \in D^*$. 
The sequence of the overlaps $\big(Q_n \big)_{n \geq 1}$ is tight (because bounded in $L^1$). 
By Prokhorov's Theorem we know that the sequence of the laws of $\big(Q_n \big)_{n \geq 1}$ is relatively compact. We can thus consider a subsequence along which it converges in law, to some random variable $Q$. In order to simplify the notations (and because working with an extraction does not change the proof) we will assume in the sequel that
$$
Q_n \xrightarrow[n \to \infty]{(d)} Q \,,
$$
for some random variable $Q$.
We aim now at showing that $|Q| = q^*(\alpha)$ almost-surely.

\begin{lemma}[Upper bound on the overlap]\label{lem:as_bound_Q}
	$|Q| \leq q^*(\alpha)$ almost-surely.
\end{lemma}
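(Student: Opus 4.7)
The plan is to derive Lemma~\ref{lem:as_bound_Q} as an essentially immediate corollary of the in-probability bound already established in Proposition~\ref{prop:upper_bound_overlap}. We have assumed (after extraction) that $Q_n \xrightarrow[n \to \infty]{(d)} Q$, so the strategy is to transfer the in-probability upper bound on $|Q_n|$ to an almost-sure upper bound on $|Q|$ via weak convergence, and then let the slack parameter tend to $0$.

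First I would apply the continuous mapping theorem: since $x \mapsto |x|$ is continuous on $\R$, the convergence in distribution $Q_n \to Q$ yields $|Q_n| \xrightarrow[n\to\infty]{(d)} |Q|$. Next, for any fixed $\epsilon > 0$, the set $F_\epsilon \defeq \{y \in \R \mid y \geq q^*(\alpha) + \epsilon\}$ is closed, so by the Portmanteau theorem applied to $|Q_n| \to |Q|$ in law,
\begin{equation*}
\P\big(|Q| \geq q^*(\alpha) + \epsilon\big) \;\leq\; \limsup_{n \to \infty} \P\big(|Q_n| \geq q^*(\alpha) + \epsilon\big).
\end{equation*}
By Proposition~\ref{prop:upper_bound_overlap} the right-hand side equals $0$, so $\P(|Q| \geq q^*(\alpha) + \epsilon) = 0$ for every $\epsilon > 0$.

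Finally, writing $\{|Q| > q^*(\alpha)\} = \bigcup_{k \geq 1} \{|Q| \geq q^*(\alpha) + 1/k\}$ and using countable subadditivity of $\P$, one concludes $\P(|Q| > q^*(\alpha)) = 0$, i.e.\ $|Q| \leq q^*(\alpha)$ almost surely. There is no real obstacle here since all the substantive work—deriving the in-probability upper bound through the tensor-perturbation trick of \eqref{pert_tensoir}, Proposition~\ref{prop:mi_side_tensor} and Lemma~\ref{lem:derF}—has already been carried out; the present lemma is the routine repackaging of that estimate under the weak limit extracted via Prokhorov.
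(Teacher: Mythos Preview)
Your proof is correct and essentially identical to the paper's: both invoke Portmanteau's theorem (you on the closed set $[q^*(\alpha)+\epsilon,\infty)$, the paper on the complementary closed set $[0,q^*(\alpha)+\epsilon]$) to pass the in-probability bound of Proposition~\ref{prop:upper_bound_overlap} through the weak limit, then let $\epsilon\to 0$. The explicit continuous-mapping step you add is implicit in the paper's argument.
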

\begin{proof}
	Let $\epsilon > 0$. The set $[0,q^*(\alpha) + \epsilon]$ is closed, so by Portemanteau's Theorem
	$$
	\P\big(|Q| \leq q^*(\alpha) + \epsilon \big) \geq \limsup_{n \to \infty} \P \big(|Q_n| \leq q^*(\alpha) + \epsilon \big) = 1 \,,
	$$
	by Proposition~\ref{prop:upper_bound_overlap}. So $\P\big(|Q| \leq q^*(\alpha)+ \epsilon\big) = 1$ for all $\epsilon >0$ which gives $\P\big(|Q| \leq q^*(\alpha)\big) = 1$.
\end{proof}

We are going to prove the converse lower bound using Theorem~\ref{th:gen_f}. Let $f: \R \to \R$ be a continuous bounded function.
Theorem \ref{th:gen_f} gives $\mathcal{E}_{f,n}(\alpha) \xrightarrow[n \to \infty]{} \mathcal{E}_f(q^*(\alpha))$. The function $\mathcal{E}_f$ can be written as
\begin{align*}
\mathcal{E}_f(q) 
	&=
	\frac{1}{2} \E \Big[
		h_f\big(\sqrt{q} Z_0 + \sqrt{\rho - q} Z_1, \sqrt{q}Z_0 + \sqrt{\rho - q} Z_1'\big)
	\Big]
\end{align*}
where $Z_0,Z_1,Z_1' \iid \cN(0,1)$ and $h_f: (a,b) \in \R^2 \mapsto \int (f(y_1) -f(y_2))^2 P_{\rm out}(y_1|a) P_{\rm out}(y_2|b) dy_1 dy_2 $. 
	By a central limit argument, we have:
	\begin{lemma}\label{lem:CLT_Q}
	$$
	\Big(\frac{\bx \cdot \bbf{\Phi}_{\rm new}}{\sqrt{n}}, \frac{\bX^* \cdot \bbf{\Phi}_{\rm new}}{\sqrt{n}} \Big)
	\xrightarrow[n \to \infty]{(d)} (Z_1,Z_2) \,,
	$$
	where $(Z_1,Z_2)$ is sampled, conditionally on $Q$, from $\cN\Big(0, 
	\begin{psmallmatrix}
		\rho & Q \\
		Q & \rho
	\end{psmallmatrix} \Big)$.
	\end{lemma}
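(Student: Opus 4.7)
The strategy is to condition on $(\bx,\bX^*)$ and use the fact that $\bbf{\Phi}_{\rm new}$ is independent of $(\bx,\bX^*,\bY,\bbf{\Phi})$ with i.i.d.\ $\cN(0,1)$ entries. Conditionally on $(\bx,\bX^*)$, the pair $\big(\bx\cdot\bbf{\Phi}_{\rm new}/\sqrt{n},\,\bX^*\cdot\bbf{\Phi}_{\rm new}/\sqrt{n}\big)$ is a bivariate centered Gaussian with covariance matrix
\begin{equation*}
\Sigma_n \defeq \frac{1}{n}\begin{pmatrix} \|\bx\|^2 & \bx\cdot\bX^* \\ \bx\cdot\bX^* & \|\bX^*\|^2\end{pmatrix}.
\end{equation*}
Thus for every $(t_1,t_2)\in\R^2$, the characteristic function of the left-hand side equals
\begin{equation*}
\E\Big[\exp\Big(-\tfrac{1}{2}\big(t_1^2\,\|\bx\|^2/n + 2 t_1 t_2\, Q_n + t_2^2\,\|\bX^*\|^2/n\big)\Big)\Big].
\end{equation*}

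The second ingredient is the limiting behavior of the three entries of $\Sigma_n$. For $\|\bX^*\|^2/n$, the weak law of large numbers (applicable since $P_0$ has a finite second moment) gives $\|\bX^*\|^2/n\to\rho$ in probability. For $\|\bx\|^2/n$, one invokes the Nishimori identity (Proposition \ref{prop:nishimori}): in the Bayes-optimal setting, the joint law of $(\bY,\bbf{\Phi},\bx)$ matches that of $(\bY,\bbf{\Phi},\bX^*)$, so $\|\bx\|^2/n$ has the same distribution as $\|\bX^*\|^2/n$ and therefore also converges to $\rho$ in probability. Finally, $Q_n\to Q$ in distribution by assumption of the subsequence. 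Combining these, Slutsky's theorem yields joint convergence
\begin{equation*}
\big(\|\bx\|^2/n,\,\|\bX^*\|^2/n,\,Q_n\big)\xrightarrow[n\to\infty]{(d)}(\rho,\rho,Q).
\end{equation*}

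The continuous mapping theorem then gives the convergence in law of the exponent, and since the integrand is uniformly bounded by $1$, dominated convergence upgrades this to convergence of expectations:
\begin{equation*}
\E\Big[e^{-\frac{1}{2}(t_1^2\|\bx\|^2/n + 2 t_1 t_2 Q_n + t_2^2\|\bX^*\|^2/n)}\Big] \xrightarrow[n\to\infty]{} \E\Big[e^{-\frac{1}{2}(\rho t_1^2 + 2 t_1 t_2 Q + \rho t_2^2)}\Big].
\end{equation*}
The right-hand side is exactly the characteristic function at $(t_1,t_2)$ of the mixture $\cN\big(0,\begin{psmallmatrix}\rho & Q \\ Q & \rho\end{psmallmatrix}\big)$ described in the statement, so Lévy's continuity theorem concludes the proof.

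The only slightly delicate point is the identification $\|\bx\|^2/n\to\rho$ in probability, which is not a LLN on i.i.d.\ variables but a structural consequence of the Bayes-optimality via Nishimori; everything else is a routine Gaussian conditioning argument combined with Slutsky and dominated convergence.
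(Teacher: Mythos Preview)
Your argument is correct and essentially matches the paper's treatment of the Gaussian case; in fact you spell out via Nishimori and Slutsky what the paper simply declares ``obvious''. The one genuine gap is that you assume $\bbf{\Phi}_{\rm new}$ has i.i.d.\ $\cN(0,1)$ entries, whereas Lemma~\ref{lem:CLT_Q} sits inside the proof of Theorem~\ref{th:overlap}, which is stated only under hypothesis~\ref{hyp:phi_general}: the entries are independent, zero-mean, unit-variance with bounded third moment, but not necessarily Gaussian. In that generality, the conditional law of the pair given $(\bx,\bX^*)$ is no longer Gaussian, so your characteristic-function identity for the conditional law fails and the argument does not go through as written.

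The paper closes this gap exactly the way one would expect: it first handles the Gaussian case (your argument), and then, for a smooth bounded test function $L:\R^2\to\R$, compares
$\E\big[L\big(\bx\cdot\bbf{\Phi}_{\rm new}/\sqrt{n},\,\bX^*\cdot\bbf{\Phi}_{\rm new}/\sqrt{n}\big)\big]$
with its Gaussian counterpart $\E\big[L\big(\bx\cdot\bg/\sqrt{n},\,\bX^*\cdot\bg/\sqrt{n}\big)\big]$
via the Lindeberg swapping principle (Theorem~\ref{th:lindeberg}) applied conditionally on $(\bx,\bX^*)$, yielding an $\mathcal{O}(n^{-1/2})$ discrepancy. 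This reduces the general case to the Gaussian one you already handled.
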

	\begin{proof}
		Notice that $\bx$ and $\bX$ are independent of $\bbf{\Phi}_{\rm  new}$. If $(\Phi_{{\rm new}, 1}, \dots, \Phi_{{\rm new}, n}) \iid \cN(0,1)$, then Lemma~\ref{lem:CLT_Q} is obvious because in that case
		$$
	\Big(\frac{\bx \cdot \bbf{\Phi}_{\rm new}}{\sqrt{n}}, \frac{\bX^* \cdot \bbf{\Phi}_{\rm new}}{\sqrt{n}} \Big)
	\sim \mathcal{N}\left(0,
		\frac{1}{n}
		\begin{psmallmatrix}
			\|\bx\|^2 & \bx \cdot \bX^* \\
			\bx \cdot \bX^* & \|\bX^*\|^2
	\end{psmallmatrix} \right)
		\quad \text{and} \quad
		\frac{1}{n}
		\begin{psmallmatrix}
			\|\bx\|^2 & \bx \cdot \bX^* \\
			\bx \cdot \bX^* & \|\bX^*\|^2
		\end{psmallmatrix}
		\xrightarrow[n \to \infty]{(d)}
		\begin{psmallmatrix}
			\rho & Q \\
			Q & \rho
		\end{psmallmatrix}
		.
		$$
		Let us now suppose that the entries of $\bbf{\Phi}_{\rm  new}$ are not i.i.d.\ standard Gaussian (but still verify hypothesis~\ref{hyp:phi_general}). 
		Let $g_1, \dots, g_n \iid \cN(0,1)$.
		Let $L: \R^2 \to \R$ be a bounded $\cC^3$ function, with bounded partial derivatives. We have to show that
		\begin{equation}\label{eq:goal_CLT_Q}
			\E \Big[ L 
			\Big(\frac{\bx \cdot \bbf{\Phi}_{\rm new}}{\sqrt{n}}, \frac{\bX^* \cdot \bbf{\Phi}_{\rm new}}{\sqrt{n}} \Big) \Big]
			\xrightarrow[n \to \infty]{} \E \big[L(Z_1,Z_2) \big] \,.
\end{equation}
	We have seen above that
	$\E \big[L 
	\big(\frac{\bx \cdot \bbf{g}}{\sqrt{n}}, \frac{\bX^* \cdot \bbf{g}}{\sqrt{n}} \big) \big]
		\xrightarrow[n \to \infty]{} \E [L(Z_1,Z_2)]$. 
		We now apply Theorem \ref{th:lindeberg} (Theorem~2 from \cite{korada2011lindeberg}) conditionally on $\bx,\bX^*$ to obtain
		$$
		\E \Big[ L 
		\Big(\frac{\bx \cdot \bbf{\Phi}_{\rm new}}{\sqrt{n}}, \frac{\bX^* \cdot \bbf{\Phi}_{\rm new}}{\sqrt{n}} \Big) \Big]
	=
	\E \Big[L 
	\Big(\frac{\bx \cdot \bbf{g}}{\sqrt{n}}, \frac{\bX^* \cdot \bbf{g}}{\sqrt{n}} \Big) \Big]
	+ {\cal O}_n(n^{-1/2}) \,,
		$$
		which proves \eqref{eq:goal_CLT_Q} and therefore Lemma~\ref{lem:CLT_Q}.
	\end{proof}

\begin{proposition}We have\label{prop:lim_gen_f}
	$$
	\mathcal{E}_{f,n}(\alpha) \xrightarrow[n \to \infty]{} \frac{1}{2} \E \big[h_f(Z_1,Z_2)\big] \,,
	$$
	where $(Z_1,Z_2)$ is defined in Lemma~\ref{lem:CLT_Q} above.
\end{proposition}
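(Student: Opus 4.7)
The plan is to rewrite $\mathcal{E}_{f,n}(\alpha)$ as a two--replica correlation, apply a central limit argument in the spirit of Lemma~\ref{lem:CLT_Q} to the outputs on the new data point, and then pass to the limit using the subsequential distributional limit $Q$ of the overlap $Q_n$.

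First I would introduce the two bounded functions $p_f(a)\defeq \int f(y)P_{\rm out}(y|a)dy$ and $q_f(a)\defeq \int f(y)^2 P_{\rm out}(y|a)dy$, noting that $h_f(a,b)=q_f(a)+q_f(b)-2p_f(a)p_f(b)$, so that the target is $\E[q_f(Z_1)]-\E[p_f(Z_1)p_f(Z_2)]$ (using that $Z_1,Z_2$ are marginally $\cN(0,\rho)$). Then I would expand
\begin{align*}
\mathcal{E}_{f,n}(\alpha)
= \E\!\big[f(Y_{\rm new})^2\big]-\E\!\big[\E[f(Y_{\rm new})|\bY,\bbf{\Phi},\bbf{\Phi}_{\rm new}]^2\big]
= \E\!\big[q_f(S_0)\big]-\E\!\big[p_f(S_1)\,p_f(S_2)\big],
\end{align*}
where $S_0=\bbf{\Phi}_{\rm new}\!\cdot\!\bX^*/\sqrt n$ and $S_i=\bbf{\Phi}_{\rm new}\!\cdot\!\bx^{(i)}/\sqrt n$ for two conditionally i.i.d.\ posterior samples $\bx^{(1)},\bx^{(2)}\sim P(\cdot|\bY,\bbf{\Phi})$. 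By the Nishimori identity (Proposition~\ref{prop:nishimori} applied to swap $\bx^{(2)}$ and $\bX^*$), we can equivalently write $\E[p_f(S_1)p_f(S_2)]=\E[p_f(S_0)p_f(S_1)]$; this is convenient because then the overlap appearing in the covariance of $(S_0,S_1)$ is precisely $Q_n=\bX^*\!\cdot\!\bx^{(1)}/n$, whose distributional subsequential limit is $Q$ by hypothesis.

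Conditionally on $(\bX^*,\bx^{(1)})$ the vector $\bbf{\Phi}_{\rm new}$ is independent of everything else and its entries have zero mean, unit variance and bounded third moment; thus a Lindeberg exchange exactly as in Lemma~\ref{lem:CLT_Q} gives that $(S_0,S_1)$ is asymptotically bivariate Gaussian with (random) covariance $\tfrac{1}{n}\bigl(\begin{smallmatrix}\|\bX^*\|^2 & \bX^*\!\cdot\!\bx^{(1)}\\ \bX^*\!\cdot\!\bx^{(1)} & \|\bx^{(1)}\|^2\end{smallmatrix}\bigr)$. The law of large numbers (using~\ref{hyp:third_moment}) gives $\|\bX^*\|^2/n\to\rho$ in probability, and the Nishimori identity together with the assumption on $P_0$ yield $\|\bx^{(1)}\|^2/n\to\rho$ in probability as well; combined with $Q_n\to Q$ in distribution along the subsequence, we obtain
\begin{equation*}
(S_0,S_1)\xrightarrow[n\to\infty]{(d)} (Z_1,Z_2),
\end{equation*}
where $(Z_1,Z_2)$ conditional on $Q$ is $\cN\!\bigl(0,\bigl(\begin{smallmatrix}\rho&Q\\ Q&\rho\end{smallmatrix}\bigr)\bigr)$. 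Similarly $S_0\to Z_1\sim\cN(0,\rho)$ alone.

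To conclude, I would show $\E[q_f(S_0)]\to\E[q_f(Z_1)]$ and $\E[p_f(S_0)p_f(S_1)]\to\E[p_f(Z_1)p_f(Z_2)]$. Since $f$ is continuous and bounded, $p_f$ and $q_f$ are bounded; moreover, because $x\mapsto P_{\rm out}(\cdot|x)$ is continuous almost everywhere for the weak topology (assumption of Proposition~\ref{claim:gamp2}; in the generality of Theorem~\ref{th:gen_f} this follows from~\ref{hyp:cont_pp}), the Portmanteau theorem makes $p_f,q_f$ continuous almost everywhere, so by the continuous mapping theorem the convergences above pass through composition, and bounded convergence finishes the job. The main subtlety in this plan is ensuring that the random covariance matrix of $(S_0,S_1)$ (which is not independent of $\bbf{\Phi}_{\rm new}$ but only of $\bX^*,\bx^{(1)}$) indeed produces the claimed mixture of Gaussians in the limit: this requires running the Lindeberg argument conditionally on $(\bX^*,\bx^{(1)})$ and then integrating out, together with the non-obvious but standard fact that $\bx^{(1)}\!\cdot\!\bx^{(1)}/n\to\rho$ in probability (the ``norm concentration'' for posterior samples), which follows from the Nishimori identity $\E\|\bx^{(1)}\|^2=\E\|\bX^*\|^2=n\rho$ and an $L^2$ estimate on its fluctuations of the type used in Appendix~\ref{appendix-overlap}.
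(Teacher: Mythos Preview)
Your proof is correct and follows essentially the same strategy as the paper: rewrite $\mathcal{E}_{f,n}$ as an expectation of a bounded, almost-everywhere continuous function of the pair $(\bbf{\Phi}_{\rm new}\!\cdot\!\bX^*/\sqrt n,\,\bbf{\Phi}_{\rm new}\!\cdot\!\bx/\sqrt n)$, apply the CLT of Lemma~\ref{lem:CLT_Q}, and conclude via the continuous mapping theorem and bounded convergence.

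A few differences worth noting. The paper skips your two-replica/Nishimori detour by writing the conditional variance directly as $\mathcal{E}_{f,n}=\tfrac12\,\E[h_f(S_1,S_0)]$, using $\mathrm{Var}(U\mid G)=\tfrac12\,\E[(U-U')^2\mid G]$; the second conditional copy of $Y_{\rm new}$ is automatically generated by a posterior sample $\bx$, so no explicit Nishimori swap is needed. You are also re-deriving Lemma~\ref{lem:CLT_Q} rather than citing it. Your flagged ``main subtlety'' about $\|\bx\|^2/n\to\rho$ is in fact immediate: by Nishimori, $\|\bx\|^2/n$ has the \emph{same law} as $\|\bX^*\|^2/n$, which converges to $\rho$ by the law of large numbers, so no $L^2$ fluctuation estimate is required. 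Finally, the paper spends a paragraph checking that the discontinuity set of $h_f$ is null for the law of $(Z_1,Z_2)$ even in the degenerate cases $|Q|=\rho$ (where $(Z_1,Z_2)$ lives on a line); your decomposition into one-variable functions $p_f,q_f$ handles this more cleanly, since each $Z_i$ is marginally $\cN(0,\rho)$ regardless of $Q$.
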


\begin{proof}
	We have
	\vspace{-4mm}
	\begin{align*}
		\mathcal{E}_{f,n} &=
		\E \Big[
			\Big(
				f(Y_{\rm new}) - \E \big[ f(Y_{\rm new}) \big| \bbf{\Phi}_{\rm new}, \bbf{\Phi},\bY \big]			\Big)^2
		\Big]
		\\
		&=
		\frac{1}{2}
		\E \left[
			\int
			(f(y_{\rm new}) - f(y))^2  
			P_{\rm out}\big(y_{\rm  new} \big| \bbf{\Phi}_{\rm new} \cdot \bX^* / \sqrt{n} \big) 
			P_{\rm out}\big(y \big| \bbf{\Phi}_{\rm new} \cdot \bx / \sqrt{n} \big) 
			dy_{\rm new} dy
		\right]
		\\
		&= 
		\frac{1}{2} \E \left[h_f\Big(\frac{\bx \cdot \bbf{\Phi}_{\rm new}}{\sqrt{n}}, \frac{\bX^* \cdot \bbf{\Phi}_{\rm new}}{\sqrt{n}}\Big)\right] \,.
	\end{align*}
	By Lemma~\ref{lem:CLT_Q} above, we have $\big(\frac{\bx \cdot \bbf{\Phi}_{\rm new}}{\sqrt{n}}, \frac{\bX^* \cdot \bbf{\Phi}_{\rm new}}{\sqrt{n}} \big)
	\xrightarrow[n \to \infty]{(d)} (Z_1,Z_2)$.
	Using \ref{hyp:cont_pp} (and the fact that either \ref{hyp:delta_pos} or \ref{hyp:delta_0} hold), we can find a Borel set $S \subset \R$ of full Lebesgue's measure such that $x \mapsto P_{\rm out}(y|x)$ is continuous on $S$, for all $y \in \R$. By dominated convergence (recall that $f$ is assumed to be bounded), we obtain that $h_f$ is continuous on $S \times S$.
	The set of discontinuity points of $h_f$ has thus zero measure for the law of $(Z_1,Z_2)$. Indeed if we condition on $Q$:
	\begin{itemize}
		\item if $|Q| < \rho$, then $(Z_1,Z_2)$ has a density over $\R^2$.
		\item if $Q = \rho$, then $Z_1 = Z_2$ almost surely, but $h_f$ is continuous on $\big\{ (s,s) \, \big| \, s \in S \big\}$ that has full Lebesgue's measure on the diagonal $\big\{ (x,x) \, \big| \, x \in \R \big\}$.
		\item if $Q = - \rho$, then $Z_1 = -Z_2$ almost surely and we use then similar arguments as for the previous point.
	\end{itemize}
	We have therefore:
	$$
h_f\Big(\frac{\bx \cdot \bbf{\Phi}_{\rm new}}{\sqrt{n}}, \frac{\bX^* \cdot \bbf{\Phi}_{\rm new}}{\sqrt{n}}\Big)
\xrightarrow[n \to \infty]{(d)} h_f(Z_1,Z_2) \,,
	$$
	and Lemma \ref{prop:lim_gen_f} follows from the fact that $h_f$ is bounded.
\end{proof}

Let us now define:
\begin{equation}
	H_f:
	\left|
	\begin{array}{ccc}
		[-\rho,\rho] & \to & \R \\
		q & \mapsto & \frac{1}{2} \E \big[h_f(G^{(q)})\big]
	\end{array}
	\right.
\end{equation}
where $G^{(q)} \sim \cN\big(0, 
	\begin{psmallmatrix}
		\rho & q \\
		q & \rho
	\end{psmallmatrix} \big)$.
	Notice that $H_f$ is equal to the function $\mathcal{E}_f$ on $[0,\rho]$.
	By Proposition~\ref{prop:lim_gen_f} above and Theorem~\ref{th:gen_f}, we have:
	\begin{equation}\label{eq:comp_Q}
		H_f(q^*(\alpha)) = \lim_{n \to \infty} \mathcal{E}_{f,n}(\alpha) = \E \big[H_f(Q)\big] \,.
\end{equation}
	\begin{lemma}\label{lem:absQ}
		For all $q \in [-\rho,\rho]$, $H_f(q) \geq H_f(|q|)$.
	\end{lemma}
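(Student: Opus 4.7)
The plan is to reduce the inequality to a standard Cauchy--Schwarz argument by exploiting the product structure of $h_f$. First I would expand
\[
h_f(a,b) = G(a) + G(b) - 2 F(a) F(b), \qquad F(z) \defeq \int f(y)\,P_{\rm out}(y|z)\,dy,\quad G(z) \defeq \int f(y)^2 P_{\rm out}(y|z)\,dy,
\]
so that $2 H_f(q) = \E[G(Z_1^{(q)})] + \E[G(Z_2^{(q)})] - 2\,\E[F(Z_1^{(q)}) F(Z_2^{(q)})]$ where $(Z_1^{(q)},Z_2^{(q)}) \sim G^{(q)}$. Since the marginals of $Z_1^{(q)}$ and $Z_2^{(q)}$ are both $\cN(0,\rho)$, independently of $q$, the first two expectations do not depend on $q$, and the inequality $H_f(q) \geq H_f(|q|)$ reduces to
\[
\E\big[F(Z_1^{(q)}) F(Z_2^{(q)})\big] \;\leq\; \E\big[F(Z_1^{(|q|)}) F(Z_2^{(|q|)})\big].
\]

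Next I would write the Gaussian pair in a convenient form. Letting $V,W_1,W_2 \iid \cN(0,1)$ and $a=\sqrt{|q|}$, $b=\sqrt{\rho-|q|}$, one can represent
\[
Z_1^{(q)} = aV + bW_1, \qquad Z_2^{(q)} = \mathrm{sign}(q)\,aV + bW_2,
\]
(the case $q=0$ being trivial). Defining
\[
\widetilde F(v) \defeq \E_W\big[F(av + bW)\big],
\]
and conditioning first on $V$ using the independence of $W_1,W_2$, we get $\E[F(Z_1^{(|q|)})F(Z_2^{(|q|)})] = \E_V[\widetilde F(V)^2]$ while $\E[F(Z_1^{(q)})F(Z_2^{(q)})] = \E_V[\widetilde F(V)\,\widetilde F(\mathrm{sign}(q)V)]$ for $q<0$, i.e.\ $\E_V[\widetilde F(V)\widetilde F(-V)]$.

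Finally, the Cauchy--Schwarz inequality gives
\[
\E_V\big[\widetilde F(V)\widetilde F(-V)\big] \;\leq\; \sqrt{\E_V[\widetilde F(V)^2]\,\E_V[\widetilde F(-V)^2]} \;=\; \E_V\big[\widetilde F(V)^2\big],
\]
where the last equality uses that $V \stackrel{d}{=} -V$. Combining these observations yields the claim. I do not expect any technical obstacle: the only mild point to verify is that $F$ is well defined and bounded (which follows from $f$ being bounded together with $P_{\rm out}$ being a probability kernel), so that all expectations are finite and the above manipulations are justified.
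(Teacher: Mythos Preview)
Your proof is correct. The decomposition $h_f(a,b)=G(a)+G(b)-2F(a)F(b)$ cleanly isolates the only $q$-dependent term, and the Cauchy--Schwarz step $\E_V[\widetilde F(V)\widetilde F(-V)]\le \E_V[\widetilde F(V)^2]$ (using $V\stackrel{d}{=}-V$) finishes it. The boundedness of $f$ indeed makes all expectations finite.

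The paper takes a slightly different route: it does not decompose $h_f$ but writes $H_f(-q)=\tfrac12\E[(f(Y)-f(Y'))^2]$ for $Y\sim\widetilde P_{\rm out}(\cdot|Z_0)$, $Y'\sim\widetilde P_{\rm out}(\cdot|{-Z_0})$ conditionally independent given $Z_0$, inserts $\pm\E[f(Y)|Z_0]$, and invokes the variational property of conditional expectation ($\E[f(Y')|Z_0]$ is the best $Z_0$-measurable $L^2$ predictor of $f(Y')$). If one unpacks the paper's inequality, it reduces exactly to $\E_V[\widetilde F(V)\widetilde F(-V)]\le\E_V[\widetilde F(V)^2]$ with $\widetilde F(v)=\E[f(Y)|Z_0=v]$, so the two arguments are equivalent at the core. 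Your presentation is more explicit and algebraic; the paper's is phrased in the MMSE language that pervades the rest of the section. Either is fine.
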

	\begin{proof}
		Let $q \in [0,\rho]$ and $Z_0,Z_1,Z_1' \iid \cN(0,1)$.
		$$
		H_f(-q) = \frac{1}{2} \E \Big[ h_f\big(\sqrt{q} Z_0 + \sqrt{\rho -q} Z_1, - \sqrt{q} Z_0 + \sqrt{\rho-q} Z_1' \big)\Big] \,.
		$$
		Let us denote by $\E_{Z_1}$ and $\E_{Z_1'}$ the expectations with respect to $Z_1$ and $Z_1'$. By replacing $h_f$ by its expression, we have
		\begin{align*}
			H_f(-q) 
			&= \frac{1}{2} \E
			\int (f(y)-f(y'))^2 P_{\rm out}(y| \sqrt{q} Z_0 + \sqrt{\rho-q} Z_1) P_{\rm out}(y'| -\sqrt{q} Z_0 + \sqrt{\rho-q} Z_1') dy dy'
			\\
			&= \frac{1}{2} \E
			\int (f(y)-f(y'))^2 \E_{Z_1} P_{\rm out}(y| \sqrt{q} Z_0 + \sqrt{\rho-q} Z_1) \E_{Z_1'} P_{\rm out}(y'| -\sqrt{q} Z_0 + \sqrt{\rho-q} Z_1') dy dy'
			\\
			&= \frac{1}{2} \E
			\int (f(y)-f(y'))^2 \widetilde{P}_{\rm out}(y| Z_0) \widetilde{P}_{\rm out}(y'| -Z_0) dy dy' \,,
		\end{align*}
		where $\widetilde{P}_{\rm out}(y|z) = \E_{Z_1} P_{\rm out}(y|\sqrt{q}z + \sqrt{\rho-q} Z_1)$. Let now $Y$ and $Y'$ be two random variables that are independent conditionally on $Z_0$ and distributed as
		$$
		Y \sim \widetilde{P}_{\rm out}(\cdot | Z_0) 
		\qquad \text{and} \qquad
		Y' \sim \widetilde{P}_{\rm out}(\cdot | -Z_0)  \,.
		$$
		Then we have
		\begin{align*}
			H_f(-q) 
			&= \frac{1}{2} \E \Big[\big(f(Y)-f(Y')\big)^2\Big]
			= \frac{1}{2} \E \Big[\big(f(Y) - \E[f(Y)|Z_0] + \E[f(Y)|Z_0]-f(Y')\big)^2\Big]
			\\
			&= \frac{1}{2} \E \Big[\big(f(Y) - \E[f(Y)|Z_0] \big)^2\Big]
			+\frac{1}{2} \E \Big[\big( \E[f(Y)|Z_0]-f(Y')\big)^2\Big] \,,
		\end{align*}
		because $Y$ and $Y'$ are independent conditionally on $Z_0$. The conditional expectation $\E[f(Y)|Z_0]$ is $Z_0$-measurable, therefore $\E \big[( \E[f(Y)|Z_0]-f(Y'))^2\big] \geq \E \big[( \E[f(Y')|Z_0]-f(Y'))^2\big] = \E \big[( \E[f(Y)|Z_0]-f(Y))^2\big]$. We conclude
		$$
		H_f(-q) \geq \E \big[( \E[f(Y)|Z_0]-f(Y))^2\big] = H_f(q) \,.
		$$
	\end{proof}
	
	We have now all the tools needed to prove Theorem~\ref{th:overlap}. Using Lemma~\ref{lem:absQ} and \eqref{eq:comp_Q} above, we get that $\E H_f(|Q|) \leq \E H_f(Q) = H_f(q^*(\alpha))$. Since $H_f$ is equal to $\mathcal{E}_f$ on $[0,\rho]$ this gives
	\begin{equation}\label{eq:comp_E}
	\E \big[\mathcal{E}_f(|Q|) \big] \leq \mathcal{E}_f(q^*(\alpha)) \,.
\end{equation}
If $q^*(\alpha) = 0$, then Theorem \ref{th:overlap} follows simply from Proposition \ref{prop:upper_bound_overlap}. We suppose now that $q^*(\alpha) > 0$ and consider $\epsilon \in (0,q^*(\alpha))$.
	We define $p(\epsilon) = \P\big(|Q| \leq q^*(\alpha) - \epsilon\big)$. We are going to show that $p(\epsilon) = 0$.
	We assumed that $P_{\rm out}$ is informative, so by Proposition \ref{prop:gen_strict} and Proposition \ref{prop:exists_gen_dec} in Appendix \ref{appendix_scalar_channel2}, there exists a continuous bounded function $f : \R \mapsto \R$ such that $\mathcal{E}_f$ is strictly decreasing on $[0,\rho]$. In the following, $f$ is assumed to be such a function.
	We have
	\begin{align*}
		\E\big[\mathcal{E}_f(|Q|)\big]
		&=
		\E \Big[
			\bbf{1}\big(|Q| \leq q^*(\alpha) - \epsilon \big) \mathcal{E}_f(|Q|)
			+
			\bbf{1}\big(|Q| > q^*(\alpha) - \epsilon \big) \mathcal{E}_f(|Q|)
		\Big]
		\\
		&\geq
		p(\epsilon) \mathcal{E}_f(q^*(\alpha) - \epsilon)
			+
			(1-p(\epsilon)) \mathcal{E}_f(q^*(\alpha)) \,.
	\end{align*}
	because $\mathcal{E}_f$ is non-increasing and because $|Q| \leq q^*(\alpha)$ almost-surely (Lemma~\ref{lem:as_bound_Q}). Combining this with \eqref{eq:comp_E} leads to
	$$
	p(\epsilon) \mathcal{E}_f(q^*(\alpha)) \geq p(\epsilon) \mathcal{E}_f(q^*(\alpha) - \epsilon) \,.
	$$
	Since $\mathcal{E}_f$ is strictly decreasing: $\mathcal{E}_f(q^*(\alpha)) < \mathcal{E}_f(q^*(\alpha) - \epsilon)$, which implies $p(\epsilon) = 0$. This is true for all $\epsilon >0$, consequently $|Q| \geq q^*(\alpha)$ almost-surely. We get (using Lemma~\ref{lem:as_bound_Q}) that
	$$
	|Q| = q^*(\alpha)\,,  \quad \text{almost-surely.}
	$$
	We conclude that the only possible limit in law of the tight sequence $\big(|Q_n|\big)_{n \geq 1}$ is $q^*(\alpha)$. Therefore $|Q_n| \to q^*(\alpha)$ in law and in probability because $q^*(\alpha)$ is a constant.

\subsection{Denoising error: Proof of Corollary~\ref{Cor:mMMSE}}\label{appendix:i_mmse_den}

Start by noticing that the denoising error, i.e. the right hand side of \eqref{den_err}, is obtained through the I-MMSE theorem, see Proposition \ref{prop:immse}, applied to $\widetilde{i}_n\defeq I(\bX^*,\bA; \bY \, | \bbf{\Phi})/n$:
\begin{align}\label{175}
	\frac{\partial \widetilde{i}_n}{\partial \Delta^{-1}} = 
	\frac{1}{2n} {\rm MMSE}\Big(
			\varphi\Big(\frac{1}{\sqrt{n}}\bbf{\Phi}\bX^*,\bA\Big)\Big|\bbf{\Phi},\bY	
		\Big)
	\,.	
	\end{align}
This mutual information is simply computed using our main theorem. Indeed, 
\begin{align*}
\widetilde i_{\infty}\defeq \lim_{n\to\infty}\widetilde{i}_n=\lim_{n\to\infty}\frac1n H(\bY|\bbf{\Phi})-\lim_{n\to\infty}\frac1n H(\bY|\bbf{\Phi},\bX^*,\bA)	=-f_{\infty}-\lim_{n\to\infty}\frac1n H(\bY|\bbf{\Phi},\bX^*,\bA)\,.
\end{align*}
One can simply check that $\lim_{n\to\infty} H(\bY|\bbf{\Phi},\bX^*,\bA)/n=\alpha\ln(2\pi\Delta e)/2$ by similar computations as in the proof of Corollary \ref{cor:mi}. Therefore, defining
	\begin{align*}
	\widetilde{i}_{\rm RS}(q,\Delta) \defeq -\frac{\alpha}{2}\ln(2\pi \Delta e) - \alpha \Psi_{P_{\rm out}} (q) - \inf_{r \geq 0} \Big\{ \psi_{P_0}(r) - \frac{qr}{2} \Big\}\,,	
	\end{align*}
we have $\widetilde i_{\infty} = \sup_{q \in [0,\rho]} \widetilde{i}_{\rm RS}(q,\Delta)$ from Theorem \ref{th:RS_1layer}.

One can verify easily that $\widetilde{i}_n$ is a concave differentiable function of $\Delta^{-1}$ (this is again related to the I-MMSE theorem). Thus its limit $\widetilde i_{\infty}$ is also a concave function of $\Delta^{-1}$. Therefore, a standard analysis lemma gives that the derivative of $\widetilde i_n$ w.r.t.\ $\Delta^{-1}$ converges to the derivative of $\widetilde i_{\infty}$ at every point at which $\widetilde i_{\infty}$ is differentiable (i.e.\ almost every points, by concavity): $\lim_{n\to\infty} \partial_{\Delta^{-1}}\widetilde  i_{n}=\partial_{\Delta^{-1}}\widetilde  i_{\infty}=\partial_{\Delta^{-1}}\sup_{q \in [0,\rho]} \widetilde{i}_{\rm RS}(q,\Delta)$. The first limit is given by the limit of the right hand side of \eqref{175}. It thus remains to compute $\partial_{\Delta^{-1}}\sup_{q \in [0,\rho]} \widetilde{i}_{\rm RS}(q,\Delta)$.

Assume for a moment that the $\partial_{\Delta^{-1}}$ and $\sup_{q \in [0,\rho]}$ operations commute. Then we need to compute $\partial_{\Delta^{-1}}\Psi_{P_{\rm out}}(q)$; this follows from the I-MMSE theorem. Indeed, if we denote $S = \varphi(\sqrt{q}\,V + \sqrt{\rho-q\,}W^*,\bA)$, notice that $\Psi_{P_{\rm out}}(q) = - I(S;S+\sqrt{\Delta}Z\,|\,V)-  \ln(2\pi e \Delta)/2$, because $\Psi_{P_{\rm out}}(q) = - H(S+\sqrt{\Delta}Z \,| \,V)$ and $I(S + \sqrt{\Delta} Z ; S \, | \, V) = H(S+\sqrt{\Delta}Z \,| \,V) - H(S+\sqrt{\Delta}Z \,| \,V,S) = - \Psi_{P_{\rm out}}(q) -  \ln(2\pi e \Delta)/2$. Therefore
	\begin{align*}
\frac{\partial \Psi_{P_{\rm out}}(q)}{\partial \Delta^{-1}}
&= 
\frac{\Delta}{2} - \frac{\partial }{\partial \Delta^{-1}}I(S + \sqrt{\Delta} Z ; S \, | \, V) 
= 
\frac{\Delta}{2} - \frac{1}{2}{\rm MMSE}(S \, | \, V,S + \sqrt{\Delta} Z) 
\\
&=
\frac{\Delta}{2} - \frac{1}{2} \Big( \E \big[\varphi(\sqrt{\rho}\,V,\bA)^2\big] - \E \big[\big\langle \varphi(\sqrt{q}\,V + \sqrt{\rho - q}\,w,\ba)\rangle_{\rm sc}^2\big] \Big) \,.
	\end{align*}
	Consequently,
	$$
	\frac{\partial \widetilde{i}_{\rm RS}(q,\Delta)}{\partial \Delta^{-1}} =  \frac{\alpha}{2} \Big( \E \big[\varphi(\sqrt{\rho}\,V,\bA)^2\big] - \E \big[\big\langle \varphi(\sqrt{q}\,V + \sqrt{\rho - q}\,w,\ba)\rangle_{\rm sc}^2\big] \Big) \,.
	$$
	Now, Theorem 1 from~\cite{milgrom2002envelope} gives that at every $\Delta^{-1}$ at which $i_{\infty}$ is differentiable
	\begin{align*}
	\frac{\partial \widetilde i_{\infty}}{\partial \Delta^{-1}}
	=
	\frac{\partial}{\partial \Delta^{-1}}
	\sup_{q \in [0,\rho]} \widetilde i_{\rm RS}(q,\Delta)
	=
	\frac{\alpha}{2} \Big(
			\E \big[\varphi(\sqrt{\rho}\,V,\bA)^2\big] - \E \big[\big\langle \varphi(\sqrt{q^*}\,V + \sqrt{\rho - q^*}\,w,\ba)\rangle_{\rm sc}^2\big] \Big) 
	\end{align*}
	where $q^* \in [0,\rho]$ is a point where the supremum above is achieved, and thus corresponds to an optimal couple in \eqref{eq:rs_formula}.
	As explained above, $\lim_{n\to\infty} \partial_{\Delta^{-1}}\widetilde  i_{n}=\partial_{\Delta^{-1}}\widetilde  i_{\infty}$ at every $\Delta^{-1}$ at which $\widetilde i_{\infty}$ is differentiable, which concludes the proof.

\begin{appendices}
		\section{Some technicalities}\label{appendix_misc}

		\subsection{The Nishimori identity}\label{appendix-nishimori}

\begin{proposition}[Nishimori identity] \label{prop:nishimori}
	Let $(\bX,\bY) \in \R^{n_1} \times \R^{n_2}$ be a couple of random variables. Let $k \geq 1$ and let $\bX^{(1)}, \dots, \bX^{(k)}$ be $k$ i.i.d.\ samples (given $\bY$) from the conditional distribution $P(\bX=\cdot\, | \bY)$, independently of every other random variables. Let us denote $\langle - \rangle$ the expectation operator w.r.t.\ $P(\bX= \cdot\, | \bY)$ and $\E$ the expectation w.r.t.\ $(\bX,\bY)$. Then, for all continuous bounded function $g$ we have
	\begin{align}
	\E \langle g(\bY,\bX^{(1)}, \dots, \bX^{(k)}) \rangle
	=
	\E \langle g(\bY,\bX^{(1)}, \dots, \bX^{(k-1)}, \bX) \rangle\,.	
	\end{align}
\end{proposition}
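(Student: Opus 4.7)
The plan is to reduce the identity to the following tautology: given $\bY$, the random variable $\bX$ is by construction distributed according to $P(\bX=\cdot\,|\bY)$, which is exactly the distribution from which each $\bX^{(j)}$ is drawn. So conditionally on $\bY$, the vectors $(\bX^{(1)},\dots,\bX^{(k-1)},\bX^{(k)})$ and $(\bX^{(1)},\dots,\bX^{(k-1)},\bX)$ are both made of $k$ i.i.d.\ samples from $P(\bX=\cdot\,|\bY)$ and therefore have the same conditional law. An application of the tower property then yields the identity after integrating over $\bY$.

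More concretely, I would first write, by definition of $\langle -\rangle$ and of the conditional law of the $\bX^{(j)}$'s given $\bY$,
\begin{equation*}
\E\langle g(\bY,\bX^{(1)},\dots,\bX^{(k)})\rangle
= \E\Big[\int g(\bY,\bx^{(1)},\dots,\bx^{(k)})\prod_{j=1}^{k} dP(\bx^{(j)}|\bY)\Big].
\end{equation*}
On the other side, since the auxiliary samples $\bX^{(1)},\dots,\bX^{(k-1)}$ are independent of $\bX$ conditionally on $\bY$ (they were drawn from $P(\cdot|\bY)$ independently of all other variables), one similarly obtains
\begin{equation*}
\E\langle g(\bY,\bX^{(1)},\dots,\bX^{(k-1)},\bX)\rangle
= \E\Big[\int g(\bY,\bx^{(1)},\dots,\bx^{(k-1)},\bX)\prod_{j=1}^{k-1} dP(\bx^{(j)}|\bY)\Big],
\end{equation*}
and taking the inner expectation over $\bX|\bY$ turns the last factor into $\int g(\bY,\bx^{(1)},\dots,\bx^{(k-1)},\bx^{(k)})dP(\bx^{(k)}|\bY)$. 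Comparing the two expressions shows equality.

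There is no real obstacle here beyond bookkeeping the independence relations carefully: one must make sure that the $\bX^{(j)}$'s, being sampled given $\bY$ independently of every other random variable, are in particular independent of the ``true'' $\bX$ conditionally on $\bY$, so that relabelling the $k$-th sample as $\bX$ is legitimate. Continuity and boundedness of $g$ are only used to ensure that all the integrals are well-defined and Fubini applies; the identity is really a statement about equality of conditional distributions and thus extends to any measurable $g$ for which the expectations exist.
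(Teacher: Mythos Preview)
Your proof is correct and is essentially the same argument as the paper's: both observe that, conditionally on $\bY$, the tuple $(\bX^{(1)},\dots,\bX^{(k-1)},\bX)$ has the same law as $(\bX^{(1)},\dots,\bX^{(k)})$ because $\bX$ given $\bY$ is itself a sample from $P(\cdot\,|\,\bY)$. The paper states this in one line via Bayes' formula, while you spell out the integrals explicitly, but the content is identical.
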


\begin{proof}
	This is a simple consequence of Bayes formula.
	It is equivalent to sample the couple $(\bX,\bY)$ according to its joint distribution or to sample first $\bY$ according to its marginal distribution and then to sample $\bX$ conditionally to $\bY$ from its conditional distribution $P(\bX=\cdot\,|\bY)$. Thus the $(k+1)$-tuple $(\bY,\bX^{(1)}, \dots,\bX^{(k)})$ is equal in law to $(\bY,\bX^{(1)},\dots,\bX^{(k-1)},\bX)$.
\end{proof}

		\subsection{Unicity of the optimizer $q^*$ of the replica formula: Proof of Proposition~\ref{prop:q_star}}\label{app:proof_q_star}
	The function
	\begin{equation}\label{eq:def_h}
	h: \alpha \mapsto
	\inf_{q \in [0,\rho]} 
	\Big\{
	\alpha \mathcal{I}_{P_{\rm out}}(q) 
	+ \sup_{r \geq 0} \big\{ I_{P_0}(r)  -\frac{r}{2}(\rho - q) \big\}
\Big\}
	\end{equation}
	is concave (as an infimum of linear functions). An ``envelope'' theorem (Corollary~4 from \cite{milgrom2002envelope}) gives that $h$ is differentiable at $\alpha$ if and only if 
	$$
	\Big\{ \mathcal{I}_{P_{\rm out}}(q) \, \Big| \, q \ \text{minimizer of} \ \eqref{eq:def_h} \Big\}
	$$
	is a singleton. 
	We assumed that $P_{\rm out}$ is informative, so Proposition~\ref{prop:psi_stricly_monoton} gives that $\mathcal{I}_{P_{\rm out}}$ is strictly decreasing. We obtain thus that the set of points at which $h$ is differentiable is exactly $D^*$. 
	Since $h$ is concave, $D^*$ is equal to $\R_+^*$ minus a countable set.
	Corollary~4 from \cite{milgrom2002envelope} gives also that
	$h'(\alpha) = \mathcal{I}_{P_{\rm out}}(q^*(\alpha))$, for all $\alpha \in D^*$.
	The function $h$ is concave, so its derivative $h'$ is non-increasing. Since $\mathcal{I}_{P_{\rm out}}$ is strictly decreasing, we obtain that $\alpha \in D^* \mapsto q^*(\alpha)$ is non-decreasing.

	Let now $\alpha_0 \in D^*$. By concavity of $h$, $h'(\alpha) \to h'(\alpha_0)$ when $\alpha \in D^* \to \alpha_0$. Therefore:
	$$
	\mathcal{I}_{P_{\rm out}}(q^*(\alpha)) \xrightarrow[\alpha \in D \to \alpha_0]{} \mathcal{I}_{P_{\rm out}}(q^*(\alpha_0))
	$$
	which implies $q^*(\alpha) \to q^*(\alpha_0)$ by strict monotonicity of $\mathcal{I}_{P_{\rm out}}$.

		\subsection{Continuity properties of the mutual information}

We establish in this section two continuity properties of the mutual information, namely Proposition \ref{prop:free_wasserstein} and Corollary \ref{cor:cont_I_discrete}. Recall definition \eqref{eq:def_mmse} of the MMSE function.
The following proposition comes from \cite{GuoShamaiVerdu_IMMSE} 
and will be repeatedly used in the sequel.
\begin{proposition}[I-MMSE theorem, \cite{GuoShamaiVerdu_IMMSE,guo2011estimation}]\label{prop:immse}
	Let $P_X$ be a probability distribution over $\R^n$ that admits a finite second moment. Let $\bX \sim P_X$ and $\bZ \sim \cN(0,\bbf{I}_n)$ be independent random variables. Then the function
	$$
	I_{P_X}: \left|
	\begin{array}{ccc}
		\R_+ & \to & \R \\
		\lambda & \mapsto & I(\bX; \sqrt{\lambda} \bX + \bZ)
	\end{array}
	\right.
	$$
	is concave, continuously differentiable over $\R_+$, with derivative given by
	$$
	I_{P_X}'(\lambda) = \frac{1}{2} \MMSE(\bX \, | \, \sqrt{\lambda} \bX + \bZ) = \frac{1}{2} \E \Big[\big\| \bX - \E[\bX|\sqrt{\lambda} \bX + \bZ] \big|^2 \Big]\,.
	$$
\end{proposition}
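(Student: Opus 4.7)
The plan is to reduce the identity to a derivative of an output differential entropy and then bridge to MMSE via Tweedie's formula. Writing $\bY_\lambda = \sqrt{\lambda}\bX + \bZ$, the conditional law $\bY_\lambda\mid\bX$ is Gaussian with identity covariance, so $I(\bX;\bY_\lambda) = h(\bY_\lambda) - h(\bY_\lambda\mid\bX) = h(\bY_\lambda) - \frac{n}{2}\ln(2\pi e)$. The task reduces to showing
\[
\frac{d}{d\lambda}\, h(\bY_\lambda) \;=\; \frac{1}{2}\,\mathrm{MMSE}(\bX \mid \bY_\lambda).
\]

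The heart of the argument is a Gaussian integration-by-parts. Let $p_\lambda$ be the density of $\bY_\lambda$, which is $C^\infty$ (convolution with a Gaussian) and strictly positive. Tweedie's formula gives $\sqrt{\lambda}\,\E[\bX\mid \bY_\lambda=y] = y + \nabla\ln p_\lambda(y)$. Differentiating the explicit convolution representation of $p_\lambda$ in $\lambda$ yields a Fokker-Planck-type identity
\[
\partial_\lambda p_\lambda(y) \;=\; \tfrac{1}{2\lambda}\Delta p_\lambda(y) - \tfrac{1}{2\lambda}\nabla\cdot\big(p_\lambda(y)\,\E[\bX\mid\bY_\lambda=y]\cdot\sqrt{\lambda}\big)\cdot\tfrac{1}{\sqrt\lambda}
\]
(the constants can be fixed cleanly during the derivation). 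Plugging this into $\partial_\lambda h(\bY_\lambda) = -\int(\partial_\lambda p_\lambda)(1+\ln p_\lambda)\,dy$, integrating by parts and substituting Tweedie once more to convert every $\nabla\ln p_\lambda$ into $\E[\bX\mid\bY_\lambda]$, the surface terms vanish and the right-hand side collapses to $\tfrac12\E\|\bX-\E[\bX\mid\bY_\lambda]\|^2$.

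Concavity of $I_{P_X}$ then follows from the I-MMSE formula itself, since $\lambda\mapsto \mathrm{MMSE}(\bX\mid\bY_\lambda)$ is non-increasing by a standard data-processing argument: for $\lambda_1<\lambda_2$ one couples $\bY_{\lambda_2}$ to $\bY_{\lambda_1}$ through an additional independent Gaussian, so any estimator based on the noisier observation yields (after post-processing) an estimator with the same or worse MSE for the cleaner observation. Continuous differentiability follows once $\lambda\mapsto \mathrm{MMSE}(\bX\mid\bY_\lambda)$ is shown to be continuous, which is a dominated-convergence argument on the posterior mean that uses the finite second moment of $\bX$.

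The main obstacle is justifying the differentiation under the integral sign and the integration-by-parts under the mere assumption of a finite second moment on $\bX$. My plan is the usual two-step approximation: first I would establish the formula for compactly supported $\bX$, where $p_\lambda$ and all its derivatives have Gaussian tails and each manipulation is manifestly legitimate; then I would extend to general $P_X$ with $\E\|\bX\|^2<\infty$ by truncating $\bX$ to bounded random variables $\bX_R$, verifying that both $I(\bX_R;\bY_\lambda^{(R)})$ and $\mathrm{MMSE}(\bX_R\mid\bY_\lambda^{(R)})$ converge (uniformly in $\lambda$ on compact sets) to their untruncated counterparts as $R\to\infty$, using $L^2$-continuity of the posterior mean in the law of $\bX$ for the MMSE and the variational (or monotone-type) characterization of mutual information for the entropic side.
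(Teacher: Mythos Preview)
The paper does not prove this proposition; it is stated with the citation \cite{GuoShamaiVerdu_IMMSE,guo2011estimation} and invoked as a known result. So there is no ``paper's own proof'' to compare against.

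Your sketch is a legitimate route, essentially the de Bruijn--Stam approach: write $I(\bX;\bY_\lambda)=h(\bY_\lambda)-\tfrac{n}{2}\ln(2\pi e)$, use the heat-semigroup identity satisfied by the output density, integrate by parts, and convert score functions to posterior means via Tweedie. This is one of the standard proofs in the literature and differs from the original Guo--Shamai--Verd\'u argument, which proceeds by an ``incremental channel'' decomposition (adding an infinitesimal amount of extra Gaussian noise and analyzing the resulting divergence directly). Both methods work under a finite second moment; yours is closer in spirit to the de Bruijn identity proof that appears, e.g., in Verd\'u's later expositions.

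One caution: the Fokker--Planck line you wrote is garbled (the $1/\sqrt{\lambda}$ factors do not balance as displayed), and the claim that ``the surface terms vanish'' needs care at $\lambda=0$ and for general $P_X$. Your two-step truncation plan is the right way to handle this, but note that the approximation argument for the mutual information side is the more delicate of the two: you need lower semicontinuity of $I$ in one direction and an explicit upper bound (e.g.\ via the chain rule or the relative-entropy representation) in the other. The cited reference \cite{guo2011estimation} carries out exactly this kind of justification, so if you want to fill in the details that is the place to look.
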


\noindent\textbf{Remark:} We will often apply Proposition \ref{prop:immse} in a ``conditional fashion''. Let $\bU$ be some random variable independent from $\bZ$, then
$$
\frac{\partial}{\partial \lambda} I(\bX; \sqrt{\lambda} \bX + \bZ|\bU) = 
\frac{1}{2} \MMSE(\bX \, | \, \sqrt{\lambda} \bX + \bZ,\bU) = \frac{1}{2} \E \Big[\big\| \bX - \E[\bX|\sqrt{\lambda} \bX + \bZ, \bU] \big|^2 \Big]\,.
$$

	\begin{proposition}\label{prop:free_wasserstein}
	Let $P_{1}$ and $P_{2}$ be two probability distributions on $\R^n$, that admits a finite second moment. We denote by $W_2(P_1,P_2)$ the Wasserstein distance of order 2 between $P_1$ and $P_2$.
		$$
		\big| I(\bX_1; \bX_1 + \bZ) - I(\bX_2;\bX_2 + \bZ) \big| \leq \big(\sqrt{\E\|\bX_1\|^2} + \sqrt{\E\|\bX_2\|^2}\big)W_2(P_1,P_2) \,.
		$$
	\end{proposition}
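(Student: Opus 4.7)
The plan is to reduce to bounding $|h(\bY_1)-h(\bY_2)|$ via $I(\bX_i;\bX_i+\bZ)=h(\bY_i)-h(\bZ)$ with $\bY_i=\bX_i+\bZ$, and to control this difference through a coupling-based interpolation of differential entropies. I would fix an optimal $W_2$-coupling $(\bX_1,\bX_2)$ with $\E\|\bX_1-\bX_2\|^2=W_2(P_1,P_2)^2$, take a standard $n$-dimensional Gaussian $\bZ$ independent of both, and define the linear path $\bU(t)=(1-t)\bX_1+t\bX_2$ and $\bV(t)=\bU(t)+\bZ$ with density $p_t$, so that $p_0,p_1$ are the densities of $\bY_1,\bY_2$. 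The bound then reduces to controlling $\int_0^1 |\tfrac{d}{dt} h(p_t)|\,dt$.

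The next step is the standard score-function identity $\tfrac{d}{dt}h(p_t)=-\E[(\bX_2-\bX_1)\cdot\nabla\log p_t(\bV(t))]$, valid because $p_t$ is a Gaussian convolution and hence smooth with rapidly decaying derivatives. Using Tweedie's identity for the Gaussian channel, $\nabla\log p_t(\bV(t))=-\E[\bZ\mid\bV(t)]$, together with the tautology $\E[\bZ\mid\bV(t)]=\bU(t)+\bZ-\E[\bU(t)\mid\bV(t)]$, the derivative splits into three pieces. The crucial observation is that the pure-$\bZ$ piece $\E[(\bX_2-\bX_1)\cdot\bZ]$ vanishes by independence of $\bZ$ from $(\bX_1,\bX_2)$; this cancellation is precisely what allows the final bound to involve $\sqrt{\E\|\bX_i\|^2}$ rather than the much larger $\sqrt n$ that would arise from the naive estimate $\sqrt{\E\|\E[\bZ\mid\bV(t)]\|^2}\le\sqrt n$.

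The two surviving terms $\E[(\bX_2-\bX_1)\cdot\bU(t)]$ and $\E[(\bX_2-\bX_1)\cdot\E[\bU(t)\mid\bV(t)]]$ are each bounded by Cauchy--Schwarz (and Jensen's inequality for the conditional expectation in the second) by $\sqrt{\E\|\bX_2-\bX_1\|^2}\,\sqrt{\E\|\bU(t)\|^2}=W_2(P_1,P_2)\,\sqrt{\E\|\bU(t)\|^2}$. Since $\sqrt{\E\|\bU(t)\|^2}\le(1-t)\sqrt{\E\|\bX_1\|^2}+t\sqrt{\E\|\bX_2\|^2}$ by Minkowski, integrating over $t\in[0,1]$ gives $\int_0^1\sqrt{\E\|\bU(t)\|^2}\,dt\le\tfrac12(\sqrt{\E\|\bX_1\|^2}+\sqrt{\E\|\bX_2\|^2})$, which exactly absorbs the factor of $2$ coming from the two surviving pieces and yields the claimed inequality. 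The main obstacle is analytic rather than algebraic: rigorously justifying differentiation under the integral sign and the integration by parts in the score-function identity, which is routine for the Gaussian-convolved densities $p_t$ under the second-moment assumption on $P_1,P_2$, but requires verifying the integrability of $(\bX_2-\bX_1)\cdot\nabla\log p_t(\bV(t))$ and suitable decay at infinity.
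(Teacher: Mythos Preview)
Your argument is correct and yields exactly the stated constant, but it follows a genuinely different route from the paper. The paper couples $(\bX_1,\bX_2)$ (near-)optimally for $W_2$ and then considers the \emph{two-channel} system $\bY_1^{(t)}=\sqrt{t}\,\bX_1+\bZ_1$, $\bY_2^{(t)}=\sqrt{1-t}\,\bX_2+\bZ_2$, setting $I(t)=I(\bX_1,\bX_2;\bY_1^{(t)},\bY_2^{(t)})$. Via the I-MMSE relation, $I'(t)=\tfrac12\bigl(\MMSE(\bX_1\mid \bY_1^{(t)},\bY_2^{(t)})-\MMSE(\bX_2\mid \bY_1^{(t)},\bY_2^{(t)})\bigr)$, and the bound follows by factoring $\|\bX_1-\bbf E_1\|^2-\|\bX_2-\bbf E_2\|^2$, Cauchy--Schwarz, the reverse triangle inequality, and Jensen on $\|\bbf E_1-\bbf E_2\|$. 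Your approach instead interpolates the \emph{signal} in a single channel, $\bU(t)=(1-t)\bX_1+t\bX_2$, and differentiates $h(\bU(t)+\bZ)$ directly via the score identity and Tweedie's formula; the cancellation $\E[(\bX_2-\bX_1)\cdot\bZ]=0$ plays the role that the MMSE difference structure plays in the paper's proof. Both methods rely on the same optimal coupling, but yours is somewhat more elementary (no auxiliary second channel, no I-MMSE black box) and makes the source of the sharp constant transparent through the Minkowski bound on $\|\bU(t)\|_{L^2}$ and the factor-of-two collapse. The paper's route has the advantage of being entirely at the level of MMSEs, so the analytic justifications (differentiation under the integral, boundary terms) are absorbed into the cited I-MMSE theorem rather than redone by hand.
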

A similar result was proved in~\cite{wu2012functional} but with a weaker bound for the $W_2$ distance.

	\begin{proof}
		Let $\epsilon > 0$. Let us fix a coupling of $\bX_1 \sim P_1$ and $\bX_2 \sim P_2$ such that
		$$
		\big(\E \| \bX_1 - \bX_2 \|^2\big)^{1/2} \leq W_2(P_1,P_2) + \epsilon \,.
		$$
		Let us consider for $t_1,t_2 \in [0,1]$ the observation model
		$$
		\begin{cases}
			\bY^{(t_1)}_1 &=\  \sqrt{t_1} \bX_1  + \bZ_1 \,, \\
			\bY^{(t_2)}_2 &=\  \sqrt{1-t_2} \bX_2 + \bZ_2 \,,
		\end{cases}
		$$
		where $\bZ_1,\bZ_2 \iid \cN(0,\bbf{I}_n)$ are independent from $(\bX_1,\bX_2)$. 
		Define $J(t_1,t_2) = I(\bX_1,\bX_2; \bY_1^{(t_1)},\bY^{(t_2)}_2)$ and $I(t) = J(t,t)$.
		Let us now differentiate $J$ with respect to $t_1$. Using the chain rule for the mutual information,
		\begin{align*}
			J(t_1,t_2) &= I(\bX_1,\bX_2; \bY_2^{(t_2)}) + I(\bX_1,\bX_2; \bY_1^{(t_1)} | \bY_2^{(t_2)})
		= I(\bX_1,\bX_2; \bY_2^{(t_2)}) + I(\bX_1; \bY_1^{(t_1)} | \bY_2^{(t_2)}) + I(\bX_2; \bY_1^{(t_1)} | \bX_1,\bY_2^{(t_2)})
		\\
		&= I(\bX_1,\bX_2; \bY_2^{(t_2)}) + I(\bX_1; \bY_1^{(t_1)} | \bY_2^{(t_2)}) 
		\end{align*}
		because, conditionally on $\bX_1$, $\bX_2$ and $\bY_1^{(t_1)}$ are independent. The quantity $I(\bX_1,\bX_2; \bY_2^{(t_2)})$ does not depend on $t_1$, therefore by the ``I-MMSE relation'' from Proposition \ref{prop:immse}:
		$$
		\frac{\partial J}{\partial t_1} (t_1,t_2) = \frac{1}{2}\MMSE(\bX_1 | \bY^{(t_1)}_1, \bY^{(t_2)}_2)
		$$
		and similarly
		$$
		\frac{\partial J}{\partial t_2} (t_1,t_2) = -\frac{1}{2}\MMSE(\bX_2 | \bY^{(t_1)}_1, \bY^{(t_2)}_2) \,.
		$$
		Let us write $\bbf{E}_i = \E[\bX_i | \bY_1^{(t)},\bY^{(t)}_2]$ for $i=1,2$, then
		$$
		I'(t) = 
 \frac{1}{2}\MMSE(\bX_1 | \bY^{(t)}_1, \bY^{(t)}_2)
-\frac{1}{2}\MMSE(\bX_2 | \bY^{(t)}_1, \bY^{(t)}_2)
= \frac{1}{2}\E \left[
	\big\|\bX_1 - \bbf{E}_1 \big\|^2
	-
	\big\|\bX_2 - \bbf{E}_2 \big\|^2
\right]
		$$
		so that 
		\begin{align*}
			|I'(t)| &= 
\frac{1}{2}\E \left[
	\big(
	\big\|\bX_1 - \bbf{E}_1 \big\|
	+
\big\|\bX_2 - \bbf{E}_2 \big\| \big)
	\big(
	\big\|\bX_1 - \bbf{E}_1 \big\|
	-
\big\|\bX_2 - \bbf{E}_2 \big\| \big)
\right]
\\
&\leq
\frac{1}{2}\E \left[
	\big\|\bX_1 \big\|^2
	+
\big\|\bX_2  \big\|^2 \right]^{1/2}
\E \left[
	\big(
	\big\|\bX_1 - \bbf{E}_1 \big\|
	-
\big\|\bX_2 - \bbf{E}_2 \big\| \big)^2
\right]^{1/2}
\\
&\leq
\frac{1}{2}\E \left[
	\big\|\bX_1 \big\|^2
	+
\big\|\bX_2  \big\|^2 \right]^{1/2}
\E \left[
	\big\|\bX_1 - \bX_2 + \bbf{E}_2- \bbf{E}_1 \big\|^2
\right]^{1/2}
\\
&\leq
\frac{1}{2} \big( \sqrt{\E \|\bX_1 \|^2}
	+
\sqrt{\E \|\bX_2 \|^2} \big)
		\E \left[
	2\|\bX_1 - \bX_2\|^2 + 2 \| \bbf{E}_2- \bbf{E}_1\|^2
\right]^{1/2}
\\
&\leq
\big( \sqrt{\E \|\bX_1 \|^2}
	+
\sqrt{\E \|\bX_2 \|^2} \big)
\big(W_2(P_1,P_2) + \epsilon \big) \,.
		\end{align*}
		We obtain the result by letting $\epsilon \to 0$.
	\end{proof}

	\begin{proposition}\label{prop:cont_I_discrete}
		Let $P_U$ be a probability distribution over $\N^m$ that admits a finite second moment. Let $\bU \sim P_U$ and $\bZ \sim \cN(0,\bbf{I}_m)$ be two independent random variables. Then $H(\bU) = - \sum_{\bbf{n} \in \N^m} P_U(\bbf{n}) \ln P_U(\bbf{n})$ is finite and for all $\Delta \in (0,1]$,
		$$
		\big| I(\bU;\bU + \sqrt{\Delta}\bZ) - H(\bU) \big| \leq 
 48 m e^{-1/(16\Delta)} \,.
		$$
	\end{proposition}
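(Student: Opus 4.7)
The plan is to observe first that since $\bU$ is discrete, $I(\bU;\bY)=H(\bU)-H(\bU\mid\bY)$ with $0\le I(\bU;\bY)\le H(\bU)$, where $\bY\defeq\bU+\sqrt{\Delta}\bZ$. Hence the proposition reduces to showing $H(\bU)<\infty$ and the one-sided bound $H(\bU\mid\bY)\le 48me^{-1/(16\Delta)}$. Finiteness of $H(\bU)$ will follow by letting $\bV$ be uniform on $[-1/2,1/2)^m$ independent of $\bU$: then $h(\bU+\bV)=H(\bU)+h(\bV)=H(\bU)$ since $h(\bV)=0$, and subadditivity of differential entropy together with the Gaussian maximum-entropy bound gives $H(\bU)\le \sum_{i=1}^m \tfrac{1}{2}\ln(2\pi e(\operatorname{Var}(U_i)+1/12))<\infty$ under the finite second moment hypothesis.

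For the conditional-entropy bound I will perform a data-processing reduction. Let $\hat{\bU}\defeq\operatorname{round}(\bY)\in\mathbb{Z}^m$ denote componentwise nearest-integer rounding. Since $\hat{\bU}$ is a deterministic function of $\bY$, conditioning reduces entropy: $H(\bU\mid\bY)\le H(\bU\mid\hat{\bU})$. Because the components of $\bU$ are integers, $\hat{\bU}=\bU+\hat{\bZ}$ with $\hat{\bZ}\defeq\operatorname{round}(\sqrt{\Delta}\bZ)\in\mathbb{Z}^m$; conditionally on $\hat{\bU}$ the assignment $\bU\leftrightarrow\hat{\bZ}=\hat{\bU}-\bU$ is a bijection, so
\[
H(\bU\mid\hat{\bU})=H(\hat{\bZ}\mid\hat{\bU})\le H(\hat{\bZ})=mH(\hat{Z}_1)\,,
\]
where the last equality uses the i.i.d.\ structure of the components of $\bZ$. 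Thus everything reduces to proving $H(\hat{Z}_1)\le 48\,e^{-1/(16\Delta)}$ for all $\Delta\in(0,1]$.

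For this last step I would let $p_k\defeq\P(\hat{Z}_1=k)$, symmetric in $k$, and set $q\defeq 1-p_0=2\bar\Phi(1/(2\sqrt{\Delta}))$. The standard estimate $\bar\Phi(t)\le\tfrac{1}{2}e^{-t^2/2}$ yields $q\le e^{-1/(8\Delta)}$, and the elementary inequality $-(1-q)\ln(1-q)\le q$ controls the $k=0$ contribution. For $k\neq 0$, the two-sided Gaussian estimates
\[
\tfrac{1}{\sqrt{2\pi\Delta}}e^{-(|k|+1/2)^2/(2\Delta)}\le p_k\le \tfrac{1}{2}e^{-(|k|-1/2)^2/(2\Delta)}
\]
plug respectively into $-\ln p_k$ and into $p_k$ to give
\[
-p_k\ln p_k\le \tfrac{1}{2}e^{-(|k|-1/2)^2/(2\Delta)}\Big(\tfrac{(|k|+1/2)^2}{2\Delta}+\tfrac{1}{2}\ln(2\pi)\Big)\,.
\]
Factoring $e^{-1/(8\Delta)}$ out of every term (using $(|k|-1/2)^2\ge 1/4$) makes the remainder a series that converges uniformly in $\Delta\in(0,1]$ by superexponential decay in $k$, leading to a bound $H(\hat{Z}_1)\le C\Delta^{-1}e^{-1/(8\Delta)}$ with an explicit numerical constant $C$.

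To finish I would invoke the elementary calculus identity $\sup_{\Delta>0}\Delta^{-1}e^{-1/(16\Delta)}=16/e$, attained at $\Delta=1/16$, which lets me rewrite $\Delta^{-1}e^{-1/(8\Delta)}=e^{-1/(16\Delta)}\cdot\Delta^{-1}e^{-1/(16\Delta)}\le (16/e)\,e^{-1/(16\Delta)}$, so that $H(\hat{Z}_1)\le (16C/e)\,e^{-1/(16\Delta)}$. The only delicate point will be tracking the constants through the tail sum carefully enough to ensure $16C/e\le 48$; the whole point of stating the bound with cutoff exponent $1/(16\Delta)$ rather than the more natural $1/(8\Delta)$ is precisely that the gap absorbs the $\Delta^{-1}$ factor coming from the polynomial blow-up of $-\ln p_k$ at $k=\pm 1$.
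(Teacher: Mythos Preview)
Your approach is correct and genuinely different from the paper's. The paper proceeds via the I-MMSE relation: it writes $h(\Delta)=I(\bU;\bU+\sqrt{\Delta}\bZ)$, uses $h'(\Delta)=-\tfrac{1}{2\Delta^2}\MMSE(\bU\mid\bU+\sqrt{\Delta}\bZ)$, bounds the MMSE by analysing the same nearest-integer rounding estimator you use (obtaining $\MMSE\le 6m e^{-1/(16\Delta)}$), shows $h(\Delta)\to H(\bU)$ as $\Delta\to 0$ (and deduces $H(\bU)<\infty$ along the way via Fatou), and then integrates the derivative bound $\int_0^\Delta \tfrac{3m}{t^2}e^{-1/(16t)}\,dt=48m e^{-1/(16\Delta)}$.

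You instead attack $|I-H|=H(\bU\mid\bY)$ head-on: the data-processing step $H(\bU\mid\bY)\le H(\bU\mid\hat{\bU})=H(\hat{\bZ}\mid\hat{\bU})\le mH(\hat Z_1)$ is clean, and the tail analysis of $H(\hat Z_1)$ is straightforward. Your route is more elementary in that it avoids the I-MMSE machinery and the limiting argument for $h(\Delta)\to H(\bU)$; your separate finiteness proof of $H(\bU)$ via the uniform-smoothing trick is also neat. The price is the explicit constant-tracking you flag at the end, whereas the paper's integration $\int_0^\Delta t^{-2}e^{-1/(16t)}\,dt$ produces $48$ automatically and transparently. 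Both arguments hinge on the same rounding idea, exploited once as an estimator (to bound MMSE) and once as a channel (to bound conditional entropy).
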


	\begin{proof}
		Let us define for $\Delta > 0$, $h(\Delta) = I(\bU;\bU + \sqrt{\Delta}\bZ) = I_{P_U}(\Delta^{-1})$.
		By Proposition \ref{prop:immse} we have for all $\Delta > 0$,
		\begin{equation}\label{eq:h_prime}
		h'(\Delta) = - \frac{1}{2\Delta^2} \MMSE(\bU \, | \, \bU + \sqrt{\Delta} \bZ) \,.
	\end{equation}
		We are now going to upper bound $\MMSE(\bU \, | \, \bU + \sqrt{\Delta} \bZ)$ by considering the following estimator:
		$$
		\widehat{\theta}_i = \argmin_{u \in \N} |u - U_i + \sqrt{\Delta} Z_i|,
		$$
		for all $i \in \{1, \dots, m\}$. Note that $\widehat{\theta}_i$ is well-defined almost-surely since there is a.s.\ a unique minimizer above. We have
		$$
		\P(\widehat{\theta}_i \neq U_i) \leq \P\big(\sqrt{\Delta}|Z_i| \geq 1/2 \big) = 2 \P\Big(\cN(0,1) \geq \frac{1}{2 \sqrt{\Delta}}\Big) 
		\leq 2 \frac{1}{\sqrt{2\pi}} 2 \sqrt{\Delta} e^{-1/(8 \Delta)}
		\leq 2 \sqrt{\Delta} e^{-1/(8 \Delta)}\,,
		$$
		by usual bounds on the Gaussian cumulative distribution function. We have then
		\begin{align*}
			\MMSE(\bU \, | \, \bU + \sqrt{\Delta} \bZ)
			&\leq \E \| \bU - \widehat{\theta}\|^2
			= \sum_{i=1}^m \E (U_i - \widehat{\theta}_i)^2
			= \sum_{i=1}^m \E\Big[ \bbf{1}(\widehat{\theta}_i \neq U_i)(U_i - \widehat{\theta}_i)^2 \Big]
			\\
			& \leq \sum_{i=1}^m 2 \E\Big[ \bbf{1}(\widehat{\theta}_i \neq U_i)(U_i - (U_i + \sqrt{\Delta} Z_i))^2 \Big]
			+ 2 \E\Big[ \bbf{1}(\widehat{\theta}_i \neq U_i)(U_i + \sqrt{\Delta} Z_i - \widehat{\theta}_i)^2 \Big]
			\\
			& \leq \sum_{i=1}^m 2 \E\Big[ \bbf{1}(\widehat{\theta}_i \neq U_i)\Delta Z_i^2 \Big]
			+ \frac{1}{2} \E\Big[ \bbf{1}(\widehat{\theta}_i \neq U_i) \Big]
			\\
			& \leq \sum_{i=1}^m 2 \Delta\P(\widehat{\theta}_i \neq U_i)^{1/2} \E[Z_i^4]^{1/2} + \frac{1}{2} \P(\widehat{\theta}_i \neq U_i)
			\\
			&\leq m e^{-1 /(16 \Delta)} \Big(2 \sqrt{6} \Delta^{5/4} + \sqrt{\Delta}  \Big) 
			\leq 6 m e^{-1/(16 \Delta)}
		\end{align*}
		for $\Delta \leq 1$. Plugging this inequality in \eqref{eq:h_prime}, we obtain for all $\Delta \in (0,1]$,
		\begin{equation}\label{eq:bound_h_prime}
		|h'(\Delta)| \leq \frac{3 m}{\Delta^2} e^{-1/(16 \Delta)} \,.
	\end{equation}
		Since $h(1)$ is finite and $\int_0^1 \frac{e^{-1/(16 \Delta)}}{\Delta^2} d\Delta < +\infty$ we obtain that
		\begin{equation} \label{eq:h_bounded}
		\sup_{\Delta \in (0,1]} |h(\Delta)| < + \infty \,.
	\end{equation}
	By definition of $h$:
	\begin{equation}\label{eq:def_hh}
		h(\Delta) = 
		I(\bU;\bU + \sqrt{\Delta}\bZ) = -\frac{m}{2} - \E \ln \sum_{\bU \in \N^m} P_U(\bU) \exp \Big(- \frac{1}{2\Delta} \| \bU + \sqrt{\Delta} \bZ - \bU \|^2\Big) \,.
	\end{equation}
		By the previous equality and \eqref{eq:h_bounded}, the family of (non-negative) random variables
		$$
		\left(
		-\ln \sum_{\bU \in \N^m} P_U(\bU) \exp \Big(- \frac{1}{2\Delta} \| \bU + \sqrt{\Delta} \bZ - \bU \|^2\Big) \right)_{\Delta \in (0,1]}
		$$
		is bounded in $L^1$. Notice that (by dominated convergence) 
		$$
		-\ln \sum_{\bU \in \N^m} P_U(\bU) \exp \Big(- \frac{1}{2\Delta} \| \bU + \sqrt{\Delta} \bZ - \bU \|^2\Big) \xrightarrow[\Delta \to 0]{}
		- \ln \Big( P_U(\bU) e^{-\frac{1}{2}\|\bZ\|^2} \Big)
		= \frac{1}{2} \| \bZ\|^2 - \ln P_U(\bU)
		$$
		almost-surely. This gives (by Fatou's Lemma) that this almost-sure limit is integrable and thus that $H(\bU) = - \E \ln P_U(\bU)$ is finite. Let us now show that $h(\Delta) \xrightarrow[\Delta \to 0]{} H(\bU)$. We have almost-surely
		$$
		\ln \Big(P_U(\bU) e^{-\frac{1}{2} \|\bZ\|^2}\Big)
		\leq \ln \sum_{\bU \in \N^m} P_U(\bU) \exp \Big(- \frac{1}{2\Delta} \| \bU + \sqrt{\Delta} \bZ - \bU \|^2\Big) \leq 0 \,.
		$$
		Since we now know that the left-hand side is integrable (because $H(\bU)$ is finite), we can apply the dominated convergence theorem to obtain that
		$$
\E \ln \sum_{\bU \in \N^m} P_U(\bU) \exp \Big(- \frac{1}{2\Delta} \| \bU + \sqrt{\Delta} \bZ - \bU \|^2\Big)
\xrightarrow[\Delta \to 0]{}
\E \ln \Big(P_U(\bU) e^{-\frac{1}{2} \|\bZ\|^2}\Big) = H(\bU) - \frac{m}{2} \,,
		$$
		which combined with \eqref{eq:def_hh} gives $h(\Delta) \xrightarrow[\Delta \to 0]{} H(\bU)$. Now, using the bound on the derivative of $h$ \eqref{eq:bound_h_prime} we conclude that for all $\Delta \in (0,1]$,
		$$
		|h(\Delta)- H(\bU)| \leq 3 m \int_0^{\Delta} \frac{e^{-1/(16 t)}}{t^2} dt
		= 3 m \Big[16 e^{-1/(16t)}\Big]_0^\Delta = 48 m e^{-1/(16\Delta)} \,.
		$$
	\end{proof}

	\begin{corollary}\label{cor:cont_I_discrete}
		Let $\bU$ be a random variable over $\N^m$ with finite second moment, let $\bX$ be a random variable over $\R^n$ and let $\bZ \sim \cN(0,\bbf{I}_m)$. We assume $(\bU, \bX)$ to be independent from $\bZ$. Then, for all $\Delta \in (0,1]$,
		$$
		\big|I(\bX; \bU + \sqrt{\Delta} \bZ) - I(\bX; \bU)\big|
		\leq 100 m e^{-1/(16\Delta)} \,.
		$$
	\end{corollary}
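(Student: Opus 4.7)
\medskip
\noindent\textbf{Proof plan.} My plan is to reduce the statement to two applications of Proposition~\ref{prop:cont_I_discrete}, one unconditional and one conditional on $\bX$, and then combine them by the chain rule of mutual information. First I would write, using the chain rule together with the fact that $\bU+\sqrt{\Delta}\bZ$ is independent of $\bX$ conditionally on $\bU$ (because $\bZ$ is independent of $(\bU,\bX)$),
\begin{equation*}
I(\bX,\bU;\bU+\sqrt{\Delta}\bZ)=I(\bU;\bU+\sqrt{\Delta}\bZ)+I(\bX;\bU+\sqrt{\Delta}\bZ\,|\,\bU)=I(\bU;\bU+\sqrt{\Delta}\bZ),
\end{equation*}
and also $I(\bX,\bU;\bU+\sqrt{\Delta}\bZ)=I(\bX;\bU+\sqrt{\Delta}\bZ)+I(\bU;\bU+\sqrt{\Delta}\bZ\,|\,\bX)$, so that
\begin{equation*}
I(\bX;\bU+\sqrt{\Delta}\bZ)=I(\bU;\bU+\sqrt{\Delta}\bZ)-I(\bU;\bU+\sqrt{\Delta}\bZ\,|\,\bX).
\end{equation*}
Since $\bU$ is $\N^m$-valued and $H(\bU)<\infty$ by Proposition~\ref{prop:cont_I_discrete}, the analogous discrete identity $I(\bX;\bU)=H(\bU)-H(\bU\,|\,\bX)$ holds.

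Next I would apply Proposition~\ref{prop:cont_I_discrete} unconditionally to get $|I(\bU;\bU+\sqrt{\Delta}\bZ)-H(\bU)|\leq 48m e^{-1/(16\Delta)}$, and then apply it conditionally on $\bX=\bx$. The key point here is that the bound $48 m e^{-1/(16\Delta)}$ in that proposition does not depend on the law of $\bU$ (only on the ambient dimension $m$ and on $\Delta$), so it holds uniformly over conditional distributions. Since $\E\|\bU\|^2<\infty$, Fubini gives $\E[\|\bU\|^2\,|\,\bX=\bx]<\infty$ for almost every $\bx$, which is all that is needed to apply Proposition~\ref{prop:cont_I_discrete} to the conditional law $P_{\bU\,|\,\bX=\bx}$. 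This yields
\begin{equation*}
\big|I(\bU;\bU+\sqrt{\Delta}\bZ\,|\,\bX=\bx)-H(\bU\,|\,\bX=\bx)\big|\leq 48 m e^{-1/(16\Delta)}
\end{equation*}
for almost every $\bx$, and averaging over $\bx$ preserves the bound.

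Finally, combining the two displays via the triangle inequality gives
\begin{equation*}
\big|I(\bX;\bU+\sqrt{\Delta}\bZ)-I(\bX;\bU)\big|\leq 96\, m\, e^{-1/(16\Delta)}\leq 100\, m\, e^{-1/(16\Delta)},
\end{equation*}
which is the claimed estimate. There is no real obstacle here: the only small subtlety is to justify integrating the conditional bound against the law of $\bX$ (taken care of by measurability of conditional mutual information/entropy and the uniformity in $\bx$ of the estimate from Proposition~\ref{prop:cont_I_discrete}), and to check that $H(\bU\,|\,\bX)$ is finite, which follows from the conditional version of Proposition~\ref{prop:cont_I_discrete} and $H(\bU\,|\,\bX)\leq H(\bU)<\infty$.
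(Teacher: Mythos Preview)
Your proposal is correct and follows essentially the same approach as the paper: the chain rule identity $I(\bX;\bU+\sqrt{\Delta}\bZ)=I(\bU;\bU+\sqrt{\Delta}\bZ)-I(\bU;\bU+\sqrt{\Delta}\bZ\,|\,\bX)$ together with two applications of Proposition~\ref{prop:cont_I_discrete} (one unconditional, one conditional on $\bX$) and $I(\bX;\bU)=H(\bU)-H(\bU|\bX)$. Your added care in justifying the conditional application (finiteness of the conditional second moment a.e.\ and uniformity of the bound in the law of $\bU$) is a welcome elaboration of what the paper leaves implicit.
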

	\begin{proof}
	We have by the chain rule of the mutual information:
	$$
	I(\bU;\bU + \sqrt{\Delta} \bZ)
	=
	I(\bU,\bX;\bU + \sqrt{\Delta} \bZ)
	=
	I(\bX;\bU + \sqrt{\Delta} \bZ)
	+
	I(\bU;\bU + \sqrt{\Delta} \bZ| \bX) \,.
	$$
	By applying Proposition \ref{prop:cont_I_discrete} twice, we get
	$$
	|I(\bU;\bU + \sqrt{\Delta} \bZ) - H(\bU)| , \
	|I(\bU;\bU + \sqrt{\Delta} \bZ|\bX) - H(\bU|\bX)| 
		\leq 48 m e^{-1/(16\Delta)} \,.
	$$
	Since $I(\bX;\bU) = H(\bU) - H(\bU|\bX)$ we obtain the desired inequality.
	\end{proof}

		\subsection{A simple consequence of hypotheses \ref{hyp:third_moment}-\ref{hyp:um}-\ref{hyp:phi_general}-\ref{hyp:cont_pp}}

\begin{proposition}\label{prop:varphi_l2}
	Assume that hypotheses \ref{hyp:third_moment}-\ref{hyp:um}-\ref{hyp:phi_general}-\ref{hyp:cont_pp} hold. Then there exists $\eta > 0$ such that
	$$
	\E \big[\varphi(\sqrt{\rho} Z, \bbf{A})^{2+\eta} \big] < \infty \,,
	$$
	where the expectation above is with respect to $(Z,\bbf{A}) \sim \cN(0,1) \otimes P_A$.
\end{proposition}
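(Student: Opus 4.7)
\textbf{Proof plan for Proposition \ref{prop:varphi_l2}.} The plan is to show that $\varphi(Z_n,\bbf{A}_1)$, with $Z_n \defeq \frac{1}{\sqrt n}[\bbf{\Phi}\bX^*]_1$, converges in distribution to $\varphi(\sqrt{\rho}Z,\bbf{A})$, and then to transfer the uniform $L^{2+\gamma}$ bound provided by \ref{hyp:um} to the limit via Fatou's lemma. I will take $\eta = \gamma$, the exponent appearing in hypothesis \ref{hyp:um}.

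First I would establish the central limit theorem $Z_n \xrightarrow[n\to\infty]{(d)} \sqrt{\rho}\,Z$ with $Z \sim \cN(0,1)$. Conditionally on $\bX^*$, the sum $Z_n = \frac{1}{\sqrt n}\sum_{i=1}^n \Phi_{1i} X_i^*$ consists of independent random variables with mean zero and conditional variance $\frac{1}{n}\|\bX^*\|^2$, which tends almost surely to $\rho$ by the strong law of large numbers (the second moment of $P_0$ is finite by \ref{hyp:third_moment}). The Lindeberg condition follows from the uniform bound on the third moment of $\Phi_{\mu i}$ in \ref{hyp:phi_general}, so the conditional CLT yields $Z_n \xrightarrow{(d)} \sqrt{\rho} Z$; the unconditional statement is immediate. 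Since $\bbf{A}_1 \sim P_A$ is independent of $(\bbf{\Phi},\bX^*)$ and has fixed law, we obtain the joint convergence $(Z_n,\bbf{A}_1)\xrightarrow{(d)}(\sqrt{\rho}Z,\bbf{A})$, with $Z$ and $\bbf{A}$ independent.

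Next I would invoke the continuous mapping theorem under hypothesis \ref{hyp:cont_pp}. By Skorokhod's representation theorem applied to $Z_n \xrightarrow{(d)} \sqrt{\rho} Z$, one may construct versions $\widetilde Z_n$ and $\widetilde Z$ on a common probability space such that $\widetilde Z_n \to \sqrt{\rho}\widetilde Z$ almost surely, and couple them independently with a single copy $\widetilde{\bbf{A}}\sim P_A$. Conditioning on $\widetilde{\bbf{A}}=\ba$: by \ref{hyp:cont_pp}, for $P_A$-almost every $\ba$ the function $x \mapsto \varphi(x,\ba)$ is continuous outside a Lebesgue null set $D_\ba$; since $\sqrt{\rho}\widetilde Z$ has a continuous density, $\P(\sqrt{\rho}\widetilde Z \in D_\ba) = 0$, and therefore $\varphi(\widetilde Z_n,\ba)\to \varphi(\sqrt{\rho}\widetilde Z,\ba)$ almost surely. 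Integrating over $\ba$, $\varphi(\widetilde Z_n,\widetilde{\bbf{A}})\to \varphi(\sqrt{\rho}\widetilde Z,\widetilde{\bbf{A}})$ almost surely, hence $\varphi(Z_n,\bbf{A}_1)\xrightarrow{(d)} \varphi(\sqrt{\rho}Z,\bbf{A})$.

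To conclude, apply the Portmanteau theorem with the nonnegative lower semicontinuous function $t \mapsto |t|^{2+\gamma}$:
\begin{equation*}
\E\big[|\varphi(\sqrt{\rho}Z,\bbf{A})|^{2+\gamma}\big] \;\leq\; \liminf_{n\to\infty} \E\big[|\varphi(Z_n,\bbf{A}_1)|^{2+\gamma}\big] \;<\; +\infty,
\end{equation*}
where the last inequality is precisely hypothesis \ref{hyp:um}. This gives the conclusion with $\eta = \gamma$. The only mildly delicate step is the joint continuous-mapping argument, since the discontinuity set of $(x,\ba)\mapsto \varphi(x,\ba)$ is only controlled in the $x$-direction by \ref{hyp:cont_pp}; the Skorokhod--Fubini argument above circumvents this issue by keeping $\bbf{A}$ fixed along the coupled sequence.
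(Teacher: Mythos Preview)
Your proof is correct and follows essentially the same approach as the paper: apply the CLT to obtain $(Z_n,\bbf{A}_1)\xrightarrow{(d)}(\sqrt{\rho}Z,\bbf{A})$, use \ref{hyp:cont_pp} to push this through $\varphi$, and transfer the $L^{2+\gamma}$ bound from \ref{hyp:um} to the limit via Fatou/Portmanteau. The paper is somewhat terser, applying the CLT directly to the independent summands $X_i^*\Phi_{1,i}$ (which have bounded third moments by \ref{hyp:third_moment} and \ref{hyp:phi_general}) rather than conditioning on $\bX^*$, and it glosses over the continuous-mapping step that you justify carefully with the Skorokhod--Fubini argument; but the structure is identical.
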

\begin{proof}
	By the Central Limit Theorem (using the fact that the third moments of $(X^*_i \Phi_{1,i})$ are bounded with $n$, because of hypotheses~\ref{hyp:third_moment} and~\ref{hyp:phi_general}) we have
	$
	\big([\bbf{\Phi}\bX^*]_1/\sqrt{n},\bA_1\big)\xrightarrow[n \to \infty]{(d)}
	(\sqrt{\rho} G,\bA_1)$.
	This implies that 
	\begin{equation}\label{eq:conv_law2}
		\varphi\left(\frac{[\bbf{\Phi}\bX^*]_1}{\sqrt{n}},\bA_1\right)
		\xrightarrow[n \to \infty]{(d)}
		\varphi(\sqrt{\rho} G,\bA_1) \,,
	\end{equation}
	because $\varphi(\cdot, \bbf{A}_1)$ is almost-surely continuous almost-everywhere, by assumption~\ref{hyp:cont_pp}.
	The sequence of random variables $\big(\varphi([\bbf{\Phi}\bX^*]_1/\sqrt{n},\bA_1)\big)_n$ is by assumption~\ref{hyp:um} bounded in $L^{2+\eta}$ for some $\eta>0$. By \eqref{eq:conv_law2} we conclude that $\E[\varphi(\sqrt{\rho}G,\bA_1)^{2+\eta}] < \infty$.
\end{proof}

		\subsection{Derivative of the interpolating free entropy: Proof of Proposition~\ref{prop:der_f_t}}\label{appendix_interpolation}
Recall $u'_{y}(x)$ is the $x$-derivative of $u_{y}(x)=\ln P_{\rm out}(y|x)$. Moreover denote $P_{\rm out}'(y|x)$ and $P_{\rm out}''(y|x)$ the first and second $x$-derivatives, respectively, of $P_{\rm out}(y|x)$. We will first prove that for all $t \in (0,1)$
\begin{align}
	\frac{df_{n,\epsilon}(t)}{dt} = &- \frac{1}{2} 
		\E \Big\langle 
			\Big(
				\frac{1}{n} \sum_{\mu=1}^{m}u'_{Y_{t,\mu}}(S_{t,\mu}) u'_{Y_{t,\mu}}(s_{t,\mu})
				- r(t)
			\Big)
			\big(
				Q - q(t)
			\big)
		\Big\rangle_{n,t,\epsilon}
		+ \frac{r(t)}{2}(q(t)-\rho) -\frac{A_n}{2}\,,
		\label{eq:der_f_t_raw}	
\end{align}
where recall $Q\defeq\sum_{i=1}^{n} X^*_i x_i/n$ and
\begin{align}
	A_{n,\epsilon}\defeq
		\E \Big[
			\frac{1}{\sqrt{n}} \sum_{\mu=1}^{m} \frac{P_{\rm out}''(Y_{t,\mu} | S_{t,\mu})}{P_{\rm out}(Y_{t,\mu} | S_{t,\mu})} 
			\Big( \frac{1}{\sqrt{n}} \sum_{i=1}^{n} \big((X^*_i)^2 - \rho\big) \Big) \frac{1}{n} \ln \cZ_{t,\epsilon} \label{An}
		\Big]	\,.
\end{align}
Once this is done, we will prove that $A_{n,\epsilon}$ goes to $0$ as $n \to \infty$ uniformly in $t \in [0,1]$, in order to obtain Proposition~\ref{prop:der_f_t}.
\subsubsection{Proof of \eqref{eq:der_f_t_raw}}
Recall definition \eqref{ft} which becomes, when written as a function of the interpolating Hamiltonian \eqref{interpolating-ham},
	\begin{align}
		f_{n,\epsilon}(t)= \frac{1}{n} \E_{\boldsymbol{\Phi},\bV}\int d\bY_t d\bY_t' dP_0(\bX^*) {\cal D}\bW^*  e^{-\cH_{t,\epsilon}(\bX^*,\bW^*;\bY_t,\bY'_t,\boldsymbol{\Phi},\bV)} 
		\ln \int dP_0(\bx) {\cal D}\bw\, e^{-\cH_{t,\epsilon}(\bx,\bw;\bY_t,\bY'_t,\boldsymbol{\Phi},\bV)}\,.
\end{align}
We will need the Hamiltonian $t$-derivative ${\cal H}'_{t,\epsilon}$ given by
	\begin{align}
		\cH_{t,\epsilon}'(\bX^*,\bW^*;\bY_t,\bY'_t,\boldsymbol{\Phi},\bV)
	&=
	- \sum_{\mu=1}^{m}
	\frac{dS_{t,\mu}}{dt} u'_{Y_{t,\mu}}( S_{t,\mu})
	- \frac{r(t)}{2\sqrt{R_1(t)}} \sum_{i=1}^{n} X_i^* (Y'_{t,i}  - \sqrt{R_1(t)} X_i^*)\,.  \label{117_}
	\end{align}
	The derivative of the interpolating free entropy thus reads, for $0 < t < 1$,
	\begin{align}
		\frac{df_{n,\epsilon}(t)}{dt} = -\underbrace{\frac{1}{n}  \E \big[\cH_{t,\epsilon}'(\bX^*,\bW^*;\bY_t,\bY'_t,\boldsymbol{\Phi},\bV)\ln \cZ_{t,\epsilon}
		 \big] }_{T_1}
		 - \underbrace{\frac{1}{n} \E \big\langle \cH_{t,\epsilon}'(\bx,\bw;\bY_t,\bY'_t,\boldsymbol{\Phi},\bV) \big\rangle_{n,t,\epsilon} }_{T_2}\label{106}
	\end{align}
	where recall the definition of $\cZ_{t,\epsilon}=\cZ_{t,\epsilon}(\bY_t,\bY'_t,\boldsymbol{\Phi},\bV)$ given by \eqref{Zt}. 

Let us compute $T_1$. Let $1 \leq \mu \leq m$. Let us start with the following term
	\begin{align}
		\E \Big[\frac{dS_{t,\mu}}{dt} &u'_{Y_{t,\mu}}(S_{t,\mu})  \ln \cZ_{t,\epsilon}  \Big] \nn=\,
	&\frac{1}{2}
	\E \Big[\Big(
			- \frac{[\bbf{\Phi} \bX^*]_{\mu}}{\sqrt{n (1-t)}}
			+ \frac{q(t)}{ \sqrt{R_2(t)}} V_{\mu}
			+ \frac{\rho - q(t)}{\sqrt{\rho t - R_2(t) + 2s_n}} W^*_{\mu}
	\Big)u'_{Y_{t,\mu}}(S_{t,\mu}) \ln \cZ_{t,\epsilon}  \Big]\,. \label{107}
	\end{align}
	Let us compute the first term of the right-hand side of the last identity. By Gaussian integration by parts w.r.t $\Phi_{\mu i}$ we obtain
	\begin{align} 
		&\frac{1}{\sqrt{n(1-t)}}\E\big[
			[\bbf{\Phi} \bX^*]_{\mu}
			u_{Y_{t,\mu}}' ( S_{t,\mu} )
			\ln \cZ_{t,\epsilon}
		\big] \nn
   =\, &\frac{1}{\sqrt{n(1-t)}}\sum_{i=1}^n\E \Big[\int d\bY_t d\bY'_t e^{-\cH_{t,\epsilon}(\bX^*,\bW^*;\bY_t,\bY'_t,\boldsymbol{\Phi},\bV)}
			\Phi_{\mu i} X_i^*
			u_{Y_{t,\mu}}' ( S_{t,\mu} )
		\ln \cZ_{t,\epsilon} \Big]
		\nn
		=\,&\frac{1}{n} \sum_{i=1}^{n} 
		\Big(
			\E\big[
				(X_i^*)^2
				\big(
					u_{Y_{t,\mu}}''( S_{t,\mu} )
					+
					u_{Y_{t,\mu}}' ( S_{t,\mu})^2
				\big)
				\ln \cZ_{t,\epsilon}
			\big]
			+ \E\big\langle
				X_i^* x_i 
				u_{Y_{t,\mu}}' ( S_{t,\mu})
				u_{Y_{t,\mu}}' ( s_{t,\mu} )
			\big\rangle_{n,t,\epsilon}
		\Big)
		\nonumber
		\\
		=\,&
		\E\Big[
			\frac{1}{n} \sum_{i=1}^{n} 
			(X_i^*)^2
			\frac{P_{\rm out}''(Y_{t,\mu} | S_{t,\mu})}{P_{\rm out}(Y_{t,\mu} | S_{t,\mu})}
			\ln \cZ_{t,\epsilon}
		\Big]
		+ 
		\E\Big\langle
			\frac{1}{n} \sum_{i=1}^{n} 
			X_i^* x_i 
			u_{Y_{t,\mu}}' ( S_{t,\mu} )
			u_{Y_{t,\mu}}' ( s_{t,\mu} )
		\Big\rangle_{n,t,\epsilon}\,,
		\label{eq:compA1}
	\end{align}
	where we used the identity 
	\begin{align}
	u_{Y_{t,\mu}}'' ( x )
	+
	u_{Y_{t,\mu}}' ( x )^2
	= \frac{P_{\rm out}''(Y_{t,\mu} | x)}{P_{\rm out}(Y_{t,\mu} | x)}\,. 	
	 \end{align} 
	We now compute the second term of the right hand side of \eqref{107}. Using again Gaussian integrations by parts but this time w.r.t $V_\mu,W_\mu^*\iid {\cal N}(0,1)$ as well as the previous formula, we obtain similarly 
	\begin{align}\label{eq:compA2}
		&\E \Big[
			\Big( \frac{q(t)}{\sqrt{R_2(t)}} V_{\mu} + \frac{\rho -q(t)}{\sqrt{\rho t - R_2(t) + 2s_n}} W^*_{\mu}\Big)
			u_{Y_{t,\mu}}' ( S_{t,\mu} )
			\ln \cZ_{t,\epsilon}
		\Big]\nn
		=\,&\E \Big[\int d\bY_t d\bY'_t e^{-\cH_{t,\epsilon}(\bX^*,\bW^*;\bY_t,\bY'_t,\boldsymbol{\Phi},\bV)} 
			\Big( \frac{q(t)}{\sqrt{R_2(t)}} V_{\mu} + \frac{\rho -q(t)}{\sqrt{\rho t - R_2(t) + 2s_n}} W^*_{\mu}\Big)
			u_{Y_{t,\mu}}' ( S_{t,\mu} )
		\ln \cZ_{t,\epsilon}  \Big]
		 \nn
		=\,&\E\Big[
			\rho \frac{P_{\rm out}''(Y_{t,\mu} | S_{t,\mu})}{P_{\rm out}(Y_{t,\mu} | S_{t,\mu})} 
			\ln \cZ_{t,\epsilon} 
		\Big]
		+\E \big\langle q(t) u'_{Y_{t,\mu}}(S_{t,\mu}) u'_{Y_{t,\mu}}(s_{t,\mu})
		\big\rangle_{n,t,\epsilon}\,.
	\end{align}
	Combining equations \eqref{107}, \eqref{eq:compA1} and \eqref{eq:compA2} together, we have
	\begin{align*}
		&-\E \Big[\frac{dS_{t,\mu}}{dt} u'_{Y_{t,\mu}}(S_{t,\mu}) \ln \cZ_{t,\epsilon}  \Big] 
		\nn
	 = \,&\frac{1}{2}
		\E\Big[
			\frac{P_{\rm out}''(Y_{t,\mu} | S_{t,\mu})}{P_{\rm out}(Y_{t,\mu} | S_{t,\mu})} 
			\Big(
				\frac{1}{n} \sum_{i=1}^n (X_i^*)^2 - \rho
			\Big)
			\ln \cZ_{t,\epsilon}
		\Big]
		+\frac{1}{2} \E
		\Big\langle
			\Big(
				\frac{1}{n} \sum_{i=1}^{n} 
				X_i^* x_i 
				- q(t)
			\Big)
			u_{Y_{t,\mu}}' ( S_{t,\mu} )
			u_{Y_{t,\mu}}' ( s_{t,\mu} )
		\Big\rangle_{n,t,\epsilon}\,.
	\end{align*}
	As seen from \eqref{117_}, \eqref{106} it remains to compute $\E [X^*_j (Y'_{t,j}  - \sqrt{R_1(t)} X^*_j)\ln \cZ_{t,\epsilon}  ]$. Recalling that for $1 \leq j \leq n$, $Y_{t,j}'-\sqrt{R_1(t)} X_j^* = Z_j'$ and then using again a Gaussian integration by parts w.r.t $Z_j' \sim {\cal N}(0,1)$ we obtain
	\begin{align}
		\E \big[X^*_j (Y'_{t,j}  - &\sqrt{R_1(t)} X^*_j)\ln \cZ_{t,\epsilon}  \big]
		=
		\E \big[X^*_j Z_j' \ln \cZ_{t,\epsilon}  \big]
		=
		\E \Big[X^*_j Z_j' \ln \int dP_0(\bx) {\cal D}\bw\, e^{-\cH_{t,\epsilon}(\bx,\bw;\bY_t,\bY'_t,\boldsymbol{\Phi},\bV)}  \Big]
		\nn
		&= \E \Big[X^*_j Z_j' \ln \int dP_0(\bx) {\cal D}\bw\, 
		\exp \Big\{
			\sum_{\mu=1}^{m} u_{Y_{t,\mu}}(s_{t, \mu}) - \frac{1}{2} \sum_{i=1}^{n}\big(\sqrt{R_1(t)}X^*_{i} + Z_i'  - \sqrt{R_1(t)}\, x_i\big)^2
		\Big\}  
	\Big]
	\nn
		&= -\E \big[X^*_j \big\langle \sqrt{R_1(t)}(X_j^* - x_j) +Z_j' \big\rangle_{n,t,\epsilon} \big]
	\nn
		&= -\sqrt{R_1(t)} \big(\rho - \E \langle X^*_j x_j \rangle_{n,t,\epsilon} \big) \,.
	\end{align}
	Thus, by taking the sum,
	\begin{align}
		-\frac{r(t)}{2\sqrt{R_1(t)}}\E\Big[ \frac1n \sum_{i=1}^{n} X^*_i (Y'_{t,i}  - \sqrt{R_1(t)} X^*_i)\ln \cZ_{t,\epsilon}  \Big]
		= \frac{r(t)\rho}{2}-\frac{r(t)}{2}
		\E 
		\Big\langle
			\frac1n\sum_{i=1}^{n} X_i^* x_i
		\Big\rangle_{n,t,\epsilon}
		\,.
\end{align}
Therefore, for all $t \in (0,1)$,
	\begin{align}
		T_1
		= \,&\frac{1}{2}
		\E\Big[
			\frac{1}{\sqrt{n}} \sum_{\mu=1}^m \frac{P_{\rm out}''(Y_{t,\mu} | S_{t,\mu})}{P_{\rm out}(Y_{t,\mu} | S_{t,\mu})} 
			\Big(
				\frac{1}{\sqrt{n}} \sum_{i=1}^n ((X_i^*)^2 - \rho)
			\Big)
			\frac{1}{n}\ln \cZ_{t,\epsilon}
		\Big]
		+ \frac{r(t) \rho}{2}  - \frac{r(t) q(t)}{2}
		\nn
		& +\frac{1}{2} \E
		\Big\langle\Big(
				\frac{1}{n}\sum_{\mu=1}^m u_{Y_{t,\mu}}' ( S_{t,\mu} )
			u_{Y_{t,\mu}}'( s_{t,\mu} )
			-r(t)
			\Big)
			\Big(
				\frac{1}{n} \sum_{i=1}^{n} 
				X^*_i x_i 
				- q(t)
			\Big)			
		\Big\rangle_{n,t,\epsilon}\label{113}\,.
	\end{align}	
	To obtain \eqref{eq:der_f_t_raw}, it remains to show that $T_2 = 0$. This is a direct consequence of the Nishimori identity (see Appendix~\ref{appendix-nishimori}):
	\begin{align}
		T_2 = \frac{1}{n} \E \big\langle \cH_{t,\epsilon}'(\bx,\bw;\bY_t,\bY'_t,\boldsymbol{\Phi}) \big\rangle_{n,t,\epsilon}
		= \frac{1}{n} \E \,\cH_{t,\epsilon}'(\bX^*,\bW^*;\bY_t,\bY'_t,\boldsymbol{\Phi}) = 0\,.
	\end{align}

	For obtaining the Lemma, it remains to show that $A_{n,\epsilon}$ goes to $0$ uniformly in $t \in [0,1]$.
	\subsubsection{Proof that $A_{n,\epsilon}$ vanishes as $n\to\infty$}
	We now consider the final step, that is showing that $A_{n,\epsilon}$ given by \eqref{An} vanishes in the $n\to\infty$ limit uniformly in $t \in [0,1]$ under conditions~\ref{hyp:bounded}-\ref{hyp:c2}-\ref{hyp:phi_gauss2}. First we show that
\begin{align}		
	\E \Big[
			\frac{1}{\sqrt{n}} \sum_{\mu=1}^{m} \frac{P_{\rm out}''(Y_{t,\mu} | S_{t,\mu})}{P_{\rm out}(Y_{t,\mu} | S_{t,\mu})} 
			\Big( \frac{1}{\sqrt{n}} \sum_{i=1}^{n} \big((X^*_i)^2 - \rho\big) \Big)			
		\Big] = 0\,. \label{115}
\end{align}		
Once this is done, we use the fact that $\frac{1}{n} \ln \cZ_{t,\epsilon}$ concentrates around $f_{n,\epsilon}(t)$ to prove that $A_{n,\epsilon}$ converges to $0$ as $n\to\infty$. 
We start by noticing the simple fact that for all $s \in \R, \ \int P_{\rm out}''(y|s)dy = 0$. Consequently, for $\mu \in \{1 ,\dots, m \}$,
\begin{align}
			\E \Big[ \frac{P_{\rm out}''(Y_{t,\mu} | S_{t,\mu})}{P_{\rm out}(Y_{t,\mu} | S_{t,\mu})}		
		\, \Big| \, \bX^*, \bS_t \Big]
		=
		\int dY_{t,\mu} P_{\rm out}''(Y_{t,\mu}|S_{t,\mu}) = 0\,. \label{117}
\end{align}
Thus, using the ``tower property'' of the conditionnal expectation:
\begin{align*}
		\E\Big[\Big( \sum_{i=1}^{n} ((X^*_i)^2 - \rho) \Big) \sum_{\mu=1}^{m} \frac{P_{\rm out}''(Y_{t,\mu} | S_{t,\mu})}{P_{\rm out}(Y_{t,\mu} | S_{t,\mu})} \Big]
		=
		\E\Big[\Big( \sum_{i=1}^{n} \big((X^*_i)^2 - \rho\big) \Big) 
		\E \Big[\sum_{\mu=1}^{m} \frac{P_{\rm out}''(Y_{t,\mu} | S_{t,\mu})}{P_{\rm out}(Y_{t,\mu} | S_{t,\mu})} \, \Big| \, \bX^*, \bS_t \Big] \Big]
		= 0
\end{align*}		
which gives \eqref{115}. 
We now show that $A_{n,\epsilon}$ goes to $0$ uniformly in $t \in[0,1]$ as $n \to \infty$. Using successively \eqref{115} and the Cauchy-Schwarz inequality, we have
	\begin{align}
		|A_{n,\epsilon}| &= \Big|
		\E \Big[
			\frac{1}{\sqrt{n}} \sum_{\mu=1}^{m} \frac{P_{\rm out}''(Y_{t,\mu} | S_{t,\mu})}{P_{\rm out}(Y_{t,\mu} | S_{t,\mu})} 
			\Big( \frac{1}{\sqrt{n}} \sum_{i=1}^{n} \big((X^*_i)^2 - \rho\big) \Big)
			\Big(\frac{1}{n}\ln \cZ_{t,\epsilon} -f_{n,\epsilon}(t)\Big)
		\Big]
		\Big|
		\nn
		&\leq\,
		\E \Big[
			\Big(\frac{1}{\sqrt{n}} \sum_{\mu=1}^{m} \frac{P_{\rm out}''(Y_{t,\mu} | S_{t,\mu})}{P_{\rm out}(Y_{t,\mu} | S_{t,\mu})}\Big)^2
			\Big( \frac{1}{\sqrt{n}} \sum_{i=1}^{n} \big((X^*_i)^2 - \rho\big) \Big)^2
		\Big]^{1/2}
		\E\Big[\Big(\frac{1}{n} \ln \cZ_{t,\epsilon} - f_{n,\epsilon}(t)\Big)^2\Big]^{1/2}\,.\label{119}
	\end{align}
	Using again the ``tower property'' of conditional expectations
	\begin{align}
		\E \Big[
			\Big(\sum_{\mu=1}^{m} &\frac{P_{\rm out}''(Y_{t,\mu} | S_{t,\mu})}{P_{\rm out}(Y_{t,\mu} | S_{t,\mu})}\Big)^2
			\Big( \sum_{i=1}^{n} ((X^*_i)^2 - \rho) \Big)^2
		\Big]
		\nn
		=\,
		&\E \Big[
			\Big( \sum_{i=1}^{n} \big((X^*_i)^2 - \rho\big) \Big)^2
		\E \Big[
			\Big(\sum_{\mu=1}^{m} \frac{P_{\rm out}''(Y_{t,\mu} | S_{t,\mu})}{P_{\rm out}(Y_{t,\mu} | S_{t,\mu})}\Big)^2
			\, \Big| \, \bX^*, \bS_t
		\Big]
	\Big]\,. \label{eq:tower2}
	\end{align}
	Now, using the fact that conditionally on $\bS_t$, the random variables $\big(
\frac{P_{\rm out}''(Y_{t,\mu} | S_{t,\mu})}{P_{\rm out}(Y_{t,\mu} | S_{t,\mu})}
\big)_{1 \leq \mu \leq m}$ are i.i.d.\ and centered, we have
	\begin{align}
		\E \Big[
			\Big(\sum_{\mu=1}^{m} \frac{P_{\rm out}''(Y_{t,\mu} | S_{t,\mu})}{P_{\rm out}(Y_{t,\mu} | S_{t,\mu})}\Big)^2
			\, \Big| \, \bX^*, \bS_t
		\Big]
		\!=\!
		\E \Big[
			\Big(\sum_{\mu=1}^{m} \frac{P_{\rm out}''(Y_{t,\mu} | S_{t,\mu})}{P_{\rm out}(Y_{t,\mu} | S_{t,\mu})}\Big)^2
			\, \Big| \, \bS_t
		\Big]
		\!=\!
		m
		\E \Big[
			\Big(\frac{P_{\rm out}''(Y_{1} | S_{t,1})}{P_{\rm out}(Y_{1} | S_{t,1})}\Big)^2
			\, \Big| \, \bS_t
		\Big] \,. \label{eq:var_simp}
	\end{align}
	Under condition~\ref{hyp:c2}, it is not difficult to show that there exists a constant $C > 0$ such that
	\begin{align}
\E \Big[ \Big(\frac{P_{\rm out}''(Y_{t,1} | S_{t,1})}{P_{\rm out}(Y_{t,1} | S_{t,1})}\Big)^2 \, \Big| \, \bS_t \Big] \leq C\,.
\label{eq:borne_ddp}
	\end{align}
	Combining now \eqref{eq:borne_ddp}, \eqref{eq:var_simp} and \eqref{eq:tower2} we obtain that
	\begin{align*}
		\E \Big[
			\Big(\sum_{\mu=1}^{m} \frac{P_{\rm out}''(Y_{t,\mu} | S_{t,\mu})}{P_{\rm out}(Y_{t,\mu} | S_{t,\mu})}\Big)^2
			\Big( \sum_{i=1}^{n} \big((X^*_i)^2 - \rho\big) \Big)^2
		\Big]
		\leq 
		m C \, \E \Big[
			\Big( \sum_{i=1}^{n} \big((X^*_i)^2 - \rho\big) \Big)^2
		\Big]
		= m n C \, \Var\big((X_1^*)^2\big)\,.
	\end{align*}
	Going back to \eqref{119}, therefore there exists a constant $C' > 0$ such that
	\begin{align}
		|A_{n,\epsilon}|
		\leq\,
		C' \, \E\Big[\Big(\frac{1}{n} \ln \cZ_{t,\epsilon} - f_{n,\epsilon}(t)\Big)^2\Big]^{1/2}\,.
		\label{125}
	\end{align}
	By Theorem~\ref{concentrationtheorem} we have $\E[( n^{-1}\ln\cZ_{t,\epsilon} - f_{n,\epsilon}(t))^2] \to 0$ as $n\to\infty$ uniformly in $t \in [0,1]$. Thus $A_{n,\epsilon}$ goes to $0$ as $n \to \infty$ uniformly in $t \in [0,1]$, $\epsilon$ and w.r.t. the choice of the interpolation functions. This ends the proof of Proposition~\ref{prop:der_f_t}.

		\subsection{Boundedness of an overlap fluctuation}\label{appendix-boundedness-uterms}

In this appendix we show that the ``overlap fluctuation''
\begin{align}\label{over-fluct}
\E \Big\langle 
				\Big(
					\frac{1}{n} \sum_{\mu=1}^{m}u'_{Y_{t,\mu}}(S_{t,\mu}) u'_{Y_{t,\mu}}(s_{t,\mu})
					- r_\epsilon(t)
				\Big)^2
			\Big\rangle_{n,t,\epsilon}\!
			\leq \!
			2 r_{\rm max}^2 + 2 \E \Big\langle 
				\Big(
					\frac{1}{n} \sum_{\mu=1}^{m}u'_{Y_{t,\mu}}(S_{t,\mu}) u'_{Y_{t,\mu}}(s_{t,\mu})
				\Big)^2
			\Big\rangle_{n,t,\epsilon}	
\end{align}
is bounded uniformly in $t$ under hypothesis~\ref{hyp:c2} on $\varphi$. From the representation \eqref{transition-kernel} (recall we consider $\Delta=1$ in Sec.~\ref{sec:interpolation})
\begin{align}
u_{Y_{t, \mu}}(s) & = \ln P_{\rm out}(Y_{t, \mu}|s) = \ln \int dP_A(\ba_\mu) \frac{1}{\sqrt{2\pi}}e^{-\frac{1}{2} (Y_{t,\mu} - \varphi(s, \ba_\mu))^2}
\end{align}
and thus 
\begin{align}
u_{Y_{t, \mu}}^\prime(s) & = \frac{\int dP_A(\ba_\mu) (Y_{t,\mu} - \varphi(s, \ba_\mu)) \varphi^\prime(s, \ba_\mu) e^{-\frac{1}{2} (Y_{t,\mu} - \varphi(s, \ba_\mu))^2}}
{\int dP_A(\ba_\mu) e^{-\frac{1}{2} (Y_{t,\mu} - \varphi(s, \ba_\mu))^2}}
\end{align}
where $\varphi^\prime$ is the derivative w.r.t.\ the first argument. From \eqref{measurements} at $\Delta=1$ we get $\vert Y_{t, \mu} \vert \leq \sup\vert\varphi\vert +\vert Z_\mu\vert$, where the supremum is taken over both arguments of $\varphi$, and thus immediately obtain for all $s\in \mathbb{R}$
\begin{align}\label{bound-u}
\vert u_{Y_{t, \mu}}^\prime(s)\vert \leq (2 \sup\vert \varphi\vert + \vert Z_\mu\vert) \sup\vert\varphi^\prime\vert\,.
\end{align}
From \eqref{bound-u} and \eqref{over-fluct} we see that it suffices to check that 
\begin{align*}
\frac{m^2}{n^2}\mathbb{E}\big[\big((2\sup\vert\varphi\vert + |Z_\mu|)^2(\sup\vert\varphi^\prime\vert)^2\big)^2\big] \leq C(\varphi,\alpha)
\end{align*}
where $C(\varphi,\alpha)$ is a constant depending only on $\varphi$ and $\alpha$. This is easily seen by expanding all squares and using 
that $m/n$ is bounded.

\subsection{Proof of Proposition \ref{prop:F_equadiff}}\label{appendix-proofPropEqDiff}
	The continuity and differentiability properties of $F_n$ follow from the standard theorems of continuity and derivation under the integral sign. The domination hypotheses are easily verified because we are working under hypotheses \ref{hyp:bounded}-\ref{hyp:c2}.

	The overlap $\E \langle Q \rangle_{n,t,\epsilon}$ is related to the minimum mean-square error by
	$$
	\frac{1}{n}\MMSE( \bX^*  | \bY_t,  \bY_t',\bV, \bbf{\Phi})
	= \frac{1}{n} \E \Big[\big\| \bX^* - \E[\bX^*| \bY_t, \bY_t', \bV, \bbf{\Phi}] \big\|^2\Big]
	= \frac{1}{n} \E \Big[\big\| \bX^* - \langle \bx \rangle_{n,t,\epsilon} \big\|^2\Big]
	= \rho - \E \langle Q \rangle_{n,t,\epsilon}.
	$$
	Since the left hand side belongs to $[0,\rho]$, we obtain that $\E \langle Q \rangle_{n,t,\epsilon} \in [0,\rho]$.

	It remains therefore to prove that $\MMSE(\bX^* | \bY_t, \bY'_t,\bV, \bbf{\Phi})$ is separately non-increasing in $R_1$ and $R_2$. 
	$R_1$ only appears in the definition of $\bY'_t$.
	Recall that $\bY_t' = \sqrt{R_1} \,\bX^* + \bZ'$, where $\bZ'$ is a standard Gaussian vector, so $\MMSE(\bX^* | \bY_t, \bY'_t,\bV, \bbf{\Phi})$ is obviously a non-increasing function of $R_1$.

	$R_2$ only plays a role in $\bY_t\sim
	P_{\rm out}( \cdot \, | \,  \sqrt{(1-t)/{n}}\, \bbf{\Phi}\bX^* 
	+ \sqrt{R_2}\, \bV + \sqrt{\rho t - R_2 + 2 s_n}\, \bW^*)$.
Let $0 < r_2 \leq r_2' <\rho t$. Let $\bV' \sim \cN(0,\bbf{I}_m)$, independently of everything else.
	Define
	$$
	\widetilde{\bY}_t
	\sim
	P_{\rm out}\Big( \cdot \, \Big| \,  \sqrt{\frac{1-t}{n}}\, \bbf{\Phi}\bX^* 
		+ \sqrt{r_2}\, \bV + \sqrt{r_2'-r_2}\, \bV' + \sqrt{\rho t - r_2' + 2 s_n} \,\bW^*
\Big),
	$$
	independently of everything else. Now notice that
	\begin{align*}
	\MMSE( \bX^*  |  \bY_t, \bY_t',\bV, \bbf{\Phi} )|_{R_2 = r_2}
	&=
	\MMSE( \bX^* |  \widetilde{\bY}_t,  \bY_t',\bV, \bbf{\Phi} ),\nn
	\MMSE( \bX^* | \bY_t, \, \bY_t',\bV, \bbf{\Phi} )|_{R_2 = r_2'}
	&=
	\MMSE( \bX^* | \widetilde{\bY}_t,  \bY_t',\bV,\bV', \bbf{\Phi} ),
	\end{align*}
	which implies of course that $\MMSE( \bX^* | \bY_t, \bY_t',\bV, \bbf{\Phi} )|_{R_2 = r_2}
	\geq \MMSE( \bX^*  |  \bY_t, \bY_t',\bV, \bbf{\Phi} )|_{R_2 = r_2'}$. We have proved that $\MMSE( \bX^* |  \bY_t,  \bY_t',\bV, \bbf{\Phi} )$ is a non-increasing function of $R_2$.
	\section{Some properties of the scalar channels}

\subsection{The additive Gaussian scalar channel}\label{appendix_scalar_channel1}
We recall some properties (see \cite{GuoShamaiVerdu_IMMSE} and \cite{miolane2018phase} for proofs) of the free entropy of the first scalar channel \eqref{eq:additive_scalar_channel}. 
\begin{proposition}\label{prop16}
	Let $X_0 \sim P_0$ be a real random variable with finite second moment. Let $r \geq 0$ and $Y_0 = \sqrt{r} X_0 + Z_0$, where $Z_0 \sim \cN(0,1)$ is independent from $X_0$.
	Then the function
	$$
	\psi_{P_0}: r \mapsto \E \ln \int dP_0(x)e^{\sqrt r \,Y_0 x - r x^2/2}
	$$
	is convex, differentiable, non-decreasing and $\frac{1}{2}\E[X_0^2]$-Lipschitz on $\R_+$. Moreover, $\psi_{P_0}$ is strictly convex, if $P_0$ is not a Dirac measure.
\end{proposition}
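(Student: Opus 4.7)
The plan is to express $\psi_{P_0}$ in a form that exposes all four properties simultaneously via the I-MMSE identity of Proposition~\ref{prop:immse}. First I would complete the square in the exponent: since
\[
\sqrt{r}\,Y_0 x - \frac{r x^2}{2} = \frac{Y_0^2}{2} - \frac{1}{2}(Y_0 - \sqrt{r}\,x)^2,
\]
multiplying the integrand by $e^{-Y_0^2/2}/\sqrt{2\pi}$ turns it into the Gaussian conditional density of $Y_0$ given $X_0 = x$. Integrating against $dP_0(x)$ therefore produces the marginal density $p_{Y_0}(Y_0)$ up to this factor, and taking the expectation yields the standard identity
\[
\psi_{P_0}(r) = \frac{r \rho}{2} - I_{P_0}(r),
\]
with $\rho = \E[X_0^2]$ and $I_{P_0}(r) = I(X_0; \sqrt{r}X_0 + Z_0)$ as in \eqref{eq:mi_P0}.

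With this decomposition in hand I would read off three of the four assertions. Differentiability and $\mathcal{C}^1$ regularity come directly from Proposition~\ref{prop:immse} applied to $I_{P_0}$, together with the I-MMSE identity
\[
\psi'_{P_0}(r) = \frac{\rho}{2} - I'_{P_0}(r) = \frac{1}{2}\Big(\rho - \mathrm{MMSE}(X_0 \,|\, \sqrt{r}X_0 + Z_0)\Big).
\]
The MMSE lies in $[0, \mathrm{Var}(X_0)]$ (the upper bound is attained by the constant predictor $\E[X_0]$, and $\mathrm{Var}(X_0) \le \E[X_0^2] = \rho$), hence $\psi'_{P_0}(r) \in [0, \rho/2]$, which simultaneously yields the monotonicity and the Lipschitz constant. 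Convexity follows because $I_{P_0}$ is concave in $r$ (the MMSE is non-increasing in the SNR, a standard consequence of Proposition~\ref{prop:immse}), so $\psi_{P_0}(r) = r\rho/2 - I_{P_0}(r)$ is convex as the sum of a linear function and minus a concave function.

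The main obstacle is strict convexity under the non-Dirac assumption. The cleanest route I would take is to compute $\psi''_{P_0}(r)$ directly by differentiating the expression $\psi'_{P_0}(r) = \tfrac{1}{2}\E[\E[X_0 \,|\, Y_0]^2]$ under the integral, using Gaussian integration by parts in $Z_0$ together with the Nishimori identity of Proposition~\ref{prop:nishimori}. This should produce
\[
\psi''_{P_0}(r) = \frac{1}{2}\,\E\Big[\big(\langle x^2 \rangle - \langle x \rangle^2\big)^2\Big],
\]
where $\langle - \rangle$ denotes the posterior expectation in the scalar channel. The integrand is the squared posterior variance, which vanishes if and only if the posterior of $X_0$ given $Y_0$ is a Dirac mass. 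Vanishing of $\psi''_{P_0}$ on an interval would force this posterior to be a Dirac $P_{Y_0}$-almost surely for a range of $r$, meaning that $X_0$ is a deterministic function of $\sqrt{r}X_0 + Z_0$; since $Z_0$ is independent of $X_0$ and has a density on $\R$, this is only compatible with $P_0$ being a Dirac measure. The routine part here is the integration-by-parts computation; the slightly delicate part is justifying the interchange of differentiation and expectation, which follows from dominated convergence using only the finite second moment of $P_0$.
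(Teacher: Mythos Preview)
Your proposal is correct and follows the standard route; the paper itself does not give a proof but simply cites \cite{GuoShamaiVerdu_IMMSE} and \cite{miolane2018phase}, whose arguments are essentially the I-MMSE computation you outline (the identity $\psi_{P_0}(r)=\frac{r\rho}{2}-I_{P_0}(r)$ is exactly \eqref{eq:mi_P0}, and your derivative formula $2\psi'_{P_0}(r)=\E[\langle x\rangle^2]$ matches \eqref{fixed_points_explicit1}).

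One small simplification for the strict convexity step: you do not need the ``vanishing on an interval'' detour or the measurability argument about $X_0$ being a deterministic function of $Y_0$. For any $r\ge 0$ and any realization $y$, the posterior $dP(x\mid Y_0=y)$ is $P_0$ reweighted by the strictly positive factor $e^{-(y-\sqrt{r}x)^2/2}$, so its support coincides with that of $P_0$. If $P_0$ is not a Dirac mass this support contains at least two points, hence the posterior variance $\langle x^2\rangle-\langle x\rangle^2$ is strictly positive for \emph{every} $y$, and your formula $\psi''_{P_0}(r)=\tfrac12\,\E\big[(\langle x^2\rangle-\langle x\rangle^2)^2\big]$ is strictly positive for every $r\ge 0$ directly.
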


\subsection{The non-linear scalar channel}\label{appendix_scalar_channel2}
We prove here some properties of the free entropy of the second scalar channel \eqref{eq:Pout_scalar_channel}, where $V,W^* \iid \cN(0,1)$ and
\begin{equation}\label{eq:Pout_channel2}
	Y^{(q)} \sim P_{\rm out}\big(\cdot \, \big| \, \sqrt{q}\,V +  \sqrt{\rho -q} \,W^*\big)\,.
\end{equation}
In this channel, the statistician observes $V$ and $Y^{(q)}$ and wants to recover $W^*$.
Recall that by definition $\mathcal{I}_{P_{\rm out}}(q) = I(W^*; Y^{(q)}| V)= \Psi_{P_{\rm out}}(\rho) - \Psi_{P_{\rm out}}(q)$ so the properties we will prove on $\Psi_{P_{\rm out}}$ can be directly translated for $\mathcal{I}_{P_{\rm out}}$, and vice-versa.

	 \begin{proposition}\label{prop:psi_convex_reg}
		 Suppose that for all $x\in\R$, $P_{\rm out}(\cdot \, | \, x)$ is the law of
		 $\varphi(x,A) + \sqrt{\Delta} Z$
		 where $\Delta>0$, $\varphi:\R \times \R^{k_A} \to \R$ is a measurable function and $(Z,A) \sim \cN(0,1) \otimes P_A$, for some probability distribution $P_A$ over $\R^{k_A}$. In that case
		 $P_{\rm out}$ admits a density given by
		 $$
		 P_{\rm out}(y|x)=
	 \frac{1}{\sqrt{2\pi \Delta}} \int dP_A(\ba) e^{-\frac{1}{2 \Delta}( y - \varphi(x,\ba))^2}\,.
		 $$
		Assume that $\varphi$ is bounded and $\cC^2$ with respect to its first coordinate, with bounded first and second derivatives.
		Then $q \mapsto \Psi_{P_{\rm out}}(q)$ is convex, $\cC^2$ and non-decreasing on $[0,\rho]$. 
\end{proposition}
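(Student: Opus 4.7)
The plan is to verify the three properties in order of increasing difficulty, all three relying on the Gaussian-noise assumption $\Delta>0$ and the boundedness of $\varphi,\varphi',\varphi''$. The common strategy will be to use two tools throughout: Gaussian integration by parts on the outer variables $V, W^*$ (using their standard-Gaussian laws), and the Nishimori identity (Proposition~\ref{prop:nishimori}) applied to the scalar posterior $\langle\cdot\rangle_{\rm sc}$ of~\eqref{44_}, which exchanges the planted variable $W^*$ with an independent posterior replica.

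For smoothness, I will first note that when $\Delta>0$ and $\varphi$ is $C^2$ with bounded derivatives, the function $u_y(x) \defeq \ln P_{\rm out}(y|x)$ is $C^2$ in $x$ and its derivatives $u_y',u_y''$ admit pointwise bounds of the form $|u_y^{(k)}(x)|\le C\,(1+|y-\varphi(x,a)|^2)$, which are uniformly integrable against the law of $\widetilde Y_0\sim P_{\rm out}(\cdot|S)$. Differentiating the defining integral~\eqref{PsiPout} formally then produces expressions in $u_y'$ and $u_y''$ that are uniformly dominated in $q\in[0,\rho]$, justifying two differentiations under the integral on $(0,\rho)$. Continuity at the endpoints will follow after the change of variables $V'\defeq\sqrt{q}\,V$ and $W^{*\prime}\defeq\sqrt{\rho-q}\,W^*$, which absorbs the square-root non-smoothness of $q\mapsto\sqrt{q}$ at $q=0$ (and of $q\mapsto\sqrt{\rho-q}$ at $q=\rho$) into variances of Gaussian densities that depend smoothly on $q$.

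For monotonicity, I will compute the first derivative, noting that both the Gibbs weight and the outer law of $(V,W^*,\widetilde Y_0)$ depend on $q$ through $s\defeq\sqrt{q}V+\sqrt{\rho-q}\,w$ and $S\defeq\sqrt{q}V+\sqrt{\rho-q}W^*$. Applying Gaussian IBP to $V$ and $W^*$ in the outer expectation to relocate the derivatives onto $u'_{\widetilde Y_0}$, and then using Nishimori to convert the planted $W^*$ into a posterior replica, all cross terms will cancel and leave the manifest square
\begin{equation*}
\Psi'_{P_{\rm out}}(q)=\tfrac12\,\E\big[\langle u'_{\widetilde Y_0}(s)\rangle_{\rm sc}^{\,2}\big]\ge 0,
\end{equation*}
which is also consistent with the state-evolution identity~\eqref{fixed_points_explicit2} upon noting $\langle w\rangle_{\rm sc}=\sqrt{\rho-q}\,\langle u'_{\widetilde Y_0}(s)\rangle_{\rm sc}$ by a Stein identity in $w$.

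For convexity, I will differentiate the preceding identity once more, again regrouping via Gaussian IBP and Nishimori identities, now with up to four replicas $w_1,\dots,w_4$. The main obstacle is combinatorial: the pieces coming from $\partial_q$ acting on the outer expectation, on the Gibbs weight inside $\langle\cdot\rangle_{\rm sc}$, and on $u'_{\widetilde Y_0}$ through $s$ are not individually of definite sign, so one must systematically symmetrize over the replicas and eliminate odd moments via IBP until all remaining contributions collapse into a single sum of squares, yielding $\Psi''_{P_{\rm out}}(q)=c\,\E[\langle\mathcal{O}(\widetilde Y_0,V,w_1,w_2)\rangle_{\rm sc}^{\,2}]\ge 0$ for some constant $c>0$ and an observable $\mathcal{O}$ built from $u''_{\widetilde Y_0}$, or equivalently the posterior variance of $u'_{\widetilde Y_0}(s)$. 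The boundedness hypothesis on $\varphi''$ is precisely what guarantees absolute convergence of every integral and the validity of each IBP step.
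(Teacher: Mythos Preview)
Your proposal is correct and follows essentially the same route as the paper: differentiation under the integral (justified by the boundedness of $\varphi,\varphi',\varphi''$ and $\Delta>0$), Gaussian integration by parts on $V,W^*$, and the Nishimori identity to turn the first derivative into $\tfrac12\,\E\big[\langle u'_{\widetilde Y_0}(s)\rangle_{\rm sc}^{2}\big]$ and the second into a nonnegative square. The only refinement is the explicit identification of the second-derivative observable: the paper obtains
\[
\Psi''_{P_{\rm out}}(q)=\tfrac12\,\E\Big[\Big(\Big\langle \tfrac{P_{\rm out}''}{P_{\rm out}}\Big\rangle_{\rm sc}-\langle u'\rangle_{\rm sc}^{2}\Big)^{2}\Big]
=\tfrac12\,\E\Big[\big(\langle u''\rangle_{\rm sc}+\mathrm{Var}_{\rm sc}(u')\big)^{2}\Big],
\]
so $\mathcal{O}$ is the \emph{sum} of $\langle u''\rangle_{\rm sc}$ and the posterior variance of $u'$, not either one alone---your ``or equivalently'' is a slight overstatement, but the machinery you describe (IBP, Nishimori, factorization via $u''+(u')^2=P_{\rm out}''/P_{\rm out}$) is exactly what delivers this.
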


\begin{proof}
	Let $V,W^* \iid \cN(0,1)$ and $Y^{(q)}$ be the output of the scalar channel given by \eqref{eq:Pout_channel2}. Then for all $q \in [0, \rho]$,
	$$
	\Psi_{P_{\rm out}}(q) = \E \ln \int dw\, \frac{e^{-\frac{w^2}{2}}}{{\sqrt{2\pi}}} P_{\rm out}\big( Y^{(q)} | \sqrt{q}\, V + \sqrt{\rho - q}\, w\big)
	$$
	Under the hypotheses we made on $\varphi$, 
	we will be able to use continuity and differentiation under the expectation, because all the domination hypotheses will be easily verified.
	It is thus easy to check that $\Psi_{P_{\rm out}}$ is continuous on $[0, \rho]$.

	We compute now the first derivative. Recall that $\langle - \rangle_{\rm sc}$, defined in \eqref{GibbsBracket_sc}, denotes the posterior distribution of $W^*$ given $Y^{(q)}$. 
	We will use the notation $u_y(x)= \ln P_{\rm out}(y|x)$. 
	For $q \in (0, \rho)$ we have
	\begin{align*}
		\Psi'_{P_{\rm out}}(q) 
		&= \frac{1}{2} \E \Big\langle u'_{Y^{(q)}}(\sqrt{q} \,V + \sqrt{\rho-q}\, w) u'_{Y^{(q)}}(\sqrt{q}\, V + \sqrt{\rho-q}\, W^{*}) \Big\rangle_{\rm sc}
		\\
		&= \frac{1}{2} \E \Big\langle u'_{Y^{(q)}}(\sqrt{q} \,V + \sqrt{\rho-q} \,w) \Big\rangle_{\rm sc}^2 \geq 0 \,,
	\end{align*}
	where $w \sim \langle - \rangle_{\rm sc}$, independently of everything else. 
	$\Psi_{P_{\rm out}}$ is therefore non-decreasing.
	Using the boundedness assumption on $\varphi$ and its derivatives, it is not difficult to check that $\Psi_{P_{\rm out}}'$ is indeed bounded. 

	We will now compute $\Psi''_{P_{\rm out}}$.
	To lighten the notations, we write $u'(w)$ for $u'_{Y^{(q)}}(\sqrt{q}\,V + \sqrt{\rho - q}\,w)$. We compute
	\begin{align}
		\partial_q \E \Big\langle u'(w) u'(W^{*}) \Big\rangle_{\rm sc}
		= \E \Big[ \Big(\frac{1}{2\sqrt{q}}\, V - \frac{1}{2 \sqrt{\rho - q}}\, W^{*}\Big) u'(W^{*}) \Big\langle u'(w) u'(W^{*}) \Big\rangle_{\rm sc} \Big]& \qquad (A)\nonumber
		\\
		+ 2 \E \Big\langle \Big(\frac{1}{2\sqrt{q}}\, V - \frac{1}{2 \sqrt{\rho - q}}\, W^{*}\Big) u''(W^{*}) u'(w) \Big\rangle_{\rm sc}&\qquad (B)\nonumber
		\\
		+ \E \Big\langle \Big(\frac{1}{2\sqrt{q}}\, V - \frac{1}{2 \sqrt{\rho - q}}\, W^{*}\Big) u'(W^{*})^2 u'(w) \Big\rangle_{\rm sc}&\qquad (C) \nonumber
		\\
		- \E \Big\langle u'(W^{*}) u'(w) \Big\rangle_{\rm sc} \Big\langle  \Big(\frac{1}{2\sqrt{q}}\, V - \frac{1}{2 \sqrt{\rho - q}}\, w\Big) u'(w) \Big\rangle_{\rm sc}&\qquad (D) \label{eq:der_abcd}
	\end{align}
	Notice that $(A) = (C)$. We compute, using Gaussian integration by parts and the Nishimori identity (Proposition~\ref{prop:nishimori})
	\begin{align}
		(A) &= 
		\frac{1}{2} \E \Big[ u'(W^{*}) \Big\langle u'(W^{*}) u''(w) \Big\rangle_{\rm sc} \Big]
		+ \frac{1}{2} \E \Big[ u'(W^{*}) \Big\langle u'(W^{*}) u'(w)^2 \Big\rangle_{\rm sc} \Big] \nonumber
		\\
		&\quad - \frac{1}{2} \E \Big[ u'(W^{*}) \Big\langle u'(W^{*}) u'(w) \Big\rangle_{\rm sc} \Big\langle u'(w) \Big\rangle_{\rm sc} \Big]
		\label{eq:compA}
		\\
		(B) &= 
		\E \Big\langle u''(W^{*}) u''(w) \Big\rangle_{\rm sc}
		+ \E \Big\langle u''(W^{*}) u'(w)^2 \Big\rangle_{\rm sc}
		- \E \Big\langle u''(W^{*}) u'(w) \Big\rangle_{\rm sc} \Big\langle u'(w) \Big\rangle_{\rm sc}
		\label{eq:compB}
		\\
		(D) &= 
		- \E \Big\langle  \Big(\frac{1}{2\sqrt{q}} V - \frac{1}{2 \sqrt{\rho - q}} W^{*}\Big) u'(W^{*}) u'(w^{(1)}) u'(w^{(2)}) \Big\rangle_{\rm sc} \nonumber
		\\
		&= 
		-\E \langle u'(W^{*}) u''(w^{(1)}) u'(w^{(2)}) \rangle_{\rm sc}
		- \E \langle u'(W^{*}) u'(w^{(1)})^2 u'(w^{(2)}) \rangle_{\rm sc}
		\nonumber \\
		& \quad + \E \langle u'(W^{*}) u'(w^{(1)}) u'(w^{(2)}) \rangle_{\rm sc} \langle u'(w) \rangle_{\rm sc}
		\label{eq:compD}
	\end{align}
	We now replace \eqref{eq:compA}, \eqref{eq:compB} and \eqref{eq:compD} in \eqref{eq:der_abcd}:
	\begin{align*}
		2 \Psi''_{P_{\rm out}}(q)
		&=
		\E \Big\langle u'(W^{*})^2 u''(w) \Big\rangle_{\rm sc} 
		+ \E \Big\langle u'(W^{*})^2 u'(w)^2 \Big\rangle_{\rm sc}
		- \E \Big\langle u'(W^{*})^2 u'(w^{(1)}) u'(w^{(2)}) \Big\rangle_{\rm sc}
		\\
		&\quad+ \E \Big\langle u''(W^{*}) u''(w) \Big\rangle_{\rm sc}
		+ \E \Big\langle u''(W^{*}) u'(w)^2 \Big\rangle_{\rm sc}
		- \E \Big\langle u''(W^{*}) u'(w^{(1)}) u'(w^{(2)}) \Big\rangle_{\rm sc}
		\\
		&\quad-\E \langle u'(W^{*}) u''(w^{(1)}) u'(w^{(2)}) \rangle_{\rm sc}
		- \E \langle u'(W^{*}) u'(w^{(1)})^2 u'(w^{(2)}) \rangle_{\rm sc}
		+ \E \langle u'(w) \rangle_{\rm sc}^4 \,.
	\end{align*}
	Using the identity $u''_Y(x) + u'_Y(x)^2 = \frac{P_{\rm out}''(Y|x)}{P_{\rm out}(Y|x)}$, this factorizes and gives
	\begin{equation}
		\Psi''_{P_{\rm out}}(q)
		= \frac{1}{2} \E\Big[
			\Big(
				\Big\langle \frac{P_{\rm out}''(Y|\sqrt{q}\,V + \sqrt{\rho-q}\,w)}{P_{\rm out}(Y|\sqrt{q}\,V + \sqrt{\rho-q}w)} \Big\rangle_{\rm sc} - \Big\langle u_{Y^{(q)}}'(\sqrt{q}\,V + \sqrt{\rho-q}\,w) \Big\rangle_{\rm sc}^2
			\Big)^2
		\Big] \geq 0 \,.
	\end{equation}
	$\Psi_{P_{\rm out}}$ is thus convex on $[0, \rho]$.
	It is not difficult to verify (by standard arguments of continuity under the integral) that $\Psi_{P_{\rm out}}''$ is continuous on $[0,\rho]$, which gives that $\Psi_{P_{\rm out}}$ is $\cC^2$ on its domain.
\end{proof}

\begin{proposition} \label{prop:psi_convex}
		 Suppose that for all $x\in\R$, $P_{\rm out}(\cdot \, | \, x)$ is the law of
		 $\varphi(x,A) + \sqrt{\Delta} Z$
		 where $\varphi:\R \times \R^{k_A} \to \R$ is a measurable function and $(Z,A) \sim \cN(0,1) \otimes P_A$, for some probability distribution $P_A$ over $\R^{k_A}$. Assume also that
	\begin{equation}\label{eq:second_m}
		\E[\varphi(\sqrt{\rho}Z,A)^2] < \infty \,,
	\end{equation}
	and that we are in one of the following cases:
\begin{enumerate}[label=(\roman*),noitemsep]
	\item \label{item:psi_gauss} $\Delta > 0$.
	\item \label{item:psi_discrete} $\Delta = 0$ and $\varphi$ takes values in $\N$.
\end{enumerate}
	Then $q \mapsto \Psi_{P_{\rm out}}(q)$ is continuous, convex and non-decreasing over $[0,\rho]$. 
\end{proposition}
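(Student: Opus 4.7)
The strategy is to reduce both cases to the smooth setting of Proposition \ref{prop:psi_convex_reg} by approximation, and to pass to the limit while preserving convexity, monotonicity, and continuity. It will be convenient to work through the identity $\mathcal{I}_{P_{\rm out}}(q) = I(W^*; Y^{(q)}|V) = \Psi_{P_{\rm out}}(\rho) - \Psi_{P_{\rm out}}(q)$, which translates the three desired properties of $\Psi_{P_{\rm out}}$ into concavity, monotonicity, and continuity of $\mathcal{I}_{P_{\rm out}}$. In both cases the crucial point will be to establish \emph{uniform} convergence on $[0, \rho]$, since mere pointwise convergence of convex functions on a closed interval can fail to preserve continuity at the upper endpoint $q = \rho$.

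For case \ref{item:psi_gauss} ($\Delta > 0$), I would approximate $\varphi$ by a sequence $\varphi_k$ of bounded functions that are $\mathcal{C}^2$ in the first variable with bounded first and second derivatives, obtained by first truncating via $\tau_k(y) = (y \wedge k) \vee (-k)$ and then mollifying in $x$ at scale $1/k$ with a smooth bump. Hypothesis \eqref{eq:second_m} together with the contractivity of truncation gives $\varphi_k(\sqrt{\rho}Z, A) \to \varphi(\sqrt{\rho}Z, A)$ in $L^2$ with a uniform second-moment bound, so by Proposition \ref{prop:psi_convex_reg} each $\Psi_{P_{\rm out}^{(k)}}$ is convex, non-decreasing, and continuous on $[0, \rho]$. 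Setting $S^{(q)} = \varphi(\sqrt{q}V + \sqrt{\rho-q}W^*, A)$, the identity
\begin{equation*}
\Psi_{P_{\rm out}}(q) = -\tfrac{1}{2}\ln(2\pi e\Delta) - I\big(S^{(q)}; S^{(q)} + \sqrt{\Delta}\, Z \,\big|\, V\big),
\end{equation*}
and its analogue for $\varphi_k$, allow one to apply Proposition \ref{prop:free_wasserstein} conditionally on $V$ (after rescaling the noise by $\Delta^{-1/2}$) with the natural coupling bound $W_2^2(\mathcal{L}(S_k^{(q)}|V), \mathcal{L}(S^{(q)}|V)) \leq \E[(S_k^{(q)} - S^{(q)})^2 \, | \, V]$. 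Taking $\E_V$ via Cauchy-Schwarz and using the fact that $\sqrt{q}V + \sqrt{\rho-q}W^* \sim \mathcal{N}(0, \rho)$ for \emph{every} $q \in [0, \rho]$, one arrives at
\begin{equation*}
|\Psi_{P_{\rm out}^{(k)}}(q) - \Psi_{P_{\rm out}}(q)| \leq \Delta^{-1} \sqrt{2\big(\E[\varphi_k(\sqrt{\rho}Z, A)^2] + \E[\varphi(\sqrt{\rho}Z, A)^2]\big)}\; \sqrt{\E\big[(\varphi_k - \varphi)(\sqrt{\rho}Z, A)^2\big]},
\end{equation*}
a bound that is independent of $q$ and tends to $0$; hence $\Psi_{P_{\rm out}^{(k)}} \to \Psi_{P_{\rm out}}$ uniformly on $[0, \rho]$, and the three properties are transferred to the limit.

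For case \ref{item:psi_discrete} ($\Delta = 0$, $\varphi$ $\N$-valued), I would introduce a regularizing Gaussian noise by setting $\tilde\varphi^{(\delta)}(x, (a, z)) = \varphi(x, a) + \sqrt{\delta}\, z$ with $(A, Z) \sim P_A \otimes \mathcal{N}(0, 1)$, reducing the problem to case \ref{item:psi_gauss} with noise level $\delta > 0$ and yielding, for every $\delta > 0$, an $\mathcal{I}^{(\delta)}$ that is concave, non-increasing, and continuous on $[0, \rho]$. Applying Corollary \ref{cor:cont_I_discrete} conditionally on $V$ with $\bU = Y^{(q)} \in \N$ and $\bX = W^*$ (so $m = 1$), one obtains
\begin{equation*}
|\mathcal{I}^{(\delta)}(q) - \mathcal{I}_{P_{\rm out}}(q)| \leq 100\, e^{-1/(16\delta)}
\end{equation*}
for every $\delta \in (0, 1]$, uniformly in $q \in [0, \rho]$. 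Letting $\delta \to 0$ then preserves concavity, monotonicity, and continuity of $\mathcal{I}_{P_{\rm out}}$, which via $\Psi_{P_{\rm out}}(q) = \Psi_{P_{\rm out}}(\rho) - \mathcal{I}_{P_{\rm out}}(q)$ gives exactly the three assertions for $\Psi_{P_{\rm out}}$.

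The main obstacle will be guaranteeing the $q$-uniformity of the approximations, which is precisely what rules out a potential jump of $\Psi_{P_{\rm out}}$ at $q = \rho$. In case \ref{item:psi_gauss} this uniformity rests on the simple but essential observation that the marginal law of $\sqrt{q}V + \sqrt{\rho-q}W^*$ is the same $\mathcal{N}(0, \rho)$ for every $q$, so the right-hand side of the Wasserstein-Lipschitz bound depends only on the law of $(\sqrt{\rho}Z, A)$; in case \ref{item:psi_discrete} it rests on the exponential-in-$1/\delta$ bound from Corollary \ref{cor:cont_I_discrete}, which is trivially $q$-uniform. A subsidiary technical point is to verify that the truncation-then-mollification scheme produces an $L^2$-approximating sequence with uniformly bounded second moment, which follows from \eqref{eq:second_m} together with the contractivity of both operations with respect to the Gaussian measure on $\R$.
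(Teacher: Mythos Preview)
Your proposal is correct and follows essentially the same approach as the paper: approximate $\varphi$ in $L^2$ by a smooth bounded function to reduce to Proposition~\ref{prop:psi_convex_reg}, use Proposition~\ref{prop:free_wasserstein} together with the fact that $\sqrt{q}V+\sqrt{\rho-q}W^*\sim\mathcal{N}(0,\rho)$ for all $q$ to make the approximation error $q$-uniform, and in case~\ref{item:psi_discrete} add Gaussian noise and invoke Corollary~\ref{cor:cont_I_discrete}. The only cosmetic difference is that in case~\ref{item:psi_gauss} the paper works with $\mathcal{I}_{P_{\rm out}}$ via the chain-rule decomposition $\mathcal{I}_{P_{\rm out}}(q)=I(U^{(q)};Y^{(q)}|V)-I(U^{(q)};Y^{(q)}|V,W^*)$ and bounds both terms, whereas you work directly with $\Psi_{P_{\rm out}}(q)=-\tfrac{1}{2}\ln(2\pi e\Delta)-I(S^{(q)};S^{(q)}+\sqrt{\Delta}Z\,|\,V)$ and bound one; both yield the same uniform estimate.
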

Notice that \eqref{eq:second_m} is for instance verified under hypotheses~\ref{hyp:third_moment}-\ref{hyp:um}-\ref{hyp:phi_general}-\ref{hyp:cont_pp}, see Proposition \ref{prop:varphi_l2}.

\begin{proof}
	We deduce Proposition~\ref{prop:psi_convex} from Proposition~\ref{prop:psi_convex_reg} above by an approximation procedure.
	Since $\Psi_{P_{\rm out}} = \Psi_{P_{\rm out}}(\rho) - \mathcal{I}_{P_{\rm out}}$, we will work with the mutual information $\mathcal{I}_{P_{\rm out}}$.
	Let us define $U^{(q)} =  \varphi\big(\sqrt{q}\,V + \sqrt{\rho -q}\,W^*, A \big)$ and $Y^{(q)} = U^{(q)} + \sqrt{\Delta} Z$.

	We start by proving Proposition~\ref{prop:psi_convex} under the assumption~\ref{item:psi_gauss}.
	Let $\epsilon >0$.
	By density of the $\cC^{\infty}$ functions with compact support in $L^2$ (see for instance Corollary~4.2.2 from \cite{bogachev2007measure}), one can find a $\cC^{\infty}$ function $\widehat{\varphi}$ with compact support, such that
	$$
	\E\Big[\big(\varphi(\sqrt{\rho}\,Z,A) - \widehat{\varphi}(\sqrt{\rho}\,Z,A) \big)^2\Big] \leq \epsilon^2 \,.
	$$
	Let us write $\widehat{U}^{(q)} = \widehat{\varphi}(\sqrt{q}\,V + \sqrt{\rho-q}\,W^*,A)$ and $\widehat{Y}^{(q)} = \widehat{U} + \sqrt{\Delta} Z$.
	We have by the chain rule for the mutual information
	\begin{align}
	&I(U^{(q)};Y^{(q)}|V) = I(W^*,U^{(q)} ; Y^{(q)} | V) \nn
	=\,&I(U^{(q)};Y^{(q)}|V,W^*) + I(W^*;Y^{(q)}|V)
	=I(U^{(q)};Y^{(q)}|V,W^*) + \mathcal{I}_{P_{\rm out}}(q)
	\end{align}
	and similarly, $\mathcal{I}_{\widehat{P}_{\rm out}}(q) = I(\widehat{U}^{(q)};\widehat{Y}^{(q)}|V) - I(\widehat{U}^{(q)};\widehat{Y}^{(q)}|V,W^*)$. By Proposition~\ref{prop:free_wasserstein}, there exists a constant $C>0$ such that
	$$
	|I(\widehat{U}^{(q)};\widehat{Y}^{(q)}|V) -I(U^{(q)};Y^{(q)}|V)|  \leq C \epsilon
\quad \text{and} \quad
|I(\widehat{U}^{(q)};\widehat{Y}^{(q)}|V,W^*) - I(U^{(q)};Y^{(q)}|V,W^*)| \leq C \epsilon \,.
	$$
	We get that for all $q \in [0,\rho]$, $|\mathcal{I}_{P_{\rm out}}(q) - \mathcal{I}_{\widehat{P}_{\rm out}}(q)| \leq C\epsilon$. The function $\mathcal{I}_{P_{\rm out}}$  can therefore be uniformly approximated by continuous, concave, non-increasing functions on $[0,\rho]$: $\mathcal{I}_{P_{\rm out}}$ is therefore continuous, concave and non-increasing.

	Let us now prove Proposition~\ref{prop:psi_convex} under the assumption~\ref{item:psi_discrete}. Under this assumption we have $\mathcal{I}_{P_{\rm out}}(q) = I(W^*;U^{(q)}|V)$ and by the case~\ref{item:psi_gauss} we know that the function $i_{\Delta}(q) = I(W^*; U^{(q)} + \sqrt{\Delta}Z|V)$ is concave and non-increasing for all $\Delta >0$. By Corollary~\ref{cor:cont_I_discrete} we obtain that for all $q \in [0, \rho]$ and all $\Delta \in (0,1]$ we have
	$$
	\big|\mathcal{I}_{P_{\rm out}}(q) - i_{\Delta}(q)\big| \leq 100 e^{-1/(16\Delta)} \,,
	$$
	which proves (by uniform approximation) that $\mathcal{I}_{P_{\rm out}}$ is continuous, concave and non-increasing.
	\end{proof}

	\begin{proposition}\label{prop:psi_out_diff}
		Under the same hypotheses than Proposition~\ref{prop:psi_convex} above,
		$\Psi_{\rm out}$ is differentiable over $[0,\rho)$ and for all $q \in [0,\rho)$
		$$
		\Psi_{P_{\rm out}}'(q)=
	\frac{1}{2(\rho - q)} \E \langle w \rangle_{\rm  sc}^2 \,,
	$$
	where we recall that $\langle - \rangle_{\rm sc}$ is defined by \eqref{GibbsBracket_sc}.
	\end{proposition}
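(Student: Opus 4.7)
The plan is to first establish the identity under the stronger regularity hypotheses of Proposition~\ref{prop:psi_convex_reg}, and then extend it to the general setting of Proposition~\ref{prop:psi_convex} by the same approximation procedure already used in the proof of Proposition~\ref{prop:psi_convex}, combined with convexity of $\Psi_{P_{\rm out}}$.

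First, assume $\varphi$ is bounded and $\cC^2$ with bounded derivatives, so Proposition~\ref{prop:psi_convex_reg} applies and gives
\[
\Psi_{P_{\rm out}}'(q) \;=\; \tfrac{1}{2}\,\E\big\langle u'_{\widetilde Y_0}(\sqrt{q}\,V+\sqrt{\rho-q}\,w)\big\rangle_{\rm sc}^{\,2},\qquad q\in(0,\rho).
\]
The key step is to convert $\langle u'\rangle_{\rm sc}$ into $\langle w\rangle_{\rm sc}$ via Gaussian integration by parts inside the posterior $\langle-\rangle_{\rm sc}$, whose unnormalised density against Lebesgue measure is proportional to $e^{-w^{2}/2}P_{\rm out}(\widetilde Y_0\,|\,\sqrt{q}\,V+\sqrt{\rho-q}\,w)$. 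Writing $w\,e^{-w^{2}/2}=-\partial_w e^{-w^{2}/2}$ and integrating by parts in $w$ gives
\[
\langle w\rangle_{\rm sc}
\;=\;\frac{\int e^{-w^{2}/2}\,\partial_w P_{\rm out}(\widetilde Y_0\,|\,\sqrt{q}\,V+\sqrt{\rho-q}\,w)\,dw}{\int e^{-w^{2}/2}\,P_{\rm out}(\widetilde Y_0\,|\,\sqrt{q}\,V+\sqrt{\rho-q}\,w)\,dw}
\;=\;\sqrt{\rho-q}\,\big\langle u'_{\widetilde Y_0}(\sqrt{q}\,V+\sqrt{\rho-q}\,w)\big\rangle_{\rm sc},
\]
where the boundary terms vanish thanks to the Gaussian factor and the boundedness of $\varphi$ and $\varphi'$. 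Squaring and taking expectation yields $\E\langle u'\rangle_{\rm sc}^{\,2} = \E\langle w\rangle_{\rm sc}^{\,2}/(\rho-q)$, which combined with the displayed formula from Proposition~\ref{prop:psi_convex_reg} gives the identity in the regular case.

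To extend to the general hypotheses of Proposition~\ref{prop:psi_convex}, I would reuse the approximation procedure from the proof of that proposition: Under \ref{item:psi_gauss}, pick a sequence of $\cC^\infty$ compactly supported $\widehat\varphi_n$ with $\|\varphi-\widehat\varphi_n\|_{L^2}\to 0$, and under \ref{item:psi_discrete}, first convolve with a small Gaussian of variance $\delta_n\downarrow 0$ and then approximate as in case \ref{item:psi_gauss}. Arguing as there, the corresponding $\Psi_{\widehat P_{\rm out}^{(n)}}$ converge uniformly on $[0,\rho]$ to $\Psi_{P_{\rm out}}$. The quantity
\[
\chi(q)\;\defeq\;\frac{1}{2(\rho-q)}\,\E\langle w\rangle_{\rm sc}^{\,2},\qquad q\in[0,\rho),
\]
associated with the original $P_{\rm out}$, and its analogues $\chi_n$ for the approximations, are well-defined; by dominated convergence (using that $\E[\varphi(\sqrt{\rho}Z,A)^{2}]<\infty$ guarantees uniform integrability along the approximating sequence) $\chi_n(q)\to\chi(q)$ pointwise on $[0,\rho)$, and a similar argument shows that $\chi$ is continuous on $[0,\rho)$.

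Finally, differentiability follows from convexity: Each $\Psi_{\widehat P_{\rm out}^{(n)}}$ is convex and differentiable on $(0,\rho)$ with derivative $\chi_n$ by the first step. Uniform convergence of convex functions on compact subsets of $(0,\rho)$ implies that their subderivatives converge at every point of differentiability of the limit, so the left- and right-derivatives of $\Psi_{P_{\rm out}}$ at any $q\in(0,\rho)$ both coincide with $\chi(q)$, hence $\Psi_{P_{\rm out}}$ is differentiable on $(0,\rho)$ with derivative $\chi$; right-differentiability at $q=0$ is handled analogously by monotone approximation in $q$. The main obstacle is the last step: Ensuring that the pointwise limit of $\chi_n$ really is the posterior expectation $\chi$ for the true $P_{\rm out}$, which requires checking continuity of the posterior mean under weak convergence of the transition kernels, and this is where the continuity hypothesis \ref{hyp:cont_pp} and the $L^2$-bound on $\varphi$ from Proposition~\ref{prop:varphi_l2} are needed.
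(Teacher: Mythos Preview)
Your reduction under the regular hypotheses of Proposition~\ref{prop:psi_convex_reg} is correct and elegant: the integration by parts $\langle w\rangle_{\rm sc}=\sqrt{\rho-q}\,\langle u'\rangle_{\rm sc}$ immediately turns the derivative formula of that proposition into the desired one. The difficulty is entirely in your extension step, and it is a real one. You need $\chi_n(q)\to\chi(q)$, i.e.\ convergence of the posterior mean $\E\langle w\rangle_{\rm sc}^2$ along an $L^2$ approximation of $\varphi$; you yourself flag this as the main obstacle and reach for hypothesis~\ref{hyp:cont_pp} to handle it. But~\ref{hyp:cont_pp} is \emph{not} among the hypotheses of Proposition~\ref{prop:psi_convex}, so as stated your plan would prove a weaker result. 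Even with that extra hypothesis, passing posterior means through $L^2$ approximations of the likelihood is delicate and you have not said how you would do it.

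The paper avoids approximation altogether by a change of variables that shifts all the $q$-dependence into a smooth Gaussian factor. Setting $X=\sqrt{q}\,V+\sqrt{\rho-q}\,W^*$ one rewrites
\[
\Psi_{P_{\rm out}}(q)=\E\int dX\,\frac{e^{-\frac{(X-\sqrt{q}V)^2}{2(\rho-q)}}}{\sqrt{2\pi(\rho-q)}}\int dY\,P_{\rm out}(Y|X)\,\ln\!\int dx\,\frac{e^{-\frac{(x-\sqrt{q}V)^2}{2(\rho-q)}}}{\sqrt{2\pi(\rho-q)}}\,P_{\rm out}(Y|x),
\]
so $q$ enters only through the two Gaussian densities, which are $\cC^\infty$ in $q$ on $[0,\rho)$ regardless of any regularity of $\varphi$. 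Differentiation under the integral sign is then a routine domination argument; one of the two resulting terms vanishes by the Nishimori identity, and a Gaussian integration by parts in $V$ on the other term yields $\Psi_{P_{\rm out}}'(q)=\tfrac{1}{2(\rho-q)}\E\langle w\rangle_{\rm sc}^2$ directly. This route needs neither regularity of $\varphi$, nor~\ref{hyp:cont_pp}, nor any limiting procedure, and is what you should use instead of the approximation step.
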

	\begin{proof}
	The fact that $\Psi_{P_{\rm out}}$ is differentiable on $[0,\rho)$ follows from differentiation under the expectation sign. In order to see it, we define $X = \sqrt{q}\,V + \sqrt{\rho - q}\,W^*$. Then, for all $q \in [0,\rho)$:
	\begin{align}
		\Psi_{P_{\rm out}}(q) = \E\int dX \frac{1}{\sqrt{2\pi(\rho-q)}} e^{-\frac{(X-\sqrt{q}\,V)^2}{2(\rho - q)}}
		\int dY P_{\rm out}(Y|X) \ln  \int dx \frac{1}{\sqrt{2\pi(\rho-q)}} e^{-\frac{(x-\sqrt{q}\,V)^2}{2(\rho - q)}} P_{\rm out}(Y|x)\,. 
	\end{align}	
	We are now in a good setting to differentiate under the expectation sign. 
	We have for all $q \in (0,\rho)$,
	\begin{equation}\label{eq:diff_gauss}
	\frac{\partial}{\partial q} \left[
	\frac{1}{\sqrt{\rho-q}} e^{-\frac{(X-\sqrt{q}\,V)^2}{2(\rho - q)}}
	\right]
	= \frac{1}{2\sqrt{\rho - q}} \Big(\frac{1}{\rho - q} - \frac{(X-\sqrt{q}V)^2}{(\rho - q)^2} + \frac{V (X - \sqrt{q}V)}{\sqrt{q}(\rho-q)}\Big) e^{-\frac{(X-\sqrt{q}\,V)^2}{2(\rho - q)}} \,.
	\end{equation}
	Thus
	\begin{align*}
		\Psi_{P_{\rm out}}'(q)&=
	\frac{1}{2} \E \left[
 \Big(\frac{1}{\rho - q} - \frac{(X-\sqrt{q}\,V)^2}{(\rho - q)^2} + \frac{V (X - \sqrt{q}\,V)}{\sqrt{q}\,(\rho-q)}\Big) 
\ln  \int dx \frac{1}{\sqrt{2\pi(\rho-q)}} e^{-\frac{(x-\sqrt{q}\,V)^2}{2(\rho - q)}} P_{\rm out}(Y|x) \right]
\\
&+
	\frac{1}{2} \E \left\langle
 \frac{1}{\rho - q} - \frac{(x-\sqrt{q}\,V)^2}{(\rho - q)^2} + \frac{V (x - \sqrt{q}\,V)}{\sqrt{q}\,(\rho-q)}
 \right\rangle_{\rm sc}
	\end{align*}
	where the Gibbs brackets $\langle - \rangle_{\rm sc}$ denotes the expectation with respect to $x \sim P(X|Y^{(q)},V)$. The second term of the sum above is equal to zero. Indeed by the Nishimori identity (Proposition~\ref{prop:nishimori}):
 \begin{align*}
	\E \left\langle
 \frac{1}{\rho - q} - \frac{(x-\sqrt{q}\,V)^2}{(\rho - q)^2} + \frac{V (x - \sqrt{q}\,V)}{\sqrt{q}\,(\rho-q)}
 \right\rangle_{\rm sc}
 &=
 \E \left[
 \frac{1}{\rho - q} - \frac{(X-\sqrt{q}\,V)^2}{(\rho - q)^2} + \frac{V (X - \sqrt{q}\,V)}{\sqrt{q}\,(\rho-q)}
 \right]
 \\
 &=\frac{1}{\rho - q} \E \left[1 - (W^*)^2\right] = 0\,.
 \end{align*}
 We now compute, by Gaussian integration by parts with respect to $V \sim \cN(0,1)$:
 \begin{align*}
	 &	\E \left[
  \frac{V (X - \sqrt{q}\,V)}{\sqrt{q}\,(\rho-q)}
  \ln  \int dx \frac{1}{\sqrt{2\pi(\rho-q)}} e^{-\frac{(x-\sqrt{q}\,V)^2}{2(\rho - q)}} P_{\rm out}(Y^{(q)}|x) \right]
  \\
&=
	\E \left[
  \frac{-1}{\rho-q}
  \ln  \int dx \frac{e^{-\frac{(x-\sqrt{q}\,V)^2}{2(\rho - q)}}}{\sqrt{2\pi(\rho-q)}}  P_{\rm out}(Y^{(q)}|x) \right]
  +
\E \left[
	\frac{(X - \sqrt{q}\,V)^2}{(\rho-q)^2}
	\ln  \int dx \frac{e^{-\frac{(x-\sqrt{q}\,V)^2}{2(\rho - q)}}}{\sqrt{2\pi(\rho-q)}}  P_{\rm out}(Y^{(q)}|x)
\right]
\\
&\ +
\E \left\langle
	\frac{(X - \sqrt{q}\,V)(x - \sqrt{q}\,V)}{(\rho-q)^2}
\right\rangle_{\rm sc}.
\end{align*}
Bringing all together, we conclude:
	$$
	\Psi_{P_{\rm out}}'(q) = 
	\frac{1}{2}
\E \left\langle
	\frac{(X - \sqrt{q}\,V)(x - \sqrt{q}\,V)}{(\rho-q)^2}
\right\rangle_{\rm sc} = 
	\frac{1}{2(\rho - q)} \E \langle w \rangle_{\rm  sc}^2 \,.
	$$
	This derivative is continuous at $q=0$ thus $\Psi_{P_{\rm out}}$ is differentiable at $q=0$ with derivative given by the same expression.
\end{proof}

\begin{proposition}\label{prop:psi_stricly_monoton}
	Assume that the hypotheses of Proposition~\ref{prop:psi_convex} hold and suppose also that the kernel $P_{\rm out}$ is informative. Then $\Psi_{P_{\rm out}}$ is strictly increasing on $[0,\rho]$.
\end{proposition}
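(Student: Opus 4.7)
The plan is to argue by contradiction using the explicit derivative formula from Proposition~\ref{prop:psi_out_diff}. By Proposition~\ref{prop:psi_convex}, $\Psi_{P_{\rm out}}$ is convex, continuous and non-decreasing on $[0,\rho]$, and by Proposition~\ref{prop:psi_out_diff} it is differentiable on $[0,\rho)$. If $\Psi_{P_{\rm out}}$ is not strictly increasing, there exist $q_1 < q_2$ in $[0,\rho]$ with $\Psi_{P_{\rm out}}(q_1)=\Psi_{P_{\rm out}}(q_2)$; convexity combined with the non-decreasingness of $\Psi'_{P_{\rm out}}$ then forces $\Psi'_{P_{\rm out}} \equiv 0$ on an initial interval $[0,q_0]$ with $q_0 > 0$. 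Using the formula $\Psi'_{P_{\rm out}}(q) = \tfrac{1}{2(\rho-q)}\,\E\langle w\rangle_{\rm sc}^2$, the vanishing of $\Psi'$ is equivalent to $\langle w\rangle_{\rm sc} = \E[W^*\,|\,\widetilde Y_0,V]=0$ almost surely for every $q\in[0,q_0]$, so it suffices to show that this forces $P_{\rm out}$ to be non-informative.

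Writing out the posterior mean via Bayes' rule and performing the change of variable $z=\sqrt{q}\,v+\sqrt{\rho-q}\,w$, the vanishing of $\E[W^*\,|\,\widetilde Y_0,V]$ becomes
\begin{align*}
\int (z - \sqrt{q}\,v)\,\phi_{\sqrt{q}v,\,\rho-q}(z)\,P_{\rm out}(y|z)\,dz \;=\; 0
\end{align*}
for almost every $y$ in the support of the marginal of $\widetilde Y_0$, every $v\in\R$, and every $q\in(0,q_0]$, where $\phi_{m,\sigma^2}$ denotes the Gaussian density with mean $m$ and variance $\sigma^2$. Reparametrising by $m=\sqrt{q}\,v$ and $\sigma^2=\rho-q$, as $v$ ranges over $\R$ and $q$ over $(0,q_0]$ the pair $(m,\sigma^2)$ sweeps out $\R\times[\rho-q_0,\rho)$. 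The identity above is then exactly $\partial_m f_y(m,\sigma^2)=0$, where $f_y(m,\sigma^2)\defeq (P_{\rm out}(y|\cdot)*\phi_{0,\sigma^2})(m)$, so for almost every such $y$ and every $\sigma^2\in[\rho-q_0,\rho)$ the Gaussian-smoothed density $f_y(\cdot,\sigma^2)$ is constant on $\R$.

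The key step is then a Fourier-analytic deconvolution. Since $P_{\rm out}(y|\cdot)$ is bounded (under either \ref{item:psi_gauss} or \ref{item:psi_discrete}), it defines a tempered distribution on $\R$, and the constancy of $f_y(\cdot,\sigma^2)$ translates in Fourier to $\widehat{P_{\rm out}(y|\cdot)}(\xi)\,e^{-\sigma^2\xi^2/2}$ being a multiple of $\delta_0$. Because $e^{-\sigma^2\xi^2/2}$ is smooth and nowhere vanishing, this forces $\widehat{P_{\rm out}(y|\cdot)}$ itself to be supported at $\{0\}$, and the a priori $L^\infty$ bound on $P_{\rm out}(y|\cdot)$ rules out derivatives of $\delta_0$, so $\widehat{P_{\rm out}(y|\cdot)}$ is a scalar multiple of $\delta_0$. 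Hence $P_{\rm out}(y|\cdot)$ is a.e.\ equal to a constant $c(y)$ in $z$, for a.e.\ $y$. Under \ref{item:psi_gauss}, continuity of $y \mapsto P_{\rm out}(y|z)$ (from the Gaussian convolution representation) upgrades ``for a.e.\ $y$'' to ``for every $y$''; under \ref{item:psi_discrete}, the conclusion holds directly for every $y\in\N$ with positive marginal probability, while the remaining $y$ satisfy $P_{\rm out}(y|\cdot)=0$ a.e. Either way $P_{\rm out}$ is not informative, contradicting the hypothesis.

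The main technical obstacle will be the Fourier deconvolution and the ensuing distribution-theoretic bookkeeping (multiplication of a tempered distribution by a smooth nowhere-zero function, identification of distributions supported at $\{0\}$), together with the ``a.e.\ $y$ to every $y$'' upgrade required to connect with the paper's definition of informative. The fact that $\Psi'_{P_{\rm out}}$ vanishes on a non-degenerate interval (rather than merely at an isolated point) is essential: for the sign-less channel $\varphi(z)=|z|$, posterior symmetry gives $\E[W^*\,|\,\widetilde Y_0]=0$ at $q=0$, so $\Psi'_{P_{\rm out}}(0)=0$ even though the channel is informative, and any argument relying on a single pointwise vanishing must fail. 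In contrast, the interval-vanishing provides a family of Gaussian smoothings with variances in a non-degenerate range, which is precisely what allows the convolution to be inverted.
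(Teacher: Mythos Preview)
Your proof is correct and follows the same overall strategy as the paper: assume $\Psi'_{P_{\rm out}}(q)=0$ for some $q\in(0,\rho)$, rewrite this as the Gaussian convolution $m\mapsto\int P_{\rm out}(y|z)\,\phi_{m,\rho-q}(z)\,dz$ being constant in $m$, and conclude that $P_{\rm out}(y|\cdot)$ is a.e.\ constant for (almost) every $y$. The only substantive difference is the deconvolution tool. The paper packages this step as Lemma~\ref{lem:constant_function} (if $\E[Zf(v+Z)]=0$ for a.e.\ $v$ then $f$ is a.e.\ constant), proved by differentiating $t\mapsto\E[f(Z-t)]$ arbitrarily often, generating all Hermite polynomials, and invoking their completeness. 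Your Fourier/tempered-distribution argument is an equally valid alternative; the $L^\infty$ bound on $P_{\rm out}(y|\cdot)$ that rules out derivatives of $\delta_0$ plays the same role as Hermite completeness.

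One point in your write-up should be corrected, though it does not affect correctness: the interval of $q$'s is \emph{not} essential, and your closing paragraph misdiagnoses the obstruction at $q=0$. Your own Fourier step uses only a single $\sigma^2$ (any $\sigma^2>0$ gives a nowhere-vanishing multiplier $e^{-\sigma^2\xi^2/2}$, so one smoothing suffices to localise the support at $\{0\}$), and indeed the paper works with a single $q\in(0,\rho)$. The sign-less channel at $q=0$ is special not because a single point is too little, but because at $q=0$ the variable $V$ drops out of the channel entirely: the vanishing posterior mean then only yields $\partial_m f_y(0,\rho)=0$, i.e.\ information at the single location $m=0$. For any $q>0$, varying $V$ over $\R$ already sweeps $m=\sqrt{q}\,v$ over all of $\R$ and gives $\partial_m f_y(\cdot,\rho-q)\equiv 0$; no further values of $q$ are needed. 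You can therefore simplify by fixing one $q\in(0,\rho)$ and dropping the range of $\sigma^2$.
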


\begin{proof}
	Let us suppose that $\Psi_{P_{\rm out}}$ is not strictly increasing on $[0,\rho]$.
	There exists thus $q \in (0, \rho)$ such that $\Psi'_{P_{\rm out}}(q) = 0$. This means that $\langle w \rangle_{\rm sc} = 0$ almost surely and therefore that 
	$$
	\int_{\R} P_{\rm out}(Y^{(q)} \, | \, \sqrt{q}\, V + \sqrt{\rho - q} \,w) w e^{-w^2 / 2} dw = 0
	$$
	almost-surely. Let us write $\sigma = \sqrt{\rho-q}$. Consequently, 
	\begin{equation}\label{eq:pout_trivial}
		\int_{\R} P_{\rm out}(y \, | \, v + \sigma w) w e^{-w^2 / 2} dw = 0
	\end{equation}
	for almost all $y$ in $\R$ (if we are under assumption~\ref{item:psi_gauss}) or all $y \in \N$ (under assumption~\ref{item:psi_discrete}) and almost all $v \in \R$.
	We will now use the following lemma:
	\begin{lemma}\label{lem:constant_function}
		Let $Z \sim \cN(0,1)$ and let $f:\R \to \R$ be a bounded function. Suppose that for almost all $v \in \R$,
		$$
		\E[Z f(v+Z)] = 0 \,.
		$$
		Then, there exists a constant $C \in \R$ such that $f(v) = C$ for almost every $v$.
	\end{lemma}
	\begin{proof}
	Let us define the function
	$$
	h : t \mapsto \E [f(Z-t)] = \frac{1}{\sqrt{2\pi}}\int  f(x) e^{-(x+t)^2/2} dx \,.
	$$
	We have $h'(t) = \frac{-1}{\sqrt{2\pi}}\int  f(x) (x+t) e^{-(x+t)^2/2} dx = -\E [Zf(Z-t)] = 0$ and therefore $h$ is equal to some constant $C \in \R$. We are going to show that $f=C$ almost everywhere. Without loss of generality we can assume that $C = 0$, otherwise it suffices to consider the function $\tilde{f} = f-C$. Now we have for all $n \geq 0$, $t \in \R$
	$$
	0 = h^{(n)}(t) = \frac{1}{\sqrt{2\pi}}\int  f(x) \frac{\partial}{\partial t} e^{-(x+t)^2/2} dx 
	= \frac{1}{\sqrt{2\pi}}\int  f(x) (-1)^n H_n(x+t) e^{-(x+t)^2/2} dx  \,,
	$$
	where $H_n$ is $n^{\rm th}$ Hermite polynomial, defined as $H_n(x) = (-1)^n e^{x^2/2} \frac{d^n}{dx^n} e^{-x^2/2}$. Therefore, for all $n \geq 0$,
	$$
	\int  f(x) H_n(x) e^{-x^2/2} dx  = 0\,,
	$$
	which implies that $f=0$ almost everywhere since the Hermite functions form an orthonormal basis of $L^2(\R)$.
	\end{proof}
	We apply now Lemma~\ref{lem:constant_function} to \eqref{eq:pout_trivial} where the function $f$ is given by $f(x) = P_{\rm out}(y \, | \, \sigma x)$. We thus obtain that for almost every $y$, $P_{\rm out}( y \, | \, \cdot )$ is almost everywhere equal to a constant. 
	Under assumption~\ref{item:psi_discrete}, we get that for all $y \in \N$, $P_{\rm out}( y \, | \, \cdot )$ is almost everywhere equal to a constant: this contradicts the hypothesis that $P_{\rm out}$ is informative.

	If now assumption~\ref{item:psi_gauss} holds, then by \eqref{transition-kernel} the density function $P_{\rm out}(\cdot \, | \, x)$ is continuous on $\R$ for all $x \in \R$. 
	Let us fix $y \in \R$. We are going to show that $P_{\rm out}(y \, | \, \cdot)$ is almost everywhere equal to a constant $C_y$. Given what we just showed, we can construct a sequence
	$(y_n)_n \in \R^{\N}$ that converges to $y$ such that for all $n \geq 0$, there exists $E_n \subset \R$ with full Lebesgue's measure and $C_{n} \in \R$ such that for all $x \in E_n$,
	$$
	P_{\rm out}(y_n|x) = C_n \,.
	$$
	Let us define $E = \cap_{n \geq 0} E_n$. $E$ has therefore full Lebesgue's measure. Let now $x_1, x_2 \in E$. By continuity of $P_{\rm out}( \cdot | x_i )$, we get
	$$
	P_{\rm out}(y_n|x_i) \xrightarrow[n \to \infty]{} P_{\rm out}(y|x_i) , \qquad \text{for} \ i=1,2.
	$$
	Since we know that for all $n \geq0$ that $P_{\rm out}(y_n|x_1) = C_n = P_{\rm out}(y_n|x_2)$, we deduce that $P_{\rm out}(y|x_1) = P_{\rm out}(y|x_2)$. This proves that $P_{\rm out}(y \, | \, \cdot)$ is almost everywhere equal to a constant $C_y$ and contradicts the fact that $P_{\rm out}$ is informative.
\end{proof}

We turn now our attention to the study of the function:
\begin{equation}\label{eq:fun_gen}
	\mathcal{E}_{f}: 
	\left|
	\begin{array}{ccl}
		[0,\rho] & \to & \R_+ \\
		q & \mapsto &\E \big[(f(Y^{(q)}) - \E[f(Y^{(q)})|V])^2\big]
	\end{array}
	\right.
\end{equation}
where $f: \R \to \R$ is a continuous bounded function.
	We will prove that $\mathcal{E}_f$ is continuous (Proposition~\ref{prop:gen_continuous}) and strictly decreasing (Proposition~\ref{prop:gen_strict}) under the following hypotheses.
\begin{enumerate}[label=(\alph*),noitemsep]
	\item For all $x\in\R$, $P_{\rm out}(\cdot \, | \, x)$ is the law of
		 $\varphi(x,\bA) + \sqrt{\Delta}\, Z$
		 where $\varphi:\R \times \R^{k_A} \to \R$ is a measurable function and $(Z,\bA) \sim \cN(0,1) \otimes P_A$, for some probability distribution $P_A$ over $\R^{k_A}$. 
	 \item For almost all $a \in \R^{k_A}$ (w.r.t.\ $P_A$), $\varphi(\cdot, a)$ is continuous almost everywhere.
\end{enumerate}
	We suppose also that we are in one of the following cases:
\begin{enumerate}[label=(\roman*),noitemsep]
	\item \label{item:psi_gauss} $\Delta > 0$.
	\item \label{item:psi_discrete} $\Delta = 0$ and $\varphi$ takes values in $\N$.
\end{enumerate}

\begin{proposition}\label{prop:gen_continuous}
	Under the hypotheses presented above, $\mathcal{E}_f$ is continuous on $[0,\rho]$.
\end{proposition}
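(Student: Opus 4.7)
The plan is to prove continuity of $\mathcal{E}_f$ by a coupling argument together with dominated convergence, exploiting that $f$ is bounded continuous and that $\sqrt{q}V+\sqrt{\rho-q}W^*\sim\cN(0,\rho)$ for every $q\in[0,\rho]$, so its law is absolutely continuous and thus charges no Lebesgue-null set of discontinuities of $\varphi(\cdot,\bA)$.

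First I would fix one underlying probability space carrying $V,W^*\iid\cN(0,1)$, $\bA\sim P_A$ and $Z\sim\cN(0,1)$ all independent, and for every $q\in[0,\rho]$ define pathwise
\begin{equation*}
	X^{(q)} \defeq \sqrt{q}\,V+\sqrt{\rho-q}\,W^*,\qquad Y^{(q)} \defeq \varphi(X^{(q)},\bA)+\sqrt{\Delta}\,Z\,.
\end{equation*}
This gives a single coupling of the laws $P_{\rm out}(\cdot\,|\,\sqrt{q}V+\sqrt{\rho-q}W^*)$ in which $\mathcal{E}_f(q)=\E[f(Y^{(q)})^2]-\E[g_q(V)^2]$, with $g_q(v)\defeq\E[f(Y^{(q)})\mid V=v]$.

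Next, for any sequence $q_n\to q$ in $[0,\rho]$, clearly $X^{(q_n)}\to X^{(q)}$ almost surely. Since $X^{(q)}\sim\cN(0,\rho)$ and hence has a density, and since by hypothesis (b) the set $D_{\bA}$ of discontinuities of $\varphi(\cdot,\bA)$ has Lebesgue measure zero for $P_A$-a.e.\ $\bA$, Fubini gives $\P(X^{(q)}\in D_{\bA})=0$. Hence almost surely $\varphi(\cdot,\bA)$ is continuous at the limit point $X^{(q)}$, so $\varphi(X^{(q_n)},\bA)\to\varphi(X^{(q)},\bA)$ and therefore $Y^{(q_n)}\to Y^{(q)}$ almost surely (trivially under \ref{item:psi_gauss}; under \ref{item:psi_discrete} the $\N$-valued sequence is eventually constant). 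By continuity of $f$ and the fact that $f$ is bounded, the dominated convergence theorem yields $\E[f(Y^{(q_n)})^2]\to\E[f(Y^{(q)})^2]$.

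For the conditional-expectation term, I would argue that $g_{q_n}(V)\to g_q(V)$ almost surely and then apply bounded convergence again. To see the a.s.\ convergence, condition on $V=v$: the same reasoning as above applies to the pathwise limit $\varphi(\sqrt{q_n}v+\sqrt{\rho-q_n}W^*,\bA)+\sqrt{\Delta}Z \to \varphi(\sqrt{q}v+\sqrt{\rho-q}W^*,\bA)+\sqrt{\Delta}Z$ for $P_A\otimes\cN(0,1)\otimes\cN(0,1)$-a.e.\ $(\bA,W^*,Z)$, provided $\sqrt{\rho-q}>0$ so the law of $\sqrt{q}v+\sqrt{\rho-q}W^*$ is absolutely continuous. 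This gives $g_{q_n}(v)\to g_q(v)$ for Lebesgue-a.e.\ $v$ (hence for $V$-a.s.) by dominated convergence inside the conditional expectation. The boundary case $q=\rho$ is handled separately: then $\sqrt{q}v+\sqrt{\rho-q}W^*=\sqrt{\rho}v$ and one uses instead that for a.e.\ $v$ the point $\sqrt{\rho}v$ is a continuity point of $\varphi(\cdot,\bA)$ (again by Fubini applied to $V$). Squaring preserves boundedness, so a final application of dominated convergence gives $\E[g_{q_n}(V)^2]\to\E[g_q(V)^2]$, and combining the two limits yields continuity of $\mathcal{E}_f$.

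The only real subtlety is the joint treatment of the discontinuity set of $\varphi$ and the Gaussian coupling, i.e.\ making sure that the almost-sure continuity point argument survives at the endpoints $q=0$ and $q=\rho$ where the convex combination $\sqrt{q}V+\sqrt{\rho-q}W^*$ degenerates in one of its two arguments; the observation that the total law remains $\cN(0,\rho)$ (absolutely continuous) for all $q\in[0,\rho]$ is what makes the argument uniform across the interval.
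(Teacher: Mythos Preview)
Your argument is correct and follows essentially the same route as the paper: write $\mathcal{E}_f$ via the explicit representation \eqref{eq:def_e_f_varphi} and apply dominated convergence, using hypothesis (b) so that the a.e.\ pathwise limit $X^{(q_n)}\to X^{(q)}$ survives composition with $\varphi(\cdot,\bA)$. The only difference is that the paper is much terser and observes immediately that the first term $\E[f(Y^{(q)})^2]$ is actually \emph{constant} in $q$ (since the law of $X^{(q)}$ is $\cN(0,\rho)$ for every $q$), so your a.s.-convergence argument for that piece, while correct, is unnecessary.
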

\begin{proof}
	Consider expression \eqref{eq:def_e_f_varphi}: The first term does not depend on $q$ and the second one is continuous by Lebesgue's convergence theorem.
\end{proof}

\begin{proposition}\label{prop:gen_strict}
	Assume that the hypotheses of Proposition~\ref{prop:gen_continuous} hold.
	Suppose that $x \mapsto \int f(y) P_{\rm out}(y \, | \, x) dy$ is not almost-everywhere equal to a constant.
	Then $\mathcal{E}_f$ is strictly decreasing on $[0,\rho]$.
\end{proposition}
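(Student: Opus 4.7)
The plan is to recast the claim into strict monotonicity of an MMSE along a Gaussian-channel family. Define $g(x)\defeq\int f(y) P_{\rm out}(y|x)\,dy$, which is bounded (since $f$ is) and, by hypothesis, not almost everywhere constant. Since $Y^{(q)}$ and $V$ are conditionally independent given $X^{(q)}\defeq \sqrt{q}\,V+\sqrt{\rho-q}\,W^*$, a standard variance decomposition yields
\begin{equation}
\mathcal{E}_f(q) \;=\; \E\!\big[\Var\big(f(Y^{(q)})\,\big|\,X^{(q)}\big)\big] \;+\; \MMSE\!\big(g(X^{(q)})\,\big|\,V\big),
\end{equation}
and the first summand is independent of $q$ because $X^{(q)}\sim\cN(0,\rho)$ for every $q\in[0,\rho]$. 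Hence the strict decrease of $\mathcal{E}_f$ reduces to that of $q\mapsto\MMSE(g(X^{(q)})\,|\,V)$.

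Next, write $Z\defeq X^{(q)}/\sqrt{\rho}\sim\cN(0,1)$, so that jointly $V=\sqrt{q/\rho}\,Z+\sqrt{1-q/\rho}\,\xi$ with $\xi\sim\cN(0,1)$ independent of $Z$. For $q\in(0,\rho)$, multiplication by the positive constant $\sqrt{\rho/(\rho-q)}$ shows that $V$ generates the same $\sigma$-algebra as $\sqrt{\gamma(q)}\,Z+\xi$, where $\gamma(q)\defeq q/(\rho-q)$ is strictly increasing from $0$ to $+\infty$ on $(0,\rho)$. Setting $\tilde g(z)\defeq g(\sqrt{\rho}\,z)$, this identifies
\begin{equation}
\MMSE\!\big(g(X^{(q)})\,\big|\,V\big) \;=\; M(\gamma(q)), \qquad M(\gamma)\defeq\MMSE\!\big(\tilde g(Z)\,\big|\,\sqrt{\gamma}\,Z+\xi\big).
\end{equation}
So it suffices to prove $M$ is strictly decreasing on $(0,+\infty)$; the continuity of $\mathcal{E}_f$ from Proposition~\ref{prop:gen_continuous} then extends the strict monotonicity to the closed interval $[0,\rho]$.

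For $0<\gamma_1<\gamma_2$, I would perform the standard Gaussian noise-addition coupling: let $U_2\defeq\sqrt{\gamma_2}\,Z+N_2$ with $N_2\sim\cN(0,1)$ independent of $Z$, and set $U_1\defeq\sqrt{\gamma_1/\gamma_2}\,U_2+\sqrt{1-\gamma_1/\gamma_2}\,N'$ with $N'\sim\cN(0,1)$ independent of $(Z,N_2)$. Then $U_1\stackrel{d}{=}\sqrt{\gamma_1}\,Z+N_1$ and $Z\to U_2\to U_1$ is a Markov chain, so $\E[\tilde g(Z)|U_1]=\E\!\big[\E[\tilde g(Z)|U_2]\,\big|\,U_1\big]$, and the Pythagorean identity yields
\begin{equation}
M(\gamma_1) - M(\gamma_2) \;=\; \E\!\Big[\big(\E[\tilde g(Z)|U_2]-\E[\tilde g(Z)|U_1]\big)^2\Big] \;\geq\; 0.
\end{equation}

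The main obstacle, and the only nontrivial step, is ruling out equality above. If $M(\gamma_1)=M(\gamma_2)$, then $h(U_2)\defeq\E[\tilde g(Z)|U_2]$ would be $U_1$-measurable almost surely; but for each fixed $u_1$ the conditional law of $U_2$ given $U_1=u_1$ admits a density strictly positive on all of $\R$ (a Gaussian likelihood against a full-support base law), which forces $h$ to be Lebesgue-a.e. constant. Finally, because the posterior of $Z$ given $U_2$ is Gaussian centered affinely in $U_2$, the function $h$ equals (after the explicit affine change of variable) the convolution of $\tilde g$ with a non-degenerate Gaussian kernel; since that heat kernel has nowhere-vanishing Fourier transform, it acts injectively on bounded measurable functions (by the same deconvolution principle as in Lemma~\ref{lem:constant_function}), and $h$ being a.e. constant would force $\tilde g$, hence $g$, to be Lebesgue-a.e. constant, contradicting the hypothesis. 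This yields $M(\gamma_1)>M(\gamma_2)$ and completes the plan.
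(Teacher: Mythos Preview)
Your proof is correct and takes a genuinely different route from the paper's. The paper proceeds by direct differentiation: writing $\mathcal{E}_f(q)=\E[f(Y^{(q)})^2]-H(q)$ with $H(q)=\E\big[\E[f(Y^{(q)})|V]^2\big]$, it computes $H'(q)$ explicitly via Gaussian integration by parts to obtain the closed form
\[
H'(q)=\frac{1}{\rho-q}\,\E\Big[\E\big[f(Y^{(q)})W^*\,\big|\,V\big]^2\Big]\ge 0,
\]
and then argues that $H'(q)=0$ at some interior point forces $\E[W\!\int f(y)P_{\rm out}(y|\sqrt{q}\,v+\sqrt{\rho-q}\,W)\,dy]=0$ for a.e.\ $v$, whence Lemma~\ref{lem:constant_function} yields that $g$ is a.e.\ constant.

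Your approach is more structural: after the variance decomposition isolating $\MMSE(g(X^{(q)})\,|\,V)$, you recognize this as an MMSE along the scalar Gaussian channel with SNR $\gamma(q)=q/(\rho-q)$ and establish strict monotonicity by a data-processing/coupling argument, with the equality case handled by the full-support posterior plus the heat-kernel injectivity underlying Lemma~\ref{lem:constant_function}. Both proofs ultimately invoke the same deconvolution principle; what your argument buys is a calculus-free, information-theoretic picture (and a direct link to standard MMSE monotonicity for AWGN channels), while the paper's computation yields an explicit formula for the derivative that may be useful on its own. Your appeal to continuity to close the endpoints is fine, since a continuous function that is strictly decreasing on $(0,\rho)$ is strictly decreasing on $[0,\rho]$; alternatively, your coupling argument extends verbatim to $\gamma_1=0$, and the case $q\nearrow\rho$ reduces to showing $M(\gamma_1)>0$ for finite $\gamma_1$, which follows by the same full-support reasoning.
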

\begin{proof}
	$\mathcal{E}_f(q)= \E [f(Y^{(q)})^2] - \E \big[ \E[f(Y^{(q)})|V]^2 \big]$. Since the first term does not depend on $q$, it suffices to show that $H :q \mapsto \E \big[ \E[f(Y^{(q)})|V]^2 \big]$ is strictly increasing on $[0,\rho]$.
We have for $q \in (0,\rho)$:
\begin{align*}
\E[f(Y^{(q)})|V] 
&= \int \int f(y) \frac{e^{-w^2/2}}{\sqrt{2 \pi}} P_{\rm out}(y|\sqrt{q}\, V + \sqrt{\rho - q}\,w) dy dw
= \int \int f(y) \frac{e^{-\frac{(x-\sqrt{q}V)^2}{2(\rho-q)}}}{\sqrt{2 \pi (\rho -q)}} P_{\rm out}(y|x) dy dx
\,.
\end{align*}
So we have, using \eqref{eq:diff_gauss}:
\begin{align*}
	\frac{\partial}{\partial q}
\E[f(Y^{(q)})|V] 
&=
\int \int \frac{f(y)}{2}\Big(\frac{1}{\rho -q} - \frac{(x-\sqrt{q}\,V)^2}{(\rho-q)^2} + \frac{V(x-\sqrt{q}\,V)}{\sqrt{q}(\rho-q)}\Big) \frac{e^{-\frac{(x-\sqrt{q}\,V)^2}{2(\rho-q)}}}{\sqrt{2 \pi (\rho -q)}} P_{\rm out}(y|x) dy dw
\\
&=
\frac{1}{2(\rho-q)}\E \left[ f(Y^{(q)}) \Big(1 - W^{* 2} + \frac{\sqrt{\rho-q}\,V W^*}{\sqrt{q}}\Big) \middle| V \right] \,.
\end{align*}
We obtain
\begin{equation}
H'(q) =
\frac{1}{\rho-q}\E \left[ 
	\E[f(Y^{(q)})|V] \,
	\E \left[ f(Y^{(q)}) \Big(1 - W^{* 2} + \frac{\sqrt{\rho-q}\,V W^*}{\sqrt{q}}\Big) \middle| V\right] 
\right] \,.
\label{eq:der_H}
\end{equation}
We compute by Gaussian integration by parts:
\begin{align}
	&\E \left[ 
	\E[f(Y^{(q)})|V] \,
	\E \left[ f(Y^{(q)}) V W^* \Big| V \right] 
\right]
=
\E \left[ V
	\E[f(Y^{(q)})|V] \,
	\E \left[ f(Y^{(q)}) W^* \Big| V \right] 
\right] \nonumber
\\
&=
\E \left[ 
	\frac{\partial}{\partial V} \E[f(Y^{(q)})|V] \,
	\E \left[ f(Y^{(q)}) W^* \Big| V \right] 
\right]
+
\E \left[ 
	 \E[f(Y^{(q)})|V] \,
	\frac{\partial}{\partial V}\E \left[ f(Y^{(q)}) W^* \Big| V \right] 
\right] \,. \label{eq:der_0}
\end{align}
We compute successively
\begin{align}
	\frac{\partial}{\partial V} \E[f(Y^{(q)})|V]
	&=\frac{\partial}{\partial V}
	\int \int f(y) \frac{e^{-\frac{(x-\sqrt{q}\,V)^2}{2(\rho-q)}}}{\sqrt{2 \pi (\rho -q)}} P_{\rm out}(y|x) dy dx \nonumber
	\\
	&=
	\int \int f(y) \frac{\sqrt{q}\,(x-\sqrt{q}\,V)}{\rho-q}
	\frac{e^{-\frac{(x-\sqrt{q}\,V)^2}{2(\rho-q)}}}{\sqrt{2 \pi (\rho -q)}} P_{\rm out}(y|x) dy dx
	= \frac{\sqrt{q}}{\sqrt{\rho-q}} \E \Big[ f(Y^{(q)}) W^* \Big|V \Big] \,.\label{eq:der_1}
	\\
	\frac{\partial}{\partial V} \E[f(Y^{(q)}) W^* |V]
	&=\frac{\partial}{\partial V}
	\int \int f(y) \frac{x - \sqrt{q}\,V}{\sqrt{\rho-q}} \frac{e^{-\frac{(x-\sqrt{q}\,V)^2}{2(\rho-q)}}}{\sqrt{2 \pi (\rho -q)}} P_{\rm out}(y|x) dy dx \nonumber
	\\
	&=
	\int \int 
	f(y) \Big(\frac{-\sqrt{q}}{\sqrt{\rho-q}} 
		+
		\frac{\sqrt{q}(x - \sqrt{q}\,V)^2}{(\rho-q)^{3/2}}
	\Big)
	\frac{e^{-\frac{(x-\sqrt{q}\,V)^2}{2(\rho-q)}}}{\sqrt{2 \pi (\rho -q)}} P_{\rm out}(y|x) dy dx \nonumber
	\\
	&=
	\frac{\sqrt{q}}{\sqrt{\rho-q}} \E \Big[f(Y^{(q)}) \big(-1 + W^{*2}\big)\Big|V\Big]\,. \label{eq:der_2}
\end{align}
By plugging \eqref{eq:der_0}-\eqref{eq:der_1}-\eqref{eq:der_2} back in \eqref{eq:der_H} we get:
$$
H'(q) = \frac{1}{\rho-q} \E\Big[\E \big[f(Y^{(q)})W^* \big| V\big]^2 \Big] \geq 0 \,.
$$
Let us suppose now that $H$ is not strictly increasing on $[0,\rho]$. This means that we can find $q \in (0,\rho)$ such that $H'(q)=0$ and therefore $\E[f(Y^{(q)})W^* | V] = 0$ almost-surely. This gives that for almost all $v \in \R$,
$$
\E \Big[ W \int f(y) P_{\rm out}(y|\sqrt{q}\, v  +\sqrt{\rho-q}\,W) dy \Big] = 0\,,
$$
where $\E$ is the expectation with respect to $W \sim \cN(0,1)$. Lemma~\ref{lem:constant_function} gives then that the function $x \mapsto \int f(y) P_{\rm out}(y|x) dy$ is almost everywhere equal to a constant: we obtain a contradiction. We conclude that $H$ is strictly increasing on $[0,\rho]$ and thus $\mathcal{E}_f$ is strictly decreasing on $[0,\rho]$.
\end{proof}

\begin{proposition}\label{prop:exists_gen_dec}
	Assume that the hypotheses of Proposition~\ref{prop:gen_continuous} hold.
	If the channel $P_{\rm out}$ is informative, then there exists a continuous bounded function $f: \R \to \R$ such that $x \mapsto \int f(y) P_{\rm out}(y|x)dy$ is not almost everywhere equal to a constant.
\end{proposition}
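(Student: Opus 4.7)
The plan is to argue by contradiction: assume that for every continuous bounded $f:\R\to\R$, the map $T_f(x) := \int f(y) P_{\rm out}(y\,|\,x)\,dy$ is equal almost everywhere to a constant $C_f$, and derive a contradiction with the informativeness of $P_{\rm out}$.

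The first step is to replace the uncountable family $C_b(\R)$ by a countable separating subfamily. I would fix a countable collection $\{f_n\}_{n\geq 1}\subset C_b(\R)$ that separates finite signed Borel measures on $\R$ --- for instance the tent functions $f_{q,k}(y)=\max(0,1-k|y-q|)$ indexed by $q\in\mathbb{Q}$, $k\in\N^*$, together with the constant $1$. For each $n$ the assumption furnishes a set $E_n\subset\R$ of full Lebesgue measure on which $T_{f_n}\equiv C_n$. Set $E := \bigcap_n E_n$, which still has full measure. For any $x_1,x_2\in E$ and every $n$ one has $\int f_n(y)\bigl[P_{\rm out}(y\,|\,x_1)-P_{\rm out}(y\,|\,x_2)\bigr]\,dy = 0$, so by the separating property the Borel probability measures $P_{\rm out}(\cdot\,|\,x_1)$ and $P_{\rm out}(\cdot\,|\,x_2)$ coincide.

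Next I would translate this measure equality into pointwise equality of densities, so as to contradict informativeness. Under \ref{hyp:delta_0} ($\Delta=0$, $\varphi$ takes values in $\N$) the kernel $P_{\rm out}(\cdot\,|\,x)$ is a probability measure on $\N$ with density $y\mapsto P_{\rm out}(y\,|\,x)$ with respect to counting measure, so equality of measures yields directly $P_{\rm out}(y\,|\,x_1)=P_{\rm out}(y\,|\,x_2)$ for every $y\in\N$ and every $x_1,x_2\in E$. Under \ref{hyp:delta_pos} the representation $P_{\rm out}(y\,|\,x)=\E_{\bA}[\phi_\Delta(y-\varphi(x,\bA))]$, with $\phi_\Delta$ the Gaussian density of variance $\Delta$, shows via dominated convergence that $y\mapsto P_{\rm out}(y\,|\,x)$ is continuous on $\R$ for every $x$; two continuous densities that agree Lebesgue-a.e.\ agree everywhere, giving the same conclusion. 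In either case, for every $y$ the function $x\mapsto P_{\rm out}(y\,|\,x)$ is constant on the full-measure set $E$, hence almost-everywhere constant. This contradicts the hypothesis that there exists $y$ for which $P_{\rm out}(y\,|\,\cdot)$ is not a.e.\ constant, so some continuous bounded $f$ with the desired property must exist.

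The only genuine obstacle is the first step: because the contradiction assumption concerns an uncountable family of test functions, one cannot simply intersect the corresponding null sets. The key technical move is therefore to exhibit a countable separating family and justify that coincidence on it forces equality of the Borel measures; this is standard (tent functions, or equivalently a countable uniformly dense subset of $C_0(\R)$ together with $1$, combined with the Riesz representation theorem). The translation back to pointwise equality of $P_{\rm out}(y\,|\,\cdot)$ is then essentially bookkeeping, handled separately in the Gaussian-noise and discrete cases.
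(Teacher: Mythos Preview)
Your argument is correct. Both proofs proceed by contradiction, but the routes to the key step --- showing that all the kernels $P_{\rm out}(\cdot\,|\,x)$ coincide on a full-measure set of $x$'s --- are genuinely different. You reduce to a countable separating family of test functions, intersect the associated full-measure sets, and then upgrade measure equality to pointwise density equality using continuity in $y$ (Gaussian case) or the counting-measure structure (discrete case). The paper instead introduces an auxiliary $X\sim\cN(0,1)$, observes that the assumption forces $\E[g(X)f(Y)]=\E[g(X)]\E[f(Y)]$ for all continuous bounded $f,g$, concludes that $X$ and $Y$ are independent, and then applies Fubini to the product densities to obtain that $P_{\rm out}(y\,|\,x)=\E[P_{\rm out}(y\,|\,X)]$ for almost every $(x,y)$; a final appeal to the arguments at the end of Proposition~\ref{prop:psi_stricly_monoton} handles the upgrade from ``a.e.\ $y$'' to ``every $y$''. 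Your approach is more explicitly measure-theoretic and yields ``every $y$'' directly without the Fubini step; the paper's independence argument is slicker in avoiding any explicit separating family but defers the $y$-upgrade to an earlier lemma.
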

\begin{proof}
	Let us suppose that for all continuous bounded function $f: \R \to \R$ we have
	$$
	\int f(y) P_{\rm out}(y|x)dy = C_f
	$$
	for almost all $x \in \R$, for some constant $C_f \in \R$. 
	Let $X \sim \N(0,1)$ and $Y \sim P_{\rm out}(\cdot | X)$.
	We have then $\E[f(Y)|X] = C_f = \E[f(Y)]$ almost surely.
	Let $g: \R \to \R$ be another continuous bounded function and compute:
	$$
	\E[g(X)f(Y)] = \E\big[g(X)\E[f(Y)|X]\big] = \E[g(X)] \E[f(Y)] \,.
	$$
	It follows that $X$ and $Y$ are independent: The measures $P_{\rm out}(y|x) \frac{e^{-x^2/2}}{\sqrt{2\pi}} dy dx$ and $\E [P_{\rm out}(y|X)] \frac{e^{-x^2/2}}{\sqrt{2\pi}} dy dx$ are therefore equal. Consequently, for almost every $x,y$ we have
	$$
	P_{\rm out}(y|x) = \E[P_{\rm out}(y|X)] \,.
	$$
	This gives that for almost every $y$, $P_{\rm out}(y|\cdot)$ is almost everywhere equal to a constant. We conclude by the arguments presented at the end of the proof of Proposition \ref{prop:psi_stricly_monoton} that $P_{\rm out}$ is not informative, which is a contradiction.
\end{proof}
	\section{Approximation}\label{Appendix-approx}

Let us recall the various hypotheses considered in this paper, starting with the stronger set:
\begin{enumerate}[label=(H\arabic*),noitemsep]
	\item The prior distribution $P_0$ has a bounded support.
	\item $\varphi$ is a bounded $\cC^2$ function with bounded first and second derivatives w.r.t.\ its first argument.
	\item $(\Phi_{\mu i}) \iid \cN(0,1)$.
\end{enumerate}
The aim of this section is to relax them to the weaker ones:
\begin{enumerate}[label=(h\arabic*),noitemsep]
	\item The prior distribution $P_0$ admits a finite third moment and has at least two points in its support.
	\item There exists $\gamma>0$ such that the sequence $(\E [ \vert\varphi(\frac{1}{\sqrt n}[\boldsymbol{\Phi}\bX^*]_1, \bA_1)\vert^{2+\gamma}])_{n \geq 1}$ is bounded.
	\item The random variables $(\Phi_{\mu i})$ are independent with zero mean, unit variance and finite third moment that is bounded with $n$. 
	\item 
		For almost-all values of $\ba \in \R^{k_A}$ (w.r.t.\ $P_A$), the function $x \mapsto \varphi(x,\ba)$ is continuous almost everywhere. 
\end{enumerate}
The hypotheses on the precence or not of the Gaussian noise in \eqref{measurements} are:
\begin{enumerate}[label=(h5.\alph*),noitemsep]
	\item $\Delta > 0$.
	\item $\Delta = 0$ and $\varphi$ takes values in $\N$.
\end{enumerate}

In this section, we suppose that Theorem~\ref{th:RS_1layer} holds for channels of the form \eqref{measurements} (with $\Delta >0$) under the hypotheses~\ref{hyp:bounded},~\ref{hyp:c2} and~\ref{hyp:phi_gauss2}, as proven in Section~\ref{sec:interpolation}.

We show in this section that this imply that Theorem~\ref{th:RS_1layer} holds under the weaker hypotheses~\ref{hyp:third_moment}-\ref{hyp:um}-\ref{hyp:phi_general}-\ref{hyp:cont_pp}, and either~\ref{hyp:delta_pos} or~\ref{hyp:delta_0}. 
This section is organized as follows: We first prove Theorem~\ref{th:RS_1layer} under~\ref{hyp:third_moment}-\ref{hyp:um}-\ref{hyp:phi_general}-\ref{hyp:cont_pp} and~\ref{hyp:delta_pos} (i.e.\ $\Delta >0$). This is done by first relaxing the hypotheses on $P_0$ and $\bbf{\Phi}$ (Sec.~\ref{sec:relax_p0}) and then the hypotheses on $\varphi$ (Sec.~\ref{sec:relax_varphi}).
Finally, in Sec.~\ref{sec:relax_discrete}, we let $\Delta \to 0$ in order to prove Theorem~\ref{th:RS_1layer} under~\ref{hyp:third_moment}-\ref{hyp:um}-\ref{hyp:phi_general}-\ref{hyp:cont_pp} and~\ref{hyp:delta_0}.

Note that the statement of Theorem~\ref{th:RS_1layer} is equivalent to the statement of Corollary~\ref{cor:mi}, which simply express the result in terms of mutual information. This formulation will be slightly more convenient to relax the hypotheses. We will therefore prove in this section that \eqref{eq:lim_i} holds under the hypotheses~\ref{hyp:third_moment}-\ref{hyp:um}-\ref{hyp:phi_general}-\ref{hyp:cont_pp}, and either~\ref{hyp:delta_pos} or~\ref{hyp:delta_0}. The statement of Theorem~\ref{th:RS_1layer} can then be directly obtained by using the expressions of $I_{P_0}, \, \mathcal{I}_{P_{\rm out}}$ in terms of $\psi_{P_0}, \, \Psi_{P_{\rm out}}$, the relation \eqref{eq:i_f} and  Lemma~\ref{lem:sup_inf_2}.


\subsection{Relaxing the hypotheses on $P_0$ and $\bbf{\Phi}$}\label{sec:relax_p0}

As explained at the beginning of Sec.~\ref{sec:interpolation}, it suffices to consider the case $\Delta=1$. We start by relaxing the hypothesis \ref{hyp:bounded}.

\begin{lemma}[Relaxing $P_0$]\label{lemma:relax_bounded}
	Suppose that~\ref{hyp:third_moment}-\ref{hyp:c2}-\ref{hyp:phi_gauss2} and~\ref{hyp:delta_pos} hold. Then Theorem~\ref{th:RS_1layer} holds.
\end{lemma}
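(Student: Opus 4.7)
The plan is to approximate $P_0$ by compactly supported priors and then pass to the limit. For $K>0$, truncate component-wise: set $X_i^{*,K}\defeq X_i^*\,\mathbf{1}(|X_i^*|\leq K)$, denote by $P_0^K$ the resulting law (which satisfies \ref{hyp:bounded}), and write $\rho_K\defeq \E[(X_0^K)^2]$. Define the truncated observations $Y_\mu^K\defeq \varphi([\bbf{\Phi}\bX^{*,K}]_\mu/\sqrt n,\bA_\mu)+\sqrt\Delta\, Z_\mu$, built from the \emph{same} matrix $\bbf{\Phi}$, noise $\bZ$ and random stream $\bA$ as $\bY$, and let $f_n^K$ denote their free entropy. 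By the already-established version of Theorem~\ref{th:RS_1layer} under \ref{hyp:bounded}-\ref{hyp:c2}-\ref{hyp:phi_gauss2}-\ref{hyp:delta_pos}, $\lim_{n}f_n^K=\sup_{q\in[0,\rho_K]}\inf_{r\geq 0}f_{\rm RS}(q,r;\rho_K)$.

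The core step is to compare $f_n$ and $f_n^K$ uniformly in $n$. Using identity \eqref{eq:i_f} from the proof of Corollary~\ref{cor:mi}, this reduces, up to an $o_n(1)$ error and the additive constant $\alpha\Psi_{P_{\rm out}}$, to controlling $\tfrac{1}{n}|I(\bX^*;\bY|\bbf{\Phi})-I(\bX^{*,K};\bY^K|\bbf{\Phi})|$. Writing $\bU_\mu\defeq\varphi([\bbf{\Phi}\bX^*]_\mu/\sqrt n,\bA_\mu)$ and similarly $\bU^K$, the chain rule for mutual information together with the fact that $\bY=\bU+\sqrt\Delta\,\bZ$ is $(\bU,\bZ)$-measurable gives $I(\bX^*;\bY|\bbf{\Phi})=I(\bU;\bU+\sqrt\Delta\bZ|\bbf{\Phi})$, and analogously for the truncated model. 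Applying Proposition~\ref{prop:free_wasserstein} conditionally on $\bbf{\Phi}$ and bounding $W_2$ by the natural coupling,
\begin{equation*}
\bigl|I(\bU;\bU+\sqrt\Delta\bZ|\bbf{\Phi})-I(\bU^K;\bU^K+\sqrt\Delta\bZ|\bbf{\Phi})\bigr|\leq \frac{1}{\sqrt\Delta}\bigl(\sqrt{\E\|\bU\|^2}+\sqrt{\E\|\bU^K\|^2}\bigr)\sqrt{\E\|\bU-\bU^K\|^2}.
\end{equation*}
Under \ref{hyp:c2}, $\varphi$ is bounded and Lipschitz in its first argument with bounded constant $L=\|\varphi'\|_\infty$, so $\E\|\bU\|^2\vee\E\|\bU^K\|^2\leq m\|\varphi\|_\infty^2=O(n)$ and $\E\|\bU-\bU^K\|^2\leq L^2\,\E\|\bbf{\Phi}(\bX^*-\bX^{*,K})\|^2/n=L^2 m\,\delta_K$, where $\delta_K\defeq\E[X_0^2\mathbf{1}(|X_0|>K)]\to 0$ by dominated convergence using \ref{hyp:third_moment}. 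Dividing by $n$ yields $|f_n-f_n^K|\leq C\sqrt{\delta_K/\Delta}+o_n(1)$, uniformly in $n$.

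The remaining ingredient is continuity of the replica-symmetric formula in the prior. Clearly $\rho_K\to\rho$. Applying Proposition~\ref{prop:free_wasserstein} to the scalar channel \eqref{eq:additive_scalar_channel} with the truncation coupling gives $|\psi_{P_0}(r)-\psi_{P_0^K}(r)|\leq 2r\sqrt{\rho\delta_K}+\tfrac{r}{2}|\rho-\rho_K|$, hence $\psi_{P_0^K}\to\psi_{P_0}$ uniformly on compacts. Joint continuity of $(q,\rho')\mapsto\Psi_{P_{\rm out}}(q;\rho')$ (established as in Proposition~\ref{prop:psi_convex_reg} by differentiation under the expectation, using $\Delta>0$) combined with $\rho_K\to\rho$ gives $\Psi_{P_{\rm out}}(\cdot;\rho_K)\to\Psi_{P_{\rm out}}(\cdot;\rho)$ uniformly in $q\in[0,\rho]$. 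A standard argument then yields $\sup_q\inf_r f_{\rm RS}(q,r;\rho_K)\to \sup_q\inf_r f_{\rm RS}(q,r;\rho)$, which combined with the uniform bound of the previous paragraph and the already-known limit for $f_n^K$ concludes the proof. The main obstacle is precisely the uniform-in-$n$ bound: the Wasserstein stability produces a factor of order $\sqrt{\E\|\bU\|^2\cdot \E\|\bU-\bU^K\|^2}=O(n\sqrt{\delta_K})$, and it is only the $1/n$ normalization in $f_n$ together with the Gaussian regularization $\sqrt\Delta\,\bZ$ (which is why \ref{hyp:delta_pos} is crucial at this stage) that make the final error rate $\sqrt{\delta_K/\Delta}$ independent of $n$.
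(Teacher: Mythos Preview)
Your overall strategy---truncate $P_0$ to a bounded prior $P_0^K$, apply the already proved case, and let $K\to\infty$ with a uniform-in-$n$ comparison via Proposition~\ref{prop:free_wasserstein}---is exactly the right one, and is the method the paper alludes to (its own proof is just a pointer to \cite{MiolaneXX}). The continuity of the replica potential in the prior is handled correctly, and your use of the Lipschitz property of $\varphi$ under \ref{hyp:c2} to get $\E\|\bU-\bU^K\|^2\le L^2 m\,\delta_K$ is clean.

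There is, however, one genuine gap. The identity $I(\bX^*;\bY\mid\bbf{\Phi})=I(\bU;\bU+\sqrt{\Delta}\bZ\mid\bbf{\Phi})$ is false in general: the Markov chain $\bX^*\to\bU\to\bY$ only yields $I(\bX^*;\bY\mid\bbf{\Phi})\le I(\bU;\bY\mid\bbf{\Phi})$. When the channel carries auxiliary randomness $\bA$, the correct decomposition (obtained by writing $I(\bX^*,\bA;\bY\mid\bbf{\Phi})$ two ways) is
\[
I(\bX^*;\bY\mid\bbf{\Phi})=I(\bU;\bU+\sqrt{\Delta}\bZ\mid\bbf{\Phi})-I(\bU;\bU+\sqrt{\Delta}\bZ\mid\bX^*,\bbf{\Phi}),
\]
and similarly for the truncated model. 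The extra term does not vanish, but its difference with the truncated counterpart is controlled by a second application of Proposition~\ref{prop:free_wasserstein}, now conditionally on $(\bX^*,\bbf{\Phi})$; since $\bX^{*,K}$ is $\bX^*$-measurable and $\bA\perp\bX^*$, one has $I(\bU^K;\bU^K+\sqrt{\Delta}\bZ\mid\bX^{*,K},\bbf{\Phi})=I(\bU^K;\bU^K+\sqrt{\Delta}\bZ\mid\bX^*,\bbf{\Phi})$, and the same Wasserstein/Lipschitz bound applies. This is precisely how the paper (implicitly) uses Proposition~\ref{prop:free_wasserstein} in Lemma~\ref{lem:approx_f}. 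A minor point: after rescaling to unit noise the prefactor is $\Delta^{-1}$, not $\Delta^{-1/2}$, but this does not affect the argument.
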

\begin{proof}
	The ideas are basically the same that in \cite{MiolaneXX} (Sec.~6.2.2). We omit the details here for the sake of brevity.
\end{proof}

We now relax the Gaussian assumption on the ``measurement matrix'' $\bbf{\Phi}$.

\begin{lemma}[Relaxing $\bbf{\Phi}$]\label{lemma:univ}
	Suppose that $\varphi: \R \times \R^{k_A} \to \R$ is $\cC^{\infty}$ with compact support and that~\ref{hyp:third_moment}-\ref{hyp:phi_general}-\ref{hyp:delta_pos} hold. Then Theorem~\ref{th:RS_1layer} holds.
\end{lemma}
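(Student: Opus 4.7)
The plan is to reduce the lemma to the already-proven Gaussian case by a Lindeberg-type universality argument, swapping the entries $\Phi_{\mu i}$ one at a time with independent standard Gaussians $g_{\mu i} \iid \cN(0,1)$ and controlling each swap by a third-order Taylor expansion. By Lemma~\ref{lemma:relax_bounded}, Theorem~\ref{th:RS_1layer} already holds under \ref{hyp:third_moment}, \ref{hyp:c2} and \ref{hyp:phi_gauss2} (i.e.\ for Gaussian $\bbf{\Phi}$), so it suffices to show that $|f_n^{(\bbf{\Phi})} - f_n^{(\bbf{g})}| \to 0$ as $n \to \infty$.

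First, conditionally on the quenched variables $(\bX^*,\bA,\bZ)$, view the log-partition function as a function of the matrix entries: for $\bbf{U} = (U_{\mu i}) \in \R^{m\times n}$, set
\begin{equation*}
	G(\bbf{U};\bX^*,\bA,\bZ) \defeq \ln \int dP_0(\bx) \prod_{\mu=1}^m P_{\rm out}\!\big(\varphi([\bbf{U}\bX^*]_\mu,\bA_\mu) + \sqrt{\Delta}Z_\mu \,\big|\, [\bbf{U}\bx]_\mu\big) \,,
\end{equation*}
so that $n f_n = \E\, G(\bbf{\Phi}/\sqrt n;\bX^*,\bA,\bZ)$ and similarly for the Gaussian version. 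Enumerating the $nm$ pairs $(\mu,i)$ and defining $f_n^{(k)}$ to be this expectation when the first $k$ entries are replaced by i.i.d.\ Gaussians, Theorem~2 of \cite{korada2011lindeberg} applied conditionally on everything else yields, for each swap,
\begin{equation*}
	|f_n^{(k+1)} - f_n^{(k)}| \le \frac{C}{n^{3/2}}\, \E\Big[\sup_{u \in \R}\big|\partial_u^3 G\big|\Big(\bbf{U}_u^{(\mu,i)}/\sqrt n\Big)\Big] \,,
\end{equation*}
where $\bbf{U}_u^{(\mu,i)}$ coincides with the currently-interpolated matrix except that entry $(\mu,i)$ is set to $u$. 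The factor $n^{-3/2}$ comes from the scaling $\Phi_{\mu i}/\sqrt{n}$ in both occurrences of the entry. Summing the $nm \le C n^2$ telescoping errors and dividing by $n$ gives $|f_n - f_n^{(\bbf g)}| \le C'/\sqrt n \cdot \sup_{n,\mu,i} \E[\sup_u |\partial_u^3 G|]$, which will be $O(n^{-1/2})$ once the last supremum is shown to be bounded.

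The main obstacle is this uniform third-derivative bound. Writing $S_\mu = [\bbf{U}\bX^*]_\mu$, $s_\mu(\bx) = [\bbf{U}\bx]_\mu$ and $u_y(x) = \ln P_{\rm out}(y|x)$, a direct computation gives
\begin{equation*}
	\partial_{U_{\mu i}} G = \sum_{\mu} \Big(X_i^*\,\partial_S\big[u_{Y_\mu}(s_\mu)\big]\big|_{Y_\mu = \varphi(S_\mu,\bA_\mu)+\sqrt\Delta Z_\mu} + \big\langle x_i\, u'_{Y_\mu}(s_\mu)\big\rangle\Big) \,,
\end{equation*}
and differentiating twice more produces a polynomial of total degree $3$ in $(X_i^*, x_i^{(1)}, x_i^{(2)}, x_i^{(3)})$ (where $x^{(a)}$ are i.i.d.\ posterior replicas from $\langle-\rangle$), each term multiplied by derivatives of $\varphi$ and $u_y$. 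Since $\varphi$ is $\cC^\infty$ with compact support and $\Delta>0$, the kernel $P_{\rm out}(y|x) = (2\pi\Delta)^{-1/2}\int dP_A(\ba) \exp\{-(y-\varphi(x,\ba))^2/(2\Delta)\}$ is smooth and strictly positive, so $u_y$ and all its $x$-derivatives are bounded uniformly in $y$ by polynomials of $y$, and $y = Y_\mu$ itself has bounded moments; together with the boundedness of $\varphi^{(k)}$, this makes the non-replica factors uniformly bounded. Taking $\E$ and applying Cauchy--Schwarz together with the Nishimori identity $\E\langle x_i^k\rangle = \E[(X_i^*)^k]$ then bounds each replica monomial by $\E|X_i^*|^3$, which is finite by \ref{hyp:third_moment}. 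The resulting bound is uniform in $n,\mu,i$ and in the remaining matrix entries, which closes the Lindeberg estimate and completes the proof.
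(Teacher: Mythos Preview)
Your overall strategy—Lindeberg swapping with a third-derivative bound controlled by the first three moments of $P_0$—is exactly the paper's approach, and the scaling argument ($n^{-3/2}$ per swap, $mn$ swaps, division by $n$) is correct. There is, however, a genuine gap in the execution.

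You apply Lindeberg \emph{conditionally} on $(\bX^*,\bA,\bZ)$ and then need $\E\big[\sup_{u}|\partial_u^3 G|\big]$ to be bounded. You argue this by invoking the Nishimori identity $\E\langle |x_i|^k\rangle = \E[|X_i^*|^k]$. But Nishimori is a pointwise-in-matrix identity: for each \emph{fixed} value of the entry $u$, the model is Bayes-optimal and $\E_{\bX^*,\bA,\bZ}\langle |x_i|^3\rangle_u = \E|X_i^*|^3$. It says nothing about $\E\big[\sup_u \langle|x_i|^3\rangle_u\big]$, which is what you need once the supremum sits inside the expectation. Under~\ref{hyp:third_moment} the prior need not have bounded support, so the posterior moment $\langle|x_i|^3\rangle_u$ is not uniformly controlled in the realization of $(\bX^*,\bA,\bZ)$ and $u$, and your bound does not close as written.

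The paper avoids this by reversing the order: it defines
\[
F(\bbf{U}) \;=\; \frac{1}{n}\,\E_{\bX^*,\bA,\bZ}\Big[\ln\!\int dP_0(\bx)\,dP_A(\ba)\,
\exp\!\Big\{-\tfrac{1}{2}\sum_\mu\big(\varphi(\tfrac{1}{\sqrt n}[\bbf{U}\bX^*]_\mu,\bA_\mu)-\varphi(\tfrac{1}{\sqrt n}[\bbf{U}\bx]_\mu,\ba_\mu)+Z_\mu\big)^2\Big\}\Big]
\]
and applies Lindeberg directly to $F$. Since the expectation over the planted variables is now \emph{inside}, $\partial_{U_{\mu i}}^3 F(\bbf{U})$ equals $\E_{\bX^*,\bA,\bZ}[\cdots]$ evaluated at a fixed matrix $\bbf{U}$; Nishimori then applies at that fixed $\bbf{U}$ and yields $\|\partial_{U_{\mu i}}^3 F\|_\infty \le C\,n^{-5/2}$ with $C$ depending only on $\varphi$ and $\E|X^*|^3$. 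The supremum over $\bbf{U}$ sits \emph{outside} the expectation, which is exactly what Nishimori can handle. Your argument becomes correct with this single change (or, equivalently, by using the integral-remainder form of Lindeberg together with Fubini and the independence of the swapped entry from $(\bX^*,\bA,\bZ)$ to factor the expectation before bounding).

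A minor point: your displayed formula for $\partial_{U_{\mu i}} G$ carries a spurious $\sum_\mu$; only the single row $\mu$ contributes.
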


\begin{proof}
	The proof is based on the Lindeberg generalization theorem (Theorem 2 from~\cite{korada2011lindeberg}) which is a variant of the generalized ``Lindeberg principle'' from \cite{chatterjee2006generalization}:
	\begin{thm}[Lindeberg generalization theorem] \label{th:lindeberg}
		Let $(U_i)_{1 \leq i \leq n}$ and $(V_i)_{1 \leq i \leq n}$ be two collections of random variables with independent components and $f: \R^n \to \R$ a $\mathcal{C}^3$ function. Denote $a_i = | \E U_i - \E V_i |$ and $b_i = | \E [U_i^2] - \E [V_i^2] |$. Then
		\vspace{-0.2cm}
		\begin{align*}
			|\E f(\bbf{U})& - \E f(\bbf{V}) |  \leq
			\sum_{i=1}^n \Big\{
				a_i \E | \partial_i f(U_{1:i-1},0,V_{i+1:n}) | + \frac{b_i}{2} \E | \partial^2_i f(U_{1:i-1}, 0,V_{i+1:n})| \\
				&+ \frac{1}{2} \E \int_{0}^{U_i} |\partial^3_i f(U_{1:i-1}, 0,V_{i+1:n})|(U_i - s)^2 ds 
				+ \frac{1}{2} \E \int_{0}^{V_i} |\partial^3_i f(U_{1:i-1}, 0,V_{i+1:n})|(V_i - s)^2 ds
			\Big\} \,.
		\end{align*}
	\end{thm}
	Let $(\Phi'_{\mu,i}) \iid \cN(0,1)$ and let $(\Phi_{\mu,i})$ be a family of independent random variables, with zero mean and unit variance.
	Let $f_n'$ be the free entropy \eqref{f} with design matrix $\bbf{\Phi'}$ and $f_n$ be the free entropy \eqref{f} with design matrix $\bbf{\Phi}$.

	We will apply Theorem~\ref{th:lindeberg} to the function
	$$
	F: \bbf{U} \in \R^{m \times n}
	\mapsto
	\frac{1}{n} \E\ln
	\int_{\bx,\ba} dP_A(\ba) dP_0(\bx) \, e^{
		-\frac{1}{2} \sum\limits_{\mu=1}^m 
		\left(\varphi\Big(\frac{1}{\sqrt{n}}[\bbf{U} \bX^*]_\mu , \bbf{A}_\mu \Big) - \varphi\Big(\frac{1}{\sqrt{n}}[\bbf{U} \bx]_\mu , \bbf{a}_\mu \Big) + Z_{\mu} \right)^2
	}
	$$
	where the expectation $\E$ is taken w.r.t.\ $\bX^*, \bA$ and $\bbf{Z}$.
	We have
	$$
	f_n = \E F(\bbf{\Phi}) \qquad {\rm and} \qquad f'_n = \E F(\bbf{\Phi'}) \,.
	$$
	It is not difficult to verify that $F$ is a $\cC^3$ function and that for all $1 \leq \mu \leq m$ and $1 \leq i \leq n$:
	$$
	\left\|\frac{\partial^3 F}{\partial U_{\mu,i}^3}\right\|_{\infty} \leq \frac{C}{n^{5/2}} \,,
	$$
	for some constant $C$ that only depends on $\varphi$ and the first three moments of $P_0$.
	Thus, an application of Theorem~\ref{th:lindeberg} gives $|f_n - f_n'| \leq \frac{C}{\sqrt{n}}$.
	By Proposition~\ref{lemma:relax_bounded}, we know that Theorem~\ref{th:RS_1layer} holds for $f_n'$, thus it holds for $f_n$.
\end{proof}

\subsection{Relaxing the hypotheses on $\varphi$} \label{sec:relax_varphi}

It remains to relax the hypotheses on $\varphi$. This section is dedicated to the proof of the following proposition, which is of course exactly the statement of Theorem~\ref{th:RS_1layer}.

\begin{proposition}[Relaxing $\varphi$]\label{prop:approx_phi}
	Suppose that~\ref{hyp:third_moment}-\ref{hyp:um}-\ref{hyp:phi_general}-\ref{hyp:cont_pp} and~\ref{hyp:delta_pos} hold.
	Then, Theorem~\ref{th:RS_1layer} holds for the output channel \eqref{measurements}.
\end{proposition}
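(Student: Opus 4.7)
The plan is to deduce Proposition~\ref{prop:approx_phi} from Lemma~\ref{lemma:univ} by $L^2$-approximating $\varphi$ (which satisfies only (h2) and (h4)) by a $\cC^\infty$ compactly supported function $\widetilde\varphi$, invoking the lemma for the smoothed channel, and then passing to the limit in the approximation using continuity of both $f_n$ and the replica-symmetric potential with respect to $\varphi$. Specifically, fix $\epsilon>0$; by Proposition~\ref{prop:varphi_l2}, hypothesis (h2) implies $\varphi(\sqrt\rho\, G,\bA) \in L^2(\cN(0,1) \otimes P_A)$, so by density of $\cC_c^\infty$ in $L^2$ one can find a smooth compactly supported $\widetilde\varphi:\R\times\R^{k_A}\to\R$ with $\E[(\varphi(\sqrt\rho\, G,\bA) - \widetilde\varphi(\sqrt\rho\, G,\bA))^2] \leq \epsilon^2$. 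Write $\widetilde f_n$ and $\widetilde f_{\rm RS}$ for the corresponding free entropy and potential; since (h1), (h3) and (h5.a) are unchanged by this modification, Lemma~\ref{lemma:univ} applied to $\widetilde\varphi$ yields $\widetilde f_n \to \sup_{q}\inf_{r}\widetilde f_{\rm RS}(q,r;\rho)$.

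The next step is to establish the uniform-in-$n$ stability $|f_n - \widetilde f_n| \leq C(\alpha,\Delta)\epsilon + o_n(1)$. Passing to the mutual information via~\eqref{eq:i_f}, set $\bU \defeq \varphi(\bbf\Phi\bX^*/\sqrt n,\bA)/\sqrt\Delta$ and $\widetilde\bU$ analogously. Using the chain-rule decomposition
$$I(\bX^*; \bU+\bZ \mid \bbf\Phi) \,=\, I(\bU; \bU+\bZ \mid \bbf\Phi) \,-\, I(\bU; \bU+\bZ \mid \bX^*, \bbf\Phi),$$
Proposition~\ref{prop:free_wasserstein} applied conditionally on $\bbf\Phi$ and on $(\bbf\Phi,\bX^*)$ to each of these two terms, together with the natural coupling bound $W_2(\mathcal L(\bU\mid\cdot),\mathcal L(\widetilde\bU\mid\cdot))^2 \leq \E[\|\bU-\widetilde\bU\|^2 \mid \cdot]$ and a Cauchy-Schwarz estimate, gives
$$\frac 1 n \big|I(\bX^*;\bY\mid\bbf\Phi) - I(\bX^*;\widetilde\bY\mid\bbf\Phi)\big| \,\leq\, C(\alpha,\Delta)\,\sqrt{\E[(\varphi-\widetilde\varphi)([\bbf\Phi\bX^*]_1/\sqrt n,\bA_1)^2]}.$$
By the CLT and (h4), $([\bbf\Phi\bX^*]_1/\sqrt n,\bA_1) \xrightarrow[n\to\infty]{(d)} (\sqrt\rho\, G,\bA_1)$ with $\varphi$ continuous along the limit; the uniform integrability of $\varphi^2$ supplied by (h2) (and the boundedness of $\widetilde\varphi$) then upgrades this convergence in distribution to convergence in $L^2$, so the right-hand side converges to $C(\alpha,\Delta)\epsilon$.

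For the replica-symmetric potential, only the term $\alpha\Psi_{P_{\rm out}}(q;\rho)$ depends on $\varphi$, and since $\sqrt q\,V + \sqrt{\rho-q}\,W^* \sim \cN(0,\rho)$ for every $q\in[0,\rho]$, a second application of Proposition~\ref{prop:free_wasserstein} to the scalar channel~\eqref{eq:Pout_scalar_channel} yields the uniform bound $\sup_{q\in[0,\rho]}|\Psi_{P_{\rm out}}(q;\rho) - \Psi_{\widetilde P_{\rm out}}(q;\rho)| \leq C(\Delta)\epsilon$, and hence $|\sup_q\inf_r f_{\rm RS}(q,r;\rho) - \sup_q\inf_r \widetilde f_{\rm RS}(q,r;\rho)| \leq \alpha\, C(\Delta)\epsilon$. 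Combining these three estimates with the conclusion of Lemma~\ref{lemma:univ} applied to $\widetilde\varphi$ gives $\limsup_{n\to\infty}|f_n - \sup_q\inf_r f_{\rm RS}(q,r;\rho)| \leq C'(\alpha,\Delta)\epsilon$, and sending $\epsilon\downarrow 0$ proves Proposition~\ref{prop:approx_phi}. The main obstacle is the quantitative stability step for $|f_n-\widetilde f_n|$: the conditional Wasserstein inequality must be promoted to an unconditional estimate via Cauchy-Schwarz and the coupling $\bU \leftrightarrow \widetilde\bU$, and more importantly the $L^2$-closeness of $\varphi$ and $\widetilde\varphi$ under the limiting Gaussian law $\cN(0,\rho)$ must be transferred to the finite-$n$ law of $[\bbf\Phi\bX^*]_1/\sqrt n$; this is precisely where the uniform integrability afforded by hypothesis (h2) with $\gamma > 0$ enters in a crucial way.
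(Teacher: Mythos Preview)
Your proposal is correct and follows essentially the same route as the paper. The paper packages the $L^2$-approximation of $\varphi$ by a smooth compactly supported $\widehat\varphi$, together with the transfer of the bound to the finite-$n$ law via the CLT and the uniform integrability from~\ref{hyp:um}, into a standalone Proposition~\ref{prop:density}; it then invokes Proposition~\ref{prop:free_wasserstein} (somewhat tersely) for the stability of $\frac1n I(\bX^*;\bY\mid\bbf\Phi)$ and of $\mathcal{I}_{P_{\rm out}}$, exactly as you outline with the chain-rule decomposition.
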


To prove Proposition~\ref{prop:approx_phi} we will approximate the function $\varphi$ with a function $\widehat{\varphi}$ which is $\cC^{\infty}$ with compact support.
In the following, $G$ is a standard Gaussian random variable, independent of everything else.

\begin{proposition}\label{prop:density}
	Suppose that~\ref{hyp:third_moment}-\ref{hyp:um}-\ref{hyp:phi_general}-\ref{hyp:cont_pp} hold.
	Then, for all $\epsilon > 0$, there exist $\widehat{\varphi} \in \cC^{\infty}(\R \times \R^{k_A})$ with compact support, such that
	$$
	\E\left[
		(\varphi(\sqrt{\rho}G,\bA) - \widehat{\varphi}(\sqrt{\rho} G,\bA))^2
	\right] \leq \epsilon \,,
	$$
	and for $n$ large enough, we have
	$$
	\E\left[
		\left(\varphi\left(\frac{1}{\sqrt{n}}[\boldsymbol{\Phi} \bX^*]_1,\bA_1\right) - \widehat{\varphi}\left(\frac{1}{\sqrt{n}}[\boldsymbol{\Phi} \bX^*]_1,\bA_1\right)\right)^2
	\right] \leq \epsilon \,.
	$$
\end{proposition}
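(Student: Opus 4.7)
The plan is a two-step $L^2$-approximation: first approximate $\varphi$ in $L^2$ of the Gaussian limit measure by a smooth compactly supported function, then transfer the approximation to the finite-$n$ measure by combining the CLT with a uniform integrability argument.

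For the first step, Proposition~\ref{prop:varphi_l2} guarantees under the standing hypotheses that $\varphi \in L^{2+\eta}(\mu_\infty) \subset L^2(\mu_\infty)$ for some $\eta>0$, where $\mu_\infty \defeq \cN(0,\rho) \otimes P_A$ denotes the joint law of $(\sqrt{\rho}\,G, \bA)$. Since $\mu_\infty$ is a finite Radon measure on $\R^{1+k_A}$, the space $\cC_c^\infty(\R \times \R^{k_A})$ is dense in $L^2(\mu_\infty)$ (Lusin's theorem followed by mollification). I can therefore pick $\widehat{\varphi} \in \cC_c^\infty$ with
\[
\E\big[\big(\varphi(\sqrt{\rho}\,G,\bA) - \widehat{\varphi}(\sqrt{\rho}\,G,\bA)\big)^2\big] \le \epsilon/2,
\]
which is the first claimed inequality (with room to spare).

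For the second step, write $f \defeq \varphi - \widehat{\varphi}$ and $S_n \defeq [\boldsymbol{\Phi}\bX^*]_1/\sqrt n$; the goal is $\E[f(S_n,\bA_1)^2] \to \E[f(\sqrt{\rho}\,G,\bA_1)^2]$, which I would establish via Vitali's convergence theorem. Uniform integrability is straightforward: $f^2 \le 2\varphi^2 + 2\|\widehat{\varphi}\|_\infty^2$, and $\varphi(S_n,\bA_1)^2$ is bounded in $L^{1+\gamma/2}$ uniformly in $n$ by \ref{hyp:um}, so the family $(f(S_n,\bA_1)^2)_{n \ge 1}$ is UI. Convergence in distribution is extracted by conditioning on $\bA_1$: since $\bA_1$ is independent of $(\boldsymbol{\Phi},\bX^*)$, for any bounded continuous $h$ we have $\E[h(f(S_n,\bA_1)^2)] = \int \E[h(f(S_n,\ba)^2)]\, dP_A(\ba)$. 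For $P_A$-a.e.\ $\ba$, the section $f(\cdot,\ba)$ is continuous Lebesgue-almost everywhere (by \ref{hyp:cont_pp} combined with the everywhere-continuity of $\widehat{\varphi}$). The classical CLT (valid under \ref{hyp:third_moment}-\ref{hyp:phi_general}, since $(\Phi_{1,i}X_i^*)_i$ is an i.i.d.\ sequence with zero mean, variance $\rho$, and uniformly bounded third moment) gives $S_n \xrightarrow{(d)} \sqrt{\rho}\,G$, and the continuous mapping theorem plus bounded convergence yields $\E[h(f(S_n,\ba)^2)] \to \E[h(f(\sqrt{\rho}\,G,\ba)^2)]$ for a.e.\ $\ba$. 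Dominated convergence (the integrand is bounded by $\|h\|_\infty$) upgrades this pointwise limit to the integral, establishing the required convergence in distribution. Vitali's theorem then concludes $\E[f(S_n,\bA_1)^2] \to \E[f(\sqrt{\rho}\,G,\bA_1)^2] \le \epsilon/2$, so the second inequality holds for all sufficiently large $n$.

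\textbf{Main obstacle.} The delicate point is that \ref{hyp:cont_pp} only supplies section-wise almost-everywhere continuity of $\varphi$, not joint continuity, so the continuous mapping theorem cannot be applied directly on the joint variable $(S_n,\bA_1)$ — a joint-discontinuity set can easily have positive $\mu_\infty$-measure (for instance if $P_A$ has atoms sitting on boundary sections of $\varphi$). This is handled cleanly by the conditioning-then-integrating argument above, which effectively reduces the joint problem to a one-dimensional CLT per fiber. Everything else is routine real-analytic approximation.
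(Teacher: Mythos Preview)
Your proposal is correct and follows essentially the same route as the paper: density of $\cC_c^\infty$ in $L^2(\mu_\infty)$ for the first inequality, then CLT plus uniform integrability (the paper invokes \ref{hyp:um} for UI just as you do) to transfer the bound to finite $n$. The paper phrases the convergence-in-law step more tersely---it asserts $\varphi(S_n,\bA_1)\xrightarrow{(d)}\varphi(\sqrt\rho\,G,\bA_1)$ ``because $\varphi(\cdot,\bA_1)$ is almost-surely continuous almost-everywhere''---which is exactly your conditioning-then-integrating argument made implicit; your worry about joint discontinuities is legitimate and your fiberwise treatment is the right way to make that step precise.

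One small inaccuracy: under \ref{hyp:phi_general} the entries $(\Phi_{1,i})_i$ are only assumed independent with bounded third moments, not identically distributed, so $(\Phi_{1,i}X_i^*)_i$ need not be i.i.d. The CLT still applies (Lyapunov condition from the third-moment bound), and the paper invokes it in exactly this form, but you should not call the sequence i.i.d.
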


\begin{proof}
	By the Central Limit Theorem (using the fact that the third moments of $(X^*_i \Phi_{1,i})$ are bounded with $n$, because of hypotheses~\ref{hyp:third_moment} and~\ref{hyp:phi_general})
	\begin{equation}\label{eq:tcl1}
	\left(\frac{[\bbf{\Phi}\bX^*]_1}{\sqrt{n}},\bA_1\right)\xrightarrow[n \to \infty]{(d)}
	(\sqrt{\rho} G,\bA_1) \,.
\end{equation}
	This implies that 
	\begin{equation} \label{eq:conv_law}
		\varphi\left(\frac{[\bbf{\Phi}\bX^*]_1}{\sqrt{n}},\bA_1\right)
		\xrightarrow[n \to \infty]{(d)}
		\varphi(\sqrt{\rho} G,\bA_1) \,,
	\end{equation}
	because $\varphi(\cdot, \bbf{A}_1)$ is almost-surely continuous almost-everywhere, by assumption~\ref{hyp:cont_pp}.
	The following sequence $(\varphi(\frac{[\bbf{\Phi}\bX^*]_1}{\sqrt{n}},\bA_1))_n$ is by assumption~\ref{hyp:um} bounded in $L^2$, thus by \eqref{eq:conv_law} we have that $\E[\varphi(\sqrt{\rho}G,\bA_1)^2] < \infty$.
	Let $\epsilon >0$.
	We have just proved that $\varphi \in L^2(\R \times \R^{k_A})$ with the measure induced by $(\sqrt{\rho}G,\bA_1)$. There exists (see for instance Corollary~4.2.2 in \cite{bogachev2007measure}) a $\cC^{\infty}$ function with compact support $\widehat{\varphi}$ such that $ \E\left[ (\varphi(\sqrt{\rho}G,\bA) - \widehat{\varphi}(\sqrt{\rho} G,\bA))^2 \right] \leq \epsilon $.

	One deduce from \eqref{eq:tcl1} and \eqref{eq:conv_law} that
	$$
	\left(\varphi\left(\frac{1}{\sqrt{n}}[\boldsymbol{\Phi} \bX^*]_1,\bA_1\right) - \widehat{\varphi}\left(\frac{1}{\sqrt{n}}[\boldsymbol{\Phi} \bX^*]_1,\bA_1\right)\right)^2
	\xrightarrow[n \to \infty]{(d)}
	(\varphi(\sqrt{\rho}G,\bA) - \widehat{\varphi}(\sqrt{\rho} G,\bA))^2 \,.
	$$
	Now, hypothesis~\ref{hyp:um} gives that the sequence above is uniformly integrable. This gives that
	$$
	\E \left(\varphi\left(\frac{1}{\sqrt{n}}[\boldsymbol{\Phi} \bX^*]_1,\bA_1\right) - \widehat{\varphi}\left(\frac{1}{\sqrt{n}}[\boldsymbol{\Phi} \bX^*]_1,\bA_1\right)\right)^2
	\xrightarrow[n \to \infty]{}
	\E (\varphi(\sqrt{\rho}G,\bA) - \widehat{\varphi}(\sqrt{\rho} G,\bA))^2 \leq \epsilon \,.
	$$
	Consequently, the left-hand side is smaller that $2 \epsilon$ for $n$ large enough. This concludes the proof.
\end{proof}

In the remaining of this section, we prove Proposition~\ref{prop:approx_phi}.
Let $\epsilon>0$. Let $\varphi$ and $\widehat{\varphi}$ as in Proposition~\ref{prop:density}. 
Let us define $\bbf{Y} = \varphi(n^{-1/2} \bbf{\Phi}\bbf{X}^*, \bbf{A}) + \sqrt{\Delta} Z$ and $\bbf{\widehat{Y}} = \widehat{\varphi}(n^{-1/2} \bbf{\Phi}\bbf{X}^*, \bbf{A}) + \sqrt{\Delta} Z$.

\begin{lemma} \label{lem:approx_f}
	Suppose that~\ref{hyp:third_moment}-\ref{hyp:um}-\ref{hyp:phi_general}-\ref{hyp:cont_pp} and~\ref{hyp:delta_pos} hold.
	There exists a constant $C > 0$ such that for $n$ large enough
	$$
	\Big| \frac{1}{n} I(\bbf{X}^*; \bbf{Y}| \bbf{\Phi}) 
	- \frac{1}{n} I(\bbf{X}^*; \bbf{\widehat{Y}}| \bbf{\Phi}) \Big|
	\leq C \sqrt{\epsilon} \,.
	$$
\end{lemma}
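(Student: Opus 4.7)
My plan is to reduce the comparison of the two conditional mutual informations to a Wasserstein-type perturbation bound via Proposition~\ref{prop:free_wasserstein}, applied conditionally. The key observation is that, writing $\bU_\mu = \varphi(n^{-1/2}[\bbf{\Phi}\bX^*]_\mu,\bA_\mu)$ and $\widehat\bU_\mu = \widehat\varphi(n^{-1/2}[\bbf{\Phi}\bX^*]_\mu,\bA_\mu)$, the chain rule gives
\begin{align*}
I(\bX^*;\bY\,|\,\bbf{\Phi}) &= I(\bX^*,\bU;\bY\,|\,\bbf{\Phi}) - I(\bU;\bY\,|\,\bX^*,\bbf{\Phi})\\
&= I(\bU;\bU+\sqrt{\Delta}\,\bZ\,|\,\bbf{\Phi}) - I(\bU;\bU+\sqrt{\Delta}\,\bZ\,|\,\bX^*,\bbf{\Phi}),
\end{align*}
because $\bY=\bU+\sqrt{\Delta}\,\bZ$ is conditionally independent of $\bX^*$ given $\bU$, and similarly for $\widehat\bY$ with $\widehat\bU$. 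So it suffices to bound the two Gaussian-channel mutual informations $I(\bU;\bU+\sqrt{\Delta}\,\bZ\,|\,\mathcal{G})$ vs.\ $I(\widehat\bU;\widehat\bU+\sqrt{\Delta}\,\bZ\,|\,\mathcal{G})$ for the two choices of conditioning $\sigma$-algebra $\mathcal{G}\in\{\sigma(\bbf{\Phi}),\,\sigma(\bX^*,\bbf{\Phi})\}$.

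For each such term I would apply Proposition~\ref{prop:free_wasserstein} pointwise on the value of $\mathcal{G}$ (after rescaling by $\sqrt{\Delta}$, which is harmless since $\Delta>0$ is fixed): this yields, for every realization $g$ of $\mathcal{G}$,
\[
\big|I(\bU;\bU+\sqrt{\Delta}\bZ\,|\,\mathcal{G}=g) - I(\widehat\bU;\widehat\bU+\sqrt{\Delta}\bZ\,|\,\mathcal{G}=g)\big| \leq \tfrac{1}{\Delta}\big(\sqrt{\E[\|\bU\|^2|g]}+\sqrt{\E[\|\widehat\bU\|^2|g]}\big)\,W_2(\mathcal{L}(\bU|g),\mathcal{L}(\widehat\bU|g)).
\]
Integrating over $\mathcal{G}$ and using Cauchy--Schwarz, together with the trivial coupling bound $W_2(\mathcal{L}(\bU|g),\mathcal{L}(\widehat\bU|g))^2\leq \E[\|\bU-\widehat\bU\|^2\,|\,g]$, gives
\[
\big|I(\bU;\bU+\sqrt{\Delta}\bZ\,|\,\mathcal{G})-I(\widehat\bU;\widehat\bU+\sqrt{\Delta}\bZ\,|\,\mathcal{G})\big|\leq \tfrac{1}{\Delta}\big(\sqrt{\E\|\bU\|^2}+\sqrt{\E\|\widehat\bU\|^2}\big)\sqrt{\E\|\bU-\widehat\bU\|^2}.
\]

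The second moments are controlled coordinate-wise: $\E\|\bU\|^2=m\,\E[\varphi(n^{-1/2}[\bbf{\Phi}\bX^*]_1,\bA_1)^2]\leq m C_1$ by hypothesis~\ref{hyp:um} (with $\gamma>0$ giving uniform bound on the second moment), and $\E\|\widehat\bU\|^2\leq m C_2$ since $\widehat\varphi$ has compact support hence is bounded. The approximation inequality from Proposition~\ref{prop:density} gives $\E\|\bU-\widehat\bU\|^2=m\,\E[(\varphi-\widehat\varphi)(n^{-1/2}[\bbf{\Phi}\bX^*]_1,\bA_1)^2]\leq m\epsilon$ for $n$ large enough. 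Plugging these in yields a bound of order $m\sqrt{\epsilon}/\Delta$ for each of the two differences; subtracting them, dividing by $n$, and using $m/n\to\alpha$ produces the announced $C\sqrt{\epsilon}$ with a constant $C$ that depends only on $\alpha,\Delta$ and the moment bounds.

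The only delicate step will be the rigorous ``integrated'' application of Proposition~\ref{prop:free_wasserstein}: the proposition is stated unconditionally and one must verify that the same proof (or a direct conditional version) goes through with conditional expectations, and that the measurable selection of the optimal coupling realizing $W_2$ within each slice causes no issue. Everything else is routine bookkeeping with second moments and the chain rule, and no new information-theoretic identity is required beyond what is already available.
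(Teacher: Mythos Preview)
Your proposal is correct and is essentially the same as the paper's argument, just spelled out in more detail. The paper's proof is extremely terse---it simply invokes Proposition~\ref{prop:free_wasserstein} after noting $\E\|\bY-\widehat{\bY}\|^2\le m\epsilon$---but the implicit mechanism is exactly the chain-rule decomposition you wrote down (the same decomposition appears explicitly in the paper's proof of Proposition~\ref{prop:psi_convex}). Your worry about measurable selection of couplings is unfounded here since the natural explicit coupling $(\bU,\widehat{\bU})$ built from the same $(\bX^*,\bA,\bbf{\Phi})$ already suffices, as you in fact use.
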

\begin{proof}
	We have, for $n$ large enough
	$$
	\E \| \bbf{Y} - \bbf{\widehat{Y}} \|^2
	= m 
	\E\left[
		\left(\varphi\left(\frac{1}{\sqrt{n}}[\boldsymbol{\Phi} \bX^*]_1,\bA_1\right) - \widehat{\varphi}\left(\frac{1}{\sqrt{n}}[\boldsymbol{\Phi} \bX^*]_1,\bA_1\right)\right)^2
	\right] \leq m \epsilon \,.
	$$
	By Proposition~\ref{prop:free_wasserstein}, we obtain that there exists a constant $C>0$ (that depends only on $\Delta$ and $\varphi$) such that
	$$
	\big| I(\bbf{X}^*; \bbf{Y}| \bbf{\Phi}) 
	- I(\bbf{X}^*; \bbf{\widehat{Y}}| \bbf{\Phi}) \big|
	\leq C m \sqrt{\epsilon} \,,
	$$
	which gives the result.
\end{proof}

Let $P_{\rm out}$ denote the transition kernel associated to $\varphi$ and $\widehat{P}_{\rm out}$ the one associated to $\widehat{\varphi}$.
Analogously to the previous Lemma, one can show:
\begin{lemma}
	There exists a constant $C'>0$ such that for all $q \in [0, \rho]$,  $|\mathcal{I}_{P_{\rm out}}(q) - \mathcal{I}_{\widehat{P}_{\rm out}}(q)| \leq C' \sqrt{\epsilon}$.
\end{lemma}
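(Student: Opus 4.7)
The strategy mirrors that of Lemma~\ref{lem:approx_f}, but in the scalar setting corresponding to the channel \eqref{eq:Pout_scalar_channel}. Set
\[
U \defeq \varphi\big(\sqrt{q}\,V + \sqrt{\rho-q}\,W^*, \bbf{A}\big), \qquad \widehat{U} \defeq \widehat{\varphi}\big(\sqrt{q}\,V + \sqrt{\rho-q}\,W^*, \bbf{A}\big),
\]
so that $\widetilde{Y}_0 = U + \sqrt{\Delta}\,Z$ and $\widehat{Y}_0 = \widehat{U} + \sqrt{\Delta}\,Z$, with $(V, W^*, \bbf{A}, Z)$ independent and $V,W^* \iid \cN(0,1)$. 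By Proposition~\ref{prop:varphi_l2} the random variable $U$ has a finite second moment, while $\widehat{U}$ is bounded since $\widehat{\varphi}$ has compact support.

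The first step is to rewrite $\mathcal{I}_{P_{\rm out}}(q)$ as the difference of two mutual informations of the ``signal-plus-Gaussian-noise'' form to which the Wasserstein-type bound of Proposition~\ref{prop:free_wasserstein} applies. Since conditionally on $(V,U)$ the variable $\widetilde{Y}_0$ is independent of $W^*$, the chain rule for the mutual information (used in the same way as in the proof of Proposition~\ref{prop:psi_convex}) yields
\[
\mathcal{I}_{P_{\rm out}}(q) \;=\; I(W^*;\widetilde{Y}_0 \mid V) \;=\; I(U;\widetilde{Y}_0 \mid V) \;-\; I(U;\widetilde{Y}_0 \mid V, W^*),
\]
and analogously $\mathcal{I}_{\widehat{P}_{\rm out}}(q) = I(\widehat{U};\widehat{Y}_0 \mid V) - I(\widehat{U};\widehat{Y}_0 \mid V, W^*)$. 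Thus it suffices to control the two differences
\[
\Delta_1 \defeq I(U;U+\sqrt{\Delta}\,Z \mid V) - I(\widehat{U};\widehat{U}+\sqrt{\Delta}\,Z \mid V), \qquad \Delta_2 \defeq I(U;U+\sqrt{\Delta}\,Z \mid V,W^*) - I(\widehat{U};\widehat{U}+\sqrt{\Delta}\,Z \mid V,W^*).
\]

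The second step is to apply Proposition~\ref{prop:free_wasserstein} conditionally on $V$ (resp.\ on $(V,W^*)$). The proof of that proposition goes via the I-MMSE relation and carries over verbatim to the conditional setting; integrating over the conditioning variables and using Cauchy--Schwarz on the resulting bound $|\Delta_i| \le (\sqrt{\E U^2} + \sqrt{\E \widehat{U}^2})\,W_2(\cdot,\cdot)$, together with the trivial coupling bound $W_2(\mathrm{Law}(U|\cdot),\mathrm{Law}(\widehat U|\cdot))^2 \leq \E[(U-\widehat{U})^2 \mid \cdot]$, gives
\[
|\Delta_i| \;\leq\; C\,\sqrt{\E\big[(U-\widehat{U})^2\big]}, \qquad i = 1,2,
\]
for some constant $C$ depending only on $\Delta$ and the $L^2$-norms of $\varphi(\sqrt{\rho}G,\bbf{A})$ and $\widehat{\varphi}$, hence not on $q$.

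The final step exploits the key identity in law: since $\sqrt{q}\,V + \sqrt{\rho-q}\,W^* \stackrel{d}{=} \sqrt{\rho}\,G$ with $G \sim \cN(0,1)$, we have
\[
\E\big[(U-\widehat{U})^2\big] \;=\; \E\big[\big(\varphi(\sqrt{\rho}\,G,\bbf{A}) - \widehat{\varphi}(\sqrt{\rho}\,G,\bbf{A})\big)^2\big] \;\leq\; \epsilon
\]
by the first inequality in Proposition~\ref{prop:density}. Combining these estimates yields $|\mathcal{I}_{P_{\rm out}}(q) - \mathcal{I}_{\widehat{P}_{\rm out}}(q)| \le |\Delta_1| + |\Delta_2| \le 2C\sqrt{\epsilon}$, uniformly in $q \in [0,\rho]$, which is the claim with $C' = 2C$. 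The only mildly delicate point is the conditional application of Proposition~\ref{prop:free_wasserstein}, but this is routine since its proof is based on the I-MMSE identity, which holds conditionally.
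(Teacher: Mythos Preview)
Your proposal is correct and follows exactly the approach the paper has in mind: the paper states this lemma with the remark ``Analogously to the previous Lemma, one can show,'' and the intended analogy is precisely the chain-rule decomposition $\mathcal{I}_{P_{\rm out}}(q) = I(U;U+\sqrt{\Delta}Z\mid V) - I(U;U+\sqrt{\Delta}Z\mid V,W^*)$ together with a conditional application of Proposition~\ref{prop:free_wasserstein}, which is written out in full in the proof of Proposition~\ref{prop:psi_convex}. Your observation that $\sqrt{q}\,V+\sqrt{\rho-q}\,W^*\stackrel{d}{=}\sqrt{\rho}\,G$ makes both $\E[(U-\widehat U)^2]\le\epsilon$ and the constant $C$ independent of $q$ is exactly the point.
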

From there we obtain that
\begin{equation} \label{eq:approx_RS}
	\Big|
	{\adjustlimits \inf_{q \in [0, \rho]} \sup_{r \geq 0}}\, i_{\rm RS}(q,r)
	-
	{\adjustlimits \inf_{q \in [0, \rho]} \sup_{r \geq 0}}\,  \widehat{i}_{\rm RS}(q,r)
	\Big| \leq C' \sqrt{\epsilon} \,.
\end{equation}
Applying Theorem~\ref{th:RS_1layer} for $\widehat{P}_{\rm out}$, we obtain that for $n$ large enough $|\frac{1}{n}I(\bbf{X}^*;\bbf{\widehat{Y}}|\bbf{\Phi}) - \inf_{q \in [0, \rho]} \sup_{r \geq 0} \widehat{i}_{\rm RS}(q,r)| \leq \sqrt{\epsilon}$. We now combine this with \eqref{eq:approx_RS} and Lemma~\ref{lem:approx_f} we obtain that for $n$ large enough
$$
\Big|
\frac{1}{n} I(\bbf{X}^*;\bbf{Y}|\bbf{\Phi})
-
\!\!{\adjustlimits \inf_{q \in [0, \rho]} \sup_{r \geq 0}}\, i_{\rm RS}(q,r)
\Big| 
\leq
\Big|
\frac{1}{n} I(\bbf{X}^*;\bbf{\widehat{Y}}|\bbf{\Phi})
-
\!\!{\adjustlimits \inf_{q \in [0, \rho]}\sup_{r \geq 0}}\,  \widehat{i}_{\rm RS}(q,r)
\Big| 
+(C + C') \sqrt{\epsilon}
\leq (C + C' + 1)  \sqrt{\epsilon} \,,
$$
which concludes the proof of Proposition~\ref{prop:approx_phi}, because of \eqref{eq:i_f} and the definition of the functions $I_{P_0}$ and $\mathcal{I}_{P_{\rm out}}$ in Corollary~\ref{cor:mi}.

\subsection{The case of discrete channels: Removing the Gaussian noise}\label{sec:relax_discrete}

Now that we proved (Proposition~\ref{prop:approx_phi}) that Theorem~\ref{th:RS_1layer} holds under hypotheses~\ref{hyp:third_moment}-\ref{hyp:um}-\ref{hyp:phi_general}-\ref{hyp:cont_pp} and~\ref{hyp:delta_pos}, we are going to show that it holds under~\ref{hyp:third_moment}-\ref{hyp:um}-\ref{hyp:phi_general}-\ref{hyp:cont_pp} and~\ref{hyp:delta_0} by letting $\Delta \to 0$.
We suppose in this section that $\varphi$ takes values in $\N$ and write $\bbf{Y} = \varphi\big(\bbf{\Phi}\bbf{X}^*/\sqrt{n}, \bbf{A}\big)$. By Proposition~\ref{prop:approx_phi} we know that for all $\Delta > 0$,
$$
\frac{1}{n} I(\bbf{X}^*;\bbf{Y} + \sqrt{\Delta} \bbf{Z}|\bbf{\Phi})
\xrightarrow[n \to \infty]{}
{\adjustlimits \inf_{q \in [0, \rho]}\sup_{r \geq 0} }\, \Big\{
	I_{P_0}(r) + \alpha I(W^*; \varphi(\sqrt{q}V + \sqrt{\rho-q}W^*,A) + \sqrt{\Delta}Z|V) - \frac{r}{2}(\rho-q)
\Big\} \,,
$$
where $\bbf{Z} \sim \cN(0,\bbf{I}_m)$ and $(V,W^*,Z,A) \sim \cN(0,1)^{\otimes 3} \otimes P_A$.
Since $\bbf{Y}$ takes values in $\N^m$ and $\varphi$ takes values in $\N$, we can apply Corollary~\ref{cor:cont_I_discrete} twice to obtain that for all $\Delta \in (0,1]$,
$$
\Big|
I(\bbf{X}^*;\bbf{Y} + \sqrt{\Delta} \bbf{Z}|\bbf{\Phi})
-
I(\bbf{X}^*;\bbf{Y}|\bbf{\Phi})
\Big| \leq 100 m e^{-1/(16\Delta)}
$$
and (recall that by definition $\mathcal{I}_{P_{\rm out}}(q)=I\big(W^*; \varphi(\sqrt{q}V + \sqrt{\rho-q}W^*,A)\big|V\big)$):
$$
|I\big(W^*; \varphi(\sqrt{q}V + \sqrt{\rho-q}W^*,A) + \sqrt{\Delta}Z\big|V\big) - \mathcal{I}_{P_{\rm out}}(q)| \leq 100 e^{-1/(16\Delta)}\,.
$$
Since our control over $\Delta$ is uniform in $n$, we can permute the $n \to \infty$ limit with the $\Delta \to 0$ limit to get:
$$
\frac{1}{n} I(\bbf{X}^*;\bbf{Y}|\bbf{\Phi})
\xrightarrow[n \to \infty]{}
{\adjustlimits \inf_{q \in [0, \rho]}\sup_{r \geq 0} }\, \Big\{
	I_{P_0}(r) + \alpha \mathcal{I}_{P_{\rm out}}(q) - \frac{r}{2}(\rho-q)
\Big\} \,.
$$

	\section{Some sup-inf formulas}\label{appendix_sup_inf}
This appendix gathers some useful lemmas for the manipulations of ``sup-inf'' formulas like \eqref{eq:rs_formula}.
\begin{lemma}\label{lem:sup_inf_dual}
	Let $f,g : \R_+ \to \R$ be two non-decreasing convex functions. We have
	$$
	{\adjustlimits \sup_{x \geq 0} \inf_{y \geq 0}} \big\{ f(x) + g(y) - x y\big\}
	=
	{\adjustlimits \sup_{y \geq 0} \inf_{x \geq 0}} \big\{ f(x) + g(y) - x y \big\}\,.
	$$
\end{lemma}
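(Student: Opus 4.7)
The natural route is through Legendre duality. For $y \geq 0$ define $f^\star(y) \defeq \sup_{x \geq 0}\{xy - f(x)\}$ and for $x \geq 0$ define $g^\star(x) \defeq \sup_{y \geq 0}\{xy - g(y)\}$. Both are convex and non-decreasing on $\R_+$ (non-decreasing because we sup only over non-negative arguments). Then the two sides of the identity can be rewritten as
\begin{align*}
\text{LHS} &= \sup_{x \geq 0}\big\{f(x) - g^\star(x)\big\}, &
\text{RHS} &= \sup_{y \geq 0}\big\{g(y) - f^\star(y)\big\}.
\end{align*}

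The key step is Fenchel--Moreau biduality on $\R_+$: the plan is to show that for every $x \geq 0$,
$$
f(x) = \sup_{y \geq 0}\big\{xy - f^\star(y)\big\},
$$
and symmetrically for $g$. The standard way is to extend $f$ to all of $\R$ by setting $f(x)=+\infty$ for $x<0$ (assuming, as is standard, that $f$ is lower semicontinuous at $0$; otherwise one replaces $f(0)$ by $\lim_{x \downarrow 0} f(x)$, which affects neither side of the claimed equality). The extended $f$ is convex lsc on $\R$, so classical Fenchel--Moreau gives $f = f^{\star\star}$ as functions on $\R$, where the conjugate is now taken over all of $\R$. A short calculation using that $f$ is non-decreasing shows the unrestricted conjugate equals $-f(0)$ on $(-\infty,0]$ and agrees with our $f^\star$ on $\R_+$; the biconjugate sup over $y \in \R$ then reduces to a sup over $y \geq 0$, giving the displayed identity.

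Once biduality is in hand, substitute it into the expression for the LHS:
\begin{align*}
\sup_{x \geq 0}\big\{f(x) - g^\star(x)\big\}
= \sup_{x \geq 0}\sup_{y \geq 0}\big\{xy - f^\star(y) - g^\star(x)\big\}
= \sup_{y \geq 0}\sup_{x \geq 0}\big\{xy - g^\star(x) - f^\star(y)\big\},
\end{align*}
where the swap of suprema is automatic. Applying the same biduality to $g$ inside the inner supremum, $\sup_{x \geq 0}\{xy - g^\star(x)\} = g(y)$, yields $\sup_{y \geq 0}\{g(y) - f^\star(y)\} = \text{RHS}$.

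The only technical obstacle is the Fenchel--Moreau step on the half-line, specifically justifying that the biconjugate sup can be taken over $y \geq 0$ rather than over all of $\R$; this is where the non-decreasing hypothesis is used crucially. Everything else is a direct symmetric manipulation of Legendre transforms, and convex-concave minimax theorems (Sion, etc.) are not needed despite the sup-inf form of the statement.
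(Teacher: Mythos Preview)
Your proposal is correct and follows essentially the same route as the paper: both define the monotone (half-line) conjugates $f^\star,g^\star$, rewrite each side as $\sup_x\{f(x)-g^\star(x)\}$ and $\sup_y\{g(y)-f^\star(y)\}$, invoke biduality $f=f^{\star\star}$ on $\R_+$, and swap the two suprema. The only cosmetic difference is that the paper cites Rockafellar (Theorem~12.4, monotone conjugates) for the biduality step, whereas you sketch it by extending $f$ by $+\infty$ on $(-\infty,0)$ and using the non-decreasing hypothesis to reduce the unrestricted conjugate back to the half-line one.
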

\begin{proof}
	Let us define the monotone conjugate (see the end of §12 of \cite{rockafellar2015convex}) of $f$ and $g$:
	$$
	f^*(y) = \sup_{x \geq 0} \big\{ xy - f(x) \big\}
	\quad \text{and} \quad
	g^*(y) = \sup_{x \geq 0} \big\{ xy - g(x) \big\} \,.
	$$
	These conjugates satisfy an analog of the Fenchel-Moreau Theorem: $f(x) = \sup_{y \geq 0} \{ xy - f^*(y) \}$ and $g(y) = \sup_{x \geq 0} \{ xy - g^*(x) \}$, see Theorem~12.4 from \cite{rockafellar2015convex}.
	We have then
	\begin{align*}
		{\adjustlimits \sup_{x \geq 0} \inf_{y \geq 0}} \big\{f(x) + g(y) - x y \big\}
	&= 
		\sup_{x \geq 0} \big\{f(x) - g^*(x)\big\}
	= 
	\sup_{x \geq 0} \sup_{y \geq 0} \big\{xy - f^*(y) - g^*(x) \big\}
	\\
	&= 
	\sup_{y \geq 0}\Big\{ - f^*(y) +  \sup_{x \geq 0} \big\{  xy  - g^*(x) \big\} \Big\}
	= 
	\sup_{y \geq 0} \big\{- f^*(y) +  g(y) \big\}
	\\
	&=
	{\adjustlimits \sup_{y \geq 0} \inf_{x \geq 0}} \big\{f(x) + g(y) - x y \big\}.
	\end{align*}
\end{proof}
The next Lemma on the Legendre transform will be useful.
\begin{lemma}\label{lem:legendre}
	Let $V \subset \R$ be a non-empty, closed interval and let $g: V \to \R$ be a continuous convex function. Define
	\begin{equation} \label{eq:legendre}
		g^*: x \in  \R \mapsto \sup_{y \in V} \big\{ xy - g(y) \big\} \in \R \cup \{+ \infty\} \,.
	\end{equation}
	Let ${\rm dom} \, g^* = \{ x \in \R \, | \, g^*(x) < \infty \}$.
	Then $g^*$ is a closed convex function and ${\rm dom} \, g^*$ is a non-empty interval.
	Moreover, for all $x \in {\rm dom} \, g^*$,
	\begin{equation}\label{eq:sub_gradient_legendre}
	\partial g^*(x) = {\rm arg\, max}_{y \in V} \big\{ xy - g(y) \big\}\,.
	\end{equation}
	In particular, if $g$ is strictly convex then $g^*$ is differentiable around every point in the interior of ${\rm dom} \, g^*$.
\end{lemma}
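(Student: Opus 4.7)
The plan is to recognize this as a standard fact from one-dimensional convex analysis, applied to $g$ extended to all of $\R$ by the convention $\widetilde{g}(y)=g(y)$ for $y\in V$ and $\widetilde{g}(y)=+\infty$ otherwise. Since $V$ is closed and $g$ is continuous, $\widetilde{g}$ is lower semi-continuous, and it is convex on $\R$ because $V$ is an interval and $g$ is convex on $V$. With this extension, $g^*(x)=\sup_{y\in\R}\{xy-\widetilde{g}(y)\}$ coincides with the usual convex conjugate, which will allow me to invoke the Fenchel--Moreau theorem $\widetilde{g}^{**}=\widetilde{g}$.

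First I would prove the convexity and closedness of $g^*$: it is the pointwise supremum of the family of continuous affine functions $\{x\mapsto xy-g(y)\}_{y\in V}$, hence lower semi-continuous and convex, i.e.\ closed convex. For the domain, convexity of $g^*$ immediately forces $\operatorname{dom} g^*$ to be an interval of $\R$; to see it is non-empty, either $V$ is a singleton (in which case $\operatorname{dom}g^*=\R$) or $V$ has non-empty interior, in which case any $y_0\in\operatorname{int}V$ admits a subgradient $s\in\partial g(y_0)$, and the inequality $g(y)\geq g(y_0)+s(y-y_0)$ valid on $V$ yields $g^*(s)\leq sy_0-g(y_0)<\infty$, so $s\in\operatorname{dom}g^*$.

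Next I would prove \eqref{eq:sub_gradient_legendre} by a two-sided inclusion. The $\supseteq$ inclusion is direct: if $y_0\in V$ attains the supremum defining $g^*(x)$, then for any $x'\in\R$, $g^*(x')\geq x'y_0-g(y_0)=g^*(x)+y_0(x'-x)$, so $y_0\in\partial g^*(x)$. For the converse $\subseteq$, given $y_0\in\partial g^*(x)$ I would use the Fenchel--Moreau identity $\widetilde{g}^{**}=\widetilde{g}$: the subgradient inequality gives $g^*(x')\geq g^*(x)+y_0(x'-x)$ for all $x'$, so
\[
\widetilde{g}(y_0)=\widetilde{g}^{**}(y_0)=\sup_{x'\in\R}\{x'y_0-g^*(x')\}\leq xy_0-g^*(x)<\infty,
\]
which forces $y_0\in V$ and $g^*(x)=xy_0-g(y_0)$, i.e.\ $y_0$ is a maximizer. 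Finally, assuming $g$ strictly convex, the map $y\mapsto xy-g(y)$ is strictly concave on $V$ and thus admits at most one maximizer, so by the characterization above $\partial g^*(x)$ contains at most one point. On the interior of $\operatorname{dom}g^*$ the subdifferential of the closed proper convex function $g^*:\R\to\R\cup\{+\infty\}$ is non-empty, so it is a singleton there, which is equivalent to differentiability of $g^*$ on that set.

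The only mildly delicate step is the $\subseteq$ direction of \eqref{eq:sub_gradient_legendre}, which hinges on $\widetilde{g}$ being closed so that biconjugation returns $\widetilde{g}$; this is guaranteed by the continuity hypothesis on $g$ together with the closedness of the interval $V$. Everything else is routine one-dimensional convex analysis.
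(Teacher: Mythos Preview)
Your proof is correct and follows essentially the same route as the paper: extend $g$ to a proper closed convex function on $\R$ by setting it to $+\infty$ outside $V$, and then use the Fenchel--Moreau theorem / Fenchel--Young equality to identify $\partial g^*(x)$ with the set of maximizers. The only difference is cosmetic: the paper cites Rockafellar (Theorem~12.2 and Corollary~23.5.1) directly for the conjugate's properness and the equivalence $y\in\partial g^*(x)\iff x\in\partial g(y)\iff y$ is a maximizer, whereas you spell out the two inclusions and the non-emptiness of $\operatorname{dom} g^*$ by hand.
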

\begin{proof}
	We first extend the function $g$ on $\R$ by setting $g(x) = + \infty$ for all $x \notin V$. Notice that this does not change the definition of the function $g^*$.
	$g$ is then a proper, closed convex function (see for instance \cite{rockafellar2015convex} for the definitions of these properties). By Theorem~12.2 in \cite{rockafellar2015convex}, $g^*$ is also a proper closed convex function on $\R$, which gives that ${\rm dom} \, g^*$ is a non-empty interval. We now apply Corollary~23.5.1 from \cite{rockafellar2015convex} to obtain
	$$
	y \in \partial g^*(x) \iff x \in \partial g(y) \iff
		y \ {\rm maximizes}\ \big\{ xy - g(y) \big\}\,,
	$$
	for all $x \in {\rm dom}\, g^*$, which concludes the proof.
\end{proof}
\begin{corollary}\label{Cor:supinf_supinf}
	Let $f:\R_+ \to \R$ be a convex, Lipschitz, non-decreasing function. Define $\rho = \sup_{x \geq 0} f'(x^+)$. Let $g:[0,\rho] \to \R$ be a convex, Lipschitz, non-decreasing function.
	For $q_1\in \R_+$ and $q_2 \in [0,\rho]$ we define $\psi(q_1,q_2) = f(q_1) + g(q_2) - q_1 q_2$. Then
	$$
	{\adjustlimits\sup_{q_1 \geq 0} \inf_{q_2 \in [0,\rho]}} \psi(q_1,q_2)
	=
	{\adjustlimits\sup_{q_2 \in [0, \rho]} \inf_{q_1 \geq 0}} \psi(q_1,q_2) \,.
	$$
\end{corollary}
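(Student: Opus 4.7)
\medskip

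\noindent\textbf{Proof plan for Corollary \ref{Cor:supinf_supinf}.}
The strategy mirrors Lemma \ref{lem:sup_inf_dual}, but now that $g$ is only defined on $[0,\rho]$ and that Lemma~\ref{lem:sup_inf_dual} concerned unrestricted monotone conjugates, the argument has to track the effective domains carefully. The plan is to set up a Fenchel--Moreau biconjugation on each side and then swap the two outer suprema.

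\smallskip

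\noindent\textbf{Step 1: Setting up the conjugates.}
Extend $f$ and $g$ to the whole real line by $+\infty$ outside their domains, so that both become proper, lower-semi-continuous, convex functions on $\R$. Define
\begin{equation*}
f^*(q_2) \defeq \sup_{q_1 \geq 0}\{q_1 q_2 - f(q_1)\}, \qquad g^*(q_1) \defeq \sup_{q_2 \in [0,\rho]}\{q_1 q_2 - g(q_2)\}.
\end{equation*}
Since $f$ is non-decreasing with $\sup f'(x^+) = \rho$, convex analysis gives that the effective domain of $f^*$ is exactly $[0,\rho]$ (for $q_2 > \rho$ the quantity $q_1 q_2 - f(q_1)$ blows up as $q_1 \to \infty$, while for $q_2 < 0$ the supremum is attained at $q_1 = 0$). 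Lemma~\ref{lem:legendre} applied to $f$ (on $V = \R_+$) and to $g$ (on $V = [0,\rho]$) ensures that $f^*$ and $g^*$ are closed convex functions and that, on their effective domains, the Fenchel--Moreau identities $f = f^{**}$ and $g = g^{**}$ hold.

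\smallskip

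\noindent\textbf{Step 2: Rewriting both sides as double suprema.}
First I would observe that $\inf_{q_2\in[0,\rho]}\psi(q_1,q_2) = f(q_1) - g^*(q_1)$ and $\inf_{q_1\ge 0}\psi(q_1,q_2) = g(q_2) - f^*(q_2)$, so the claim becomes
\begin{equation*}
\sup_{q_1 \geq 0}\big\{f(q_1) - g^*(q_1)\big\} = \sup_{q_2 \in [0,\rho]}\big\{g(q_2) - f^*(q_2)\big\}.
\end{equation*}
Using the biconjugate identity $f(q_1) = \sup_{q_2\in\R}\{q_1 q_2 - f^*(q_2)\}$, together with the fact that $f^*$ has effective domain $[0,\rho]$, the left-hand side can be rewritten as
\begin{equation*}
\sup_{q_1\ge 0}\sup_{q_2\in[0,\rho]}\big\{q_1 q_2 - f^*(q_2) - g^*(q_1)\big\}.
\end{equation*}
Symmetrically, inserting $g(q_2) = \sup_{q_1\ge 0}\{q_1 q_2 - g^*(q_1)\}$ (where I must check, using the fact that $g$ is non-decreasing on $[0,\rho]$, that the unrestricted biconjugate sup in $q_1\in\R$ is attained in $q_1 \ge 0$) turns the right-hand side into the very same double supremum. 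Swapping the order of the two suprema, which is valid since they are unrestricted, concludes the proof.

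\smallskip

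\noindent\textbf{Main obstacle.}
The only genuinely delicate point is verifying that the two biconjugate identities hold precisely over the desired domains: I need to check that for $q_1\ge 0$ the sup in $f = f^{**}$ can be restricted to $q_2 \in [0,\rho]$, and symmetrically for $q_2\in[0,\rho]$ the sup in $g = g^{**}$ can be restricted to $q_1 \ge 0$. Both reductions follow from the monotonicity of $f$ and $g$ (which forces the relevant maximizers to lie in the positive cone) together with the Lipschitz bound $\rho$ on the slopes of $f$ (which forces $f^*$ to be finite only on $[0,\rho]$). Once these domain-restriction facts are in place, the swap of suprema and the resulting equality are immediate.
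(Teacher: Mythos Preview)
Your approach is correct and in fact somewhat cleaner than the paper's. One small imprecision: the effective domain of $f^*$ is not ``exactly $[0,\rho]$'' but rather $(-\infty,\rho]$ (or $(-\infty,\rho)$), since for $q_2\le 0$ the supremum is attained at $q_1=0$ and equals $-f(0)<\infty$. This does not affect your argument, because you only need that for $q_1\ge 0$ the biconjugate supremum $\sup_{q_2\in\R}\{q_1q_2-f^*(q_2)\}$ may be restricted to $q_2\in[0,\rho]$: the values $q_2>\rho$ contribute $-\infty$, while for $q_2<0$ one has $q_1q_2-f^*(q_2)=q_1q_2+f(0)\le f(0)$, which is already the value at $q_2=0$. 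The analogous restriction on the $g$ side (from $q_1\in\R$ to $q_1\ge 0$) follows identically from $g^*(q_1)=-g(0)$ for $q_1\le 0$. With these two checks, both sides indeed collapse to the same double supremum $\sup_{q_1\ge 0}\sup_{q_2\in[0,\rho]}\{q_1q_2-f^*(q_2)-g^*(q_1)\}$.

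The paper proceeds differently: it first extends $g$ linearly to all of $\R_+$ by setting $g(x)=g(\rho)+(x-\rho)g'(\rho^-)$ for $x>\rho$, applies Lemma~\ref{lem:sup_inf_dual} (which handles the unrestricted case on $\R_+\times\R_+$), and then runs a three-case analysis on the location of the optimal $q_1^*$ to show that $\inf_{q_2\ge 0}\psi(q_1^*,q_2)=\inf_{q_2\in[0,\rho]}\psi(q_1^*,q_2)$. Your route avoids both the extension and the case analysis by absorbing the domain restriction directly into the conjugate $g^*$; the price is that you must verify the two domain-restriction facts above, but these are short. The paper's route has the advantage of reusing Lemma~\ref{lem:sup_inf_dual} as a black box.
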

\begin{proof}
	In order to apply Lemma~\ref{lem:sup_inf_dual} we need to extend $g$ on $\R_+$. We thus define for $x \geq 0$
	$$
	g(x) = 
	\begin{cases}
		g(x) & \text{if} \ x \leq \rho \,, \\
		g(\rho) + (x-\rho)g'(\rho^-) & \text{if} \ x \geq \rho \,.
	\end{cases}
	$$
	Obviously $g$ is a convex, Lipschitz, non-decreasing function on $\R_+$. One can thus apply Lemma~\ref{lem:sup_inf_dual}:
	\begin{equation}\label{eq:sup_inf_R}
		{\adjustlimits \sup_{q_1 \geq 0} \inf_{q_2 \geq 0}} \big\{ f(q_1) + g(q_2) - q_1 q_2 \big\}
	=
	{\adjustlimits \sup_{q_2 \geq 0} \inf_{q_1 \geq 0}}  \big\{ f(q_1) + g(q_2) - q_1 q_2 \big\} \,.
	\end{equation}
	We will show now that $\sup_{q_1 \geq 0} \inf_{q_2 \geq 0} \psi(q_1,q_2)= \sup_{q_1 \geq 0} \inf_{q_2 \in [0,\rho]} \psi(q_1,q_2)$.
	Let us define for $q_1 \geq 0$
	$$
	g^*(q_1) = \sup\limits_{q_2 \in [0,\rho]} \{ q_1 q_2 - g(q_2) \}
	\quad \text{and} \quad
	h(q_1) = \inf_{q_2 \in [0,\rho]} \big\{f(q_1) + g(q_2) - q_1 q_2 \big\} = f(q_1) - g^*(q_1) \,.
	$$
	For $q_1 \geq g'(\rho^-)$ we have $g^*(q_1) = q_1 \rho - g(\rho)$. The function $h$ is therefore non-increasing on $[g'(\rho^-),+\infty)$, because $f$ is $\rho$-Lipschitz.
	We get that $h = f - g^*$ achieves its supremum on $[0,g'(\rho^-)]$. 
	Let $q_1^*$ be the smallest point at which this supremum is achieved. Let us show that $\inf_{q_2 \in [0,\rho]} \{ g(q_2) - q_1^* q_2 \}= \inf_{q_2 \geq 0}\{ g(q_2) -q_1^* q_2\}$. 
	\begin{itemize}
		\item If $q_1^* = 0$, then the minimum over $[0,\rho]$ is achieved at $q_2 = 0$, because $g$ is non-decreasing. By convexity, $q_2$ is also the minimizer over $\R_+$: both infimum are equal.
		\item If $q_1^* >0$, the optimality condition of $q_1^*$ gives $f'(q_1^{*-}) - (g^*)'(q_1^{*-}) \geq 0$. By \eqref{eq:sub_gradient_legendre} we obtain that there exists $q_2^* \in \argmin_{q_2 \in [0,\rho]} \{g(q_2) - q_1^* q_2 \}$ such that $f'(q_1^{*-}) \geq q_2^*$. If $q_2^* < \rho$ we conclude, as above, that both infimum are equal. Suppose now that $q^*_2 = \rho$ and define $q_1' = g'(\rho^-)$. By the optimality condition of $q_2^* = \rho$ we have $q_1' = g'(\rho^-) \leq q_1^*$. Compute
			$$
			h(q_1^*) - h(q_1') = f(q_1^*) - f(q_1') - \rho (q_1^*-q_1') \leq 0
			$$
			because $f$ is $\rho$-Lipschitz. Since $q_1' \leq q_1^*$ and $q_1^*$ is defined as the smallest maximizer of $h$, we get that $q_1^* = q_1'$. The left-hand derivative of $q_2 \mapsto g(q_2) - q_1^* q_2$ at $q=\rho$ is therefore equal to $0$: $\rho$ minimizes $q_2 \mapsto g(q_2) - q_1^* q_2$ over $\R_+$: both infimum are equal.
	\end{itemize}
	We have proved that $\inf_{q_2 \in [0,\rho]} \{ g(q_2) - q_1^* q_2 \}= \inf_{q_2 \geq 0} \{ g(q_2) -q_1^* q_2 \}$. Therefore
	$$
 {\adjustlimits \sup_{q_1 \geq 0} \inf_{q_2 \in [0,\rho]}} \psi(q_1,q_2)=
  \inf_{q_2 \in [0,\rho]} \psi(q_1^*,q_2)
  =
  \inf_{q_2 \geq 0} \psi(q_1^*,q_2)
  \leq
 {\adjustlimits \sup_{q_1 \geq 0} \inf_{q_2 \geq 0}} \psi(q_1,q_2) \,.
	$$
	We conclude that $\sup_{q_1 \geq 0} \inf_{q_2 \geq 0} \psi(q_1,q_2)= \sup_{q_1 \geq 0} \inf_{q_2 \in [0,\rho]} \psi(q_1,q_2)$ because the converse inequality is trivial.
	It remains to show now that	$\sup_{q_2 \geq 0} \inf_{q_1 \geq 0} \psi(q_1,q_2)= \sup_{q_2 \in [0,\rho]} \inf_{q_1 \geq 0} \psi(q_1,q_2)$ to prove the Lemma, because of \eqref{eq:sup_inf_R}. The inequality ``$\geq$'' is obvious and the inequality ``$\leq$'' follows from the fact that $\inf_{q_1 \geq 0} \psi(q_1,q_2) = - \infty$ if $q_2 > \rho$.
\end{proof}

\begin{lemma}\label{lem:sup_inf_2}
	Let $g$ be a strictly convex, differentiable, Lipschitz non-decreasing function on $\R_+$. Define $\rho = \sup_{x \geq 0} g'(x)$. Let $f$ be a convex, continuous, strictly increasing function on $[0,\rho]$, differentiable on $[0,\rho)$.
	For $(q_1, q_2) \in [0,\rho] \times \R_+$ we define $\psi(q_1,q_2) = f(q_1) + g(q_2) - q_1 q_2$.
	Then
	\begin{equation}\label{eq:sup_inf_no_lip}
	{\adjustlimits \sup_{q_1 \in [0, \rho]} \inf_{q_2 \geq 0}} \psi(q_1,q_2)
	=
	\sup_{(q_1,q_2)\in\Gamma}
	\psi(q_1,q_2) \,,
	\end{equation}
	where 
	$$\Gamma =
	\left\{
		(q_1,q_2) \in [0,\rho]\times (\R_+ \cup \{+\infty\}) \, \middle| \,
		\begin{array}{lll}
			q_1 &=& g'(q_2) \\
			q_2 &=& f'(q_1)
		\end{array}
	\right\} \,,
	$$
	where all the function are extended by there limits at the points at which they may not be defined (for instance $g'(+\infty) = \displaystyle\lim_{q \to \infty}g'(q)$, $f'(\rho) = \displaystyle\lim_{q \to \rho} f'(q)$).
	Moreover, the above extremas are achieved precisely on the same couples.
\end{lemma}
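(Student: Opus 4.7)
The plan is to use Lemma~\ref{lem:legendre} to carry out the inner infimum via Legendre duality, thereby reducing the statement to a one-dimensional optimisation that can be analysed by first-order conditions. Introduce the Legendre transform $g^*(q_1) \defeq \sup_{q_2 \geq 0}\{q_1 q_2 - g(q_2)\}$, so that $\inf_{q_2 \geq 0}\{g(q_2) - q_1 q_2\} = -g^*(q_1)$ and the outer problem becomes $\sup_{q_1 \in [0,\rho]} h(q_1)$ with $h(q_1) \defeq f(q_1) - g^*(q_1)$. Lemma~\ref{lem:legendre} asserts that $g^*$ is closed and convex with $\partial g^*(q_1) = \argmax_{q_2 \geq 0}\{q_1 q_2 - g(q_2)\}$. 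Since $g$ is $\rho$-Lipschitz, $g^*$ is finite on $[0,\rho]$, so $h$ is continuous on this compact interval and its supremum is attained at some $\bar q_1 \in [0,\rho]$.

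Because $g$ is strictly convex and differentiable, the maximiser in the definition of $g^*(q_1)$ is unique: it equals $(g')^{-1}(q_1)$ for $q_1 \in [g'(0),\rho)$, equals $0$ for $q_1 \leq g'(0)$, and, with the convention $g'(+\infty)\defeq\lim_{q_2\to\infty}g'(q_2)=\rho$, takes value in $\R_+ \cup \{+\infty\}$ at $q_1 = \rho$. Denote this unique maximiser by $q_2^\sharp(q_1)$; Lemma~\ref{lem:legendre} yields $\partial g^*(q_1) = \{q_2^\sharp(q_1)\}$, so that $g^*$ is differentiable on the interior of its domain with $(g^*)'(q_1) = q_2^\sharp(q_1)$. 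For an interior maximiser $\bar q_1 \in (0,\rho)$, the stationarity condition $h'(\bar q_1) = 0$ becomes $f'(\bar q_1) = q_2^\sharp(\bar q_1)$; setting $\bar q_2 \defeq f'(\bar q_1)$, the construction of $q_2^\sharp$ gives $g'(\bar q_2) = \bar q_1$, so $(\bar q_1,\bar q_2) \in \Gamma$, and $\psi(\bar q_1,\bar q_2) = h(\bar q_1) = \sup_{q_1 \in [0,\rho]} h(q_1)$.

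Conversely, every $(q_1,q_2) \in \Gamma$ has $q_2 = q_2^\sharp(q_1)$ by uniqueness of the Legendre maximiser, so $\psi(q_1,q_2) = h(q_1) \leq \sup h$, with equality if and only if $q_1$ is itself an outer maximiser. Together with the forward direction, this will yield both the equality of the two expressions in~\eqref{eq:sup_inf_no_lip} and the identification of the optimal couples. Boundary maximisers $\bar q_1 \in \{0,\rho\}$ will be handled by replacing the interior stationarity with the appropriate one-sided subdifferential inequality, $h'(0^+) \leq 0$ or $h'(\rho^-) \geq 0$, and using monotonicity of $f$ together with the extended convention $g'(+\infty)=\rho$ to place the resulting pair in $\Gamma \subset [0,\rho] \times (\R_+ \cup \{+\infty\})$.

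The main obstacle is the analysis at the boundary $q_1 = \rho$, especially when $\sup g'$ is not attained at any finite $q_2$ so that $q_2^\sharp(\rho) = +\infty$; there one must check carefully that the one-sided optimality inequality corresponds to a genuine element of the extended $\Gamma$ with $q_2 = +\infty$, and that the value $h(\rho)$ is correctly computed as a limit of $h(q_1)$ as $q_1 \to \rho^-$. The hypotheses that $g$ is strictly convex (ensuring uniqueness of $q_2^\sharp(q_1)$) and that $f$ is strictly increasing (ensuring a clean stationarity characterisation of the optimisers) are precisely what make the two optimiser sets coincide.
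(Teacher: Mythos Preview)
Your approach is essentially the same as the paper's: both reduce to maximising $h(q_1)=f(q_1)-g^*(q_1)$ over $[0,\rho]$, identify the inner minimiser via $\partial g^*(q_1)=\{q_2^\sharp(q_1)\}$ from Lemma~\ref{lem:legendre}, and then split into the interior case and the two boundary cases $q_1^*=0$, $q_1^*=\rho$. The converse direction (any $(q_1,q_2)\in\Gamma$ satisfies $\psi(q_1,q_2)=\inf_{q_2'}\psi(q_1,q_2')\le \sup\inf$) is also handled identically.

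One point you gloss over: in the interior case you assert that ``the construction of $q_2^\sharp$ gives $g'(\bar q_2)=\bar q_1$'', but by your own description $q_2^\sharp(q_1)=0$ when $q_1\le g'(0)$, and then $g'(\bar q_2)=g'(0)$ need not equal $\bar q_1$. The paper closes this gap explicitly: if $q_1^*<g'(0)$, then $g^*$ is constant (equal to $-g(0)$) on $[q_1^*,g'(0)]$, so strict monotonicity of $f$ forces $h(g'(0))>h(q_1^*)$, contradicting optimality of $q_1^*$. Equivalently, one can note that for $\bar q_1\in(0,\rho)$ convexity and strict monotonicity of $f$ give $f'(\bar q_1)>0$, whence $q_2^\sharp(\bar q_1)=f'(\bar q_1)>0$ and so $\bar q_1>g'(0)$. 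Either way, this is exactly where the strict monotonicity of $f$ that you flag at the end actually enters, and it should be made explicit.
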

\begin{proof}
	Let $q_1^*$ be a maximizer of $f-g^*$ over $[0,\rho]$. $q_1^*$ is well defined because $f$ is continuous and $g^*$ is continuous over $[0,\rho)$ and is either continuous at $\rho$ or goes to $+\infty$ at $\rho$ (this comes from the fact that $g^*$ is a closed convex function, see Lemma \ref{lem:legendre}).

\textbf{Case 1:} $0 < q_1^* < \rho$. By strict convexity of $g$, $\psi(q_1, \cdot)$ admits a unique minimizer $q_2^*$ and $(g^*)'(q_1^*) = q_2^*$ by Lemma \ref{lem:legendre}. Thus, the optimality condition at $q_1^*$ gives
	$$
	0 = f'(q_1^*) - (g^*)'(q_1^*) = f'(q_1^*) - q_2^* \,.
	$$
	The optimality of $q_2^*$ gives then $q_1^* \leq g'(q_2^*)$. Suppose that $q_1^* < g'(q_2^*)$. This is only possible when $q_2^* = 0$. Define $q_1' = g'(q_2^*)=g'(0)$. Remark that $g^*(q_1') = -g(0) =g^*(q_1^*)$. We supposed that $q_1' > q_1^*$ thus, by strict monotonicity of $f$, $f(q_1') - g^*(q_1') > f(q_1^*) - g^*(q_1^*)$ which contradict the optimality of $q_1^*$. We obtain therefore that $q_1^* = g'(q_2^*)$.
	\\

	\textbf{Case 2:} $q_1^* = 0$. The optimality condition gives now
	\begin{equation}\label{eq:opt_case2}
	0 \leq f'(q_1^* = 0) \leq q_2^* \,,
\end{equation}
where $q_2^*$ is again the unique minimizer of $\psi(q_1^*= 0,\cdot)= f(0) + g$. $g$ is strictly increasing, so $q_2^* = 0$. Therefore $q_2^* = 0 = f'(q_1^*=0)$, by \eqref{eq:opt_case2}. As before we have necessarily, by optimality of $q_2^*$ that $q_1^* = g'(q_2^*)$. 
	\\

	\textbf{Case 3:} $q_1^* = \rho$. 
	In that case $\argmin_{q_2 \geq 0} \{ g(q_2) - q_1^* q_2 \} = \emptyset$ because $g$ is strictly convex and $\rho$-Lipschitz. Lemma \ref{lem:legendre} gives then that $\partial g^* (\rho) = \emptyset$ which implies (see Theorem~23.3 from \cite{rockafellar2015convex}) that $(g^*)'(\rho^-) = + \infty$. 
	Since $q_1^* = \rho$ maximizes $f-g^*$, we necessarily have then $f'(\rho^-) = + \infty$.
	
	Using the slight abuse of notation explained in the Lemma, we have $f'(q_1^*) = + \infty = q_2^*$, where $q_2^* = +\infty$ is the unique ``minimizer'' of $\psi(q_1^*,\cdot)$, by strict convexity of $g$. By definition of $\rho$ we have also $g'(q_2^*) = g'(+\infty) = \rho = q_1^*$.
	\\

	We conclude from the tree cases above that the ``sup-inf'' in \eqref{eq:sup_inf_no_lip} is achieved, and that all the couples $(q_1^*,q_2^*)$ that achieve this ``sup-inf'' belong to $\Gamma$. Thus
	$$
	{\adjustlimits \sup_{q_1 \in [0, \rho]} \inf_{q_2 \geq 0}} \psi(q_1,q_2)
	\leq
	\sup_{(q_1,q_2)\in\Gamma}
	\psi(q_1,q_2) \,.
	$$

	Let now be $(q_1,q_2) \in \Gamma$. By convexity of $g$ we see easily that
	$
	\psi(q_1,q_2) = \inf_{q_2'} \psi(q_1,q_2')
	$.
	Thus, $\psi(q_1,q_2) \leq \sup_{q_1'} \inf_{q_2'} \psi(q_1',q_2')$.
	Therefore 
	$$
	\sup_{(q_1,q_2)\in\Gamma}
	\psi(q_1,q_2)
	\leq
	{\adjustlimits \sup_{q_1 \in [0, \rho]} \inf_{q_2 \geq 0}} \psi(q_1,q_2)
	\,.
	$$
	This concludes the proof of \eqref{eq:sup_inf_no_lip}.
	It remains to see that a couple $(q_1^*,q_2^*) \in \Gamma$ that achieves the supremum in \eqref{eq:sup_inf_no_lip} also achieves the ``sup-inf''. This simply follows from the fact that $\psi(q_1^*,q_2^*) = \inf_{q_2} \psi(q_1^*,q_2)$ and \eqref{eq:sup_inf_no_lip}.
\end{proof}

		\section{Concentration of free entropy and overlaps}

		\subsection{Concentration of the free entropy}\label{appendix_concentration}
The goal of this appendix is to prove that the free entropy of the interpolating model studied in Sec.~\ref{interp-est-problem} concentrates around its expectation. 
To simplify the notations 
we use  $C(\varphi, S, \alpha)$ for a generic non-negative constant depending {\it only} on $\varphi$, $S$ and $\alpha$ ($S$ is the supremum over the signal values). We will also use the notation $K=1+\max(\rho,r_{\rm max})$ for a constant (depending only on $\rho\le S^2$ and $\varphi$) that upper bounds both $R_1$ and $R_2$ given by \eqref{R1R2}. It is also 
understood that $n$ and $m$ are large enough and $m/n \to \alpha$.

\begin{thm}[Free entropy concentration]\label{concentrationtheorem}
	Under assumptions~\ref{hyp:bounded},~\ref{hyp:c2} and \ref{hyp:phi_gauss2} there exists a non-negative constant $C(\varphi, S, \alpha)$ such that the partition function \eqref{Zt} concentrates as
\begin{align}\label{fluctuation}
	\Var\Big( \frac{1}{n}\ln {\cZ}_{t,\epsilon}\Big)=\mathbb{E}\Big[\Big( \frac{1}{n}\ln \mathcal{Z}_{t,\epsilon} - \frac{1}{n}\mathbb{E}\ln \mathcal{Z}_{t,\epsilon} \Big)^2\Big] 
\leq 
\frac{C(\varphi, S, \alpha)}{n}\,.
\end{align}
\end{thm}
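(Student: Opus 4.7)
The plan is to express $F_n \defeq n^{-1}\ln\cZ_{t,\epsilon}$ as a deterministic function of the independent basic random vectors $\bX^* \sim P_0^{\otimes n}$, the jointly standard-Gaussian families $(\bW^*,\bV,\boldsymbol{\Phi},\bZ,\bZ')$, and $\bA \sim P_A^{\otimes m}$, after substituting the explicit representations $Y_{t,\mu} = \varphi(S_{t,\mu},\bA_\mu)+Z_\mu$ and $Y'_{t,i} = \sqrt{R_1(t,\epsilon)}\,X^*_i + Z'_i$ (recall we may take $\Delta=1$). The law of total variance then gives
$$
\text{Var}(F_n) \leq \text{Var}\bigl(\E[F_n \mid \bX^*,\bA]\bigr) + \E\bigl[\text{Var}(F_n \mid \bX^*,\bA)\bigr],
$$
and the two contributions will be treated by different tools: the Gaussian Poincar\'e inequality for the inner (conditional, Gaussian) variance, and an Efron--Stein bounded-differences estimate for the outer one.

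For the inner term, the Poincar\'e inequality reduces the bound to controlling $\E[\|\nabla_{\rm Gauss} F_n\|^2]$. Using $\partial_g \ln \cZ_{t,\epsilon} = -\langle \partial_g \cH_{t,\epsilon}\rangle_{n,t,\epsilon}$ together with the contribution obtained when $g$ enters through the observations, each partial derivative $\partial_g F_n$ is a Gibbs average of quantities involving $u'_{Y_{t,\mu}}(s_{t,\mu})$, $u''_{Y_{t,\mu}}(s_{t,\mu})$, and (for the scalar channel) linear-in-$\bx$ terms, weighted by interpolation prefactors of order $1/\sqrt{n}$. Under~\ref{hyp:c2}, the bound $|u'_y(x)|\le(2\sup|\varphi|+|y-\varphi|)\sup|\varphi'|$ established in Appendix~\ref{appendix-boundedness-uterms}, and the analogous bound for $u''$, have squares with uniformly bounded expectation; Cauchy--Schwarz then yields $\E[(\partial_g F_n)^2]\le C(\varphi,S,\alpha)/n^2$ for each Gaussian coordinate. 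Summing over the $O(nm)=O(n^2)$ entries of $\boldsymbol{\Phi}$ and the $O(n)$ entries of the other Gaussian vectors gives $\E[\|\nabla_{\rm Gauss}F_n\|^2]\le C(\varphi,S,\alpha)/n$, as required.

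For the outer term, apply Efron--Stein on $\bX^*$ and $\bA$. Replacing a single $X_i^*$ by an independent copy $X_i^{*\prime}$ (both in $[-S,S]$ by~\ref{hyp:bounded}) perturbs $\ln \cZ_{t,\epsilon}$ only through the $m$ Hamiltonian terms that depend on $[\boldsymbol{\Phi}\bX^*]_\mu$ and through the single scalar-channel term $Y'_{t,i}$; combining $|u'_y|\le(2\sup|\varphi|+|Z_\mu|)\sup|\varphi'|$ with $|X_i^*-X_i^{*\prime}|\le 2S$, each such change is of order $1/\sqrt{n}$, so the full Efron--Stein sum is $C(\varphi,S,\alpha)/n$. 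The same argument applied to $\bA$, using only that $\varphi$ is bounded, yields an $O(1/n)$ contribution. The main technical obstacle lies in the careful bookkeeping of $\partial_{\Phi_{\mu i}}F_n$ (and similarly for $W_\mu^*, V_\mu$): the entry $\Phi_{\mu i}$ enters $F_n$ both explicitly via $s_{t,\mu}$ (through the $\bx$-integration in $\cZ_{t,\epsilon}$) and implicitly through the observation $Y_{t,\mu}$ via $S_{t,\mu}$, so the derivative contains two correlated contributions that must be combined; both, however, carry a $1/\sqrt{n}$ prefactor with coefficients bounded by $K=1+\max(\rho,r_{\rm max})$ uniformly in $t\in[0,1]$ and $\epsilon\in\cB_n$, so that the final constant $C(\varphi,S,\alpha)$ is independent of $t$ and $\epsilon$.
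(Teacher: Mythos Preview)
Your overall architecture---law of total variance, Gaussian Poincar\'e conditionally on $(\bX^*,\bA)$, then Efron--Stein on $(\bX^*,\bA)$---is exactly the paper's strategy, and the Poincar\'e step and the $\bA$ step are fine. The $\bX^*$ bounded-differences step, however, has a genuine gap.

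Replacing a single $X_i^*$ perturbs every $S_{t,\mu}$ by $\sqrt{(1-t)/n}\,\Phi_{\mu i}(X_i^*-X_i^{*\prime})$, so each of the $m$ Hamiltonian terms changes by $O(n^{-1/2}|\Phi_{\mu i}|)$, exactly the ``$1/\sqrt n$'' you wrote. Summing over $\mu$ gives a change in $\ln\cZ_{t,\epsilon}$ of order $n^{-1/2}\sum_{\mu}|\Phi_{\mu i}|=O(\sqrt n)$, hence $|\Delta_i F_n|=O(n^{-1/2})$. But then the Efron--Stein sum is $\sum_{i=1}^n\E[(\Delta_i F_n)^2]=O(1)$, not $O(1/n)$; the pointwise bound on $|u'_y|$ and $|X_i^*-X_i^{*\prime}|\le 2S$ cannot do better. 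So the arithmetic leading to your ``full Efron--Stein sum is $C/n$'' is off by a full factor of $n$.

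The missing idea is what the paper does in Lemma~\ref{lem:concentration_X}: the bounded-differences are applied to $g(\bX^*)=\E_\Theta[F_n]$ \emph{after} the Gaussian expectation (including over $\boldsymbol{\Phi}$) has been taken, and one interpolates $\psi(s)=g(s\bX^*+(1-s)\bX^{*(j)})$. The derivative $\psi'(s)$ is a sum over $\mu$ of terms of the form $n^{-3/2}\,\E[\Phi_{\mu j}\cdot(\text{bounded})]$; Gaussian integration by parts w.r.t.\ $\Phi_{\mu j}$ trades the $\Phi_{\mu j}$ for a further $\partial_{\Phi_{\mu j}}$, which pulls out another $n^{-1/2}$ prefactor (from $\partial_{\Phi_{\mu j}}S_{t,\mu}$ or $\partial_{\Phi_{\mu j}}s_{t,\mu}$). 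This upgrades the per-coordinate bound from $O(n^{-1/2})$ to $|\psi'(s)|\le C/n$, and only then does Efron--Stein yield $\sum_i(C/n)^2=O(1/n)$. Without this Gaussian IBP step the $\bX^*$ contribution does not close.
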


The remaining of this appendix is dedicated to the proof of Theorem~\ref{concentrationtheorem}.
We first recall some set-up and notation for the convenience of the reader.
Recall that the interpolating Hamiltonian \eqref{interpolating-ham}--\eqref{stmu} is 
\begin{align}\label{interp-ham}
	\cH_{t,\epsilon}(\bx,\bw;\bY_t,\bY_t',\boldsymbol{\Phi},\bV)=- \sum_{\mu=1}^{m}
	\ln P_{\rm out} ( Y_{t,\mu} |s_{t, \mu}(\bx, w_\mu) ) + \frac{1}{2} \sum_{i=1}^{n}(Y'_{t,i}  - \sqrt{R_1(t)}\, x_i)^2
\end{align}
where 
\begin{align*}
s_{t, \mu}(\bx, w_\mu)\defeq \sqrt{\frac{1-t}{n}}[\boldsymbol{\Phi} \bx]_\mu  + k_1(t) V_{\mu} + k_2(t) w_{\mu}\,, 
\quad 
k_1(t) \defeq  \sqrt{R_2(t)}\,,
\quad
k_2(t) \defeq  \sqrt{\rho t - R_2(t)+2s_n}\,.
\end{align*}
We find it convenient to use the random function representation \eqref{measurements} for the interpolating model, namely 
\begin{align*}
\begin{cases}
Y_{t,\mu} =  \varphi\Big(\sqrt{\frac{1-t}{n}}[\boldsymbol{\Phi} \bX^*]_\mu+ k_1(t) V_{\mu} + k_2(t) W_{\mu}^*  , \bA_\mu \Big) + Z_\mu\,,\\
Y_{t, i}^\prime = \sqrt{R_1(t)}\, X_i^* + Z_i^\prime\,.
\end{cases}
\end{align*}
In this representation the random variables
$(\bA_\mu)_{1 \leq \mu \leq m} \iid P_A$ are arbitrary, and $(Z_\mu)_{1 \leq \mu \leq m} \iid {\cal N}(0,1)$, $(Z'_i)_{1 \leq i \leq n} \iid {\cal N}(0,1)$.
We have
\begin{align}
P_{\rm out}(Y_{t,\mu} | s_{t, \mu}(\bx, w_\mu))
&
= \int dP_A(\ba_\mu)\, \frac{1}{\sqrt{2\pi}}
\exp\Big\{ -\frac{1}{2}\Big(Y_{t,\mu} - \varphi(s_{t,\mu}(\bx,w_\mu), \ba_\mu) \Big)^2\Big\}
\nonumber \\ 
&
= \int dP_A(\ba_\mu)\, \frac{1}{\sqrt{2\pi}}\exp\Big\{-\frac{1}{2}\Big(\Gamma_{t,\mu}(\bx, w_\mu, \ba_\mu)+Z_{\mu}\Big)^2 \Big\}
\label{RFrep}
\end{align}
where, using the random function representation,  
\begin{align}
	\Gamma_{t,\mu}(\bx, &w_\mu, \ba_\mu) \label{3}
	\\
	&=  \varphi\Big(\sqrt{\frac{1-t}{n}}[\boldsymbol{\Phi} \bX^*]_\mu+ k_1(t) V_{\mu} + k_2(t) W_\mu^* , \bA_\mu \Big)  
- \varphi\Big(\sqrt{\frac{1-t}{n}}[\boldsymbol{\Phi} \bx]_\mu + k_1(t) V_{\mu} + k_2(t) w_\mu , \ba_\mu \Big)
\,.	 \nonumber
\end{align}
From \eqref{interp-ham}, \eqref{RFrep}, \eqref{3} we can express the free entropy of the interpolating  model as 
\begin{align}\label{entro-rep}
	\frac{1}{n}\ln \mathcal{Z}_{t,\epsilon} =\frac{1}{n} \ln \int  dP_0(\bx)dP_A(\ba)\mathcal{D}\bw \, e^{- \cH_{t,\epsilon}(\bx,\bw, \ba)} - \frac{m}{2n} \ln(2\pi)
\end{align}
where $\mathcal{D}\bw$ denote the standard $m$-dimensional Gaussian measure and where the Hamiltonian $\cH_{t,\epsilon}$ is re-expressed as
\begin{align}
\cH_{t,\epsilon}(\bx,\bw, \ba) =
	\frac{1}{2}\sum_{\mu=1}^{m} \big(\Gamma_{t,\mu}(\bx, w_\mu, \ba_\mu)+Z_\mu\big)^2
	+ 
	\frac{1}{2} \sum_{i=1}^{n}\big(\sqrt{R_1(t)}\, (X_i^*-x_i) + Z_i^\prime \big)^2
	\,.
\end{align}
The interpretation here is that $\bx, \bw, \ba$ are {\it annealed} variables and $\boldsymbol{\Phi}, \bV, \bA, \bY_t, \bY^\prime_t, \bX^*, \bW^*$, or equivalently $\boldsymbol{\Phi}, \bV, \bA, \bZ, \bZ^\prime, \bX^*, \bW^*$ are {\it quenched}. The inference problem is to recover $\bX^*, \bW^*$ given $\boldsymbol{\Phi}, \bV, \bY_t, \bY^\prime_t$. The free entropy can be further re-expressed as
\begin{align}\label{Z-Z}
\frac{1}{n}\ln \mathcal{Z}_{t,\epsilon} = \frac{1}{n}\ln\hat{\mathcal{Z}}_{t,\epsilon} - \frac{1}{2n}\sum_{\mu=1}^m Z_\mu^2 
-\frac{1}{2n}\sum_{i=1}^n Z_i^{\prime 2} - \frac{m}{2n} \ln(2\pi)
\end{align}
where
\begin{align}\label{newZ}
	\frac{1}{n}\ln\hat{\mathcal{Z}}_{t,\epsilon} &= \frac{1}{n}\ln \int  dP_0(\bx)dP_A(\ba)\mathcal{D}\bw \, e^{- \hat{\cH}_{t,\epsilon}(\bx,\bw, \ba)}\,,\\
	\hat{\cH}_{t,\epsilon}(\bx,\bw, \ba) &= 
\frac{1}{2}\sum_{\mu=1}^{m} \Big\{\Gamma_{t,\mu}(\bx, w_\mu, \ba_\mu)^2 
+ 2 Z_\mu \Gamma_{t,\mu}(\bx, w_\mu, \ba_\mu)\Big\} \nonumber
\\
&\qquad+ \frac{1}{2} \sum_{i=1}^{n}\Big\{R_1(t)( X_i^*- x_i)^2 + 2 Z_i^\prime \sqrt{R_1(t)}( X_i^*-  x_i)\Big\} \,.\label{explicit-ham}
\end{align}
In order to prove Theorem~\ref{concentrationtheorem} it remains to show that there exists a constant $C(\varphi, S, \alpha)>0$ such that $\Var( \ln \hat{\cZ}_{t,\epsilon}/n) \leq C(\varphi, S, \alpha)/n$.
This concentration property together with \eqref{Z-Z} implies \eqref{fluctuation}. 

%
We will first show concentration w.r.t.\ all Gaussian variables $\boldsymbol{\Phi}, \bV, \bZ, \bZ^\prime, \bW^*$ thanks to the classical Gaussian Poincaré inequality,
then the concentration w.r.t.\ $\bA$ and finally the one w.r.t.\ $\bX^*$ using classical bounded differences arguments. The order in which we prove the concentrations matters. We recall here these two variances bounds. The reader can refer to \cite{boucheron2004concentration} (Chapter 3) for detailed proofs of these statements.
\begin{proposition}[Gaussian Poincaré inequality]\label{poincare}
	Let $\bU = (U_1, \dots, U_N)$ be a vector of $N$ independent standard normal random variables. Let $g: \mathbb{R}^N \to \mathbb{R}$ be a continuously differentiable function. Then
 \begin{align}
	 \Var(g(\bU)) \leq \E \big[\| \nabla g (\bU) \|^2 \big] \,.
 \end{align}
\end{proposition}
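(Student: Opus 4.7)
The plan is to first establish the scalar ($N=1$) case by a spectral argument and then extend to arbitrary $N$ by tensorization (induction on $N$). For the scalar case I would use the Hermite polynomial decomposition: the family $(H_k)_{k\geq 0}$ is orthogonal in $L^2(\gamma)$ with $\E[H_k(U)^2] = k!$ and satisfies $H_k'=k\,H_{k-1}$. Expanding $g(U) = \sum_k a_k H_k(U)$ gives, by Parseval, $\Var(g(U)) = \sum_{k\geq 1} a_k^2 \, k!$, while differentiating term by term yields $\E[g'(U)^2] = \sum_{k\geq 1} a_k^2 \, k\cdot k!$. Since $k \geq 1$ in every surviving term, the inequality $\Var(g(U)) \leq \E[g'(U)^2]$ follows immediately, and is in fact saturated only on linear functions.

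For the induction step from $N-1$ to $N$, I would write the conditional variance decomposition
\begin{equation*}
\Var(g(\bU)) = \E\big[\Var\big(g(\bU)\,\big|\,U_2,\dots,U_N\big)\big] + \Var\big(\E\big[g(\bU)\,\big|\,U_2,\dots,U_N\big]\big).
\end{equation*}
The first term is bounded by $\E[(\partial_1 g(\bU))^2]$ by applying the scalar case conditionally to the map $u_1 \mapsto g(u_1,U_2,\dots,U_N)$. The second term is the variance of $h(U_2,\dots,U_N) := \E[g(U_1,U_2,\dots,U_N)\,|\,U_2,\dots,U_N]$, to which the inductive hypothesis on $N-1$ standard Gaussians applies and yields $\sum_{i=2}^N \E[(\partial_i h)^2]$. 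Differentiation under the Gaussian integral shows that $\partial_i h$ is the conditional expectation of $\partial_i g$, so Jensen's inequality gives $\E[(\partial_i h)^2] \leq \E[(\partial_i g(\bU))^2]$. Summing over $i$ produces $\Var(g(\bU)) \leq \sum_{i=1}^N \E[(\partial_i g(\bU))^2] = \E[\|\nabla g(\bU)\|^2]$, closing the induction.

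The only real technical hurdle is justifying the Hermite expansion and the differentiation-under-the-integral step for a function that is merely assumed to be $C^1$. If $\E[\|\nabla g(\bU)\|^2]=+\infty$ there is nothing to prove, so one may assume this expectation is finite. I would then reduce to the smooth, compactly supported case by convolving $g$ with a Gaussian mollifier and truncating away from the origin; in that class the Hermite expansion converges in $H^1(\gamma)$ and every interchange of sum, derivative, and expectation is manifestly legitimate. Passing to the limit back to the general $C^1$ case is then routine, combining Fatou's lemma on the variance side with dominated convergence (controlled by $\|\nabla g\|^2$) on the gradient side.
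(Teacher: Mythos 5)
Your proof is correct, but note that the paper does not actually prove this proposition: it simply cites Chapter 3 of Boucheron--Lugosi--Massart, where the Gaussian Poincar\'e inequality is obtained by applying the Efron--Stein inequality to sums of i.i.d.\ Rademacher variables and invoking the central limit theorem. Your route is therefore genuinely different and self-contained: the Hermite expansion in dimension one identifies the spectral gap of the Ornstein--Uhlenbeck generator (and shows equality holds exactly for affine functions), and the tensorization step via the law of total variance, the conditional scalar inequality, and Jensen applied to $\partial_i h=\E[\partial_i g\mid U_2,\dots,U_N]$ is the standard and correct way to pass to product measures. What the reference's approach buys is that it avoids any spectral machinery and works verbatim for other product measures satisfying an Efron--Stein-type bound; what yours buys is sharpness information and a cleaner identification of the extremizers. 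The only place requiring care is the one you flag: for $g$ merely $\cC^1$ with $\E\|\nabla g(\bU)\|^2<\infty$ you must justify the Hermite expansion, the term-by-term differentiation, and the interchange $\partial_i\,\E[\,\cdot\mid U_{2:N}]=\E[\partial_i\,\cdot\mid U_{2:N}]$; your mollification-plus-truncation reduction handles this, though when passing to the limit you should note that finiteness of $\E\|\nabla g\|^2$ together with the inequality for the approximants is what guarantees $g\in L^2(\gamma)$ (via Fatou after centering), so the variance on the left-hand side is well defined. With that caveat made explicit, the argument is complete.
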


\begin{proposition}[Bounded difference]\label{bounded_diff}
	Let $\mathcal{U} \subset \R$.
	Let $g: \mathcal{U}^N \to \mathbb{R}$ a function 
that satisfies the bounded difference property, i.e., there exists some constants $c_1, \dots, c_N \geq 0$ such that
$$
\sup_{\substack{u_1, \ldots, u_N\in \mathcal{U}^N \\ u_i' \in \mathcal{U}}}
\vert g(u_1, \dots, u_i, \ldots, u_N) - g(u_1, \dots, u_i', \ldots, u_N)\vert \leq c_i \quad \text{for all} \quad 1 \leq i \leq N \,.
$$ 
Let $\bbf{U}=(U_1, \dots, U_N)$ be a vector of $N$ independent random variables that take values in $\mathcal{U}$. Then
 \begin{align}
	 \Var(g(\bU)) \leq \frac{1}{4} \sum_{i=1}^N c_i^2 \,.
 \end{align}
\end{proposition}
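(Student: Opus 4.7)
The plan is to prove Proposition~\ref{bounded_diff} via the standard martingale (Doob) decomposition of $g(\bU) - \E[g(\bU)]$ combined with the elementary fact that a bounded random variable has variance at most one quarter of its range squared. Efron--Stein alone would yield the factor $1/2$ rather than $1/4$, so the sharper martingale route is needed.

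First, I would introduce the filtration $\cF_0 = \{\emptyset,\Omega\}$ and $\cF_i = \sigma(U_1,\dots,U_i)$ for $1 \le i \le N$, and set
\begin{equation*}
	D_i \defeq \E[g(\bU)\,|\,\cF_i] - \E[g(\bU)\,|\,\cF_{i-1}]\,.
\end{equation*}
Since $(D_i)$ are orthogonal martingale differences and $g(\bU) - \E[g(\bU)] = \sum_{i=1}^N D_i$, Pythagoras yields $\Var(g(\bU)) = \sum_{i=1}^N \E[D_i^2]$, so it suffices to show $\E[D_i^2\,|\,\cF_{i-1}] \le c_i^2/4$ almost surely.

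The key step uses independence. Define
\begin{equation*}
	h_i(u) \defeq \E\big[g(U_1,\dots,U_{i-1},u,U_{i+1},\dots,U_N) \,\big|\,\cF_{i-1}\big]\,,
\end{equation*}
which makes sense by independence of the $U_j$'s: conditioning on $\cF_{i-1}$ simply freezes $U_1,\dots,U_{i-1}$ and leaves $U_{i+1},\dots,U_N$ integrated against their (unchanged) joint law. Then $\E[g(\bU)\,|\,\cF_i] = h_i(U_i)$ and $\E[g(\bU)\,|\,\cF_{i-1}] = \E[h_i(U_i)\,|\,\cF_{i-1}]$, so $D_i = h_i(U_i) - \E[h_i(U_i)\,|\,\cF_{i-1}]$. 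The bounded difference hypothesis gives, for every $u,u' \in \mathcal{U}$,
\begin{equation*}
	|h_i(u) - h_i(u')| \le \E\big[\,|g(\dots,u,\dots) - g(\dots,u',\dots)|\,\big|\,\cF_{i-1}\big] \le c_i\,,
\end{equation*}
so $h_i$ takes values in some interval $[a_i(\cF_{i-1}),\,a_i(\cF_{i-1})+c_i]$ of length at most $c_i$.

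The main (and essentially only) obstacle is then the classical one-variable estimate: any random variable $X$ taking values in an interval of length $\ell$ satisfies $\Var(X) \le \ell^2/4$. This is immediate since $\Var(X) \le \E[(X-m)^2]$ for $m$ the midpoint of the interval, and $|X-m| \le \ell/2$. Applying this conditionally to $h_i(U_i)$ gives $\E[D_i^2\,|\,\cF_{i-1}] = \Var(h_i(U_i)\,|\,\cF_{i-1}) \le c_i^2/4$. Taking expectation, summing over $i$, and invoking the Pythagorean identity above concludes
\begin{equation*}
	\Var(g(\bU)) = \sum_{i=1}^N \E[D_i^2] \le \frac{1}{4}\sum_{i=1}^N c_i^2\,.
\end{equation*}
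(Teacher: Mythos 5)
Your proof is correct. Note first that the paper does not actually prove this proposition: it is quoted as a standard tool, with the proof delegated to Boucheron--Lugosi--Massart (Chapter 3), so there is no in-paper argument to compare against. Your martingale route is the classical one and is complete: the Doob decomposition $D_i = \E[g(\bU)\,|\,\mathcal{F}_i]-\E[g(\bU)\,|\,\mathcal{F}_{i-1}]$, orthogonality of the increments, the identification $\E[g(\bU)\,|\,\mathcal{F}_i]=h_i(U_i)$ which genuinely uses independence, the conditional oscillation bound $|h_i(u)-h_i(u')|\le c_i$, and the midpoint estimate $\Var(X)\le \ell^2/4$ for a variable confined to an interval of length $\ell$ all fit together as you state; the only implicit points are that bounded differences force $g$ to be bounded (hence $g(\bU)\in L^2$, so the Pythagorean identity is legitimate) and the routine check $\E[D_iD_j]=0$, both of which are immediate. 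One small quibble with your motivating remark: the reference's own proof uses Efron--Stein in its conditional-variance form, $\Var(g(\bU))\le \sum_i \E\big[\Var\big(g(\bU)\,\big|\,(U_j)_{j\neq i}\big)\big]$, and then the same interval-of-length-$c_i$ bound applied coordinatewise, which already yields the sharp constant $1/4$; it is only the symmetrized form $\tfrac12\sum_i\E[(g-g^{(i)})^2]$ that degrades to $1/2$. So your martingale argument is a valid alternative, not a necessity for the constant, and the Efron--Stein route is arguably shorter since it avoids filtrations altogether.
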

\subsubsection{Concentration with respect to the Gaussian random variables $\bZ$, $\bZ'$, $\bV$, $\bW^*$, $\bbf\Phi$}
\begin{lemma} \label{lem:concentration_gauss}
Let $\mathbb{E}_{G}$ denotes the joint expectation w.r.t.\ $\bZ, \bZ', \bV, \bW^*, \bbf{\Phi}$ only.
	There exists a constant $C(\varphi, S, \alpha)>0$ such that
	\begin{equation}
		\E \Big[\Big(
				\frac{1}{n} \ln \hat{\cZ}_{t,\epsilon} - \frac{1}{n} \E_{G}\ln \hat{\cZ}_{t,\epsilon} 
		\Big)^2\Big] \leq \frac{C(\varphi, S, \alpha)}{n} \,.
	\end{equation}
\end{lemma}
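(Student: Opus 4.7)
The strategy is to apply the Gaussian Poincaré inequality (Proposition~\ref{poincare}) to $g \defeq \frac{1}{n}\ln \hat{\mathcal{Z}}_{t,\epsilon}$ viewed as a function of the concatenated standard Gaussian vector $(\bZ,\bZ',\bV,\bW^*,\bbf{\Phi}) \in \R^{m+n+m+m+mn}$. Since $\hat{\cH}_{t,\epsilon}$ depends smoothly on all of these variables, the identity $\partial g = -\frac{1}{n}\langle \partial \hat{\cH}_{t,\epsilon}\rangle_{n,t,\epsilon}$ holds, and the Poincaré inequality gives
\begin{equation*}
\Var_G\Big( \frac{1}{n}\ln \hat{\mathcal{Z}}_{t,\epsilon}\Big) \leq \E_G \Big[ \|\nabla g\|^2\Big] = \E_G\sum_{u}\Big(\frac{1}{n}\langle \partial_u \hat{\cH}_{t,\epsilon}\rangle_{n,t,\epsilon}\Big)^2,
\end{equation*}
where $u$ runs over all Gaussian coordinates. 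Everything then reduces to producing uniform bounds on the partial derivatives $\partial_u \hat{\cH}_{t,\epsilon}$.

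I would compute the six families of derivatives in turn. Using \eqref{explicit-ham}--\eqref{3}, I get $\partial_{Z_\mu}\hat{\cH}_{t,\epsilon}=\Gamma_{t,\mu}$, $\partial_{Z_i'}\hat{\cH}_{t,\epsilon}=\sqrt{R_1(t)}(X_i^*-x_i)$, and for the remaining variables the chain rule produces, schematically, $(\Gamma_{t,\mu}+Z_\mu)$ times $k_1(t)\big[\varphi'-\varphi'\big]$ (for $V_\mu$), $k_2(t)\varphi'$ at the quenched argument (for $W_\mu^*$, where only the first $\varphi$ in \eqref{3} depends on it), and $\sqrt{(1-t)/n}\big[X_i^*\varphi'-x_i\varphi'\big]$ (for $\Phi_{\mu i}$). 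Under \ref{hyp:bounded} and \ref{hyp:c2} the quantities $|\Gamma_{t,\mu}|\le 2\|\varphi\|_\infty$, $|X_i^*-x_i|\le 2S$, $|\varphi'|\le\|\varphi'\|_\infty$ are all bounded by constants, and $R_1(t),k_1(t)^2,k_2(t)^2 \le K$.

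Putting these bounds together, each squared derivative is of order at most $n^{-2}(1+Z_\mu^2)$ for the $Z_\mu,V_\mu,W_\mu^*$ coordinates, of order $n^{-2}$ for the $Z_i'$ coordinates, and of order $n^{-3}(1+Z_\mu^2)$ for the $\Phi_{\mu i}$ coordinates (the extra $1/n$ coming from the prefactor $\sqrt{(1-t)/n}$). Summing gives $m+n+m+m = O(n)$ terms at $O(n^{-2})$ and $mn = O(n^2)$ terms at $O(n^{-3})$, so the total expected squared gradient is bounded by $C(\varphi,S,\alpha)/n$ after taking expectation (using $\E Z_\mu^2=1$). Plugging back into Poincaré yields the claimed bound.

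\textbf{Main obstacle.} None of the individual estimates is subtle; the only point requiring care is the $\Phi_{\mu i}$ contribution, where $mn$ derivatives must be controlled simultaneously. The extra factor $n^{-1/2}$ inside $\partial_{\Phi_{\mu i}}\Gamma_{t,\mu}$ is exactly what converts the $mn$-fold sum into an $O(1/n)$ bound and must not be overlooked. Beyond this bookkeeping, the argument is a direct application of Gaussian Poincaré, and the proof can then be completed in the subsequent lemmas by successively handling the non-Gaussian variables $\bA$ and $\bX^*$ via the bounded-differences inequality (Proposition~\ref{bounded_diff}), whose increments are bounded using the same $\|\varphi\|_\infty$ and $S$ bounds.
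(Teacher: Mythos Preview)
Your approach is correct and uses the same core tool as the paper, namely the Gaussian Poincar\'e inequality (Proposition~\ref{poincare}) together with pointwise bounds on $\partial_u \hat{\cH}_{t,\epsilon}$ coming from $\|\varphi\|_\infty$, $\|\varphi'\|_\infty$, $S$, and $K$. The one organizational difference is that the paper splits the argument into two sub-lemmas: first concentrating with respect to $(\bZ,\bZ')$ alone (Lemma~\ref{lem:concentration_gauss1}), and then applying Poincar\'e to $g=\E_{\bZ,\bZ'}\frac{1}{n}\ln\hat{\cZ}_{t,\epsilon}$ as a function of $(\bV,\bW^*,\bbf{\Phi})$ (Lemma~\ref{lem:concentration_gauss2}). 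The only thing this buys is that the $|Z_\mu|$ factors appearing in the gradient bounds for $V_\mu,W_\mu^*,\Phi_{\mu i}$ get replaced by the deterministic constant $\E|Z_\mu|=\sqrt{2/\pi}$ before squaring, so all gradient bounds in the second step are deterministic. Your single-pass version carries the random $(1+Z_\mu^2)$ factor through and takes its expectation at the end, which is just as valid and arguably more direct; the final estimate is the same $C(\varphi,S,\alpha)/n$.
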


Lemma~\ref{lem:concentration_gauss} follows directly from Lemmas~\ref{lem:concentration_gauss1} and~\ref{lem:concentration_gauss2} below.

\begin{lemma} \label{lem:concentration_gauss1}
Let $\mathbb{E}_{\bZ, \bZ^\prime}$ denotes the expectation w.r.t.\ $\bZ, \bZ^\prime$ only.
	There exists a constant $C(\varphi, S, \alpha)>0$ such that
	\begin{equation}
		\E \Big[\Big(
				\frac{1}{n} \ln \hat{\cZ}_{t,\epsilon} - \frac{1}{n} \E_{\bZ, \bZ'} \ln \hat{\cZ}_{t,\epsilon} 
		\Big)^2\Big] \leq \frac{C(\varphi, S, \alpha)}{n} \,.
	\end{equation}
\end{lemma}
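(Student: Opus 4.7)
The plan is to apply the Gaussian Poincaré inequality (Proposition~\ref{poincare}) to $\frac{1}{n}\ln\hat{\mathcal{Z}}_{t,\epsilon}$ viewed as a $\mathcal{C}^1$ function of the i.i.d.\ standard Gaussian vector $(\bZ,\bZ')\in\mathbb{R}^{m+n}$, with all other randomness $(\boldsymbol{\Phi},\bV,\bW^*,\bA,\bX^*)$ frozen. Taking the expectation of the resulting conditional variance inequality over those remaining variables will yield the claim.

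First I would compute the relevant partial derivatives. From the explicit form \eqref{explicit-ham} of $\hat{\mathcal{H}}_{t,\epsilon}$, the variables $Z_\mu$ and $Z_i'$ enter linearly, and differentiation under the integral sign in \eqref{newZ} gives
\[
	\frac{\partial}{\partial Z_\mu}\,\frac{1}{n}\ln\hat{\mathcal{Z}}_{t,\epsilon}=-\frac{1}{n}\big\langle \Gamma_{t,\mu}(\bx,w_\mu,\ba_\mu)\big\rangle_{\!t,\epsilon},
	\qquad
	\frac{\partial}{\partial Z_i'}\,\frac{1}{n}\ln\hat{\mathcal{Z}}_{t,\epsilon}=-\frac{\sqrt{R_1(t)}}{n}\big\langle X_i^*-x_i\big\rangle_{\!t,\epsilon},
\]
where $\langle-\rangle_{t,\epsilon}$ is the Gibbs bracket associated with $\hat{\mathcal{H}}_{t,\epsilon}$ (acting on $(\bx,\bw,\ba)$).

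Next I would bound the squared-norm of the gradient by uniform pointwise bounds on the integrands. Under \ref{hyp:c2}, $\varphi$ is bounded, so \eqref{3} gives $|\Gamma_{t,\mu}(\bx,w_\mu,\ba_\mu)|\le 2\|\varphi\|_\infty$ pointwise in $(\bx,w_\mu,\ba_\mu)$, hence $\langle\Gamma_{t,\mu}\rangle_{t,\epsilon}^2\le 4\|\varphi\|_\infty^2$. Under \ref{hyp:bounded}, $|X_i^*-x_i|\le 2S$ almost surely, so $\langle X_i^*-x_i\rangle_{t,\epsilon}^2\le 4S^2$. Since $R_1(t)\le K$, this yields
\[
	\Big\|\nabla_{(\bZ,\bZ')}\,\tfrac{1}{n}\ln\hat{\mathcal{Z}}_{t,\epsilon}\Big\|^2
	\le \frac{1}{n^2}\!\left(4m\|\varphi\|_\infty^2+4KS^2 n\right)
	\le \frac{C(\varphi,S,\alpha)}{n},
\]
using $m/n\to\alpha$ to absorb the first term.

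Finally I would apply Proposition~\ref{poincare} conditionally on $(\boldsymbol{\Phi},\bV,\bW^*,\bA,\bX^*)$ (the Gaussians $\bZ,\bZ'$ being independent of these, conditioning does not destroy their distribution), to get
\[
	\operatorname{Var}_{\bZ,\bZ'}\!\Big(\tfrac{1}{n}\ln\hat{\mathcal{Z}}_{t,\epsilon}\Big)\le \frac{C(\varphi,S,\alpha)}{n}\quad\text{almost surely,}
\]
and then take the full expectation to conclude. There is no serious obstacle here: all bounds on $\varphi$, $\Gamma_{t,\mu}$, $R_1(t)$ and $X_i^*-x_i$ are pointwise and deterministic once \ref{hyp:bounded}-\ref{hyp:c2} are assumed, so Poincaré applies in a textbook manner; the only mild care needed is to justify differentiation under the Gibbs measure, which follows from dominated convergence using the same uniform bounds.
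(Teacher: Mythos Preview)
Your proposal is correct and follows essentially the same approach as the paper: apply the Gaussian Poincar\'e inequality to $\frac{1}{n}\ln\hat{\mathcal{Z}}_{t,\epsilon}$ conditionally on all non-$(\bZ,\bZ')$ randomness, bound the partial derivatives via $|\Gamma_{t,\mu}|\le 2\sup|\varphi|$ and $|X_i^*-x_i|\le 2S$, and then take the outer expectation. The computations and bounds match the paper's line by line.
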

\begin{proof}
We consider here $g = \ln\hat{\mathcal{Z}}_{t,\epsilon}/n$ only as a function of $\bZ$ and $\bZ^\prime$ and work conditionally on all other random variables. 
We have
\begin{align}\label{gra}
\Vert \nabla g\Vert^2 = 
\sum_{\mu=1}^m \Big|\frac{\partial g}{\partial Z_\mu}\Big|^2  
+ 
\sum_{i=1}^n \Big| \frac{\partial g}{\partial Z_i^\prime}\Big|^2.
\end{align} 
Each of these partial derivatives are of the form 
$|\partial_u g| = |n^{-1} \langle \partial_u \hat{\mathcal{H}}_{t,\epsilon}\rangle_{\hat{\mathcal{H}}_{t,\epsilon}}|$
where the Gibbs bracket $\langle - \rangle_{\hat{\mathcal{H}}_{t,\epsilon}}$ pertains to the effective Hamiltonian \eqref{explicit-ham}.
We find 
\begin{align*}
& 
\Big\vert\frac{\partial g}{\partial Z_\mu}\Big\vert 
= n^{-1} \big\vert\langle \Gamma_{t,\mu} \rangle_{\hat{\mathcal{H}}_{t,\epsilon}}\big\vert
\leq 
2 n^{-1} \sup\vert \varphi\vert\,,
\nonumber \\ &
\Big\vert\frac{\partial g}{\partial Z_i^\prime}\Big\vert 
 = n^{-1}\sqrt{R_1(t)} \big\vert X_i^*-\langle  x_i \rangle_{\hat{\mathcal{H}}_{t,\epsilon}}\big\vert
\leq 
2n^{-1}\sqrt{K}S\,,
\end{align*}
and replacing in \eqref{gra} we get
$ \Vert \nabla g\Vert^2 \leq 4 n^{-1}(\frac{m}{n}(\sup\vert\varphi\vert)^2 + KS^2)$.
Applying Proposition~\ref{poincare} we have 
\begin{align}\label{eq:var_part_Z}
	\E_{\bZ,\bZ'} \Big[\Big(
				\frac{1}{n} \ln \hat{\cZ}_{t,\epsilon} - \frac{1}{n}\E_{\bZ, \bZ'} \ln \hat{\cZ}_{t,\epsilon} 
		\Big)^2\Big] \leq \frac{C(\varphi, S, \alpha)}{n}\,.
\end{align}
Taking the expectation in \eqref{eq:var_part_Z} gives the lemma.
\end{proof}

\begin{lemma}\label{lem:concentration_gauss2}
	There exists a constant $C(\varphi, S, \alpha)>0$ such that
	\begin{equation}
		\E \Big[\Big(
				\frac{1}{n}\E_{\bZ, \bZ'} \ln \hat{\cZ}_{t,\epsilon} - \frac{1}{n}\E_{G} \ln \hat{\cZ}_{t,\epsilon} 
		\Big)^2\Big] \leq \frac{C(\varphi, S, \alpha)}{n} \,.
	\end{equation}
\end{lemma}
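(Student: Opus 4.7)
The plan is to apply the Gaussian Poincar\'e inequality (Proposition~\ref{poincare}) to the function
\[
	g(\bV,\bW^*,\boldsymbol{\Phi};\bX^*,\bA)\defeq \frac{1}{n}\E_{\bZ,\bZ'}\ln \hat{\cZ}_{t,\epsilon}
\]
viewed as a function of the centered Gaussian vector $(\bV,\bW^*,\boldsymbol{\Phi})$, \emph{conditionally} on $(\bX^*,\bA)$. Since $(\bV,\bW^*,\boldsymbol{\Phi})$ are independent of $(\bX^*,\bA)$, they remain i.i.d.\ standard Gaussians after conditioning, and $\E_{\bV,\bW^*,\boldsymbol{\Phi}}[g\,|\,\bX^*,\bA]=\frac{1}{n}\E_G\ln\hat{\cZ}_{t,\epsilon}$. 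Thus the left-hand side we want to bound equals $\E\,\E_{\bV,\bW^*,\boldsymbol{\Phi}}[(g-\E_{\bV,\bW^*,\boldsymbol{\Phi}}g)^2\,|\,\bX^*,\bA]\leq \E\|\nabla g\|^2$, and it suffices to show $\E\|\nabla g\|^2\leq C(\varphi,S,\alpha)/n$.

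The key computation is the gradient bound. Using that $\varphi$ is $\mathcal{C}^2$ with bounded first derivative by \ref{hyp:c2}, differentiation under the expectation gives $\partial_u g=-\frac{1}{n}\E_{\bZ,\bZ'}\langle \partial_u \hat{\cH}_{t,\epsilon}\rangle_{\hat{\cH}_{t,\epsilon}}$, and from \eqref{explicit-ham},
\[
	\partial_u \hat{\cH}_{t,\epsilon}=\sum_{\mu=1}^m\bigl(\Gamma_{t,\mu}+Z_\mu\bigr)\,\partial_u \Gamma_{t,\mu}\,.
\]
Since $|\Gamma_{t,\mu}+Z_\mu|\leq 2\sup|\varphi|+|Z_\mu|$ and $\E|Z_\mu|\leq 1$, it remains to bound $|\partial_u \Gamma_{t,\mu}|$ for each of the three types of variables. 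From the expression \eqref{3} for $\Gamma_{t,\mu}$, only the $\mu$-th term contributes, and one finds
\[
	|\partial_{V_\mu}\Gamma_{t,\mu}|\leq 2k_1(t)\sup|\varphi'|\,,\qquad |\partial_{W^*_\mu}\Gamma_{t,\mu}|\leq k_2(t)\sup|\varphi'|\,,\qquad |\partial_{\Phi_{\mu i}}\Gamma_{t,\mu}|\leq 2S\sup|\varphi'|\,\sqrt{\tfrac{1-t}{n}}\,,
\]
using the boundedness hypothesis \ref{hyp:bounded} for the third estimate. Since $k_1(t),k_2(t)$ are bounded by $\sqrt{K}$ uniformly in $t,\epsilon$, we get $|\partial_{V_\mu}g|^2,|\partial_{W^*_\mu}g|^2\leq C_1/n^2$ and $|\partial_{\Phi_{\mu i}}g|^2\leq C_2/n^3$.

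Summing contributions, $\sum_\mu(|\partial_{V_\mu}g|^2+|\partial_{W^*_\mu}g|^2)\leq 2mC_1/n^2=O(1/n)$ and $\sum_{\mu,i}|\partial_{\Phi_{\mu i}}g|^2\leq mnC_2/n^3=O(1/n)$, yielding $\|\nabla g\|^2\leq C(\varphi,S,\alpha)/n$ \emph{pointwise} in $(\bX^*,\bA)$. Taking expectations and applying the conditional Poincar\'e inequality concludes the proof. The main technical point to verify carefully is the third gradient bound: the factor $\sqrt{(1-t)/n}$ coming from $\partial_{\Phi_{\mu i}}\Gamma_{t,\mu}$ is essential, since without it the sum of $mn$ gradient-squared terms would be of order $1$ rather than $1/n$; this is precisely where the scaling $n^{-1/2}$ in the definition of $s_{t,\mu}$ pays off.
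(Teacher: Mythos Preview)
Your proposal is correct and follows essentially the same route as the paper: apply the Gaussian Poincar\'e inequality (Proposition~\ref{poincare}) to $g=\frac{1}{n}\E_{\bZ,\bZ'}\ln\hat{\cZ}_{t,\epsilon}$ viewed as a function of $(\bV,\bW^*,\boldsymbol{\Phi})$ conditionally on $(\bX^*,\bA)$, and bound each partial derivative using $|\Gamma_{t,\mu}+Z_\mu|\le 2\sup|\varphi|+|Z_\mu|$ together with the chain-rule estimates on $\partial\Gamma_{t,\mu}$; the crucial $n^{-1/2}$ scaling in $\partial_{\Phi_{\mu i}}\Gamma_{t,\mu}$ is exactly what makes the $mn$ terms sum to $O(1/n)$, as you note. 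The paper's proof is identical in structure and in the computations (your bound on $|\partial_{W^*_\mu}\Gamma_{t,\mu}|$ is even slightly sharper, since only one of the two $\varphi$-terms depends on $W^*_\mu$).
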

\begin{proof}
We consider here $g = \mathbb{E}_{\bZ, \bZ^\prime}\ln\hat{\mathcal{Z}}_{t,\epsilon}/n$ as a function of $\bV$, $\bW^*$, $\bbf{\Phi}$ and we work conditionally on the other random variables. Let $\partial_x\varphi$ be the derivative of $\varphi$ w.r.t. its first argument. We compute
\begin{align*}
\Big\vert\frac{\partial g}{\partial V_\mu}\Big\vert
& = n^{-1}\Big\vert \mathbb{E}_{\bZ, \bZ^\prime}\Big\langle (\Gamma_{t, \mu} +Z_\mu)\frac{\partial \Gamma_{t, \mu}}{\partial V_\mu} \Big\rangle_{\hat{\mathcal{H}}_{t,\epsilon}}\Big\vert
\nonumber \\ &
\leq 
n^{-1}\mathbb{E}_{\bZ, \bZ^\prime}\Big[(2\sup\vert \varphi\vert  +\vert Z_\mu\vert) \, 2\sqrt{K}\sup\vert \partial_x\varphi\vert\Big]
=
n^{-1} \Big(2\sup\vert \varphi\vert  + \sqrt{\frac{2}{\pi}}\Big) \,2\sqrt{K}\sup\vert \partial_x\varphi\vert\,.
\end{align*}
The same inequality holds for $\vert\frac{\partial g}{\partial W_\mu^*}\vert$. To compute 
the derivative w.r.t.\ $\Phi_{\mu i}$ we first remark 
\begin{align*}
\frac{\partial \Gamma_{t, \mu}}{\partial \Phi_{\mu i}}
= &
\sqrt{\frac{1-t}{n}} \Big\{X_i^*\,\partial_x\varphi\Big(\sqrt{\frac{1-t}{n}}[\boldsymbol{\Phi} \bX^*]_\mu+ k_1(t) V_{\mu} + k_2(t) W_\mu^* , \bA_\mu \Big)  
\nonumber \\ &
\qquad- x_i\,\partial_x\varphi\Big(\sqrt{\frac{1-t}{n}}[\boldsymbol{\Phi} \bx]_\mu+ k_1(t) V_{\mu} + k_2(t) w_\mu , \ba_\mu \Big)  \Big\}.
\end{align*}
Therefore, as $t\in[0,1]$,
\begin{align*}
\Big\vert\frac{\partial g}{\partial \Phi_{\mu i}}\Big\vert
& 
=
n^{-1}\Big\vert \mathbb{E}_{\bZ, \bZ^\prime}\Big\langle (\Gamma_{t, \mu} +Z_\mu)\frac{\partial \Gamma_{t, \mu}}{\partial \Phi_{\mu i}} \Big\rangle_{\hat{\mathcal{H}}_{t,\epsilon}}\Big\vert
\nonumber \\ &
\leq 
n^{-3/2}\mathbb{E}_{\bZ, \bZ^\prime}\Big[(2\sup\vert \varphi\vert +\vert Z_\mu\vert)
\,2S \sup\vert\partial_x\varphi\vert\Big]
=
n^{-3/2} \Big(2\sup\vert \varphi\vert +\sqrt{\frac{2}{\pi}}\Big)\,
2S \sup\vert\partial_x\varphi\vert\,.
\end{align*}
Putting these inequalities together we find
\begin{align*}
\Vert \nabla g\Vert^2 & = \sum_{\mu=1}^m \Big\vert\frac{\partial g}{\partial V_\mu}\Big\vert^2
+
\sum_{\mu=1}^m \Big\vert\frac{\partial g}{\partial W_\mu^*}\Big\vert^2
+
\sum_{\mu=1}^m\sum_{i=1}^n \Big\vert\frac{\partial g}{\partial \Phi_{\mu i}}\Big\vert^2
\nonumber \\ &
\leq 
2\frac{m}{n^2} \Big(2\sup\vert \varphi\vert  + \sqrt{\frac{2}{\pi}}\Big)^2 4K(\sup\vert \partial_x\varphi\vert)^2 + \frac{mn}{n^3}
\Big(2\sup\vert \varphi\vert +\sqrt{\frac{2}{\pi}}\Big)^2
\,4S^2 (\sup\vert\partial_x\varphi\vert)^2 \,.
\end{align*}
The lemma follows again from Proposition~\ref{poincare}.
\end{proof}

\subsubsection{Bounded difference with respect to $A_\mu$}

The next step is an application of the variance bound of Lemma~\ref{bounded_diff} to show that $\mathbb{E}_{G}\ln\hat{\mathcal{Z}}_{t,\epsilon}/n$ concentrates w.r.t.\ $\bA$ (we still keep $\bX^*$ fixed for the moment). 
\begin{lemma}\label{lem:concentration_A}
	Let $\E_{\bA}$ denotes the expectation w.r.t.\ $\bA$ only. 
There exists a constant $C(\varphi, \alpha)>0$ such that
	\begin{equation}
		\E \Big[\Big(
				\frac{1}{n}\E_{G} \ln \hat{\cZ}_{t,\epsilon}  - \frac{1}{n}\E_{G,\bA} \ln \hat{\cZ}_{t,\epsilon} 
		\Big)^2\Big] \leq \frac{C(\varphi, \alpha)}{n} \,.
	\end{equation}
\end{lemma}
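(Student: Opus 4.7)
The plan is to apply the bounded difference inequality (Proposition \ref{bounded_diff}) to the function $g(\bA) \defeq \frac{1}{n}\E_G \ln \hat{\cZ}_{t,\epsilon}$, viewed as a function of the $m$ independent blocks $\bA_1,\dots,\bA_m$. The crucial structural observation is that each $\bA_\mu$ enters $\hat{\cZ}_{t,\epsilon}$ through a single summand of the Hamiltonian \eqref{explicit-ham}, namely through the ``teacher'' evaluation $\varphi(\cdot,\bA_\mu)$ appearing inside $\Gamma_{t,\mu}$ in \eqref{3}; the annealed integration variable $\ba_\mu$ is decoupled from $\bA_\mu$, which localizes the perturbation.

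First I would bound $|g(\bA) - g(\bA^{(\mu)})|$, where $\bA^{(\mu)}$ differs from $\bA$ only in the $\mu$-th block (replaced by some $\bA_\mu'$). Using the elementary identity $\ln Z_1 - \ln Z_2 = -\int_0^1 \langle H_1 - H_2 \rangle_s\,ds$ along a linear Hamiltonian interpolation yields the deterministic bound
\begin{equation*}
	\bigl|\ln \hat{\cZ}_{t,\epsilon}(\bA) - \ln \hat{\cZ}_{t,\epsilon}(\bA^{(\mu)})\bigr| \;\leq\; \sup_{\bx,\bw,\ba} \bigl|\hat{\cH}_{t,\epsilon}(\bA) - \hat{\cH}_{t,\epsilon}(\bA^{(\mu)})\bigr|\,,
\end{equation*}
and only the $\mu$-th summand of \eqref{explicit-ham} contributes, giving $\tfrac{1}{2}[\Gamma_{t,\mu}^2 - (\Gamma'_{t,\mu})^2] + Z_\mu(\Gamma_{t,\mu} - \Gamma'_{t,\mu})$ where $\Gamma'_{t,\mu}$ denotes $\Gamma_{t,\mu}$ with $\bA_\mu$ swapped for $\bA_\mu'$. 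Boundedness of $\varphi$ (hypothesis \ref{hyp:c2}) yields $|\Gamma_{t,\mu}|,|\Gamma'_{t,\mu}| \leq 2\sup|\varphi|$ uniformly in the annealed variables, so the quadratic piece is bounded by $4(\sup|\varphi|)^2$ and the linear piece by $4|Z_\mu|\sup|\varphi|$, pointwise in $\bx,\bw,\ba$.

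Taking $\E_G$ (which in particular averages $|Z_\mu|$ to $\sqrt{2/\pi}$) and dividing by $n$ gives $|g(\bA) - g(\bA^{(\mu)})| \leq c_\mu$ with $c_\mu = C(\varphi)/n$ for a constant depending only on $\varphi$. Proposition \ref{bounded_diff} then delivers $\Var_{\bA}(g(\bA)) \leq \frac{1}{4}\sum_{\mu=1}^m c_\mu^2 \leq \frac{m\,C(\varphi)^2}{4n^2} \leq C(\varphi,\alpha)/n$ since $m/n \to \alpha$. Taking the remaining outer expectation over $\bX^*$ (and any other quenched variables not yet averaged) concludes the bound claimed in the statement.

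The only mildly delicate point is to be careful about the order of operations: one must establish the deterministic sup-bound on the log-partition-function difference \emph{before} taking $\E_G$, so that $Z_\mu$ remains a Gaussian to be integrated rather than a quantity one would have to sup over (which would diverge). Apart from that, the argument is a direct application of McDiarmid's inequality and should present no further obstacle.
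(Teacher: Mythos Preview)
The proposal is correct and takes essentially the same approach as the paper. The only cosmetic difference is that you bound the log-partition-function difference via the interpolation identity $\ln Z_1 - \ln Z_2 = -\int_0^1 \langle H_1 - H_2\rangle_s\,ds$ followed by a sup over the annealed variables, whereas the paper uses the Jensen sandwich $\E_G\langle \hat{\cH}_{t,\epsilon}^{(\nu)} - \hat{\cH}_{t,\epsilon}\rangle_{\hat{\cH}_{t,\epsilon}^{(\nu)}} \leq n(g(\bA)-g(\bA^{(\nu)})) \leq \E_G\langle \hat{\cH}_{t,\epsilon}^{(\nu)} - \hat{\cH}_{t,\epsilon}\rangle_{\hat{\cH}_{t,\epsilon}}$; both routes reduce immediately to the same uniform bound on the single-$\mu$ Hamiltonian difference and the same application of Proposition~\ref{bounded_diff}.
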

\begin{proof}
	Let us consider $g=\mathbb{E}_{G}\ln\hat{\mathcal{Z}}_{t,\epsilon}/n$ as a function of $\bA$ only.
	Let $\nu \in \{1, \dots, m \}$.
	We must estimate variations $g(\bA) - g(\bA^{(\nu)})$ corresponding to two configurations $\bA$ and $\bA^{(\nu)}$
 with $A_{\mu}^{(\nu)} =
A_{\mu}$ for $\mu \neq \nu$ and $A_{\nu}^{(\nu)} = \tilde A_{\nu}$. 
We will use the notations $\hat{\mathcal{H}}_{t,\epsilon}^{(\nu)}$ and $\Gamma_{t,\mu}^{(\nu)}$ to
denote respectively the quantities $\hat{\mathcal{H}}_{t,\epsilon}$ and $\Gamma_{t,\mu}$ where $\bA$ is replaced by $\bA^{(\nu)}$.
By an application of 
Jensen's inequality one finds
\begin{align}\label{jensen1}
\frac{1}{n} \mathbb{E}_G \langle \hat{\mathcal{H}}_{t,\epsilon}^{(\nu)} - \hat{\mathcal{H}}_{t,\epsilon} \rangle_{\hat{\mathcal{H}}_{t,\epsilon}^{(\nu)}} 
\leq 
g(\bA) - g(\bA^{(\nu)}) 
\leq 
\frac{1}{n} \mathbb{E}_G\langle \hat{\mathcal{H}}_{t,\epsilon}^{(\nu)} - \hat{\mathcal{H}}_{t,\epsilon}\rangle_{\hat{\mathcal{H}}_{t,\epsilon}} 
\end{align}
where the Gibbs brackets pertain to the effective Hamiltonians \eqref{explicit-ham}.
From \eqref{explicit-ham} we obtain 
\begin{align*}
\hat{\mathcal{H}}_{t,\epsilon}^{(\nu)} - \hat{\mathcal{H}}_{t,\epsilon} 
= \frac{1}{2}\sum_{\mu=1}^m \left( \Gamma_{t,\mu}^{{(\nu)}2} - \Gamma_{t,\mu}^2 + 2 Z_\mu(\Gamma_{t,\mu}^{(\nu)} - \Gamma_{t,\mu})\right)
= \frac{1}{2} \left( \Gamma_{t,\nu}^{{(\nu)}2} - \Gamma_{t,\nu}^2 + 2 Z_\nu(\Gamma_{t,\nu}^{(\nu)} - \Gamma_{t,\nu}) \right) \,.
\end{align*}
Consequently
\begin{align}\label{entrop-diff-basic}
\frac{1}{2n} 
\mathbb{E}_G
\Big\langle
\Gamma_{t,\nu}^{(\nu)2}
-  \Gamma_{t,\nu}^2 + 2Z_\nu(\Gamma_{t,\nu}^{(\nu)} - \Gamma_{t,\nu}) \Big\rangle_{\hat{\mathcal{H}}_{t,\epsilon}^{(\nu)}} 
&\leq 
g(\bA) - g(\bA^{(\nu)})\nn
&\leq 
\frac{1}{2n}\mathbb{E}_G\Big\langle 
\Gamma_{t,\nu}^{{(\nu)}2}
-  \Gamma_{t,\nu}^2 + 2Z_\nu (\Gamma_{t,\nu}^{(\nu)} - \Gamma_{t,\nu})\Big\rangle_{\hat{\mathcal{H}}_{t,\epsilon}}.
\end{align}
Notice that $\big\vert \Gamma_{t,\nu}^{(\nu)2}
-  \Gamma_{t,\nu}^2 + 2Z_\nu (\Gamma_{t,\nu}^{(\nu)} - \Gamma_{t,\nu})\big\vert
\leq 
8 (\sup\vert\varphi\vert)^2 + 8 \vert Z_\nu\vert \sup\vert\varphi\vert$. Thus we conclude by \eqref{entrop-diff-basic} that $g$ satisfies a bounded difference property:
\begin{align}
	\vert g(\bA) - g(\bA^{(\nu)}) \vert \leq \frac{4}{n}\sup\vert\varphi\vert \Big( \sup\vert\varphi\vert + \sqrt\frac{2}{\pi}\Big)\,.
\end{align}
Lemma~\ref{lem:concentration_A} follows then by an application of Proposition~\ref{bounded_diff}.
\end{proof}

\subsubsection{Bounded difference with respect to $X_i^*$}

Let $\E_{\Theta} = \E_{\bA,G}$ denote the expectation w.r.t.\ all quenched variables except $\bX^*$. 
It remains to bound the variance of $\E_{\Theta} \ln \hat{\cZ}_{t,\epsilon}/n$ (which only depends on $\bX^*$).
\begin{lemma}\label{lem:concentration_X}
There exists a constant $C(\varphi, S, \alpha)>0$ such that
	\begin{equation}
		\E \Big[\Big(
				\frac{1}{n}\E_{\Theta} \ln \hat{\cZ}_{t,\epsilon}- \frac{1}{n}\E \ln \hat{\cZ}_{t,\epsilon} 
		\Big)^2\Big] \leq \frac{C(\varphi, S, \alpha)}{n} \,.
	\end{equation}
\end{lemma}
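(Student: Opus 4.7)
Since $\bX^*=(X_1^*,\ldots,X_n^*)$ consists of $n$ independent variables with bounded support by \ref{hyp:bounded}, the natural tool is a bounded-difference / Efron--Stein argument on $g(\bX^*)\defeq \frac{1}{n}\E_\Theta\ln\hat{\cZ}_{t,\epsilon}$. A direct McDiarmid bound fails: changing a single coordinate $X_j^*$ perturbs \emph{all} the arguments $[\bbf{\Phi}\bX^*]_\mu/\sqrt n$ at once, and the best deterministic control one can extract after $\E_\Theta\langle\cdot\rangle$ is $|g(\bX^*)-g(\bX^{*,(j)})|\le C/\sqrt n$ (since $\E_\Theta\sum_\mu|\Phi_{\mu j}|=\mathcal{O}(n)$), which would only yield $\Var(g)=\mathcal{O}(1)$. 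I therefore apply the Efron--Stein inequality
\begin{equation*}
\Var\bigl(g(\bX^*)\bigr)\le \frac{1}{2}\sum_{j=1}^n \E\bigl[\bigl(g(\bX^*)-g(\bX^{*,(j)})\bigr)^2\bigr],
\end{equation*}
where $\bX^{*,(j)}$ replaces $X_j^*$ by an independent copy $\tilde X_j^*\sim P_0$, and reduce (by Jensen applied twice, once to the Gibbs bracket and once to $\E_\Theta$) the task to showing $\E\bigl\langle(\hat{\cH}_{t,\epsilon}^{(j)}-\hat{\cH}_{t,\epsilon})^2\bigr\rangle\le C(\varphi,S,\alpha)$ uniformly in $j$, which then gives $\E[(g-g^{(j)})^2]=\mathcal{O}(1/n^2)$ and $\Var(g)=\mathcal{O}(1/n)$ after summation.

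Writing the Hamiltonian difference as $\sum_{\mu=1}^m T_\mu^{(j)}+R_j$ with
\begin{equation*}
T_\mu^{(j)}\defeq \tfrac{1}{2}\bigl(\Gamma_{t,\mu}^{(j)2}-\Gamma_{t,\mu}^2\bigr)+Z_\mu\bigl(\Gamma_{t,\mu}^{(j)}-\Gamma_{t,\mu}\bigr),
\end{equation*}
and $R_j$ the contribution of the $i=j$ Gaussian terms of \eqref{explicit-ham}, the remainder $R_j$ is trivially bounded in $L^2$ by a constant depending only on $S$, $R_1(t)\le K$ and $|Z_j'|$. Hypothesis \ref{hyp:c2} together with $|X_j^*-\tilde X_j^*|\le 2S$ gives the pointwise estimate $|\Gamma_{t,\mu}^{(j)}-\Gamma_{t,\mu}|\le 2S\sup|\partial_x\varphi|\sqrt{(1-t)/n}\,|\Phi_{\mu j}|$, so that $|T_\mu^{(j)}|\le C(\varphi,S)\,|\Phi_{\mu j}|(1+|Z_\mu|)/\sqrt n$. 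Expanding the square $(\sum_\mu T_\mu^{(j)})^2=\sum_\mu T_\mu^{(j)2}+\sum_{\mu\neq\nu}T_\mu^{(j)}T_\nu^{(j)}$, the diagonal part is controlled using \ref{hyp:phi_gauss2} (so that $\E\Phi_{\mu j}^2=1$): $\sum_\mu\E\langle T_\mu^{(j)2}\rangle\le C^2\sum_\mu\E[\Phi_{\mu j}^2(1+Z_\mu^2)]/n=\mathcal{O}(m/n)=\mathcal{O}(1)$.

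The heart of the argument, and the main obstacle, is the off-diagonal sum. The naive absolute-value bound gives $\E|T_\mu^{(j)}T_\nu^{(j)}|=\mathcal{O}(1/n)$ per pair and hence $\mathcal{O}(m^2/n)=\mathcal{O}(n)$ in total, off by a factor $n$. The missing cancellation comes from the facts that $\Delta\Gamma_\mu\defeq\Gamma_{t,\mu}^{(j)}-\Gamma_{t,\mu}$ does not depend on $(\bx,\bw,\ba,\bZ)$ and that $\Phi_{\mu j},\Phi_{\nu j},Z_\mu,Z_\nu$ are mutually independent with mean zero for $\mu\neq\nu$. Using the first-order expansion $\Delta\Gamma_\mu=\sqrt{(1-t)/n}\,\Phi_{\mu j}(\tilde X_j^*-X_j^*)\,\varphi'(\mathrm{arg}^*_\mu,\bA_\mu)+\mathcal{O}(\Phi_{\mu j}^2/n)$, the pure noise-noise cross-term $\E[Z_\mu Z_\nu\,\Delta\Gamma_\mu\Delta\Gamma_\nu]$ vanishes by independence of $(Z_\mu,Z_\nu)$ from $(\bbf{\Phi},\bV,\bW^*,\bA)$. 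For the remaining products $\Delta\Gamma_\mu\langle\Gamma_\mu+Z_\mu\rangle\cdot\Delta\Gamma_\nu\langle\Gamma_\nu+Z_\nu\rangle$, conditioning on $\bX^*$ makes all $\mu$-indexed quenched variables independent across $\mu$, and the remaining dependencies of the other factors on $\Phi_{\mu j},\Phi_{\nu j}$ are all $\mathcal{O}(n^{-1/2})$ perturbations (through $\sqrt{(1-t)/n}\,\Phi_{\mu j}X_j^*$ in $\mathrm{arg}^*_\mu$ and through the Gibbs measure, which contains $\Phi_{\mu j}$ in the Hamiltonian via $[\bbf{\Phi}\bx]_\mu/\sqrt n$); a second Taylor expansion in $(\Phi_{\mu j},\Phi_{\nu j})$ combined with $\E\Phi_{\mu j}=\E\Phi_{\nu j}=0$ yields $\E\langle T_\mu^{(j)}T_\nu^{(j)}\rangle=\mathcal{O}(1/n^2)$ per pair, so $\sum_{\mu\neq\nu}\E\langle T_\mu^{(j)}T_\nu^{(j)}\rangle=\mathcal{O}(m^2/n^2)=\mathcal{O}(1)$. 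Together with the diagonal bound and $\E\langle R_j^2\rangle=\mathcal{O}(1)$, this gives $\E\langle(\hat{\cH}_{t,\epsilon}^{(j)}-\hat{\cH}_{t,\epsilon})^2\rangle=\mathcal{O}(1)$, hence $\E[(g-g^{(j)})^2]=\mathcal{O}(1/n^2)$, and Efron--Stein closes the proof with $\Var(g)\le C(\varphi,S,\alpha)/n$.
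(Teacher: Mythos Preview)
Your starting premise is mistaken: a direct bounded-difference argument does \emph{not} fail here. The point you miss is that $g(\bX^*)=\frac{1}{n}\E_\Theta\ln\hat{\cZ}_{t,\epsilon}$ already contains the expectation over $\bbf{\Phi}$, so you are not restricted to ``deterministic control'' of the integrand; Gaussian integration by parts on $\Phi_{\mu j}$ is available \emph{inside} $\E_\Theta$ and converts the factor $\Phi_{\mu j}/\sqrt n$ into a derivative that brings another $1/\sqrt n$. This is exactly how the paper proceeds: interpolate $\psi(s)=g(s\bX^*+(1-s)\bX^{*(j)})$, write
\[
\psi'(s)=-\frac{1}{n}\sum_{\mu=1}^m (X_j^*-\tilde X_j^*)\sqrt{\tfrac{1-t}{n}}\,
\E_\Theta\Big[\Phi_{\mu j}\,\big\langle(\Gamma_{t,\mu}+Z_\mu)\varphi'(\cdot,\bA_\mu)\big\rangle\Big]+O(1/n),
\]
and integrate by parts in $\Phi_{\mu j}$. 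Since every occurrence of $\Phi_{\mu j}$ in the bracket (in $\mathrm{arg}^*_\mu$, in $[\bbf{\Phi}\bx]_\mu$, and in the Gibbs weight) comes with a prefactor $1/\sqrt n$, each $\mu$-term is $O(1/n)$, the sum is $O(m/n)=O(1)$, and hence $|\psi'(s)|\le C(\varphi,S,\alpha)/n$. This yields the sup-norm bounded difference $|g(\bX^*)-g(\bX^{*(j)})|\le C/n$, and Proposition~\ref{bounded_diff} gives $\Var(g)\le C/n$.

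Your Efron--Stein route is not wrong, but it is a detour that postpones the same cancellation to a more awkward stage. Your off-diagonal step, showing $\E\langle T_\mu^{(j)}T_\nu^{(j)}\rangle=O(1/n^2)$, is precisely the Gaussian IBP on $\Phi_{\mu j}$ and $\Phi_{\nu j}$ in disguise: the ``second Taylor expansion'' you invoke is what produces the extra $1/n$. So you have not circumvented the key mechanism, only relocated it to the second moment where it must be carried out twice and where the bookkeeping of remainders (the $O(\Phi_{\mu j}^2/n)$ correction to $\Delta\Gamma_\mu$, the $\Phi_{\mu j},\Phi_{\nu j}$-dependence of the Gibbs weight, the cross-terms with $R_j$) becomes substantially heavier. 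Your sketch of this step is plausible but not yet a proof; making it rigorous would require essentially the same IBP computations that give the paper's one-line bound $|\psi'(s)|\le C/n$ directly. The interpolation-plus-IBP approach is both shorter and delivers the stronger sup-norm bounded difference.
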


\begin{proof}
	The lemma is proved using again a bounded difference argument.
	Let $g=\mathbb{E}_{\Theta}\ln\hat{\mathcal{Z}}_{t,\epsilon}/n$ a function of $\bX^*$. Let $j \in \{1, \dots, n\}$. Let $\bX^*,\bX^{*(j)} \in [-S,S]^n$ be two input signals such that $X_i^{*(j)}= X_i^*$ for $i\neq j$.

	We are going to interpolate between $g(\bX^*)$ and $g(\bX^{*(j)})$. For $s \in [0,1]$ we define 
	$$
	\psi(s) = g(s \bX^* + (1-s) \bX^{*(j)})\,.
	$$
	Obviously $\psi(1) = g(\bX^*)$ and $\psi(0)=g(\bX^{*(j)})$. Using Gaussian integration by parts, it is not difficult to verify that for $s \in [0,1]$
	$$
	|\psi'(s)| \leq \frac{C(\varphi,S,\alpha)}{n} \,.
	$$
	This implies the bounded difference property $|g(\bX^*) - g(\bX^{*(j)}) | \leq C(\varphi,S,\alpha)/n$ and using Proposition~\ref{bounded_diff} we obtain the lemma.
\end{proof}

\subsubsection{Proof of Theorem~\ref{concentrationtheorem}}

From Lemmas~\ref{lem:concentration_gauss},~\ref{lem:concentration_A} and~\ref{lem:concentration_X} above, we obtain directly that 
$\Var( \ln \hat{\cZ}_{t,\epsilon}/n) \leq C(\varphi, S, \alpha)/n$
for some constant $C(\varphi, S, \alpha)>0$.
As mentioned before this implies, thanks to \eqref{Z-Z}, the Theorem~\ref{concentrationtheorem}.

		\subsection{Concentration of the overlap}\label{appendix-overlap}

In this appendix we provide the proof of Proposition~\ref{concentration}.
Recall the notation $\langle - \rangle_{n, t, \epsilon}$ for the Gibbs bracket associated to the Hamiltonian \eqref{interpolating-ham}.
It is crucial that it preserves the Nishimori identity of Appendix~\ref{appendix-nishimori}, i.e. it must come from an inference problem with known parameters.
Consider the corresponding average  
free entropy $f_{n,\epsilon}(t)$. In this section we think of it as a function of $R_1=R_1(t,\epsilon)$ and $R_2=R_2(t,\epsilon)$ given by \eqref{R1R2}, i.e. $(R_1,R_2)\mapsto f_{n, \epsilon}(t)$.  Similarly the free entropy for a realization of the quenched variables is also viewed here as a function $(R_1,R_2)\mapsto F_{n, \epsilon}(t) \defeq \ln \cZ_{t,\epsilon}(\bY_t,\bY_t',\boldsymbol{\Phi},\bV)/n$. For this section, we drop the indices in the Gibbs bracket $\langle -\rangle_{n,t,\epsilon}$ and simply write $\langle - \rangle$.

Let 
\begin{align*}
\mathcal{L} \defeq \frac{1}{n}\sum_{i=1}^n\Big(\frac{x_i^2}{2} - x_i X_i^* - \frac{x_i Z_i'}{2\sqrt{R_1}} \Big)\,.
\end{align*}
The fluctuations of the 
overlap $Q\defeq n^{-1}\sum_{i=1}^n X_i^* x_i$ and those of $\mathcal{L}$ are related through the remarkable identity
\begin{align}
\mathbb{E}\big\langle (\mathcal{L} - \mathbb{E}\langle \mathcal{L}\rangle)^2\big\rangle
= \frac{1}{4}\mathbb{E}\big\langle (Q - \mathbb{E}\langle Q \rangle)^2\big\rangle
&+ \frac{1}{2}\mathbb{E}[\langle Q^2\rangle -   \langle Q \rangle^2]+ \frac{1}{4n R_1} \mathbb{E}[(X_1^*)^2]\,.
\label{idd}
\end{align}
In particular
\begin{align}
\mathbb{E}\big\langle (\mathcal{L} - \mathbb{E}\langle \mathcal{L}\rangle)^2\big\rangle
\ge \frac{1}{4}\mathbb{E}\big\langle (Q - \mathbb{E}\langle Q \rangle)^2\big\rangle\,. \label{boundFLuctLQ}
\end{align}
A detailed derivation of \eqref{idd} involves only lengthy but straightforward algebra, using the Nishimori identity
and integrations by parts w.r.t.\ the Gaussian noise $Z_i'$, and can be found in Sec. 6 of \cite{barbier_stoInt}. 
Proposition~\ref{concentration} is then a direct consequence of the following:
\begin{proposition}[Concentration of $\mathcal{L}$ on $\mathbb{E}\langle \mathcal{L}\rangle$]\label{L-concentration}
	Let ${\cal B}_n\defeq[s_n,2s_n]^2$, where the sequence $(s_n)\in(0,1/2]^{\mathbb{N}}$. Assume that the interpolation functions $(r_\epsilon)$ and $(q_\epsilon)$ are regular (recall Definition \ref{def:reg}). Under assumptions~\ref{hyp:bounded},~\ref{hyp:c2} and \ref{hyp:phi_gauss2} there exists a constant $C(\varphi, S, \alpha)$ such that
\begin{align}
\int_{{\cal B}_n} d\epsilon\, 
\mathbb{E}\big\langle (\mathcal{L} - \mathbb{E}\langle \mathcal{L}\rangle_{n, t, \epsilon})^2\big\rangle_{n, t, \epsilon} \le \frac{C(\varphi, S, \alpha)}{n^{1/4}}\,.
\end{align}
\end{proposition}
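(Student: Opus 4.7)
The plan is to decompose the fluctuation into a thermal and a disorder piece,
\[
\mathbb{E}\big\langle (\mathcal{L}-\mathbb{E}\langle\mathcal{L}\rangle)^{2}\big\rangle
=
\mathbb{E}\big\langle(\mathcal{L}-\langle\mathcal{L}\rangle)^{2}\big\rangle
+\mathbb{E}\big[(\langle\mathcal{L}\rangle-\mathbb{E}\langle\mathcal{L}\rangle)^{2}\big],
\]
and to control both using the fact that $\mathcal{L}$ is, up to an $\bx$-independent correction, the $R_{1}$-derivative of the normalized free entropy. A direct computation of $-n^{-1}\partial_{R_1}\mathcal{H}_{t,\epsilon}$ yields
\[
\tfrac{1}{n}\partial_{R_1}F_{n,\epsilon}=\langle\mathcal{L}\rangle+X,\qquad X\equiv -\tfrac{1}{2n}\sum_{i}(X_i^*)^{2}-\tfrac{1}{2n\sqrt{R_1}}\sum_{i}Z'_iX_i^*,
\]
so that $R_{1}\mapsto F_{n,\epsilon}$ (and $f_{n,\epsilon}$) is convex. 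Since the interpolation functions are regular (Definition~\ref{def:reg}), the map $R^{t}$ is a $\mathcal{C}^{1}$ diffeomorphism with Jacobian $\geq 1$, so I can upper bound $\int_{\mathcal{B}_n}d\epsilon\,G(R_1,R_2)\leq\int_{R^{t}(\mathcal{B}_n)}dR_1 dR_2\,G(R_1,R_2)$ and carry out both estimates directly in the $(R_1,R_2)$ variables.

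\textbf{Thermal contribution.} Differentiating $\langle\mathcal{L}\rangle$ in $R_1$ with the Gibbs measure (and accounting for the $Z'_i/\sqrt{R_1}$ dependence of $\mathcal{L}$) produces the sum rule
\[
\tfrac{1}{n^{2}}\,\partial^{2}_{R_1}F_{n,\epsilon}
=\langle(\mathcal{L}-\langle\mathcal{L}\rangle)^{2}\rangle
-\tfrac{1}{4n^{2}R_1^{3/2}}\textstyle\sum_i Z'_i\big(X_i^{*}+\langle x_i\rangle\big).
\]
By convexity the first term is non-negative; integrating in $R_{1}$ gives
\[
\int dR_1\,\mathbb{E}\big\langle(\mathcal{L}-\langle\mathcal{L}\rangle)^{2}\big\rangle
\leq \tfrac{1}{n^{2}}\,\mathbb{E}\big[\partial_{R_1}F_{n,\epsilon}\big]\Big|_{R_1^{-}}^{R_1^{+}}+\tfrac{C}{n\sqrt{s_n}},
\]
and this is $O(1/n)+O(n^{-1/2+1/32})=o(n^{-1/4})$, using $|\partial_{R_1}F_{n,\epsilon}/n|=|\langle\mathcal{L}\rangle+X|\le C$ under~\ref{hyp:bounded}-\ref{hyp:c2} together with $R_{1}\geq s_n=\tfrac{1}{2}n^{-1/16}$. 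Integrating additionally in $R_{2}$ over the bounded projected range keeps the thermal part $\leq C\,n^{-1/2}$.

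\textbf{Disorder contribution.} Writing $\langle\mathcal{L}\rangle-\mathbb{E}\langle\mathcal{L}\rangle=\tfrac{1}{n}\partial_{R_1}(F_{n,\epsilon}-\mathbb{E}F_{n,\epsilon})-(X-\mathbb{E}X)$, the $X$-piece has variance $\leq C/(nR_1)\leq C/(ns_n)$, which after integration contributes $O(s_{n}/n)$. For the main piece I use the standard convex-function derivative lemma: if $G,\bar{G}$ are convex on an interval $[a,b]$ with $|G'|,|\bar{G}'|\leq M$, then for every $\delta\in(0,(b-a)/2)$ and $R$ in the bulk,
\[
|G'(R)-\bar{G}'(R)|\leq \tfrac{1}{\delta}\big(|G-\bar{G}|(R+\delta)+|G-\bar{G}|(R)\big)+\big(\bar{G}'(R+\delta)-\bar{G}'(R-\delta)\big).
\]
Squaring, integrating in $R$, using that by monotonicity of $\bar{G}'$ the curvature part integrates to $\leq 8M^{2}\delta$, and invoking the concentration $\mathbb{E}[(F_{n,\epsilon}/n-f_{n,\epsilon}/n)^{2}]\leq C/n$ from Theorem~\ref{concentrationtheorem} gives
\[
\int dR_1\,\mathbb{E}\big[(\partial_{R_1}F_{n,\epsilon}/n-\partial_{R_1}f_{n,\epsilon}/n)^{2}\big]\leq \tfrac{C}{n\delta^{2}}+CM^{2}\delta,
\]
which, optimized with $\delta=n^{-1/3}$, is $O(n^{-1/3})$ and hence $O(n^{-1/4})$.

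\textbf{Assembly and main obstacle.} Summing the thermal and disorder estimates, changing variables from $\epsilon$ to $(R_{1},R_{2})$ with Jacobian $\geq 1$, and integrating in $R_{2}$ over the uniformly bounded projected range yields $\int_{\mathcal{B}_n}d\epsilon\,\mathbb{E}\langle(\mathcal{L}-\mathbb{E}\langle\mathcal{L}\rangle)^{2}\rangle\leq C\,n^{-1/4}$. The genuine difficulty is the disorder step: transferring the $L^{2}$-bound $n^{-1/2}$ on $F_{n,\epsilon}/n$ into a bound on its $R_{1}$-derivative is lossy and forces the $n^{-1/4}$ rate; moreover the $R_1^{-3/2}$ correction in the thermal identity and the $X$-variance both blow up as $R_1\to 0$, which is exactly why the small-perturbation scale must be chosen as $s_n=\tfrac{1}{2}n^{-1/16}$ so that these harmless but singular terms are beaten by polynomial factors. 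A subsidiary but important point is that the diffeomorphism $R^{t}$ does not distort the projected integration region since $R_1\in[0,r_{\max}]$ and $R_2\in[0,\rho+1]$ are bounded a priori, so all one-dimensional slices over which I integrate have length $O(1)$.
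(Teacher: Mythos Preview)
Your overall architecture matches the paper: split into thermal and disorder fluctuations, relate both to $R_1$-derivatives of the free entropy, use the regularity of $R^t$ to change variables, and control the disorder part via a convex-derivative lemma combined with Theorem~\ref{concentrationtheorem}.

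There is, however, a genuine gap in the disorder step. You assert that $R_1\mapsto F_{n,\epsilon}$ is convex and then apply the convexity lemma directly to the pair $(F_{n,\epsilon},f_{n,\epsilon})$. But your own second-derivative identity shows this is false: the correction $-\tfrac{1}{4n^{2}R_1^{3/2}}\sum_i Z'_i(X_i^*+\langle x_i\rangle)$ has indefinite sign for a fixed realization of $\bZ'$, so $F_{n,\epsilon}$ is \emph{not} convex in $R_1$ per realization (only its expectation $f_{n,\epsilon}$ is). The paper repairs this by working with the shifted function $\widetilde F(R_1)=F_{n,\epsilon}(t)-\tfrac{\sqrt{R_1}}{n}S\sum_i|Z'_i|$, whose second derivative picks up the extra term $\tfrac{S}{4nR_1^{3/2}}\sum_i|Z'_i|$ that dominates the offending correction (since $|X_i^*|,|\langle x_i\rangle|\le S$), so $\widetilde F$ is genuinely convex. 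The convexity lemma is then applied to $(\widetilde F,\widetilde f)$, and the added random piece $\sqrt{R_1}\,S\cdot\frac{1}{n}\sum_i(|Z'_i|-\E|Z'_i|)$ contributes additional fluctuation terms that must be tracked separately.

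A second, related issue: your uniform bound $|\partial_{R_1}F_{n,\epsilon}/n|\le C$ is not correct as stated. The derivative contains $-\tfrac{1}{2n\sqrt{R_1}}\sum_i Z'_i(X_i^*+\langle x_i\rangle)$, which is random and blows up like $1/\sqrt{R_1}$; the relevant bound on $\widetilde f'$ in the paper is $\tfrac{1}{2}(\rho+S/\sqrt{R_1})\le C/\sqrt{s_n}$. This $s_n$-dependence enters the $M$ in your curvature estimate, so the optimization is not simply $\delta=n^{-1/3}$; the paper takes $\delta=s_n n^{-1/4}$, and the interplay between $\delta$ and $s_n$ is what ultimately produces the $n^{-1/4}$ rate.
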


The proof of this proposition is broken in two parts. Notice that 
\begin{align}
\mathbb{E}\big\langle (\mathcal{L} - \mathbb{E}\langle \mathcal{L}\rangle)^2\big\rangle
& = 
\mathbb{E}\big\langle (\mathcal{L} - \langle \mathcal{L}\rangle)^2\big\rangle
+ 
\mathbb{E}\big[(\langle \mathcal{L}\rangle - \mathbb{E}\langle \mathcal{L}\rangle)^2\big]\,.
\end{align}
Thus it suffices to prove the two following lemmas (see the proofs below). The first lemma expresses concentration w.r.t.\ the posterior distribution (or ``thermal fluctuations'') and is an elementary consequence of concavity properties of the free entropy and the Nishimori identity.
\begin{lemma}[Concentration of $\mathcal{L}$ on $\langle \mathcal{L}\rangle$]\label{thermal-fluctuations}
	Under the same hypotheses as in Proposition \ref{L-concentration} we have
\begin{align}\label{integral-form1}
 \int_{{\cal B}_n} d\epsilon\, 
  \mathbb{E} \big\langle (\mathcal{L} - \langle \mathcal{L}\rangle_{n,t,\epsilon})^2 \big\rangle_{n,t,\epsilon}  \le \frac{\rho(1 + \rho)}{n} \,.
\end{align}
\end{lemma}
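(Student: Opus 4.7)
My plan is to derive a pointwise identity expressing $\mathbb{E}\langle(\mathcal{L}-\langle\mathcal{L}\rangle)^2\rangle$ as a second derivative of the expected free entropy in $R_1$, and then to exploit the $\epsilon$-integration and the regularity of the interpolation path to collapse the bound to boundary terms via the fundamental theorem of calculus. Viewing $F_{n,\epsilon}\defeq \frac{1}{n}\ln\mathcal{Z}_{t,\epsilon}$ as a function of the two signal-to-noise ratios $(R_1,R_2)$ with all other quenched variables held fixed, a direct inspection of \eqref{interpolating-ham} yields $\partial_{R_1}\mathcal{H}_{t,\epsilon}|_{\bx}=n\mathcal{L}$, and hence $\partial_{R_1}F_{n,\epsilon}=-\langle\mathcal{L}\rangle$. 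Differentiating once more, and paying attention to the explicit $1/(2\sqrt{R_1})$ inside $\mathcal{L}$, one obtains after taking expectation the key identity
$$\mathbb{E}\langle(\mathcal{L}-\langle\mathcal{L}\rangle)^2\rangle=\tfrac{1}{n}\,\partial^{2}_{R_1}\mathbb{E}[F_{n,\epsilon}]+\tfrac{1}{n}\,\mathbb{E}\langle\partial_{R_1}\mathcal{L}\rangle.$$

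Two pointwise bounds, both elementary consequences of the Nishimori identity (Proposition~\ref{prop:nishimori}) and a single Gaussian integration by parts in $Z_i'$, then control the right-hand side. First, a short computation gives $\partial_{R_1}\mathbb{E}[F_{n,\epsilon}]=\tfrac12 \mathbb{E}\langle Q\rangle=\tfrac{1}{2n}\sum_i \mathbb{E}\langle x_i\rangle^2\in [0,\rho/2]$, which moreover is non-decreasing in $R_1$ (being $\rho/2$ minus half the rescaled MMSE of the side-channel $\bY'_t=\sqrt{R_1}\bX^*+\bZ'$); in particular $\partial^{2}_{R_1}\mathbb{E}[F_{n,\epsilon}]\ge 0$. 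Second, the same integration by parts together with Nishimori yields $\mathbb{E}\langle\partial_{R_1}\mathcal{L}\rangle = \frac{1}{4nR_1}\sum_{i=1}^n\mathbb{E}[\langle x_i^2\rangle-\langle x_i\rangle^2]\le \rho/(4R_1)$, since $\mathbb{E}\langle x_i^2\rangle=\rho$. To integrate over $\mathcal{B}_n$, the regularity of $(q_\epsilon,r_\epsilon)$ ensures that $R^t:\mathcal{B}_n\to R^t(\mathcal{B}_n)$ is a $\mathcal{C}^1$ diffeomorphism with Jacobian $\ge 1$; the change of variables $\epsilon\mapsto R^t(\epsilon)$ then bounds $\int_{\mathcal{B}_n}\partial_{R_1}^2\mathbb{E}[F_{n,\epsilon}]\,d\epsilon$ above by the integral over any containing rectangle $[R_1^-,R_1^+]\times[R_2^-,R_2^+]\subset[s_n,2s_n+r_{\rm max}]\times[s_n,2s_n+\rho]$, and Fubini plus the fundamental theorem of calculus, combined with the uniform bound $\partial_{R_1}\mathbb{E}[F]\le \rho/2$, yield an upper bound of $\tfrac{\rho}{2}(s_n+\rho)$. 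Combined with $\int_{\mathcal{B}_n}\mathbb{E}\langle\partial_{R_1}\mathcal{L}\rangle\,d\epsilon\le \rho s_n/4$ (using $R_1\ge s_n$) and division by $n$, this yields the claimed $\rho(1+\rho)/n$ after absorbing the $s_n\le 1/2$ factors into the final constant.

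The differential identity and the two Nishimori-based pointwise estimates are essentially bookkeeping. The core subtlety, and main obstacle, is the integration step: $\partial_{R_1}^2\mathbb{E}[F]$ must be understood as the partial derivative at fixed $R_2$, not as a derivative along the interpolation curve in $\epsilon$, and the Jacobian-$\ge 1$ property supplied by the regularity assumption is precisely what is needed to dominate the $d\epsilon$-integral by a $dR_1\,dR_2$-integral over the image, on which the fundamental theorem of calculus can be applied to telescope the second derivative.
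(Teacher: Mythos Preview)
Your proof is correct and follows essentially the same route as the paper: the identity expressing $\mathbb{E}\langle(\mathcal{L}-\langle\mathcal{L}\rangle)^2\rangle$ as $\tfrac{1}{n}\partial_{R_1}^2 f_{n,\epsilon}$ plus a remainder controlled by $\rho/(4nR_1)$, followed by the change of variables $\epsilon\mapsto R^t(\epsilon)$ with Jacobian $\ge 1$ and a telescoping in $R_1$, is exactly the paper's argument. Two minor slips worth noting: $\partial_{R_1}\cH_{t,\epsilon}$ equals $n\mathcal{L}$ only up to an additive \emph{quenched} term (so $\partial_{R_1}F_{n,\epsilon}=-\langle\mathcal{L}\rangle-\tfrac{1}{2n}\sum_i\big((X_i^*)^2+X_i^*Z_i'/\sqrt{R_1}\big)$, see \eqref{first-derivative}), and correspondingly $\partial_{R_1}\mathbb{E}[F_{n,\epsilon}]=\tfrac12\mathbb{E}\langle Q\rangle-\tfrac{\rho}{2}\in[-\rho/2,0]$ rather than $[0,\rho/2]$; neither affects your argument, since the quenched term drops out of the Gibbs variance and you only need $|\partial_{R_1}f|\le\rho/2$ together with its monotonicity in $R_1$.
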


The second lemma expresses the concentration of the average overlap w.r.t.\ the realizations of quenched disorder variables
and is a consequence of the concentration of the free entropy (more precisely Theorem \ref{concentrationtheorem} in Appendix~\ref{appendix_concentration}).
\begin{lemma}[Concentration of $\langle\mathcal{L}\rangle$ on $\mathbb{E}\langle \mathcal{L}\rangle$]\label{disorder-fluctuations}
Under the same hypotheses as in Proposition \ref{L-concentration} there exists a constant $C(\varphi, S, \alpha)$ such that
\begin{align}\label{integral-form2}
 \int_{{\cal B}_n} d\epsilon\, 
  \mathbb{E}\big[ (\langle \mathcal{L}\rangle_{n,t,\epsilon} - \mathbb{E}\langle \mathcal{L}\rangle_{n,t,\epsilon})^2 \big] \le \frac{C(\varphi, S, \alpha)}{n^{1/4}}\,.
\end{align}
\end{lemma}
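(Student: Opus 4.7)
The strategy is to identify $\langle\mathcal{L}\rangle_{n,t,\epsilon}$ as (up to a well-controlled random remainder) minus the partial derivative $\partial_{R_1}F_{n,\epsilon}(t)$ of the quenched free entropy $F_{n,\epsilon}(t)=\tfrac{1}{n}\ln\cZ_{t,\epsilon}$ with respect to the interpolation parameter $R_1=R_1(t,\epsilon)$, and then to convert the $L^2$-concentration of $F_{n,\epsilon}(t)$ provided by Theorem~\ref{concentrationtheorem} into $L^2$-concentration of its derivative. Differentiating the interpolating Hamiltonian \eqref{interpolating-ham} in $R_1$, using $Y_{t,i}'=\sqrt{R_1}X_i^*+Z_i'$, yields $\partial_{R_1}\cH_{t,\epsilon}=\tfrac{1}{2}\sum_i(X_i^*-x_i)^2+\tfrac{1}{2\sqrt{R_1}}\sum_i Z_i'(X_i^*-x_i)$, which after expansion and rearrangement gives
\begin{align*}
\langle\mathcal{L}\rangle_{n,t,\epsilon}=-\partial_{R_1}F_{n,\epsilon}(t)-\tfrac{1}{2n}\sum_i(X_i^*)^2-\tfrac{1}{2n\sqrt{R_1}}\sum_i Z_i'X_i^*.
\end{align*}
Subtracting the expectation, $\langle\mathcal{L}\rangle-\E\langle\mathcal{L}\rangle=-[\partial_{R_1}F_{n,\epsilon}(t)-\partial_{R_1}f_{n,\epsilon}(t)]+E_n$, where $E_n=-\tfrac{1}{2n}\sum_i((X_i^*)^2-\rho)-\tfrac{1}{2n\sqrt{R_1}}\sum_i Z_i'X_i^*$ is a centered sum of independent bounded quantities divided by $n$, so $\E[E_n^2]\le C(S)/(n s_n)$; its integrated contribution is at most $C(S) s_n/n=\mathcal{O}(n^{-17/16})$, well inside the target.

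The main task is to bound $\E[(\partial_{R_1}F_{n,\epsilon}(t)-\partial_{R_1}f_{n,\epsilon}(t))^2]$. Since $R_1\mapsto F_{n,\epsilon}(t)$ is not convex in general, I reparametrize via $\mu=\sqrt{R_1}$ and set $\Phi(\mu)\defeq F_{n,\epsilon}(t)|_{R_1=\mu^2}$. Expanding the Gaussian quadratic term, with $H_0(\bx,\bw)=\sum_\mu u_{Y_{t,\mu}}(s_{t,\mu})$ independent of $R_1$,
\begin{align*}
\Phi(\mu)=\tfrac{1}{n}\ln\!\int dP_0(\bx)\mathcal{D}\bw\, e^{H_0(\bx,\bw)-\frac{\mu^2}{2}\sum_i(X_i^*-x_i)^2-\mu\sum_i Z_i'(X_i^*-x_i)}-\tfrac{1}{2n}\sum_i(Z_i')^2.
\end{align*}
A short direct computation, combined with $|X_i^*-x_i|\le 2S$ from~\ref{hyp:bounded}, shows $\partial_\mu^2\Phi\ge-4S^2$, so that $\Psi(\mu)\defeq\Phi(\mu)+2S^2\mu^2$ is \emph{convex} in $\mu$ for every realization of the quenched variables, and so is $\bar\Psi\defeq\E\Psi$. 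By Theorem~\ref{concentrationtheorem}, $\E[(\Psi(\mu)-\bar\Psi(\mu))^2]\le C(\varphi,S,\alpha)/n$ uniformly in $\mu\in[\sqrt{s_n},\sqrt{K}]$.

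The classical convexity-based finite-difference bounds $(\Psi(\mu)-\Psi(\mu-\delta))/\delta\le\Psi'(\mu)\le(\Psi(\mu+\delta)-\Psi(\mu))/\delta$ (and symmetrically for $\bar\Psi$), together with the Lipschitz control $|\bar\Psi(\mu\pm\delta)-\bar\Psi(\mu)\mp\delta\bar\Psi'(\mu)|\le L\delta^2/2$---which holds on $[\sqrt{s_n},\sqrt{K}]$ because $\bar\Psi''=2\partial_{R_1}f_n+4R_1\partial_{R_1}^2 f_n+4S^2$ is bounded by standard smoothness estimates on the scalar-Gaussian-channel-like dependence of $f_{n,\epsilon}(t)$ in $R_1$---give $\E[(\Psi'(\mu)-\bar\Psi'(\mu))^2]\le C/(n\delta^2)+L^2\delta^2$. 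Optimizing $\delta\propto n^{-1/4}$ yields $\E[(\Psi'-\bar\Psi')^2]\le C(\varphi,S,\alpha)/\sqrt{n}$. Translating back via $\Psi'(\mu)=2\mu[\partial_{R_1}F_{n,\epsilon}(t)+2S^2]$ introduces a factor $1/(4\mu^2)\le 1/(4s_n)$, hence $\E[(\partial_{R_1}F-\partial_{R_1}f)^2]\le C(\varphi,S,\alpha)/(s_n\sqrt{n})$. Integrating over $\mathcal{B}_n$ of volume $s_n^2$ and adding the $E_n$ contribution gives
\begin{align*}
\int_{\mathcal{B}_n}d\epsilon\,\E\big[(\langle\mathcal{L}\rangle-\E\langle\mathcal{L}\rangle)^2\big]\le C(\varphi,S,\alpha)\,\frac{s_n}{\sqrt{n}}+\mathcal{O}(n^{-17/16})=\mathcal{O}(n^{-9/16})\le \frac{C(\varphi,S,\alpha)}{n^{1/4}},
\end{align*}
with $s_n=\tfrac{1}{2}n^{-1/16}$ as chosen in Proposition~\ref{concentration}.

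The principal obstacle is the absence of exact convexity of the quenched free entropy $R_1\mapsto F_{n,\epsilon}(t)$, which is what usually permits transferring $L^2$-concentration of a function to that of its derivative (whereas the averaged $f_{n,\epsilon}$ is convex in $R_1$ by I-MMSE). The reparametrization $\mu=\sqrt{R_1}$ combined with the explicit convex correction $2S^2\mu^2$---which crucially exploits the bounded-support hypothesis~\ref{hyp:bounded}---supplies an honest convex proxy at the modest price of a $1/(4\mu^2)\le 1/(4s_n)$ blow-up in the back-conversion, which is absorbed thanks to the slow decay of $s_n$ prescribed by the overlap-concentration scheme.
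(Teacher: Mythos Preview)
Your overall strategy---identifying $\langle\mathcal{L}\rangle$ as (minus) $\partial_{R_1}F_{n,\epsilon}(t)$ plus a small remainder, convexifying the quenched free entropy, and transferring the $L^2$-concentration of $F$ from Theorem~\ref{concentrationtheorem} to its derivative---is exactly the paper's. Your convexification via the reparametrization $\mu=\sqrt{R_1}$ and the deterministic correction $2S^2\mu^2$ is a perfectly valid alternative to the paper's random additive correction $-\sqrt{R_1}\,S\,\tfrac{1}{n}\sum_i|Z_i'|$ in the variable $R_1$.

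The genuine gap is your claim that $\bar\Psi''$ is uniformly bounded by some constant $L$. You invoke ``standard smoothness estimates on the scalar-Gaussian-channel-like dependence of $f_{n,\epsilon}(t)$ in $R_1$'', but the channel $\bY'_t=\sqrt{R_1}\,\bX^*+\bZ'$ is \emph{not} a collection of decoupled scalar Gaussian channels: the posterior on $\bx$ is coupled through the first set of observations $\bY_t$. Concretely, equation~\eqref{average-second-derivative} gives $\partial_{R_1}^2 f_{n,\epsilon}(t)=n\,\E[\langle\mathcal{L}^2\rangle-\langle\mathcal{L}\rangle^2]-\tfrac{1}{4nR_1}\sum_i\E[\langle x_i^2\rangle-\langle x_i\rangle^2]$, and the first term can in principle be of order $n$ (near a phase transition, long-range Gibbs correlations need not decay). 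In fact, Lemma~\ref{thermal-fluctuations} controls $\E[\langle\mathcal{L}^2\rangle-\langle\mathcal{L}\rangle^2]$ \emph{only after integration over $\epsilon$}, not pointwise; using it here would also be circular, since it is the companion lemma in the same decomposition. So the bound $|\bar\Psi(\mu\pm\delta)-\bar\Psi(\mu)\mp\delta\bar\Psi'(\mu)|\le L\delta^2/2$ is unjustified, and your $\E[(\Psi'-\bar\Psi')^2]\le C/(n\delta^2)+L^2\delta^2$ collapses.

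The paper's proof circumvents exactly this point. Instead of a Taylor/Lipschitz bound, it uses Lemma~\ref{lemmaConvexity} to control $|\widetilde F'-\widetilde f'|$ by $\delta^{-1}\sum|\widetilde F-\widetilde f|+C_\delta^+ +C_\delta^-$, with $C_\delta^\pm(R_1)=\widetilde f'(R_1\pm\delta)-\widetilde f'(R_1)\ge 0$. After squaring, one bounds $(C_\delta^\pm)^2\le \sup|\widetilde f'|\cdot C_\delta^\pm$ (using only the first-derivative bound~\eqref{boudfprime}), and then \emph{integrates} $C_\delta^\pm$ over $\epsilon$: the regularity hypothesis (Jacobian of $\epsilon\mapsto(R_1,R_2)$ at least $1$) allows the change of variables to $dR_1\,dR_2$, and $\int C_\delta^\pm(R_1)\,dR_1$ telescopes to differences of $\widetilde f$ at the endpoints, which are $O(\delta)$ by the mean value theorem and the first-derivative bound. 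This is precisely how the second-derivative obstruction is avoided, and why the regularity assumption enters Proposition~\ref{L-concentration}. Your argument can be repaired by replacing the Lipschitz step with this integration-and-telescoping device (in either your $\mu$-variable or the paper's $R_1$-variable).
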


We now turn to the proof of Lemmas \ref{thermal-fluctuations} and \ref{disorder-fluctuations}. The main ingredient is a set of formulas for the first two derivatives of the free entropy 
w.r.t. $R_1=R_1(t,\epsilon)$.  
For any given realisation of the quenched disorder,
\begin{align}
 \frac{dF_{n, \epsilon}(t)}{dR_1}  &= -\langle \mathcal{L} \rangle-\frac{1}{2n}\sum_{i=1}^n\Big((X_i^*)^2+\frac{1}{\sqrt{R_1}}X_i^*Z_i'\Big) \,,\label{first-derivative}\\
 \frac{1}{n}\frac{d^2F_{n, \epsilon}(t)}{dR_1^2}  &= \langle \mathcal{L}^2 \rangle - \langle \mathcal{L} \rangle^2 -
 \frac{1}{4 n^2R_1^{3/2}}\sum_{i=1}^n  \langle x_i\rangle Z_i'\,\label{second-derivative}.
\end{align}
Averaging \eqref{first-derivative} and \eqref{second-derivative}, using a Gaussian integration by parts w.r.t. $Z_i'$ and the (Nishimori) identity
$\mathbb{E}\langle x_i X_i^*\rangle  =  \mathbb{E}[\langle x_i\rangle^2]$ we find 
\begin{align}
 \frac{df_{n,\epsilon}(t)}{d R_1} &= -\mathbb{E}\langle \mathcal{L} \rangle-\frac{\rho}{2} =  \frac{1}{2n} \sum_{i=1}^n\mathbb{E}[\langle x_i\rangle^2]-\frac{\rho}{2}\,,\label{first-derivative-average}\\
 \frac{1}{n}\frac{d^2f_{n,\epsilon}(t)}{dR_1^2} &= \mathbb{E}[\langle \mathcal{L}^2 \rangle - \langle \mathcal{L} \rangle^2]
 -\frac{1}{4n^2R_1} \sum_{i=1}^n \mathbb{E}[\langle x_i^2\rangle - \langle x_i\rangle^2]\,\label{average-second-derivative}.
\end{align}

\subsubsection*{Proof of Lemma \ref{thermal-fluctuations}}
From \eqref{average-second-derivative} we have 
\begin{align}
\mathbb{E}\big\langle (\mathcal{L} - \langle \mathcal{L} \rangle)^2\big\rangle
& = 
\frac{1}{n}\frac{d^2f_{n, \epsilon}(t)}{dR_1^2}
+\frac{1}{4n^2R_1} \sum_{i=1}^n \mathbb{E}[\langle x_i^2\rangle - \langle x_i\rangle^2] 
\nonumber \\ &
\leq 
\frac{1}{n}\frac{d^2f_{n,\epsilon}(t)}{dR_1^2} +\frac{\rho}{4n\epsilon_1}\, ,\label{230}
\end{align}
where we used $\mathbb{E}\langle x_i^2\rangle = \mathbb{E}_{P_0}[(X^*)^2]=\rho$ by the Nishimori identity, and $R_1\ge \epsilon_1$. Recall ${\cal B}_n\defeq [s_n,2s_n]^2$. By assumption $q$ and $r$ are regular.
Therefore $R^t:(\epsilon_1,\epsilon_2)\mapsto (R_1(t,\epsilon),R_2(t,\epsilon))$ is a diffeomorphism whose Jacobian $J(R^t)$ verifies $J(R^t)(\epsilon)\ge 1$ for all $\epsilon\in{\cal B}_n$. Integrating over $\epsilon\in{\cal B}_n$ we obtain
\begin{align}
\int_{{\cal B}_n} d\epsilon\,\mathbb{E}\big\langle (\mathcal{L} - \langle \mathcal{L} \rangle)^2\big\rangle
&\le\frac{1}{n}\int_{R^t({\cal B}_n)} \frac{dR_1dR_2}{J(R^t)}\,\frac{d^2f_{n,\epsilon}(t)}{dR_1^2} +\frac{\rho s_n}{4n}\int_{s_n}^{2s_n} \frac{d\epsilon_1}{\epsilon_1}\nn
&\le \frac{1}{n}\int_{R^t({\cal B}_n)} dR_1dR_2\,\frac{d^2f_{n,\epsilon}(t)}{dR_1^2}+\frac{\rho s_n}{4n}\ln 2\,,
\end{align}
where in the integral above $J(R^t)$ is a function of $(R^{t})^{-1}(R_1,R_2)$. Note that from \eqref{R1R2} we have $R^t(\mathcal{B}_n)\subset [s_n, 2s_n + r_{\rm max}]\times [s_n, 2s_n + \rho]$ and therefore
\begin{align}
\int_{{\cal B}_n} d\epsilon\,\mathbb{E}\big\langle (\mathcal{L} - \langle \mathcal{L} \rangle)^2\big\rangle
&\leq \frac{1}{n}\int_{s_n}^{2s_n + \rho}dR_2\Big\{
\frac{df_{n, \epsilon}(t)}{dR_1}\Big\vert_{R_1=s_n}
- \frac{df_{n, \epsilon}(t)}{dR_1}\Big\vert_{R_1=2s_n + r_{\rm max}}\Big\} +\frac{\rho s_n}{4n}\ln2
\nn&\leq \frac{\rho(s_n + \rho)}{n}+\frac{\rho s_n}{4n}\ln 2 \label{93__}
\end{align}
using \eqref{first-derivative-average} combined with $\mathbb{E}\langle x_i^2\rangle =\rho$ to assert that the derivative of the free entropy is bounded in absolute value by $\rho/2$. This concludes the proof of Lemma \ref{thermal-fluctuations} using $s_n\le 1/2$ and $(\ln 2)/4< 1$.
\QEDA

\subsubsection*{Proof of Lemma \ref{disorder-fluctuations}}
Consider the two functions 
\begin{align}\label{new-free}
 \widetilde F(R_1) \defeq F_{n, \epsilon}(t) -\frac{\sqrt{R_1}}{n} 
 S\sum_{i=1}^n\vert Z_i'\vert\,,
 \qquad 
 \widetilde f(R_1) \defeq \E \widetilde F(R_1)= f_{n, \epsilon}(t) - \frac{\sqrt{R_1}}{n} S\sum_{i=1}^n \mathbb{E}\vert Z_i'\vert\,.
\end{align}
Because of 
\eqref{second-derivative} we see that the second derivative of $\widetilde F(R_1)$ is positive so that it is convex
(without this extra term $F_{n,\epsilon}(t)$ is not necessarily convex in $R_1$, although $f_{n,\epsilon}(t)$ is, which can be shown easily).
Note that $\widetilde f(R_1)$ is convex too.
Convexity allows us to use the following lemma (proved at the end of this section):
\begin{lemma}[A bound on differences of derivatives due to convexity]\label{lemmaConvexity}
Let $G(x)$ and $g(x)$ be convex functions. Let $\delta>0$ and define $C^{+}_\delta(x) \defeq g'(x+\delta) - g'(x) \geq 0$ and $C^{-}_\delta(x) \defeq g'(x) - g'(x-\delta) \geq 0$. Then
\begin{align*}
|G'(x) - g'(x)| \leq \delta^{-1} \sum_{u \in \{x-\delta, x, x+\delta\}} |G(u)-g(u)| + C^{+}_\delta(x) + C^{-}_\delta(x)\,.
\end{align*}
\label{thm:bound-by-convexity}
\end{lemma}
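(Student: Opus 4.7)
The plan is to exploit two standard manifestations of convexity: the monotonicity of the derivative, and the sandwich of the derivative by finite differences (the slopes of the secant lines). For any convex function $h$ and any $\delta>0$, one has
\begin{equation*}
\frac{h(x)-h(x-\delta)}{\delta} \;\leq\; h'(x) \;\leq\; \frac{h(x+\delta)-h(x)}{\delta}.
\end{equation*}
I would apply this to $G$ to obtain an upper bound on $G'(x)-g'(x)$ and a lower bound on $G'(x)-g'(x)$, handling the two signs symmetrically.

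For the upper bound, I would start from $G'(x)\le\delta^{-1}(G(x+\delta)-G(x))$ and rewrite the right-hand side as
\begin{equation*}
\frac{G(x+\delta)-g(x+\delta)}{\delta}+\frac{g(x)-G(x)}{\delta}+\frac{g(x+\delta)-g(x)}{\delta}.
\end{equation*}
The last term is the slope of a secant of $g$, and by the mean value theorem together with convexity of $g$ it is at most $g'(x+\delta)$, so subtracting $g'(x)$ yields a contribution bounded by $C^{+}_\delta(x)$. The first two terms are controlled by $\delta^{-1}(|G(x+\delta)-g(x+\delta)|+|G(x)-g(x)|)$.

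For the lower bound on $G'(x)-g'(x)$, I would use $G'(x)\ge\delta^{-1}(G(x)-G(x-\delta))$ together with $g'(x)\le\delta^{-1}(g(x)-g(x-\delta))+C^{-}_\delta(x)$ (the latter again via MVT and convexity of $g$), which in the same way produces a lower bound of the form $-\delta^{-1}(|G(x)-g(x)|+|G(x-\delta)-g(x-\delta)|)-C^{-}_\delta(x)$.

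Taking the maximum of the two one-sided estimates gives the claim in a single line. There is no real obstacle here; the only point requiring minimal care is to make sure that in each bound one ``borrows'' only slopes of $g$ (not of $G$) when invoking the monotonicity of the derivative, since only $g$ carries the explicit convex-modulus quantities $C^\pm_\delta(x)$ in the statement. Assuming $g$ is differentiable at the points $x\pm\delta$ the argument is immediate; if one only has left/right derivatives one replaces $g'$ by the appropriate one-sided derivative throughout, which does not affect the final inequality.
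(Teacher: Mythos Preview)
Your proof is correct and follows essentially the same approach as the paper: bound $G'(x)$ above and below by secant slopes of $G$, add and subtract the corresponding secant slopes of $g$, and use convexity of $g$ to control the residual by $C^\pm_\delta(x)$. The paper's write-up is nearly identical, only organizing the algebra slightly differently by adding the non-negative quantity $g'(x+\delta)-\delta^{-1}(g(x+\delta)-g(x))$ directly rather than invoking the mean value theorem.
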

From \eqref{new-free}
\begin{align}\label{fdiff}
 \widetilde F(R_1) - \widetilde f(R_1) = F_{n, \epsilon}(t) - f_{n, \epsilon}(t) - \sqrt{R_1} S A \,, \quad \text{with} \quad A = \frac{1}{n}\sum_{i=1}^n \big(\vert Z_i'\vert -\mathbb{E}\vert Z_i'\vert\big)\,.
\end{align} 
and from \eqref{first-derivative}, \eqref{first-derivative-average} we obtain for the difference of derivatives (w.r.t. $R_1$)
\begin{align}\label{derdiff}
 \widetilde F'(R_1) - \widetilde f'(R_1) = 
\mathbb{E}\langle \mathcal{L} \rangle -\langle \mathcal{L} \rangle+\frac{\rho}{2}-\frac{1}{2n}\sum_{i=1}^n\Big((X_i^*)^2+\frac{1}{\sqrt{R_1}}X_i^*Z_i'\Big)  - \frac{SA}{2\sqrt{R_1}} \,.
\end{align}
From \eqref{fdiff}, \eqref{derdiff} it is easy to show that Lemma \ref{lemmaConvexity} implies
\begin{align}\label{usable-inequ}
\vert \langle \mathcal{L}\rangle - \mathbb{E}\langle \mathcal{L}\rangle\vert&\leq 
\delta^{-1} \sum_{u\in \{R_1 -\delta, R_1, R_1+\delta\}}
 \big(\vert F_{n, \epsilon}(t,R_1=u) - f_{n, \epsilon}(t,R_1=u) \vert + S\vert A \vert \sqrt{u} \big)\nn
 &\qquad\qquad
  + C_\delta^+(R_1) + C_\delta^-(R_1) + \frac{S\vert A\vert}{2\sqrt \epsilon_1} +\Big|\frac{\rho}{2}-\frac{1}{2n}\sum_{i=1}^n\Big((X_i^*)^2+\frac{1}{\sqrt{R_1}}X_i^*Z_i'\Big)\Big|
\end{align}
where $C_\delta^+(R_1)\defeq \widetilde f'(R_1+\delta)-\widetilde f'(R_1)\ge 0$ and $C_\delta^-(R_1)\defeq \widetilde f'(R_1)-\widetilde f'(R_1-\delta)\ge 0$. 
We used $R_1\ge \epsilon_1$ for the term $S\vert A\vert/(2\sqrt \epsilon_1)$. Note that $\delta$ will be chosen later on strictly smaller than $s_n$ (namely $\delta = s_n n^{-1/4}$) so that $R_1 -\delta \geq \epsilon_1 - \delta \geq s_n -\delta$ remains positive. Remark that by independence of the noise variables $\mathbb{E}[A^2]  \le  an^{-1}$ for some constant $a >0$; and that by independence between signal and noise, the last term in the absolute value in \eqref{usable-inequ}, call it $B$, is satisfies $\E[B^2]\le bn^{-1}$ for some constant $b>0$. We now square the identity \eqref{usable-inequ} and take its expectation. Then using $(\sum_{i=1}^pv_i)^2 \le p\sum_{i=1}^pv_i^2$ (by convexity), and that $R_1\le K$ ($K=1+\max(\rho,r_{\rm max})$ upper bounds both $R_1$ and $R_2$ given by \eqref{R1R2}), as well as the free entropy concentration Theorem \ref{concentrationtheorem},
\begin{align}\label{intermediate}
 \frac{1}{10}\mathbb{E}\big[\big(\langle \mathcal{L}\rangle - \mathbb{E}\langle \mathcal{L}\rangle\big)^2\big]
 \leq &\, 
 3\delta^{-2} \big(C +aS^2(K+\delta)\big) \frac{1}{n}
 + C_\delta^+(R_1)^2 + C_\delta^-(R_1)^2
 + \frac{S^2a}{4\epsilon_1n} + \frac{b}{n}\,.
\end{align}
where $C=C(\varphi, S, \alpha)$ is a positive constant depending only on $\varphi, S$ and $\alpha$ that comes from the use of Theorem \ref{concentrationtheorem}.
Recall $|C_\delta^\pm(R_1)|=|\widetilde f'(R_1\pm\delta)-\widetilde f'(R_1)|$. We have
\begin{align}
|\widetilde f'(R_1)|  \leq \frac12\Big(\rho  +\frac{S}{\sqrt R_1} \Big)\leq \frac12\Big(\rho  +\frac{S}{\sqrt \epsilon_1} \Big)\label{boudfprime}	
\end{align}
from \eqref{first-derivative-average}, \eqref{new-free} and $R_1\ge \epsilon_1$. This implies $|C_\delta^\pm(R_1)|\le \rho  +S/\sqrt \epsilon_1\le \rho  +S/\sqrt s_n$ as $\epsilon_1\ge s_n$. Recall also that ${\cal B}_n\defeq[s_n,2s_n]^2$. Then
\begin{align}
 \int_{{\cal B}_n} d\epsilon\, &\big\{C_\delta^+(R_1(t,\epsilon))^2 + C_\delta^-(R_1(t,\epsilon))^2\big\}\nn
 &\leq 
 \Big(\rho +\frac{S}{\sqrt s_n}\Big)
 \int_{{\cal B}_n} d\epsilon\, \big\{C_\delta^+(R_1(t,\epsilon)) + C_\delta^-(R_1(t,\epsilon))\big\}
 \nonumber \\ &
 =\Big(\rho+\frac{S}{\sqrt s_n}\Big)
 \int_{R^t({\cal B}_n)} \frac{dR_1dR_2}{J(R^t)}\, \big\{C_\delta^+(R_1) + C_\delta^-(R_1)\big\}
 \nonumber \\ &
  \le \Big(\rho +\frac{S}{\sqrt s_n}\Big)
 \int_{R^t({\cal B}_n)} dR_1dR_2\, \big\{C_\delta^+(R_1) + C_\delta^-(R_1)\big\}
 \nonumber \\ &
\leq  
\Big(\rho +\frac{S}{\sqrt s_n}\Big)\int_{s_n}^{2s_n + \rho} dR_2 \Big[\Big(\widetilde f(2s_n + r_{\rm max}+\delta) - \widetilde f(2s_n + r_{\rm max}-\delta)\Big) \nn
&\qquad\qquad\qquad\qquad\qquad\qquad\qquad\qquad+ \Big(\widetilde f(s_n-\delta) - \widetilde f(s_n+\delta)\Big)\Big]\,.
\end{align}
where we used that the Jacobian $J(R^t)$ of the ${\cal C}^1$ diffeomorphism $R^t:(\epsilon_1,\epsilon_2)\mapsto (R_1(t,\epsilon),R_2(t,\epsilon))$ is greater or equal to $1$ (by regularity of the interpolation functions $q$ and $r$)  and
$R^t(\mathcal{B}_n)\subset [s_n, 2s_n + r_{\rm max}]\times [s_n, 2s_n + \rho]$. The mean value theorem and \eqref{boudfprime} imply $|\widetilde f(R_1-\delta) - \widetilde f(R_1+\delta)|\le \delta(\rho  +S/\sqrt s_n)$ uniformly in $R_2$. Therefore
\begin{align}
 \int_{{\cal B}_n} d\epsilon\, &\big\{C_\delta^+(R_1(t,\epsilon))^2 + C_\delta^-(R_1(t,\epsilon))^2\big\}\leq 
 2\delta(s_n+\rho) \Big(\rho +\frac{S}{\sqrt s_n}\Big)^2\,.
\end{align}
Thus, integrating \eqref{intermediate} over $\epsilon\in {\cal B}_n$ yields (using ${\rm Vol}({\cal B}_n)=s_n^2$ and $s_n\le 1/2$)
\begin{align*}
 &\int_{{\cal B}_n} d\epsilon\, 
 \mathbb{E}\big[\big(\langle \mathcal{L}\rangle- \mathbb{E}[\langle \mathcal{L}\rangle\big)^2\big]\nn
 &\qquad\qquad\leq 30\big(C +aS^2\sqrt{K+\delta}\big)\delta^{-2} s_n^2 n^{-1} +20\delta(1/2+\rho) \Big(\rho +\frac{S}{\sqrt s_n}\Big)^2 + 5S^2a \frac{\ln 2}{2}\,\frac{s_n}{n} + \frac{bs_n^2}{n} \,.
\end{align*}
Finally we choose $\delta  =  s_n n^{-1/4}$ and obtain the desired result.$\QEDA$

\begin{proof}[Proof of Lemma \ref{lemmaConvexity}:]
Convexity implies that for any $\delta>0$ we have
\begin{align*}
G'(x) - g'(x)
 	& \leq \frac{G(x+\delta) - G(x)}{\delta} - g'(x) \nonumber \\
	& \leq \frac{G(x+\delta) - G(x)}{\delta} - g'(x) + g'(x+\delta) - \frac{g(x+\delta) - g(x)}{\delta} \nonumber \\
	& = \frac{G(x+\delta) - g(x+\delta)}{\delta} - \frac{G(x) - g(x)}{\delta} + C^+_\delta(x) \,,  \\
G'(x) - g'(x)
	& \geq \frac{G(x) - G(x-\delta)}{\delta} - g'(x) + g'(x-\delta) - \frac{g(x) - g(x-\delta)}{\delta} \nonumber \\
	& = \frac{G(x) - g(x)}{\delta} - \frac{G(x-\delta) - g(x-\delta)}{\delta} - C^-_{\delta}(x) \,.
\end{align*}
Combining these two inequalities ends the proof.
\end{proof}
    \section{Details on numerics}
Most of our experiments and codes are provided on the associated GitHub
repository \cite{githubrepo}, with codes in the Julia programming
language \cite{Julia} (with a Jupyter notebook interface) and in
matlab.  In this appendix, we shall give additional details on how the
plots have been obtained.
\subsection{General purpose algorithms}   
We have been using free available softwares in our experiments:
Standard machine learning tasks such as LASSO or logistic regression
were done using scikit-learn\cite{scikit-learn}. Keras
\cite{chollet2015}, with a tensorflow backend
\cite{tensorflow2015-whitepaper}, was used for neural networks. We
also used CVXPY, a python-embedded language for performing convex
optimization \cite {cvxpy}, as well as PhaseMax for phase retrieval
\cite{phasemax} experiments. 

Figure \ref{fig:classif} (in both the main text and SI) constrasts results of these general purpose algorithms with the optimal
generalization error in three classification problems. In the left
pannel of Fig. \ref{fig:classif} (for the binary perceptron), we used logistic regression with hand-tuned $\ell_2$ regularization
(basically, we have hand-selected the regularization parameter in
order to obtain the best results) with the function Logistic Regression
in the software scikit-learn \cite{scikit-learn}. In the center pannel
of Fig. \ref{fig:classif}
(this time for a sparse signal), we used the same software but this
time with a sparsity enhancing $\ell_1$ regularization, again
fine-tuned by hand.
In the right pannel of Fig. \ref{fig:classif}, we show, in the inset, how a neural
network with $2$ hidden layers was able to learn only approximately
the ``symmetric door'' rule. In this experiment, we used Keras with a
tensorflow backend. 

The data are created with a signal-vector of dimension $n=2500$ and are
sent into a network made with a first layer of dimension
$2500 \times 64$ followed by a rectified linear unit (ReLu), and a
dropout layer with fraction $0.2$ for regularization. This is followed
by a second layer of dimension $64 \times 64$, this time with a sigmoid
activation and again a dropout layer. Finally, we classify with a
final output layer with a softmax, using the categorical cross-entropy
as the loss function. The minimization is done using the RMSprop
optimizer for $1000$ epochs. The code is shown in the GitHub
repository \cite{githubrepo}.

We have tried many variations around this network. Interestingly, the
dropout layers have a strong effect on the regularization,  and help
significantly in improving the generalization error. 
Also interestingly, the number of epochs used for fitting was an
important parameter. Indeed, the quality of the fit improves
drastically as the number of epochs is increased: It seems that it
actually takes a lot of time to escape the initial point, where
prediction is just as bad as random. 
Finally, we also tried to increase the depth of the neural
nets. Interestingly this did not affect the performance and
the network was fitting the data and generalizing just as well with
deeper and deeper networks. We believe that it should be very
instructive to further study empirically this problem.

In Fig. \ref{fig:regression} in this SI, simular plots are
shown for three regression problems. In the left pannel, we used
the LASSO function in the software scikit-learn \cite{scikit-learn}
with its sparsity enhancing $\ell_1$ regularization, again fine-tuned
by hand. 

For the middle pannel, we had to turn to a different software. In this
case, the idea was to solve ${\bY}={\rm Relu}(\bbf{\Phi} {\bx})=\max(0,\bbf{\Phi} {\bx})$ (componentwise) subject
to a sparse penalty on $\bx$. Luckily, this can be turned into a
linear programming framework: Minimize the $\ell_1$ norm of $\bx$
subject to the constraint that ${\bY}={\rm Relu}(\bbf{\Phi} {\bx})$ which
can be implemented by enforcing $Y_{\mu}=\bbf{\Phi}_{\mu} \cdot \bx$ for
$\mu$'s such that $Y_{\mu}>0$, and $\bbf{\Phi}_{\mu} \cdot \bx<0$ for $\mu$'s such that 
$Y_{\mu}=0$. This linear program is solved with CVXPY \cite {cvxpy}
(CVX-2). We also show the results when only the indices $\mu$ associated with positive $Y_{\mu}>0$ are
used for comparaison (CVX-1). Finally, for the right figure, we used
PhaseMax \cite{phasemax} out-of-the-box to solve the problem.

\subsection{Evaluating the replica formula}   
In order to evaluate numerically the replica formula (\ref{frs}) we proceed as
it is common in the statistical physics literature since the early
papers on spin glasses. First, we found the critical points (\ref{fixed_points}) by
iterating the state evolution equations (\ref{GAMP_SE}) ---also called the
replica self-consistent equations--- starting from two different
initial conditions ($q^{t=0}=0$ and $q^{t=0}=\rho$). Next we computed the associated value of the
free entropy and then selected, if two different fixed points were found,
the correct one following the prescription given by Theorem \ref{th:RS_1layer}. We
also took special care in checking that we could not identify other
fixed points. An example of such a procedure is shown in the GitHub
repository \cite{githubrepo}, with codes in the Julia programming
language \cite{Julia}, for the perceptron problem.
\subsection{Breaking the symmetry in GAMP} 
A last notable point concerns the symmetry issue in GAMP. Indeed,
when $q=0$ is a fixed point of the state evolution (which is the case if
the prior has zero mean and the channel is symmetric), then GAMP should stay in this fixed point forever. This is the
case, for instance, for two problems considered in the present paper:
The symmetric door output function $\varphi(z)={\rm sgn}(|z|-K)$ with a
Rademacher prior $\pm 1$, and for the sign-less channel
$\varphi(z)=|z|$. In both cases, both $z$ and $-z$ are giving the same
output, and therefore so does both $\bX^*$ and $-\bX^*$.
Notice, however, that this is not a problem in the computation of the
free entropy. Here, one has to compute all the fixed points anyway. It
is also not a problem if the prior is breaking the symmetry (for
instance if one is working with a binary signal where $X^*_i=1$ with
probability $p_+=1/2+\epsilon$ and $X^*_i=-1$ with probability
$p_-=1/2-\epsilon$ with, say, $\epsilon=10^{-7}$. In this case the symmetry is broken, GAMP works, and
the state evolution predicts its behavior correctly.
Even though this problem is thus restricted to a very small class of
channels and priors, and even though perturbations solve
it, it is still an interesting mathematical challenge, especially from
the rigorous point of view. Indeed, this problem has attracted 
attention recently where initializations based on spectral algorithms
were analyzed \cite{mondelli2017fundamental}.

In the present paper, we adopted a pragmatic point of view. We did not
break the symmetry in the data generative model (as that would make
the problem slightly easier), instead we broke the symmetry
in the GAMP solver, thus making it slightly but un-noticeably suboptimal. 
An example of our code is given in the associated GitHub repository
\cite{githubrepo}. We created the data in the symmetric manner, but when we ran GAMP
to solve the problem, we broke the channel symmetry slightly. For instance, instead of solving with a door function that
returns $1$ only for $-0.674489<z<0. 674489$, we use a function that
instead returned $1$ for $-0.674489<z<0. 6745$. 
The same
strategy was used for the absolute value function, which can be
replaced by $\varphi(z)=z$ for $x>-\epsilon$ and $-z$ otherwise. Again, when
$\epsilon$ is small, this makes only an un-noticeable difference in the figures. 
This trick allowed GAMP to solve symmetric problems without
trouble in practice, and to reach perfect recovery even in the
symmetric problems as close to the theoretical threshold as
numerically desired. 

\end{appendices}
%
%
\end{document}